\documentclass[11pt]{article}
\usepackage[margin=1in]{geometry}
\usepackage[T1]{fontenc}
\usepackage{lmodern,microtype}
\usepackage{amsmath,amssymb,amsthm}
\usepackage{placeins}
\usepackage{tikz}
\usetikzlibrary{arrows.meta,positioning,calc}
\usepackage[colorlinks=true,linkcolor=blue!50!black,citecolor=blue!50!black,urlcolor=blue!50!black]{hyperref}
\usepackage{adjustbox,bookmark}
\hypersetup{pdftitle={Towards the spin-glass transition in finite dimensions via blue percolation},pdfauthor={Yan Ru Pei},pdfkeywords={spin glass, CMR representation, dependent percolation, overlap susceptibility}}
\newtheorem{theorem}{Theorem}[section]
\newtheorem{proposition}[theorem]{Proposition}
\newtheorem{lemma}[theorem]{Lemma}
\newtheorem{corollary}[theorem]{Corollary}
\newtheorem{hypothesis}[theorem]{Hypothesis}
\theoremstyle{remark}
\newtheorem{remark}[theorem]{Remark}
\newcommand{\E}{\mathbb E}
\newcommand{\Pp}{\mathbb P}
\newcommand{\Z}{\mathbb Z}
\newcommand{\sech}{\operatorname{sech}}
\newcommand{\blue}{\mathrel{\longleftrightarrow}_{\!\mathrm b}}
\newcommand{\pc}{p_c^{\mathrm{site}}}
\newcommand{\thsite}{\theta_{\mathrm{site}}}
\newcommand{\starv}{\operatorname{star}}
\newcounter{conceptalgorithm}

\title{Towards the spin-glass transition in finite dimensions\\via blue percolation}
\author{Yan Ru Pei\\[0.3em]\small\href{mailto:yanrpei@gmail.com}{\texttt{yanrpei@gmail.com}}}
\date{September 29, 2026}
\begin{document}
\maketitle

\begin{abstract}
We prove blue percolation in the Chayes--Machta--Redner representation of the
zero-field Edwards--Anderson spin glass in finite dimensions. For fixed
symmetric iid couplings with $0<m=\E|J|<\infty$, the first blue-percolation
inverse temperature $\beta_{\mathrm b}(d)$ satisfies $2dm\,\beta_{\mathrm b}(d)\to1$
as $d\to\infty$, in periodic joint limits. For symmetric $\pm1$ couplings,
both overlap signs percolate at explicit temperatures in every dimension from
$7$ to $12$. That proof reveals the overlap field, shows that it dominates a
weakly coupled Ising field, and bounds a weighted second moment of open paths;
in dimensions $7$ to $9$ the paths make short lateral excursions, and an
interaction-matrix criterion controls the second moment. These proofs are
computer-assisted: finitely many inequalities between rigorously bounded
quantities are decided in exact rational arithmetic. For exponential-moment disorder, we
also establish a percolating regime with exactly equal infinite-blue-sector
densities and finite spin-glass susceptibility. In these periodic limits,
persistent density imbalance would imply distinct spin-flip-related Gibbs states
and ordinary spin-glass order: positive spatial-overlap variance, infinite
spin-glass susceptibility and a cusp of the replica-coupling pressure, without
requiring finite-blue cancellation. Proving imbalance remains open.
\end{abstract}

\noindent\textbf{2020 Mathematics Subject Classification.}
Primary 60K35; Secondary 82B44, 82B43.
\smallskip

\noindent\textbf{Keywords.} Edwards--Anderson model; CMR blue clusters;
dependent percolation; high-dimensional asymptotics.

\section{Introduction and results}\label{sec:intro}

In the Edwards--Anderson spin glass, each site of $\Z^d$ carries a spin
taking values $+1$ or $-1$, and only nearest neighbors interact. A positive
coupling favors equal neighboring spins; a negative coupling favors opposite
spins. These preferences can conflict around a loop, so a configuration need
not be able to satisfy them all. This competition, called frustration, makes
the average spin, or magnetization, an incomplete description of possible order.
Whether the model, with no external magnetic field, has an ordering transition
at positive temperature in a fixed finite dimension remains a major open problem \cite{IMT2024}. One
precise target is to prove that the same couplings support distinct
infinite-volume equilibrium states, or Gibbs states.

Two equilibrium samples of the same disorder provide a way to look for order
without choosing a preferred spin pattern in advance. Fix the couplings and
sample two configurations $\sigma$ and $\tau$ independently at the same
temperature; these are the two \emph{replicas}. Their local overlap
$q_x=\sigma_x\tau_x$ is $+1$ where they agree and $-1$ where they disagree.
The two sets of sites with these signs are the \emph{overlap sectors}; neither
set is assumed connected. The spatial overlap is the average of $q_x$ over
large boxes, measuring the difference between the fractions of agreeing and
disagreeing sites. Reversing all spins in one replica exchanges the sectors,
while leaving its zero-field energy unchanged.

The Chayes--Machta--Redner (CMR) representation turns this agreement pattern
into a random network \cite{CMR1998,MNS2008}. An edge is \emph{satisfied} in
a replica when its endpoint spins have the relative orientation favored by
its coupling. On edges satisfied in both replicas, auxiliary random choices
activate \emph{blue~bonds}, with probabilities depending on temperature and
coupling strength. A blue edge joins sites with the same overlap sign:
both replicas realize the same preferred relative orientation along that edge.
Every connected component of blue bonds, called a \emph{blue cluster},
therefore lies in one overlap sector. Blue percolation means the existence
of an infinite such component.

This construction separates two geometric questions. Do infinite blue
clusters exist, and do they occupy more of one overlap sector than the other?
For each sign, its \emph{infinite-blue-sector density} is the fraction of all
lattice sites belonging to infinite blue clusters of that sign. Both signs
can percolate while these two densities are equal. Their difference also need
not equal the spatial overlap, because vertices in finite blue clusters
contribute to that overlap as well. Thus forming an infinite network and
breaking the symmetry between agreement and disagreement are distinct steps.

In the fully connected Sherrington--Kirkpatrick (SK) model, where every pair
of spins interacts, the two largest blue-cluster densities can be related to
the fractions of agreeing and disagreeing sites \cite{MNS2007,MNS2008}.
Numerical evidence supports a related picture for the short-range model
\cite{MNS2007,MW2023,MW2026}, but the lattice introduces a geometric obstacle.
On a complete graph, the number of sites in a sector determines all its
available edges. On $\Z^d$, the placement of those sites matters: even a
positive-density set can have only finite connected components. A percolation
proof must therefore control the local arrangement of the replicas, as well
as their overall agreement.

We establish the first of these geometric steps, existence, in finite
dimensions. Apart from the first version of this paper, which reached
dimension $22$ \cite{PeiV1}, we are not aware of an earlier proof of blue
percolation on a finite-dimensional lattice.\footnote{An earlier preprint by the author and
Di~Ventra \cite{PeiDiVentra2021} claimed, by a contour argument, an infinite
blue cluster on $\Z^d$, $d\ge2$, at low temperature. Its fourth arXiv version
(2022) records an error in the contour estimate of its Appendix~D.2 and states
that the result does not immediately follow.}
Write $\beta$ for inverse temperature, so cooling increases $\beta$, and
let $m=\E|J|$ be the mean coupling magnitude. For independent couplings
with any fixed common distribution symmetric about zero and $0<m<\infty$,
we prove that the first
blue-percolation onset is asymptotic to $\beta=(2dm)^{-1}$ as $d$ grows.
This is the lattice counterpart of the complete-graph threshold: in the SK
model with couplings of size $N^{-1/2}$, blue percolation sets in at
$\beta=N^{-1/2}$, with one giant blue cluster in each overlap sector
\cite[Section~III]{MNS2009}. In both cases the number of neighbors times the
mean coupling magnitude times $\beta$ is asymptotically one at the threshold.
At any fixed factor above this inverse-temperature scale, both overlap signs
percolate once $d$ is sufficiently large; at fixed factors below it they do not
(Theorem~\ref{thm:headline}). For equally likely unit couplings $J=\pm1$, we
prove that both overlap signs percolate at explicit temperatures in every
dimension from $7$ to $12$ (Theorem~\ref{thm:explicit}). These proofs are
computer-assisted: finitely many inequalities between rigorously bounded
quantities are decided in exact rational arithmetic.
We work throughout with limits of larger and larger periodic boxes, sampling
the couplings, replicas and blue bonds together; Section~\ref{sec:model}
specifies these limiting laws and the quantifiers of both theorems.

The resulting percolation can coexist with controlled overlap fluctuations.
The spin-glass susceptibility sums disorder-averaged squared spin correlations.
On a periodic box it is also the number of sites times the disorder-averaged
second moment of the spatial overlap. Squaring prevents cancellation merely
from changes of sign.
If the coupling law has a finite exponential moment, meaning
$\E e^{a|J|}<\infty$ for some $a>0$, we prove a regime in sufficiently high
dimensions where both signs percolate, their infinite-blue-sector densities
are exactly equal, the spatial overlap vanishes and this susceptibility stays
finite. With $s_2=\E J^2$ and any fixed $\varepsilon>0$, this regime contains
the whole interval $(1+\varepsilon)/(2dm)\le\beta\le(6ds_2)^{-1/2}$ once $d$
is large, and there the susceptibility is at most $2$
(Corollaries~\ref{cor:separation}, \ref{cor:exact-balance}
and~\ref{cor:window}). The susceptibility stays bounded, uniformly in the
dimension, up to $\beta\approx0.741\,(2ds_2)^{-1/2}$: in every dimension for
$\pm1$ and Gaussian couplings, and in all sufficiently large dimensions under an
exponential moment (Proposition~\ref{prop:sg-uniform}). Along temperatures $\beta_d$ with
$dm\beta_d\to\infty$ and $d\beta_d^2\to0$, the probability that a site belongs
to an infinite blue cluster tends to one (Corollary~\ref{cor:balanced}). These
conclusions show how extensive blue connectivity can precede an
overlap-ordering transition (Section~\ref{sec:response} and
Remark~\ref{rem:onset-ordering}). For unit couplings the same picture holds at
every explicit temperature of Theorem~\ref{thm:explicit}: in each dimension
from $7$ to $12$, both signs percolate with exactly equal densities and
$\chi_{\rm SG}<7/2$ (Corollary~\ref{cor:explicit-balance}).

The remaining geometric target is a persistent difference between the two
infinite-blue-sector densities. Such an imbalance in a periodic limit would
force distinct Gibbs states related by a global spin reversal
\cite[Section II C]{MNS2007}: for almost every disorder, two distinct
single-replica Gibbs states exchanged by the global flip. This is a statement
about spin-flip symmetry breaking, not about the number of states.
Section~\ref{sec:discussion} proves this implication without assuming that
finite blue clusters cancel in the spatial overlap
(Proposition~\ref{prop:imbalance}). It also shows that, in the selected
periodic ensemble, imbalance forces ordinary spin-glass order: the spatial
overlap has positive second moment, the spin-glass susceptibility diverges,
and the quenched two-replica pressure with the added exponent $h\sum_xq_x$
has a cusp at $h=0$ (Corollary~\ref{cor:imbalance-overlap}). The route passes
through an odd local order channel (Proposition~\ref{prop:odd-order}) and a
local comparison of odd channels with the site overlap
(Proposition~\ref{prop:odd-to-site}). The imbalance estimate itself remains open
(Hypothesis~\ref{hyp:imbalance}). The purpose of the blue network is
therefore twofold: it gives a percolation problem that can be solved at high
dimension, and it identifies a further geometric statement that would imply
spin-glass state coexistence.

\subsection{Model, thermodynamic limits and main results}\label{sec:model}

Fix a symmetric probability law on $\mathbb R$, used unchanged as the dimension
varies. Write its iid couplings as $J_e=\epsilon_e W_e$, where the signs
$\epsilon_e$ are independent and fair and are independent of the iid magnitudes
$W_e\ge0$. An auxiliary sign at $W_e=0$ has no effect. Our standing moment assumption is
\begin{equation}\label{eq:moment}
 0<m:=\E W<\infty.
\end{equation}
This includes Gaussian and bounded symmetric laws, as well as laws of infinite
variance. Symmetry is essential to our comparison: its restoration estimate
averages a fair coupling sign to cancel the term linear in $\beta$
(Lemma~\ref{lem:insertion}), and a mean-zero assumption alone does not provide
this cancellation.

On a finite simple graph $G=(V,E)$, let
\[
 \mu_{G,J,\beta}(\sigma)
 =Z_{G,J,\beta}^{-1}
   \exp\!\left(\beta\sum_{e=xy\in E}J_e\sigma_x\sigma_y\right),
 \qquad \beta\ge0.
\]
Here $\sigma\in\{-1,1\}^V$, and the partition function $Z_{G,J,\beta}$
normalizes the probabilities. After sampling $J$, sample $\sigma$ and $\tau$
independently from this zero-field Gibbs measure. Given these variables, sample independent blue indicators $B_e$,
with
\begin{equation}\label{eq:blue}
 \Pp(B_e=1\mid J,\sigma,\tau)
 =(1-e^{-4\beta|J_e|})
   \mathbf1\{J_e\sigma_x\sigma_y>0,\ J_e\tau_x\tau_y>0\}.
\end{equation}
This is the blue marginal of the CMR representation \cite{CMR1998,MNS2008};
only these blue indicators enter the exploration. Denote the joint disorder,
spin and blue law by $\nu_{G,\beta}$. The disorder is \emph{quenched}: it is
held fixed while each Gibbs measure is normalized and the spins are sampled.
Thus the factor $Z_{G,J,\beta}^{-2}$ is retained before disorder averaging.
We use \emph{annealed} for this averaged joint law,
not for a Gibbs measure obtained by averaging its unnormalized weights.

Let $\mathbb T_L^d=(\mathbb Z/L\mathbb Z)^d$, $L\ge3$, have its nearest-neighbor
edges. Write $\mathcal L_{d,\beta}$ for all joint local accumulation points of
$\nu_{\mathbb T_L^d,\beta}$ as $L\to\infty$. This family is nonempty: the coupling
marginals are the fixed iid law, and the local spin and bond spaces are finite,
so tightness and diagonal extraction apply. Write $C_{\mathrm b}(x)$ for the blue
cluster of $x$, $q_x=\sigma_x\tau_x$ for the overlap sign, and $o=0$ for the
origin. The notation $x\blue y$ means that a blue path joins $x$ to $y$,
and $x\blue\infty$ means $|C_{\mathrm b}(x)|=\infty$.
Write $B_n$ for the graph-distance ball of radius $n$ about the origin and
$\partial B_n$ for the vertices at graph distance $n$. Connectivity to $\partial B_n$
is understood using paths inside $B_n$.

\ifdefined\Needspace\Needspace{10\baselineskip}\fi
The dimension thresholds below may depend on the coupling law and on $c$.
\begin{theorem}\label{thm:headline}
Under \eqref{eq:moment}, the following statements hold.
\begin{enumerate}
\item For each fixed $c>1$, all sufficiently large $d$ have the property that
every $\nu\in\mathcal L_{d,c/(2dm)}$ satisfies
\[
 \nu(0\blue\infty,\ q_0=s)>0\quad(s=\pm1).
\]
Moreover, each of the two overlap signs contains an infinite blue component
$\nu$-almost surely.
\item For each fixed $0<c<1$, there exist $d_0<\infty$ and $\rho<1$ such that,
for $d\ge d_0$, $0\le\beta\le c/(2dm)$, every $\nu\in\mathcal L_{d,\beta}$,
and every $n\ge1$,
\[
 \nu(0\blue\partial B_n)\le2\rho^n.
\]
In particular, these laws have no infinite blue component almost surely.
\end{enumerate}
\end{theorem}

Define the first blue-percolation onset by
\begin{equation}\label{eq:onsetdef}
 \beta_{\mathrm b}(d)
 :=\inf\{\beta\ge0:\text{some }\nu\in\mathcal L_{d,\beta}
                     \text{ has }\nu(0\blue\infty)>0\},
 \qquad \inf\varnothing=+\infty.
\end{equation}
The theorem immediately gives
\begin{equation}\label{eq:onset}
 2dm\,\beta_{\mathrm b}(d)\longrightarrow1.
\end{equation}
Indeed, for each fixed $0<\varepsilon<1$, the interval below
$(1-\varepsilon)/(2dm)$ is excluded and $(1+\varepsilon)/(2dm)$ is included for
large $d$. No monotonicity of the blue law in $\beta$ is needed for this squeeze.

Theorem~\ref{thm:headline} does not name a dimension: its thresholds come from
asymptotic estimates. For unit couplings we obtain explicit dimensions and
temperatures, parametrized by $t=\tanh\beta$.

\begin{theorem}[Explicit dimensions for unit couplings]\label{thm:explicit}
Let the couplings be iid and uniform on $\{-1,1\}$, let $(d,t)$ be one of the
seventeen pairs listed in Table~\ref{tab:explicit}, and put
$\beta=\operatorname{atanh}t$. Let $\theta_*=\theta_*(d,t)>0$ be the explicit
constant of \eqref{eq:ov-theta} or \eqref{eq:ms-theta}, bounded from below in
Table~\ref{tab:explicit}; where a pair has two rows, either constant may be
used. Then every $\nu\in\mathcal L_{d,\beta}$ satisfies
\[
 \nu(0\blue\infty,\ q_0=s)\ \ge\ \theta_*\qquad(s=\pm1),
\]
and $\nu$-almost surely each of the two overlap signs contains an infinite blue
component.
\end{theorem}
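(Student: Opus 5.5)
The plan follows the route the abstract outlines for unit couplings. Reveal the overlap field, compare it with a weakly coupled Ising field, and run a weighted second-moment argument on open paths in the resulting dependent percolation, closing each finite verification in exact rational arithmetic.

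\textbf{Step 1: reveal the overlap field.} Work first on $\mathbb T_L^d$ and pass to the limit only at the end, using bounds uniform in $L$. Since $J_e=\pm1$, the blue probability \eqref{eq:blue} equals $p=1-e^{-4\beta}$ on edges satisfied in both replicas. Conditional on $q=\sigma\tau$, an edge $xy$ with $q_xq_y=1$ is satisfied in both replicas exactly when it is satisfied in $\sigma$. Gauge-averaging over the fair coupling signs, I would write the conditional law of the doubly-satisfied edges given $q$ as a product-type measure. On the sector $\{q=s\}$ the local weights should then show each edge open with probability at least some explicit $p_{\mathrm{eff}}(t)$, uniformly over the conditioning on everything outside a finite window. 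This gives a stochastic lower bound of the form: a blue edge between agreeing sites dominates an independent Bernoulli($p_{\mathrm{eff}}$) bond.

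\textbf{Step 2: dominate the overlap field.} Show that the disorder-averaged law of $q$ dominates, in the FKG sense and for each fixed sign $s$, a weakly coupled Ising-type field $\eta$. Concretely, $\{q_x=s\}$ should dominate a field with single-site marginal close to $1/2$ and pair interactions of order $t^2$. I would prove this through a Holley-type criterion on single-site conditional probabilities. Those probabilities are bounded by averaging over the local couplings around $x$ and using the symmetry of the coupling sign, via Lemma~\ref{lem:insertion}, to kill the linear term. The outcome should be an explicit lower bound $P(q_x=s\mid\text{rest})\ge a(d,t)$, together with a high-temperature (Dobrushin-type) bound on the correlations of $\eta$.

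\textbf{Step 3: second moment over open paths.} Let $N_n$ be a weighted count of self-avoiding paths of length $n$ from $0$ whose sites all have $q=s$ and whose edges are all blue. By Paley--Zygmund, $\nu(0\blue\partial B_n,\ q_0=s)\ge(\E N_n)^2/\E N_n^2$. The first moment grows like $\mu_{\mathrm{eff}}^n$ with $\mu_{\mathrm{eff}}\approx(2d-1)p_{\mathrm{eff}}a$. The second moment is a sum over pairs of paths, and it is controlled by the intersection structure. In $d\ge10$, a loop-counting (bubble-type) bound with a weighted non-backtracking walk should suffice. In $d=7,8,9$ the plain walk is too weak, so I would restrict to paths that make short lateral excursions. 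The pair sum is then organized as a transfer operator over a finite interaction matrix, and the criterion is that its spectral radius is below $1$ after normalization. Each entry is a rational bound, and the verification is a finite exact computation. This yields the constant $\theta_*$ of \eqref{eq:ov-theta} or \eqref{eq:ms-theta} uniformly in $n$ and $L$. The event depends on a finite window, so the bound passes to every $\nu\in\mathcal L_{d,\beta}$. Letting $n\to\infty$ gives $\nu(0\blue\infty,\ q_0=s)\ge\theta_*$.

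\textbf{Step 4: almost-sure conclusion and the main obstacle.} The almost-sure statement follows from translation invariance of $\nu$ together with the ergodic-type zero-one argument already used for Theorem~\ref{thm:headline}(1). The main obstacle is Step 3 in low dimensions. The dependence introduced by the Ising domination, with correlations of order $t^2$ along paths, must be absorbed into the second-moment weights without destroying the margin $\mu_{\mathrm{eff}}>1$, which is thin at $d=7$. I would first attempt a cluster expansion of the $\eta$-correlations along path pairs, bounded edge by edge, before resorting to the interaction-matrix criterion.
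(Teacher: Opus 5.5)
Your outline follows the paper's architecture: reveal $q$, prove a bond floor, dominate $q$ by a weak Ising field, run a second moment over paths, and use lateral excursions below dimension $10$. But several steps would not go through as written.

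\textbf{Step 3: Paley--Zygmund on the physical field.} This is the central gap. You apply Paley--Zygmund to paths that are open in the physical field, with $q=s$ on the sites and blue edges. That requires an \emph{upper} bound on $\E N_n^2$, that is, on $\nu_L(\gamma,\gamma'\text{ both open})$ relative to the first moments. Steps~1--2 supply only lower, domination-type bounds, so nothing controls the physical pair probabilities. The paper uses the lower bounds only to build a coupling in which the blue cluster of $o$ contains the open cluster of the auxiliary law $\mu_L\otimes\mathrm{Ber}(p)^{\otimes E}$. It then runs the whole second moment for that auxiliary law, where the pair ratio is bounded by FKG and Dobrushin's comparison (Lemma~\ref{lem:ov-pair-ratio}).

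\textbf{Step 1: the bond floor is not a product domination.} Building that coupling is where your Step~1 claims too much. The conditional law of the star signs given $q$ is not product-type. The two quenched normalizations leave a factor $Z(s)^{-2}$, which on a star becomes $R_\chi^{-2}$ with an unknown cavity law $\chi$, so the floor must be an infimum over all flip-invariant cavity laws. The resulting floor $p_B\approx0.9\tanh2\beta$ survives failures: a failed agreeing edge is exactly fair by \eqref{eq:ov-failed-fair}, and fair coordinates cannot lower the worst case by Lemma~\ref{lem:ov-mediant}. It does not survive observed successes at the target, which can push the opening probability below the floor (Remark~\ref{rem:ov-successes}). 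You therefore need two things:
\begin{itemize}
\item an exploration of the plus-sector cluster in which every success reaches its target, so histories at unreached targets contain only failures;
\item a quantile coupling that makes the bond uniforms independent of $q$ (Lemma~\ref{lem:ov-exploration}).
\end{itemize}
The only unrestricted product domination available is the Machta--Newman--Stein one at level $p_A/2$. With that parameter the criterion already fails at $(12,3/25)$, with score about $1.02$.

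\textbf{Step 3: the constant.} The $\theta_*$ in the statement is the output of the oriented renewal criterion, or of the macrostep certificate. The oriented criterion relies on two oriented paths meeting only at equal times. A bubble bound for non-backtracking walks would not produce this constant.

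\textbf{Step 2: the site bound must depend on the neighbours.} A neighbour-independent bound $\nu_L(q_x=s\mid\text{rest})\ge a(d,t)$ gives only domination by independent sites. Obtained through Lemma~\ref{lem:insertion}, it is about $\tfrac12(\sech^2\beta/\cosh4\beta)^{2d}\approx0.024$ at $(12,3/25)$. Even the paper's sharper bound, taken at its worst environment, gives density only about $0.31$. The shared-step term of the criterion needs roughly $d\rho p>1$, which at that point means $\rho\gtrsim0.39$. What works is the exact odds identity of Proposition~\ref{prop:ov-odds}. Its factor $C_\beta^S$ makes the odds depend on the neighbouring overlaps, and bounding the remaining $R^{-2}$ ratio uniformly over cavity laws gives the Holley line $e^{2h+2K'S}$. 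One-sided Holley domination (Lemma~\ref{lem:ov-holley}) and the GHS mean-field bound (Lemma~\ref{lem:ov-meanfield}) then give plus-density at least $0.435$.

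\textbf{Step 4: no zero--one law from invariance.} Translation invariance alone gives no zero--one law, and the selected limits are not known to be ergodic. The almost-sure part of Theorem~\ref{thm:headline} uses exterior conditioning (Proposition~\ref{prop:as-existence}), not ergodicity. That argument does not apply verbatim here either, because the floor holds only on $\{q_u=q_v=1\}$ and the site layer is dependent. The paper instead proceeds as follows (Proposition~\ref{prop:ov-as}):
\begin{itemize}
\item it conditions on $T_+$, which lies in every $\mathcal E_v$, and checks that both local bounds survive the conditioning;
\item it dominates $q$ inside a box by the Ising law with minus boundary condition;
\item it compares that box law with the torus law by Dobrushin's theorem to recover $\theta_*$.
\end{itemize}
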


\begin{table}[!t]
\centering
\small
\begin{tabular}{cclll}
\hline
$d$ & $t=\tanh\beta$ & $\theta_*\ge$ & global step & local inputs\\ \hline
$12$ & $3/25$   & $0.0362$ & oriented & Sections~\ref{sec:ov-floor}--\ref{sec:ov-holley}\\
$12$ & $11/100$ & $0.0312$ & oriented & Sections~\ref{sec:ov-floor}--\ref{sec:ov-holley}\\
$12$ & $23/200$ & $0.0354$ & oriented & Sections~\ref{sec:ov-floor}--\ref{sec:ov-holley}\\
$12$ & $1/8$    & $0.0349$ & oriented & Sections~\ref{sec:ov-floor}--\ref{sec:ov-holley}\\
$12$ & $13/100$ & $0.0306$ & oriented & Sections~\ref{sec:ov-floor}--\ref{sec:ov-holley}\\
$11$ & $13/100$ & $0.0173$ & oriented & Sections~\ref{sec:ov-floor}--\ref{sec:ov-holley}\\
$11$ & $3/25$   & $0.0158$ & oriented & Sections~\ref{sec:ov-floor}--\ref{sec:ov-holley}\\
$10$ & $27/200$ & $0.00195$ & oriented, shared edges & Sections~\ref{sec:ov-floor}--\ref{sec:ov-holley}\\
$10$ & $13/100$ & $0.00104$ & oriented, shared edges & Sections~\ref{sec:ov-floor}--\ref{sec:ov-holley}\\ \hline
$9$ & $7/50$   & $0.000950$ & oriented, covariance   & noisy cavities, Section~\ref{sec:ov-noisy}\\
$9$ & $29/200$ & $0.00200$  & oriented, shared edges & noisy cavities, Section~\ref{sec:ov-noisy}\\
$9$ & $3/20$   & $0.00317$  & oriented, shared edges & noisy cavities, Section~\ref{sec:ov-noisy}\\
$9$ & $31/200$ & $0.00280$  & oriented, shared edges & noisy cavities, Section~\ref{sec:ov-noisy}\\
$9$ & $4/25$   & $0.000899$ & oriented, shared edges & noisy cavities, Section~\ref{sec:ov-noisy}\\
$9$ & $7/50$   & $0.00190$  & macrostep & Sections~\ref{sec:ov-floor}--\ref{sec:ov-holley}\\ \hline
$8$ & $3/20$   & $0.000660$ & macrostep & Sections~\ref{sec:ov-floor}--\ref{sec:ov-holley}\\
$8$ & $3/20$   & $0.00193$  & macrostep & noisy cavities, Section~\ref{sec:ov-noisy}\\ \hline
$7$ & $31/200$ & $0.000326$ & macrostep & noisy cavities, Section~\ref{sec:ov-noisy}\\
$7$ & $3/20$   & $0.000227$ & macrostep & noisy cavities, Section~\ref{sec:ov-noisy}\\ \hline
\end{tabular}
\caption{The seventeen pairs of Theorem~\ref{thm:explicit}, with lower bounds on
$\theta_*$ (rounded down) and the ingredients of each proof. The oriented global
step is the criterion of Theorem~\ref{thm:ov-criterion}, with the shared-edge
refinement of Lemma~\ref{lem:ov-shared} or the covariance refinement of
Lemma~\ref{lem:ov-covariance} where indicated; the macrostep global step is
Theorem~\ref{thm:ms-engine}. The pairs $(9,7/50)$ and $(8,3/20)$ are proved
twice, with different inputs. The certified constants are in
Tables~\ref{tab:overlap-constants} ($d\ge10$), \ref{tab:ov-noisy} ($d=9$,
oriented) and~\ref{tab:ms-local}--\ref{tab:ms-certificates} (macrostep).}
\label{tab:explicit}
\end{table}

The constants $\theta_*$ and the certified inputs behind them are defined in
Sections~\ref{sec:overlap-route} and~\ref{sec:macrostep}. In dimensions $9$ to
$12$ the global step is a scalar second-moment criterion $\mathcal S<1$ along
oriented paths (Theorem~\ref{thm:ov-criterion}). The certified margin
$1-\mathcal S$ is about $9\%$ at $(d,t)=(12,3/25)$ and just under $5\%$ at
$(11,13/100)$. In dimension $10$ the margin is below $1\%$, and only a refined
criterion, which gains from correlations along edges shared by two paths
(Lemma~\ref{lem:ov-shared}), certifies the two temperatures. In dimension $9$
the criterion certifies five temperatures once both local inputs are sharpened
(Section~\ref{sec:ov-noisy}), with margins between $0.24\%$ and $0.93\%$. In dimensions $7$ to $9$, Section~\ref{sec:macrostep} replaces the
oriented paths by paths with short lateral excursions and bounds their second
moment by a finite interaction-matrix criterion (Theorem~\ref{thm:ms-engine}).
Its certified contraction factors $\lambda$ leave margins $1-\lambda$ of about
$22\%$ at $(9,7/50)$, $10.7\%$ and $14\%$ at $(8,3/20)$ with the two kinds of
local inputs, and $7.5\%$ and $6.4\%$ at $t=31/200$ and $t=3/20$ in dimension
$7$, which is certified only with the sharpened inputs
(Remark~\ref{rem:ms-inputs}). These are exact inequalities, so thin
margins do not weaken the proofs. Each conclusion concerns the stated dimension
and temperature: no monotonicity in $\beta$ is available to fill in intervals,
and nothing is claimed in dimension $6$ or below (Remark~\ref{rem:ms-limits}).
At the listed points $2d\beta$ lies between $2.1$ and $3.2$, about two to three
times the asymptotic onset scale of Theorem~\ref{thm:headline}. They are points of
percolation, not estimates of the onset $\beta_{\mathrm b}(d)$, and we make no
claim about their position relative to a spin-glass transition.

The proofs of Theorem~\ref{thm:explicit} are computer-assisted, at two scales.
The local inputs are explicit finite sums of rationals, decided exactly. At
$d=12$, for instance, each temperature needs $24$ double sums of at most $156$
terms for the bond floor and $325$ numerator sums for the Ising comparison; in
dimension $9$ the sharpened inputs need, per temperature, $1377$ chord
polynomials of degree $5$ and $18$ class sums for the bond floor and $1330$
two-block classes for the Ising comparison. The oriented criterion involves finitely many constants, replaced by
rigorous rational enclosures. The macrostep criterion is a larger computation.
Its certificate is a vector on $46$ to $804$ symmetry classes of relative
positions of two paths, and for each class a per-step weight is summed over all
pairs of macrosteps, between $425{,}104$ and $956{,}484$ of them, about
$3.4\times10^8$ terms for the largest certificate. These sums are rigorous upper
bounds computed in floating-point arithmetic with directed rounding or an a
priori error analysis, and the final inequalities are decided in exact rational
arithmetic. Two independently written programs compute every macrostep
certificate. Sections~\ref{sec:ov-constants} and~\ref{sec:ms-certificates} describe the
computations, and the section on computational reproducibility lists the
programs.
Given the couplings, the spin and blue laws depend on $\beta$ and $J$ only
through the products $\beta J_e$, so the same conclusions hold for couplings
$\pm J_0$ at $\beta J_0=\operatorname{atanh}t$. Since the $\pm1$ law has no
atom at zero, the at-most-two theorem of the author's earlier paper
\cite[Theorem~1.9(b)]{Pei2026}, applied as in Section~\ref{sec:discussion},
shows that at these points every selected limit has almost surely exactly two
infinite blue components, one in each overlap sector. The same holds under
Theorem~\ref{thm:headline}(i) whenever the coupling law has no atom at zero.
The proof of Theorem~\ref{thm:explicit} does not use this upper bound.

In a selected limit, the conditional law of the two replicas given the
couplings need not be a product of two copies of one Gibbs state. The
probability-one statements in both theorems hold conditionally for almost
every disorder sample under each selected joint limit's own spin/bond kernel,
and the construction does not require the two replicas to remain independent
after conditioning only on the disorder in that limit. Our conclusions concern these
periodic joint limits; they do not quantify over arbitrary selections of
quenched Gibbs states. Both theorems establish coexistence of opposite-sign
infinite blue components, without determining their densities or proving
spin-glass order.

\subsection{Proof roadmap: local normalization and global geometry}\label{sec:roadmap}

The central difficulty is dependence. Blue bonds are independent once the
couplings and both spin configurations are fixed, but their averaged law is
not independent. Revealing one failed opening changes the information
available about later openings. The proof needs a lower bound that remains
valid after an exploration has recorded its successes and failures, rather
than just an estimate of a single edge's unconditioned probability.

For the first comparison, assume a finite exponential moment.
Start with a vertex already connected to the root by blue bonds and consider
an edge to a fresh target $v$. The \emph{star} of $v$ is the set of all edges
incident to $v$. Temporarily delete all interactions in this star except the
candidate edge, making $v$ a \emph{leaf}, a vertex with just one incident edge.
Its conditional blue-opening probability is then exactly
$\E\tanh(\beta|J|)$. Restore the deleted interactions one at a time and bound
how much this probability can change. Averaging a fair coupling sign cancels
the term linear in $\beta$, leaving a cost of order $\beta^2$ per restored
edge. Both fixed-disorder Gibbs normalizations must be retained in this
calculation. For the single-replica Fortuin--Kasteleyn (FK) bond
representation, uniform conditional bond bounds obtained by averaging fair
coupling signs are known \cite[Theorem~3]{GKN1992},
\cite[Lemma~3.1]{DeSantisGandolfi1999}.

To turn the conditional bound into an independent comparison, test each
target only once. When a physical blue opening is found, accept its endpoint
with an extra independent coin chosen to make the total conditional acceptance
probability a fixed number $p$. This extra rejection step is called
\emph{thinning}. Accepted vertices remain connected to the root by actual blue
bonds, while their acceptance labels have the law of independent occupied
sites \cite[proof of Lemma 5]{GrimmettStacey1998}.

\begin{remark}[A direct complete percolation proof]\label{rem:short-route}
For a reader interested first in blue percolation in some finite dimension,
Section~\ref{sec:cmr-verification} already proves Theorem~\ref{thm:headline}
for exponential-moment disorder, including unit symmetric and Gaussian couplings.
The leaf probability is $\E\tanh(\beta W)\sim m\beta$; each restored edge
costs a factor $1+O(\beta^2)$. At $\beta=c/(2dm)$ the total cost tends to one,
so the independent site parameter satisfies $2dp\to c$. Write $\pc(\Z^d)$
for the critical occupation probability of independent site percolation, and
$\thsite(p)$ for the probability that the origin lies in an infinite occupied
cluster when the sites of $\Z^d$ are occupied independently with probability
$p$; we write $\theta_d^{\rm site}(p)$ when the dimension varies.
Kesten's threshold
$2d\pc(\Z^d)\to1$ proves percolation for $c>1$; the matching upper estimate
and path counting exclude it for $c<1$. Local transfer and exterior conditioning
give the periodic-limit statements. The overlap-revealed exploration of
Section~\ref{sec:overlap-route} and the star estimates of
Appendix~\ref{app:fresh-star} are needed only for explicit dimensions, not for
existence in some finite dimension.
This is a geometric transition; Gibbs-state nonuniqueness remains conditional
on the imbalance estimate in Section~\ref{sec:discussion}.
\end{remark}

Appendix~\ref{sec:finite-mean} extends this argument beyond exponential-moment
laws. The exploration avoids targets incident to exceptionally large couplings
and bounds the loss using auxiliary independent marks. The physical
interactions themselves remain present. Appendix~\ref{app:probability}
isolates the general comparison and the passage from finite graphs to periodic
limits. An alternative proof for exponential-moment disorder works first
within the two overlap sectors, using the normalized bond comparison of
\cite[Lemmas 4.1--4.2]{MNS2008}. Appendix~\ref{sec:mns} supplies the further
conditional control of the overlap sites needed to use that comparison on the
lattice.

The explicit dimensions of Theorem~\ref{thm:explicit} require sharper inputs at
both levels. The comparison above reveals both replica configurations outside a
target's star, so its opening bound must hold for the least favorable exterior
configuration, and Kesten's threshold is only asymptotic.
Section~\ref{sec:overlap-route} instead reveals the whole overlap field $q$ at
the outset, including the overlap of each target, and queries blue bonds only
between vertices of overlap $+1$. This refines the overlap-sector comparison of
Appendix~\ref{sec:mns}, which conditions on both replicas and uses
\cite[Lemmas~4.1--4.2]{MNS2008}. In the gauge variables
$s_e=J_e\sigma_x\sigma_y$ \cite[Section~4.2]{Nishimori2001}, the first replica
is uniform and independent of the overlap and blue fields, so revealing it would
change no conditional law, and the two quenched normalizations combine into one
factor $Z(s)^{-2}$ (Lemma~\ref{lem:ov-gauge}). Given $q$ and all gauge signs and
activations off a target's star, the gauge signs of the star have an explicit
law: a product law reweighted by the part of $Z(s)^{-2}$ that depends on them
(Proposition~\ref{prop:ov-star-law}).

In that product law, the signs toward neighbors of opposite overlap are fair,
and those toward agreeing neighbors are tilted toward satisfaction. A failed
query turns a tilted sign into a fair one. Indeed, the activation probability
$p_A=1-e^{-4\beta}$ of \eqref{eq:blue} satisfies $e^{2\beta}(1-p_A)=e^{-2\beta}$,
so a failed agreeing edge carries the weight $e^{-2\beta}$ whatever its sign
\eqref{eq:ov-failed-fair}; this is the vacant term of the CMR edge weight
\cite[Eq.~(6)]{CMR1998}. A mediant inequality shows that fair signs cannot
lower the worst case over \emph{cavity laws}, the possible laws of the
neighboring spins once the target is removed. One bond floor, about
$0.9\tanh2\beta$, therefore survives any number of failed queries
(Theorem~\ref{thm:ov-extremal}); for independent fair neighboring spins the
opening probability is exactly $\tanh2\beta$. A quantile coupling, with no thinning, then
places the cluster of the origin for independent Bernoulli bonds on the edges
inside $\{q=+1\}$ within its blue cluster.

The price is that the plus sites of $q$ form a dependent random set. Their
single-site conditional odds admit an explicit lower bound, again uniform over
cavity laws, whose logarithm is linear in the sum of the neighboring overlaps;
we call it the \emph{Holley line}. A one-sided form of Holley's theorem
\cite{Holley1974} then shows that $q$ stochastically dominates an Ising field
with a small ferromagnetic coupling and a negative field. A classical mean-field bound on the
magnetization \cite{Pearce1981}, \cite[Theorem~3.53]{FriedliVelenik2017}
keeps its plus-density above $0.40$ at the listed temperatures.

Both local bounds must hold uniformly over cavity laws, and their certificates
are decided at frozen laws, under which the neighboring spins are deterministic
up to a global flip. Actual cavity laws are never frozen: given the spins at distance
two, the neighbors of the target are independent, and each has conditional mean
at most $\tanh((2d-1)\beta)$ in absolute value. As for an Ising edge on a tree
\cite[Section~3]{Mossel2004}, such a neighbor acts on the target through a
noisy binary channel, and this noise can be moved into the star interaction.
The cavity laws are then replaced by a smaller class at a smaller effective
coupling, at the price of a constant factor that cancels in every ratio used
(Lemma~\ref{lem:ov-noisy}). Over the smaller class, an exact certificate for
the floor, which pairs opposite sign patterns of the star, and a symmetrized
certificate for the Holley line give sharper local inputs
(Section~\ref{sec:ov-noisy}). At every point where they are used, the resulting
floors exceed every floor that is uniform over all cavity laws
(Remark~\ref{rem:ov-nesting}).

In the resulting comparison model, the plus sites of this Ising field are
joined by independent Bernoulli bonds. A weighted form of Kesten's second
moment for oriented paths \cite[Section~2]{CoxDurrett1983} bounds the
probability that this model has an open path from the origin to distance $n$
from below, uniformly in $n$ and in the torus size. Normalizing each path by its
own probability, as in \cite[Eq.~(5.16)]{LyonsPeres2016}, makes the pair law of
two paths uniform. The Ising field then
enters only through a pair ratio, which the FKG inequality \cite{FKG1971} and
Dobrushin's comparison theorem \cite{Dobrushin1970,Follmer1982} bound in terms
of the vertices shared by the two paths and an explicit correlation term
between the unshared ones. Along the difference of two independent oriented
walks this becomes a renewal sum, which a Schur test \cite{Schur1911} and
Khas'minskii's lemma \cite{Khasminskii1959} bound by an explicit scalar
criterion (Theorem~\ref{thm:ov-criterion} and
Proposition~\ref{prop:overlap-central}). Local transfer passes the resulting
finite-volume bound to every selected limit, and an adaptation of the exterior
conditioning of Proposition~\ref{prop:as-existence}, with Holley domination
inside finite boxes, gives almost-sure coexistence
(Section~\ref{sec:ov-assembly}). This gives dimensions $10$ to $12$ with the
first local inputs, and dimension $9$ with the sharpened ones. At each listed
temperature, the local inputs are verified by exact rational evaluation of
explicit finite sums, and the global criterion with rigorous rational
enclosures of finitely many constants; two independently written programs
certify every temperature (Section~\ref{sec:ov-constants}).

Below dimension $9$ the oriented criterion is not certified. Two oriented paths
meet only at equal times, and after a meeting they share their next step with
probability $1/d$. Each shared step costs $(\rho_ep)^{-1}$, where $p$ is the
bond floor and $\rho_e$ a lower bound on the Ising plus-density along a shared
edge, so the criterion needs $d\rho_ep>1$. With the refinement $\rho_e=\rho_c$
of Lemma~\ref{lem:ov-shared}, this fails at the local inputs certified in
dimension $8$ (Remark~\ref{rem:ov-dim9}). Section~\ref{sec:macrostep}
therefore changes the path family. A macrostep makes a short excursion in three lateral coordinates,
along a word in which no coordinate is used with both signs, and then one step
in one of the remaining $d-3$ coordinates. Such paths go back to Kesten's block
paths \cite[Section~2]{Kesten1990} and to the macrosteps of Jiang and Lang
\cite[Section~4]{JiangLang2026}; the monotone words keep them self-avoiding. Two
macrostep paths still meet only at equal levels, and they share a forward step
only if their excursions end at the same point and their forward directions
agree. Their relative
position is a Markov chain, and the Ising correlations between the two paths
become an additive functional of this chain (Lemma~\ref{lem:ms-boost}). The
resulting product is bounded by an interaction-matrix expansion, as in
\cite[Theorem~5]{JiangLang2026}. Since the Ising correlations never vanish, its
kernel is not finitely supported, and a supersolution outside a finite
\emph{near set} of relative positions of the two paths, together with
Khas'minskii's lemma for the forward walk, controls the rest
(Theorem~\ref{thm:ms-criterion}). A finite certificate for this criterion gives
the uniform torus bound (Theorem~\ref{thm:ms-engine}), and the local inputs and
the assembly of Section~\ref{sec:overlap-route} give dimensions $7$, $8$
and~$9$.

Appendix~\ref{app:fresh-star} gives a second route, for unit couplings. It
keeps the one-attempt exploration and replaces the leaf floor by the joint star
estimate of \cite{PeiV1} (Proposition~\ref{prop:unit-star}), which sums over
the target's central spins with both quenched normalizations retained. Kesten's
asymptotic threshold is replaced by the explicit oriented bound
$\pc(\Z^d)\le F_d$, where $F_d$ is the probability that two independent
oriented random walks meet again. A single floor above $F_d$ then suffices
(Theorem~\ref{thm:single-floor}). This gives dimensions $16$ to $20$ and
reproves dimension $22$, with the floor exceeding $F_{22}$ by a factor above
$1.2$ (Corollary~\ref{cor:single-floor}); dimension $21$ is not certified here.
A floor free of finite sums gives dimensions $25$ and $26$
(Corollary~\ref{cor:fs-closed-form}). Remark~\ref{rem:v1-certificate} compares
the longer dimension-$22$ argument of \cite{PeiV1}.

The response argument in Section~\ref{sec:response} uses squared spin
correlations in place of blue-opening events. Integrating out one spin and
averaging the fair signs of its whole star gives a mean-square cavity
inequality \cite{FZ1987}. Summed over endpoints before it is expanded, the
inequality closes a recursion for row sums of squared correlations and bounds
the susceptibility on the mean-field temperature scale, uniformly in large
dimensions. Finite susceptibility forces the
spatial overlap to vanish, and a local coupling-interpolation argument
(Section~\ref{sec:discussion}) shows that vanishing overlap already forces
exact balance of the infinite-blue-sector densities. Combining these estimates
with percolation gives the balanced regime described above, including every
explicit point of Theorem~\ref{thm:explicit}, and shows that any
positive-temperature susceptibility transition must lie below the first
blue-percolation temperature (Remark~\ref{rem:onset-ordering}). The estimates are first proved in finite
volume and passed through observables on fixed finite sets to the periodic
limit. Only then are infinite connectivity and density events used; this
order does not require limiting replicas to remain independent conditional
only on the disorder.

Figure~\ref{fig:architecture} summarizes these dependencies. The blue
exploration and susceptibility calculation meet only when comparing their
conclusions about temperature.

\begin{figure}[!htbp]
\centering
\begin{adjustbox}{max width=\linewidth}
\begin{tikzpicture}[>=Latex,
 box/.style={draw,rounded corners=2pt,align=center,font=\small,
   text width=4.0cm,minimum height=1.85cm,inner sep=5pt},
 specific/.style={box,fill=blue!5,draw=blue!55!black},
 general/.style={box,fill=gray!8,draw=black!65},
 result/.style={box,fill=green!4,draw=green!35!black},
 flow/.style={->,thick,draw=black!70}]
\node[font=\small\bfseries,text=blue!55!black] at (0,1.45)
 {CMR local input};
\node[font=\small\bfseries] at (5.0,1.45)
 {Exploration step};
\node[font=\small\bfseries,text=green!35!black] at (10.0,1.45)
 {Main conclusions};
\draw[dashed,black!35] (2.5,1.1)--(2.5,-6.4);
\node[specific] (local) at (0,0)
 {\textbf{Leaf + restoration}\\Quadratic fair-sign cost\\
 Conditional blue bound\\
 Section~\ref{sec:cmr-verification}, Appendix~\ref{sec:finite-mean}};
\node[general] (asymptotic) at (5.0,0)
 {\textbf{One attempt; thinning}\\Independent site cluster\\
 Upper: count paths\\Section~\ref{sec:general}};
\node[result] (onset) at (10.0,0)
 {\textbf{First blue onset}\\$2dm\beta_{\mathrm b}(d)\to1$\\
 Finite-mean disorder\\Theorem~\ref{thm:headline}};
\node[specific] (overlap) at (0,-2.6)
 {\textbf{Reveal the overlap}\\Failed queries stay fair\\
 Holley: weak Ising field\\
 Sections~\ref{sec:ov-gauge}--\ref{sec:ov-holley}};
\node[general] (moment) at (5.0,-2.6)
 {\textbf{Oriented paths}\\Quantile coupling\\
 Renewal second moment\\
 Sections~\ref{sec:ov-density}--\ref{sec:ov-assembly}};
\node[result] (dimension) at (10.0,-2.6)
 {\textbf{Dimensions 9 to 12}\\Symmetric $\pm1$ couplings\\
 $d=9$: sharpened inputs\\Theorem~\ref{thm:explicit}};
\node[specific] (noisy) at (0,-5.3)
 {\textbf{Noisy cavities}\\Smaller cavity class\\
 Sharper floor and odds\\
 Section~\ref{sec:ov-noisy}};
\node[general] (macro) at (5.0,-5.3)
 {\textbf{Macrostep paths}\\Lateral excursions\\
 Interaction matrix\\
 Section~\ref{sec:macrostep}};
\node[result] (low) at (10.0,-5.3)
 {\textbf{Dimensions 7, 8, 9}\\Symmetric $\pm1$ couplings\\
 $d=7$: sharpened inputs\\Theorem~\ref{thm:explicit}};
\draw[flow] (local)--(asymptotic);
\draw[flow] (asymptotic)--(onset);
\draw[flow] (overlap)--(moment);
\draw[flow] (moment)--(dimension);
\draw[flow] (noisy)--(macro);
\draw[flow] (macro)--(low);
\draw[flow] (overlap.south)--node[right,font=\small]{sharpen}(noisy.north);
\draw[flow,dashed] (overlap.south east)--(macro.north west);
\draw[flow,dashed] (noisy.north east)--(moment.south west);
\node[font=\small,align=center,text width=14cm] at (5.0,-6.95)
 {\textbf{Pass local finite-volume estimates to the periodic limit.}\\
 The reusable comparison and transfer lemmas are in Appendix~\ref{app:probability}.};
\node[specific,text width=9.0cm,minimum height=1.65cm]
 (response) at (2.5,-8.7)
 {\textbf{Independent physical-response argument}\\
 Star cavity inequality for squared correlations\\
 Row-sum recursion, uniform in $d$: Section~\ref{sec:response}};
\node[result,minimum height=1.65cm] (separation) at (10.0,-8.7)
 {\textbf{Separate the onsets}\\
 Percolation; finite $\chi_{\rm SG}$\\
 $T_\chi<T_{\mathrm b}$ if $T_\chi>0$};
\draw[flow] (response)--(separation);
\draw[flow,draw=green!35!black]
 (onset.east)--(12.7,0)--(12.7,-8.7)--(separation.east);
\end{tikzpicture}
\end{adjustbox}
\caption{CMR proof roadmap. The dashed line separates the local CMR estimates
from the exploration comparisons. The first row leads to
Theorem~\ref{thm:headline}. The second and third rows are the explicit-dimension
routes for unit couplings. With the overlap field revealed
(Section~\ref{sec:overlap-route}), failed blue queries cannot lower the bond
floor, and the overlap field dominates a weak Ising field; noisy cavities
sharpen both local inputs. Oriented paths give dimensions $10$ to $12$ with the
first inputs and dimension $9$ with the sharpened ones; macrostep paths give
dimensions $8$ and $9$ with the first inputs and dimensions $7$ and $8$ with the
sharpened ones. The dashed arrows mark these cross combinations. The fresh-star
route of Appendix~\ref{app:fresh-star}, not shown, sharpens the first row by a
joint star estimate and an oriented site comparison.
The outer arrow combines the onset theorem with the independent response bound
for exponential-moment disorder in sufficiently high dimension. Here
$\chi_{\rm SG}$ is the spin-glass susceptibility, $T_{\mathrm b}=1/\beta_{\mathrm b}(d)$
and $T_\chi=1/\beta_\chi(d)$, where $\beta_\chi(d)$ is the first inverse
temperature at which some selected limit has infinite $\chi_{\rm SG}$
(Remark~\ref{rem:onset-ordering}). A
susceptibility transition is not proved to exist. At the explicit points of
Theorem~\ref{thm:explicit}, finite susceptibility and exact balance follow
from Corollary~\ref{cor:explicit-balance}.}
\label{fig:architecture}
\end{figure}

\paragraph{Reading guide.}
The paper offers several levels of detail.
\begin{itemize}
\item \emph{Existence in a finite dimension:} the model, Theorem~\ref{thm:headline},
and Section~\ref{sec:cmr-verification} give the direct complete proof just described.
\item \emph{The sharp onset scale for general disorder:}
Appendix~\ref{sec:finite-mean} removes the exponential-moment assumption and
also determines the subcritical blue susceptibility.
\item \emph{Explicit dimensions for unit couplings:}
Section~\ref{sec:overlap-route} proves Theorem~\ref{thm:explicit} in dimensions
$9$ to $12$ and can be read directly after Section~\ref{sec:cmr-verification};
its Section~\ref{sec:ov-noisy} sharpens the local inputs, and the certified
constants of both engines are collected in Appendix~\ref{app:certificates}. Section~\ref{sec:macrostep} proves dimensions $7$ to $9$ with a
different global step; from Section~\ref{sec:overlap-route} it uses the local
inputs, the lemmas on the Ising field and the difference walk in
Sections~\ref{sec:ov-density}--\ref{sec:ov-second-moment}, and the assembly.
Appendix~\ref{app:fresh-star} gives the fresh-star route to dimensions $16$
to $20$ and $22$ for unit couplings, and a closed-form bound in dimensions $25$
and $26$.
\item \emph{What this means for spin-glass ordering:}
Sections~\ref{sec:response}--\ref{sec:discussion} separate percolation from
ordinary overlap order, prove the conditional Gibbs-state implication, and
state the missing imbalance estimate.
\end{itemize}

\FloatBarrier
\section{A direct CMR percolation proof}\label{sec:cmr-verification}\label{sec:general}

The central task is to lower-bound a blue opening after the exploration history.
We first do this for disorder with an exponential moment. Making the target a
leaf gives an exact probability; restoring its other interactions costs only
quadratically in $\beta$. A classical one-attempt exploration then proves the
high-dimensional theorem. The finite-mean extension uses the same
local-to-global structure, and the explicit-dimension proofs of
Section~\ref{sec:overlap-route} and Appendix~\ref{app:fresh-star} sharpen both
of its levels.

\subsection{The leaf identity and the cost of restoration}

The retained coordinates are the two spins at each vertex and the coupling
and raw activation uniform on each present edge. Deletion retains all vertex
coordinates and forgets only the removed edge's coupling and uniform. The blue
indicators on surviving edges are unchanged functions of the retained coordinates.
The laws on the resulting graphs are their separately normalized, same-disorder
two-replica laws. They are not assumed consistent under graph restriction.

For a vertex $v$, write $\starv(v)$ for its incident edges and set
\[
 \mathcal F_v^{\mathrm{ext}}
 :=\sigma\bigl((\sigma_x,\tau_x)_{x\ne v},(B_e)_{e\notin\starv(v)}\bigr).
\]
\begin{lemma}[Two-replica leaf identity]\label{lem:cmr-leaf}
If $v$ is a leaf with incident edge $e=uv$, then conditional on the exterior
spins and retained exterior variables, and on $W_e$, its blue probability is
$\tanh(\beta W_e)$. The exterior marginal is independent of $J_e$.
If $v$ is isolated, its overlap sign is fair independently of the exterior.
\end{lemma}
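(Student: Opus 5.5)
The plan is to write out the joint two-replica weight on the finite graph $G$ and factor out everything that involves $v$. Fix the disorder on $E\setminus\{e\}$ and $W_e$, and keep the sign $\epsilon_e$ random and fair. For a leaf $v$ with unique edge $e=uv$, the quenched two-replica weight is
\[
Z_{G,J,\beta}^{-2}\exp\Bigl(\beta\sum_{f\ne e}J_f(\sigma\sigma)_f+\beta\sum_{f\ne e}J_f(\tau\tau)_f\Bigr)e^{\beta J_e(\sigma_u\sigma_v+\tau_u\tau_v)} .
\]
Summing out $(\sigma_v,\tau_v)$ gives
\[
Z_{G,J,\beta}=2\cosh(\beta W_e)\,Z_{G-v,J,\beta},
\]
which does not depend on $\epsilon_e$. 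The conditional law of $(\sigma_v,\tau_v)$ given the exterior spins is therefore the product law in which $\sigma_v=\sigma_u\,\mathrm{sgn}(J_e)$ with probability $e^{\beta W_e}/(2\cosh\beta W_e)$, and likewise for $\tau_v$, independently. The exterior marginal, together with the exterior blue bonds (which are functions of exterior coordinates and independent uniforms), is then the two-replica law on $G-v$. This does not involve $J_e$, which proves the independence claim. The only care needed is that the factor $Z^{-2}$ is kept before averaging over $\epsilon_e$; the $\cosh$ factors cancel exactly against the summed numerator, so that averaging is harmless.

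Next I compute the blue probability from \eqref{eq:blue}. The edge $e$ is satisfied in both replicas with conditional probability $\bigl(e^{\beta W}/(2\cosh\beta W)\bigr)^2$. Multiplying by $1-e^{-4\beta W}$ gives
\[
\frac{e^{2\beta W}-e^{-2\beta W}}{4\cosh^2\beta W}=\frac{\sinh 2\beta W}{2\cosh^2\beta W}=\tanh(\beta W).
\]
This holds for either value of $\epsilon_e$, so conditioning on $W_e$ alone, or averaging over the sign, gives the same answer. The conditioning $\sigma$-algebra in the statement contains the exterior spins and the exterior blue bonds; it does not contain $B_e$ itself or $v$'s spins. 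Since the conditional law above depends only on $W_e$, the answer holds under any coarser or finer exterior conditioning of this kind.

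For an isolated vertex, the weight does not involve $(\sigma_v,\tau_v)$ at all. They are then independent uniform signs, independent of everything else, and $q_v=\sigma_v\tau_v$ is fair and independent of the exterior.

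The only potential obstacle is bookkeeping: the laws on different graphs are separately normalized and not assumed to be consistent. I handle this by doing the computation on $G$ itself and identifying the exterior marginal with the $G-v$ law via the partition-function factorization, rather than by restricting a measure. No real difficulty is expected.
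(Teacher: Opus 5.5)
Your proposal is correct and follows the paper's proof. The paper likewise takes per-replica satisfaction probability $(1+\tanh\beta W_e)/2$ times the activation probability $1-e^{-4\beta W_e}$ to get $\tanh(\beta W_e)$, and cancels a factor $2\cosh(\beta W_e)$ from each partition function to make the exterior marginal independent of $J_e$. The one point the paper makes explicit that you leave implicit is $W_e=0$: there $\operatorname{sgn}(J_e)$ is undefined, but the activation factor vanishes, so the identity still holds.
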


\begin{proof}
Conditional on the exterior spins and on $W_e>0$,
each replica satisfies this leaf edge with probability $(1+t_e)/2$, where
$t_e=\tanh(\beta W_e)$. Since its activation probability is
$4t_e/(1+t_e)^2$, the leaf blue probability is exactly $t_e$.
At $W_e=0$ the probability is zero, so the same identity holds.
Summing the leaf spins cancels a factor $2\cosh(\beta W_e)$ in each partition
function; the entire exterior marginal is independent of $J_e$.

When $v$ is isolated, each replica spin at $v$ is independent and uniform,
which proves the last assertion.
\end{proof}

\begin{figure}[tb]
\centering
\begin{adjustbox}{max width=\linewidth}
\begin{tikzpicture}[scale=.93,vertex/.style={circle,draw,fill=white,inner sep=2.2pt},
 reached/.style={vertex,fill=blue!65,draw=blue!65},
 candidate/.style={blue!75!black,very thick},
 untouched/.style={gray!65,dashed,thick},>=Latex]
\begin{scope}
\node[reached,label=below:$u$] (u) at (0,0) {};
\node[vertex,label=below:$v$] (v) at (1.7,0) {};
\node[vertex] (a) at (1.7,1.1) {};
\node[vertex] (b) at (2.8,.65) {};
\node[vertex] (c) at (2.8,-.65) {};
\node[vertex] (d) at (1.7,-1.1) {};
\node[reached] (w) at (-.7,.65) {};
\node[reached] (z) at (-.7,-.65) {};
\draw[blue!65,thick] (w)--(u)--(z);
\draw[candidate] (u)--node[above,font=\small] {$e$}(v);
\foreach \x in {a,b,c,d} {\draw[untouched] (v)--(\x);}
\node[align=center,font=\small] at (1,-1.75)
 {Original graph\\The entire star of $v$ is unqueried};
\end{scope}
\draw[->,thick] (3.4,0)--(5.4,0);
\node[align=center,font=\small,text width=2cm] at (4.4,1.25)
 {delete the other\\incident edges};
\begin{scope}[xshift=6.2cm]
\node[reached,label=below:$u$] (u2) at (0,0) {};
\node[vertex,label=below:$v$] (v2) at (1.7,0) {};
\node[vertex] at (1.7,1.1) {};
\node[vertex] at (2.8,.65) {};
\node[vertex] at (2.8,-.65) {};
\node[vertex] at (1.7,-1.1) {};
\node[reached] (w2) at (-.7,.65) {};
\node[reached] (z2) at (-.7,-.65) {};
\draw[blue!65,thick] (w2)--(u2)--(z2);
\draw[candidate] (u2)--node[above,font=\small] {$e$}(v2);
\node[align=center,font=\small] at (1,-1.75)
 {Cavity graph\\Leaf blue probability $=\tanh(\beta W_e)$};
\end{scope}
\end{tikzpicture}
\end{adjustbox}
\caption{The target comparison. Solid edges to the left illustrate an already
reached blue cluster; the highlighted candidate $e$ has not yet been tested.
Dashed edges are unqueried incident edges, not observed closed bonds. Temporarily
removing them makes $v$ a leaf without changing the exterior-history observable.
The finite-mean extension in Appendix~\ref{sec:finite-mean} further restricts
exploration to targets whose incident cutoff marks are all good.
Sign-averaged insertion controls the cost of restoring them. In the one-attempt exploration,
a failed target is never tried again, preserving this untouched-star condition
at every subsequent attempt.}
\label{fig:leaf}
\end{figure}

\begin{lemma}[Quadratic sign-averaged insertion]\label{lem:insertion}
Delete an edge $f=xy$ of a finite graph, retaining its vertices. Fix all retained
disorder and the new magnitude $W_f$, and average the new fair sign. The density
of the restored two-replica spin law relative to the cavity law lies between
$\sech^2(\beta W_f)$ and $\cosh(4\beta W_f)$.
The same bounds hold for expectations of every nonnegative retained observable
that does not use the inserted coupling or its activation coin.
\end{lemma}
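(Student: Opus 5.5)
The plan is to compute the Radon--Nikodym density explicitly at fixed retained disorder. The density then depends on the spins only through the two products $s_1=\sigma_x\sigma_y$ and $s_2=\tau_x\tau_y$, and on the cavity law only through the single correlation $a=\langle\sigma_x\sigma_y\rangle_{\rm cav}\in[-1,1]$. Both bounds then reduce to elementary inequalities in $t=\tanh(\beta W_f)$ and $a$.

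\emph{Step 1: the density at a fixed sign.} Fix the retained couplings, $W_f$, and the sign $\epsilon$ of $J_f=\epsilon W_f$. The restored single-replica weight is the cavity weight times $e^{\beta\epsilon W_f\sigma_x\sigma_y}$. For $s\in\{\pm1\}$ we have
\[
e^{\beta\epsilon W_f s}=\cosh(\beta W_f)\,(1+\epsilon t s).
\]
The partition function factorizes as $Z_{\rm rest}=Z_{\rm cav}\cosh(\beta W_f)\,(1+\epsilon t a)$. Since the quenched law normalizes each replica separately and uses the same disorder in both, the density of the restored two-replica spin law relative to the cavity one is
\[
D_\epsilon(\sigma,\tau)=\frac{(1+\epsilon t s_1)(1+\epsilon t s_2)}{(1+\epsilon t a)^2}.
\]
Averaging over the fair sign gives $D=\tfrac12(D_++D_-)$. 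This is legitimate because the cavity law does not involve $J_f$, so the fair sign is independent of everything retained. The case $W_f=0$ is trivial, with $D=1$.

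\emph{Step 2: the lower bound.} By AM--GM,
\[
D\ge\sqrt{D_+D_-}=\frac{\sqrt{(1-t^2s_1^2)(1-t^2s_2^2)}}{(1+ta)(1-ta)}=\frac{1-t^2}{1-t^2a^2}\ge1-t^2=\sech^2(\beta W_f).
\]

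\emph{Step 3: the upper bound.} For fixed $(s_1,s_2)$, each of $(1\pm ta)^{-2}$ is convex in $a$. Hence $D$ is convex in $a\in[-1,1]$ and is maximized at $a=\pm1$, where it suffices to check the endpoint values. Use $\frac{1+t}{1-t}=e^{2\beta W_f}$. If $s_1=s_2$, the endpoint value is at most
\[
\tfrac12\bigl(e^{4\beta W_f}+e^{-4\beta W_f}\bigr)=\cosh(4\beta W_f).
\]
If $s_1=-s_2$, the endpoint value is at most $\cosh(2\beta W_f)\le\cosh(4\beta W_f)$.

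\emph{Step 4: observables.} Any retained observable is a function of the spins and of the exterior couplings and activation uniforms. Given the spins, the blue indicators on surviving edges are the same functions of the same independent coordinates in both laws. So for a nonnegative retained observable $G$, the restored expectation equals $\E_{\rm cav}[G\,D]$, and the pointwise bounds on $D$ transfer to these expectations. The only point needing care is Step 1. One must keep the two separate normalizations $Z_{\rm rest}^{-2}$ before averaging the sign; averaging the unnormalized weights instead would not give the $\sech^2$ lower bound. I would also verify that the cavity correlation $a$ is the same for both replicas, which holds because they share the disorder.
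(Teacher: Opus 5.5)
Your proposal is correct and is essentially the paper's proof: you compute the same sign-averaged density $\frac12\frac{(1+tr)(1+ts)}{(1+tz)^2}+\frac12\frac{(1-tr)(1-ts)}{(1-tz)^2}$, keeping both quenched normalizations, and you transfer pointwise bounds to nonnegative retained observables exactly as the paper does. The only difference is in reading off the elementary bounds: the paper rewrites the density as $\frac{\cosh^2h}{\cosh^2(\beta W_f)}\cosh\bigl(\beta W_f(r+s)-2h\bigr)$ with $h=\operatorname{atanh}(tz)$, whose prefactor is exactly your $\sqrt{D_+D_-}=(1-t^2)/(1-t^2a^2)$, and it takes the upper bound from $|h|\le\beta W_f$ instead of your convexity-and-endpoint check.
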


\begin{proof}
Set $a=\beta W_f$, $t=\tanh a$, and let
$z=\langle\sigma_x\sigma_y\rangle_{G\setminus f}$ be the one-replica cavity
correlation. For $r=\sigma_x\sigma_y$ and $s=\tau_x\tau_y$, the density is
\begin{align}
 K_t(r,s;z)
 &=\frac12\frac{(1+tr)(1+ts)}{(1+tz)^2}
  +\frac12\frac{(1-tr)(1-ts)}{(1-tz)^2}\label{eq:density}\\
 &=\frac{\cosh^2h}{\cosh^2a}
   \cosh\!\bigl(a(r+s)-2h\bigr),
 \qquad h=\operatorname{atanh}(tz).\label{eq:cosh}
\end{align}
The denominators in \eqref{eq:density} retain the two quenched normalizations.
Since $|h|\le a$, the prefactor in \eqref{eq:cosh} belongs to
$[\sech^2a,1]$ and the last argument has absolute value at most $4a$.
This proves both bounds, including $a=0$. Carry along the raw uniforms of
retained edges. Their blue-indicator functions remain unchanged, so integration
gives the retained-observable assertion, including events specifying closed edges.
\end{proof}

\begin{corollary}[Conditional CMR opening bounds]\label{cor:cmr-local}
On a finite graph of maximum degree $\Delta$, suppose $M=\E\cosh(4\beta W)<\infty$,
and put $b=\E\tanh(\beta W)$, $\ell=\E\sech^2(\beta W)$ and $r=\ell/M$.
For $e=uv$, every positive-probability $H\in\mathcal F_v^{\rm ext}$ satisfies
\[
 br^{\Delta-1}\le\nu_{G,\beta}(B_e=1\mid H)
 \le br^{-(\Delta-1)},\qquad
 \nu_{G,\beta}(q_v=s\mid H)\ge\tfrac12r^\Delta\quad(s=\pm1).
\]
\end{corollary}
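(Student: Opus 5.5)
The plan is to compare the full graph $G$ with the cavity graph $G'$ obtained by deleting every edge of $\starv(v)$ except $e$. In $G'$ the vertex $v$ is a leaf, and Lemma~\ref{lem:cmr-leaf} applies there. For the overlap bound we instead delete all of $\starv(v)$, making $v$ isolated.

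\emph{Opening bound.} Write the conditional probability as a ratio
\[
\nu_G(B_e=1\mid H)=\frac{\nu_G(\{B_e=1\}\cap H)}{\nu_G(H)}.
\]
Both the numerator and the denominator are expectations of nonnegative observables that use only the retained coordinates of $G'$. Indeed, $H\in\mathcal F_v^{\rm ext}$ involves only spins and blue indicators off $\starv(v)$. The event $\{B_e=1\}$ uses $e$, which is retained, but none of the deleted edges or their coins. We restore the $\Delta'-1\le\Delta-1$ deleted edges $f$ one at a time. At each step we condition on all retained disorder and on $W_f$, and average the fair sign of $J_f$.

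By Lemma~\ref{lem:insertion}, each such expectation is multiplied by a factor in $[\sech^2(\beta W_f),\cosh(4\beta W_f)]$. The magnitudes $W_f$ are independent of everything retained, so we then integrate over $W_f$. This gives a factor between $\ell$ and $M$ per restored edge, applied to both the numerator and the denominator. The ratio is therefore distorted by a factor in $[(\ell/M)^{\Delta-1},(M/\ell)^{\Delta-1}]$.

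In $G'$, Lemma~\ref{lem:cmr-leaf} gives conditional probability exactly $\tanh(\beta W_e)$ given the exterior spins, the retained exterior variables and $W_e$. The ratio in $G'$ is therefore $\E\tanh(\beta W_e)=b$. One point needs care here: $H$ contains no information about $J_e$, and the exterior marginal is independent of $J_e$ by the same lemma. Hence averaging over $W_e$ given $H$ returns $b$.

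\emph{Overlap bound.} We write $\nu_G(q_v=s\mid H)=\nu_G(\{q_v=s\}\cap H)/\nu_G(H)$, delete the whole star of size at most $\Delta$, and use the isolated case of Lemma~\ref{lem:cmr-leaf}, which gives conditional probability $1/2$. For a lower bound we need the numerator lower factor $\ell^{\Delta}$ and the denominator upper factor $M^{\Delta}$. This yields $\tfrac12 r^{\Delta}$. The observable $\mathbf 1\{q_v=s\}$ uses only vertex coordinates, which deletion retains, so Lemma~\ref{lem:insertion} applies.

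\emph{Main obstacle.} The hard part is justifying that the per-edge factors multiply through a sequence of insertions whose intermediate laws are separately normalized and not restriction-consistent. This requires Lemma~\ref{lem:insertion} to hold pointwise in the retained disorder, so that we can iterate before averaging magnitudes. Averaging $W_f$ afterwards is legitimate because the disorder is a product measure and the bounds in the lemma are deterministic functions of $W_f$ alone. The quenched normalizations are already inside the density $K_t$, so no further normalization issue arises.

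A second check is that $\nu_G(H)>0$ forces the corresponding cavity quantity to be positive. This follows from the upper density bound, since $M<\infty$: if the cavity quantity vanished, the restored one would vanish too.
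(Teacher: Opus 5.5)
Your proposal is correct and takes the same approach as the paper. You delete the rest of the star, or the whole star for the label bound, and invoke the leaf and isolated identities of Lemma~\ref{lem:cmr-leaf}. You then restore the deleted edges one at a time with Lemma~\ref{lem:insertion}, bounding the success mass and the history mass in opposite directions. This is exactly the CMR instance of the retained-observable comparison in Lemma~\ref{lem:restoration}, including your positivity check on the cavity history mass; the passage from degree $\Delta'\le\Delta$ to $\Delta$ implicitly uses $r\le1$.
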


\begin{proof}
Make $v$ a leaf by deleting its other incident edges. The history $H$ and the
success test $\mathbf1_H B_e$ remain the same functions of the surviving
coordinates. In the leaf law their expectations have ratio $b$, by
Lemma~\ref{lem:cmr-leaf}. Restore the deleted edges one at a time.
After averaging each new magnitude, Lemma~\ref{lem:insertion} multiplies either
retained expectation by a factor between $\ell$ and $M$. Comparing the lower
bound for the success mass with the upper bound for the history mass gives
$br^{\Delta-1}$; interchanging the bounds gives the upper estimate.
For the label bound delete the whole star, use the isolated probability $1/2$,
and restore with tests $\mathbf1_H$ and $\mathbf1_H\mathbf1_{\{q_v=s\}}$.
This is the CMR application of the general retained-observable comparison in
Lemma~\ref{lem:restoration}. In particular, the conditioning includes exterior
spins but reveals no target-star coupling or blue indicator.
\end{proof}

\paragraph{Cycle constraints and permitted histories.}
In the full CMR graph, blue edges have $q_xq_y=1$ and red edges have
$q_xq_y=-1$. Multiplication around a cycle shows that every grey
(blue or red) cycle has an even number of red edges. An all-blue cycle also
has $\prod_e\operatorname{sgn}J_e=1$, since every edge is satisfied in each
replica. These support constraints remain in the spin-generated law throughout
the comparison: restoring an interaction does not force its blue indicator
open. The history restrictions matter here. A failed blue test is not an
observed red bond; a red edge at the target could fix its overlap sign and
prohibit a proposed blue opening. Apart from the overlap field, which the
exploration of Section~\ref{sec:overlap-route} reveals at the outset before
testing only edges with $q_u=q_v=1$, our explorations query only blue
indicators. The one-attempt exploration below never reveals an edge at a target before
testing it, and the overlap-revealed exploration of
Section~\ref{sec:overlap-route} admits earlier failures at a target, which
cannot lower its bond floor (Theorem~\ref{thm:ov-extremal}).
The independent fields below describe thinned discoveries, not additional
physical bonds.

\subsection{One attempt at each target}

Let $G=(V,E)$ be a countable, locally finite simple graph, let $o\in V$, and
let $\omega\in\{0,1\}^E$ have an arbitrary law $P$. Write
$\starv(v)=\{e\in E:e\ni v\}$ and $C_\omega(o)$ for the open root cluster.
An \emph{off-star history at $v$} specifies finitely many edge indicators,
none belonging to $\starv(v)$. Both open and closed observations are allowed.

\begin{proposition}[One-attempt exploration]\label{prop:exploration}
Suppose $p\in[0,1]$ and, for every oriented edge $uv$ and every off-star
history $H$ at $v$ of positive probability,
\begin{equation}\label{eq:abstract-history}
 P(\omega_{uv}=1\mid H)\ge p.
\end{equation}
There is a coupling with independent Bernoulli($p$) site variables
$(X_v)_{v\ne o}$, with $X_o=1$, such that the occupied site root cluster
$C_X(o)$ is contained in $C_\omega(o)$. Each accepted vertex has an open
parent edge, so the containing paths can be chosen inside the explored set.
The statement also holds on a prescribed induced subgraph, and with
vertex-dependent bounds $p_v$ and independent sites of parameters $p_v$.

It suffices to assume \eqref{eq:abstract-history} for histories that the
exploration below can actually produce. Any extra initial information or
conditioning event must be included in the hypothesis; it is not free to reveal.
\end{proposition}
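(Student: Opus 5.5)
We construct the coupling by a sequential one-attempt exploration, as in the proof of Lemma~5 of \cite{GrimmettStacey1998}. Fix an enumeration of $V$ and of the oriented edges, and enlarge the probability space with iid Uniform$[0,1]$ variables $(U_v)_{v\ne o}$ independent of $\omega$. Maintain three sets: \emph{accepted} vertices $A$ (initially $\{o\}$), \emph{rejected} vertices $R$ (initially empty), and the record of revealed edge indicators. At each step, pick the first boundary pair in the fixed order: an oriented edge $uv$ with $u\in A$ and $v\notin A\cup R$ (inside the prescribed induced subgraph if one is given). Reveal $\omega_{uv}$. Let $H$ be the history revealed so far and put $\pi=P(\omega_{uv}=1\mid H)\ge p$. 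Accept $v$ if $\omega_{uv}=1$ and $U_v\le p/\pi$; otherwise reject $v$. Set $X_v=\mathbf 1\{v\in A\}$ for vertices that are ever examined. For vertices never examined, set $X_v=\mathbf 1\{U_v\le p\}$, or use fresh independent coins.

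The first step is to check that the exploration is legitimate. The revealed history at the moment $v$ is examined contains only edges with an endpoint in $A\cup R$, all different from $v$. The only edges touched at $v$ earlier would be edges $u'v$ tested in an earlier attempt at $v$. But every vertex is examined exactly once and then leaves the boundary, so no edge of $\starv(v)$ has been revealed, apart from $uv$, which is revealed now. Hence $H$ is an off-star history at $v$, and \eqref{eq:abstract-history} gives $\pi\ge p$, so the thinning ratio $p/\pi$ lies in $[0,1]$. The conditional acceptance probability given the past, including $U$-values of earlier vertices, is then exactly $\pi\cdot p/\pi=p$, since $U_v$ is fresh. This is why the hypothesis only needs to hold for histories the exploration actually produces.

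The second step is to identify the law of $X$. The sequence of examined vertices is a predictable function of past decisions, and each decision is Bernoulli$(p)$ conditionally on the past. A standard induction over finite sets therefore gives
\[
P(X_{v_1}=\epsilon_1,\dots,X_{v_k}=\epsilon_k)=\prod_i p^{\epsilon_i}(1-p)^{1-\epsilon_i}.
\]
For this, sum over the (random) order in which $v_1,\dots,v_k$ are examined, and use that unexamined vertices receive independent coins. An equivalent formulation is to precompute an iid family $Y_v$ and let the examination consume $Y_v$ at the first visit; this yields iid Bernoulli$(p)$ directly. The main obstacle is making this consumption argument clean when the exploration is infinite and the order is random. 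I would handle it by defining $X_v$ via the "stack of coins" device: a uniform $\xi_v$ determines acceptance as the indicator of $\{\omega_{uv}=1,U_v\le p/\pi\}$, and I would show $X_v$ is independent of $\mathcal G_{\text{before }v}$ with mean $p$.

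The third step is the containment $C_X(o)\subseteq C_\omega(o)$. Every accepted vertex has an open parent edge to an earlier accepted vertex, so $A\subseteq C_\omega(o)$, with paths inside the explored tree. Conversely, if $w\in C_X(o)$, take an $X$-occupied path from $o$. By induction along the path, each vertex on it is accepted. When the exploration terminates or exhausts the boundary, the next path vertex, being adjacent to an accepted one, must have been examined. Since it has $X=1$, it was accepted and not rejected. This uses that the exploration processes all boundary pairs in a fair order, so that every boundary vertex is eventually examined on infinite graphs; local finiteness and the enumeration guarantee this. The vertex-dependent version replaces $p$ by $p_v$ throughout.
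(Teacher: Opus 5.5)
Your proposal is correct and follows essentially the paper's proof. Each target is retired after a single test, so every earlier query lies off its star; thinning by a fresh uniform makes each label exactly Bernoulli($p$) given the transcript; the label transcript is then identified with an adaptively revealed iid field, completed by independent coins; and containment follows from fairness of the exploration order. The differences are minor. You pick the first boundary edge in a fixed enumeration instead of using a FIFO queue, which is fair by the enumeration alone. You should also treat the case $\pi=0$ separately (it forces $p=0$, so set $X_v=0$), and say explicitly, as the paper does, that earlier auxiliary uniforms carry no information about unqueried edges because they are independent of $\omega$.
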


Figure~\ref{fig:one-attempt} shows the history restriction and the thinning step.

\begin{figure}[!htbp]
\centering
\begin{adjustbox}{max width=\linewidth}
\begin{tikzpicture}[
  x=1cm,y=1cm,font=\small,
  reached/.style={circle,fill=black,inner sep=2.5pt},
  fresh/.style={circle,draw=black!45,fill=white,inner sep=2.5pt},
  untouched/.style={draw=black!40,dashed},
  flow/.style={-{Stealth[length=2mm]},draw=black!65},
  box/.style={draw=black!35,rounded corners=2pt,align=center,inner sep=5pt}
]
% (a) The full star of the next target must be uninspected.
\node[anchor=west,font=\small\bfseries] at (0,4.9)
  {(a) Before the next test};
\coordinate (root) at (.55,2.35);
\coordinate (upper) at (1.6,3.05);
\coordinate (lower) at (1.6,1.65);
\coordinate (parent) at (2.65,2.35);
\coordinate (target) at (4.55,2.35);
\coordinate (retired) at (2.65,.95);
\draw[thick] (root)--(upper)--(parent);
\draw[thick] (root)--(lower);
\draw (lower)--(retired);
\node[reached,label=left:$o$] at (root) {};
\node[reached] at (upper) {};
\node[reached] at (lower) {};
\node[reached,label=above:$u$] at (parent) {};
\node[anchor=north,align=center,font=\footnotesize] at (.55,2.05)
  {$X_o=1$};
\draw[thick] ($(retired)+(-.07,-.07)$)--($(retired)+(.07,.07)$)
  ($(retired)+(-.07,.07)$)--($(retired)+(.07,-.07)$);
\node[anchor=north,align=center,font=\footnotesize] at (2.2,.68)
  {retired target:\\ never retried};
\node[align=center,font=\footnotesize] at (1.7,3.7)
  {earlier queried edges\\ all lie off $\starv(v)$};
\foreach \a/\b in {4.45/3.65,5.75/3.3,6.15/2.35,5.75/1.4,4.45/1.05}{
  \draw[untouched] (target)--(\a,\b);
  \node[fresh] at (\a,\b) {};
}
\draw[untouched] (parent)--(target);
\draw[-{Stealth[length=2mm]},very thick,orange!80!black]
  (3.03,2.35)--(4.13,2.35);
\node[above,font=\footnotesize,align=center] at (3.6,2.55)
  {test $uv$\\ next};
\node[circle,draw=orange!80!black,thick,fill=white,inner sep=3pt,
  label=above right:$v$] at (target) {};
\node[align=center,font=\footnotesize] at (5.1,.37)
  {fresh target:\\ whole star uninspected};
% (b) Conditional thinning gives one iid site label per target.
\begin{scope}[xshift=7.2cm]
\node[anchor=west,font=\small\bfseries] at (0,4.9)
  {(b) One test, one site label};
\node[box,text width=5.5cm] (physical) at (3.15,4.0)
  {Reveal the edge indicator.\\
   $r=P(\omega_{uv}=1\mid H)\ge p$.};
\node[box,text width=5.5cm] (thin) at (3.15,2.4)
  {Retain an open edge with probability $p/r$,\\
   using fresh randomness.};
\draw[flow] (physical)--(thin);
\node[box,text width=2.4cm] (accept) at (1.55,.55)
  {$X_v=1$\\ join the queue};
\node[box,text width=2.4cm] (reject) at (4.75,.55)
  {$X_v=0$\\ never retry};
\draw[flow] (thin.south)--++(0,-.25)-|
  node[pos=.75,left,font=\footnotesize] {$p$} (accept.north);
\draw[flow] (thin.south)--++(0,-.25)-|
  node[pos=.75,right,font=\footnotesize] {$1-p$} (reject.north);
\end{scope}
\end{tikzpicture}
\end{adjustbox}
\caption{One-attempt exploration (Proposition~\ref{prop:exploration}).
All earlier queries miss the fresh target's entire star. Thinning gives each
new site label conditional success probability $p$. Targets are never retried,
including physical successes discarded by thinning. Thus an iid site root
cluster, with root forced occupied, lies inside the physical open cluster.
This compares rooted clusters, not the entire edge field.
When $r=p=0$, set $X_v=0$.}
\label{fig:one-attempt}
\end{figure}

\begin{proof}
Sample the physical configuration $\omega$ from $P$ and reveal only the
indicators requested below. Initialize both the reached set and the set of
already-attempted vertices as $\{o\}$. Process reached vertices in a first-in,
first-out queue, with a fixed ordering of each finite neighbor set. When a
reached vertex $u$ has a neighbor $v$ never previously attempted, test $uv$.
Declare $v$ attempted immediately. It will never be tested again, even if
another reached vertex later becomes its neighbor.

Let $r$ be the conditional probability that $\omega_{uv}=1$ given the history
so far. Reveal this indicator and take a fresh uniform $U$, independent of
the physical configuration and all previous uniforms. Set
\[
 X_v=\mathbf1\{\omega_{uv}=1\}\mathbf1\{U\le p/r\}.
\]
If $r=0$, necessarily $p=0$, and set $X_v=0$. Add $v$ to the reached queue
exactly when $X_v=1$. Thus an open tested edge may also be discarded by thinning.

No earlier queried edge touches a never-attempted target: the other endpoint
of every previous test was reached and its target was immediately retired or
reached. Consequently the physical history lies off the entire star of $v$.
Past auxiliary uniforms reveal no additional information about unqueried edges
once the previous queried indicators are specified. More explicitly, for a fixed
transcript their likelihood is a function only of those indicators and the
independent randomization already used. This deferred-decisions property
preserves the physical conditional law, and \eqref{eq:abstract-history} gives
$r\ge p$. The next site label therefore has conditional success probability
exactly $p$, independently of the preceding transcript.

The next target is selected using only the assigned site labels and the fixed
queue rule. Thus the site-label transcript has exactly the law obtained by
revealing an iid site field with this exploration. Completing unattempted
labels independently gives the product site law. The physical configuration
retains its original law $P$.
The queue is fair because degrees are finite: every reached vertex is eventually
processed. Every occupied neighbor of the reached set is then reached, so its
final set is precisely $C_X(o)$. Each addition used an open parent edge, proving
the containment. The same proof permits deterministic restrictions of the graph
and replaces $p/r$ by $p_v/r$.
\end{proof}

\begin{remark}[Fresh targets and thinning]\label{rem:exploration}
For independent site/bond percolation with parameters $a,b$ and an occupied
root, testing a fresh target's site and its first candidate edge produces an
independent site root cluster of parameter $ab$
\cite[proof of Lemma 5, pp.~1794--1795]{GrimmettStacey1998}.
Here the conditional success probability may vary with the exploration
history; thinning makes the next site label independent of that history.
The comparison concerns rooted clusters, not all edge indicators. Retrying a
failed target would reveal part of its star and invalidate the stated hypothesis.
\end{remark}

\subsection{Percolation in the periodic ensemble}

\begin{corollary}[Direct CMR comparison]\label{cor:direct-cmr}
Suppose $\E\cosh(4\beta W)<\infty$, and set
\[
\begin{gathered}
 b=\E\tanh(\beta W),\quad \ell=\E\sech^2(\beta W),\quad
 M=\E\cosh(4\beta W),\\
 p=b(\ell/M)^{2d-1},\quad u=b(M/\ell)^{2d-1}.
\end{gathered}
\]
Every periodic joint limit has a blue root cluster containing the root-forced
independent site cluster of parameter $p$. If $p>\pc(\Z^d)$, both overlap
signs have infinite blue components almost surely. If $(2d-1)u<1$, the radius
and susceptibility estimates of Proposition~\ref{prop:path-bound} hold.
\end{corollary}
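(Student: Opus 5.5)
I would prove Corollary~\ref{cor:direct-cmr} in three passes: a uniform finite-volume comparison, local transfer to every periodic limit, and then the two alternatives.

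\emph{Finite volume.} On the torus $\mathbb T_L^d$ (with $L\ge3$) every vertex has degree $\Delta=2d$. Corollary~\ref{cor:cmr-local} then holds for every edge $e=uv$ and every positive-probability $H\in\mathcal F_v^{\rm ext}$, with
\[
 p=b(\ell/M)^{2d-1}\le\nu(B_e=1\mid H)\le b(M/\ell)^{2d-1}=u.
\]
Any off-star history at $v$ in the sense of Proposition~\ref{prop:exploration} is a finite intersection of events on blue indicators outside $\starv(v)$. It therefore lies in $\mathcal F_v^{\rm ext}$, and hypothesis \eqref{eq:abstract-history} holds with this $p$. Proposition~\ref{prop:exploration} gives a coupling on the torus in which the root-forced iid site cluster $C_X(o)$ of parameter $p$ is contained in $C_{\mathrm b}(o)$.

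For the upper side, the same conditional bound $u$ applies to each step of a self-avoiding path explored edge by edge. Each new endpoint is fresh, so the conditioning on the earlier path edges is an off-star history. Hence any fixed self-avoiding path of length $n$ is blue with probability at most $u^n$. Counting the at most $2d(2d-1)^{n-1}$ such paths from $o$ gives the hypotheses of Proposition~\ref{prop:path-bound}. I take that proposition to produce the radius and susceptibility estimates when $(2d-1)u<1$.

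\emph{Transfer to limits.} The torus comparison must pass to each $\nu\in\mathcal L_{d,\beta}$. The approach is to work with cylinder events. On the torus, the coupling gives $\nu_L(o\blue\partial B_n)\ge\Pp_p(o\leftrightarrow\partial B_n \text{ in } B_n)$ for each fixed $n$, once $L>2n+1$ so that $B_n$ embeds. These are events on finitely many coordinates, so the inequality passes to every local accumulation point. Letting $n\to\infty$ gives $\nu(o\blue\infty)\ge\thsite(p)$. For the statement with a fixed overlap sign, I would add the label $q_v$ to the exploration: by the label bound of Corollary~\ref{cor:cmr-local}, $\nu(q_o=s\mid\cdot)\ge\frac12 r^{2d}$. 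Revealing $q_o$ first and including it in the initial history, I would rerun Proposition~\ref{prop:exploration}, as its final paragraph permits, to get $\nu(o\blue\infty,\,q_o=s)>0$. The same transfer of the path-counting bounds to each limit is routine.

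\emph{Almost-sure coexistence; main obstacle.} The hard step is upgrading positive probability to almost-sure existence of an infinite blue component of each sign. Ergodicity is not available, since limits are only subsequential and need not be translation-ergodic. I would follow the exterior-conditioning scheme the paper attributes to Proposition~\ref{prop:as-existence}. Suppose that on some event no infinite blue cluster of sign $s$ exists. Condition on the exterior of a large box, which is an $\mathcal F^{\rm ext}$-type history. Start explorations from many disjoint roots inside the box, each with label probability bounded below and each dominating an independent site cluster. For each root, the probability of reaching to infinity is uniformly positive given the exterior. This makes the conditional probability of the bad event at most $(1-\delta)^k$ for any $k$, hence zero. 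The delicate point is showing that the comparison survives conditioning on exterior information in the limit rather than on the torus. I would handle this by proving the conditional statement at finite $L$ for cylinder exterior events and then passing to the limit.
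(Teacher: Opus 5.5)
Your finite-volume step and the path upper bound match the paper's proof. The gap is in the almost-sure step. Conditioning on the exterior of a large box fixes every blue indicator outside the box, including the full stars of all targets outside it. So Corollary~\ref{cor:cmr-local} gives no control over an exploration once it leaves the box. The claim that the probability of reaching infinity is uniformly positive given the exterior is not supplied by any local estimate; given that exterior and the finite box, reaching infinity is essentially already decided. The bound $(1-\delta)^k$ also treats explorations from different roots as independent, which they are not. They share one physical configuration, and a failed exploration reveals closed edges in the stars of vertices that a later exploration would need as fresh targets. This violates the off-star hypothesis of Proposition~\ref{prop:exploration}.

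The missing idea is that the bad event is itself exterior to every target. Let $T_s$ be the absence of an infinite blue component in the graph induced by $\{q=s\}$. Changing $(\sigma_v,\tau_v)$ and the indicators on $\starv(v)$ merges or splits components only at $v$, finitely many at a time, so it cannot create or destroy an infinite component. Hence $T_s\in\mathcal F_v^{\rm ext}$ for every $v$. If $\nu(T_s)>0$, the isolated-label bound with $H=T_s$ gives $\nu(T_s\cap\{q_o=s\})>0$. Every exploration history intersected with $T_s\cap\{q_o=s\}$ still lies in $\mathcal F_v^{\rm ext}$ for each target $v\neq o$. So Proposition~\ref{prop:exploration} runs under $\nu(\cdot\mid T_s,q_o=s)$ and gives an infinite blue root cluster with positive conditional probability. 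Blue edges join equal overlaps, a local constraint that passes to the limit, so this cluster lies in $\{q=s\}$, contradicting $T_s$. One root suffices, with no box and no independence.

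This argument needs the conditional bounds in the limit law for the non-cylinder event $T_s$. Your transfer should therefore carry the undivided inequality $\E[hB_e]\ge p\,\E[h]$ for cylinder tests $h$, extended to all of $\mathcal F_v^{\rm ext}$ by a monotone class, not only the connection probabilities $\nu_L(o\blue\partial B_n)$. Running the exploration in the limit law this way also gives the root-cluster containment the statement asserts, which transferring probabilities alone does not.
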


\begin{proof}
Corollary~\ref{cor:cmr-local} gives the finite-volume conditional bounds.
Write them without division, for example $\E[hB_e]\ge p\E[h]$ for every
nonnegative finite exterior-cylinder test $h$. Pass this fixed-coordinate
inequality to the selected periodic limit, then extend to the exterior
sigma-field by a monotone-class argument. Proposition~\ref{prop:exploration}
now supplies the independent site cluster. Equivalently, explore inside a fixed
ball, keep its full ambient law, pass to the limit and only then increase the
radius, as in Figure~\ref{fig:local-transfer}.

To upgrade positive root probability to almost-sure existence in sign $s$,
let $T_s$ be absence of any infinite blue component with that sign.
Deleting one vertex and its finite star cannot change this event, so $T_s$
is exterior to every target. If $\nu(T_s)>0$, the isolated-label bound gives
$\nu(T_s,q_o=s)>0$. Under this conditioning the exploration bound survives
and yields an infinite root cluster with positive probability, contradicting
$T_s$. These reusable arguments are stated in
Propositions~\ref{prop:local-transfer} and \ref{prop:as-existence}.
Global reversal of one replica preserves blue bonds and exchanges the root
overlap signs, so the positive root-percolation probability splits equally
between them.

For the upper bound, expose a prescribed self-avoiding path in order.
Its preceding edges are off the next target's star, so a path of length $n$
has probability at most $u^n$. There are at most $2d(2d-1)^{n-1}$ such paths
from a root. Summing gives Proposition~\ref{prop:path-bound}'s radius and
cluster-size estimates, without a monotonicity or boundary comparison.
\end{proof}

\begin{figure}[!htbp]
\centering
\begin{adjustbox}{max width=\linewidth}
\begin{tikzpicture}[>=Latex,
 box/.style={draw=black!55,rounded corners=2pt,fill=gray!5,
   align=center,font=\small,text width=3.7cm,minimum height=1.6cm,inner sep=5pt}]
\node[box] (finite) at (0,0)
 {\textbf{Finite graph}\\Fixed observation $B_R$\\$\nu_L(A_R)\ge a$};
\node[box] (local) at (5.25,0)
 {\textbf{Selected local limit}\\Same fixed event $A_R$\\$\nu(A_R)\ge a$};
\node[box] (infinite) at (10.5,0)
 {\textbf{Infinite connection}\\$A_R\downarrow\{o\leftrightarrow\infty\}$\\$\nu(o\leftrightarrow\infty)\ge a$};
\draw[->,thick] (finite)--node[above,font=\footnotesize] {$L\to\infty$}
 node[below,font=\footnotesize] {fixed $R$}(local);
\draw[->,thick] (local)--node[above,font=\footnotesize] {$R\to\infty$}(infinite);
\end{tikzpicture}
\end{adjustbox}
\caption{Transfer a finite observation before asking an infinite question.
Here $A_R$ is connection from $o$ to $\partial B_R$ by open edges inside $B_R$,
and $a$ is uniform in $R$ and in all sufficiently large $L$ for each $R$.
Local convergence passes
the fixed cylinder event to the selected law; only then does continuity from
above give infinite connection. The full ambient law is retained throughout,
without imposing a new boundary condition on $B_R$.}
\label{fig:local-transfer}
\end{figure}

\paragraph{The high-dimensional conclusion.}
Suppose $\E e^{aW}<\infty$ for some $a>0$. Then, as $\beta\downarrow0$,
$b=m\beta+O(\beta^3)$ and $\ell/M=1+O(\beta^2)$. At
$\beta=c/(2dm)$ the accumulated restoration cost therefore vanishes:
$2dp\to c$ and $(2d-1)u\to c$.
Kesten's classical theorem states
\begin{equation}\label{eq:kesten}
 \frac1{2d-1}\le p_c^{\mathrm{bond}}(\Z^d)
 \le\pc(\Z^d)
 \le\frac1{2d}+O\!\left(\frac{(\log\log d)^2}{d\log d}\right)
\end{equation}
\cite[Theorem 1, p.~220, Eq.~(1.3)]{Kesten1990}. Thus $p>\pc(\Z^d)$ for fixed $c>1$ and all sufficiently large finite $d$.
For $0<c<1$, take $\rho=(1+c)/2$; then $(2d-1)u\le\rho<1$ for large $d$.
The upper parameter $u$ is nondecreasing in $\beta$, giving this same
exponential path bound throughout $0\le\beta\le c/(2dm)$.
This proves both parts of Theorem~\ref{thm:headline} under exponential
integrability. Appendix~\ref{sec:finite-mean} removes that restriction by
skipping targets incident to a very large coupling, which leaves the
physical couplings and the Gibbs law unchanged, and determines the
subcritical blue susceptibility (Corollary~\ref{cor:chi});
Section~\ref{sec:overlap-route} and Appendix~\ref{app:fresh-star} give explicit
dimensions for unit couplings.

\section{Explicit dimensions through overlap-revealed exploration}\label{sec:overlap-route}

Theorem~\ref{thm:headline} locates the blue-percolation onset as the dimension
grows, but it does not name a dimension. For unit couplings we now prove
Theorem~\ref{thm:explicit} in dimensions $9$ to $12$: both overlap signs
percolate, for instance, in dimension $12$ at $\tanh\beta=3/25$ and in
dimension $9$ at $\tanh\beta=3/20$. Section~\ref{sec:macrostep} keeps the local
inputs and the assembly of this section, changes its global step, and reaches
dimensions $7$ to $9$.

The obstruction lies in what the exploration conditions on. The comparison of
Section~\ref{sec:cmr-verification} reveals both replica configurations outside a
target's star and accepts each target with one independent site label. Its
opening bound must hold for every exterior spin configuration, and the site
comparison pays for the least favorable one. For $\pm1$ couplings, a sharper
star bound of this kind combined with an oriented site comparison reaches
dimensions $16$--$20$ and $22$ (Appendix~\ref{app:fresh-star}). The
exploration below instead reveals the whole overlap field $q$ at the outset,
including the overlap of each target, and queries blue bonds only between
vertices of overlap~$+1$.

The comparison of Appendix~\ref{sec:mns} already works inside the overlap
sectors. It conditions on both replicas, hence on $q$, compares the blue bonds
on agreeing edges with independent bonds of parameter $L_\beta$, which is
$p_A/2$ for unit couplings \cite[Lemmas~4.1--4.2]{MNS2008}, and dominates the
overlap field by independent sites at its worst conditional probability, as in
\cite[Eq.~(8)]{ABL1987}; its explicit unit-coupling example lies in dimension
$180$ (Appendix~\ref{app:separation-certificate}). Revealing the replicas in
addition to $q$ changes no conditional law, by Lemma~\ref{lem:ov-gauge}(iii)
below. The gain comes from three refinements: a sharper bond floor that failures
cannot lower, domination by a weak Ising field instead of independent sites,
and a second moment adapted to that Ising field.

\emph{Failures are harmless once $q$ is known.} On an edge joining equal
overlaps, the blue indicator records only whether the edge is satisfied and an
independent activation. Given $q$ and everything off a target's star, the
satisfaction signs of the star have an explicit law
(Proposition~\ref{prop:ov-star-law}). In the product factor of that law, edges
to neighbors of equal overlap are tilted toward satisfaction, while edges to
neighbors of opposite overlap, and agreeing edges already observed to fail, are
exactly fair. A mediant inequality shows that fair coordinates cannot lower the
worst case over cavity laws: the floor is attained when every neighbor agrees
and nothing has failed. One bond floor $p_B$, about $0.9\tanh2\beta$, therefore
holds after any number of failures observed by a bond exploration inside
$\{q=+1\}$ (Section~\ref{sec:ov-floor}). At the lower level $p_A/2$, the same
robustness already follows from the coordinatewise monotone posterior density
in the proof of \cite[Lemma~4.1]{MNS2008} (Lemma~\ref{lem:mns}). The floor
$p_B$ is higher, and at $d=12$, $\tanh\beta=3/25$ the criterion below needs a
certified value above $p_A/2$ (Section~\ref{sec:ov-floor}).

\emph{The overlap field dominates a weak Ising field.} The price of revealing
$q$ is that the plus sites form a dependent random set. Their single-site
conditional odds admit an explicit lower bound $e^{2h+2K'S}$ that is uniform
over the unknown cavity laws, where $S$ is the sum of the neighboring overlaps;
we call it the Holley line. By Holley's theorem, $q$ dominates the Ising field
with the small coupling $K'$ and the negative field $h$
(Section~\ref{sec:ov-holley}). For
$d=12$ and $\tanh\beta=3/25$ we take $K'\approx0.0130$ and $h\approx-0.0900$,
and the plus-density is at least $0.435$ (Section~\ref{sec:ov-density}). We
are not aware of an earlier stochastic comparison of a spin-glass overlap field
with a ferromagnetic Ising field.

\emph{The comparison model percolates.} Plus sites of that Ising field, joined
by independent Bernoulli bonds, percolate along oriented paths by a weighted
second moment. The Ising law enters only through an upper bound on a pair ratio,
proved with the FKG inequality and Dobrushin's comparison theorem, and through a
correlation gain along shared edges. The resulting renewal series is summed with
a Schur test and Khas'minskii's lemma, which yields an explicit scalar
criterion (Section~\ref{sec:ov-second-moment}).

\emph{Noisy cavities sharpen both local inputs.} The floor and the Holley line
are uniform over all flip-invariant cavity laws, and their certificates are
decided at frozen laws. Actual cavity laws are not frozen: given the spins at distance
two, each neighbor of the target has conditional mean at most
$\tanh((2d-1)\beta)$ in absolute value. Moving this noise into the star
interaction replaces the cavity laws by a smaller class at a smaller effective
coupling (Lemma~\ref{lem:ov-noisy}). Over that class, an exact pair certificate
for the floor and a symmetrized certificate for the Holley line give sharper
inputs (Section~\ref{sec:ov-noisy}); with them the criterion reaches
dimension $9$.

Each constant is an exact rational number, or a rigorous rational enclosure of
finitely many irrational numbers, certified by small finite computations
(Section~\ref{sec:ov-constants}). The certified margin in the final criterion
is about $9\%$ at $(d,\tanh\beta)=(12,3/25)$ and just under $5\%$ at
$(11,13/100)$. In
dimension $10$ it is below $1\%$, and there only a refined criterion, with a
correlation gain along shared path edges (Lemma~\ref{lem:ov-shared}), certifies
the two temperatures of Theorem~\ref{thm:explicit}. In dimension $9$ the
sharpened inputs and the same refinement certify four temperatures, with margins
between $0.32\%$ and $0.93\%$, and a sharper covariance bound
(Lemma~\ref{lem:ov-covariance}) certifies a fifth. With the Ising law $\mu_L$
of \eqref{eq:ov-ising} and the event $\mathcal O_n$ of
Section~\ref{sec:ov-second-moment}, the argument is the chain, valid for every
$n\ge1$ and all sufficiently large $L$,
\begin{equation}\label{eq:ov-chain}
 \nu_{\mathbb T_L^d,\beta}\bigl(q_o=1,\ o\blue\partial B_n\bigr)
 \ \ge\ \bigl(\mu_L\otimes\mathrm{Ber}(p)^{\otimes E}\bigr)(\mathcal O_n)
 \ \ge\ \theta_* .
\end{equation}
The first inequality combines Sections~\ref{sec:ov-gauge}--\ref{sec:ov-holley}
through a quantile coupling; the second is the second-moment criterion.
Section~\ref{sec:ov-assembly} proves \eqref{eq:ov-chain} and passes it to the
periodic limits.

\paragraph{The statement proved.}
Fix a row $(d,t)$ of Table~\ref{tab:overlap-constants} or
Table~\ref{tab:ov-noisy} and put $\beta=\operatorname{atanh}t$. The row lists
the bond parameter $p$ and an Ising line $g_K=e^{2K'}$, $g_h=e^{2h}$, all
rational, together with certified bounds $\rho_-$ and $\eta'$ on the Ising
plus-density and a renewal constant, and bounds on the renewal scores of
Theorem~\ref{thm:ov-criterion}. Put $\kappa=(1-\rho_-)/\rho_-$ and
$\alpha_{\rm D}=2d\tanh K'$. Let $\mathcal S_\bullet$ be the score used:
$\mathcal S$ in dimensions $11$ and $12$, $\mathcal S_c$ in dimension $10$ and
at four of the five temperatures in dimension $9$, and $\mathcal S'_c$ at
$(d,t)=(9,7/50)$. Define
\begin{equation}\label{eq:ov-theta}
 \theta_*:=e^{-1/100}\,
 \frac{\rho_-(1-\eta')(1-\mathcal S_\bullet)}{1-1/d},
\end{equation}
\begin{equation}\label{eq:ov-L1}
 L_1(n):=\max\Bigl\{40,\ 2n+4,\
 2n+1+\Bigl\lceil\frac{\log\bigl(100\kappa n^2/(1-\alpha_{\rm D})\bigr)}
 {\log(1/\alpha_{\rm D})}\Bigr\rceil^+\Bigr\}.
\end{equation}

\begin{proposition}[Uniform torus bound]\label{prop:overlap-central}
For each row of Tables~\ref{tab:overlap-constants} and~\ref{tab:ov-noisy},
every $n\ge1$, every $L\ge L_1(n)$ and $s=\pm1$,
\[
 \nu_{\mathbb T_L^d,\beta}\bigl(q_o=s,\ o\blue\partial B_n\bigr)\ \ge\ \theta_*,
\]
where the blue connection uses paths inside $B_n$.
\end{proposition}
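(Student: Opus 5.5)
We prove Proposition~\ref{prop:overlap-central} along the chain \eqref{eq:ov-chain}, splitting it into a local comparison step and a global second-moment step.

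\textbf{Reduction to one sign.} Global reversal of one replica preserves blue bonds and exchanges the overlap signs. Hence it suffices to treat $s=+1$.

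\textbf{Step 1: revealing the overlap field.} We reveal the whole field $q$ on $\mathbb T_L^d$ and pass to the gauge variables $s_e=J_e\sigma_x\sigma_y$. By Lemma~\ref{lem:ov-gauge}, the first replica is uniform and independent of $(q,B)$, and the quenched normalizations combine into $Z(s)^{-2}$. We then explore $\{q=+1\}$ from $o$ by a bond exploration inside $B_n$, querying only edges with $q_u=q_v=1$. Proposition~\ref{prop:ov-star-law} and Theorem~\ref{thm:ov-extremal} give the bond floor $p_B\ge p$ after every history this exploration can produce, including earlier failures at the target. In dimension $9$ we use the noisy-cavity version (Lemma~\ref{lem:ov-noisy} and Section~\ref{sec:ov-noisy}). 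The floor lets us build, by a quantile coupling with no thinning, independent Bernoulli$(p)$ bonds on the edges inside $\{q=+1\}$ whose root cluster lies in the blue cluster of $o$. So, conditionally on $q$, the event $\{q_o=1,\ o\blue\partial B_n\}$ dominates the corresponding event for Bernoulli bonds on the plus set of $q$.

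\textbf{Step 2: Holley domination.} The Holley line of Section~\ref{sec:ov-holley} bounds the single-site conditional odds of $q$ below by $g_hg_K^{S}$, uniformly over cavity laws. The one-sided Holley theorem then lets us couple $q\ge\xi$ with $\xi\sim\mu_L$, the Ising law \eqref{eq:ov-ising} with coupling $K'$ and field $h$. The event $\mathcal O_n$ (an open oriented, or admissible, path from $o$ to distance $n$ through plus sites of $\xi$) is increasing in both sites and bonds. This gives the first inequality of \eqref{eq:ov-chain}.

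\textbf{Step 3: second moment, the main obstacle.} We take $Z_n$ to be the sum over oriented paths $\gamma$ of length $n$ from $o$ of $\mathbf1\{\gamma\text{ open and plus}\}$, each term normalized by its own probability, and apply Cauchy--Schwarz: $P(\mathcal O_n)\ge(\E Z_n)^2/\E Z_n^2$. The pair ratio $\mu_L(\gamma\cup\gamma'\text{ plus})/(\mu_L(\gamma\text{ plus})\mu_L(\gamma'\text{ plus}))$ splits into two parts:
\begin{itemize}
\item For the shared vertices, FKG together with the lower density bound $\rho_-$ controls the factor; on shared edges we use the refined $\rho_c$ of Lemma~\ref{lem:ov-shared} or Lemma~\ref{lem:ov-covariance}.
\item For the unshared vertices, Dobrushin's comparison with $\alpha_{\rm D}<1$ bounds the correlation, with the factor $\eta'$ accounting for it.
\end{itemize}
Along the difference walk of two independent oriented walks, the bound becomes a renewal sum. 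A Schur test and Khas'minskii's lemma bound it by $(1-\mathcal S_\bullet)^{-1}$ times the prefactors, via Theorem~\ref{thm:ov-criterion}. The torus must be large enough that the two paths do not wrap and that Dobrushin tails, of order $\alpha_{\rm D}^{L-2n}$, are negligible. The choice $L\ge L_1(n)$ makes the wrap-around error at most $n^{-2}/100$ times the main term, which is absorbed into the factor $e^{-1/100}$.

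\textbf{Assembly and what needs care.} Collecting $\rho_-$ from the root, $(1-\eta')$, $(1-\mathcal S_\bullet)$, and the $1/(1-1/d)$ gain from the first step gives $\theta_*$ as in \eqref{eq:ov-theta}. The delicate points are:
\begin{itemize}
\item checking that every history produced by the exploration is covered by Theorem~\ref{thm:ov-extremal};
\item obtaining the finite-$L$ Dobrushin error uniformly in $n$.
\end{itemize}
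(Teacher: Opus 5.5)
Your proposal is correct and follows essentially the same route as the paper: the bond floor after failures (R1), Holley domination of $q$ by $\mu_L$ (R2), the quantile-coupled exploration of Lemma~\ref{lem:ov-exploration}, the uniform second-moment bound of Corollary~\ref{cor:ov-uniform} (R3), and the replica flip $\Theta$ for $s=-1$. Two bookkeeping corrections. In dimension~$9$ both local inputs come from Section~\ref{sec:ov-noisy}: the floor $p$ there comes from Lemma~\ref{lem:ov-pair} and exceeds $p_B$, and the Holley line comes from Corollary~\ref{cor:ov-sym-holley}. The factor $1/(1-1/d)$ in $\theta_*$ does not come from the exploration step; it comes from $\mathbf E_0\Phi_\infty\le(1-1/d)/(1-\eta')$, since the final infinite excursion of the difference walk must leave $0$ at its first step.
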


The row fixes $(d,\beta)$, the rational inputs, and hence $\theta_*$ and the
function $L_1$; then $n$, then $L\ge L_1(n)$, then $s$ are arbitrary. Only the
finiteness of $L_1(n)$ matters below. Theorem~\ref{thm:explicit} follows for
these rows in Section~\ref{sec:ov-assembly}. The finite-ball bound passes to every selected
periodic limit in the order of Figure~\ref{fig:local-transfer}, and an
adaptation of the exterior conditioning of Proposition~\ref{prop:as-existence}
gives almost-sure coexistence. Neither step uses independence of the limiting
replicas.

\subsection{Gauge representation and the overlap-revealed star law}\label{sec:ov-gauge}

Throughout this section the couplings are iid and uniform on $\{-1,1\}$, and
$\mathbb T_L$ abbreviates the torus $\mathbb T_L^d$ with $L\ge3$, so every vertex
has $2d$ distinct neighbors. Write $V$ and $E$ for its vertex and edge sets, and put
$\nu_L=\nu_{\mathbb T_L,\beta}$. Put
\[
 t=\tanh\beta,\qquad e^{2\beta}=\frac{1+t}{1-t},\qquad
 C_\beta=\cosh2\beta,\qquad
 a=\frac{e^{2\beta}}{2\cosh2\beta},\qquad
 p_A=1-e^{-4\beta}.
\]
Then $a/(1-a)=e^{4\beta}$, $p_Aa=\tanh2\beta$, $2C_\beta a=e^{2\beta}$ and
$2C_\beta(1-a)=e^{-2\beta}$; all these numbers are rational when $t$ is. We
realize the blue indicators of \eqref{eq:blue} as
$B_e=\mathbf1\{J_e\sigma_x\sigma_y=1,\ J_e\tau_x\tau_y=1\}A_e$, where the
activations $A_e$ are iid Bernoulli($p_A$) and independent of $(J,\sigma,\tau)$,
and keep writing $\nu_L$ for the joint law of $(J,\sigma,\tau,A,B)$. Its
disorder and spin marginal is
$2^{-|E|}\mu_{\mathbb T_L,J,\beta}(\sigma)\mu_{\mathbb T_L,J,\beta}(\tau)$, with both quenched
normalizations.

\begin{lemma}[Gauge representation]\label{lem:ov-gauge}
Put $s_e=J_e\sigma_x\sigma_y$ for $e=xy\in E$, and $q_x=\sigma_x\tau_x$. The map
$(J,\sigma,\tau)\mapsto(s,\sigma,q)$ is a bijection, and under $\nu_L$:
\begin{enumerate}\renewcommand{\labelenumi}{(\roman{enumi})}
\item $Z_{\mathbb T_L,J,\beta}=Z(s):=\sum_{\rho\in\{\pm1\}^V}
 \exp\bigl(\beta\sum_{xy\in E}s_{xy}\rho_x\rho_y\bigr)$;
\item $\nu_L(s,\sigma,q)=2^{-|E|}\exp\bigl(\beta\sum_{xy\in E}s_{xy}(1+q_xq_y)\bigr)
 Z(s)^{-2}$, and $A$ is independent of $(s,\sigma,q)$;
\item $\sigma$ is uniform on $\{\pm1\}^V$ and independent of $(s,q,A)$;
\item $B_e=\mathbf1\{s_e=1,\ q_x=q_y\}A_e$ for every $e=xy$.
\end{enumerate}
\end{lemma}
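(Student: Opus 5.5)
The plan is to verify the four assertions by direct change of variables, doing the bijection first and then the joint law.

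\emph{Bijection.} Given $(s,\sigma,q)$, recover $\tau_x=\sigma_xq_x$. Since $\sigma_x\sigma_y\in\{\pm1\}$, we also recover $J_e=s_e\sigma_x\sigma_y$. Both spaces are finite of the same cardinality $2^{|E|}\cdot4^{|V|}$, so the map is a bijection.

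\emph{Part (i).} Substitute $\rho_x=\sigma_x\eta_x$ in the sum defining $Z_{\mathbb T_L,J,\beta}=\sum_\eta\exp(\beta\sum J_{xy}\eta_x\eta_y)$. Since $J_{xy}\sigma_x\eta_x\sigma_y\eta_y=s_{xy}\rho_x\rho_y$, this gives $Z(s)$, and $Z(s)$ does not depend on $\sigma$.

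\emph{Part (ii).} Start from the joint weight $2^{-|E|}\exp\bigl(\beta\sum_{xy}J_{xy}(\sigma_x\sigma_y+\tau_x\tau_y)\bigr)Z_J^{-2}$. Use $J_{xy}\sigma_x\sigma_y=s_{xy}$ and
\[
J_{xy}\tau_x\tau_y=s_{xy}\sigma_x\sigma_y\tau_x\tau_y=s_{xy}q_xq_y .
\]
The exponent becomes $\beta\sum s_{xy}(1+q_xq_y)$. Together with (i), this is the claimed formula, pushed forward through the bijection. Independence of $A$ from $(s,\sigma,q)$ holds because $A$ is independent of $(J,\sigma,\tau)$, which is a bijective image of $(s,\sigma,q)$.

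\emph{Part (iii).} The expression in (ii) does not involve $\sigma$. Hence the law of $(s,\sigma,q)$ factorizes as a function of $(s,q)$ times the constant, i.e.\ uniform, law in $\sigma$. Adding the independent $A$ gives that $\sigma$ is uniform and independent of $(s,q,A)$.

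\emph{Part (iv).} The condition $J_e\sigma_x\sigma_y=1$ is $s_e=1$. Given $s_e=1$, the condition $J_e\tau_x\tau_y=1$ reads $s_eq_xq_y=1$, i.e.\ $q_x=q_y$. Inserting these into the definition of $B_e$ gives the claim.

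No step is a real obstacle. The only care needed is to check that $Z(s)$ is $\sigma$-free. This is what makes $\sigma$ uniform, and it is exactly the gauge invariance used in (i).
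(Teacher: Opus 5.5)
Your proposal is correct and follows the paper's proof essentially step for step. The steps are the explicit inverse map, the gauge substitution in the partition function, rewriting the exponent via $J_e\tau_x\tau_y=s_eq_xq_y$, reading off (iii) from the $\sigma$-independence of (ii), and checking (iv) directly.

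There is one small slip in (i). The identity should read $J_{xy}\eta_x\eta_y=s_{xy}\rho_x\rho_y$, using $\eta_x=\sigma_x\rho_x$. The expression $J_{xy}\sigma_x\eta_x\sigma_y\eta_y$ that you wrote equals $J_{xy}\rho_x\rho_y$ instead.
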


\begin{proof}
The inverse map is $J_e=s_e\sigma_x\sigma_y$, $\tau=\sigma q$. The substitution
$\rho=\sigma\rho'$ in $Z_{\mathbb T_L,J,\beta}$ gives (i). Since $J_e\sigma_x\sigma_y=s_e$
and $J_e\tau_x\tau_y=s_eq_xq_y$, the weight
$2^{-|E|}\exp(\beta\sum_eJ_e(\sigma_x\sigma_y+\tau_x\tau_y))Z_{\mathbb T_L,J,\beta}^{-2}$
equals the expression in (ii); a bijection of finite sets carries the law
without a Jacobian factor. The right side of (ii) does not depend on $\sigma$,
which gives (iii). Finally, blue requires $s_e=1$ and $s_eq_xq_y=1$.
\end{proof}

Lemma~\ref{lem:ov-gauge} is the gauge transformation
\cite[Section~4.2]{Nishimori2001} with the gauge taken from the first replica,
as in \cite[Lemma~4.1]{MNS2008} and Lemma~\ref{lem:mns}; the indicators
$\mathbf1\{s_e=1\}$ are the satisfaction variables of the first replica
\cite[Section~4.1]{NS2007}. By (iii), conditioning on $\sigma$ in addition to
$(s,q,A)$ changes no conditional law.
Thus the overlap field and the blue field are functions of $(s,q,A)$, and the
two quenched normalizations become the single factor $Z(s)^{-2}$. Fix a vertex
$v$ with neighbor set $N(v)$, and write $s_{\rm st}=(s_{vw})_{w\in N(v)}$ and
$s_{\rm off}$ for the other gauge signs. The \emph{cavity law}
$\chi=\chi_{v,s_{\rm off}}$ is the law of $(\rho_w)_{w\in N(v)}$ under the
zero-field Gibbs measure on $\mathbb T_L\setminus v$ with couplings $s_{\rm off}$.
Summing over $\rho_v$ first gives
\begin{equation}\label{eq:ov-star-factor}
 Z(s)=Z_{\mathbb T_L\setminus v}(s_{\rm off})\,R_\chi(s_{\rm st}),\qquad
 R_\chi(\varsigma):=\E_{\rho\sim\chi}\Bigl[2\cosh\Bigl(\beta\sum_{w\in N(v)}
 \varsigma_w\rho_w\Bigr)\Bigr].
\end{equation}
Every cavity law is invariant under the global flip $\rho\mapsto-\rho$. Write
$\mathcal N_{2d}$ for the class of all flip-invariant laws on
$\{\pm1\}^{N(v)}$. It is convex, and closed under permutations of $N(v)$ and
under the single-coordinate flips $\chi\mapsto\chi^{(w)}$, where $\chi^{(w)}$ is
the law of $\rho$ with $\rho_w$ replaced by $-\rho_w$. Its extreme points are
the frozen laws $\frac12(\delta_\rho+\delta_{-\rho})$. Moreover $R_\chi$ is
linear in $\chi$ and $R_\chi(-\varsigma)=R_\chi(\varsigma)$.

Given $q$, let $N_=(v)=\{w\in N(v):q_w=q_v\}$ be the agreeing neighbors and
$N_{\ne}(v)=N(v)\setminus N_=(v)$. For $\mathsf T\subseteq N(v)$, let
$\pi^{\mathsf T}$ be the product law on $\{\pm1\}^{N(v)}$ under which
$\varsigma_w=1$ has probability $a$ for $w\in\mathsf T$ (a \emph{tilted}
coordinate) and $\frac12$ otherwise (a \emph{fair} coordinate). For a product
law $\pi$ put $\pi_{R^{-2}}(\varsigma)=\pi(\varsigma)R_\chi(\varsigma)^{-2}/
\E_\pi[R_\chi^{-2}]$. Let
\[
 \mathcal G_v=\sigma\bigl(q,\ (s_e,A_e)_{e\notin\starv(v)}\bigr),\qquad
 \mathsf F_v(F)=\{B_{vw}=0\text{ for all }w\in F\}\quad(F\subseteq N(v)).
\]
The exploration and the limit arguments condition only on the coarser
$\sigma$-fields
\begin{equation}\label{eq:ov-exterior}
 \mathcal E_v=\sigma\bigl((q_x)_{x\ne v},\ (B_e)_{e\notin\starv(v)}\bigr),
 \qquad
 \mathcal E_v^+=\mathcal E_v\vee\sigma(q_v),
\end{equation}
which make sense in infinite volume; $\mathcal E_v$ is the exterior
$\sigma$-field of Proposition~\ref{prop:as-existence} with labels $q$. By
Lemma~\ref{lem:ov-gauge}(iv), $\mathcal E_v^+\subseteq\mathcal G_v$.

\begin{proposition}[Overlap-revealed star law]\label{prop:ov-star-law}
Let $g$ be an atom of $\mathcal G_v$ with $q_v=1$, and let $F\subseteq N_=(v)$.
Conditionally on $g\cap\mathsf F_v(F)$, the star signs $s_{\rm st}$ have law
$(\pi^{N_=(v)\setminus F})_{R^{-2}}$ with $\chi=\chi_{v,s_{\rm off}}$, and the
activations $(A_{vw})_{w\notin F}$ are iid Bernoulli($p_A$) and independent of
$s_{\rm st}$. In particular, for $u\in N_=(v)\setminus F$,
\begin{equation}\label{eq:ov-exploration-identity}
 \nu_L\bigl(B_{uv}=1\bigm| g\cap\mathsf F_v(F)\bigr)
 =p_A\,(\pi^{N_=(v)\setminus F})_{R^{-2}}(\varsigma_u=1).
\end{equation}
\end{proposition}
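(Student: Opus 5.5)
The plan is to compute the joint law of $(s_{\rm st},(A_{vw})_{w\in N(v)})$ given the atom $g$ directly from Lemma~\ref{lem:ov-gauge}(ii), and then to condition on the failure event $\mathsf F_v(F)$.

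\emph{Factorizing the weight.} Fix all of $q$, $s_{\rm off}$ and $(A_e)_{e\notin\starv(v)}$. By Lemma~\ref{lem:ov-gauge}(ii),(iii), after summing out the uniform $\sigma$, the weight of a star configuration $\varsigma=s_{\rm st}$ is proportional to
\[
\exp\Bigl(\beta\sum_{w\in N(v)}\varsigma_w(1+q_vq_w)\Bigr)\,Z(s)^{-2}\prod_{w}p_A^{A_{vw}}(1-p_A)^{1-A_{vw}},
\]
since the off-star part of the exponent and the exterior activation factors are constant on $g$. With $q_v=1$, the exponent is $2\beta\varsigma_w$ for $w\in N_=(v)$ and $0$ for $w\in N_{\ne}(v)$. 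By \eqref{eq:ov-star-factor}, $Z(s)^{-2}=Z_{\mathbb T_L\setminus v}(s_{\rm off})^{-2}R_\chi(\varsigma)^{-2}$, and the first factor is constant on $g$. Normalizing $e^{2\beta\varsigma_w}$ by $2C_\beta$ gives probability $a$ for $\varsigma_w=1$. So, before any failure is observed, the star law is $(\pi^{N_=(v)})_{R^{-2}}$, and the activations are iid Bernoulli($p_A$) independent of $\varsigma$. The point to verify is that $g$ is an atom of $\mathcal G_v$ with no further constraint involving the star, so every $\varsigma\in\{\pm1\}^{N(v)}$ is admissible.

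\emph{Conditioning on failures.} For $w\in F\subseteq N_=(v)$, Lemma~\ref{lem:ov-gauge}(iv) gives $B_{vw}=\mathbf1\{\varsigma_w=1\}A_{vw}$. Summing the activation $A_{vw}$ over the event $B_{vw}=0$ multiplies the weight by $(1-p_A)$ when $\varsigma_w=1$ and by $1$ when $\varsigma_w=-1$. Combined with the tilt, the factors become
\[
e^{2\beta}(1-p_A)=e^{-2\beta}\quad\text{and}\quad e^{-2\beta},
\]
so coordinate $w$ becomes exactly fair. The activations $A_{vw}$ for $w\in F$ are summed out and leave no dependence on $\varsigma$. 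The activations for $w\notin F$ are untouched, so they remain iid Bernoulli($p_A$) and independent of $\varsigma$. The product measure therefore changes from $\pi^{N_=(v)}$ to $\pi^{N_=(v)\setminus F}$, with the same $R_\chi^{-2}$ reweighting, which is the stated law.

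\emph{The identity \eqref{eq:ov-exploration-identity}.} Take $u\in N_=(v)\setminus F$. Since $q_u=q_v$, we have $B_{uv}=\mathbf1\{\varsigma_u=1\}A_{uv}$. Independence of $A_{uv}$ from $\varsigma$ then gives $p_A$ times the marginal probability of $\varsigma_u=1$.

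The main obstacle is bookkeeping rather than analysis. One must check that conditioning on $\mathcal G_v$ together with the failure events introduces no hidden dependence. This requires that $Z_{\mathbb T_L\setminus v}$ and $\chi$ depend only on $s_{\rm off}$, and that the exterior $B_e$ are functions of exterior coordinates. It is also worth noting that the atom has positive probability because all weights are positive.
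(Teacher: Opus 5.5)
Your proposal is correct and follows the paper's proof essentially step for step. Like the paper, you use Lemma~\ref{lem:ov-gauge} and \eqref{eq:ov-star-factor} to write the joint weight of the star signs and star activations on the atom $g$ as a constant times $\prod_{w\in N_=(v)}e^{2\beta\varsigma_w}\,R_\chi(\varsigma)^{-2}\prod_{w\in N(v)}p_A^{\alpha_w}(1-p_A)^{1-\alpha_w}$, and you sum out the failed activations via $e^{2\beta\varsigma_w}(1-p_A\mathbf1\{\varsigma_w=1\})=e^{-2\beta}$, so each failed agreeing coordinate becomes exactly fair and \eqref{eq:ov-exploration-identity} follows from $B_{uv}=\mathbf1\{\varsigma_u=1\}A_{uv}$.
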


\begin{proof}
By Lemma~\ref{lem:ov-gauge}, \eqref{eq:ov-star-factor} and
$1+q_vq_w=2\cdot\mathbf1\{w\in N_=(v)\}$, the joint weight of
$(s_{\rm st},A_{\rm st})=(\varsigma,\alpha)$ on $g$ is a constant depending on
$g$ times
\[
 \prod_{w\in N_=(v)}e^{2\beta\varsigma_w}\cdot R_\chi(\varsigma)^{-2}\cdot
 \prod_{w\in N(v)}p_A^{\alpha_w}(1-p_A)^{1-\alpha_w}.
\]
For $w\in F$ we have $q_w=q_v$, hence $B_{vw}=\mathbf1\{\varsigma_w=1\}\alpha_w$.
Restricting to $\mathsf F_v(F)$ and summing over $\alpha_w$ replaces the factor
$e^{2\beta\varsigma_w}$ by
\begin{equation}\label{eq:ov-failed-fair}
 e^{2\beta\varsigma_w}\bigl(1-p_A\mathbf1\{\varsigma_w=1\}\bigr)=e^{-2\beta}
 \qquad(\varsigma_w=\pm1),
\end{equation}
a constant: \emph{a failed agreeing edge is exactly fair}. For
$w\in N_=(v)\setminus F$ the factor $e^{2\beta\varsigma_w}$ equals $2C_\beta a$
or $2C_\beta(1-a)$, a tilted coordinate, and for $w\in N_{\ne}(v)$ it is $1$, a
fair coordinate. The remaining weight factorizes between $\varsigma$ and
$(\alpha_w)_{w\notin F}$. For \eqref{eq:ov-exploration-identity} note that
$B_{uv}=\mathbf1\{\varsigma_u=1\}A_{uv}$.
\end{proof}

Proposition~\ref{prop:ov-star-law} is the star-local form of the posterior law
of the gauge couplings given both replicas \cite[Lemma~4.1]{MNS2008}. Identity
\eqref{eq:ov-failed-fair} is the vacant term of the CMR edge weight
\cite[Eq.~(6)]{CMR1998}: once the blue event is excluded, an agreeing edge keeps
a weight that does not depend on its sign. It uses $p_A=1-e^{-4\beta}$ exactly,
and the same cancellation gives the failed-star factor \eqref{eq:failed-factor}
of Appendix~\ref{app:fresh-star}. The difference lies in the conditioning. With
$q$ revealed, including the target's overlap, a failure turns a tilted
coordinate of a product law into a fair one, and the effect of that change can
be compared exactly (Lemma~\ref{lem:ov-mediant}).
The factor $R_\chi^{-2}$ carries the two quenched normalizations. If the cavity
law is iid fair, $R_\chi$ is constant and \eqref{eq:ov-exploration-identity}
equals $p_Aa=\tanh2\beta$. Figure~\ref{fig:ov-star} illustrates the law.

\begin{figure}[!htbp]
\centering
\begin{adjustbox}{max width=\linewidth}
\begin{tikzpicture}[>=Latex,font=\small,
 site/.style={circle,draw,minimum size=6.5mm,inner sep=0pt},
 plus/.style={site,fill=blue!7,draw=blue!60!black},
 minus/.style={site,fill=red!6,draw=red!55!black},
 reached/.style={circle,fill=blue!60!black,inner sep=2.2pt},
 tilted/.style={very thick,blue!60!black},
 fairopp/.style={thick,gray!75,dotted},
 fairfail/.style={thick,gray!75,dashed}]
\node[plus] (v) at (0,0) {$+$};
\node[font=\small] at (0,-0.62) {$v$};
\node[plus,fill=blue!22] (u) at (-2.8,0) {$+$};
\node[font=\small] at (-2.8,-0.55) {$u$};
\node[plus] (w1) at (-1.35,2.05) {$+$};
\node[plus] (w2) at (1.35,2.05) {$+$};
\node[minus] (w3) at (2.8,0) {$-$};
\node[plus] (w4) at (1.35,-2.05) {$+$};
\node[minus] (w5) at (-1.35,-2.05) {$-$};
\node[reached] (r1) at (-4.1,0.75) {};
\node[reached] (r2) at (-4.1,-0.75) {};
\node[reached] (r0) at (-5.2,0) {};
\draw[blue!60!black,thick] (r0)--(r1)--(u)--(r2)--(r0);
\draw[very thick,orange!80!black] (u)--node[above] {tested $e=uv$} (v);
\draw[tilted] (v)--(w1);
\draw[tilted] (v)--(w4);
\draw[fairfail] (v)--(w2);
\node[font=\large] at ($(v)!0.5!(w2)$) {$\times$};
\draw[fairopp] (v)--(w3);
\draw[fairopp] (v)--(w5);
\begin{scope}[xshift=4.3cm]
\draw[tilted] (0,1.2)--(0.8,1.2);
\node[anchor=west,align=left] at (0.95,1.2)
 {agreeing, not yet queried:\\ tilted, $\pi(\varsigma_w=1)=a$};
\draw[fairfail] (0,0)--(0.8,0);
\node at (0.4,0) {$\times$};
\node[anchor=west,align=left] at (0.95,0)
 {agreeing, failed earlier:\\ exactly fair};
\draw[fairopp] (0,-1.2)--(0.8,-1.2);
\node[anchor=west,align=left] at (0.95,-1.2)
 {opposite overlap:\\ fair, never blue};
\end{scope}
\end{tikzpicture}
\end{adjustbox}
\caption{The overlap-revealed star law (Proposition~\ref{prop:ov-star-law}) at a
target $v$ with $q_v=1$; the signs inside the vertices are the revealed
overlaps, and $u$ belongs to the explored blue cluster. Given $q$, the gauge
signs and activations off the star and the failures observed so far, the star
signs $\varsigma_w=s_{vw}$ have law proportional to
$\pi(\varsigma)R_\chi(\varsigma)^{-2}$ for a product law $\pi$ with the
coordinate types shown; the tested edge $uv$ is itself an agreeing coordinate
that has not been queried, hence tilted. Lemma~\ref{lem:ov-mediant} shows that
fair coordinates cannot lower the worst case, over cavity laws, of the success
probability of the tested edge, so the bond floor is attained when all
neighbors agree and no edge has failed (Theorem~\ref{thm:ov-extremal}).}
\label{fig:ov-star}
\end{figure}

\subsection{The bond floor and its extremal cavity}\label{sec:ov-floor}

For $u\in\mathsf T\subseteq N(v)$ and $\chi\in\mathcal N_{2d}$ put
\begin{equation}\label{eq:ov-pB}
 r(\chi;\mathsf T)=(\pi^{\mathsf T})_{R^{-2}}(\varsigma_u=1),\qquad
 p_B=p_A\inf_{\chi\in\mathcal N_{2d}}\ \inf_{\mathsf T\ni u}r(\chi;\mathsf T).
\end{equation}
Since $\mathcal N_{2d}$ is closed under permutations, $p_B$ depends only on $d$
and $\beta$. By \eqref{eq:ov-exploration-identity} with
$\mathsf T=N_=(v)\setminus F$, every conditional opening probability in
Proposition~\ref{prop:ov-star-law} is at least $p_B$, uniformly in $L$, the
vertex, the atom and the failure set. The class $\mathcal N_{2d}$ contains every
cavity law; taking the infimum over the whole class is what makes the bound
uniform.

\begin{lemma}[A fair coordinate is an average of two tilted ones]\label{lem:ov-mediant}
For $\chi\in\mathcal N_{2d}$ and $w\notin\mathsf T$,
\[
 r(\chi;\mathsf T)\ \ge\ \min\bigl\{r(\chi;\mathsf T\cup\{w\}),\
 r(\chi^{(w)};\mathsf T\cup\{w\})\bigr\}.
\]
\end{lemma}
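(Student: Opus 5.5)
The plan is to write $r(\chi;\mathsf T)$ as a ratio of two linear functionals of the product law and then use the mediant inequality. Put
\[
 N_\pi(\chi)=\E_\pi\bigl[\mathbf1\{\varsigma_u=1\}R_\chi(\varsigma)^{-2}\bigr],\qquad
 D_\pi(\chi)=\E_\pi\bigl[R_\chi(\varsigma)^{-2}\bigr],
\]
so that $r(\chi;\mathsf T)=N_{\pi^{\mathsf T}}(\chi)/D_{\pi^{\mathsf T}}(\chi)$. Both $N_\pi$ and $D_\pi$ are linear in the law $\pi$, and $D_\pi>0$ because $R_\chi\ge2$ is finite. Since $u\in\mathsf T$ and $w\notin\mathsf T$, we have $w\ne u$.

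The first step is to split the fair coordinate $w$ of $\pi^{\mathsf T}$. The fair Bernoulli law on $\varsigma_w$ is
\[
 \bigl(\tfrac12,\tfrac12\bigr)=\tfrac12(a,1-a)+\tfrac12(1-a,a).
\]
Hence $\pi^{\mathsf T}=\tfrac12\pi^{\mathsf T\cup\{w\}}+\tfrac12\tilde\pi$. Here $\tilde\pi$ is the product law that agrees with $\pi^{\mathsf T\cup\{w\}}$ except that $\varsigma_w=1$ has probability $1-a$. By linearity,
\[
 N_{\pi^{\mathsf T}}=\tfrac12\bigl(N_{\pi^{\mathsf T\cup\{w\}}}+N_{\tilde\pi}\bigr),\qquad
 D_{\pi^{\mathsf T}}=\tfrac12\bigl(D_{\pi^{\mathsf T\cup\{w\}}}+D_{\tilde\pi}\bigr),
\]
all evaluated at $\chi$.

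The second step identifies the $\tilde\pi$ terms with tilted terms at the flipped cavity law. Let $\varsigma^{(w)}$ denote $\varsigma$ with $\varsigma_w$ negated. Substituting $\rho_w\mapsto-\rho_w$ in the definition \eqref{eq:ov-star-factor} gives
\[
 R_\chi(\varsigma^{(w)})=R_{\chi^{(w)}}(\varsigma).
\]
Under the change of variables $\varsigma\mapsto\varsigma^{(w)}$, the law $\tilde\pi$ is carried to $\pi^{\mathsf T\cup\{w\}}$. The event $\{\varsigma_u=1\}$ is unaffected because $u\ne w$. Therefore
\[
 N_{\tilde\pi}(\chi)=N_{\pi^{\mathsf T\cup\{w\}}}(\chi^{(w)}),\qquad
 D_{\tilde\pi}(\chi)=D_{\pi^{\mathsf T\cup\{w\}}}(\chi^{(w)}).
\]
Also $\chi^{(w)}\in\mathcal N_{2d}$, since the class is closed under single-coordinate flips, so $r(\chi^{(w)};\mathsf T\cup\{w\})$ is well defined.

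Finally, write $N_1/D_1=r(\chi;\mathsf T\cup\{w\})$ and $N_2/D_2=r(\chi^{(w)};\mathsf T\cup\{w\})$, with positive denominators. The first two steps give
\[
 r(\chi;\mathsf T)=\frac{N_1+N_2}{D_1+D_2}.
\]
The elementary mediant inequality states that this quantity lies between $N_1/D_1$ and $N_2/D_2$, which yields the claimed lower bound.

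I expect no real obstacle. The only points needing care are bookkeeping: the flip identity for $R$ and the fact that $u\ne w$, so that flipping $w$ leaves the success event unchanged.
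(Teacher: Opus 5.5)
Your proof is correct and is essentially the paper's argument: split the fair coordinate $w$ into an equal mixture of a tilted and an anti-tilted coordinate, use $R_\chi(\varsigma^{(w)})=R_{\chi^{(w)}}(\varsigma)$ to identify the anti-tilted half at $\chi$ with the tilted law at $\chi^{(w)}$, and conclude by the mediant inequality. The only difference is cosmetic. The paper works with the conditional masses $(X,Y)$ on $\varsigma_u=\pm1$ and applies the mediant to $Y/X$, together with monotonicity of $aX/(aX+(1-a)Y)$. You apply the mediant directly to numerator and denominator, which is slightly shorter.
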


\begin{proof}
Write $\varsigma=(\varsigma_u,\varsigma_w,\varsigma_{\rm r})$, and let
$\pi_{\rm r}$ be the law of $\varsigma_{\rm r}$, which is the same in all three
cases. For $\chi'\in\mathcal N_{2d}$ put
$X_{\chi'}=\E_{\pi_{\rm r}}[R_{\chi'}(1,1,\varsigma_{\rm r})^{-2}]$ and
$Y_{\chi'}=\E_{\pi_{\rm r}}[R_{\chi'}(-1,1,\varsigma_{\rm r})^{-2}]$. Flipping
$\varsigma_w$ has the same effect as flipping $\rho_w$, so
$R_\chi(\varsigma_u,-1,\varsigma_{\rm r})
=R_{\chi^{(w)}}(\varsigma_u,1,\varsigma_{\rm r})$. Hence each of the three
ratios equals $aX/(aX+(1-a)Y)$, where $(X,Y)$ is, respectively,
\[
 \tfrac12\bigl(X_\chi+X_{\chi^{(w)}},\,Y_\chi+Y_{\chi^{(w)}}\bigr),\qquad
 (X_1,Y_1)=a(X_\chi,Y_\chi)+(1-a)(X_{\chi^{(w)}},Y_{\chi^{(w)}}),
\]
\[
 (X_2,Y_2)=a(X_{\chi^{(w)}},Y_{\chi^{(w)}})+(1-a)(X_\chi,Y_\chi).
\]
The first pair is the average of the other two. The mediant inequality
$(Y_1+Y_2)/(X_1+X_2)\le\max(Y_1/X_1,Y_2/X_2)$ and the monotonicity of
$aX/(aX+(1-a)Y)$ in $Y/X$ prove the claim.
\end{proof}

\begin{theorem}[The extremal partition]\label{thm:ov-extremal}
$\displaystyle\inf_{\mathsf T\ni u}\ \inf_{\chi\in\mathcal N_{2d}}r(\chi;\mathsf T)
=\inf_{\chi\in\mathcal N_{2d}}r(\chi;N(v))$. Thus
$p_B=p_A\inf_\chi r(\chi;N(v))$: the worst case has every neighbor agreeing and
no failure, and earlier failures at a target, however many, never lower the
floor.
\end{theorem}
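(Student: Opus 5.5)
The plan is an induction on the number of fair coordinates, driven by Lemma~\ref{lem:ov-mediant}. The key observation is that the class $\mathcal N_{2d}$ is closed under single-coordinate flips. Hence applying the mediant lemma at a fixed $\chi$ and $w\notin\mathsf T$ gives
\[
 r(\chi;\mathsf T)\ \ge\ \inf_{\chi'\in\mathcal N_{2d}}r(\chi';\mathsf T\cup\{w\}),
\]
because both $\chi$ and $\chi^{(w)}$ lie in $\mathcal N_{2d}$. Taking the infimum over $\chi$ on the left yields
\[
 \inf_{\chi}r(\chi;\mathsf T)\ \ge\ \inf_{\chi}r(\chi;\mathsf T\cup\{w\})\qquad\text{for every }\mathsf T\ni u\text{ and every }w\notin\mathsf T.
\]

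Iterating this over the finitely many $w\in N(v)\setminus\mathsf T$, adding one coordinate at a time, gives $\inf_\chi r(\chi;\mathsf T)\ge\inf_\chi r(\chi;N(v))$ for every $\mathsf T\ni u$. The reverse inequality is trivial, since $\mathsf T=N(v)$ is itself one of the sets in the outer infimum. This gives the displayed equality. The formula $p_B=p_A\inf_\chi r(\chi;N(v))$ then follows immediately from the definition \eqref{eq:ov-pB}.

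For the interpretation, I would appeal to Proposition~\ref{prop:ov-star-law}. After failures on $F$, the conditional opening probability is $p_A\,r(\chi;N_=(v)\setminus F)$ with the actual cavity law $\chi$, and $u\in N_=(v)\setminus F$. This is at least $p_A\inf_{\chi'}r(\chi';N_=(v)\setminus F)$, which by the equality is at least $p_B$. So no failure set $F$ and no choice of disagreeing neighbours can push the probability below the all-agreeing, no-failure value. I should also note that the ordering of the infima over $\mathsf T$ and $\chi$ is immaterial.

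The only delicate step is ensuring that the mediant lemma is used over the full class, not at a single fixed $\chi$. Without flip-closure the bound would pass to $\chi^{(w)}$, which need not be a cavity law of the original configuration. That is exactly why $p_B$ is defined as an infimum over all of $\mathcal N_{2d}$. I expect no further obstacle. The degenerate cases are also harmless: $r$ is well defined because $R_\chi>0$, and $a\in(0,1)$ keeps all denominators positive.
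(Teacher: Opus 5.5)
Your proof is correct and is essentially the paper's argument: apply Lemma~\ref{lem:ov-mediant} one fair coordinate at a time, use the closure of $\mathcal N_{2d}$ under single-coordinate flips, and note that the reverse inequality is trivial. The paper keeps the bound pointwise, as $r(\chi;\mathsf T)\ge\min_{V'\subseteq N(v)\setminus\mathsf T}r(\chi^{(V')};N(v))$, instead of taking the infimum after each step, but this difference is only bookkeeping.
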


\begin{proof}
Apply Lemma~\ref{lem:ov-mediant} to the fair coordinates one at a time. This
gives
\[
 r(\chi;\mathsf T)\ \ge\ \min_{V'\subseteq N(v)\setminus\mathsf T}
 r\bigl(\chi^{(V')};N(v)\bigr),
\]
where $\chi^{(V')}\in\mathcal N_{2d}$ flips the coordinates in $V'$. The
reverse inequality is trivial.
\end{proof}

\begin{remark}[Observed successes are not harmless]\label{rem:ov-successes}
Lemma~\ref{lem:ov-mediant} concerns failures. An observed success on an edge
$wv$ would fix the coordinate $\varsigma_w=1$, and such frozen coordinates can
lower the floor. At $d=12$ and $t=3/25$, the aligned frozen cavity with
$f=0,1,2,3,4$ frozen successes gives opening probabilities at most $0.2210000$,
$0.2186899$, $0.2163712$, $0.2140597$ and $0.2117739$ (exact rationals rounded
up). From $f=3$ on these lie below the floor $0.2161$ used in
Table~\ref{tab:overlap-constants}. A bond exploration that discarded a success
by thinning and later queried the same target again would meet this situation.
The exploration of Lemma~\ref{lem:ov-exploration} never does: every observed
success reaches its target, so the history at an unreached target contains only
failures.
\end{remark}

By Theorem~\ref{thm:ov-extremal} only $\mathsf T=N(v)$ remains, where all $2d$
star signs are tilted. Put $\bar n=2d-1$ and $\varsigma=(\varsigma_u,\varsigma')$,
let $\pi'$ be the tilted product law of $\varsigma'\in\{\pm1\}^{\bar n}$, and put
$\Sigma'=\sum_x\varsigma'_x$ and $F_\chi(\varsigma')=R_\chi(1,\varsigma')^{-2}$.
Since $R_\chi(-1,\varsigma')=R_\chi(1,-\varsigma')$ and
$\pi'(-\varsigma')=e^{-4\beta\Sigma'}\pi'(\varsigma')$,
\begin{equation}\label{eq:ov-XY}
 r(\chi;N(v))=\frac{aX}{aX+(1-a)Y},\qquad
 \begin{aligned}
 X&=\E_{\pi'}F_\chi,\\
 Y&=\E_{\pi'}\bigl[R_\chi(-1,\varsigma')^{-2}\bigr]
 =\E_{\pi'}\bigl[e^{-4\beta\Sigma'}F_\chi\bigr].
 \end{aligned}
\end{equation}
For odd $z$ put $\omega(z)=2\cosh(\beta(1+z))$, which is rational when $t$ is.

\begin{proposition}[A finite certificate for the floor]\label{prop:ov-vertex}
Let $\underline r\in(0,1)$, put $\bar\lambda=e^{4\beta}(1-\underline r)/\underline r$,
and choose $c_m>0$ for each odd $m$ with $e^{-4\beta m}<\bar\lambda$. For
$j=0,\ldots,\bar n$ let
\[
 \Delta_j=\sum_{i=0}^{j}\sum_{i'=0}^{\bar n-j}\binom ji\binom{\bar n-j}{i'}
 a^{i+i'}(1-a)^{\bar n-i-i'}\bigl(e^{-4\beta m}-\bar\lambda\bigr)\Psi_m(z),
\]
where $\ell_+=2i-j$, $\ell_-=2i'-(\bar n-j)$, $m=\ell_++\ell_-$,
$z=\ell_+-\ell_-$, and $\Psi_m(z)=\omega(z)^{-2}$ if
$e^{-4\beta m}\ge\bar\lambda$, while $\Psi_m(z)=3c_m^{-2}-2\omega(z)c_m^{-3}$
otherwise. If $\Delta_j\le0$ for every $j$, then $r(\chi;N(v))\ge\underline r$
for all $\chi\in\mathcal N_{2d}$, and hence $p_B\ge p_A\underline r$.
\end{proposition}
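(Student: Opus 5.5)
The plan is to reduce the inequality $r(\chi;N(v))\ge\underline r$ to a sign condition that is linear in $\chi$, check it on the extreme points of $\mathcal N_{2d}$, and then show that $\Delta_j$ is an upper bound for the value at the $j$-th extreme point.

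\emph{Linearization.} By \eqref{eq:ov-XY} we have $r=aX/(aX+(1-a)Y)$ with $X>0$. Since $a/(1-a)=e^{4\beta}$, the inequality $r\ge\underline r$ is equivalent to $(1-a)Y\le aX(1-\underline r)/\underline r$, that is, to $Y\le\bar\lambda X$. Using the expressions for $X$ and $Y$ in \eqref{eq:ov-XY}, this reads
\[
 \Phi(\chi):=\E_{\pi'}\bigl[(e^{-4\beta\Sigma'}-\bar\lambda)\,R_\chi(1,\varsigma')^{-2}\bigr]\le0 .
\]
The factor $R_\chi^{-2}$ is not linear in $\chi$, so $\Phi$ is not linear either. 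I will therefore not take a convex combination directly. Instead I use the tangent-line bound described below: it produces a quantity that is affine in $\omega$, and it is applied frozen law by frozen law.

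\emph{Frozen laws and indexing by $j$.} First I check the bound at a frozen law $\chi=\tfrac12(\delta_\rho+\delta_{-\rho})$. Because $\cosh$ is even, $R_\chi(1,\varsigma')=2\cosh(\beta(\rho_u+\sum_x\varsigma'_x\rho_x))$, and the global flip lets me take $\rho_u=1$. Let $j$ be the number of $x\ne u$ with $\rho_x=+1$. Let $i$ be the number of those $x$ with $\varsigma'_x=+1$, and let $i'$ be the number of $x$ with $\rho_x=-1$ and $\varsigma'_x=+1$. Then:
\begin{itemize}
\item the sums of $\varsigma'$ over the two groups are $\ell_+$ and $\ell_-$;
\item $\Sigma'=m$;
\item $\sum_x\varsigma'_x\rho_x=z$, so $R=\omega(z)$;
\item the tilted product law gives the weight $\binom ji\binom{\bar n-j}{i'}a^{i+i'}(1-a)^{\bar n-i-i'}$.
\end{itemize}
So $\Phi$ at this frozen law is exactly $\Delta_j$ with every $\Psi_m$ replaced by $\omega(z)^{-2}$. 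Since $\pi'$ is exchangeable, the value depends only on $j$.

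\emph{Tangent-line step.} The function $x\mapsto x^{-2}$ is convex on $(0,\infty)$, so $\omega^{-2}\ge3c_m^{-2}-2\omega c_m^{-3}$ for every $c_m>0$. When $e^{-4\beta m}<\bar\lambda$ the coefficient is negative, so replacing $\omega^{-2}$ by $\Psi_m$ only increases that term. When the coefficient is nonnegative we keep $\omega^{-2}$ unchanged. Hence each frozen value is at most $\Delta_j\le0$.

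\emph{Passing to general $\chi$.} For an arbitrary $\chi\in\mathcal N_{2d}$, I expect the main obstacle to be the nonlinearity of $R_\chi^{-2}$. The plan is to use convexity directly: writing $\chi$ as a mixture of frozen laws, $R_\chi$ is the corresponding mixture of the $\omega$'s, and Jensen gives $R_\chi^{-2}\le$ the mixture of the $\omega^{-2}$. Jensen points in the harmful direction for the terms with a positive coefficient, $e^{-4\beta m}\ge\bar\lambda$.

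For those terms I would fall back on the structure of the certificate. On the negative-coefficient terms the bound $\Psi_m$ is affine in $\omega$, and is therefore exactly linear in $\chi$. On the positive-coefficient terms, the hope is that the inequality already holds termwise because it is paired against the mixture. If that fails, I would bound $R_\chi^{-2}$ on those terms using $R_\chi\ge\min_\rho$ over the frozen components for each fixed $\varsigma'$. This is the step I would scrutinize most carefully.

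\emph{Conclusion.} Given $\Phi(\chi)\le0$, the linearization gives $r(\chi;N(v))\ge\underline r$ for every $\chi\in\mathcal N_{2d}$. Theorem~\ref{thm:ov-extremal} then gives $p_B=p_A\inf_\chi r(\chi;N(v))\ge p_A\underline r$.
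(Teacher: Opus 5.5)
Your reduction of $r(\chi;N(v))\ge\underline r$ to $\Phi(\chi)\le0$ and your frozen-pattern bookkeeping are correct. The passage from frozen laws to a general $\chi\in\mathcal N_{2d}$ is not established, however, and the obstacle you describe there comes from a sign error.

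Since $y\mapsto y^{-2}$ is convex and $R_\chi(1,\varsigma')=\E_\chi W_\rho(\varsigma')$ with $W_\rho(\varsigma')=2\cosh\bigl(\beta(\rho_u+\sum_x\varsigma'_x\rho_x)\bigr)$, Jensen gives $R_\chi(1,\varsigma')^{-2}\le\E_\chi W_\rho(\varsigma')^{-2}$. That is an \emph{upper} bound, which is exactly what a coefficient $e^{-4\beta m}-\bar\lambda\ge0$ requires. The harmful direction is at the negative coefficients, and that is where the tangent line belongs. The bound $R_\chi^{-2}\ge3c_m^{-2}-2R_\chi c_m^{-3}=\E_\chi[3c_m^{-2}-2W_\rho c_m^{-3}]$ is affine in $R_\chi$, hence linear in $\chi$. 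At a frozen law $R=\omega(z)$ exactly, so the tangent line does nothing useful there; its only role is this linearization.

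Applying both bounds first gives $\Phi(\chi)\le\E_\chi\Delta(\rho)$, where $\Delta(\rho)$ is the certificate sum at the single pattern $\rho$. By evenness of $\cosh$ you may take $\rho_u=1$, and then $\Delta(\rho)=\Delta_j$ by your own computation. Hence $\Phi(\chi)\le\max_j\Delta_j\le0$. This is the paper's argument.

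The order of the two steps is essential. The positive-coefficient terms of $\Phi$ are convex in $\chi$ and the negative ones are concave. So $\Phi$ at a mixture is not controlled by its values at frozen laws, and checking frozen laws first cannot be completed by an extension argument.

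Your fallbacks do not repair this. That the inequality ``already holds termwise because it is paired against the mixture'' is not an argument. Bounding $R_\chi(1,\varsigma')\ge\min_\rho W_\rho(\varsigma')$ separately for each $\varsigma'$ produces $\sum_{\varsigma'}\pi'(\varsigma')(e^{-4\beta\Sigma'}-\bar\lambda)^+\max_\rho W_\rho(\varsigma')^{-2}$, in which the maximizing pattern depends on $\varsigma'$. Added to the negative part, this is in general not bounded by $\max_j\Delta_j$, so the hypothesis $\Delta_j\le0$ would no longer imply the conclusion.
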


\begin{proof}
By \eqref{eq:ov-XY},
$Y-\bar\lambda X=\sum_{\varsigma'}\pi'(\varsigma')(e^{-4\beta\Sigma'}-\bar\lambda)
F_\chi(\varsigma')$. Write $F_\chi=(\E_\chi W_\rho)^{-2}$ with
$W_\rho(\varsigma')=2\cosh(\beta(\rho_u+\sum_x\varsigma'_x\rho_x))$. Where the
coefficient is nonnegative use Jensen's inequality
$(\E_\chi W_\rho)^{-2}\le\E_\chi W_\rho^{-2}$. Where it is negative use the
tangent-line bound $y^{-2}\ge3c^{-2}-2yc^{-3}$ for $y,c>0$, which is linear in
$y$, at $c=c_{\Sigma'}$. Both bounds are linear in the law $\chi$. Hence
$Y-\bar\lambda X\le\E_\chi\Delta(\rho)\le\max_\rho\Delta(\rho)$, where
$\Delta(\rho)$ is the same sum with $W_\rho$ in place of $\E_\chi W_\rho$, for
one frozen pattern $\rho$. Since $\Delta(-\rho)=\Delta(\rho)$ we may take
$\rho_u=1$. If $\rho'$ has $j$ coordinates equal to $1$, let $\ell_+$ and
$\ell_-$ be the sums of $\varsigma'$ over these coordinates and over the others.
They are independent sums of $j$ and $\bar n-j$ tilted signs,
$\Sigma'=\ell_++\ell_-$ and $W_\rho=\omega(\ell_+-\ell_-)$. Thus
$\Delta(\rho)=\Delta_j$. If every
$\Delta_j\le0$, then $Y\le\bar\lambda X$ and
$r\ge a/(a+(1-a)\bar\lambda)=\underline r$, because $(1-a)e^{4\beta}=a$.
\end{proof}

Two explicit comparisons bracket the floor. First,
$\cosh(y-\beta)/\cosh(y+\beta)$ lies in $[e^{-2\beta},e^{2\beta}]$, so
$R_\chi(-1,\varsigma')\ge e^{-2\beta}R_\chi(1,\varsigma')$, $Y\le e^{4\beta}X$ and
\begin{equation}\label{eq:ov-closed-floor}
 p_B\ \ge\ p_A/2=(1-e^{-4\beta})/2 .
\end{equation}
This is the lower product parameter $L_\beta$ of \cite[Lemma~4.2]{MNS2008} at
unit couplings (Lemma~\ref{lem:overlap}).
Second, the aligned frozen cavity $\frac12(\delta_{\mathbf 1}+\delta_{-\mathbf 1})$
gives the upper bound $p_B\le p_Ar(\frac12(\delta_{\mathbf 1}+\delta_{-\mathbf 1});N(v))$,
an explicit single sum. At $d=12$, $t=3/25$ we have
$p_A=75/196$ and $\tanh2\beta=75/317\approx0.2366$, and \eqref{eq:ov-closed-floor}
gives only $75/392\approx0.1913$. Proposition~\ref{prop:ov-vertex} with
$\underline r=0.2161/p_A$ and tangent points $c_m=\varrho_m\omega(m)$ for
$m=1,3,\ldots,23$, where
\begin{gather*}
 (\varrho_1,\varrho_3,\ldots,\varrho_{11})=(1.080881,\ 1.042249,\ 0.984864,\
 0.916142,\ 0.842612,\ 0.768993),\\
 (\varrho_{13},\varrho_{15},\ldots,\varrho_{23})=(0.698220,\ 0.631872,\
 0.570633,\ 0.514653,\ 0.463784,\ 0.417731)
\end{gather*}
are exact decimals, gives $24$ negative exact rationals $\Delta_j$, the largest
being $\Delta_{20}\approx-1.95\times10^{-4}$. Together with the witness,
\[
 0.2161\ \le\ p_B\ \le\ 0.2210000\qquad(d=12,\ t=3/25).
\]
The finite certificate is needed: with \eqref{eq:ov-closed-floor} in place of
$p$, the score of Theorem~\ref{thm:ov-criterion} at this point would be about
$1.02$.

Only failures enter the exploration, and they enter through $\mathcal E_v^+$.
The following undivided form of the floor is the one used in
Section~\ref{sec:ov-assembly}.

\begin{lemma}[Bond floor after failures]\label{lem:ov-torus-floor}
Let $0\le p\le p_B$. For every ordered pair $(u,v)$ of neighbors in $\mathbb T_L$, every
$F\subseteq N(v)\setminus\{u\}$ and every $H\in\mathcal E_v^+$ with
$H\subseteq\{q_u=q_v=1\}$,
\begin{equation}\label{eq:ov-floor-undivided}
 \nu_L\bigl(H\cap\mathsf F_v(F)\cap\{B_{uv}=1\}\bigr)
 \ \ge\ p\,\nu_L\bigl(H\cap\mathsf F_v(F)\bigr).
\end{equation}
\end{lemma}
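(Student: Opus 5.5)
The plan is to decompose $H\cap\mathsf F_v(F)$ into atoms of the finer $\sigma$-field $\mathcal G_v$ intersected with the failure event. On each atom, Proposition~\ref{prop:ov-star-law} gives an exact conditional opening probability, and Theorem~\ref{thm:ov-extremal} bounds it below by $p_B\ge p$.

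\textbf{Step 1: pass to atoms of $\mathcal G_v$.} Since $\mathcal E_v^+\subseteq\mathcal G_v$ by Lemma~\ref{lem:ov-gauge}(iv), and $\mathcal G_v$ is generated by finitely many variables on the torus, $H$ is a finite disjoint union of atoms $g$ of $\mathcal G_v$. Each such atom has $q_u=q_v=1$ because $H\subseteq\{q_u=q_v=1\}$. Both sides of \eqref{eq:ov-floor-undivided} are additive over this decomposition, so it suffices to prove
\[
\nu_L\bigl(g\cap\mathsf F_v(F)\cap\{B_{uv}=1\}\bigr)\ge p\,\nu_L\bigl(g\cap\mathsf F_v(F)\bigr)
\]
for each atom $g$. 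If the right side vanishes there is nothing to prove.

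\textbf{Step 2: reduce the failure set.} Put $F'=F\cap N_=(v)$. For $w\in F\setminus F'$ we have $q_w\ne q_v$, so $B_{vw}=0$ holds automatically by Lemma~\ref{lem:ov-gauge}(iv). Hence $g\cap\mathsf F_v(F)=g\cap\mathsf F_v(F')$. Moreover $u\in N_=(v)\setminus F'$, because $q_u=q_v$ and $u\notin F$.

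\textbf{Step 3: apply the star law and the extremal partition.} Proposition~\ref{prop:ov-star-law} then gives the conditional probability
\[
p_A\,(\pi^{\mathsf T})_{R^{-2}}(\varsigma_u=1)=p_A\,r(\chi;\mathsf T),
\]
with $\mathsf T=N_=(v)\setminus F'\ni u$ and $\chi=\chi_{v,s_{\rm off}}\in\mathcal N_{2d}$. By \eqref{eq:ov-pB} and Theorem~\ref{thm:ov-extremal}, this is at least $p_B\ge p$. Multiplying by $\nu_L(g\cap\mathsf F_v(F'))$ yields the undivided inequality on the atom, and summing over atoms gives \eqref{eq:ov-floor-undivided}.

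\textbf{Main obstacle.} The only delicate point is bookkeeping. One must check that the conditioning in the proposition, namely the atom $g$ together with failures on $F'$, matches $H\cap\mathsf F_v(F)$ exactly, with no hidden information about the star. This holds because $\mathcal E_v^+$ contains no star indicator, and the failures at opposite-overlap neighbors carry no information given $q$. The undivided form automatically handles atoms of zero probability and makes the statement independent of any choice of conditional law.
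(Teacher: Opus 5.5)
Your proof is correct and follows the paper's argument step for step: you decompose $H$ into atoms of $\mathcal G_v$, discard the failures at opposite-overlap neighbours, and apply Proposition~\ref{prop:ov-star-law} on each atom. The appeal to Theorem~\ref{thm:ov-extremal} is harmless but unnecessary, because $p_B$ in \eqref{eq:ov-pB} is already defined as an infimum over every $\mathsf T\ni u$ and every $\chi\in\mathcal N_{2d}$.
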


\begin{proof}
Since $\mathcal E_v^+\subseteq\mathcal G_v$, the event $H$ is a union of atoms
$g$ of $\mathcal G_v$, each contained in $\{q_u=q_v=1\}$ and each of positive
probability by Lemma~\ref{lem:ov-gauge}(ii). On $g$, edges to $N_{\ne}(v)$ are
never blue, so $g\cap\mathsf F_v(F)=g\cap\mathsf F_v(F')$ with
$F'=F\cap N_=(v)$, and $u\in N_=(v)\setminus F'$. By
Proposition~\ref{prop:ov-star-law} and \eqref{eq:ov-pB},
$\nu_L(B_{uv}=1\mid g\cap\mathsf F_v(F'))\ge p_B\ge p$. Multiply by
$\nu_L(g\cap\mathsf F_v(F))$ and sum over the atoms $g\subseteq H$.
\end{proof}

\subsection{Overlap odds and domination by a weak Ising field}\label{sec:ov-holley}

For $K'\ge0>h$ let $\mu_L$ be the Ising law on $\{\pm1\}^V$,
\begin{equation}\label{eq:ov-ising}
 \mu_L(X)\propto\exp\Bigl(K'\sum_{xy\in E}X_xX_y+h\sum_{x\in V}X_x\Bigr),
 \qquad
 \gamma_v(1\mid\eta)=\frac{e^{2h+2K'S_v(\eta)}}{1+e^{2h+2K'S_v(\eta)}},
\end{equation}
with $S_v(\eta)=\sum_{w\in N(v)}\eta_w$; here $\gamma_v$ is its single-site
conditional law, which is nondecreasing in $\eta$. The parameters are given
through the rationals $g_K=e^{2K'}$ and $g_h=e^{2h}$.

\begin{proposition}[Overlap odds]\label{prop:ov-odds}
Let $\eta\in\{\pm1\}^{V\setminus v}$, $N_\pm=\{w\in N(v):\eta_w=\pm1\}$,
$k_\pm=|N_\pm|$ and $S=k_+-k_-$. For gauge signs $s_{\rm off}$ off the star of
$v$, put $\Upsilon_\pm=C_\beta^{k_\pm}\E_{\pi^{N_\pm}}[R_\chi^{-2}]$ with
$\chi=\chi_{v,s_{\rm off}}$. Then
\[
 \nu_L\bigl(q_v=1\bigm| q_{V\setminus v}=\eta,\ s_{\rm off},\
 (A_e)_{e\notin\starv(v)}\bigr)
 =\frac{\Upsilon_+}{\Upsilon_++\Upsilon_-},\qquad
 \frac{\Upsilon_+}{\Upsilon_-}=C_\beta^{S}\,
 \frac{\E_{\pi^{N_+}}[R_\chi^{-2}]}{\E_{\pi^{N_-}}[R_\chi^{-2}]}.
\]
\end{proposition}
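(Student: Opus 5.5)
The plan is a direct computation from the gauge weight of Lemma~\ref{lem:ov-gauge}(ii), factorized at $v$ as in \eqref{eq:ov-star-factor}. The conditioning event fixes $q_{V\setminus v}=\eta$, the off-star gauge signs $s_{\rm off}$ and the off-star activations. Since $A$ is independent of $(s,\sigma,q)$ and $\sigma$ is uniform and independent, the off-star activations and $\sigma$ can be dropped from the computation. The conditional law of $q_v$ is then proportional to the marginal weight of $(s_{\rm off},q_{V\setminus v}=\eta,q_v)$, obtained by summing the joint weight over the star signs $\varsigma=s_{\rm st}\in\{\pm1\}^{N(v)}$.

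Write the exponent $\beta\sum_{xy}s_{xy}(1+q_xq_y)$ as an off-star part plus the star part $\beta\sum_{w\in N(v)}\varsigma_w(1+q_v\eta_w)$. The off-star part does not involve $q_v$ or $\varsigma$, so it cancels. By \eqref{eq:ov-star-factor}, $Z(s)^{-2}=Z_{\mathbb T_L\setminus v}(s_{\rm off})^{-2}R_\chi(\varsigma)^{-2}$ with $\chi=\chi_{v,s_{\rm off}}$. The first factor again does not depend on $(q_v,\varsigma)$; the essential point is that the cavity law $\chi$ depends only on $s_{\rm off}$, not on $q$.

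Take $q_v=1$. The star factor is $\prod_{w\in N_+}e^{2\beta\varsigma_w}$, with fair factor $1$ on $N_-$. As in the proof of Proposition~\ref{prop:ov-star-law}, $e^{2\beta\varsigma_w}=2C_\beta\,\pi_w(\varsigma_w)$ for the tilted marginal $\pi_w(1)=a$, $\pi_w(-1)=1-a$, using $2C_\beta a=e^{2\beta}$ and $2C_\beta(1-a)=e^{-2\beta}$. Fair coordinates contribute $1=2\cdot\tfrac12$. Hence the sum over $\varsigma$ equals
\[
 2^{2d}C_\beta^{k_+}\,\E_{\pi^{N_+}}\bigl[R_\chi^{-2}\bigr]=2^{2d}\Upsilon_+ .
\]
For $q_v=-1$ the roles of $N_+$ and $N_-$ swap, giving $2^{2d}\Upsilon_-$. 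The common factor $2^{2d}$ cancels on normalization, which yields the stated conditional probability $\Upsilon_+/(\Upsilon_++\Upsilon_-)$.

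The ratio formula then follows from $C_\beta^{k_+}/C_\beta^{k_-}=C_\beta^{S}$.

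The only delicate step is the bookkeeping that every factor outside the star is independent of $q_v$. This includes the off-star normalization $Z_{\mathbb T_L\setminus v}(s_{\rm off})^{-2}$ and the off-star overlap terms, which involve only $q_x q_y$ with $x,y\neq v$. It also requires that conditioning on the off-star activations introduces no dependence, which is Lemma~\ref{lem:ov-gauge}(ii). Every configuration has positive weight, so the conditioning event has positive probability and no division issues arise.
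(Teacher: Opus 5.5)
Your proposal is correct and matches the paper's proof essentially step for step. You drop $\sigma$ and the independent activations, factor $Z(s)^{-2}$ through the star factorization \eqref{eq:ov-star-factor}, and rewrite each $e^{2\beta\varsigma_w}$ as $2C_\beta$ times a tilted marginal, so that the sum over star signs equals $2^{2d}\Upsilon_{\pm}$ for $q_v=\pm1$.
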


\begin{proof}
The activations are independent of $(s,q)$. By Lemma~\ref{lem:ov-gauge}(ii)--(iii)
and \eqref{eq:ov-star-factor}, summing over $\sigma$ and over the star signs
$\varsigma$ gives, for $\epsilon=\pm1$,
\[
 \nu_L(q_v=\epsilon,\ q_{V\setminus v}=\eta,\ s_{\rm off})
 =c(\eta,s_{\rm off})\sum_{\varsigma}\ \prod_{w:\,\eta_w=\epsilon}
 e^{2\beta\varsigma_w}\,R_\chi(\varsigma)^{-2},
\]
with $c$ independent of $\epsilon$, because
$e^{\beta\varsigma_w(1+\epsilon\eta_w)}=e^{2\beta\varsigma_w\mathbf1\{\eta_w=\epsilon\}}$.
Since $e^{2\beta\varsigma}/2$ equals $C_\beta a$ or $C_\beta(1-a)$, the sum
equals $2^{2d}\Upsilon_\epsilon$.
\end{proof}

The ratio must be bounded below uniformly over $\chi\in\mathcal N_{2d}$. After
this infimum only $k=k_+$ matters. We bound the numerator and the denominator
separately. Fix $k$ and $k_-=2d-k$, and for $z\equiv k\pmod 2$ put
$\Omega_k(z)=\E[2\cosh(\beta(z+\Sigma_{\rm f}))]$, where $\Sigma_{\rm f}$ is a sum of
$k_-$ independent fair signs. It is a finite average of values
$2\cosh(\beta\cdot\text{even integer})$, hence rational for rational $t$.

\begin{lemma}[Numerator]\label{lem:ov-numerator}
Let $c_0,\ldots,c_k>0$, and for $j=0,\ldots,k$ let
\[
 \widetilde X_k(j)=\sum_{i=0}^{j}\sum_{i'=0}^{k-j}\binom ji\binom{k-j}{i'}
 a^{i+i'}(1-a)^{k-i-i'}\Bigl[\frac3{c_{i+i'}^2}
 -\frac{2\,\Omega_k(\ell_+-\ell_-)}{c_{i+i'}^3}\Bigr],
\]
where $\ell_+=2i-j$ and $\ell_-=2i'-(k-j)$. Then $\E_{\pi^{N_+}}[R_\chi^{-2}]\ge\mathrm{Num}_k:=\min_j\widetilde X_k(j)$
for every $\chi\in\mathcal N_{2d}$.
\end{lemma}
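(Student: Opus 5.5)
The plan is to follow the proof of Proposition~\ref{prop:ov-vertex} and replace the convex function $y\mapsto y^{-2}$ by tangent lines. This makes the lower bound linear in the cavity law $\chi$, so it reduces to frozen laws, and for those the sum collapses to $\widetilde X_k(j)$.

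\emph{Step 1: tangent-line bound.} Write $R_\chi(\varsigma)=\E_\chi W_\rho(\varsigma)$ with
\[
 W_\rho(\varsigma)=2\cosh\Bigl(\beta\sum_{w\in N(v)}\varsigma_w\rho_w\Bigr).
\]
Under $\pi^{N_+}$ the coordinates in $N_+$ are tilted and those in $N_-$ are fair. Let $I(\varsigma)=|\{w\in N_+:\varsigma_w=1\}|\in\{0,\dots,k\}$; it depends only on the tilted coordinates. For $y,c>0$ we have $y^{-2}\ge 3c^{-2}-2yc^{-3}$. Applying this with $c=c_{I(\varsigma)}$ pointwise in $\varsigma$ gives
\[
 \E_{\pi^{N_+}}[R_\chi^{-2}]\ \ge\ \E_\chi\,\Phi(\rho),\qquad
 \Phi(\rho):=\E_{\pi^{N_+}}\Bigl[\frac3{c_{I(\varsigma)}^2}-\frac{2W_\rho(\varsigma)}{c_{I(\varsigma)}^3}\Bigr].
\]
The right side is linear in $\chi$, so it is at least $\min_\rho\Phi(\rho)$, the minimum over frozen patterns. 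This holds whatever $\chi\in\mathcal N_{2d}$ is.

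\emph{Step 2: evaluate $\Phi$ at a frozen pattern.} Fix $\rho$ and let $j$ be the number of $w\in N_+$ with $\rho_w=1$. The key point is that the fair coordinates enter only through $W_\rho$, never through $c_{I}$. For $w\in N_-$ the products $\varsigma_w\rho_w$ are iid fair signs, whatever $\rho$ is on $N_-$. Conditioning on the tilted coordinates and averaging over the fair ones therefore replaces $W_\rho$ by $\Omega_k(\ell_+-\ell_-)$. Here $\ell_+$ is the sum of $\varsigma$ over the $j$ tilted coordinates with $\rho=1$, and $\ell_-$ is the sum over the remaining $k-j$ tilted coordinates, which enter with sign $-1$.

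If $i$ and $i'$ count the plus signs in these two groups, then $\ell_+=2i-j$, $\ell_-=2i'-(k-j)$ and $I=i+i'$. The tilted law gives these counts the weight $\binom ji\binom{k-j}{i'}a^{i+i'}(1-a)^{k-i-i'}$. Hence $\Phi(\rho)=\widetilde X_k(j)$, and minimizing over $\rho$ is minimizing over $j$, which proves the bound.

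I expect no real obstacle. The one point to check carefully is that the tangent point must be indexed only by the tilted count $I$. Otherwise the fair average would not factor into $\Omega_k$, and the dependence of the bound on $\rho$ would not reduce to the single parameter $j$. The parity condition $z\equiv k \pmod 2$ is automatic, since $\ell_+-\ell_-\equiv j+(k-j)=k$.
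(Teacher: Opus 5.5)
Your proposal is correct and matches the paper's proof step for step. You apply the tangent-line bound at $c_{I(\varsigma)}$, which depends only on the tilted coordinates. Linearity in $\chi$ and Fubini then reduce the bound to a worst frozen pattern. For that pattern, averaging the fair $N_-$ coordinates gives $\Omega_k(\ell_+-\ell_-)$, whatever $\rho_{N_-}$ is, and counting plus signs on the two groups of $N_+$ gives $\widetilde X_k(j)$.
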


\begin{proof}
Let $c(\varsigma)=c_i$, where $i$ is the number of $w\in N_+$ with
$\varsigma_w=1$. The tangent-line bound at $c(\varsigma)$, linearity of $R_\chi$
in $\chi$, and Fubini give
$\E_{\pi^{N_+}}R_\chi^{-2}\ge\E_\chi\E_{\pi^{N_+}}[3c^{-2}-2W_\rho c^{-3}]$ with
$W_\rho(\varsigma)=2\cosh(\beta\sum_w\varsigma_w\rho_w)$, and a mixture is at
least its worst frozen pattern. For a frozen pattern $\rho$, averaging $W_\rho$
over the fair coordinates on $N_-$ gives $\Omega_k(x)$ with
$x=\sum_{w\in N_+}\varsigma_w\rho_w$, whatever $\rho_{N_-}$ is, while
$c(\varsigma)$ depends only on $\varsigma_{N_+}$. If $\rho$ has $j$ coordinates
equal to $1$ on $N_+$, then $x=\ell_+-\ell_-$ as in
Proposition~\ref{prop:ov-vertex}, with $i+i'$ plus signs on $N_+$.
\end{proof}

\begin{lemma}[Denominator]\label{lem:ov-denominator}
Let $Y_k(j)=\E[(2\cosh\beta Z_j)^{-2}]$, where $Z_j$ is a sum of $j$ tilted,
$k_--j$ anti-tilted (equal to $1$ with probability $1-a$) and $k$ fair
independent signs. Then
\[
 \sup_{\chi\in\mathcal N_{2d}}\E_{\pi^{N_-}}\bigl[R_\chi^{-2}\bigr]
 =\mathrm{Den}_k:=Y_k(\lfloor k_-/2\rfloor).
\]
\end{lemma}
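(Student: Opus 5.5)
The plan is to reduce the supremum to frozen cavity laws by convexity, identify the frozen values with the quantities $Y_k(j)$, and then show that $j\mapsto Y_k(j)$ is maximized at the most balanced index $\lfloor k_-/2\rfloor$.

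\emph{Reduction to frozen laws.} For $\varsigma\in\{\pm1\}^{N(v)}$ put $W_\rho(\varsigma)=2\cosh(\beta\sum_w\varsigma_w\rho_w)$, so that $R_\chi(\varsigma)=\E_\chi W_\rho(\varsigma)$ is linear in $\chi$. Since $y\mapsto y^{-2}$ is convex on $(0,\infty)$, the map $\chi\mapsto\E_{\pi^{N_-}}[R_\chi^{-2}]$ is convex on the convex set $\mathcal N_{2d}$. Its supremum is therefore attained at an extreme point, that is, at a frozen law $\frac12(\delta_\rho+\delta_{-\rho})$. Equivalently, Jensen gives $R_\chi^{-2}\le\E_\chi W_\rho^{-2}$, and the inequality is an equality at frozen laws. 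It then remains to evaluate $\E_{\pi^{N_-}}[W_\rho^{-2}]$ for a fixed pattern $\rho$ and to maximize over $\rho$.

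\emph{Frozen values.} Under $\pi^{N_-}$ the coordinates $\varsigma_w$ for $w\in N_+$ are fair, so $\varsigma_w\rho_w$ is again fair whatever $\rho_w$ is. For $w\in N_-$ with $\rho_w=1$ the product $\varsigma_w\rho_w$ is tilted, and with $\rho_w=-1$ it is anti-tilted. Hence, if $j$ is the number of $w\in N_-$ with $\rho_w=1$, then $\sum_w\varsigma_w\rho_w$ has the law of $Z_j$, and the frozen value equals $Y_k(j)$. Evenness of $\cosh$ gives $Y_k(j)=Y_k(k_--j)$, matching the flip invariance. The statement thus reduces to
\[
 \max_{0\le j\le k_-}Y_k(j)=Y_k(\lfloor k_-/2\rfloor),
\]
together with attainment at the corresponding frozen law, which the reduction already provides.

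\emph{Balancing step.} Pair tilted with anti-tilted signs. The sum of one tilted and one anti-tilted sign is symmetric: it equals $\pm2$ with probability $a(1-a)$ each, and $0$ otherwise. Up to a global sign, which $f(x)=(2\cosh\beta x)^{-2}$ ignores, $Z_j$ therefore has the law $S+D_e$. Here $S$ is a symmetric sum made of the pairs and the $k$ fair signs, $D_e$ is a sum of $e=|2j-k_-|$ tilted signs, and $S$ and $D_e$ are independent. It suffices to show that decreasing $e$ by $2$, that is, replacing two tilted signs $\xi_1,\xi_2$ by a tilted--anti-tilted pair, does not decrease the expectation. Put $g(x)=\E f(x+V)$ with $V=S+D_{e-2}$. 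A direct computation gives
\[
 \E g(\xi+\xi')-\E g(\xi_1+\xi_2)=(2a-1)\bigl[a(g(0)-g(2))-(1-a)(g(0)-g(-2))\bigr],
\]
and $a>1/2$. The bracket is nonnegative as soon as $g(0)\ge g(2)$ and $g(-2)\ge g(2)$; a case split on the sign of $g(0)-g(-2)$ checks this.

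\emph{Proving the two inequalities for $g$.} For $g(-x)\ge g(x)$ with $x>0$, I would pair the values $\pm v$ of $V$. A sum of tilted signs has likelihood ratio $\Pp(D=v)/\Pp(D=-v)=e^{2\beta v}\ge1$ for $v>0$. Convolving with the symmetric $S$ should preserve $\Pp(V=v)\ge\Pp(V=-v)$; I would check this by symmetrizing the convolution sum term by term. Combined with $f(v-x)\ge f(v+x)$ for $v,x\ge0$, which holds because $f$ is even and decreasing in $|x|$, this gives $g(-x)\ge g(x)$. The inequality $g(0)\ge g(2)$ is the step I expect to be the main obstacle, since $V$ is biased upward. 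I would first try to show that $V$ is symmetric unimodal plus a positive tilt, and write
\[
 g(0)-g(2)=\sum_{v}\Pp(V=v)\bigl(f(v)-f(v+2)\bigr),
\]
grouping $v$ with $-v-2$. If that fails, I would fall back on a finite exact check of the monotonicity of $Y_k(j)$ in $j$ for the relevant rational values of $t$, in the certified style used elsewhere in the section.
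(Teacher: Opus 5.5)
Your reduction to frozen laws, your identification of the frozen value with $Y_k(j)$, and your reduction to monotonicity in $e=|2j-k_-|$ are correct and agree with the paper. The gap is in the balancing step. It is left unfinished, and one of its ingredients is false as stated.

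\emph{Symmetry of $S$ is not enough.} Symmetry alone does not let the bias of $D$ survive the convolution. Take $S=\pm m_0$ with probability $\tfrac12$ each and let $D$ be one tilted sign. Then $\Pp(V=m_0-1)=\tfrac{1-a}2<\tfrac a2=\Pp(V=1-m_0)$. Moreover $g(-1)-g(1)=(2a-1)\bigl[f(m_0)-\tfrac12\bigl(f(m_0-2)+f(m_0+2)\bigr)\bigr]<0$ for large $m_0$, because $f$ is convex away from the origin. So the conclusion itself fails, not just your route to it.

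The missing input is that $S$ is symmetric \emph{and unimodal} on its lattice, i.e.\ $\Pp(S=y)\ge\Pp(S=y+2)$ for $y\ge0$. This holds for a point mass. It is preserved by adding a fair sign, and by adding a balanced pair, whose law on $\{-2,0,2\}$ is $(a(1-a),1-2a(1-a),a(1-a))$ with $a(1-a)\le\tfrac14$; the paper's proof carries out this induction. With unimodality, write $\Pp(V=v)-\Pp(V=-v)=\sum_s\Pp(S=s)[\Pp(D=v-s)-\Pp(D=s-v)]$ and pair $s$ with $2v-s$. This gives the bias for $v>0$, since $|s|<|2v-s|$ when $s<v$. (The likelihood ratio of a tilted sum is $(a/(1-a))^v=e^{4\beta v}$, not $e^{2\beta v}$; this slip is harmless.)

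\emph{The open inequality $g(0)\ge g(2)$.} Your fallback of exact checks at particular rational $t$ would prove only instances, not the lemma for all $\beta$ and $k$. But this inequality is an artefact of exchanging two signs at once. Passing from $\xi_1+\xi_2$ to $\xi_1+\xi'$ changes only one tilted sign into an anti-tilted one. Accordingly, your bracket equals $\E f(V'-1)-\E f(V'+1)$ with $V'=V+\xi_1$. Since $V'$ is again a symmetric unimodal part plus tilted signs, your first claim with $x=1$, applied to $V'$, closes the argument. No bound on $g(0)-g(2)$ is needed.

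This one-sign comparison is exactly the paper's proof. There $Z_j=\zeta+\varepsilon^-$ and $Z_{j+1}\overset d=\zeta+\varepsilon^+$, and your $V'$ has the law of $-\zeta$. The paper shows $\E[f(\zeta+1)-f(\zeta-1)]\ge0$ from $\Pp(\zeta=z)\ge\Pp(\zeta=-z)$ for $z<0$, which it derives from the unimodality of the symmetric part.
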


\begin{proof}
The map $\chi\mapsto\E_{\pi^{N_-}}R_\chi^{-2}$ is convex, so its supremum over
$\mathcal N_{2d}$ is attained at a frozen law. There
$\sum_w\varsigma_w\rho_w$ has the law of $Z_j$, where $j$ is the number of plus
coordinates of $\rho$ on $N_-$. It remains to show that $j\mapsto Y_k(j)$ is
maximal at $\lfloor k_-/2\rfloor$. Put $f(z)=\sech^2(\beta z)$, even and
nonincreasing in $|z|$, so that $Y_k(j)=\frac14\E f(Z_j)$ and
$Y_k(j)=Y_k(k_--j)$. For $2j+1\le k_-$ write $Z_j=\zeta+\varepsilon^-$ and
$Z_{j+1}\overset d=\zeta+\varepsilon^+$, with one anti-tilted sign
$\varepsilon^-$, one tilted sign $\varepsilon^+$ and a common independent part
$\zeta$. Then
\[
 Y_k(j+1)-Y_k(j)=\tfrac{2a-1}4\,\E\hat f(\zeta),\qquad
 \hat f(z)=f(z+1)-f(z-1),
\]
and $\hat f$ is odd and nonnegative on $(-\infty,0]$. Decompose
$\zeta=\zeta_{\rm s}+\zeta_{\rm a}$, where $\zeta_{\rm s}$ is the sum of $j$
pairs of one tilted and one anti-tilted sign and of $k$ fair signs, and
$\zeta_{\rm a}$ is a sum of $k_--2j-1$ anti-tilted signs. The law
$\mathsf p$ of $\zeta_{\rm s}$ is symmetric and unimodal on its lattice, that is,
$\mathsf p(y)\ge\mathsf p(y+2)$ for $y\ge0$. Indeed, this holds for a point
mass at $0$ and is preserved by adding a fair sign, which gives
$\mathsf p'(x)=\frac12(\mathsf p(x-1)+\mathsf p(x+1))$, and by adding a pair,
which takes the values $\pm2$ with probability $a(1-a)\le\frac14$
each. In both cases, for $x\ge0$, $\mathsf p'(x)-\mathsf p'(x+2)$ is a
combination of differences $\mathsf p(y)-\mathsf p(y+2)$ with $y\ge0$ and
nonnegative coefficients; for a pair at $x=0$ the coefficients are
$1-3a(1-a)$ and $a(1-a)$. For $z<0$,
\[
 \Pp(\zeta=z)-\Pp(\zeta=-z)=\sum_{y>0}\bigl[\Pp(\zeta_{\rm a}=y)-\Pp(\zeta_{\rm a}=-y)\bigr]
 \bigl[\Pp(\zeta_{\rm s}=z-y)-\Pp(\zeta_{\rm s}=z+y)\bigr]\ \ge\ 0,
\]
since $\Pp(\zeta_{\rm a}=y)/\Pp(\zeta_{\rm a}=-y)=((1-a)/a)^y<1$ and
$|z-y|>|z+y|$. Hence
$\E\hat f(\zeta)=\sum_{z<0}[\Pp(\zeta=z)-\Pp(\zeta=-z)]\hat f(z)\ge0$, and since
$a>\frac12$, $Y_k$ is nondecreasing on $\{j:2j+1\le k_-\}$. With the symmetry
$Y_k(j)=Y_k(k_--j)$, its maximum is at $\lfloor k_-/2\rfloor$.
\end{proof}

The same argument, for any even function that is nonincreasing in $|z|$, gives
Lemma~\ref{lem:fs-balanced-tilt} of Appendix~\ref{app:fresh-star}, whose tilted
signs have $\pi_+=a$.

Consequently $\Upsilon_+/\Upsilon_-\ge C_\beta^S\mathrm{Num}_k/\mathrm{Den}_k$ for
every cavity law. The numerator bound depends on tangent points; any positive
values are admissible. The certificates take $c_i$ to be the value of $R_\chi$,
averaged over the fair coordinates, at a cavity law that mixes the aligned
pattern with the uniform mixture of the $k$ patterns flipping one coordinate of
$N_+$. With $S_i=2i-k$ and mixing weight $w_k\in[0,1)$ this value is
\[
 c_i=(1-w_k)\,\Omega_k(S_i)
 +w_k\Bigl[\tfrac ik\,\Omega_k(S_i-2)+\tfrac{k-i}k\,\Omega_k(S_i+2)\Bigr].
\]
At $d=12$, $t=3/25$ the certificate uses $w_k=0$ for $k\le15$ and
\begin{gather*}
 (w_{16},\ldots,w_{20})=(0.0442,\ 0.0964,\ 0.1519,\ 0.2104,\ 0.2718),\\
 (w_{21},\ldots,w_{24})=(0.3359,\ 0.4027,\ 0.4719,\ 0.5435).
\end{gather*}

\begin{proposition}[Certified Holley line]\label{prop:ov-holley-line}
For each row of Table~\ref{tab:overlap-constants}, and for each row of
Table~\ref{tab:ms-local} with the local inputs of this section, there are
tangent points in Lemma~\ref{lem:ov-numerator} such that, for every
$k=0,\ldots,2d$, with $S=2k-2d$,
\begin{equation}\label{eq:ov-holley-line}
 C_\beta^{S}\,\mathrm{Num}_k/\mathrm{Den}_k\ \ge\ g_h\,g_K^{S}.
\end{equation}
Consequently, on every torus $\mathbb T_L$ with $L\ge3$, for every vertex $v$, every
$\eta\in\{\pm1\}^{V\setminus v}$ and every $s_{\rm off}$,
\[
 \nu_L\bigl(q_v=1\bigm| q_{V\setminus v}=\eta,\ s_{\rm off},\
 (A_e)_{e\notin\starv(v)}\bigr)\ \ge\ \gamma_v(1\mid\eta).
\]
\end{proposition}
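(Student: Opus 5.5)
The proof has two parts: a finite certification of \eqref{eq:ov-holley-line}, and then a short deduction of the conditional-probability bound from Proposition~\ref{prop:ov-odds} together with Lemmas~\ref{lem:ov-numerator} and~\ref{lem:ov-denominator}.

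\emph{Deduction first, since it is routine.} Fix $L\ge3$, a vertex $v$, a configuration $\eta$ and off-star signs $s_{\rm off}$. Let $k=k_+$, so that $S=2k-2d$.

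Proposition~\ref{prop:ov-odds} gives the conditional probability as $\Upsilon_+/(\Upsilon_++\Upsilon_-)$. Its odds are
\[
\frac{\Upsilon_+}{\Upsilon_-}=C_\beta^S\,\frac{\E_{\pi^{N_+}}[R_\chi^{-2}]}{\E_{\pi^{N_-}}[R_\chi^{-2}]}.
\]
The actual cavity law $\chi=\chi_{v,s_{\rm off}}$ belongs to $\mathcal N_{2d}$. By permutation invariance of $\mathcal N_{2d}$, only the sizes $k_\pm$ matter, not which neighbors are plus. Lemma~\ref{lem:ov-numerator} bounds the numerator below by $\mathrm{Num}_k$, and Lemma~\ref{lem:ov-denominator} bounds the denominator above by $\mathrm{Den}_k$. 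Together with \eqref{eq:ov-holley-line}, the odds are therefore at least $g_hg_K^S=e^{2h+2K'S}$.

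The map $x\mapsto x/(1+x)$ is increasing. Since $S=S_v(\eta)$, this gives exactly $\gamma_v(1\mid\eta)$. The activations off the star are independent of $(s,q)$, so conditioning on them changes nothing.

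\emph{Certification of \eqref{eq:ov-holley-line}.} For each row and each $k=0,\dots,2d$, take the tangent points $c_i$ from the mixture formula with the tabulated weights $w_k$. The ingredients are computed as follows:
\begin{itemize}
\item $\Omega_k(z)$ is computed exactly as a rational average of $2\cosh$ at even integer multiples of $\beta$. Each such value is rational in $t$, via $e^{2\beta}=(1+t)/(1-t)$.
\item $\widetilde X_k(j)$ is evaluated for all $j$, and its minimum gives $\mathrm{Num}_k$.
\item $\mathrm{Den}_k=Y_k(\lfloor k_-/2\rfloor)$ is computed as an exact finite sum of rationals $\sech^2$-type values. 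These are $\cosh^{-2}$ of integer multiples of $\beta$, again rational.
\end{itemize}
The inequality \eqref{eq:ov-holley-line} is then decided in exact rational arithmetic. This requires $g_h$ and $g_K$ to be rational, as the tables provide. Both sides are positive, so one may cross-multiply and compare $C_\beta^{S}\mathrm{Num}_k$ with $g_hg_K^{S}\mathrm{Den}_k$ directly. Positivity of $\mathrm{Num}_k$ must itself be checked, since the tangent-line bound can be negative.

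\emph{Main obstacle.} The main obstacle is the choice of tangent points so that $\mathrm{Num}_k$ is tight enough at the extreme $k$ near $2d$. There $w_k$ is largest, and the slope $g_K$ is constrained by the whole range of $S$.

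I would first fit the line $(h,K')$ by computing $\log(C_\beta^S\mathrm{Num}_k/\mathrm{Den}_k)$ for all $k$ at candidate weights. I would then take the largest slope such that a line with nonnegative $K'$ lies below all $2d+1$ points. After that, I would round $g_K$ and $g_h$ down to rationals with a small margin. If some $k$ fails, I would adjust $w_k$ by a one-dimensional search to maximize $\mathrm{Num}_k$, since any positive tangent points are admissible.

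The remaining cases $k=0$ or $k=2d$ are degenerate: one of $N_\pm$ is empty, and the corresponding expectation is over an empty product. Lemmas~\ref{lem:ov-numerator} and~\ref{lem:ov-denominator} still apply verbatim to these cases.
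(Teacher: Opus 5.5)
Your proposal is correct and matches the paper's proof. The paper likewise decides each inequality \eqref{eq:ov-holley-line} as an exact rational comparison (recorded in Sections~\ref{sec:ov-constants} and~\ref{sec:ms-certificates}), then combines Lemmas~\ref{lem:ov-numerator} and~\ref{lem:ov-denominator} with Proposition~\ref{prop:ov-odds} and the monotonicity of $x\mapsto x/(1+x)$. Two of your steps are unnecessary: the separate positivity check on $\mathrm{Num}_k$ is automatic, since $\mathrm{Den}_k>0$ and $g_hg_K^S>0$; and the line-fitting search only describes how the tabulated $g_K,g_h$ might have been found, whereas the statement fixes them.
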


\begin{proof}
Each inequality \eqref{eq:ov-holley-line} is a comparison of exact rationals;
Sections~\ref{sec:ov-constants} and~\ref{sec:ms-certificates} record its
evaluation in all $2d+1$ environments.
Then $\Upsilon_+/\Upsilon_-\ge g_hg_K^S=e^{2h+2K'S}$ by
Lemmas~\ref{lem:ov-numerator} and~\ref{lem:ov-denominator}, and
Proposition~\ref{prop:ov-odds} gives the claim, since $x\mapsto x/(1+x)$ is
increasing.
\end{proof}

At $d=12$, $t=3/25$ we have $g_K=1.02634$ and $g_h=0.83527$, so
$K'\approx0.013000$ and $h\approx-0.090000$. The binding environments are
$S=\pm2d$: the logarithmic slack in \eqref{eq:ov-holley-line} is at least
$0.001827$ at $S=24$ and $0.00192$ at $S=-24$, and at least $0.0057$ in the
other $23$ environments.

\begin{lemma}[Single-site bound]\label{lem:ov-torus-odds}
Under \eqref{eq:ov-holley-line}, for every vertex $v$ and every
$H\in\mathcal E_v$,
\[
 \nu_L\bigl(H\cap\{q_v=1\}\bigr)\ \ge\ \E_{\nu_L}\bigl[\mathbf1_H\,
 \gamma_v(1\mid q_{N(v)})\bigr].
\]
\end{lemma}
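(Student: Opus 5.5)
The plan is to condition on the finer $\sigma$-field
\[
 \mathcal H_v:=\sigma\bigl(q_{V\setminus v},\ s_{\rm off},\ (A_e)_{e\notin\starv(v)}\bigr),
\]
on which Proposition~\ref{prop:ov-holley-line} already supplies the pointwise bound $\nu_L(q_v=1\mid\mathcal H_v)\ge\gamma_v(1\mid q_{N(v)})$. The only work is to check that every $H\in\mathcal E_v$ is $\mathcal H_v$-measurable. Once that is done, the tower property gives
\[
 \nu_L(H\cap\{q_v=1\})=\E_{\nu_L}\bigl[\mathbf1_H\,\nu_L(q_v=1\mid\mathcal H_v)\bigr]\ \ge\ \E_{\nu_L}\bigl[\mathbf1_H\,\gamma_v(1\mid q_{N(v)})\bigr].
\]
Everything is on a finite torus, so the conditional probability is an elementary ratio over the atoms of $\mathcal H_v$ of positive probability. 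Atoms of probability zero contribute to neither side.

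For measurability, recall that $\mathcal E_v$ is generated by $(q_x)_{x\ne v}$ and by $(B_e)_{e\notin\starv(v)}$. The first family is among the generators of $\mathcal H_v$. For an edge $e=xy$ off the star, neither endpoint is $v$, so Lemma~\ref{lem:ov-gauge}(iv) gives
\[
 B_e=\mathbf1\{s_e=1,\ q_x=q_y\}A_e.
\]
This is a function of $s_{\rm off}$, of $q_{V\setminus v}$ and of $A_e$ with $e\notin\starv(v)$. Hence $\mathcal E_v\subseteq\mathcal H_v$; this mirrors the inclusion $\mathcal E_v^+\subseteq\mathcal G_v$ noted earlier.

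The remaining point is to identify the conditioning in Proposition~\ref{prop:ov-holley-line} with conditioning on $\mathcal H_v$. That proposition is stated for each value of $(\eta,s_{\rm off},(A_e)_{e\notin\starv(v)})$, so it is exactly a statement about the atoms of $\mathcal H_v$. Its right side $\gamma_v(1\mid\eta)$ depends only on $\eta_{N(v)}=q_{N(v)}$, which is $\mathcal H_v$-measurable.

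Multiplying the atomwise bound by $\nu_L$ of each atom inside $H$ and summing then proves the claim. I do not expect a real obstacle. The one thing to watch is that $H$ must not involve $q_v$ or any star bond, and this is precisely why the lemma is stated for $\mathcal E_v$ rather than $\mathcal E_v^+$.
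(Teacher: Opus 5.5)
Your proposal is correct and follows essentially the same route as the paper. The paper also uses Lemma~\ref{lem:ov-gauge}(iv) to place $\mathcal E_v$ inside $\sigma\bigl((q_x)_{x\ne v},(s_e,A_e)_{e\notin\starv(v)}\bigr)$, and then applies the conditional bound of Proposition~\ref{prop:ov-holley-line} together with the tower property.
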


\begin{proof}
By Lemma~\ref{lem:ov-gauge}(iv), $\mathcal E_v$ is contained in
$\sigma((q_x)_{x\ne v},(s_e,A_e)_{e\notin\starv(v)})$, and
$\gamma_v(1\mid q_{N(v)})$ is $\mathcal E_v$-measurable. Apply
Proposition~\ref{prop:ov-holley-line} and the tower property.
\end{proof}

The comparison with the Ising field uses Holley's coupled heat bath
\cite{Holley1974}. We need a one-sided version, a routine variant of
\cite[Theorem~4.8]{GHM2001}: the dominating law may fail to
be strictly positive, since in Section~\ref{sec:ov-assembly} it is conditioned on
an event, and the dominated Ising law may carry a minus boundary condition. For
finite $\Lambda\subseteq\Z^d$, write $\mu^-_\Lambda$ for the Ising law with
parameters $(K',h)$ on $\Lambda$ and all spins outside $\Lambda$ fixed to $-1$.
On the torus with $\Lambda=V$ there is no boundary and $\mu^-_V=\mu_L$.

\begin{lemma}[One-sided Holley domination]\label{lem:ov-holley}
Let $\Lambda\subseteq\bar\Lambda$ be finite vertex sets of $\mathbb T_L$ or of $\Z^d$
with $N(v)\subseteq\bar\Lambda$ for all $v\in\Lambda$. Let $\pi$ be a law on
$\{\pm1\}^{\bar\Lambda}$ such that, for every $v\in\Lambda$ and every
$\xi\in\{\pm1\}^{\bar\Lambda\setminus v}$ with $\pi(\xi)>0$,
$\pi(q_v=1\mid\xi)\ge\gamma_v(1\mid\xi_{N(v)})$. Then there is a coupling of
$X\sim\mu^-_\Lambda$ and $q\sim\pi$ with $X\le q|_\Lambda$ almost surely.
\end{lemma}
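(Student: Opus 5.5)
We will build the coupling by Holley's coupled heat-bath dynamics, run on the pair $(X,q)\in\{\pm1\}^\Lambda\times\{\pm1\}^{\bar\Lambda}$. The chain keeps $q$ stationary for $\pi$ and $X$ stationary for $\mu^-_\Lambda$, and preserves the order $X\le q|_\Lambda$. The one-sided setting changes two things compared with \cite[Theorem~4.8]{GHM2001}. First, $\pi$ need not have full support, so we run the $q$-chain only on $\operatorname{supp}\pi$. Second, $X$ sees the fixed minus spins outside $\Lambda$, while $q$ is a genuine field on $\bar\Lambda$.

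\textbf{The chain.} Fix a systematic scan, or an independent uniform choice, of sites of $\bar\Lambda$. At a chosen site $v\in\bar\Lambda\setminus\Lambda$, resample only $q_v$ from $\pi(q_v=\cdot\mid q_{\bar\Lambda\setminus v})$. At a chosen $v\in\Lambda$, draw one uniform $U$ and perform two updates with it. Set $q_v=1$ iff $U\le\pi(q_v=1\mid q_{\bar\Lambda\setminus v})$. Set $X_v=1$ iff $U\le\gamma_v(1\mid \xi^X_{N(v)})$, where $\xi^X$ equals $X$ on $\Lambda$ and $-1$ off $\Lambda$.

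\textbf{Order preservation.} Both the starting point and the current state satisfy $X\le q|_\Lambda$ and $q\in\operatorname{supp}\pi$. Since $N(v)\subseteq\bar\Lambda$, we have $\xi^X_{N(v)}\le q_{N(v)}$ coordinatewise: on $\Lambda$ this is the current order, and off $\Lambda$ it holds because $\xi^X=-1$. Monotonicity of $\gamma_v$ then gives $\gamma_v(1\mid\xi^X_{N(v)})\le\gamma_v(1\mid q_{N(v)})$. The hypothesis, applied at the current $\xi=q_{\bar\Lambda\setminus v}$, gives $\gamma_v(1\mid q_{N(v)})\le\pi(q_v=1\mid\xi)$. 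Here we need $\pi(\xi)>0$, and this holds because the current $q$ lies in the support. Hence the same uniform $U$ yields $X_v\le q_v$. Off $\Lambda$ nothing constrains the order, so updates there are harmless. Each marginal chain is the heat bath for its own law, so $\pi$ and $\mu^-_\Lambda$ are each invariant for their marginal, and the marginal of $q$ never leaves $\operatorname{supp}\pi$.

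\textbf{Extracting the coupling.} This is the step needing care, since $\pi$ may not be positive, so the $q$-chain may fail to be irreducible and may not converge from a fixed start. We avoid any need for convergence of $q$. Start with $q^{(0)}\sim\pi$ exactly and $X^{(0)}\equiv-1$; the order $X^{(0)}\le q^{(0)}|_\Lambda$ holds trivially. Then $q^{(n)}\sim\pi$ for all $n$ by stationarity. The $X$-marginal, by contrast, is an autonomous heat-bath chain for the strictly positive finite Gibbs measure $\mu^-_\Lambda$. It depends on $q$ only through the shared uniforms, and its own transitions use only its own state, so it is irreducible and aperiodic and its law converges to $\mu^-_\Lambda$. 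The joint laws of $(X^{(n)},q^{(n)})$ live on a finite space and are supported in the closed set $\{X\le q|_\Lambda\}$. Any subsequential limit therefore has marginals $\mu^-_\Lambda$ and $\pi$ and is supported on $\{X\le q|_\Lambda\}$, which gives the required coupling. On the torus with $\Lambda=\bar\Lambda=V$ no boundary spins occur, and the argument is identical with $\mu^-_V=\mu_L$.

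The main obstacle is the non-positivity of $\pi$. It is handled by starting the $q$-chain in stationarity and taking limits only along the $X$-marginal, together with checking that the hypothesis is invoked only at states of positive $\pi$-probability.
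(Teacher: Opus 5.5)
Your proposal is correct and matches the paper's proof. Both use Holley's coupled heat bath with a shared uniform, and both start $q$ from $\pi$, so that $q$ stays stationary and the hypothesis is only ever applied at configurations of positive mass. Both start $X\equiv-1$, so that $X$ converges as the heat bath of the strictly positive $\mu^-_\Lambda$, and both take a limit point of the joint laws supported on the closed set $\{X\le q|_\Lambda\}$.

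Two small differences. Your extra $q$-updates at sites of $\bar\Lambda\setminus\Lambda$ are harmless but unnecessary, since the paper updates only sites of $\Lambda$. If you use the systematic-scan option, take the limit along sweep times, where the $X$-kernel is strictly positive.
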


\begin{proof}
Run a coupled random-scan heat bath: at each step choose $v\in\Lambda$
uniformly and an independent uniform $U\in[0,1]$, set $X_v=1$ if and only if
$U\le\gamma_v(1\mid X_{N(v)\cap\Lambda},-1\text{ on }N(v)\setminus\Lambda)$, and
set $q_v=1$ if and only if $U\le\pi(q_v=1\mid q_{\bar\Lambda\setminus v})$.
Start from $X\equiv-1$ and $q\sim\pi$. The $q$-chain is a Gibbs sampler of
$\pi$, so $q$ has law $\pi$ at all times, and its conditional probabilities are
evaluated only at configurations of positive mass. If $X\le q|_\Lambda$ before a
step, then $(X,-1\text{ off }\Lambda)\le q$ on $N(v)$, and monotonicity of
$\gamma_v$ and the hypothesis give $X\le q|_\Lambda$ after it. The $X$-chain
alone is the heat bath of the strictly positive law $\mu^-_\Lambda$. It is
irreducible and aperiodic on a finite space, so its law converges to
$\mu^-_\Lambda$ \cite[Theorem~4.9]{LPW2017}. Any limit point of the joint laws
is a coupling of $\mu^-_\Lambda$ and $\pi$ supported on the closed set
$\{X\le q|_\Lambda\}$.
\end{proof}

\subsection{Sharper local inputs from noisy cavities}\label{sec:ov-noisy}

Both local inputs are uniform over the class $\mathcal N_{2d}$ of all
flip-invariant laws of the neighboring spins, and both certificates are decided
at frozen laws $\frac12(\delta_\rho+\delta_{-\rho})$:
Proposition~\ref{prop:ov-vertex} and Lemma~\ref{lem:ov-numerator} bound a
mixture by its worst frozen pattern, and the supremum in
Lemma~\ref{lem:ov-denominator} is attained at a frozen law. A cavity law is
never frozen. Given the spins at
distance two from the target, its neighbors are independent, and each has
conditional mean at most $\tanh((2d-1)\beta)$ in absolute value, because it has
only $2d-1$ other neighbors. Such a neighbor acts on the target through a noisy
binary channel, as an Ising edge does on a tree
\cite[Section~3]{Mossel2004}, and this noise can be moved into the star
interaction. The cavity laws are thereby replaced by a smaller class at a
smaller effective coupling, at the price of a constant factor that cancels in
every ratio we use. This subsection proves the reduction and gives a
certificate for each local input over the smaller class. With both, the
oriented criterion is certified in dimension $9$ (Table~\ref{tab:ov-noisy}),
and the macrostep criterion of Section~\ref{sec:macrostep} in dimension~$7$,
where it is not certified without them (Remark~\ref{rem:ms-inputs}).

For $\beta'>0$ and a law $\chi$ on $\{\pm1\}^{N(v)}$ put
\[
 R^{\beta'}_\chi(\varsigma)=\E_{\rho\sim\chi}\Bigl[2\cosh\Bigl(\beta'
 \sum_{w\in N(v)}\varsigma_w\rho_w\Bigr)\Bigr],
\]
so that $R_\chi=R^\beta_\chi$ in \eqref{eq:ov-star-factor}. Like $R_\chi$, it
is linear in $\chi$ and even in $\varsigma$. Put
\begin{equation}\label{eq:ov-betac}
 \vartheta_d=\tanh\bigl((2d-1)\beta\bigr),\qquad
 \beta_c=\operatorname{atanh}(t\vartheta_d),\qquad
 e^{2\beta_c}=\frac{1+t\vartheta_d}{1-t\vartheta_d};
\end{equation}
$\vartheta_d$ and $e^{2\beta_c}$ are rational functions of $t$, and
$\beta_c<\beta$.

\begin{lemma}[Noisy-cavity reduction]\label{lem:ov-noisy}
Let $L\ge4$, let $v$ be a vertex of $\mathbb T_L$, let $s_{\rm off}$ be gauge
signs off $\starv(v)$, let $\chi=\chi_{v,s_{\rm off}}$, and let
$\beta'\in[\beta_c,\beta]$. There is $\tilde\chi\in\mathcal N_{2d}$ such that
\[
 R_\chi(\varsigma)=c_0\,R^{\beta'}_{\tilde\chi}(\varsigma)\quad
 (\varsigma\in\{\pm1\}^{N(v)}),\qquad
 c_0=\Bigl(\frac{\cosh\beta}{\cosh\beta'}\Bigr)^{2d}.
\]
The constant $c_0$ depends neither on $\varsigma$ nor on $v$ or $s_{\rm off}$.
\end{lemma}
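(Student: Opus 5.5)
The idea is to condition the cavity law on the spins outside the star, note that the $2d$ neighbors then become independent spins with small conditional means, and absorb each neighbor's noise into a weaker edge coupling $\beta'$. Averaging back over the conditioning then gives a flip-invariant mixture $\tilde\chi$.

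\emph{Step 1: conditional independence.} Write the cavity Gibbs measure on $\mathbb T_L\setminus v$ with couplings $s_{\rm off}$, and condition on the configuration $\xi$ of all spins outside $N(v)$. We first check that no two neighbors of $v$ are adjacent. The neighbors are $v\pm e_i$. Two of them, $v+e_i$ and $v\pm e_j$ with $j\ne i$, differ by a vector with two nonzero coordinates, so they are not adjacent. The remaining pair $v+e_i$, $v-e_i$ differ by $2e_i$, which is a unit step only if $L\in\{1,3\}$. This is where $L\ge4$ is used. Hence, given $\xi$, the spins $(\rho_w)_{w\in N(v)}$ are independent. Each $\rho_w$ has conditional mean $m_w=\tanh\bigl(\beta\sum_{x}s_{wx}\xi_x\bigr)$, where the sum runs over the $2d-1$ neighbors $x\ne v$ of $w$. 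Therefore $|m_w|\le\vartheta_d$.

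\emph{Step 2: the single-edge channel identity.} Write $2\cosh(\beta\sum_w\varsigma_w\rho_w)=\sum_{\epsilon=\pm1}\prod_w e^{\epsilon\beta\varsigma_w\rho_w}$. Conditional independence gives
\[
 \E\bigl[e^{\epsilon\beta\varsigma_w\rho_w}\mid\xi\bigr]=\cosh\beta\,\bigl(1+\epsilon\varsigma_w m_w t\bigr).
\]
Set $m'_w=m_wt/\tanh\beta'$. Since $\beta'\ge\beta_c$, we have $|m_wt|\le t\vartheta_d=\tanh\beta_c\le\tanh\beta'$, so $|m'_w|\le1$. Let $\rho'_w$ be independent signs with means $m'_w$. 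Then
\[
 \E\bigl[e^{\epsilon\beta'\varsigma_w\rho'_w}\bigr]=\cosh\beta'\,\bigl(1+\epsilon\varsigma_w m'_w\tanh\beta'\bigr),
\]
which is the previous expression multiplied by $\cosh\beta'/\cosh\beta$. Taking the product over the $2d$ neighbors and summing over $\epsilon$ gives
\[
 \E\bigl[2\cosh\bigl(\beta\textstyle\sum_w\varsigma_w\rho_w\bigr)\mid\xi\bigr]=c_0\,R^{\beta'}_{\chi_\xi}(\varsigma)
\]
for every $\varsigma$. Here $\chi_\xi$ is the product law with means $m'_w$, and $c_0=(\cosh\beta/\cosh\beta')^{2d}$ does not depend on $\xi$, $\varsigma$, $v$ or $s_{\rm off}$.

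\emph{Step 3: mixing.} Define $\tilde\chi=\E_\xi[\chi_\xi]$, where $\xi$ has its law under the cavity Gibbs measure. Since $R^{\beta'}$ is linear in the law, $R_\chi(\varsigma)=\E_\xi\,\E[\,\cdot\mid\xi]=c_0R^{\beta'}_{\tilde\chi}(\varsigma)$. For flip invariance: the cavity measure is invariant under $\xi\mapsto-\xi$, and this map sends each $m_w$ to $-m_w$, so $\chi_{-\xi}$ is the image of $\chi_\xi$ under the global flip. Hence $\tilde\chi$ is flip-invariant, and $\tilde\chi\in\mathcal N_{2d}$.

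The main point needing care is Step 1: the geometric nonadjacency of the neighbors, which uses $L\ge4$, and the fact that each neighbor sees exactly $2d-1$ outside spins. Once these hold, the bound $|m_w|\le\vartheta_d$, and hence the requirement $\beta'\ge\beta_c$, follows, and the rest is algebra.
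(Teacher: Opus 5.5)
Your proof is correct and follows the paper's argument: condition on the spins at distance two from $v$, use that the neighbors are then independent with conditional means $|m_w|\le\vartheta_d$, replace each neighbor by a sign of mean $tm_w/\tanh\beta'$ seen through coupling $\beta'$ (the per-neighbor factor $\cosh\beta/\cosh\beta'$ gives $c_0$), and average over the conditioning. The only difference is in the flip-invariance step: the paper flip-symmetrizes the averaged law, using that $\cosh$ is even, whereas you note that the mixture is already flip-invariant because the zero-field cavity measure is invariant under the global flip and $m_w$ is odd in $\xi$; both arguments are fine.
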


\begin{proof}
Put $Z=V\setminus(\{v\}\cup N(v))$. Two distinct neighbors of $v$ are not
adjacent, since $L\ge4$: $v+e_i$ and $v\pm e_j$ with $j\ne i$ differ in two
coordinates, and $v+e_i$ and $v-e_i$ differ by $2e_i$, which is a lattice step
modulo $L$ only if $L\le3$. Moreover the $2d$ neighbors of $w=v+se_i$ are
distinct, and only $w-se_i$ equals $v$, since $w+se_i=v+2se_i\ne v$. Hence
every $w\in N(v)$ has $2d-1$ neighbors other than $v$, all in $Z$. (For $L=3$
this fails: $v+e_1$ and $v-e_1$ are adjacent.)

Let $\rho$ have the zero-field Gibbs law on $\mathbb T_L\setminus v$ with
couplings $s_{\rm off}$, so that $\chi$ is the law of $\rho_{N(v)}$. Given
$\rho_Z$, the Gibbs weight factorizes over $w\in N(v)$, so the spins $\rho_w$
are conditionally independent with means $m_w=\tanh(\beta h_w)$, where
$h_w=\sum_{z\sim w,\,z\ne v}s_{wz}\rho_z$; thus $|m_w|\le\vartheta_d$. Since
$e^{\beta y}=\cosh\beta\,(1+ty)$ for $y=\pm1$, for $\iota=\pm1$
\[
 \E\Bigl[e^{\iota\beta\sum_w\varsigma_w\rho_w}\Bigm|\rho_Z\Bigr]
 =\cosh^{2d}\beta\,\prod_{w\in N(v)}\bigl(1+\iota t\varsigma_wm_w\bigr).
\]
Put $t_{\beta'}=\tanh\beta'\ge t\vartheta_d\ge|tm_w|$. Given $\rho_Z$, let
$(\varepsilon_w)_{w\in N(v)}$ be independent signs with
$\E[\varepsilon_w\mid\rho_Z]=tm_w/t_{\beta'}\in[-1,1]$. The map
$\varepsilon\mapsto1+\iota t_{\beta'}\varsigma_w\varepsilon$ is affine, and
$1+t_{\beta'}y=e^{\beta'y}/\cosh\beta'$ for $y=\pm1$, so conditional
independence gives
\[
 \prod_w\bigl(1+\iota t\varsigma_wm_w\bigr)
 =\E\Bigl[\prod_w\bigl(1+\iota t_{\beta'}\varsigma_w\varepsilon_w\bigr)\Bigm|\rho_Z\Bigr]
 =\frac{\E\bigl[e^{\iota\beta'\sum_w\varsigma_w\varepsilon_w}\bigm|\rho_Z\bigr]}
 {\cosh^{2d}\beta'}.
\]
Average over $\rho_Z$ and add the cases $\iota=\pm1$. With $\hat\chi$ the law of
$\varepsilon$, this gives
\[
 R_\chi(\varsigma)=c_0\,\E_{\hat\chi}\Bigl[2\cosh\Bigl(\beta'\sum_w\varsigma_w\varepsilon_w\Bigr)\Bigr].
\]
Replace $\hat\chi$ by its flip symmetrization
$\tilde\chi=\frac12(\hat\chi+\hat\chi\circ(-\mathrm{id})^{-1})\in\mathcal N_{2d}$,
which does not change the right side because $\cosh$ is even.
\end{proof}

The proof represents a neighbor of bias at most $\vartheta_d$, seen through an
edge of second eigenvalue $t$, as a sign of bias at most one seen through an
edge of second eigenvalue $\tanh\beta'\ge t\vartheta_d$; in the language of
\cite[Section~3]{Mossel2004}, this degrades one binary symmetric channel into
another. With $\beta'=\beta$ one may take $\tilde\chi=\chi$. Only the geometry
of the star uses $L\ge4$.

Define $r_{\beta'}(\chi;\mathsf T)$ as $r(\chi;\mathsf T)$ in
\eqref{eq:ov-pB}, with $R^{\beta'}_\chi$ in place of $R_\chi$, and keep $a$,
$C_\beta$ and $p_A$ at the physical $\beta$.

\begin{corollary}[Transfer to the noisy class]\label{cor:ov-noisy}
Let $L\ge4$ and $\beta'\in[\beta_c,\beta]$.
\begin{enumerate}\renewcommand{\labelenumi}{(\roman{enumi})}
\item Proposition~\ref{prop:ov-star-law} holds with $R_\chi$ replaced by
$R^{\beta'}_{\tilde\chi}$, where $\tilde\chi\in\mathcal N_{2d}$ is the law of
Lemma~\ref{lem:ov-noisy} for $\chi=\chi_{v,s_{\rm off}}$. In particular, every
conditional opening probability \eqref{eq:ov-exploration-identity} is at
least $p_A\inf_{\chi}r_{\beta'}(\chi;N(v))$, the infimum being over
$\chi\in\mathcal N_{2d}$.
\item Proposition~\ref{prop:ov-odds} holds with $R_\chi$ replaced by
$R^{\beta'}_{\tilde\chi}$. In particular, every lower bound on
\[
 C_\beta^S\,\frac{\E_{\pi^{N_+}}\bigl[(R^{\beta'}_\chi)^{-2}\bigr]}
 {\E_{\pi^{N_-}}\bigl[(R^{\beta'}_\chi)^{-2}\bigr]}
\]
that holds for all $\chi\in\mathcal N_{2d}$ is a lower bound on
$\Upsilon_+/\Upsilon_-$.
\end{enumerate}
\end{corollary}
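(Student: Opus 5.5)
The plan is to substitute the identity of Lemma~\ref{lem:ov-noisy} into the two formulas where $R_\chi$ enters, Proposition~\ref{prop:ov-star-law} and Proposition~\ref{prop:ov-odds}, and observe that the constant $c_0$ cancels. Fix $L\ge4$, a vertex $v$, gauge signs $s_{\rm off}$ off the star, and $\beta'\in[\beta_c,\beta]$. Let $\tilde\chi\in\mathcal N_{2d}$ be the law supplied by Lemma~\ref{lem:ov-noisy}. Then
\[
 R_\chi(\varsigma)^{-2}=c_0^{-2}\,R^{\beta'}_{\tilde\chi}(\varsigma)^{-2}
\]
for every $\varsigma$, where $c_0$ depends only on $(d,\beta,\beta')$.

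For (i), Proposition~\ref{prop:ov-star-law} gives the star law as $(\pi^{\mathsf T})_{R^{-2}}$, that is, $\pi^{\mathsf T}(\varsigma)R_\chi(\varsigma)^{-2}$ normalized. Multiplying the weight by the constant $c_0^{2}$ leaves the normalized law unchanged. So the star law equals the same reweighting of $\pi^{\mathsf T}$ by $(R^{\beta'}_{\tilde\chi})^{-2}$. The tilts $a$ and the activation probability $p_A$ enter only through the product factor and \eqref{eq:ov-failed-fair}, so they stay at the physical $\beta$. The independence of the activations is unaffected.

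Identity \eqref{eq:ov-exploration-identity} then reads $p_A\,r_{\beta'}(\tilde\chi;\mathsf T)$ with $\mathsf T=N_=(v)\setminus F$. Next I pass from $\mathsf T$ to $N(v)$ by repeating Lemma~\ref{lem:ov-mediant} and Theorem~\ref{thm:ov-extremal} verbatim with $R^{\beta'}$ in place of $R$. Their proofs use only three facts: linearity of $R$ in the cavity law, the relation $R^{\beta'}_{\chi}(\varsigma_u,-1,\cdot)=R^{\beta'}_{\chi^{(w)}}(\varsigma_u,1,\cdot)$, and closure of $\mathcal N_{2d}$ under single-coordinate flips. All three hold for any coupling $\beta'$, because the coupling appears only inside the $\cosh$. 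Since $\tilde\chi\in\mathcal N_{2d}$, taking the infimum over the class gives the floor $p_A\inf_\chi r_{\beta'}(\chi;N(v))$.

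For (ii), in the proof of Proposition~\ref{prop:ov-odds} the conditional probability of $q_v=\epsilon$ is proportional to $\sum_\varsigma\prod_{w:\eta_w=\epsilon}e^{2\beta\varsigma_w}R_\chi(\varsigma)^{-2}$. Replacing $R_\chi^{-2}$ by $c_0^{-2}(R^{\beta'}_{\tilde\chi})^{-2}$ multiplies both $\Upsilon_+$ and $\Upsilon_-$ by the same factor $c_0^{-2}$, so the ratio $\Upsilon_+/\Upsilon_-$ is unchanged. It therefore equals
\[
 C_\beta^S\,\frac{\E_{\pi^{N_+}}\bigl[(R^{\beta'}_{\tilde\chi})^{-2}\bigr]}{\E_{\pi^{N_-}}\bigl[(R^{\beta'}_{\tilde\chi})^{-2}\bigr]}.
\]
Any lower bound valid for all $\chi\in\mathcal N_{2d}$ therefore applies at $\tilde\chi$. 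Here $C_\beta$ and the tilts again come from the physical weights $e^{2\beta\varsigma_w}$.

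The main point needing care is that $\tilde\chi$ depends on $s_{\rm off}$ and on $v$ but not on $\varsigma$, and that $c_0$ depends on nothing conditioned upon. Lemma~\ref{lem:ov-noisy} states exactly this, so the cancellation is legitimate atom by atom. No certificate is involved; this step is purely structural.
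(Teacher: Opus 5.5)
Your proposal is correct and matches the paper's proof step by step. The factor $c_0$ from Lemma~\ref{lem:ov-noisy} cancels both in the normalized star law and in the ratio $\Upsilon_+/\Upsilon_-$; the tilts, $C_\beta$, $p_A$ and the failed-edge factor stay at $\beta$; and Lemma~\ref{lem:ov-mediant} and Theorem~\ref{thm:ov-extremal} are rerun with $R^{\beta'}$. The paper additionally lists the product structure of $\pi^{\mathsf T}$ as an ingredient of the mediant argument, which the change of coupling inside $R$ leaves untouched.
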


\begin{proof}
In Propositions~\ref{prop:ov-star-law} and~\ref{prop:ov-odds}, $R_\chi$ enters
only through the factor $R_\chi(\varsigma)^{-2}$ of a weight that is normalized
over $\varsigma$, or of a ratio of two such sums. The atom $g$ of $\mathcal G_v$
in Proposition~\ref{prop:ov-star-law}, and the conditioning in
Proposition~\ref{prop:ov-odds}, fix $s_{\rm off}$ and hence $\chi$; by
Lemma~\ref{lem:ov-noisy} the factor equals
$c_0^{-2}R^{\beta'}_{\tilde\chi}(\varsigma)^{-2}$, and $c_0$ cancels. The tilts,
the constant $C_\beta$, the activation probability and the failed-edge factor
\eqref{eq:ov-failed-fair} do not involve $R_\chi$ and stay at $\beta$. For the
bound in (i), the proofs of Lemma~\ref{lem:ov-mediant} and
Theorem~\ref{thm:ov-extremal} use only linearity of $\chi\mapsto R_\chi$, the
identity $R_\chi(\varsigma_u,-\varsigma_w,\varsigma_{\rm r})
=R_{\chi^{(w)}}(\varsigma_u,\varsigma_w,\varsigma_{\rm r})$, closure of
$\mathcal N_{2d}$ under coordinate flips, and the product structure of
$\pi^{\mathsf T}$. All four hold for $R^{\beta'}$, since flipping $\varsigma_w$
has the same effect as flipping $\rho_w$ inside the hyperbolic cosine. Hence
$\inf_{\mathsf T\ni u}\inf_\chi r_{\beta'}(\chi;\mathsf T)=
\inf_\chi r_{\beta'}(\chi;N(v))$, and $\tilde\chi\in\mathcal N_{2d}$.
\end{proof}

\begin{remark}[The noisy class is smaller]\label{rem:ov-nesting}
For $0<\beta_1\le\beta_2$ and every law $\chi$ on $\{\pm1\}^{N(v)}$ there is a
law $\chi''$, flip-invariant if $\chi$ is, with
$R^{\beta_2}_{\chi''}=(\cosh\beta_2/\cosh\beta_1)^{2d}R^{\beta_1}_\chi$:
multiply each coordinate of $\rho\sim\chi$ by an independent sign of mean
$\tanh\beta_1/\tanh\beta_2$ and repeat the computation in the proof of
Lemma~\ref{lem:ov-noisy}. With $\beta_1=\beta'$ and $\beta_2=\beta$, every
function $R^{\beta'}_\chi$ with $\chi\in\mathcal N_{2d}$ is thus a constant
multiple of some $R_{\chi''}$ with $\chi''\in\mathcal N_{2d}$. The quantities
certified below are invariant under $R\mapsto cR$ for constants $c>0$, so
their infimum over the noisy class is at least their infimum over
$\mathcal N_{2d}$; by the same argument, the noisy classes shrink as $\beta'$
decreases. At every point where the sharpened inputs are used, in
Table~\ref{tab:ov-noisy} and in Section~\ref{sec:macrostep}, the inclusion is
strict: the certified floor $p$ exceeds the aligned frozen value
$p_Ar(\frac12(\delta_{\mathbf1}+\delta_{-\mathbf1});N(v))$, which bounds $p_B$
from above. At $(d,t)=(9,3/20)$, for instance, $p=0.2689$, while the aligned
frozen value is at most $0.268766$ with $\beta$ inside $R$ and at least
$0.269100$ with $\beta_c$ inside $R$; Section~\ref{sec:ms-certificates} gives
the values in dimensions $7$ and $8$. So the sharpened floors lie above every
floor that is uniform over $\mathcal N_{2d}$.
\end{remark}

\paragraph{An exact pair certificate for the floor.}
By Corollary~\ref{cor:ov-noisy}(i) it suffices to bound
$r_{\beta'}(\chi;N(v))$ from below over $\chi\in\mathcal N_{2d}$. With
$\bar n=2d-1$ as in \eqref{eq:ov-XY}, put $\mathsf X=\{-\bar n,-\bar n+2,\ldots,\bar n\}$,
$f_{\beta'}(y)=2\cosh(\beta'y)$ and
\[
 \mathbf p_x=\bigl(f_{\beta'}(x-1),\,f_{\beta'}(x+1)\bigr)\in(0,\infty)^2
 \qquad(x\in\mathsf X).
\]
Let $p_A/2<p<\tanh2\beta$, and put $\underline r=p/p_A$ and
$\bar\lambda=e^{4\beta}(1-\underline r)/\underline r$ as in
Proposition~\ref{prop:ov-vertex}. For odd $U\in\{1,3,\ldots,\bar n\}$ put
\[
 \alpha_U=e^{2\beta U}-\bar\lambda e^{-2\beta U},\qquad
 \gamma_U=\bar\lambda e^{2\beta U}-e^{-2\beta U},\qquad
 \varphi_U(A,B)=\alpha_UA^{-2}-\gamma_UB^{-2}.
\]
Both coefficients are positive. Since $\underline r>\frac12$, we have
$\bar\lambda<e^{4\beta}\le e^{4\beta U}$, so $\alpha_U>0$; since
$\underline r<a$ and $(1-a)/a=e^{-4\beta}$, we have $\bar\lambda>1$, so
$\gamma_U>e^{2\beta U}-e^{-2\beta U}>0$.

\begin{lemma}[Chords suffice]\label{lem:ov-chords}
Fix $U$, and let rationals $\psi(x)$, $x\in\mathsf X$, satisfy
\begin{equation}\label{eq:ov-chord}
 \vartheta\psi(x)+(1-\vartheta)\psi(x')\ \ge\
 \varphi_U\bigl(\vartheta\mathbf p_x+(1-\vartheta)\mathbf p_{x'}\bigr)
 \qquad(x,x'\in\mathsf X,\ \vartheta\in[0,1]).
\end{equation}
Then $\sum_x\pi(x)\psi(x)\ge\varphi_U\bigl(\sum_x\pi(x)\mathbf p_x\bigr)$ for
every probability law $\pi$ on $\mathsf X$.
\end{lemma}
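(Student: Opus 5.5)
The plan is to reduce to laws supported on at most three points, and then split such a law into two chord laws along a line on which $\varphi_U$ is convex. Two facts will be used. The $\psi$-side is linear in the law. The function $\varphi_U(A,B)=\alpha_UA^{-2}-\gamma_UB^{-2}$, although neither convex nor concave on $(0,\infty)^2$, is convex in $A$ for fixed $B$, because $\alpha_U>0$.

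\emph{Step 1 (reduction to three atoms).} Fix a law $\pi$ on $\mathsf X$ and put $P=\sum_x\pi(x)\mathbf p_x$. Consider the finite linear program that minimizes $\sum_x\pi'(x)\psi(x)$ over laws $\pi'$ on $\mathsf X$ with $\sum_x\pi'(x)\mathbf p_x=P$. Its feasible set is a nonempty polytope, cut out by three equality constraints (total mass and two coordinates) together with nonnegativity. A minimizing vertex therefore has at most three nonzero coordinates. It suffices to prove the claim for such a vertex $\pi'$, since $\sum\pi\psi\ge\sum\pi'\psi$. If its support has at most two points, the claim is exactly \eqref{eq:ov-chord}. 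The same holds if its support points $\mathbf p_x,\mathbf p_{x'},\mathbf p_{x''}$ are affinely dependent: as a vertex, $\pi'$ then has support of size at most two. So we may assume a nondegenerate triangle $\Delta$ with $P$ in its interior or on an edge.

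\emph{Step 2 (horizontal slicing).} Intersect $\Delta$ with the horizontal line $\{B=P_2\}$ through $P$. This is a segment $[Q_1,Q_2]\ni P$ whose endpoints lie on edges of $\Delta$. Write $P=\lambda Q_1+(1-\lambda)Q_2$, and write each $Q_i$ as a two-point convex combination of the endpoints of its edge, giving chord laws $\pi_1,\pi_2$. The mixture $\lambda\pi_1+(1-\lambda)\pi_2$ is a law on the three vertices with mean $P$. Barycentric coordinates in a nondegenerate triangle are unique, so this mixture equals $\pi'$. Applying \eqref{eq:ov-chord} to each chord gives
\[
\sum_x\pi'(x)\psi(x)=\lambda\sum\pi_1\psi+(1-\lambda)\sum\pi_2\psi\ \ge\ \lambda\varphi_U(Q_1)+(1-\lambda)\varphi_U(Q_2).
\]
Since $Q_1,Q_2,P$ share the second coordinate $P_2$, convexity of $A\mapsto\alpha_UA^{-2}$ gives $\lambda\varphi_U(Q_1)+(1-\lambda)\varphi_U(Q_2)\ge\varphi_U(P)$. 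This uses only $\alpha_U>0$, established before the lemma, and that all points lie in $(0,\infty)^2$.

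\emph{Main obstacle.} The only delicate step is Step 1, specifically that a three-atom law cannot be handled directly by chords, since $\varphi_U$ is not convex. The slicing argument resolves this, provided the reduction in Step 1 really yields three affinely independent atoms or fewer, and that the edge cases are covered: $P$ on an edge, or the horizontal line meeting $\Delta$ in a single vertex, where the claim is trivial. I would check these degenerate configurations explicitly. Positivity of all $\mathbf p_x$ keeps $\varphi_U$ finite throughout.
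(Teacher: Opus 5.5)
Your proposal is correct and follows the paper's proof. You use the same linear-program vertex reduction to at most three affinely independent atoms, and then the same horizontal section of the triangle combined with convexity of $\varphi_U$ in $A$. The only difference is presentational: the paper interpolates $\psi$ by an affine function $l$ on the triangle and shows $\varphi_U-l\le0$ at $\zeta$, whereas you split $\pi'$ explicitly into two chord laws, which amounts to the same argument.
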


\begin{proof}
Let $\zeta=\sum_x\pi(x)\mathbf p_x$. Minimize $\sum_x\pi'(x)\psi(x)$ over the
probability laws $\pi'$ on $\mathsf X$ with
$\sum_x\pi'(x)\mathbf p_x=\zeta$. This linear program has three equality
constraints and a nonempty compact feasible set, so the minimum is attained at
a vertex $\pi^*$. The columns
$(1,\mathbf p_x)$ on the support of a vertex are linearly independent, so
$\pi^*$ has at most three support points, with affinely independent
$\mathbf p_x$. With one or two support points, the claim for $\pi^*$ is
\eqref{eq:ov-chord}. With three, let $T$ be the triangle with vertices
$\mathbf p_{x_i}$, and let $l$ be the affine function with
$l(\mathbf p_{x_i})=\psi(x_i)$, so that $\sum_x\pi^*(x)\psi(x)=l(\zeta)$. On each
edge of $T$, $l$ is a chord, so $\varphi_U-l\le0$ there by \eqref{eq:ov-chord}.
Since $\partial_A^2\varphi_U=6\alpha_UA^{-4}>0$, the function $\varphi_U-l$ is
convex along every horizontal line $\{B=\mathrm{const}\}$, and each horizontal
section of $T$ is a segment with endpoints on its edges, or a single point of an
edge. Hence $\varphi_U(\zeta)\le l(\zeta)=\sum\pi^*\psi\le\sum\pi\psi$.
\end{proof}

Conversely, \eqref{eq:ov-chord} is the case of two-point laws, so the lemma
loses nothing. It is the two-dimensional case of the fact that a lower convex
envelope is determined by simplices (Carath\'eodory's theorem), sharpened by
convexity in one direction.

\begin{lemma}[Pair certificate for the floor]\label{lem:ov-pair}
Let $\beta'\in[\beta_c,\beta]$ and $p_A/2<p<\tanh2\beta$, and suppose that
rationals $\psi_U(x)$, for odd $U\in\{1,\ldots,\bar n\}$ and $x\in\mathsf X$,
satisfy \eqref{eq:ov-chord} for every $U$ and
\begin{equation}\label{eq:ov-class-sums}
 \Delta^{\rm P}_j:=\sum_{U}\ \sum_{h}\binom{\bar n-j}{h}\binom{j}{k_U-h}\,
 \psi_U\bigl(U+2(\bar n-j)-4h\bigr)\ \le\ 0\qquad(j=0,\ldots,\bar n),
\end{equation}
where $k_U=(\bar n+U)/2$ and $h$ ranges over
$\max(0,k_U-j)\le h\le\min(k_U,\bar n-j)$. Then
$r_{\beta'}(\chi;N(v))\ge p/p_A$ for every $\chi\in\mathcal N_{2d}$.
Consequently, on every torus $\mathbb T_L$ with $L\ge4$, every conditional
opening probability \eqref{eq:ov-exploration-identity} is at least $p$.
\end{lemma}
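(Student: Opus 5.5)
The plan is to reduce the claim, exactly as in Proposition~\ref{prop:ov-vertex}, to the sign of $Y-\bar\lambda X$, now with $R^{\beta'}_\chi$ in place of $R_\chi$. The new ingredient is to pair each star pattern with its negation before applying convexity. By \eqref{eq:ov-XY} with $R^{\beta'}$,
\[
 Y-\bar\lambda X=\sum_{\varsigma'}\pi'(\varsigma')\bigl(e^{-4\beta\Sigma'}-\bar\lambda\bigr)R^{\beta'}_\chi(1,\varsigma')^{-2}.
\]
Since $\bar n$ is odd, $\Sigma'$ is never $0$. I will group $\varsigma'$ with $-\varsigma'$ and index each pair by the representative with $\Sigma'=U\in\{1,3,\ldots,\bar n\}$.

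\emph{Step 1: rewriting each pair.} Using $\pi'(\varsigma')=a^{k_U}(1-a)^{\bar n-k_U}=c\,e^{2\beta U}$ with $c=(a(1-a))^{\bar n/2}>0$, together with $\pi'(-\varsigma')=e^{-4\beta U}\pi'(\varsigma')$ and $R(1,-\varsigma')=R(-1,\varsigma')$, the pair contributes
\[
 c\bigl[\alpha_U\,R(-1,\varsigma')^{-2}-\gamma_U\,R(1,\varsigma')^{-2}\bigr]
 =c\,\varphi_U\bigl(R(-1,\varsigma'),R(1,\varsigma')\bigr).
\]
This is a direct computation of the two coefficients $e^{-2\beta U}-\bar\lambda e^{2\beta U}=-\gamma_U$ and $e^{2\beta U}-\bar\lambda e^{-2\beta U}=\alpha_U$.

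\emph{Step 2: applying the chord lemma.} Flip invariance of $\chi$ lets me take $\rho_u=1$ inside the expectation. With $x=x_\rho(\varsigma')=\sum_w\varsigma'_w\rho_w\in\mathsf X$, this gives $(R(-1,\varsigma'),R(1,\varsigma'))=\E_\chi\mathbf p_{x}$, a mixture of the points $\mathbf p_x$. Lemma~\ref{lem:ov-chords} with the hypothesis \eqref{eq:ov-chord} then yields $\varphi_U(\cdot)\le\E_\chi\psi_U(x_\rho(\varsigma'))$. Summing over $U$ and over the $\varsigma'$ with $\Sigma'=U$, and exchanging the sum with $\E_\chi$, I get
\[
 Y-\bar\lambda X\le c\,\E_\chi\Delta(\rho)\le c\max_\rho\Delta(\rho),
\]
where $\Delta(\rho)=\sum_U\sum_{\Sigma'=U}\psi_U(x_\rho(\varsigma'))$ for a frozen pattern.

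\emph{Step 3: counting for a frozen pattern.} Suppose $\rho'$ has $j$ plus coordinates. A pattern with $\Sigma'=U$ has $k_U$ plus signs; let $h$ of them fall on the $\bar n-j$ minus coordinates of $\rho'$. Then
\[
 x=\bigl(2(k_U-h)-j\bigr)-\bigl(2h-(\bar n-j)\bigr)=U+2(\bar n-j)-4h,
\]
and the number of such patterns is $\binom{\bar n-j}{h}\binom{j}{k_U-h}$. Hence $\Delta(\rho)=\Delta^{\rm P}_j\le0$ by \eqref{eq:ov-class-sums}. This gives $Y\le\bar\lambda X$, and so $r_{\beta'}\ge a/(a+(1-a)\bar\lambda)=\underline r=p/p_A$, because $(1-a)e^{4\beta}=a$.

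\emph{Step 4: the torus consequence.} Corollary~\ref{cor:ov-noisy}(i) then bounds every conditional opening probability \eqref{eq:ov-exploration-identity} on $\mathbb T_L$ with $L\ge4$ from below by $p_A\inf_\chi r_{\beta'}(\chi;N(v))\ge p$.

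The main obstacle is Step 1. It requires getting the slot order in $\varphi_U$ and the signs of $\alpha_U,\gamma_U$ exactly right, since the paired form is what makes the function convex in the positive-coefficient slot, which Lemma~\ref{lem:ov-chords} needs. The rest is bookkeeping that mirrors Proposition~\ref{prop:ov-vertex}.
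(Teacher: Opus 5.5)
Your proposal is correct and follows the paper's proof essentially step for step. The shared steps are: pairing $\varsigma'$ with $-\varsigma'$ to get $(a(1-a))^{\bar n/2}\varphi_U\bigl(R(1,-\varsigma'),R(1,\varsigma')\bigr)$, applying Lemma~\ref{lem:ov-chords} to the law of $\varsigma'\cdot\rho'$, counting patterns to identify the frozen sum with $\Delta^{\rm P}_j$, and concluding with Corollary~\ref{cor:ov-noisy}(i). One small imprecision: once you fix $\rho_u=1$, the expectation should be taken under the conditional law $\chi'$ of $\rho'$ given $\rho_u=1$, as the paper does, rather than under $\chi$ itself. Equivalently, keep $\chi$ and set $x=\rho_u\,\varsigma'\cdot\rho'$, which works by evenness of $f_{\beta'}$ alone.
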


\begin{proof}
Write $R=R^{\beta'}_\chi$. As in \eqref{eq:ov-XY},
$r_{\beta'}(\chi;N(v))=aX/(aX+(1-a)Y)$ with
$X=\E_{\pi'}[R(1,\varsigma')^{-2}]$ and $Y=\E_{\pi'}[R(-1,\varsigma')^{-2}]$;
since $a/(1-a)=e^{4\beta}$, $r_{\beta'}\ge\underline r$ if and only if
$\mathcal Y:=Y-\bar\lambda X\le0$. The tilted law is
$\pi'(\varsigma')=(a(1-a))^{\bar n/2}e^{2\beta U(\varsigma')}$ with
$U(\varsigma')=\sum_x\varsigma'_x$, which is odd because $\bar n$ is odd. Use
$R(-1,\varsigma')=R(1,-\varsigma')$, substitute $\varsigma'\to-\varsigma'$ in
$Y$, and pair $\varsigma'$ with $-\varsigma'$:
\[
 \mathcal Y=(a(1-a))^{\bar n/2}\sum_{U\ge1}\ \sum_{\varsigma':\,U(\varsigma')=U}
 \varphi_U\bigl(R(1,-\varsigma'),\,R(1,\varsigma')\bigr).
\]
Let $\chi'$ be the law of $\rho'=(\rho_x)_{x\ne u}$ given $\rho_u=1$ under
$\chi$. By flip invariance and since $f_{\beta'}$ is even,
$R(1,\varsigma')=\E_{\chi'}f_{\beta'}(\varsigma'\cdot\rho'+1)$ and
$R(1,-\varsigma')=\E_{\chi'}f_{\beta'}(\varsigma'\cdot\rho'-1)$. Both
coordinates are averages over the same scalar, so
$(R(1,-\varsigma'),R(1,\varsigma'))=\sum_x\pi_{\varsigma'}(x)\mathbf p_x$, where
$\pi_{\varsigma'}$ is the law of $\varsigma'\cdot\rho'\in\mathsf X$ under
$\chi'$. Lemma~\ref{lem:ov-chords} bounds each term by
$\E_{\chi'}\psi_U(\varsigma'\cdot\rho')$, and summing,
\[
 \mathcal Y\le(a(1-a))^{\bar n/2}\,\E_{\chi'}\Sigma_\psi(\rho'),\qquad
 \Sigma_\psi(\rho')=\sum_{U\ge1}\ \sum_{U(\varsigma')=U}\psi_U(\varsigma'\cdot\rho').
\]
The function $\Sigma_\psi$ is invariant under permutations of the coordinates,
so it depends only on the number $j$ of plus coordinates of $\rho'$. If
$\varsigma'$ has $k_U$ plus signs, $h$ of them on the $\bar n-j$ minus
coordinates of $\rho'$, then $\varsigma'\cdot\rho'=U+2(\bar n-j)-4h$, and there
are $\binom{\bar n-j}{h}\binom{j}{k_U-h}$ such $\varsigma'$. Hence
$\Sigma_\psi(\rho')=\Delta^{\rm P}_j\le0$ and $\mathcal Y\le0$. The last assertion
follows from Corollary~\ref{cor:ov-noisy}(i).
\end{proof}

Lemma~\ref{lem:ov-torus-floor} uses $p\le p_B$ only through the bound
$\nu_L(B_{uv}=1\mid g\cap\mathsf F_v(F'))\ge p$. Under the hypotheses of
Lemma~\ref{lem:ov-pair} this bound holds by Corollary~\ref{cor:ov-noisy}(i), so
Lemma~\ref{lem:ov-torus-floor} holds verbatim on every torus with $L\ge4$ for
such $p$, although $p$ may exceed $p_B$. At the level $p_A/2$ the floor is the
lower product bound of \cite[Lemma~4.2]{MNS2008}, as in
\eqref{eq:ov-closed-floor}. Lemma~\ref{lem:ov-pair} loses only in two places:
each pair of opposite sign patterns is relaxed separately, and the class sums
are taken over frozen patterns $\rho'$. At the points of
Table~\ref{tab:ov-noisy} the certified floor is within $0.11\%$ of the aligned
frozen value with $\beta_c$ inside $R$, which is an upper bound on every floor
that is uniform over the noisy class. In the certificates, \eqref{eq:ov-chord}
is decided for each pair $x,x'$ as positivity on $[0,1]$ of the polynomial
$l(\vartheta)A(\vartheta)^2B(\vartheta)^2-\alpha_UB(\vartheta)^2+\gamma_UA(\vartheta)^2$
of degree $5$, where $A$, $B$ and $l$ interpolate $f_{\beta'}(x-1)$,
$f_{\beta'}(x+1)$ and $\psi_U$ linearly. Its coefficients are rational when
$e^{2\beta'}$ is, because $x\pm1$ is even and
$f_{\beta'}(2k)=e^{2\beta'k}+e^{-2\beta'k}$.

\paragraph{A symmetrized certificate for the Holley line.}
By Corollary~\ref{cor:ov-noisy}(ii) it suffices to bound
$C_\beta^S\E_{\pi^{N_+}}[R^{-2}]/\E_{\pi^{N_-}}[R^{-2}]$ from below, for
$R=R^{\beta'}_\chi$ and uniformly over $\chi\in\mathcal N_{2d}$.
Lemmas~\ref{lem:ov-numerator} and~\ref{lem:ov-denominator} bound the numerator
and the denominator separately, each at its own worst frozen law. The following
certificate bounds the signed combination at one law, after cancelling mass
between $\varsigma$ and $-\varsigma$.

Fix an environment $k\in\{0,\ldots,2d\}$, a partition $N(v)=N_+\cup N_-$ with
$|N_+|=k$ and $S=2k-2d$, and a rational $\Lambda'_k>0$. For $\varsigma$ let
$i_\pm(\varsigma)$ be the number of $w\in N_\pm$ with $\varsigma_w=1$, and put
\[
 \delta\pi(\varsigma)=\pi^{N_+}(\varsigma)-\Lambda'_k\,\pi^{N_-}(\varsigma),\qquad
 \delta\pi^{\rm s}(\varsigma)=\tfrac12\bigl(\delta\pi(\varsigma)+\delta\pi(-\varsigma)\bigr);
\]
both depend on $\varsigma$ only through $(i_+,i_-)$. Given tangent points
$c(i_+,i_-)>0$, define for $\rho\in\{\pm1\}^{N(v)}$
\[
 \Gamma_k(\rho)=\sum_\varsigma\delta\pi^{\rm s}(\varsigma)\,g_\varsigma(\rho),\qquad
 g_\varsigma(\rho)=\begin{cases}
 3c^{-2}-2f_{\beta'}(\varsigma\cdot\rho)\,c^{-3},&\delta\pi^{\rm s}(\varsigma)>0,\\
 f_{\beta'}(\varsigma\cdot\rho)^{-2},&\delta\pi^{\rm s}(\varsigma)<0,\\
 0,&\delta\pi^{\rm s}(\varsigma)=0,
 \end{cases}
\]
with $c=c(i_+(\varsigma),i_-(\varsigma))$.

\begin{lemma}[Symmetrized odds certificate]\label{lem:ov-sym-holley}
$\Gamma_k(\rho)$ depends only on the numbers $l_\pm$ of plus coordinates of
$\rho$ in $N_\pm$. If $\Gamma_k(l_+,l_-)\ge0$ for all $0\le l_\pm\le|N_\pm|$,
then
\[
 \E_{\pi^{N_+}}\bigl[(R^{\beta'}_\chi)^{-2}\bigr]\ \ge\
 \Lambda'_k\,\E_{\pi^{N_-}}\bigl[(R^{\beta'}_\chi)^{-2}\bigr]
 \qquad\text{for every }\chi\in\mathcal N_{2d}.
\]
\end{lemma}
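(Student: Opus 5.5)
The plan is to write the difference $\E_{\pi^{N_+}}[R^{-2}]-\Lambda'_k\E_{\pi^{N_-}}[R^{-2}]$, with $R=R^{\beta'}_\chi$, as a signed sum $\sum_\varsigma\delta\pi(\varsigma)R(\varsigma)^{-2}$. Since $R$ is even in $\varsigma$, replacing $\delta\pi$ by its symmetrization $\delta\pi^{\rm s}$ leaves the sum unchanged, because the antisymmetric part integrates to zero against the even function $R^{-2}$. So it suffices to show $\sum_\varsigma\delta\pi^{\rm s}(\varsigma)R(\varsigma)^{-2}\ge0$ for every $\chi\in\mathcal N_{2d}$.

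Next, each term is bounded from below by a function that is linear in $\chi$, splitting on the sign of the coefficient. Write $R(\varsigma)=\E_\chi f_{\beta'}(\varsigma\cdot\rho)$.
\begin{itemize}
\item Where $\delta\pi^{\rm s}(\varsigma)>0$, I use the tangent-line bound $y^{-2}\ge3c^{-2}-2yc^{-3}$ at $c=c(i_+,i_-)$. This gives $R^{-2}\ge\E_\chi[3c^{-2}-2f_{\beta'}(\varsigma\cdot\rho)c^{-3}]$.
\item Where $\delta\pi^{\rm s}(\varsigma)<0$, I need an upper bound on $R^{-2}$. Jensen's inequality for the convex map $y\mapsto y^{-2}$ gives $R^{-2}\le\E_\chi f_{\beta'}(\varsigma\cdot\rho)^{-2}$. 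Multiplying by the negative coefficient turns this into a lower bound.
\end{itemize}
Summing over $\varsigma$ and using Fubini gives $\sum_\varsigma\delta\pi^{\rm s}R^{-2}\ge\E_\chi\Gamma_k(\rho)$. This is at least $\min_\rho\Gamma_k(\rho)$, as in the proofs of Proposition~\ref{prop:ov-vertex} and Lemma~\ref{lem:ov-numerator}.

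For the reduction to $(l_+,l_-)$, fix $\rho$ and consider a permutation of $N(v)$ that preserves $N_+$ and $N_-$. The coefficient $\delta\pi^{\rm s}(\varsigma)$ and the tangent point $c$ depend on $\varsigma$ only through $(i_+,i_-)$, since the product laws $\pi^{N_\pm}$ are exchangeable within each block. Also $\varsigma\cdot\rho$ is invariant when the same permutation is applied to both $\varsigma$ and $\rho$. Reindexing the sum over $\varsigma$ by such a permutation therefore shows that $\Gamma_k(\rho)$ is unchanged when $\rho$ is permuted within each block. So $\Gamma_k(\rho)$ depends only on the numbers $(l_+,l_-)$ of plus coordinates of $\rho$ in each block.

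The hypothesis $\Gamma_k(l_+,l_-)\ge0$ for all $(l_+,l_-)$ then gives $\E_\chi\Gamma_k\ge0$ for every $\chi\in\mathcal N_{2d}$, which is the claimed inequality. Flip invariance of $\chi$ is not even needed, only that $\chi$ is a law. The only delicate point is checking the evenness claim in the first step. It holds because $R^{\beta'}_\chi(-\varsigma)=R^{\beta'}_\chi(\varsigma)$, which follows from $\cosh$ being even, so symmetrizing $\delta\pi$ loses nothing. The remaining steps are routine, mirroring Lemma~\ref{lem:ov-numerator}.
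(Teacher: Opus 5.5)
Your proposal is correct and follows the paper's proof essentially step for step: you symmetrize $\delta\pi$ using the evenness of $R^{\beta'}_\chi$, apply the tangent-line bound where $\delta\pi^{\rm s}>0$ and Jensen's inequality where $\delta\pi^{\rm s}<0$ to get $\sum_\varsigma\delta\pi^{\rm s}R^{-2}\ge\E_\chi\Gamma_k(\rho)$, and reduce $\Gamma_k$ to $(l_+,l_-)$ by reindexing over permutations that preserve $N_+$ and $N_-$. Your remark that flip invariance of $\chi$ is never used is also right; the paper's argument does not use it either.
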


\begin{proof}
A permutation of $N(v)$ that preserves $N_+$ and $N_-$ leaves
$\delta\pi^{\rm s}$, the counts $i_\pm$ and the tangent points unchanged and
preserves $\varsigma\cdot\rho$ when applied to both; reindexing the sum shows
that $\Gamma_k$ is invariant under these permutations, which act transitively
on the $\rho$ with given $(l_+,l_-)$. Write $R=R^{\beta'}_\chi$. Since
$R(-\varsigma)=R(\varsigma)$,
\[
 \sum_\varsigma\delta\pi(\varsigma)R(\varsigma)^{-2}
 =\sum_\varsigma\delta\pi^{\rm s}(\varsigma)R(\varsigma)^{-2}.
\]
Put $W_\rho(\varsigma)=f_{\beta'}(\varsigma\cdot\rho)$, so that
$R=\E_\chi W_\rho$. Where $\delta\pi^{\rm s}(\varsigma)>0$, use the tangent-line
bound $y^{-2}\ge3c^{-2}-2yc^{-3}$ for $y,c>0$ at $y=R(\varsigma)$; it is linear
in $y$, so $R(\varsigma)^{-2}\ge\E_\chi[3c^{-2}-2W_\rho(\varsigma)c^{-3}]$.
Where $\delta\pi^{\rm s}(\varsigma)<0$, use Jensen's inequality
$R(\varsigma)^{-2}\le\E_\chi W_\rho(\varsigma)^{-2}$. Multiplying by
$\delta\pi^{\rm s}(\varsigma)$ and summing gives
\[
 \E_{\pi^{N_+}}R^{-2}-\Lambda'_k\E_{\pi^{N_-}}R^{-2}
 =\sum_\varsigma\delta\pi^{\rm s}(\varsigma)R(\varsigma)^{-2}
 \ \ge\ \E_\chi\Gamma_k(\rho)\ \ge\ \min_{l_+,l_-}\Gamma_k(l_+,l_-)\ \ge\ 0.
\]
\end{proof}

The symmetrization cancels mass between $\varsigma$ and $-\varsigma$ before the
one-sided bounds are applied; this, and the use of a single law in both
expectations, is the gain over Lemmas~\ref{lem:ov-numerator}
and~\ref{lem:ov-denominator}. Lemma~\ref{lem:ov-sym-holley} is a symmetrized
form of the single-site odds certificate for Holley's criterion
\cite{Holley1974}, \cite[Theorem~4.8]{GHM2001}.

\begin{corollary}[Sharpened Holley line]\label{cor:ov-sym-holley}
Let rationals $g_K\ge1>g_h$ and $\beta'\in[\beta_c,\beta]$ be given, and suppose
that for every $k=0,\ldots,2d$ the hypothesis of Lemma~\ref{lem:ov-sym-holley}
holds with $\Lambda'_k=g_hg_K^SC_\beta^{-S}$ and some tangent points. Then on
every torus $\mathbb T_L$ with $L\ge4$, for every vertex $v$, every
$\eta\in\{\pm1\}^{V\setminus v}$ and every $s_{\rm off}$,
\[
 \nu_L\bigl(q_v=1\bigm| q_{V\setminus v}=\eta,\ s_{\rm off},\
 (A_e)_{e\notin\starv(v)}\bigr)\ \ge\ \gamma_v(1\mid\eta),
\]
and Lemma~\ref{lem:ov-torus-odds} holds on $\mathbb T_L$.
\end{corollary}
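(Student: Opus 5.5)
The plan is to follow the proof of Proposition~\ref{prop:ov-holley-line} verbatim, with two substitutions: Corollary~\ref{cor:ov-noisy}(ii) replaces $R_\chi$ by $R^{\beta'}_{\tilde\chi}$, and Lemma~\ref{lem:ov-sym-holley} replaces the separate numerator and denominator bounds. Fix $L\ge4$, a vertex $v$, $\eta\in\{\pm1\}^{V\setminus v}$ and $s_{\rm off}$, and let $k=k_+$ be the number of plus coordinates of $\eta$ on $N(v)$, so that $S=2k-2d$.

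\emph{Step 1: pass to the noisy class.} By Corollary~\ref{cor:ov-noisy}(ii), Proposition~\ref{prop:ov-odds} remains valid with $R_\chi$ replaced by $R^{\beta'}_{\tilde\chi}$, where $\tilde\chi\in\mathcal N_{2d}$ is the law furnished by Lemma~\ref{lem:ov-noisy}. The constant $c_0$ cancels in the ratio, so
\[
 \frac{\Upsilon_+}{\Upsilon_-}=C_\beta^{S}\,\frac{\E_{\pi^{N_+}}\bigl[(R^{\beta'}_{\tilde\chi})^{-2}\bigr]}{\E_{\pi^{N_-}}\bigl[(R^{\beta'}_{\tilde\chi})^{-2}\bigr]},
\]
with the partition $N_\pm$ determined by $\eta$.

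\emph{Step 2: apply the certificate.} The hypothesis of Lemma~\ref{lem:ov-sym-holley} is assumed for this $k$, with $\Lambda'_k=g_hg_K^SC_\beta^{-S}$. The hypothesis is stated for one partition of size $k$, but it transfers to every partition with $|N_+|=k$ by relabeling $N(v)$ through a permutation: the class $\mathcal N_{2d}$ is permutation invariant, and the tangent points depend only on the counts $(i_+,i_-)$. The lemma then gives
\[
 \E_{\pi^{N_+}}\bigl[(R^{\beta'}_{\tilde\chi})^{-2}\bigr]\ \ge\ \Lambda'_k\,\E_{\pi^{N_-}}\bigl[(R^{\beta'}_{\tilde\chi})^{-2}\bigr].
\]
Multiplying by $C_\beta^S$ yields $\Upsilon_+/\Upsilon_-\ge g_hg_K^S=e^{2h+2K'S}$, where $K'=\tfrac12\log g_K\ge0$ and $h=\tfrac12\log g_h<0$.

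\emph{Step 3: conclude.} Proposition~\ref{prop:ov-odds} expresses the conditional probability as $x/(1+x)$ with $x=\Upsilon_+/\Upsilon_-$. Since $x\mapsto x/(1+x)$ is increasing, this probability is at least $\gamma_v(1\mid\eta)$ as defined in \eqref{eq:ov-ising}. The extension of Lemma~\ref{lem:ov-torus-odds} uses the same proof as before, with this bound in place of Proposition~\ref{prop:ov-holley-line}. Specifically, $\mathcal E_v$ is contained in $\sigma((q_x)_{x\ne v},(s_e,A_e)_{e\notin\starv(v)})$ by Lemma~\ref{lem:ov-gauge}(iv), and $\gamma_v(1\mid q_{N(v)})$ is $\mathcal E_v$-measurable; the tower property then gives the claim.

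The only point needing care is Step 2's transfer to arbitrary partitions and the check that the permutation invariance used in Lemma~\ref{lem:ov-sym-holley} is compatible with the relabeling. This reduces to the fact that $\delta\pi^{\rm s}$ and the tangent points depend on $\varsigma$ only through $(i_+,i_-)$, so I expect no real obstacle.
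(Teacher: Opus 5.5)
Your proposal is correct and follows the paper's proof. The paper applies Corollary~\ref{cor:ov-noisy}(ii) with the law $\tilde\chi\in\mathcal N_{2d}$ and then Lemma~\ref{lem:ov-sym-holley} to get $\Upsilon_+/\Upsilon_-\ge C_\beta^S\Lambda'_k=e^{2h+2K'S}$. It concludes through the monotonicity of $x\mapsto x/(1+x)$ and the tower-property argument of Lemma~\ref{lem:ov-torus-odds}. Your remark that the certificate transfers to every partition with $|N_+|=k$ is a useful addition: it holds because $\Gamma_k$ and the tangent points depend only on the counts, a point the paper leaves implicit.
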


\begin{proof}
By Corollary~\ref{cor:ov-noisy}(ii), applied with the law
$\tilde\chi\in\mathcal N_{2d}$, and Lemma~\ref{lem:ov-sym-holley},
$\Upsilon_+/\Upsilon_-\ge C_\beta^S\Lambda'_k=g_hg_K^S=e^{2h+2K'S}$. Conclude as
in the proof of Proposition~\ref{prop:ov-holley-line}; the proof of
Lemma~\ref{lem:ov-torus-odds} then applies verbatim.
\end{proof}

Where the sharpened inputs are used, Lemmas~\ref{lem:ov-torus-floor}
and~\ref{lem:ov-torus-odds} hold on tori with $L\ge4$ rather than $L\ge3$.
Every later use has $L\ge8$, or lets $L\to\infty$.

\subsection{The plus-density of the weak Ising field}\label{sec:ov-density}

The plus-density of the Ising field enters the second moment below only through
the one-site marginal $\rho_L=\mu_L(X_o=1)$. Domination by independent sites at
the worst conditional plus probability, the principle of \cite[Eq.~(8)]{ABL1987}
(see also \cite[Proposition~4.16]{GHM2001}) used in Appendix~\ref{sec:mns}, would give density only about $0.31$ at $d=12$,
$t=3/25$. That probability never enters here, because the second-moment argument
conditions only on increasing events. For translation-invariant laws with the
downward FKG property, the product measures that they dominate are
characterized in \cite[Theorems~1.2 and~4.1]{LiggettSteif2006}, and Jiang and Lang apply
this to the plus spins of the Ising minus phase \cite[Proposition~2]{JiangLang2026}.
Any product density obtained in this way is at most the one-site density,
whereas the pair-ratio bound of Section~\ref{sec:ov-second-moment} keeps a lower
bound on $\rho_L$ itself, at the price of the boost term; we have not evaluated
the product route at the points of Table~\ref{tab:overlap-constants}. The
following bound is the classical mean-field bound on the magnetization; see for
instance \cite{Pearce1981} and \cite[Theorem~3.53(2)]{FriedliVelenik2017}. We
include the short proof in the form used here.

\begin{lemma}[Mean-field lower bound on the plus-density]\label{lem:ov-meanfield}
Let $h<0$, $0\le K'\le|h|$ and $2dK'<1$, and let $m_*\in(0,1)$ be the unique
root of $m=\tanh(|h|+2dK'm)$. Then $\rho_L\ge(1-m_*)/2$ for every $L\ge3$. In
particular, if a rational $\bar m$ satisfies $\tanh(|h|+2dK'\bar m)\le\bar m$,
then $\rho_L\ge\rho_-:=(1-\bar m)/2$ for every $L\ge3$.
\end{lemma}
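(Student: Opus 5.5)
Write $M_L=\mu_L(X_o)$. By translation invariance on the torus, $\rho_L=(1+M_L)/2$, so the claim is equivalent to $M_L\ge-m_*$. After the global flip $X\mapsto-X$ the law $\mu_L$ becomes the Ising law with coupling $K'$ and positive field $|h|$. For that law the claim reads: the magnetization $m_L=-M_L$ satisfies $m_L\le m_*$. This is the mean-field upper bound on the magnetization of a ferromagnet in a positive field. I would prove it in the following three steps.

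\emph{Step 1: a conditional bound.} From the single-site conditional law \eqref{eq:ov-ising} of the flipped model, $\mu(X_o\mid X_{N(o)})=\tanh(|h|+K'S_o)$. Taking expectations gives
\[
 m_L=\E\tanh\bigl(|h|+K'S_o\bigr).
\]

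\emph{Step 2: the concavity step.} I want $m_L\le\tanh(|h|+K'\E S_o)=\tanh(|h|+2dK'm_L)$. Jensen's inequality applies directly if $\tanh$ is concave on the support of $|h|+K'S_o$, that is, if $|h|+K'S_o\ge0$. The hypothesis $K'\le|h|$ only gives $|h|+K'S\ge|h|-2dK'$, and this need not be nonnegative. So I would instead use the standard route (Pearce, Friedli--Velenik, Theorem 3.53), which compares through the GKS/FKG structure. Write $\tanh(|h|+K'S)$ as a function of $S$. By the Griffiths/GHS inequality, the magnetization of a ferromagnet in a positive field is concave in the field. The GHS-based comparison then shows that the mean-field equation gives an upper bound: $m_L\le\tanh(|h|+2dK'm_L)$. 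If I want a self-contained argument, I would try to show $\E\tanh(|h|+K'S_o)\le\tanh(|h|+K'\E S_o)$ by conditioning on the sign structure. Here the condition $K'\le|h|$ should guarantee that the tangent line to $\tanh$ at $|h|+2dK'm_L$ dominates $\tanh$ on the lattice of values $|h|+K'(2j-2d)$. That tangent-line check is the main obstacle, and I expect it to be exactly where $K'\le|h|$ is used.

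\emph{Step 3: the fixed point.} Since $2dK'<1$, the map $\phi(m)=\tanh(|h|+2dK'm)-m$ is strictly decreasing on $[0,1]$: its derivative is at most $2dK'-1<0$. It satisfies $\phi(0)>0$ and $\phi(1)<0$, so the root $m_*$ is unique, and $\phi(m)\le0$ exactly when $m\ge m_*$. From $m_L\le\tanh(|h|+2dK'm_L)$ we get $\phi(m_L)\ge0$, hence $m_L\le m_*$ and $\rho_L=(1-m_L)/2\ge(1-m_*)/2$. Finally, if a rational $\bar m$ satisfies $\tanh(|h|+2dK'\bar m)\le\bar m$, then $\phi(\bar m)\le0$, so $\bar m\ge m_*$ and $\rho_L\ge(1-\bar m)/2=\rho_-$. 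This argument does not depend on $L\ge3$.
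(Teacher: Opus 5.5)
Your Steps 1 and 3 are correct; Step 3 is exactly the fixed-point argument needed, and it does not even need $m_L\ge0$, since $\phi'\le 2dK'-1<0$ on all of $\mathbb R$. The gap is Step 2, which is the whole content of the lemma and which you do not prove.

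Concavity of the magnetization in the uniform field $|h|$ does not by itself give $m_L\le\tanh(|h|+2dK'm_L)$. Pointing to Pearce or Friedli--Velenik names a classical result, stated there for an infinite-volume state, rather than giving an argument on each torus $\mathbb T_L$.

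The self-contained fallback you propose is false: $K'\le|h|$ does not make the tangent to $\tanh$ at $x_0=|h|+2dK'm_L$ dominate $\tanh$ on the values $|h|+K'(2j-2d)$. Take $K'=|h|=1/(4d)$, which meets all hypotheses with $2dK'=1/2$.
\begin{itemize}
\item Then $m_L\le m_*\le 2|h|$, so $x_0\le 1/(2d)$.
\item The smallest value $1/(4d)-1/2$ has positive probability under the strictly positive Ising law.
\item As $d\to\infty$ the tangent at $x_0$ converges to the line $y=x$. That line lies strictly below $\tanh$ at $-1/2$, since $\tanh(-1/2)\approx-0.462$.
\end{itemize}
More fundamentally, $\tanh$ is convex on $(-\infty,0]$, and the support of $|h|+K'S_o$ extends down to $|h|-2dK'$, deep into that region. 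So some laws on this support with mean $x_0$ have $\E\tanh>\tanh x_0$. Any valid proof must use the Ising structure of the law of $S_o$, not only the shape of $\tanh$ and $\E S_o$.

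The missing idea is to sum out $\sigma_o$ exactly instead of averaging $\tanh$. In the flipped model, let $Z_s$ be the partition function on $V\setminus\{o\}$ with the couplings not touching $o$, field $|h|+s$ at the neighbours of $o$, and field $|h|$ elsewhere. Let $\mathfrak f(s)=\frac{d}{ds}\log Z_s$, the sum of the neighbours' magnetizations in this cavity model. The argument then runs as follows.
\begin{itemize}
\item Summing out $\sigma_o$ gives $m_L=\tanh\bigl(|h|+\tfrac12\log(Z_{K'}/Z_{-K'})\bigr)$, with $\log(Z_{K'}/Z_{-K'})=\int_{-K'}^{K'}\mathfrak f(s)\,ds$.
\item For $s\ge-|h|$ all fields are nonnegative, so the GHS inequality makes $\mathfrak f$ concave on $[-|h|,\infty)$. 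This interval contains $[-K',K']$ precisely because $K'\le|h|$; that is the real role of this hypothesis.
\item Concavity gives $\mathfrak f(s)+\mathfrak f(-s)\le 2\mathfrak f(0)$, hence $\tfrac12\int_{-K'}^{K'}\mathfrak f\le K'\mathfrak f(0)$.
\item Restore the edges at $o$ with couplings $\lambda K'$, $\lambda\in[0,1]$. The second Griffiths inequality makes each neighbour's magnetization nondecreasing in $\lambda$, and at $\lambda=1$ it equals $m_L$ by translation invariance of the torus. Hence $\mathfrak f(0)\le 2dm_L$.
\end{itemize}
Monotonicity of $\tanh$ then gives $m_L\le\tanh(|h|+2dK'm_L)$, and your Step 3 concludes.
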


\begin{proof}
Put $\bar h=|h|>0$. Flipping all spins gives $\rho_L=(1-m_L)/2$, where $m_L$ is
the magnetization at $o$ of the Ising model with coupling $K'$ and field
$\bar h$; by the first Griffiths inequality
\cite{Griffiths1967,KellySherman1968}, $m_L\ge0$. For $s\in\mathbb R$ let
$\nu_s$ be the Ising law on $V\setminus\{o\}$ with coupling $K'$ on the edges
not touching $o$, field $\bar h+s$ at the neighbors of $o$ and field $\bar h$
elsewhere, and let $Z_s$ be its partition function. Summing over the spin at
$o$,
\[
 m_L=\tanh\Bigl(\bar h+\tfrac12\log\frac{Z_{K'}}{Z_{-K'}}\Bigr),\qquad
 \log\frac{Z_{K'}}{Z_{-K'}}=\int_{-K'}^{K'}\mathfrak f(s)\,ds,\qquad
 \mathfrak f(s)=\sum_{y\in N(o)}\langle\sigma_y\rangle_{\nu_s}.
\]
For $s\ge-\bar h$ all fields are nonnegative, so by the GHS inequality
\cite{GHS1970} $\mathfrak f$ is concave on $[-\bar h,\infty)\supseteq[-K',K']$;
this is the only use of $K'\le|h|$. Hence
$\mathfrak f(s)+\mathfrak f(-s)\le2\mathfrak f(0)$ and
$\frac12\int_{-K'}^{K'}\mathfrak f\le K'\mathfrak f(0)$. Give the edges at $o$
the couplings $\lambda K'$, $\lambda\in[0,1]$, with field $\bar h$ everywhere.
At $\lambda=0$ the marginal off $o$ is $\nu_0$, and by the second Griffiths
inequality \cite{Griffiths1967,KellySherman1968} each $\langle\sigma_y\rangle$
is nondecreasing in $\lambda$; at $\lambda=1$ it equals $m_L$ by translation
invariance. Thus $\mathfrak f(0)\le2dm_L$ and $m_L\le\tanh(\bar h+2dK'm_L)$.
The function $m\mapsto\tanh(\bar h+2dK'm)-m$ has derivative at most
$2dK'-1<0$, is positive at $0$ and negative at $1$, so $m_L\le m_*$; and
$\tanh(\bar h+2dK'\bar m)\le\bar m$ forces $\bar m\ge m_*$.
\end{proof}

With $2d\bar m=N_1/N_2$ in lowest terms, the test
$\tanh(|h|+2dK'\bar m)\le\bar m$ is equivalent to the inequality
$g_h^{-N_2}g_K^{N_1}\le((1+\bar m)/(1-\bar m))^{N_2}$ between rationals.
At $d=12$, $t=3/25$ the value $\bar m=1297431/10^7$ passes it, so
$\rho_L\ge0.4351284$ on every torus. For the other rows of
Table~\ref{tab:overlap-constants}, in the order listed, the second program of
Section~\ref{sec:ov-constants} uses
$\bar m=114873/1250000$, $546199/5000000$, $759111/5000000$, $444177/2500000$,
$1495251/10^7$, $136251/1250000$, $1440853/10^7$ and $1240161/10^7$; each
passes the test and gives at least the tabulated $\rho_-$. For the rows of
Table~\ref{tab:ov-noisy}, in the order listed, one of the two programs uses
$\bar m=1037761/10^7$, $1201617/10^7$, $1387291/10^7$, $199063/1250000$ and
$228091/1250000$, and Table~\ref{tab:ms-local} lists the values used in
Section~\ref{sec:macrostep}.

\subsection{A weighted second moment for Ising sites and Bernoulli bonds}\label{sec:ov-second-moment}

This subsection concerns only the auxiliary law
\[
 P^{\rm aux}_L=\mu_L\otimes\mathrm{Ber}(p)^{\otimes E}
\]
of $(X,Y)\in\{\pm1\}^V\times\{0,1\}^E$ on $\mathbb T_L$, with $\mu_L$ from
\eqref{eq:ov-ising} and $0<p\le1$. For $\Lambda\subseteq V$ write
$\{\Lambda\subseteq+\}=\{X_x=1\text{ for all }x\in\Lambda\}$. Let
$\mathcal O_n$ be the event that some path from $o$ to $\partial B_n$ inside
$B_n$ has $X=1$ at all its vertices and $Y=1$ on all its edges. Assume
\begin{equation}\label{eq:ov-H}
 d\ge6,\qquad h<0,\qquad 0\le K'\le|h|,\qquad 2dK'<1,
\end{equation}
and put $t'=\tanh K'$ and $\alpha_{\rm D}=2dt'<1$. Let $\rho_-$ be as in
Lemma~\ref{lem:ov-meanfield} and $\kappa=(1-\rho_-)/\rho_-$.

\paragraph{Oriented paths and the weighted count.}
An oriented path of length $n$ is a word
$\gamma=(\gamma_0,\ldots,\gamma_{n-1})\in[d]^n$, with lifts
$\tilde x_k(\gamma)=\sum_{i<k}e_{\gamma_i}\in\Z^d$ and positions
$x_k(\gamma)=\tilde x_k(\gamma)\bmod L$ for $0\le k\le n$. Put
$V_\gamma=\{x_0(\gamma),\ldots,x_n(\gamma)\}$ and let $E_\gamma$ be the set of
its $n$ steps. If $L\ge2n+1$, then (i) the $x_k(\gamma)$ are distinct; (ii)
$x_k(\gamma)=x_j(\gamma')$ forces $k=j$; and (iii) $x_k(\gamma)$ is at graph
distance $k$ from $o$. Indeed, the lifts lie in $\{0,\ldots,n\}^d$ and have
coordinate sum $k$; equality modulo $L$ forces equal coordinate sums since
$n<L$, and then equal lifts since $L>2n$; and a lift difference
$v\in[-n,n]^d$ has torus length $|v|_1$ since $n<L/2$. Call $\gamma$ open if
$V_\gamma\subseteq+$ and $Y_e=1$ on $E_\gamma$. Then
$\pi_L(\gamma):=P^{\rm aux}_L(\gamma\text{ open})=p^n\mu_L(V_\gamma\subseteq+)>0$.
Put
\[
 W_n=\sum_{\gamma\in[d]^n}d^{-n}\,\frac{\mathbf1\{\gamma\text{ open}\}}{\pi_L(\gamma)}.
\]

\begin{lemma}[Weighted Paley--Zygmund bound]\label{lem:ov-pz}
Let $L\ge2n+1$, and let $\E_{\gamma,\gamma'}$ denote the average over
independent uniform $\gamma,\gamma'\in[d]^n$. For vertex sets
$\Lambda,\Lambda'$ of $\mathbb T_L$ put
$\mathcal R_L(\Lambda,\Lambda')=\mu_L(\Lambda\cup\Lambda'\subseteq+)/
[\mu_L(\Lambda\subseteq+)\,\mu_L(\Lambda'\subseteq+)]$. Then $\E W_n=1$,
\[
 \E W_n^2=\E_{\gamma,\gamma'}\Bigl[p^{-|E_\gamma\cap E_{\gamma'}|}\,
 \mathcal R_L(V_\gamma,V_{\gamma'})\Bigr],
\]
and $P^{\rm aux}_L(\mathcal O_n)\ge P^{\rm aux}_L(W_n>0)\ge1/\E W_n^2$.
\end{lemma}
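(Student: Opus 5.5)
The argument is the standard weighted second-moment computation. Linearity of expectation and the independence of $X$ and $Y$ under $P^{\rm aux}_L$ give the first and second moments directly. The containment of events, together with Cauchy--Schwarz, then gives the probability bound.

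\emph{First moment.} For each word $\gamma$, the indicator $\mathbf1\{\gamma\text{ open}\}$ has expectation $\pi_L(\gamma)$. The factorization $\pi_L(\gamma)=p^n\mu_L(V_\gamma\subseteq+)$ uses two facts. First, $|E_\gamma|=n$, which holds because the positions $x_k(\gamma)$ are distinct when $L\ge2n+1$ (fact (i)). Second, $\mu_L$ is strictly positive, so $\pi_L(\gamma)>0$ and the normalization is legitimate. Hence $\E W_n=\sum_\gamma d^{-n}=1$.

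\emph{Second moment.} Expanding the square gives
\[
\E W_n^2=\sum_{\gamma,\gamma'}d^{-2n}\frac{P^{\rm aux}_L(\gamma,\gamma'\text{ open})}{\pi_L(\gamma)\pi_L(\gamma')}.
\]
The joint event requires $X=1$ on $V_\gamma\cup V_{\gamma'}$ and $Y=1$ on $E_\gamma\cup E_{\gamma'}$. By independence, its probability is
\[
p^{|E_\gamma|+|E_{\gamma'}|-|E_\gamma\cap E_{\gamma'}|}\,\mu_L(V_\gamma\cup V_{\gamma'}\subseteq+).
\]
Dividing by $p^{2n}\mu_L(V_\gamma\subseteq+)\mu_L(V_{\gamma'}\subseteq+)$ leaves $p^{-|E_\gamma\cap E_{\gamma'}|}\mathcal R_L(V_\gamma,V_{\gamma'})$. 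Averaging over $\gamma,\gamma'$ with weights $d^{-2n}$ is exactly the average $\E_{\gamma,\gamma'}$. Edges here are unordered torus edges, so a shared edge is counted once; this is all the formula needs.

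\emph{Probability bound.} If $W_n>0$, some $\gamma$ is open. I then check that $\gamma$ is a path from $o$ to $\partial B_n$ inside $B_n$. By fact (iii), $x_k(\gamma)$ is at graph distance $k\le n$ from $o$, so the path lies in $B_n$, and its endpoint $x_n$ lies on $\partial B_n$. Its vertices carry $X=1$ and its edges carry $Y=1$, so $\{W_n>0\}\subseteq\mathcal O_n$. Finally, Cauchy--Schwarz gives
\[
1=(\E W_n)^2=(\E W_n\mathbf1\{W_n>0\})^2\le\E W_n^2\,P(W_n>0),
\]
which is the Paley--Zygmund step.

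I do not expect any step to be a real obstacle. The only point needing care is the torus geometry: the conditions $L\ge2n+1$ and $n<L/2$ ensure that paths do not wrap around. Without them, repeated vertices or edges would change the exponent of $p$, and graph distance could fall below $k$, breaking the containment in $\mathcal O_n$. Facts (i) and (iii), proved just before the lemma, handle both issues.
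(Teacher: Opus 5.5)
Your proof is correct and follows the paper's argument: independence of $X$ and $Y$ together with fact (i) gives $|E_\gamma\cup E_{\gamma'}|=2n-|E_\gamma\cap E_{\gamma'}|$, Cauchy--Schwarz applied to $W_n\mathbf1\{W_n>0\}$ gives the lower bound, and fact (iii) gives the containment $\{W_n>0\}\subseteq\mathcal O_n$. The only thing to add is that $\pi_L(\gamma)>0$ also uses the standing assumption $p>0$.
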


\begin{proof}
Clearly $\E W_n=1$. Since $X$ and $Y$ are independent,
\[
 P^{\rm aux}_L(\gamma,\gamma'\text{ open})
 =p^{|E_\gamma\cup E_{\gamma'}|}\,\mu_L(V_\gamma\cup V_{\gamma'}\subseteq+),
\]
and $|E_\gamma\cup E_{\gamma'}|=2n-|E_\gamma\cap E_{\gamma'}|$ by (i); this
gives the second moment. The Cauchy--Schwarz inequality
\[
 \E W_n=\E\bigl[W_n\mathbf1\{W_n>0\}\bigr]
 \le(\E W_n^2)^{1/2}\,P^{\rm aux}_L(W_n>0)^{1/2}
\]
gives the last bound. If $W_n>0$, some $\gamma$ is open, and by (iii) it runs
inside $B_n$ from $o$ to $\partial B_n$.
\end{proof}

Normalizing each path by its own probability makes the pair law in
Lemma~\ref{lem:ov-pz} exactly uniform, although $\mu_L(V_\gamma\subseteq+)$
depends on the shape of $\gamma$. The Ising law now enters only through
$\mathcal R_L$, which is bounded uniformly over all pairs of vertex sets.

\paragraph{The Ising pair ratio.}
We use Dobrushin's comparison theorem \cite{Dobrushin1970,Follmer1982} in an
averaged, finite-volume form, and include its proof. Let $\Lambda$ be a finite
set and $\Omega=\{\pm1\}^\Lambda$. For $x\in\Lambda$ let
$\gamma_x(\cdot\mid\omega)$ and $\tilde\gamma_x(\cdot\mid\omega)$ be probability
kernels on $\{\pm1\}$ that do not depend on $\omega_x$, and write
$(\gamma_xf)(\omega)=\sum_{s=\pm1}\gamma_x(s\mid\omega)f(\omega^{x,s})$, where
$\omega^{x,s}$ sets the coordinate $x$ to $s$. Let
$\delta_y(f)=\max\{|f(\omega)-f(\omega')|:\omega=\omega'\text{ off }y\}$, and
let $\mathsf C_{xy}=\max\{|\gamma_x(1\mid\omega)-\gamma_x(1\mid\omega')|:
\omega=\omega'\text{ off }y\}$ for $y\ne x$, with $\mathsf C_{xx}=0$.

\begin{lemma}[Dobrushin comparison, averaged form]\label{lem:ov-dobrushin}
Suppose $\alpha_{\mathsf C}:=\max_x\sum_y\mathsf C_{xy}<1$, put
$\mathsf D=\sum_{k\ge0}\mathsf C^k$, and let $\mu$, $\tilde\mu$ be laws on
$\Omega$ with $\mu(\gamma_xf)=\mu(f)$ and $\tilde\mu(\tilde\gamma_xf)=\tilde\mu(f)$
for all $x$ and $f$. Then for every $f:\Omega\to\mathbb R$
\[
 |\tilde\mu(f)-\mu(f)|\ \le\ \sum_{x,y\in\Lambda}\delta_y(f)\,\mathsf D_{yx}\,
 \bar b_x,\qquad
 \bar b_x=\int\bigl|\tilde\gamma_x(1\mid\omega)-\gamma_x(1\mid\omega)\bigr|\,
 \tilde\mu(d\omega).
\]
\end{lemma}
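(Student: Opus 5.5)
The plan is the classical Dobrushin–Föllmer telescoping argument, run in finite volume where the kernels are exact. The first step is a one-step comparison identity. For any $f$ and any site $x$, the invariance $\tilde\mu(\tilde\gamma_x f)=\tilde\mu(f)$ and $\mu(\gamma_xf)=\mu(f)$ give
\[
 \tilde\mu(f)-\mu(f)=\tilde\mu(\tilde\gamma_xf-\gamma_xf)+\bigl(\tilde\mu(\gamma_xf)-\mu(\gamma_xf)\bigr).
\]
Since both kernels live on $\{\pm1\}$ and do not depend on $\omega_x$, we have $(\tilde\gamma_xf-\gamma_xf)(\omega)=(\tilde\gamma_x(1\mid\omega)-\gamma_x(1\mid\omega))(f(\omega^{x,1})-f(\omega^{x,-1}))$. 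Its absolute value is at most $\delta_x(f)\,|\tilde\gamma_x(1\mid\omega)-\gamma_x(1\mid\omega)|$, so the first term is bounded by $\delta_x(f)\bar b_x$. For the second term we need the standard oscillation contraction:
\[
 \delta_x(\gamma_xf)=0,\qquad \delta_y(\gamma_xf)\le\delta_y(f)+\mathsf C_{xy}\delta_x(f)\quad(y\ne x).
\]
This holds because changing $\omega_y$ moves $f(\omega^{x,s})$ by at most $\delta_y(f)$ and moves the kernel weight by at most $\mathsf C_{xy}$ times the range $f(\omega^{x,1})-f(\omega^{x,-1})$.

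The second step iterates the identity. Fix an enumeration $x_1,\ldots,x_{|\Lambda|}$ of $\Lambda$ and let $P=\gamma_{x_{|\Lambda|}}\cdots\gamma_{x_1}$ be one systematic sweep. Applying the identity successively gives
\[
 |\tilde\mu(f)-\mu(f)|\le\sum_{i}\delta_{x_i}(\gamma_{x_{i-1}}\cdots\gamma_{x_1}f)\,\bar b_{x_i}+|\tilde\mu(Pf)-\mu(Pf)|.
\]
Then iterate with $f$ replaced by $Pf,P^2f,\dots$. The oscillation vector $\delta(f)=(\delta_y(f))_y$ evolves under the sweep by a matrix inequality. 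Write $I_x$ for the row operation $\delta\mapsto\delta+\delta_x\mathsf C_{x\cdot}$ followed by zeroing coordinate $x$. Composing these row operations gives $\delta(P^kf)\le\delta(f)\mathsf Q^k$ entrywise, with $\mathsf Q$ the sweep matrix. One checks the two facts needed. First, $\sum_k\delta(P^kf)$ accumulated at the sites $x_i$ within sweeps is dominated by $\delta(f)\sum_k\mathsf C^k=\delta(f)\mathsf D$. This is the Föllmer bookkeeping: each entry of the accumulated sum is a sum over chains $y\to x$ of products of $\mathsf C$ entries, with each chain appearing at most once, so the total is at most $\mathsf D_{yx}$. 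Second, $\|\delta(P^kf)\|_\infty$ decays geometrically, since the row sums are at most $\alpha_{\mathsf C}<1$. Hence the remainder $|\tilde\mu(P^kf)-\mu(P^kf)|\le\sum_y\delta_y(P^kf)\to0$, and summing yields $\sum_{x,y}\delta_y(f)\mathsf D_{yx}\bar b_x$.

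The main obstacle is the combinatorial domination of the accumulated sweep products by $\mathsf D=\sum\mathsf C^k$, which requires tracking that every path of influence is counted at most once. My first attempt would avoid sweeps altogether with a cleaner alternative: average over a uniformly random site. Define $\bar\gamma=|\Lambda|^{-1}\sum_x\gamma_x$. Then $\delta(\bar\gamma f)\le\delta(f)\,\bigl(I-|\Lambda|^{-1}(I-\mathsf C)\bigr)$ entrywise, and the one-step identity gives
\[
 |\tilde\mu(f)-\mu(f)|\le|\Lambda|^{-1}\sum_x\delta_x(f)\bar b_x+|\tilde\mu(\bar\gamma f)-\mu(\bar\gamma f)|.
\]
Iterating, $\sum_{k\ge0}|\Lambda|^{-1}\bigl(I-|\Lambda|^{-1}(I-\mathsf C)\bigr)^k=(I-\mathsf C)^{-1}=\mathsf D$ as a Neumann series, which converges because $\mathsf C$ has row sums below one. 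This produces exactly the stated bound and replaces the bookkeeping by a geometric series. The remainder vanishes since $\|\delta(\bar\gamma^kf)\|_\infty\le(1-|\Lambda|^{-1}(1-\alpha_{\mathsf C}))^k\|\delta(f)\|_\infty$.
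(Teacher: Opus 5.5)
Your averaged-kernel argument is correct, and it takes a genuinely different route from the paper's.

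\emph{The paper's route.} It runs a deterministic periodic sweep $g_k=\gamma_{x_k}g_{k-1}$. Alongside the oscillation majorant $\Delta^{(k)}$ it tracks the row vector $R^{(k)}$ of amounts removed at the updated sites. The bookkeeping comes from the exact identity $\Delta^{(k)}+R^{(k)}=\delta(f)+R^{(k)}\mathsf C$, which iterates to $R^{(k)}\le\delta(f)\mathsf D$. The remainder is then handled separately: $\delta(g_M)\le\Delta^{(M)}\to0$ along full sweeps, because the removed amounts are summable.

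\emph{Your route.} The random-scan kernel $\bar\gamma=|\Lambda|^{-1}\sum_x\gamma_x$ replaces all of this by a linear recursion with one fixed matrix. Subadditivity of $\delta_y$ and $\mathsf C_{xx}=0$ give $\delta(\bar\gamma g)\le\delta(g)\mathsf P$ with $\mathsf P=I-|\Lambda|^{-1}(I-\mathsf C)$. Because $\mathsf P$ has nonnegative entries, you may iterate the entrywise inequality. The accumulated one-step errors are then exactly $|\Lambda|^{-1}\sum_k\delta(f)\mathsf P^k\bar b=\delta(f)(I-\mathsf C)^{-1}\bar b$.

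\emph{Comparison.} Your version is shorter and needs no path counting. The paper's sweep is the classical F\"ollmer form with single-site updates, but it gains nothing essential for this finite-volume statement. Your first, sweep-based paragraph asserts the ``each chain at most once'' domination rather than proving it; the identity for $R^{(k)}$ above is precisely what proves it. You do not need that paragraph, since the averaged-kernel argument stands on its own.

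\emph{A statement to correct.} The bound $\|\delta(\bar\gamma^kf)\|_\infty\le(1-|\Lambda|^{-1}(1-\alpha_{\mathsf C}))^k\|\delta(f)\|_\infty$ does not follow from the row sums, and it can fail. For a row vector $u$, $\|u\mathsf P\|_\infty$ is governed by column sums of $\mathsf P$. A site that influences many others, such as the centre of a star whose leaves copy it, can build up a transient oscillation far larger than $\|\delta(f)\|_\infty$. What the row sums do give is $\|u\mathsf P\|_1\le(1-|\Lambda|^{-1}(1-\alpha_{\mathsf C}))\|u\|_1$. That is exactly what the remainder needs, because $|\tilde\mu(g)-\mu(g)|\le\sum_y\delta_y(g)$. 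Alternatively, the Neumann series you already have converges, which forces $\mathsf P^k\to0$ entrywise on the finite set $\Lambda$. The same $\ell^\infty$ versus $\ell^1$ slip appears in your sweep paragraph.
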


\begin{proof}
Enumerate $\Lambda=\{x_1,\ldots,x_N\}$ and sweep periodically: $g_0=f$ and
$g_k=\gamma_{x_k}g_{k-1}$, with indices modulo $N$. Then $\mu(g_k)=\mu(f)$, and by
$\tilde\gamma$-invariance
$\tilde\mu(g_{k-1})-\tilde\mu(g_k)=\tilde\mu(\tilde\gamma_{x_k}g_{k-1}-\gamma_{x_k}g_{k-1})$.
Since the two kernels have equal total mass,
$|(\tilde\gamma_x-\gamma_x)g(\omega)|\le\delta_x(g)|\tilde\gamma_x(1\mid\omega)
-\gamma_x(1\mid\omega)|$, and so
$|\tilde\mu(g_{k-1})-\tilde\mu(g_k)|\le\delta_{x_k}(g_{k-1})\bar b_{x_k}$.
Moreover $\delta_x(\gamma_xg)=0$ and
$\delta_y(\gamma_xg)\le\delta_y(g)+\mathsf C_{xy}\delta_x(g)$ for $y\ne x$.
Define vectors $\Delta^{(0)}=\delta(f)$ and, at step $k$ with $x=x_k$ and
$r_k=\Delta^{(k-1)}_x$, put $\Delta^{(k)}_x=0$ and
$\Delta^{(k)}_y=\Delta^{(k-1)}_y+\mathsf C_{xy}r_k$ for $y\ne x$. By induction
$\delta(g_k)\le\Delta^{(k)}$, and the row vector
$R^{(k)}=\sum_{i\le k}r_ie_{x_i}$ satisfies
$\Delta^{(k)}+R^{(k)}=\delta(f)+R^{(k)}\mathsf C$. Hence
$R^{(k)}\le\delta(f)+R^{(k)}\mathsf C$, and iterating, with
$\|u\mathsf C^J\|_1\le\alpha_{\mathsf C}^J\|u\|_1$, gives
$R^{(k)}\le\delta(f)\mathsf D$. Since $\delta_{x_k}(g_{k-1})\le r_k$, summing the
one-step bounds over $k\le M$ gives
\[
 |\tilde\mu(f)-\mu(f)|\le\sum_xR^{(M)}_x\bar b_x+\sum_x\delta_x(g_M),
\]
where the last term bounds $|\tilde\mu(g_M)-\mu(g_M)|$ by the oscillation of
$g_M$.
Between its own updates each $\Delta_x$ only increases, so at the end of a
sweep it is at most the amount removed at $x$ during the next sweep. These
amounts are summable, since $\sum_kr_k\le\|\delta(f)\mathsf D\|_1$, so
$\delta(g_M)\le\Delta^{(M)}\to0$ along $M=mN$. Together with
$R^{(M)}\le\delta(f)\mathsf D$ this proves the claim.
\end{proof}

Let $\mathrm{Adj}_L$ be the adjacency matrix of $\mathbb T_L$ and
$D^{\mathbb T}=\sum_{k\ge0}(t'\mathrm{Adj}_L)^k$, which is finite because
$\alpha_{\rm D}<1$; $D^{\mathbb T}_{xy}$ is the $t'$-weighted number of walks from $x$ to
$y$. We use the FKG inequality \cite{FKG1971} for $\mu_L$ in the following
forms: increasing events are positively correlated; conditioning on an
increasing event of positive probability raises the expectation of increasing
functions and lowers that of decreasing ones; and $\mu_L$ conditioned on
$\{\Lambda\subseteq+\}$ is again a ferromagnetic Ising law, with plus boundary
spins on $\Lambda$.

\begin{lemma}[Ising pair ratio]\label{lem:ov-pair-ratio}
Let $L\ge3$, $K'\ge0$, $h\in\mathbb R$ and $\alpha_{\rm D}<1$, and put
$\kappa_L=(1-\rho_L)/\rho_L$. For pairwise disjoint
$\Lambda_0,\Lambda_1,\Lambda_2\subseteq V$,
\[
 \frac{\mu_L(\Lambda_0\cup\Lambda_1\cup\Lambda_2\subseteq+)}
 {\mu_L(\Lambda_0\cup\Lambda_1\subseteq+)\,\mu_L(\Lambda_0\cup\Lambda_2\subseteq+)}
 \ \le\ \frac1{\mu_L(\Lambda_0\subseteq+)}\exp\Bigl(\kappa_L
 \sum_{x\in\Lambda_2,\,y\in\Lambda_1}D^{\mathbb T}_{xy}\Bigr),
\]
and $\mu_L(\Lambda_0\subseteq+)\ge\rho_L^{|\Lambda_0|}$. In particular, with
$\Lambda_0=V_\gamma\cap V_{\gamma'}$, $\Lambda_1=V_\gamma\setminus V_{\gamma'}$
and $\Lambda_2=V_{\gamma'}\setminus V_\gamma$,
\[
 \mathcal R_L(V_\gamma,V_{\gamma'})\ \le\
 \frac1{\mu_L(V_\gamma\cap V_{\gamma'}\subseteq+)}
 \exp\Bigl(\kappa_L\sum_{x\in V_{\gamma'}\setminus V_\gamma,\,
 y\in V_\gamma\setminus V_{\gamma'}}D^{\mathbb T}_{xy}\Bigr),
\]
and $\kappa_L$ is nonincreasing in $\rho_L$.
\end{lemma}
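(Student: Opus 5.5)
We first reduce to conditioning on the common set, and then compare two Ising laws by Dobrushin. Write $\mu^0=\mu_L(\,\cdot\mid\Lambda_0\subseteq+)$, which is well defined since $\mu_L$ is strictly positive. Dividing numerator and denominator by $\mu_L(\Lambda_0\subseteq+)^2$ gives
\[
 \frac{\mu_L(\Lambda_0\cup\Lambda_1\cup\Lambda_2\subseteq+)}
 {\mu_L(\Lambda_0\cup\Lambda_1\subseteq+)\,\mu_L(\Lambda_0\cup\Lambda_2\subseteq+)}
 =\frac1{\mu_L(\Lambda_0\subseteq+)}\cdot
 \frac{\mu^0(\Lambda_1\cup\Lambda_2\subseteq+)}{\mu^0(\Lambda_1\subseteq+)\,\mu^0(\Lambda_2\subseteq+)}.
\]
We therefore need
\[
 \mu^1(\Lambda_2\subseteq+)\le\mu^0(\Lambda_2\subseteq+)\exp\Bigl(\kappa_L\sum_{x\in\Lambda_2,y\in\Lambda_1}D^{\mathbb T}_{xy}\Bigr),
\]
where $\mu^1=\mu^0(\,\cdot\mid\Lambda_1\subseteq+)$.

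The bound $\mu_L(\Lambda_0\subseteq+)\ge\rho_L^{|\Lambda_0|}$ is immediate from FKG. The events $\{X_x=1\}$ are increasing, so their joint probability is at least the product of the marginals, and each marginal equals $\rho_L$ by translation invariance. The monotonicity of $\kappa_L$ in $\rho_L$ is trivial.

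For the main ratio, telescope over $\Lambda_2=\{x_1,\ldots,x_k\}$. Writing $A_j=\{x_1,\ldots,x_{j}\}\subseteq+$, we have
\[
 \frac{\mu^1(\Lambda_2\subseteq+)}{\mu^0(\Lambda_2\subseteq+)}=\prod_j\frac{\mu^1(X_{x_j}=1\mid A_{j-1})}{\mu^0(X_{x_j}=1\mid A_{j-1})}.
\]
Both conditioned laws are ferromagnetic Ising laws with plus boundary spins on $\Lambda_0\cup\{x_1,\ldots,x_{j-1}\}$; the first additionally has plus spins on $\Lambda_1$. Apply Lemma~\ref{lem:ov-dobrushin} on the free set $\Lambda'=V\setminus(\Lambda_0\cup A_{j-1}\cup\Lambda_1)$ with $f=\mathbf1\{X_{x_j}=1\}$, where $\tilde\mu$ is the law with $\Lambda_1$ plus and $\mu$ is the law without. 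To place both on the same space, keep the sites of $\Lambda_1$ free in $\mu$, so that $\Lambda'$ includes $\Lambda_1$ for $\mu$ and $\tilde\mu$ freezes them.

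The Ising heat-bath kernels have Dobrushin coefficients $\mathsf C_{xy}\le\tanh K'\,\mathrm{Adj}_{xy}=t'\mathrm{Adj}_{xy}$, since $|\partial_S\frac{e^{2h+2K'S}}{1+e^{2h+2K'S}}|$ integrated over a unit change of one neighbor gives $\tanh K'$ by the standard bound. Hence $\mathsf D\le D^{\mathbb T}$ entrywise. The disagreement $\bar b_y$ is supported on $y\in\Lambda_1$, where $\tilde\gamma_y=\delta_{+1}$, and there equals $\tilde\mu(X_y=-1)$ integrated under $\mu$ of the difference. It is bounded by $\mu(X_y=-1)\le1-\rho_L$ by FKG, because conditioning on plus spins only raises plus probabilities. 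With $\delta_{x_j}(f)=1$ this yields an additive bound
\[
 \mu^1(X_{x_j}=1\mid A_{j-1})-\mu^0(X_{x_j}=1\mid A_{j-1})\le(1-\rho_L)\sum_{y\in\Lambda_1}D^{\mathbb T}_{x_jy}.
\]
Dividing by $\mu^0(X_{x_j}=1\mid A_{j-1})\ge\rho_L$, again by FKG, each factor is at most $1+\kappa_L\sum_yD^{\mathbb T}_{x_jy}\le\exp(\kappa_L\sum_yD^{\mathbb T}_{x_jy})$. Multiplying the factors over $j$ gives the claim. The "in particular" statement is the specialization $\Lambda_0=V_\gamma\cap V_{\gamma'}$.

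The main obstacle is applying the Dobrushin lemma correctly: the unfrozen $\Lambda_1$ sites of $\mu$ must be set against the frozen kernels $\tilde\gamma_y$, so that $\bar b$ vanishes off $\Lambda_1$. The bound $\bar b_y\le1-\rho_L$ also has to be justified with the right measure, since Lemma~\ref{lem:ov-dobrushin} integrates against $\tilde\mu$. If that integration is awkward, swap the roles of $\mu$ and $\tilde\mu$ so that the integral is taken under the less conditioned law, where FKG gives $\mu(X_y=-1)\le1-\rho_L$ directly. The need for an upper bound on $\mathsf D$ by $D^{\mathbb T}$ also requires the row sums to be below $\alpha_{\rm D}<1$ after frozen sites are deleted, which only decreases them.
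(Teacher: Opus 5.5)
Your proposal is correct and follows the paper's proof essentially step by step. You factor out $1/\mu_L(\Lambda_0\subseteq+)$, telescope over an ordering of $\Lambda_2$, and apply Lemma~\ref{lem:ov-dobrushin} on $V\setminus(\Lambda_0\cup\{x_1,\dots,x_{j-1}\})$, with $\Lambda_1$ kept free as you eventually settle on and $\tilde\gamma_y=\delta_1$ on $\Lambda_1$; you then divide by the FKG lower bound $\rho_L$. The measure issue you flag is harmless: $\bar b_y=\tilde\mu[\gamma_y(-1\mid\cdot)]\le\mu_L[\gamma_y(-1\mid\cdot)]=1-\rho_L$ by FKG and the DLR equation, exactly as in the paper, so no swap of $\mu$ and $\tilde\mu$ is needed, although the swap would also work.
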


\begin{proof}
Write $P(\Lambda)=\mu_L(\Lambda\subseteq+)$. The ratio equals
$\mu_L(\Lambda_2\subseteq+\mid\Lambda_0\cup\Lambda_1\subseteq+)/
[P(\Lambda_0)\,\mu_L(\Lambda_2\subseteq+\mid\Lambda_0\subseteq+)]$, and FKG gives
$P(\Lambda_0)\ge\rho_L^{|\Lambda_0|}$. Order $\Lambda_2=\{c_1,\ldots,c_k\}$ and
let $C_{<j}=\{c_i:i<j\}$. The remaining quotient is
$\prod_j\mu_1^{(j)}(X_{c_j}=1)/\mu_2^{(j)}(X_{c_j}=1)$, with
$\mu_1^{(j)}=\mu_L(\cdot\mid\Lambda_0\cup\Lambda_1\cup C_{<j}\subseteq+)$ and
$\mu_2^{(j)}=\mu_L(\cdot\mid\Lambda_0\cup C_{<j}\subseteq+)$. Fix $j$ and apply
Lemma~\ref{lem:ov-dobrushin} on $\Lambda'=V\setminus(\Lambda_0\cup C_{<j})$ with
$\mu=\mu_2^{(j)}$, its heat-bath kernels $\gamma_x$, $\tilde\mu=\mu_1^{(j)}$,
$\tilde\gamma_x=\gamma_x$ for $x\notin\Lambda_1$ and $\tilde\gamma_x=\delta_1$
for $x\in\Lambda_1$. Both invariances hold: $\mu_2^{(j)}$ is an Ising law on
$\Lambda'$ with plus boundary spins, and $\mu_1^{(j)}$ is $\mu_2^{(j)}$
conditioned on $\{\Lambda_1\subseteq+\}$, so it is almost surely plus on
$\Lambda_1$ and has the single-site conditional laws $\gamma_x$ off $\Lambda_1$.
Here $\gamma_x(1\mid\omega)=\frac12\bigl(1+\tanh(h+K'S_x(\omega))\bigr)$, with
$S_x(\omega)$ the sum of the neighboring spins, including boundary spins.
Changing one neighbor moves $h+K'S_x$ by $2K'$, and
$\sup_u\frac12[\tanh(u+K')-\tanh(u-K')]=\tanh K'$, so
$\mathsf C_{xy}\le t'\mathbf1\{x\sim y\}$, $\alpha_{\mathsf C}\le\alpha_{\rm D}$
and $\mathsf D\le D^{\mathbb T}$ entrywise. Also $\bar b_x=0$ for $x\notin\Lambda_1$,
while for $y\in\Lambda_1$, by FKG and the DLR equation,
$\bar b_y=\mu_1^{(j)}[\gamma_y(-1\mid\cdot)]\le\mu_L[\gamma_y(-1\mid\cdot)]=1-\rho_L$.
With $f=\mathbf1\{X_{c_j}=1\}$ this gives
$\mu_1^{(j)}(X_{c_j}=1)\le\mu_2^{(j)}(X_{c_j}=1)+(1-\rho_L)\sum_{y\in\Lambda_1}
D^{\mathbb T}_{c_jy}$. Since $\mu_2^{(j)}(X_{c_j}=1)\ge\rho_L$ by FKG, each factor is at
most $1+\kappa_L\sum_{y\in\Lambda_1}D^{\mathbb T}_{c_jy}\le\exp(\kappa_L\sum_{y\in\Lambda_1}
D^{\mathbb T}_{c_jy})$.
\end{proof}

By Lemma~\ref{lem:ov-meanfield}, $\kappa_L\le\kappa$ and
$\rho_L\ge\rho_-$ uniformly in $L\ge3$. The density enters in three places, and
each time through the unconditioned marginal $\rho_L$:
$P(\Lambda_0)\ge\rho_L^{|\Lambda_0|}$, $\mu_2^{(j)}(X_{c_j}=1)\ge\rho_L$, and
$\bar b_y\le1-\rho_L$.

\paragraph{Walk sums and shared edges.}
For $0<t_0<1/(2d)$ and $z\in\Z^d$ let
$\mathcal D_{t_0}(z)=\sum_{k\ge0}t_0^k\,n_k(z)$, where $n_k(z)$ is the number
of nearest-neighbor walks of length $k$ from $0$ to $z$ in $\Z^d$.

\begin{lemma}[Walk-sum bounds]\label{lem:ov-resolvent}
Let $r=|z|_1$ and $2dt_0<1$.
\begin{enumerate}\renewcommand{\labelenumi}{(\roman{enumi})}
\item $D^{\mathbb T}_{xy}=\sum_{w\in\Z^d}\mathcal D_{t'}(\tilde y-\tilde x+Lw)$ for any
lifts $\tilde x,\tilde y$ of $x,y$.
\item $\mathcal D_{t_0}(z)\le t_0^r\,\dfrac{r!}{\prod_i|z_i|!}\,\Gamma_r(t_0)$, where
$\Gamma_r(t_0)=2^{-d}\sum_{j=0}^d\binom dj\bigl(1-2t_0(d-2j)\bigr)^{-(r+1)}$.
\item $\sum_{|u|_1\ge R}\mathcal D_{t_0}(u)\le(2dt_0)^R/(1-2dt_0)$.
\item Put $\mathcal B(m)=t_0^m\,\mathfrak M_m\,\Gamma_m(t_0)$, where
$\mathfrak M_m=m!/((\ell+1)!^{j}\,\ell!^{\,d-j})$ for $m=\ell d+j$, $0\le j<d$, is
the largest multinomial coefficient with $d$ parts. Then
$\sup_{|u|_1=m}\mathcal D_{t_0}(u)\le\mathcal B(m)$, and if
$\lambda_*:=dt_0/(1-2dt_0)<1$, then $\mathcal B(m+1)\le\lambda_*\mathcal B(m)$.
\end{enumerate}
\end{lemma}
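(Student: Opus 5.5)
Part (i) is a lifting argument and parts (iii)--(iv) are short; the real work is the combinatorial bound (ii).

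\emph{Part (i).} I would lift walks. A nearest-neighbor walk of length $k$ on $\mathbb T_L$ from $x$ to $y$ lifts uniquely, once the starting lift $\tilde x$ is fixed, to a walk in $\Z^d$ from $\tilde x$ to some lift $\tilde y+Lw$. Conversely, every such $\Z^d$ walk projects to a torus walk; for $L\ge3$ the $2d$ torus neighbors are distinct, so projection is a bijection step by step. Weighting by $t'^k$ and summing over $k$ and $w$ gives the identity, and absolute convergence is ensured by $\sum_k(2dt')^k<\infty$.

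\emph{Part (iii).} The walks of length $k$ from $0$ are $(2d)^k$ in number. Hence $\sum_{|u|_1\ge R}\mathcal D_{t_0}(u)\le\sum_{k\ge R}(2dt_0)^k$, since a walk ending at distance at least $R$ has length at least $R$.

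\emph{Part (iv).} First, $r!/\prod_i|u_i|!\le\mathfrak M_m$, because the balanced multinomial is the largest. With (ii) this gives the supremum bound. For the ratio I would write $\mathcal B(m+1)/\mathcal B(m)=t_0\,(\mathfrak M_{m+1}/\mathfrak M_m)\,(\Gamma_{m+1}/\Gamma_m)$.
\begin{itemize}
\item Passing from $m=\ell d+j$ to $m+1$ increases one part from $\ell$ to $\ell+1$. This gives $\mathfrak M_{m+1}/\mathfrak M_m=(m+1)/(\ell+1)\le d$, because $m+1\le(\ell+1)d$.
\item Each summand of $\Gamma$ gains a factor $(1-2t_0(d-2j))^{-1}\le(1-2dt_0)^{-1}$, so $\Gamma_{m+1}\le\Gamma_m/(1-2dt_0)$.
\end{itemize}
The product is at most $\lambda_*$.

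\emph{Part (ii) setup.} By the symmetry $z_i\mapsto-z_i$ I may take $z\ge0$. A walk to $z$ has length $k=r+m'$ with $m'$ even, and in each coordinate $i$ it makes $z_i+2a_i$ steps. I would encode each walk by the following data:
\begin{itemize}
\item the set of $r$ time slots carrying the ``net'' steps, at most $\binom{r+m'}{r}$ choices;
\item the assignment of coordinates to these net slots (all with positive sign), at most $r!/\prod_i z_i!$ choices;
\item for the remaining $m'$ slots, a coordinate and a sign, where each coordinate is used an even number of times.
\end{itemize}
The map from data to walks is surjective. For instance, in each coordinate designate the first $z_i$ positive steps that are not cancelled as the net steps, or more simply observe that any walk admits at least one such decomposition. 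This gives an upper bound with no injectivity needed.

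\emph{Part (ii) counting.} The number of extra-step patterns is $2^{m'}\cdot\#\{$words in $[d]^{m'}$ with every letter used evenly$\}$. This count equals $2^{m'}\,\E_\varepsilon\bigl(\sum_i\varepsilon_i\bigr)^{m'}$ for independent fair signs $\varepsilon_i$, and odd $m'$ contribute $0$ automatically. Summing $t_0^{r+m'}$ over $m'\ge0$ and using $\sum_{m'}\binom{r+m'}{r}u^{m'}=(1-u)^{-(r+1)}$ with $u=2t_0\sum_i\varepsilon_i$ gives exactly $t_0^r\,\frac{r!}{\prod_i z_i!}\,\E_\varepsilon\bigl(1-2t_0\textstyle\sum_i\varepsilon_i\bigr)^{-(r+1)}$. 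Since $\sum_i\varepsilon_i=d-2j$ has the binomial law, this is $\Gamma_r(t_0)$, and $|u|<1$ because $2dt_0<1$. Exchanging the expectation and the series is legitimate, because the terms with nonnegative parity counts are exactly the ones kept.

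I expect the main obstacle to be making this covering argument airtight. One must check that every walk is produced by some choice of net slots and extra steps: in each coordinate the total signed displacement is $z_i$, so after removing $z_i$ positive steps the remaining steps in that coordinate have zero sum and even count. One must also check that the weights match, with no stray factor when the net slots interleave with the extra ones.
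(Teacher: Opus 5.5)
Your proof is correct. For (i), (iii) and (iv) it is essentially the paper's argument. In (iv) you compute the exact ratio $\mathfrak M_{m+1}/\mathfrak M_m=(m+1)/(\ell+1)\le d$, while the paper bounds it by $\min(m+1,d)$ by removing one step from a maximizing composition.

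For (ii) you take a genuinely different route.

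\emph{The paper's route.} It uses the exponential generating function $\sum_k n_k(z)u^k/k!=\prod_i I_{|z_i|}(2u)$. It then writes $\mathcal D_{t_0}(z)=\int_0^\infty e^{-s}\prod_i I_{|z_i|}(2t_0s)\,ds$ via $k!=\int_0^\infty s^ke^{-s}\,ds$. Finally it applies the Bessel inequalities $I_\nu(x)\le(x/2)^\nu I_0(x)/\nu!$ and $I_0(x)\le\cosh x$, so $\Gamma_r$ comes from integrating $s^r\cosh(2t_0s)^de^{-s}$.

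\emph{How your covering relates to it.} Your argument is the ordinary-generating-function version of the same estimate, and it gives exactly the same constant. Your count of data of length $k=r+m'$ is
$\binom{k}{r}\frac{r!}{\prod_i z_i!}\,2^{m'}\,\E\bigl(\sum_i\varepsilon_i\bigr)^{m'}$.
This equals $k!\,[u^k]\,\frac{u^r}{\prod_i z_i!}\cosh(2u)^d$, which is precisely the coefficient the paper obtains after its two Bessel inequalities. Your two relaxations match those inequalities coefficient by coefficient:
\begin{itemize}
\item choosing the net slots separately from the cancelling pairs is $\binom{\nu+2a}{a}\le\binom{\nu+2a}{\nu}\binom{2a}{a}$, which is the first inequality;
\item giving the extra steps free signs subject only to even counts is $\binom{2a}{a}\le 4^a$, which is $I_0\le\cosh$.
\end{itemize}
The paper's route is shorter if one takes the facts about $I_\nu$ as known. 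Yours is fully elementary: it needs only the negative binomial series and the identity counting even words in $[d]^{m'}$ as $\E(\sum_i\varepsilon_i)^{m'}$. It also makes visible exactly where the slack enters.

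Two small points of wording.
\begin{itemize}
\item Your data map does not land only on walks ending at $z$, since free signs on the extra steps can move the endpoint. What you actually use is that every walk to $z$ has at least one preimage, and you do prove this: any choice of $z_i$ positive steps in coordinate $i$ works.
\item The interchange of $\E_\varepsilon$ with the series needs no parity argument. $\E_\varepsilon$ is a finite average, and each series converges absolutely because $|2t_0\sum_i\varepsilon_i|\le 2dt_0<1$.
\end{itemize}
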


\begin{proof}
(i) Torus walks from $x$ lift bijectively to walks from $\tilde x$. (ii) In one
dimension $\sum_kn^{(1)}_k(\ell)u^k/k!=I_{|\ell|}(2u)$, a modified Bessel
function, and a walk in $\Z^d$ is a shuffle of one-dimensional walks, so
$\sum_kn_k(z)u^k/k!=\prod_iI_{|z_i|}(2u)$. With $k!=\int_0^\infty s^ke^{-s}ds$,
$\mathcal D_{t_0}(z)=\int_0^\infty e^{-s}\prod_iI_{|z_i|}(2t_0s)\,ds$. Termwise,
$I_\nu(x)\le(x/2)^\nu I_0(x)/\nu!$ and $I_0(x)\le\cosh x$, so the integrand is at
most $e^{-s}(t_0s)^r\cosh(2t_0s)^d/\prod_i|z_i|!$, whose integral is the stated
bound. (iii) Walks shorter than $R$ cannot reach $|u|_1\ge R$, and there are
$(2d)^k$ walks of length $k$. (iv) The first claim follows from (ii). Removing
one step from a maximizing composition gives
$\mathfrak M_{m+1}\le\min(m+1,d)\,\mathfrak M_m$, and
$\Gamma_{m+1}(t_0)\le\Gamma_m(t_0)/(1-2dt_0)$.
\end{proof}

\begin{lemma}[Correlation gain along shared edges]\label{lem:ov-shared}
Let $L\ge3$, $K'\ge0$ and $h<0$. Put $M=|h|+(2d-1)K'$, $M'=(2d-1)K'$ and
\[
 c_{\rm cov}=\frac{2\sinh(2K')}{\bigl(e^{K'}\cosh(2M)+e^{-K'}\cosh(2M')\bigr)^2}.
\]
Suppose $\rho_-\le\rho_L$ and $\rho_-^2\ge c_{\rm cov}/4$, and put
$\rho_c=\rho_-+c_{\rm cov}/(4\rho_-)$. Then $\mu_L(X_y=1\mid X_x=1)\ge\rho_c$
for adjacent $x,y$. Consequently, if $\Lambda\subseteq V$ and $E_\Lambda$ is a
set of edges of $\mathbb T_L$ with endpoints in $\Lambda$ such that $(\Lambda,E_\Lambda)$
is a disjoint union of paths, then
$\mu_L(\Lambda\subseteq+)\ge\rho_-^{|\Lambda|-|E_\Lambda|}\rho_c^{|E_\Lambda|}$.
\end{lemma}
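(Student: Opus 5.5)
The plan is to reduce the first claim to a lower bound on the covariance of the indicators of two adjacent plus spins, to prove that bound by conditioning on the exterior of the edge, and then to chain the edge bound along paths with FKG.

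\emph{Reduction to a covariance.} Write $\mu=\mu_L$ and fix adjacent $x,y$. By translation invariance $\mu(X_x=1)=\mu(X_y=1)=\rho_L$, so
\[
\mu(X_y=1\mid X_x=1)=\rho_L+\frac{\operatorname{Cov}_\mu(\mathbf 1\{X_x=1\},\mathbf 1\{X_y=1\})}{\rho_L}.
\]
If the covariance is at least $c_{\rm cov}/4$, the right side is at least $\phi(\rho_L)$ with $\phi(\rho)=\rho+c_{\rm cov}/(4\rho)$. The derivative is $\phi'(\rho)=1-c_{\rm cov}/(4\rho^2)$, so $\phi$ is nondecreasing on $\{\rho^2\ge c_{\rm cov}/4\}$. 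Since $\rho_L\ge\rho_-$ and $\rho_-^2\ge c_{\rm cov}/4$, this gives $\phi(\rho_L)\ge\phi(\rho_-)=\rho_c$. This monotonicity step is the only place the hypothesis $\rho_-^2\ge c_{\rm cov}/4$ is used.

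\emph{Covariance bound by conditioning on the exterior.} Let $\mathcal H$ be the $\sigma$-field of the spins off $\{x,y\}$. The law of total covariance gives
\[
\operatorname{Cov}=\E[\operatorname{Cov}(\cdot\mid\mathcal H)]+\operatorname{Cov}\bigl(\mu(X_x=1\mid\mathcal H),\mu(X_y=1\mid\mathcal H)\bigr).
\]
Both conditional means are nondecreasing functions of the exterior spins, and the marginal of $\mu$ on those spins is FKG as a marginal of a ferromagnetic Ising law. Hence the second term is nonnegative.

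Given $\mathcal H$, the pair $(X_x,X_y)$ is a two-site Ising model with coupling $K'$ and effective fields $h_x=h+K'S'_x$ and $h_y=h+K'S'_y$. Here $S'_x$ and $S'_y$ are sums over the $2d-1$ other neighbors; on the torus with $L\ge3$ the two sites have no common neighbor, so these are distinct exterior spins. For two sites one computes directly
\[
P(++)P(--)-P(+-)P(-+)=\frac{e^{2K'}-e^{-2K'}}{Z^2},\qquad
Z=2e^{K'}\cosh(h_x+h_y)+2e^{-K'}\cosh(h_x-h_y),
\]
and this difference equals the covariance of the two plus-indicators. From $|h_x+h_y|\le2(|h|+M')=2M$ and $|h_x-h_y|\le2M'$, the denominator satisfies $Z\le2(e^{K'}\cosh 2M+e^{-K'}\cosh2M')$. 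Therefore the conditional covariance is at least $2\sinh(2K')/(4(\cdots)^2)=c_{\rm cov}/4$ uniformly in the exterior spins, which proves the covariance bound.

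\emph{Product bound along paths.} Enumerate $\Lambda$ component by component, in path order within each component; a vertex of $\Lambda$ meeting no edge of $E_\Lambda$ is a path of length zero. Factor $\mu(\Lambda\subseteq+)$ by the chain rule, with each factor of the form $\mu(X_v=1\mid A\subseteq+)$ for the set $A$ of earlier vertices. If $v$ starts a component, FKG (conditioning on the increasing event $\{A\subseteq+\}$) gives a factor at least $\rho_L\ge\rho_-$. Otherwise $v$ follows its path predecessor $w\in A$. Then $\mu(\cdot\mid X_w=1)$ is again a ferromagnetic Ising law with a plus boundary spin at $w$, so further conditioning on the increasing event $\{A\setminus w\subseteq+\}$ only raises the probability. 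Hence the factor is at least $\mu(X_v=1\mid X_w=1)\ge\rho_c$. Each component contributes exactly one starting vertex, and there are $|\Lambda|-|E_\Lambda|$ components, which yields $\rho_-^{|\Lambda|-|E_\Lambda|}\rho_c^{|E_\Lambda|}$.

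I expect the main obstacle to be the explicit two-site covariance computation together with the uniform control of the effective fields. The remaining steps are the monotonicity of $\phi$ and routine applications of FKG.
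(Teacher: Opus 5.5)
Your proposal is correct and follows essentially the same route as the paper: a uniform lower bound on the conditional two-site covariance, the law of total covariance with FKG, monotonicity of $r\mapsto r+c_{\rm cov}/(4r)$, and FKG chaining along path segments. The only difference is cosmetic: you work with indicator covariances, which are exactly one quarter of the spin covariances the paper uses. One harmless slip: for $L=3$ adjacent sites do share a neighbor ($x-e_1=y+e_1$ when $y=x+e_1$), but your field bounds only use $|S'_x|,|S'_y|\le 2d-1$, so the argument is unaffected.
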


\begin{proof}
Given the spins $\xi$ off $\{x,y\}$, the pair $(X_x,X_y)$ is a two-spin
Ising system with coupling $K'$ and fields
$b_x=h+K'\sum_{z\in N(x)\setminus y}X_z$ and $b_y=h+K'\sum_{z\in N(y)\setminus x}X_z$.
Differentiating the logarithm of its partition function gives
$\mathrm{Cov}(X_x,X_y\mid\xi)=2\sinh(2K')/(e^{K'}\cosh(b_x+b_y)
+e^{-K'}\cosh(b_x-b_y))^2\ge c_{\rm cov}$, since $|b_x+b_y|\le2M$ and
$|b_x-b_y|\le2M'$. By the law of total covariance and FKG applied to the two
increasing conditional means, $\mathrm{Cov}(X_x,X_y)\ge c_{\rm cov}$. Hence
$\mu_L(X_y=1\mid X_x=1)=\rho_L+\mathrm{Cov}(X_x,X_y)/(4\rho_L)\ge\rho_c$,
because $r\mapsto r+c_{\rm cov}/(4r)$ is increasing for $r\ge\sqrt{c_{\rm cov}}/2$.
Along a path segment $x_0,\ldots,x_\ell$, FKG for
$\mu_L(\cdot\mid X_{x_i}=1)$ gives $\mu_L(X_{x_{i+1}}=1\mid X_{x_0}=\cdots=X_{x_i}=1)
\ge\rho_c$, so the segment is plus with probability at least $\rho_-\rho_c^\ell$.
FKG across the $|\Lambda|-|E_\Lambda|$ segments completes the proof.
\end{proof}

The proof uses $c_{\rm cov}$ only as a lower bound on
$\mathrm{Cov}(X_x,X_y)$ for adjacent $x,y$. The lemma therefore holds with
$c_{\rm cov}$ replaced, in the hypothesis and in $\rho_c$, by any $c>0$ that
bounds these covariances from below; the certificates use a rational lower
bound for $c_{\rm cov}$. The next lemma gives a larger admissible $c$ when
$\alpha_{\rm D}$ is small, by bounding the correlations among the neighbors
of the edge.

\begin{lemma}[A sharper covariance bound along an edge]\label{lem:ov-covariance}
Assume \eqref{eq:ov-H}, $L\ge40$ and $\alpha_{\rm D}\le0.35$. Let $m_L\ge0$
be the magnetization of Lemma~\ref{lem:ov-meanfield}, so that
$\mu_L(X_x=1)=(1-m_L)/2$, let $\bar m\ge m_L$ and $\bar t\ge t'$ with
$2d\bar t<1$, and for $r=1,2,3$ put
$\delta_r=\mathsf N_r\bar t^{\,r}\Gamma_r(\bar t)+10^{-15}$, with
$(\mathsf N_1,\mathsf N_2,\mathsf N_3)=(1,2,6)$ and $\Gamma_r$ as in
Lemma~\ref{lem:ov-resolvent}(ii).
\begin{enumerate}\renewcommand{\labelenumi}{(\roman{enumi})}
\item For $x\ne y$, $0\le\mathrm{Cov}_{\mu_L}(X_x,X_y)\le D^{\mathbb T}_{xy}$.
\item For adjacent $x,y$, $\mathrm{Cov}_{\mu_L}(X_x,X_y)\ge c'_{\rm cov}$, where,
with $M$ and $M'$ as in Lemma~\ref{lem:ov-shared},
\[
 c'_{\rm cov}=\frac{2\sinh(2K')}{\bigl(e^{K'}(1+E_+\,\mathsf c(2M))
 +e^{-K'}(1+E_-\,\mathsf c(2M'))\bigr)^2},\qquad
 \mathsf c(U)=\frac{\cosh U-1}{U^2},
\]
$E_+=4h^2+4|h|K'(4d-2)\bar m+K'^2E_T$, $E_-=2K'^2E_1$,
$E_1=(2d-1)+(2d-1)(2d-2)(\bar m^2+\delta_2)$ and
$E_T=2E_1+2\bigl[(2d-2)(\bar m^2+\delta_1)+((2d-1)^2-(2d-2))(\bar m^2+\delta_3)\bigr]$.
\item Lemma~\ref{lem:ov-shared} holds with $\rho'_c=\rho_-+c'_{\rm cov}/(4\rho_-)$
in place of $\rho_c$, provided $\rho_-^2\ge c'_{\rm cov}/4$.
\end{enumerate}
The bound $c'_{\rm cov}$ only decreases if $K'$ is replaced by a lower bound in
$\sinh(2K')$ and in $e^{-K'}$ and by an upper bound elsewhere, and $|h|$ by an
upper bound, so rational enclosures may be used.
\end{lemma}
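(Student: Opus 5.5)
The plan is to prove the three parts in order, with (ii) carrying most of the work. Everything reduces to the two-spin conditional covariance formula from the proof of Lemma~\ref{lem:ov-shared}, plus Dobrushin-type control of the Ising correlations of $\mu_L$.

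\emph{Part (i).} Nonnegativity comes from FKG, or from the second Griffiths inequality after a global flip. For the upper bound, write
\[
\mathrm{Cov}(X_x,X_y)=4\rho_L\bigl[\mu_L(X_x=1\mid X_y=1)-\rho_L\bigr].
\]
I would then apply Lemma~\ref{lem:ov-dobrushin} exactly as in the proof of Lemma~\ref{lem:ov-pair-ratio}, with $\Lambda_1=\{y\}$, $\tilde\gamma_y=\delta_1$ and $f=\mathbf 1\{X_x=1\}$. This gives $\bar b_y\le1-\rho_L$ and $\mathsf D\le D^{\mathbb T}$. Hence the covariance is at most $4\rho_L(1-\rho_L)D^{\mathbb T}_{xy}\le D^{\mathbb T}_{xy}$.

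\emph{Part (ii).} By the law of total covariance and FKG applied to the two increasing conditional means, $\mathrm{Cov}(X_x,X_y)\ge\E[\mathrm{Cov}(X_x,X_y\mid\xi)]$. The conditional covariance is $2\sinh(2K')/\mathcal Q^2$, where
\[
\mathcal Q=e^{K'}\cosh(b_x+b_y)+e^{-K'}\cosh(b_x-b_y).
\]
Since $u\mapsto u^{-2}$ is convex, Jensen gives $\E[\mathcal Q^{-2}]\ge(\E\mathcal Q)^{-2}$. Next, $(\cosh u-1)/u^2$ is even and increasing in $|u|$, and $|b_x+b_y|\le2M$, $|b_x-b_y|\le2M'$. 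Therefore
\[
\E\cosh(b_x+b_y)\le1+\mathsf c(2M)\,\E(b_x+b_y)^2,\qquad \E\cosh(b_x-b_y)\le1+\mathsf c(2M')\,\E(b_x-b_y)^2.
\]
It then suffices to show $\E(b_x+b_y)^2\le E_+$ and $\E(b_x-b_y)^2\le E_-$.

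Write $b_x+b_y=2h+K'T$, where $T$ is the sum of the $4d-2$ neighbor spins off the edge. Expanding,
\[
\E(b_x+b_y)^2=4h^2+4|h|K'(4d-2)m_L+K'^2\,\E T^2,
\]
using $h<0$ and $\E X_z=-m_L$. Each $\E X_zX_w$ equals $m_L^2+\mathrm{Cov}(X_z,X_w)$, which is at most $\bar m^2+D^{\mathbb T}_{zw}$ by (i). I would classify the pairs by $\ell^1$ distance $1,2,3$ and count them; this should reproduce $E_T$, and similarly $E_1$ for $T_x=\sum_{z\in N(x)\setminus y}X_z$. For the difference, $b_x-b_y=K'(T_x-T_y)$, so $(T_x-T_y)^2\le2T_x^2+2T_y^2$ gives $E_-=2K'^2E_1$. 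The diagonal terms supply the $(2d-1)$ summands.

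\emph{Main obstacle.} The hard step is bounding $D^{\mathbb T}_{zw}\le\delta_r$ for torus distance $r\in\{1,2,3\}$. By Lemma~\ref{lem:ov-resolvent}(i) I would split $D^{\mathbb T}$ into the contribution of the nearest lift and the wraparound lifts.
\begin{itemize}
\item The nearest lift is bounded by (ii) of that lemma, using $\mathcal D_{t'}\le\mathcal D_{\bar t}$. The maximal multinomial $r!/\prod|z_i|!$ is $1,2,6$ for $r=1,2,3$, which is $\mathsf N_r$.
\item The wraparound lifts all have $\ell^1$ norm at least $L-3\ge37$. Part (iii) bounds them by $\alpha_{\rm D}^{37}/(1-\alpha_{\rm D})\le 0.35^{37}/0.65<10^{-15}$. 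This is why the hypotheses $L\ge40$ and $\alpha_{\rm D}\le0.35$ appear, and why the additive $10^{-15}$ is there.
\end{itemize}

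\emph{Part (iii) and the final remark.} Part (iii) follows by rerunning the proof of Lemma~\ref{lem:ov-shared} with $c'_{\rm cov}$ in place of $c_{\rm cov}$, as permitted by the remark after that lemma. The closing monotonicity claim follows because $E_\pm$, $M$, $M'$ and $\mathsf c$ are nondecreasing in $K'$ and $|h|$. Consequently the denominator of $c'_{\rm cov}$ only grows under the stated replacements, while the numerator $2\sinh(2K')$ only shrinks.
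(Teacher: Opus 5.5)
Your outline follows the paper's proof of (ii) and (iii) essentially step for step: the law of total covariance plus FKG, Jensen for $u\mapsto u^{-2}$, the bound $\cosh u\le1+u^2\mathsf c(U)$ for $|u|\le U$, and second moments of the boundary fields controlled through (i). The bound $D^{\mathbb T}_{zz'}\le\delta_r$ is also obtained the same way: split into the nearest lift, bounded via Lemma~\ref{lem:ov-resolvent}(ii) with $\mathsf N_r$, and wrap-around lifts of norm at least $L-3\ge37$.

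For the upper bound in (i) you take a slightly different but valid route. You compare $\mu_L$ with $\mu_L(\cdot\mid X_y=1)$ exactly as in the proof of Lemma~\ref{lem:ov-pair-ratio}, with $\Lambda_0=\varnothing$ and $\Lambda_1=\{y\}$. The paper instead compares the two boundary-conditioned laws $\mu_\pm=\mu_L(\cdot\mid X_y=\pm1)$ on $V\setminus\{y\}$, gets $\bar b_z\le t'$ at the neighbors of $y$, and closes with $D^{\mathbb T}t'\mathrm{Adj}_L=D^{\mathbb T}-I$. Both give $\mathrm{Cov}(X_x,X_y)\le4\rho_L(1-\rho_L)D^{\mathbb T}_{xy}$; yours simply reuses an estimate already proved.

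There is one step that does not give what you claim. For the difference you use $(T_x-T_y)^2\le2T_x^2+2T_y^2$, which yields
\[
 \E(b_x-b_y)^2\le2K'^2\bigl(\E T_x^2+\E T_y^2\bigr)\le4K'^2E_1=2E_-,
\]
not $E_-$. As written, you therefore prove (ii) only with $E_-$ doubled. That constant is strictly smaller than the $c'_{\rm cov}$ of the statement, which is the one entering $\rho'_c$ and the certified score $\mathcal S'_c$.

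The fix is to keep the cross term with its sign. Write $\E(T_x-T_y)^2=\E T_x^2+\E T_y^2-2\E[T_xT_y]$. Then $\E[T_xT_y]=\sum_{z,z'}\bigl(m_L^2+\mathrm{Cov}(X_z,X_{z'})\bigr)\ge0$ by the nonnegativity half of (i). Hence $\E(b_x-b_y)^2\le K'^2(\E T_x^2+\E T_y^2)\le2K'^2E_1=E_-$, which is how the paper argues.

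The pair count you only sketched also needs stating. For $L\ge40$, the sets $N(x)\setminus y$ and $N(y)\setminus x$ are disjoint with $2d-1$ sites each. Pairs within one set are at distance $2$. Across the sets, exactly $2d-2$ pairs are at distance $1$ and the rest at distance $3$. With this count and the correction above, the proof is complete.
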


\begin{proof}
(i) Nonnegativity is FKG. Write $\mu_\pm=\mu_L(\cdot\mid X_y=\pm1)$. Then
$\mathrm{Cov}(X_x,X_y)=2\mu_L(X_y=1)\mu_L(X_y=-1)[\mu_+(X_x)-\mu_-(X_x)]$, and
$2\mu_L(X_y=1)\mu_L(X_y=-1)\le\frac12$. Apply Lemma~\ref{lem:ov-dobrushin} on
$V\setminus\{y\}$ with $\mu=\mu_-$ and $\tilde\mu=\mu_+$, the Ising laws there
with boundary spin $-1$, respectively $+1$, at $y$, and their heat-bath kernels.
As in the proof of Lemma~\ref{lem:ov-pair-ratio}, $\mathsf C_{zz'}\le t'\mathbf1\{z\sim z'\}$
and $\mathsf D\le D^{\mathbb T}$ entrywise. The two kernels differ only at
neighbors $z$ of $y$, by the flip of one neighboring spin, so
$\bar b_z\le t'\mathbf1\{z\sim y\}$. With $f=X_x$, only $\delta_x(f)=2$ is
nonzero, and
\[
 |\mu_+(X_x)-\mu_-(X_x)|\le2\sum_{z\sim y}D^{\mathbb T}_{xz}\,t'
 =2\bigl(D^{\mathbb T}t'\mathrm{Adj}_L\bigr)_{xy}=2\bigl(D^{\mathbb T}-I\bigr)_{xy}
 =2D^{\mathbb T}_{xy}.
\]

(ii) In the notation of the proof of Lemma~\ref{lem:ov-shared}, that proof gives
$\mathrm{Cov}(X_x,X_y)\ge\E[2\sinh(2K')/\mathcal P^2]$ with
$\mathcal P=e^{K'}\cosh(b_x+b_y)+e^{-K'}\cosh(b_x-b_y)$, and Jensen's
inequality for $u\mapsto u^{-2}$ gives
$\mathrm{Cov}(X_x,X_y)\ge2\sinh(2K')/(\E\mathcal P)^2$. Since
$\mathsf c(U)=\sum_{k\ge1}U^{2k-2}/(2k)!$ is increasing,
$\cosh u\le1+u^2\mathsf c(U)$ for $|u|\le U$; we use this with
$U=2M\ge|b_x+b_y|$ and $U=2M'\ge|b_x-b_y|$. It remains to show
$\E(b_x+b_y)^2\le E_+$ and $\E(b_x-b_y)^2\le E_-$. Put
$T_x=\sum_{z\in N(x)\setminus y}X_z$ and $T_y=\sum_{z\in N(y)\setminus x}X_z$,
so that $b_x+b_y=2h+K'(T_x+T_y)$ and $b_x-b_y=K'(T_x-T_y)$. By
Lemma~\ref{lem:ov-meanfield} and its proof, $\E X_z=-m_L$ with
$0\le m_L\le\bar m$, and by (i),
$\E[X_zX_{z'}]=m_L^2+\mathrm{Cov}(X_z,X_{z'})\in[0,\bar m^2+D^{\mathbb T}_{zz'}]$.
Since $L\ge40$, the sets $N(x)\setminus y$ and $N(y)\setminus x$ are disjoint,
with $2d-1$ sites each; two sites in the same set are at $\ell^1$ distance $2$,
and across the sets exactly $2d-2$ pairs are at distance $1$ and all others at
distance $3$. By parts (i)--(iii) of Lemma~\ref{lem:ov-resolvent}, for
$z\ne z'$ at distance $r\le3$,
$D^{\mathbb T}_{zz'}\le\mathcal D_{\bar t}(\tilde z'-\tilde z)
+\alpha_{\rm D}^{L-3}/(1-\alpha_{\rm D})\le\delta_r$: the largest multinomial
coefficient $r!/\prod_i|u_i|!$ with $|u|_1=r$ is $\mathsf N_r$, every image
$\tilde z'-\tilde z+Lw$ with $w\ne0$ has $\ell^1$ norm at least $L-3$, and
$0.35^{37}/0.65<10^{-15}$. Hence $\E T_x^2\le E_1$, $\E(T_x+T_y)^2\le E_T$ and
$\E(T_x+T_y)=-(4d-2)m_L$. Since $h<0$, this gives
$\E(b_x+b_y)^2=4h^2+4|h|K'(4d-2)m_L+K'^2\E(T_x+T_y)^2\le E_+$, and
$\E(b_x-b_y)^2\le K'^2(\E T_x^2+\E T_y^2)\le E_-$, because
$\E[T_xT_y]\ge0$. Each of these bounds is monotone in the parameters in the
direction stated after the lemma.

(iii) Repeat the proof of Lemma~\ref{lem:ov-shared} with $c'_{\rm cov}$ in place
of $c_{\rm cov}$.
\end{proof}

\paragraph{The difference walk.}
For independent uniform $\gamma,\gamma'\in[d]^n$ put
$Z_k=\tilde x_k(\gamma)-\tilde x_k(\gamma')$. Then $Z$ is a random walk from $0$
on $\mathbb A=\{z\in\Z^d:\sum_iz_i=0\}$ with the symmetric step law
$\mathbf P(0)=1/d$, $\mathbf P(e_i-e_j)=1/d^2$ for $i\ne j$; write
$\mathbf P_y$ and $\mathbf E_y$ for the walk started at $y$ and
$\operatorname{rad}(z)=|z|_1/2$. For $L\ge2n+1$, property (ii) of oriented paths
shows that the two paths meet only at equal times:
\begin{itemize}
\item $V_\gamma\cap V_{\gamma'}=\{x_k(\gamma):Z_k=0\}$, so
$|V_\gamma\cap V_{\gamma'}|=K_V:=\#\{0\le k\le n:Z_k=0\}$;
\item a step is shared if and only if $Z_k=Z_{k+1}=0$, so
$|E_\gamma\cap E_{\gamma'}|=K_E:=\#\{0\le k<n:Z_k=Z_{k+1}=0\}$, and
$(V_\gamma\cap V_{\gamma'},E_\gamma\cap E_{\gamma'})$ is a disjoint union of
$K_V-K_E$ path segments;
\item with $\mathcal J=\{k\le n:Z_k\ne0\}$, $V_\gamma\setminus V_{\gamma'}=\{x_k(\gamma):k\in\mathcal J\}$
and $V_{\gamma'}\setminus V_\gamma=\{x_k(\gamma'):k\in\mathcal J\}$, each indexed
injectively.
\end{itemize}
Put $u_k=\mathbf P_0(Z_k=0)=\sum_{k_1+\cdots+k_d=k}\bigl(k!/(d^k\prod_ik_i!)\bigr)^2$,
\[
 G_d=\sum_{k\ge0}u_k,\qquad G^{(2)}_d=\sum_{k\ge0}(k+1)u_k,\qquad
 F_d=1-1/G_d,
\]
and let $\mathfrak t=\inf\{k\ge1:Z_k=0\}$, $\mathfrak t_0=\inf\{k\ge0:Z_k=0\}$,
$\varphi(y)=\mathbf P_y(\mathfrak t_0<\infty)$,
$g(0,y)=\sum_k\mathbf P_0(Z_k=y)$ and
$g_0(y,y')=\mathbf E_y\#\{0\le k<\mathfrak t_0:Z_k=y'\}$.

\begin{lemma}[Collision Green function]\label{lem:ov-green}
\begin{enumerate}\renewcommand{\labelenumi}{(\roman{enumi})}
\item $\mathbf P_y(Z_k=z)\le u_k$ for all $y,z$.
\item For $k\ge d$, $u_k\le\sqrt d\,(2\pi\lfloor k/d\rfloor)^{-(d-1)/2}$.
Hence $G_d<\infty$ for $d\ge4$ and $G^{(2)}_d<\infty$ for $d\ge6$.
\item $F_d=\mathbf P_0(\mathfrak t<\infty)$ and $\mathbf P_0(Z_1\ne0,\mathfrak t<\infty)=F_d-1/d$;
almost surely $Z$ visits $0$ finitely often.
\item $\varphi(y)=g(0,y)/G_d$, and $\sum_{y\ne0}\varphi(y)^2=G^{(2)}_d/G_d^2-1$.
\item $\mathbf E_0\#\{1\le k<\mathfrak t:Z_k=y\}=\varphi(y)$ for $y\ne0$.
\item $g_0(y,y')=g_0(y',y)$.
\end{enumerate}
\end{lemma}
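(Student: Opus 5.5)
Most items are standard facts about the symmetric difference walk $Z$ on $\mathbb A$, and I would prove them in order, using only Fourier analysis and the strong Markov property.

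For (i), write the step law's characteristic function: $\hat{\mathbf P}(\xi)=|\frac1d\sum_ie^{i\xi_i}|^2\ge0$, since $Z_1$ is the difference of two independent uniform $e_i$'s. Then $\mathbf P_y(Z_k=z)=(2\pi)^{-d}\int\hat{\mathbf P}^ke^{-i\xi\cdot(z-y)}\,d\xi\le(2\pi)^{-d}\int\hat{\mathbf P}^k=u_k$, because the integrand is nonnegative. For (ii), $u_k$ is the collision probability of two independent multinomial$(k;1/d,\dots,1/d)$ vectors, so $u_k\le\max_z\mathbf P(\text{multinomial}=z)$. I would bound that maximal multinomial probability by grouping the $k$ trials into $\lfloor k/d\rfloor$ blocks and using a local CLT-free estimate: the largest multinomial coefficient $\mathfrak M_k d^{-k}$ is at most $\sqrt d\,(2\pi\lfloor k/d\rfloor)^{-(d-1)/2}$ by Stirling with explicit Robbins bounds, $n!\ge\sqrt{2\pi n}(n/e)^n$ on the denominator parts and an upper bound on $k!$. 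This explicit Stirling bookkeeping is the only computational step I expect to be fiddly, especially for uneven compositions $k=\ell d+j$. Summability follows: the exponent $(d-1)/2>1$ needs $d\ge4$, and $(d-1)/2>2$ needs $d\ge6$ for the $(k+1)$-weighted sum.

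For (iii), use the usual renewal decomposition: $G_d=\sum_j\mathbf P_0(\mathfrak t<\infty)^j=1/(1-\mathbf P_0(\mathfrak t<\infty))$, and finiteness of $G_d$ gives transience and finitely many visits. The first step stays at $0$ with probability $1/d$, which gives the second identity. For (iv), decomposing at $\mathfrak t_0$ gives $g(y,0)=\varphi(y)G_d$, and symmetry of the step law gives $g(y,0)=g(0,-y)=g(0,y)$ (using $Z\mapsto-Z$ invariance). Then $\sum_y g(0,y)^2=\sum_{k,l}\sum_y\mathbf P_0(Z_k=y)\mathbf P_0(Z_l=y)=\sum_{k,l}u_{k+l}=G^{(2)}_d$, by symmetry again. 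Subtracting the $y=0$ term $G_d^2$ and dividing by $G_d^2$ gives the formula.

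For (v), the expected number of visits to $y$ strictly inside an excursion from $0$ equals, by the cycle/renewal identity, $g(0,y)/G_d$. Concretely, $g(0,y)=\sum_j(\text{excursion counts})$, with the expected number of excursions equal to $G_d$, and the count per excursion has the same expectation each time by the strong Markov property. This equals $\varphi(y)$ by (iv). For (vi), $g_0$ is the Green function of the walk killed on hitting $0$. Since the step law is symmetric, the killed transition kernel is a symmetric matrix on $\mathbb A\setminus\{0\}$ (and $g_0(0,\cdot)=0$ trivially), so its Neumann series, which converges by transience, is symmetric.
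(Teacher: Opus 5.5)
Your proposal is correct. Items (iii), (iv) and (vi) match the paper's proof: renewal at returns, the first-passage decomposition plus symmetry of the step law, and symmetry of the killed kernel, which is the paper's path-reversal argument in matrix form. Two items take different but valid routes:
\begin{itemize}
\item For (i), the paper uses Cauchy--Schwarz, $\mathbf P_0(Z_k=w)=\sum_x\mathbf P(\tilde x_k=x)\mathbf P(\tilde x'_k=x-w)\le u_k$. Your Fourier argument uses the same structure: the step law is a law convolved with its reflection, so its characteristic function is nonnegative.
\item For (v), the paper reverses paths, $\mathbf P_0(Z_k=y,\ Z_1,\ldots,Z_k\ne0)=\mathbf P_y(\mathfrak t_0=k)$, and sums over $k$. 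Your excursion decomposition $g(0,y)=G_d\,\mathbf E_0\#\{1\le k<\mathfrak t:Z_k=y\}$, combined with (iv), also works. Each excursion, including the final infinite one, has the $\mathbf P_0$-law of the path up to $\mathfrak t$, and the definition counts all $k\ge1$ when $\mathfrak t=\infty$. Your route depends on (iv) and on $G_d<\infty$; the reversal argument does not.
\end{itemize}

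Your plan for (ii) is heavier than necessary and slightly off in its constant. The paper never handles uneven compositions. The law at time $k+1$ is a mixture of shifts of the law at time $k$, so the largest multinomial atom is nonincreasing in $k$. For $k\ge da$ it therefore suffices to bound the balanced atom $(da)!/(d^{da}a!^d)$. Your direct route for $k=\ell d+j$ can be completed, using $\sum_i m_i\log m_i\ge k\log(k/d)$ and $(d\ell+j)\ell^{j-1}\le d(\ell+1)^j$, but it is not needed.

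On the constant: Robbins' upper bound for $(da)!$ together with only $n!\ge\sqrt{2\pi n}\,(n/e)^n$ for the factors $a!$ gives $\sqrt d\,(2\pi a)^{-(d-1)/2}e^{1/(12da)}$. That exceeds the stated constant. The paper applies the lower Robbins bound $n!>\sqrt{2\pi}\,n^{n+1/2}e^{-n+1/(12n+1)}$ to each of the $d$ factors $a!$. This leaves the factor $e^{1/(12da)-d/(12a+1)}$, which is at most $1$ for $d\ge2$. The exact constant $\sqrt d$ enters the certified tail bounds on $u_k$, so you need this correction.
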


\begin{proof}
(i) By the Cauchy--Schwarz inequality,
$\mathbf P_0(Z_k=w)=\sum_x\mathbf P(\tilde x_k=x)\mathbf P(\tilde x'_k=x-w)\le u_k$.
(ii) $u_k\le\max_x\mathbf P(\tilde x_k=x)$, and the largest atom is
nonincreasing in $k$, since the law at time $k+1$ is a mixture of shifts of the
law at time $k$. At time $da$ the largest atom is $(da)!/(d^{da}a!^d)$, and
Robbins' form of Stirling's formula \cite{Robbins1955},
$\sqrt{2\pi}\,n^{n+1/2}e^{-n+1/(12n+1)}<n!<\sqrt{2\pi}\,n^{n+1/2}e^{-n+1/(12n)}$,
bounds it by $\sqrt d\,(2\pi a)^{-(d-1)/2}e^{1/(12da)-d/(12a+1)}$ for $a\ge1$.
The last factor is at most $1$ for $d\ge2$, since then $12a+1\le12d^2a$.
(iii) The number of visits to $0$ is geometric with mean $G_d$, and
$\mathbf P_0(Z_1=0)=1/d$; finiteness follows from $G_d<\infty$ and Borel--Cantelli.
(iv) The first-passage decomposition gives $g(y,0)=\varphi(y)G_d$, and
$g(y,0)=g(0,y)$ by symmetry. Also
$\sum_yg(0,y)^2=\sum_{k,l}\sum_y\mathbf P_0(Z_k=y)\mathbf P_0(Z_l=-y)
=\sum_{k,l}u_{k+l}=G^{(2)}_d$; remove $y=0$. (v) A path from $0$ to $y$ avoiding
$0$ after time $0$ has the probability of its reversal, which runs from $y$ and
first hits $0$ at its last step; sum over lengths. (vi) The same reversal.
\end{proof}

\paragraph{The boost and the renewal.}
Fix a rational $\bar t\ge t'$ with $\lambda_*=d\bar t/(1-2d\bar t)<1$, let
$\mathcal B$ be as in Lemma~\ref{lem:ov-resolvent}(iv) at $t_0=\bar t$, and put
\[
 \bar\psi(r)=\sum_{s\in\Z}\mathcal B\bigl(\max(|s|,2r-|s|)\bigr),\qquad
 \varepsilon_{n,L}=\frac{n^2\alpha_{\rm D}^{L-n}}{1-\alpha_{\rm D}}.
\]
Then $\bar\psi$ is nonincreasing and
$\bar\psi(r)\le(2r+1+2\lambda_*/(1-\lambda_*))\mathcal B(r)$.

\begin{lemma}[The boost is carried by the difference walk]\label{lem:ov-boost}
For $L\ge2n+1$,
\[
 \sum_{x\in V_{\gamma'}\setminus V_\gamma,\ y\in V_\gamma\setminus V_{\gamma'}}D^{\mathbb T}_{xy}
 \ \le\ \sum_{k\in\mathcal J}\bar\psi(\operatorname{rad}Z_k)+\varepsilon_{n,L}.
\]
\end{lemma}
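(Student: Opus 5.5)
Fix the two oriented paths and index the unshared vertices by $\mathcal J$, as in the description of the difference walk. Then $V_{\gamma'}\setminus V_\gamma=\{x_j(\gamma'):j\in\mathcal J\}$ and $V_\gamma\setminus V_{\gamma'}=\{x_k(\gamma):k\in\mathcal J\}$, so the left side is at most
\[
 \sum_{j,k\in\mathcal J}D^{\mathbb T}_{x_j(\gamma'),x_k(\gamma)},
\]
since both indexings are injective. By Lemma~\ref{lem:ov-resolvent}(i), each term equals $\sum_{w}\mathcal D_{t'}(\tilde x_k(\gamma)-\tilde x_j(\gamma')+Lw)$.

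\emph{Aliased images.} I would first split off the images with $w\ne0$. For $L\ge2n+1$ the lift difference $\tilde x_k(\gamma)-\tilde x_j(\gamma')$ lies in $[-n,n]^d$, so every image with $w\ne0$ has $\ell^1$ norm at least $L-n$. By Lemma~\ref{lem:ov-resolvent}(iii), the sum over $w\ne0$ of these images is at most $\alpha_{\rm D}^{L-n}/(1-\alpha_{\rm D})$ for each pair $(j,k)$. The tail bound over all $u$ with $|u|_1\ge L-n$ already covers every nonzero image at once. There are at most $n^2$ pairs. Strictly there are $(n+1)^2$ index pairs, but time $0$ is shared (both paths start at $o$), so $|\mathcal J|\le n$. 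This gives the error $\varepsilon_{n,L}$.

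\emph{Main term.} I would organize the $w=0$ terms by the index $k$ of the $\gamma$-vertex and bound the sum over $j$ by $\bar\psi(\operatorname{rad}Z_k)$. Write $s=j-k$. The coordinate sum of $\tilde x_j(\gamma')$ is $j$ and that of $\tilde x_k(\gamma)$ is $k$. Hence
\[
 u=\tilde x_k(\gamma)-\tilde x_j(\gamma')
  =Z_k-\bigl(\tilde x_j(\gamma')-\tilde x_k(\gamma')\bigr),
\]
where the bracketed vector is a nonnegative vector of coordinate sum $s$ if $s\ge0$, or minus such a vector if $s<0$. I will then use two lower bounds on $|u|_1$. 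First, $|u|_1\ge|s|$, from the coordinate sums. Second, $|u|_1\ge|Z_k|_1-|s|=2\operatorname{rad}Z_k-|s|$, by the triangle inequality. Thus $|u|_1\ge\max(|s|,2r-|s|)$ with $r=\operatorname{rad}Z_k$.

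\emph{Monotonicity and summation.} Lemma~\ref{lem:ov-resolvent}(iv) at $t_0=\bar t\ge t'$ gives $\mathcal D_{t'}(u)\le\mathcal D_{\bar t}(u)\le\mathcal B(|u|_1)$, and $\mathcal B$ is nonincreasing because $\lambda_*<1$. So each term is at most $\mathcal B(\max(|s|,2r-|s|))$. For fixed $k$, distinct $j$ give distinct $s$, so summing over $j$ gives at most $\bar\psi(r)$. Summing over $k\in\mathcal J$ then finishes the main term.

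The step I expect to be the main obstacle is the second lower bound. One has to check that subtracting a one-signed vector of total mass $|s|$ can reduce the $\ell^1$ norm of $Z_k$ by at most $|s|$. This is the triangle inequality, but care is needed to get the parameter $\operatorname{rad}$ right. Since $Z_k\in\mathbb A$ has coordinate sum zero, $|Z_k|_1$ is even and equals $2\operatorname{rad}Z_k$.
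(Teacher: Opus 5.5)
Your proposal is correct and follows the paper's proof essentially step for step. You split off the wrap-around images at $\ell^1$ distance at least $L-n$, with at most $n^2$ pairs because $0\notin\mathcal J$. You then bound the remaining term using the two lower bounds $|s|$ and $2\operatorname{rad}Z_k-|s|$ on $|\tilde x_k-\tilde x'_j|_1$, and conclude by the monotonicity of $\mathcal B$ and summation over the distinct values $s=j-k$, exactly as the paper does.
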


\begin{proof}
Take $k,j\in\mathcal J$ and $s=j-k$. Coordinate sums give
$|\tilde x_k-\tilde x'_j|_1\ge|s|$, and since
$\tilde x_k-\tilde x'_j=Z_k-(\tilde x'_j-\tilde x'_k)$ with
$|\tilde x'_j-\tilde x'_k|_1=|s|$, also
$|\tilde x_k-\tilde x'_j|_1\ge2\operatorname{rad}(Z_k)-|s|$. By
Lemma~\ref{lem:ov-resolvent}(i), $D^{\mathbb T}_{x'_jx_k}$ is the sum of
$\mathcal D_{t'}(\tilde x_k-\tilde x'_j)\le\mathcal B(\max(|s|,2\operatorname{rad}(Z_k)-|s|))$,
by Lemma~\ref{lem:ov-resolvent}(iv) since $t'\le\bar t$ and $\mathcal B$ is
nonincreasing, and of wrap-around terms at $\ell^1$ distance at least $L-n$;
summing the first
terms over $j$ gives at most $\bar\psi(\operatorname{rad}Z_k)$, and by
Lemma~\ref{lem:ov-resolvent}(iii) the wrap-around terms total at most
$\alpha_{\rm D}^{L-n}/(1-\alpha_{\rm D})$ for each of the at most $n^2$ pairs.
The bound on $\bar\psi$ uses $2r-|s|\ge r$ for $|s|\le r$ and
$\sum_{s>r}\mathcal B(s)\le\mathcal B(r)\lambda_*/(1-\lambda_*)$.
\end{proof}

Let $\rho_e\in\{\rho_-,\rho_c,\rho'_c\}$, where $\rho_e=\rho_c$ requires the
hypotheses of Lemma~\ref{lem:ov-shared}, and $\rho_e=\rho'_c$ those of
Lemma~\ref{lem:ov-covariance}(iii), including $L\ge40$. Define weights along
$Z$:
\[
 \varpi_k=\begin{cases}
 \rho_-^{-1},& Z_k=0\ \text{and}\ (k=0\ \text{or}\ Z_{k-1}\ne0),\\
 (\rho_ep)^{-1},& k\ge1\ \text{and}\ Z_k=Z_{k-1}=0,\\
 \exp\bigl(\kappa\bar\psi(\operatorname{rad}Z_k)\bigr),& Z_k\ne0.
 \end{cases}
\]
Then
\[
 \prod_{k=0}^n\varpi_k=p^{-K_E}\rho_-^{-(K_V-K_E)}\rho_e^{-K_E}
 \exp\Bigl(\kappa\sum_{k\in\mathcal J}\bar\psi(\operatorname{rad}Z_k)\Bigr).
\]
The factor
$\rho_-^{-(K_V-K_E)}\rho_e^{-K_E}$ bounds
$1/\mu_L(V_\gamma\cap V_{\gamma'}\subseteq+)$, by FKG if $\rho_e=\rho_-$, by
Lemma~\ref{lem:ov-shared} if $\rho_e=\rho_c$ and by
Lemma~\ref{lem:ov-covariance}(iii) if $\rho_e=\rho'_c$, since
$(V_\gamma\cap V_{\gamma'},E_\gamma\cap E_{\gamma'})$ is a disjoint union of
$K_V-K_E$ path segments with $K_E$ edges. Lemmas~\ref{lem:ov-pz}, \ref{lem:ov-pair-ratio} and \ref{lem:ov-boost},
with $\kappa_L\le\kappa$, therefore give, for $n\ge1$ and $L\ge2n+1$ (and
$L\ge40$ if $\rho_e=\rho'_c$),
\begin{equation}\label{eq:ov-walk-weights}
 \E W_n^2\ \le\ e^{\kappa\varepsilon_{n,L}}\ \mathbf E_0\Bigl[\prod_{k=0}^n\varpi_k\Bigr].
\end{equation}
All weights are at least $1$. We cut the path of $Z$ at its visits to $0$. The
weight of an excursion from $0$, including the weight at its return, and the
weight of a final infinite excursion are
\begin{gather*}
 \Phi=\mathbf1\{\mathfrak t=1\}(\rho_ep)^{-1}+\mathbf1\{2\le\mathfrak t<\infty\}\,\rho_-^{-1}
 \prod_{1\le k<\mathfrak t}e^{\kappa\bar\psi(\operatorname{rad}Z_k)},\\
 \Phi_\infty=\mathbf1\{\mathfrak t=\infty\}\prod_{k\ge1}e^{\kappa\bar\psi(\operatorname{rad}Z_k)}.
\end{gather*}

\begin{proposition}[Renewal]\label{prop:ov-renewal}
If $\mathbf E_0\Phi<1$, then
$\sup_n\mathbf E_0\prod_{k=0}^n\varpi_k\le\rho_-^{-1}\mathbf E_0\Phi_\infty/(1-\mathbf E_0\Phi)$.
\end{proposition}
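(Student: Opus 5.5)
We will decompose the trajectory of $Z$ on $\{0,\ldots,n\}$ at its visits to $0$ and use the strong Markov property at these return times. Let $0=\tau_0<\tau_1<\cdots<\tau_J\le n$ be the successive times with $Z_{\tau_i}=0$, up to time $n$. The weight at time $0$ is $\varpi_0=\rho_-^{-1}$, since $Z_0=0$ and $k=0$.

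For each completed excursion $[\tau_{i-1},\tau_i]$ we group the weights $\varpi_k$ with $\tau_{i-1}<k\le\tau_i$. If $\tau_i=\tau_{i-1}+1$, this is the single shared-edge weight $(\rho_ep)^{-1}$. Otherwise it is $\prod_{\tau_{i-1}<k<\tau_i}e^{\kappa\bar\psi(\operatorname{rad}Z_k)}$ times the weight $\rho_-^{-1}$ at the return, because $Z_{\tau_i-1}\ne0$. In both cases the product is exactly the shifted copy of $\Phi$ for the excursion starting at $\tau_{i-1}$. The last piece, after $\tau_J$, runs over $\tau_J<k\le n$ with $Z_k\ne0$. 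Its weight is $\prod_{\tau_J<k\le n}e^{\kappa\bar\psi(\operatorname{rad}Z_k)}$, and on this piece the walk does not return to $0$ by time $n$.

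This gives the exact identity
\[
\prod_{k=0}^n\varpi_k=\rho_-^{-1}\Bigl(\prod_{i=1}^{J}\Phi\circ\theta_{\tau_{i-1}}\Bigr)\,\Psi_n\circ\theta_{\tau_J},
\]
where $\theta$ is the shift and $\Psi_m=\mathbf1\{\mathfrak t>m\}\prod_{1\le k\le m}e^{\kappa\bar\psi(\operatorname{rad}Z_k)}$, with $m=n-\tau_J$. Summing over $j=J$ and the return times, the strong Markov property at each $\tau_i$ (all started from $0$) bounds the expectation by
\[
\rho_-^{-1}\sum_{j\ge0}(\mathbf E_0\Phi)^j\,\sup_{m\ge0}\mathbf E_0\Psi_m .
\]
More carefully, we write the expectation as a sum over $j$ and over the event that the $(j{+}1)$-st excursion has length exceeding $n-\tau_j$. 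We condition on $\mathcal F_{\tau_j}$ and drop the constraint that the earlier excursions fit before $n$; dropping it is allowed because all weights are nonnegative, which gives upper bounds.

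It remains to show $\mathbf E_0\Psi_m\le\mathbf E_0\Phi_\infty$ for every $m$. This is the main point where care is needed, since $\{\mathfrak t>m\}$ is larger than $\{\mathfrak t=\infty\}$. We split $\{\mathfrak t>m\}$ as $\{\mathfrak t=\infty\}\cup\{m<\mathfrak t<\infty\}$.

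On $\{\mathfrak t=\infty\}$, all weights are at least $1$, so $\Psi_m\le\Phi_\infty$.

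On $\{m<\mathfrak t<\infty\}$, a naive bound fails. We compare instead with the full excursion weight. If $\mathfrak t\ge2$, then $\Psi_m\le\rho_-\Phi$, because the truncated product is at most the full product up to $\mathfrak t-1$, and $\Phi$ carries an extra factor $\rho_-^{-1}\ge1$. If $m=0$ the bound is trivial.

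So the cleanest route is to bound $\mathbf E_0\Psi_m\le\mathbf E_0\Phi_\infty+\mathbf E_0[\Phi;\,\mathfrak t>m,\ \mathfrak t<\infty]$. These extra terms are absorbed by summing the renewal series with one more excursion. Concretely, the event that the last excursion is cut at $n$ but returns later is dominated by including that excursion in the geometric sum. The total is then at most
\[
\rho_-^{-1}\Bigl(\sum_{j}(\mathbf E_0\Phi)^j\Bigr)\bigl(\mathbf E_0\Phi_\infty+\text{the }(j{+}1)\text{-st excursion term}\bigr).
\]

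If this bookkeeping threatens the exact constant claimed, we will use the alternative monotone argument. Since every $\varpi_k\ge1$, the quantity $\mathbf E_0\prod_{k\le n}\varpi_k$ is nondecreasing in $n$. Hence its supremum equals the limit as $n\to\infty$. For that limit we use that $Z$ a.s.\ returns only finitely often (Lemma~\ref{lem:ov-green}(iii)), so $\prod_{k\le n}\varpi_k$ converges to the infinite product, which factorizes exactly into finitely many excursions $\Phi$ and a final $\Phi_\infty$. The infinite product of weights $\ge1$ is the increasing limit, so monotone convergence gives $\sup_n\mathbf E_0\prod_{k\le n}\varpi_k=\mathbf E_0\prod_{k\ge0}\varpi_k$.

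The strong Markov decomposition of the infinite product then gives exactly
\[
\rho_-^{-1}\sum_{j\ge0}(\mathbf E_0\Phi)^j\,\mathbf E_0\Phi_\infty=\rho_-^{-1}\mathbf E_0\Phi_\infty/(1-\mathbf E_0\Phi),
\]
with no truncation issue. This monotone route is what we would actually write.
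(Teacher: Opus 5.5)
Your monotone route is correct and is exactly the paper's proof: since every $\varpi_k\ge1$, the finite products increase to $\prod_{k\ge0}\varpi_k$. That infinite product splits at the finitely many returns to $0$ into $\rho_-^{-1}$, a product of excursion weights $\Phi$ and a final $\Phi_\infty$, and the strong Markov property with Tonelli turns its expectation into $\rho_-^{-1}\sum_{j\ge0}(\mathbf E_0\Phi)^j\,\mathbf E_0\Phi_\infty$. You can drop the truncated bookkeeping before it. The pointwise bound $\prod_{k\le n}\varpi_k\le\prod_{k\ge0}\varpi_k$ already disposes of the $\{\mathfrak t>m\}$ issue, whereas the sketched absorption of the cut excursion would not by itself give the stated constant.
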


\begin{proof}
The finite products increase to $\Pi=\prod_{k\ge0}\varpi_k$. Let
$0=T_0<\cdots<T_N$ be the visits to $0$, finitely many by
Lemma~\ref{lem:ov-green}(iii). Then $\Pi=\rho_-^{-1}\prod_{i=1}^N\Phi^{(i)}\cdot
\Phi_\infty^{(N+1)}$, where $\Phi^{(i)}$ is the weight of $(T_{i-1},T_i]$ and
$\Phi_\infty^{(N+1)}$ that of the final excursion. The strong Markov property at
the $T_i$ and Tonelli give
$\mathbf E_0\Pi=\rho_-^{-1}\sum_{N\ge0}(\mathbf E_0\Phi)^N\,\mathbf E_0\Phi_\infty$.
\end{proof}

If $K'=0$, then $D^{\mathbb T}$ is the identity and the left side of
Lemma~\ref{lem:ov-boost} vanishes; with $\bar\psi$ replaced by $0$, all the
steps above are equalities, and
$\mathbf E_0\Phi=1/(d\rho p)+(F_d-1/d)/\rho$ for iid sites of density $\rho$.
The supremum is then infinite when $\mathbf E_0\Phi\ge1$, so in the iid case the
criterion below is necessary and sufficient for a bounded second moment. This
is the renewal form of the oriented second moment due to Kesten, presented for
oriented bond percolation in \cite[Section~2]{CoxDurrett1983}. The weights of
Lemma~\ref{lem:ov-pz}, which normalize each path by its own probability, are
those of \cite[Eq.~(5.16)]{LyonsPeres2016} (see also \cite{BPP1998}), and the
pair ratio $\mathcal R_L$ plays the role of the quasi-independence constant of
\cite[Theorem~5.19]{LyonsPeres2016}. A Cox--Durrett second moment for a
dependent site field, with corrections from nearby vertices of the two paths,
appears in \cite[proof of Theorem~1.2]{BDNS2020}. We are not aware of an earlier
bound of this kind for an Ising site field. For $K'>0$ the
Ising correlations add the boost $\bar\psi$, which acts only between meetings
of the two paths. The next lemma shows that its cost in $\mathbf E_0\Phi$ is of
second order in the re-meeting probability.

\begin{lemma}[Cycle weights]\label{lem:ov-cycle}
Put $U(z)=e^{\kappa\bar\psi(\operatorname{rad}z)}-1$ for $z\ne0$,
$U_1=e^{\kappa\bar\psi(1)}-1\ge U(z)$, and
\begin{gather*}
 \Sigma U=\sum_{r\ge1}|\mathbb A_r|\bigl(e^{\kappa\bar\psi(r)}-1\bigr),\\
 |\mathbb A_r|=\#\{z\in\mathbb A:|z|_1=2r\}
 =\sum_{\substack{k,l\ge1\\k+l\le d}}\binom dk\binom{d-k}l\binom{r-1}{k-1}\binom{r-1}{l-1},
\end{gather*}
\[
 \eta'=\sum_{k\ge0}\min\bigl(U_1,\ u_k\,\Sigma U\bigr),\qquad
 \Xi=\mathbf E_0\Bigl[\mathbf1\{Z_1\ne0,\ \mathfrak t<\infty\}\prod_{1\le k<\mathfrak t}
 e^{\kappa\bar\psi(\operatorname{rad}Z_k)}\Bigr].
\]
\begin{enumerate}\renewcommand{\labelenumi}{(\roman{enumi})}
\item $\sup_{y\ne0}\mathbf E_y\sum_{0\le k<\mathfrak t_0}U(Z_k)\le\eta'$.
\item If $\eta'<1$, then $\Xi\le F_d-1/d+U_1(G^{(2)}_d/G_d^2-1)/(1-\eta')$.
\item If $\eta'<1$, then $\mathbf E_0\Phi_\infty\le(1-1/d)/(1-\eta')$.
\item $\mathbf E_0\Phi=1/(d\rho_ep)+\Xi/\rho_-$.
\end{enumerate}
\end{lemma}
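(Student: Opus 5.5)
We prove the four parts in the order (i), (iv), (iii), (ii). Each uses the Markov property of the difference walk $Z$, the collision bounds of Lemma~\ref{lem:ov-green}, and Khas'minskii's lemma for the walk killed at $0$.

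\textbf{Part (i).} For $y\ne0$, write
\[
\mathbf E_y\sum_{0\le k<\mathfrak t_0}U(Z_k)=\sum_{k\ge0}\mathbf E_y\bigl[U(Z_k);\,k<\mathfrak t_0\bigr]
\]
and bound each time separately in two ways.
\begin{itemize}
\item Since $U\le U_1$ and probabilities are at most one, the $k$-th term is at most $U_1$.
\item Dropping the killing and using Lemma~\ref{lem:ov-green}(i), the $k$-th term is at most $\sum_{z\ne0}\mathbf P_y(Z_k=z)U(z)\le u_k\sum_{z\ne0}U(z)$.
\end{itemize}
Because $U(z)$ depends only on $\operatorname{rad}z$, the last sum equals $\Sigma U$. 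The count $|\mathbb A_r|$ is elementary: choose $k$ positive and $l$ negative coordinates, then compositions of $r$ into $k$ and into $l$ positive parts. Taking the minimum of the two bounds termwise and summing over $k$ gives $\eta'$.

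\textbf{Parts (iv) and (iii).} Part (iv) is a split of the definition of $\Phi$. The event $\{\mathfrak t=1\}$ has probability $\mathbf P(0)=1/d$ and carries weight $(\rho_ep)^{-1}$. On $\{2\le\mathfrak t<\infty\}=\{Z_1\ne0,\ \mathfrak t<\infty\}$, the weight is $\rho_-^{-1}$ times the product appearing in $\Xi$.

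For (iii), condition on $Z_1=y\ne0$, which has probability $1-1/d$. Dropping the indicator $\{\mathfrak t_0=\infty\}$ gives
\[
\mathbf E_0\Phi_\infty\le(1-1/d)\,\sup_{y\ne0}\mathbf E_y\prod_{0\le k<\mathfrak t_0}\bigl(1+U(Z_k)\bigr).
\]
This quantity is finite because $U$ is small far from $0$. Khas'minskii's lemma, with the hitting time $\mathfrak t_0$ and the supremum bound (i), gives $\sup_y\mathbf E_y\prod_{k<\mathfrak t_0}e^{U'}\le1/(1-\eta')$. Here we use the form for multiplicative functionals $\prod(1+U)$: expand the product as $1+\sum_jU(Z_j)\prod_{k>j}(1+U(Z_k))$ and apply the Markov property at $j$, which gives $\sup\le1+\eta'\sup$.

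\textbf{Part (ii), the main obstacle.} Write
\[
\prod_{1\le k<\mathfrak t}\bigl(1+U(Z_k)\bigr)=1+\sum_{1\le j<\mathfrak t}U(Z_j)\prod_{j<k<\mathfrak t}\bigl(1+U(Z_k)\bigr).
\]
The term $1$ contributes $\mathbf P_0(Z_1\ne0,\mathfrak t<\infty)=F_d-1/d$ by Lemma~\ref{lem:ov-green}(iii). For the remainder, apply the Markov property at $j$ and set
\[
h(y)=\mathbf E_y\Bigl[\mathbf1\{\mathfrak t_0<\infty\}\prod_{1\le k<\mathfrak t_0}\bigl(1+U(Z_k)\bigr)\Bigr].
\]
Then the remainder is at most $U_1\sum_{y\ne0}\mathbf E_0\#\{1\le j<\mathfrak t:Z_j=y\}\,h(y)=U_1\sum_{y\ne0}\varphi(y)h(y)$, by Lemma~\ref{lem:ov-green}(v).

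It therefore suffices to show $h(y)\le\varphi(y)/(1-\eta')$. Lemma~\ref{lem:ov-green}(iv) then gives $\sum_{y\ne0}\varphi(y)^2=G^{(2)}_d/G_d^2-1$, which is the claimed bound. The obstacle is proving $h\le\varphi/(1-\eta')$, which is Khas'minskii's lemma for the walk conditioned to hit $0$. I would iterate
\[
h(y)\le\varphi(y)+\sum_zg_0(y,z)U(z)h(z)
\]
and try to show $\sum_zg_0(y,z)U(z)\varphi(z)\le\eta'\varphi(y)$. The first attempt is to use the reversal symmetry $g_0(y,z)=g_0(z,y)$ of Lemma~\ref{lem:ov-green}(vi), together with a last-exit decomposition, to bound the $\varphi$-transformed killed kernel at each time $k$ by $\min(U_1,u_kU\text{-sum})$, exactly as in (i).

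If that fails, I would use a weaker but sufficient bound. Split the excursion at the first time $j$ at which a $U$-factor is kept. Bound the continuation by $\sup_y\mathbf E_y\prod_{k<\mathfrak t_0}(1+U)\le1/(1-\eta')$ from (iii), and reserve the hitting indicator for the last factor. This places $\varphi$ at the final visit instead of the first, and the symmetric roles of the two ends still produce the $\sum\varphi^2$ form.
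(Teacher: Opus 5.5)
\textbf{The gap is in (ii).} Parts (i) and (iv) are correct and follow the paper's arguments. Part (ii) is not proved: you reduce it to the pointwise bound $h(y)\le\varphi(y)/(1-\eta')$ and leave that as something to "try to show".

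That bound is Khas'minskii's lemma for the walk conditioned to hit $0$. It needs $\sup_{y\ne0}\varphi(y)^{-1}\sum_z g_0(y,z)U(z)\varphi(z)\le\eta'$, which does not follow from (i). The time-by-time estimate you propose to transfer, bounding the $\varphi$-transformed killed kernel at time $k$ by $\min(U_1,u_k\Sigma U)$, is false:
\begin{itemize}
\item $Z$ can reach $0$ only from the shell $\operatorname{rad}=1$, where $U=U_1$.
\item $\operatorname{rad}$ changes by at most one per step.
\item So the conditioned walk started at $\operatorname{rad}y=R$ collects at least $U_1$, at some time $k\ge R-1$.
\item But $\sum_{k\ge R-1}u_k\Sigma U\to0$ as $R\to\infty$.
\end{itemize}
The factor $\varphi(z)/\varphi(y)$ destroys the estimate $\mathbf P_y(Z_k=z)\le u_k$ on which $\eta'$ rests. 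Your fallback does not close the gap either. Bounding the middle of the excursion by $\sup_y\mathbf E_y\prod(1+U)$ leaves $\varphi$ at only one end, and nothing in the sketch recovers $\sum\varphi^2$.

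\textbf{The missing idea: bound a quadratic form, not a ratio.} Expand $\prod_{1\le k<\mathfrak t}(1+U(Z_k))$ over all $1\le k_1<\cdots<k_m<\mathfrak t$, and apply the Markov property at every $k_i$:
\begin{itemize}
\item the first visit contributes $\varphi(y_1)$, by Lemma~\ref{lem:ov-green}(v);
\item each later visit contributes at most $g_0(y_{i-1},y_i)$;
\item the return contributes at most $\varphi(y_m)$.
\end{itemize}
The $m$-th term is then at most $\langle v,\mathcal A^{m-1}v\rangle$ on $\ell^2(\mathbb A\setminus\{0\})$. Here $v=U^{1/2}\varphi$ and $\mathcal A(y,y')=U(y)^{1/2}g_0(y,y')U(y')^{1/2}$, which is symmetric by Lemma~\ref{lem:ov-green}(vi). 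Part (i) says exactly that $\sum_{y'}\mathcal A(y,y')U(y')^{1/2}\le\eta'U(y)^{1/2}$, so the Schur test gives $\|\mathcal A\|\le\eta'$. Together with $\|v\|^2\le U_1\sum_{y\ne0}\varphi(y)^2$, the geometric series yields (ii). The symmetry of $g_0$ enters through the operator norm. That is what lets $\varphi$ appear at both ends without any estimate for the conditioned walk.

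\textbf{A smaller point in (iii).} From $S\le1+\eta'S$ you may conclude $S\le1/(1-\eta')$ only if $S<\infty$. The remark that $U$ is small far from $0$ does not establish this. One repair is to truncate the product at time $\min(\mathfrak t_0,n)$. The truncated suprema $S_n$ are at most $(1+U_1)^n$ and nondecreasing in $n$, and they satisfy $S_n\le1+\eta'S_n$; then let $n\to\infty$ by monotone convergence. Alternatively, do as the paper does: expand into $m$-fold sums and bound each by $\eta'^m$ inductively from (i), which needs no a priori finiteness.
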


\begin{proof}
(i) By Lemma~\ref{lem:ov-green}(i),
$\mathbf E_y[U(Z_k)\mathbf1\{Z_k\ne0\}]=\sum_{z\ne0}\mathbf P_y(Z_k=z)U(z)
\le\min(U_1,u_k\Sigma U)$; drop the killing and sum over $k$.

(ii) Expand $\prod_{1\le k<\mathfrak t}(1+U(Z_k))$ as a sum over
$1\le k_1<\cdots<k_m<\mathfrak t$ of $\prod_iU(Z_{k_i})$. The term $m=0$ contributes
$F_d-1/d$ by Lemma~\ref{lem:ov-green}(iii). For $m\ge1$ use the Markov property
at $k_1,\ldots,k_m$: the first visit contributes $\varphi(y_1)$ by
Lemma~\ref{lem:ov-green}(v), each later visit at most $g_0(y_{i-1},y_i)$, and
the return at most $\varphi(y_m)$. Hence the $m$-th term is at most
$\langle v,\mathcal A^{m-1}v\rangle$ on $\ell^2(\mathbb A\setminus\{0\})$, with
$v(y)=U(y)^{1/2}\varphi(y)$ and
$\mathcal A(y,y')=U(y)^{1/2}g_0(y,y')U(y')^{1/2}$, a symmetric kernel by
Lemma~\ref{lem:ov-green}(vi). By (i),
$\sum_{y'}\mathcal A(y,y')U(y')^{1/2}\le\eta'U(y)^{1/2}$, and the Schur test
\cite{Schur1911} gives $\|\mathcal A\|\le\eta'$: sum the inequality
$2|f(y)g(y')|\le|f(y)|^2U(y')^{1/2}/U(y)^{1/2}+|g(y')|^2U(y)^{1/2}/U(y')^{1/2}$
against the symmetric kernel $\mathcal A$. Finally
$\|v\|^2\le U_1\sum_{y\ne0}\varphi(y)^2=U_1(G^{(2)}_d/G_d^2-1)$ by
Lemma~\ref{lem:ov-green}(iv); sum the geometric series.

(iii) $\Phi_\infty\le\mathbf1\{Z_1\ne0\}\prod_{1\le k<\mathfrak t}(1+U(Z_k))$. Apply the
Markov property at time $1$, where $\mathbf P_0(Z_1\ne0)=1-1/d$, and then (i)
successively at $k_1<\cdots<k_m$:
$\mathbf E_y\sum_{k_1<\cdots<k_m<\mathfrak t_0}\prod_iU(Z_{k_i})\le\eta'^m$. This is
Khas'minskii's lemma \cite{Khasminskii1959}.

(iv) $\mathbf P_0(\mathfrak t=1)=1/d$.
\end{proof}

The first-order correction to $\Xi$ carries two factors of the re-meeting
probability $\varphi$, one for reaching $y$ within a returning excursion and one
for returning afterwards. This is why the boost costs so little: at most
$0.0019$ in the score in dimension $12$ ($0.0013$ at $t=3/25$), at most
$0.0034$ in dimension $10$, and $0.0067$ at $(d,t)=(9,3/20)$.

\begin{theorem}[Second-moment criterion]\label{thm:ov-criterion}
Assume \eqref{eq:ov-H}, let $\rho_-$, $\kappa$, $\bar t$, $\lambda_*<1$ and
$\rho_e$ be as above, and let $0<p\le1$. Define
\[
 \mathcal S=\frac1{d\rho_ep}+\frac{F_d-1/d}{\rho_-}
 +\frac{U_1\,(G^{(2)}_d/G_d^2-1)}{(1-\eta')\,\rho_-}.
\]
If $\eta'<1$ and $\mathcal S<1$, then for all $n\ge1$ and $L\ge2n+1$ (and
$L\ge40$ if $\rho_e=\rho'_c$),
\[
 P^{\rm aux}_L(\mathcal O_n)\ \ge\ e^{-\kappa\varepsilon_{n,L}}\,
 \frac{\rho_-(1-\eta')(1-\mathcal S)}{1-1/d}.
\]
The conclusion persists if $F_d$, $G^{(2)}_d$, $U_1$, $\Sigma U$ and $\eta'$ are
replaced by upper bounds and $G_d$ in $G^{(2)}_d/G_d^2$ by a lower bound. We
write $\mathcal S$, $\mathcal S_c$ and $\mathcal S'_c$ for the score with
$\rho_e=\rho_-$, $\rho_c$ and $\rho'_c$, respectively.
\end{theorem}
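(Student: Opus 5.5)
The plan is to chain the three estimates already in place: the weighted Paley--Zygmund bound (Lemma~\ref{lem:ov-pz}), the walk-weight bound \eqref{eq:ov-walk-weights}, and the renewal bound (Proposition~\ref{prop:ov-renewal}) with the cycle weights of Lemma~\ref{lem:ov-cycle}.

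\emph{Step 1.} Fix $n\ge1$ and $L\ge2n+1$, with $L\ge40$ when $\rho_e=\rho'_c$. Lemma~\ref{lem:ov-pz} gives $P^{\rm aux}_L(\mathcal O_n)\ge1/\E W_n^2$. Inequality \eqref{eq:ov-walk-weights} then gives
\[
 \E W_n^2\le e^{\kappa\varepsilon_{n,L}}\mathbf E_0\prod_{k=0}^n\varpi_k .
\]
That inequality already combines the pair-ratio bound (Lemma~\ref{lem:ov-pair-ratio}), the boost bound (Lemma~\ref{lem:ov-boost}), and the shared-vertex factor from FKG or from Lemma~\ref{lem:ov-shared} or~\ref{lem:ov-covariance}, according to the choice of $\rho_e$.

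\emph{Step 2.} Lemma~\ref{lem:ov-cycle}(iv) gives $\mathbf E_0\Phi=1/(d\rho_ep)+\Xi/\rho_-$. Insert the bound of Lemma~\ref{lem:ov-cycle}(ii) for $\Xi$, which is valid since $\eta'<1$. This yields exactly $\mathbf E_0\Phi\le\mathcal S<1$. Proposition~\ref{prop:ov-renewal} then applies. Combined with Lemma~\ref{lem:ov-cycle}(iii), it gives
\[
 \sup_n\mathbf E_0\prod_{k=0}^n\varpi_k\le\frac{1-1/d}{\rho_-(1-\eta')(1-\mathcal S)} .
\]
The right side is finite because $\mathcal S<1$ and $\eta'<1$. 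Inverting and multiplying by $e^{-\kappa\varepsilon_{n,L}}$ gives the displayed lower bound.

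\emph{Step 3: replacement by bounds.} The final sentence of the statement is a monotonicity check. The quantities $\mathcal S$ and the right side depend on the constants as follows.
\begin{itemize}
\item $\mathcal S$ is nondecreasing in $F_d$, in $G^{(2)}_d/G_d^2$, and in $U_1$.
\item $\mathcal S$ is nondecreasing in $\eta'$, through $1/(1-\eta')$.
\item $\eta'$ is nondecreasing in $U_1$ and in $\Sigma U$.
\end{itemize}
Lower bounds on $G_d$ enter only through the ratio $G^{(2)}_d/G_d^2$, and raise it. Hence replacing these constants by the stated upper (respectively lower) bounds gives a valid $\widehat{\mathcal S}\ge\mathcal S$ and $\hat\eta'\ge\eta'$. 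The conditions $\hat\eta'<1$ and $\widehat{\mathcal S}<1$ still imply the originals. The resulting lower bound $\rho_-(1-\hat\eta')(1-\widehat{\mathcal S})/(1-1/d)$ is smaller than the true one, so the conclusion persists.

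I expect no real obstacle here, since the substance sits in the earlier lemmas. The one step to check carefully is the hypotheses of \eqref{eq:ov-walk-weights}. When $\rho_e\in\{\rho_c,\rho'_c\}$, the condition $\rho_-^2\ge c/4$ of Lemma~\ref{lem:ov-shared} must hold. In addition, $\rho_-\le\rho_L$ and $\kappa_L\le\kappa$ must hold uniformly in $L$, which follows from Lemma~\ref{lem:ov-meanfield} under \eqref{eq:ov-H}.
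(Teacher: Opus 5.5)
Your proposal is correct and matches the paper's proof step for step. Lemma~\ref{lem:ov-cycle}(ii) and~(iv) give $\mathbf E_0\Phi\le\mathcal S<1$. Proposition~\ref{prop:ov-renewal}, Lemma~\ref{lem:ov-cycle}(iii) and \eqref{eq:ov-walk-weights} then bound $\E W_n^2$. Lemma~\ref{lem:ov-pz} finishes the argument, and the replacement clause is the monotonicity check you carried out.
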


\begin{proof}
Lemma~\ref{lem:ov-cycle}(ii) and (iv) give $\mathbf E_0\Phi\le\mathcal S<1$, so
Proposition~\ref{prop:ov-renewal}, Lemma~\ref{lem:ov-cycle}(iii) and
\eqref{eq:ov-walk-weights} give
$\E W_n^2\le e^{\kappa\varepsilon_{n,L}}(1-1/d)/(\rho_-(1-\eta')(1-\mathcal S))$.
Apply Lemma~\ref{lem:ov-pz}. Each term of $\mathcal S$ and $\eta'$ is monotone in
the quantities named.
\end{proof}

\begin{corollary}[Uniformity in the torus size]\label{cor:ov-uniform}
Under the hypotheses of Theorem~\ref{thm:ov-criterion}, with $\mathcal S_\bullet$
the score used and $\theta_*$ as in \eqref{eq:ov-theta}, put
$L_0(n)=\max\{40,\ 2n+1+\lceil\log(100\kappa n^2/(1-\alpha_{\rm D}))/
\log(1/\alpha_{\rm D})\rceil^+\}$. Then $P^{\rm aux}_L(\mathcal O_n)\ge\theta_*$
for every $n\ge1$ and every $L\ge L_0(n)$.
\end{corollary}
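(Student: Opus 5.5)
The corollary follows from Theorem~\ref{thm:ov-criterion} once the torus-size correction $e^{-\kappa\varepsilon_{n,L}}$ is bounded below by $e^{-1/100}$, and the factor $\rho_-(1-\eta')(1-\mathcal S_\bullet)/(1-1/d)$ is recognized as the remaining part of $\theta_*$ in \eqref{eq:ov-theta}. I would fix $n\ge1$ and $L\ge L_0(n)$, and first check the side conditions of the theorem. The first is $L\ge2n+1$: when the ceiling term in $L_0(n)$ is nonnegative this is immediate, and otherwise $\lceil\cdot\rceil^+=0$ still gives $L\ge2n+1$. The second is $L\ge40$, needed only when $\rho_e=\rho'_c$, and it is built into the maximum. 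With these, Theorem~\ref{thm:ov-criterion} applies with the score $\mathcal S_\bullet$ and gives
\[
 P^{\rm aux}_L(\mathcal O_n)\ \ge\ e^{-\kappa\varepsilon_{n,L}}\,
 \frac{\rho_-(1-\eta')(1-\mathcal S_\bullet)}{1-1/d}.
\]

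The only estimate is $\kappa\varepsilon_{n,L}\le1/100$, where $\varepsilon_{n,L}=n^2\alpha_{\rm D}^{L-n}/(1-\alpha_{\rm D})$ and $0<\alpha_{\rm D}<1$. Write $m=\lceil\log(100\kappa n^2/(1-\alpha_{\rm D}))/\log(1/\alpha_{\rm D})\rceil^+$, so that $L-n\ge n+1+m\ge m$. I split into two cases according to the sign of the logarithm. If $100\kappa n^2/(1-\alpha_{\rm D})\ge1$, then $\alpha_{\rm D}^m\le(1-\alpha_{\rm D})/(100\kappa n^2)$, and monotonicity of $\alpha_{\rm D}^{j}$ in $j$ gives $\kappa\varepsilon_{n,L}\le\kappa n^2\alpha_{\rm D}^m/(1-\alpha_{\rm D})\le1/100$. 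Otherwise $\kappa n^2/(1-\alpha_{\rm D})<1/100$ already, and the bound holds because $\alpha_{\rm D}^{L-n}\le1$. The degenerate case $\kappa=0$ is trivial.

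Combining the two displays gives $P^{\rm aux}_L(\mathcal O_n)\ge\theta_*$. I see no real obstacle. The one point needing care is the ceiling with the positive part, which is exactly why the argument splits into the two cases above. Finally, $L_1(n)\ge L_0(n)$ because of the extra entry $2n+4$, so the corollary also covers the range $L\ge L_1(n)$ used in Proposition~\ref{prop:overlap-central}.
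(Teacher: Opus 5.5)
Your proposal is correct and matches the paper's proof. For $L\ge L_0(n)$ you have $L\ge 2n+1$, $L\ge 40$ and $\kappa\varepsilon_{n,L}\le 1/100$, so Theorem~\ref{thm:ov-criterion} yields $\theta_*$; your case split on the sign of the logarithm just spells out the positive part in $L_0(n)$, which the paper's one-line proof leaves implicit.
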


\begin{proof}
For $L\ge L_0(n)$ we have $L\ge2n+1$, $L\ge40$ and
$\alpha_{\rm D}^{L-n}\le(1-\alpha_{\rm D})/(100\kappa n^2)$, that is
$\kappa\varepsilon_{n,L}\le1/100$.
\end{proof}

\subsection{The certified constants}\label{sec:ov-constants-summary}

At each row of Tables~\ref{tab:overlap-constants} and~\ref{tab:ov-noisy},
finitely many inequalities are decided in exact rational arithmetic, and
rigorous rational enclosures are used for finitely many constants. They are
exactly the hypotheses of Propositions~\ref{prop:ov-vertex}
and~\ref{prop:ov-holley-line}, Lemma~\ref{lem:ov-meanfield} and
Theorem~\ref{thm:ov-criterion}, with the sharpened local inputs of
Section~\ref{sec:ov-noisy} in dimension $9$. Two independently written
programs certify every row. The list of conditions, the tables and the way
the constants are computed are in Section~\ref{sec:ov-constants} of
Appendix~\ref{app:certificates}.

\begin{remark}[Dimension nine and below; numerical observations]\label{rem:ov-dim9}
The first term of $\mathcal S$ requires $d\rho_ep>1$: two oriented paths that
meet share their next step with probability $1/d$, and each shared step costs
$(\rho_ep)^{-1}$. When $K'=0$ the criterion is also necessary for a bounded
second moment, as noted after Proposition~\ref{prop:ov-renewal}. With the
certificate families of Sections~\ref{sec:ov-floor}--\ref{sec:ov-holley},
floating-point scans at $d=9$ and $t\in\{0.13,0.14,\ldots,0.17\}$ give
$9\rho_cp$ between $0.905$ and $0.985$ and values of $\mathcal S_c$ of at least
$1.059$. The sharpened inputs of Section~\ref{sec:ov-noisy} raise $9\rho_cp$ to
at least $1.058$ at $t=3/20$, an exact bound, and certify the rows of
Table~\ref{tab:ov-noisy}. Their floors exceed the aligned frozen value with
$\beta$ inside $R$, which bounds every floor that is uniform over
$\mathcal N_{2d}$ (Remark~\ref{rem:ov-nesting}); no such floor could have given
them. In dimension $8$, the local inputs certified in
Section~\ref{sec:macrostep} give $8\rho_cp<0.99$, so the oriented criterion with
$\rho_e=\rho_c$ fails there. Section~\ref{sec:macrostep} changes the path
family instead.
\end{remark}

\subsection{Assembly and the passage to periodic limits}\label{sec:ov-assembly}

The quantile coupling below turns the conditional bond floor into an independent
bond field. It is stated for a general law, so that it applies both on the torus
and in a conditioned limit law. For $\Lambda\subseteq V(G)$ let
$E_+^\Lambda(q)$ be the set of edges $xy$ with $x,y\in\Lambda$ and $q_x=q_y=1$.

\begin{lemma}[Overlap-revealed exploration]\label{lem:ov-exploration}
Let $G$ be $\mathbb T_L$ with $L\ge2n+2$, or $\Z^d$, and let $P$ be a law of
$(q,B)\in\{\pm1\}^{V(G)}\times\{0,1\}^{E(G)}$ satisfying
\eqref{eq:ov-floor-undivided} with $P$ in place of $\nu_L$ for every ordered
pair $(u,v)$ of neighbors in $B_n$, every $F\subseteq N(v)\setminus\{u\}$ and
every $H\in\mathcal E_v^+$ with $H\subseteq\{q_u=q_v=1\}$. If the law of
$q|_{B_n}$ dominates a law $\mu$ on $\{\pm1\}^{B_n}$, in the sense that some
coupling of $X\sim\mu$ with $q|_{B_n}$ has $X\le q|_{B_n}$ almost surely, then
\[
 P\bigl(q_o=1,\ o\blue\partial B_n\bigr)\ \ge\
 \bigl(\mu\otimes\mathrm{Ber}(p)^{\otimes E(B_n)}\bigr)(\mathcal O_n).
\]
Here $\mathcal O_n$ is the event of Section~\ref{sec:ov-second-moment}, which
depends only on $(X|_{B_n},Y|_{E(B_n)})$.
\end{lemma}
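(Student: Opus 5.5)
The plan is to run a bond exploration of the blue cluster of $o$ inside $\{q=1\}\cap B_n$, with the whole overlap field revealed at the start, and to couple each queried blue indicator by a quantile coupling with an independent Bernoulli($p$) variable. Failures will be allowed to accumulate at a target, but successes never, in the spirit of Remark~\ref{rem:ov-successes}. Concretely, sample $(q,B)\sim P$. Then sample $X$ from the conditional law, given $q|_{B_n}$ only, of the dominating coupling, so that $X\le q|_{B_n}$ and $X$ is conditionally independent of $B$ given $q$. Take iid uniforms $(U_e)_{e\in E(B_n)}$ independent of everything. Reveal $q$ entirely (if $q_o=-1$ the claim is only needed on $\{X_o=1\}\subseteq\{q_o=1\}$, so we may assume $q_o=1$).

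\emph{The exploration.} Start with reached set $\{o\}$ and a FIFO queue. When a reached $u$ has a neighbor $v\in B_n$ with $q_v=1$ that is not yet reached, query $B_{uv}$. Let $r$ be its conditional probability given the current transcript, and realize $B_{uv}=\mathbf 1\{U_{uv}\le r\}$ by the quantile of its conditional law. If this edge is open, add $v$ to the reached set. Unlike Proposition~\ref{prop:exploration}, a target is retried from every later reached neighbor.

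The key check is the form of the history at a query of $uv$. Every earlier queried edge either misses $\starv(v)$, or joins $v$ to some reached $w\ne u$ and failed, since a success would have reached $v$. So the transcript is an event $H\cap\mathsf F_v(F)$ with $F\subseteq N(v)\setminus\{u\}$ and $H\in\mathcal E_v^+$, $H\subseteq\{q_u=q_v=1\}$; here $H$ records $q$ and the queried indicators off the star. The variables $X$ and the past uniforms add no information about $B$ beyond the revealed indicators and $q$, by the same deferred-decisions argument as in Proposition~\ref{prop:exploration}. Hence \eqref{eq:ov-floor-undivided} gives $r\ge p$.

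\emph{The coupling.} Set $Y_e=\mathbf 1\{U_e\le p\}$ for every $e\in E(B_n)$, queried or not. On queried edges $Y_e\le B_e$ because $p\le r$. Since the $U_e$ are independent of $(q,X)$ and each queried $U_e$ is fresh when used, $Y$ is iid Bernoulli($p$) and independent of $X$. Thus $(X,Y)\sim\mu\otimes\mathrm{Ber}(p)^{\otimes E(B_n)}$.

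\emph{The containment.} On $\mathcal O_n$ there is a path $o=x_0,\dots,x_m\in\partial B_n$ inside $B_n$ with $X=1$ at all vertices, hence $q=1$, and $Y=1$ on all edges. By induction each $x_i$ is reached. When $x_{i-1}$ is processed, either $x_i$ is already reached, or the edge $x_{i-1}x_i$ is queried with $U\le p\le r$, so $B=1$ and $x_i$ is reached. The reached vertices are joined to $o$ by open blue edges inside $B_n$. Therefore $\mathcal O_n\subseteq\{q_o=1,\ o\blue\partial B_n\}$ under the coupling, which proves the inequality. The condition $L\ge2n+2$ only ensures that $B_n$ and the stars used have their $\Z^d$ geometry.

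I expect the main obstacle to be the bookkeeping in the history check. One must confirm that outgoing queries from $v$ are never made before $v$ is reached, so that no success on $\starv(v)$ is ever conditioned on. One must also confirm that conditioning on $X$ and the uniforms does not leave the class of events covered by \eqref{eq:ov-floor-undivided}. Conditional independence of $X$ and $B$ given $q$ is exactly what makes the second point work.
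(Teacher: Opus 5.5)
Your proposal is correct and is essentially the paper's proof. It uses the same bond exploration of $E_+^{B_n}(q)$, querying each edge once, so that a target accumulates failures but never an observed success. It identifies each history as $H\cap\mathsf F_v(F)$ with $H\in\mathcal E_v^+$, and uses the same quantile variables $Y_e=\mathbf 1\{U_e\le p\}$, the same conditional kernel for $X$ given $q|_{B_n}$, and the same containment argument.

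The one thing to tidy is the coupling. As written, you cannot both sample $(q,B)\sim P$ first and set $B_{uv}=\mathbf 1\{U_{uv}\le r\}$ with $U$ independent of everything. Either generate the queried indicators from $U$ and complete $B$ from its conditional law afterwards. Or, as the paper does, keep $B$ and draw $\tilde U_e$ uniformly from $[0,r_e]$ or $(r_e,1]$ according to $B_e$, then check that $\tilde U$ is iid uniform and independent of $q|_{B_n}$. The paper also reveals only $q|_{B_n}$ rather than all of $q$, which keeps every transcript a finite event even on $\Z^d$.
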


\begin{proof}
Everything happens in the finite vector $(q|_{B_n},B|_{E(B_n)})$ and independent
auxiliary randomness. On $\{q_o=1\}$, explore the blue cluster of $o$ in
$E_+^{B_n}(q)$. Repeatedly pick, by a fixed rule depending on the transcript, an
unqueried edge $e=uv\in E_+^{B_n}(q)$ with $u$ reached and $v$ unreached; let
$r_e$ be the conditional probability of $\{B_e=1\}$ given $q|_{B_n}$ and the
indicators queried so far; reveal $B_e$; and draw $\tilde U_e$ uniformly from
$[0,r_e]$ if $B_e=1$ and from $(r_e,1]$ if $B_e=0$. If $B_e=1$, add $v$. When no
such edge remains, and on $\{q_o=-1\}$, give the unqueried edges fresh
independent uniforms.

Every earlier query in $\starv(v)$ failed, for otherwise $v$ would be reached.
The conditioning event is therefore $H\cap\mathsf F_v(F)$, where $F$ is the set
of earlier-queried neighbors of $v$ and $H\in\mathcal E_v^+$ fixes $q|_{B_n}$ and
the off-star queried indicators, with $H\subseteq\{q_u=q_v=1\}$; the earlier
auxiliary uniforms carry no further information. The hypothesis gives $r_e\ge p$.
Given the transcript, $B_e$ is Bernoulli($r_e$), so $\tilde U_e$ is uniform on
$[0,1]$ and independent of the transcript. Since
$B_e=\mathbf1\{\tilde U_e\le r_e\}$ on queried edges, the choice of the next
edge is a function of $q|_{B_n}$ and of the uniforms already revealed, so for
every $\eta$
the conditional law of $\tilde U$ given $q|_{B_n}=\eta$ has density one on
$[0,1]^{E(B_n)}$: $\tilde U$ is iid uniform and independent of $q|_{B_n}$. Put
$Y_e=\mathbf1\{\tilde U_e\le p\}$. On queried edges $Y_e=1$ forces
$\tilde U_e\le p\le r_e$, hence $B_e=1$. If a $Y$-open path in $E_+^{B_n}(q)$
from $o$ left the explored cluster through an edge $xy$, the exploration would
have queried $xy$ and found $B_{xy}=0$, hence $Y_{xy}=0$. So the $Y$-cluster of
$o$ in $E_+^{B_n}(q)$ is contained in the blue cluster of $o$ inside $B_n$.

Finally let $X$ be drawn from the conditional kernel of a monotone coupling of
$\mu$ and the law of $q|_{B_n}$, given $q|_{B_n}$, with independent randomness.
Then $X\le q|_{B_n}$, and since $\tilde U$ is independent of $(q|_{B_n},X)$,
the pair $(X,Y)$ has law $\mu\otimes\mathrm{Ber}(p)^{\otimes E(B_n)}$. On
$\mathcal O_n(X,Y)$ we have $X_o=1$, hence $q_o=1$. The witnessing path has
$q=1$ at its vertices and is $Y$-open, so it lies in the $Y$-cluster of $o$ in
$E_+^{B_n}(q)$, and its endpoint on $\partial B_n$ is joined to $o$ by a blue
path inside $B_n$.
\end{proof}

\paragraph{Rows and their three inputs.}
From now on a \emph{row} is a row of Table~\ref{tab:overlap-constants} or
Table~\ref{tab:ov-noisy}, or one of the rows of Section~\ref{sec:macrostep}
(Table~\ref{tab:ms-local}). Besides (C3)--(C4), which hold for every row, the
rest of this section uses a row only through the following three inputs, with
the row's own constants $\theta_*$ and $L_0$.
\begin{itemize}
\item[(R1)] \emph{Bond floor.} On every torus $\mathbb T_L$ with $L\ge4$, the
bound \eqref{eq:ov-floor-undivided} holds for all $(u,v)$, $F$ and $H$ as in
Lemma~\ref{lem:ov-torus-floor}. This is Lemma~\ref{lem:ov-torus-floor} with
(C1), or with (N1) and Lemma~\ref{lem:ov-pair}.
\item[(R2)] \emph{Holley line.} On every torus $\mathbb T_L$ with $L\ge4$, the
conclusion of Lemma~\ref{lem:ov-torus-odds} holds. This is
Proposition~\ref{prop:ov-holley-line} with (C2), or
Corollary~\ref{cor:ov-sym-holley} with (N2).
\item[(R3)] \emph{Uniform torus bound.} $P^{\rm aux}_L(\mathcal O_n)\ge\theta_*$
for every $n\ge1$ and every $L\ge L_0(n)$. This is
Corollary~\ref{cor:ov-uniform} with (C3)--(C6) and
Lemma~\ref{lem:ov-meanfield}, or Theorem~\ref{thm:ms-engine} for the rows of
Section~\ref{sec:macrostep}, where $L_0=L_0^{\rm M}$.
\end{itemize}
In every case $L_1(n)=\max\{L_0(n),2n+4\}$ and $L_1(n)\ge8$; for the rows of
this section this is \eqref{eq:ov-L1}.

\begin{proof}[Proof of Proposition~\ref{prop:overlap-central}]
Fix a row and let $n\ge1$ and $L\ge L_1(n)$; then $L\ge2n+4$, $L\ge4$ and
$L\ge L_0(n)$. By (R1), the law $\nu_L$ satisfies the bond floor
\eqref{eq:ov-floor-undivided}. By (R2), the conclusion of
Lemma~\ref{lem:ov-torus-odds} with $H=\{q_{V\setminus v}=\xi\}$ gives the
hypothesis of Lemma~\ref{lem:ov-holley} for the law of $q$ under $\nu_L$, with
$\Lambda=\bar\Lambda=V$. Hence $q$ dominates $\mu_L$ on $\mathbb T_L$, and restricting
the coupling, $q|_{B_n}$ dominates the $B_n$-marginal of $\mu_L$.
Lemma~\ref{lem:ov-exploration} gives
\[
 \nu_L\bigl(q_o=1,\ o\blue\partial B_n\bigr)\ \ge\ P^{\rm aux}_L(\mathcal O_n),
\]
and by (R3) the right side is at least $\theta_*$. No
conditioning on $q_o$ and no factor $\frac12$ is involved: the event
$\mathcal O_n$ already forces $X_o=1$, and $\rho_-<\frac12$ is the price of that
requirement.

For $s=-1$, the map $\Theta(J,\sigma,\tau,A)=(J,\sigma,-\tau,A)$ preserves
$\nu_L$, because $\mu_{\mathbb T_L,J,\beta}(-\tau)=\mu_{\mathbb T_L,J,\beta}(\tau)$ in zero
field. It leaves every blue indicator unchanged and maps $q$ to $-q$, as in the
proof of Corollary~\ref{cor:direct-cmr}. Hence
$\nu_L(q_o=-1,\ o\blue\partial B_n)=\nu_L(q_o=1,\ o\blue\partial B_n)\ge\theta_*$.
\end{proof}

For $L\ge2n+2$, reduction modulo $L$ is an isomorphism from the induced subgraph
of $\Z^d$ on $B_n$ onto the induced subgraph of $\mathbb T_L$ on its ball of radius
$n$. It is injective because $|x-y|_1\le2n<L$, onto because every point of the
torus ball has a lift of minimal $\ell^1$ norm, and torus edges come from
lattice edges because $x-y=\pm e_i+Lw$ with $w\ne0$ would force
$|x-y|_1\ge L-1>2n$. The event in Proposition~\ref{prop:overlap-central} is
therefore the same function of the same finitely many coordinates on $\mathbb T_L$ and
on $\Z^d$.

For almost-sure coexistence we must repeat the exploration under a limit law
conditioned on an exterior event. Proposition~\ref{prop:as-existence} does not
apply verbatim: its opening bound would have to hold given the full label
exterior at every edge, whereas here the bond floor holds only on
$\{q_u=q_v=1\}$, and the site layer is a dependent Ising field rather than
independent sites. We follow its exterior-conditioning idea and treat the two
layers separately.

\begin{lemma}[Local bounds in every selected limit]\label{lem:ov-limit-local}
Fix a row and let $\nu\in\mathcal L_{d,\beta}$. For every ordered pair $(u,v)$ of
neighbors in $\Z^d$, every $F\subseteq N(v)\setminus\{u\}$ and every
$H\in\mathcal E_v^+$ with $H\subseteq\{q_u=q_v=1\}$, the bond floor
\eqref{eq:ov-floor-undivided} holds with $\nu$ in place of $\nu_L$. For every
$v$ and every $H\in\mathcal E_v$,
$\nu(H\cap\{q_v=1\})\ge\E_\nu[\mathbf1_H\gamma_v(1\mid q_{N(v)})]$. Both
statements remain true for $\nu(\cdot\mid T)$ whenever $\nu(T)>0$ and
$T\in\mathcal E_v$ for every $v$.
\end{lemma}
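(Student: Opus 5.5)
The plan is to write both bounds in undivided form as inequalities between expectations of local observables, prove them on the torus, pass them to the selected limit for cylinder tests, and then extend to the full exterior $\sigma$-fields by a monotone-class argument. On $\mathbb T_L$ with $L\ge4$, (R1) and (R2) give the bond floor \eqref{eq:ov-floor-undivided} and the Holley bound of Lemma~\ref{lem:ov-torus-odds} for every $H$ in the torus versions of $\mathcal E_v^+$ and $\mathcal E_v$. Fix a finite box $\Lambda\ni v$ containing $N(v)\cup N(u)$. Take $H$ a cylinder event determined by $(q_x)_{x\in\Lambda}$ and $(B_e)_{e\subseteq\Lambda,\,e\notin\starv(v)}$; for $\mathcal E_v$ it does not involve $q_v$, for $\mathcal E_v^+$ it may. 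Once $L\ge2\,\mathrm{diam}(\Lambda)+2$, the isomorphism of balls noted after the proof of Proposition~\ref{prop:overlap-central} identifies $H$ with a torus event in the corresponding $\sigma$-field. The two sides of \eqref{eq:ov-floor-undivided} are then probabilities of cylinder events, and the Holley inequality compares $\nu_L(H\cap\{q_v=1\})$ with $\E_{\nu_L}[\mathbf1_H\gamma_v(1\mid q_{N(v)})]$, where $\gamma_v(1\mid\cdot)$ is a function of finitely many coordinates. Both are therefore continuous under local convergence, and passing along the subsequence defining $\nu$ gives the inequalities for $\nu$ and every such cylinder $H$.

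Next we extend from cylinders to all of $\mathcal E_v^+$ and $\mathcal E_v$. We rewrite each inequality as a statement that the finite signed measure
\[
 H\mapsto\nu\bigl(H\cap\mathsf F_v(F)\cap\{q_u=q_v=1\}\cap\{B_{uv}=1\}\bigr)-p\,\nu\bigl(H\cap\mathsf F_v(F)\cap\{q_u=q_v=1\}\bigr)
\]
is nonnegative, and similarly for the Holley difference. The restriction $H\subseteq\{q_u=q_v=1\}$ is handled by intersecting an arbitrary $H'\in\mathcal E_v^+$ with the cylinder $\{q_u=q_v=1\}$. Nonnegativity on the algebra of cylinder events implies nonnegativity on the generated $\sigma$-field: any $H$ can be approximated in $\nu$-measure by cylinders (a monotone class or $\pi$--$\lambda$ argument applied to the difference of two finite measures), and both sides are continuous under such approximation. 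Edge sets not contained in a single box are covered by the union over increasing boxes, and events that would touch $\starv(v)$ never occur in the generating family.

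For the conditioned statement we take $T\in\mathcal E_v$ for every $v$ with $\nu(T)>0$. For a given $v$, if $H\in\mathcal E_v^+$ (respectively $\mathcal E_v$), then $H\cap T$ lies in the same $\sigma$-field, since $\mathcal E_v\subseteq\mathcal E_v^+$. Applying the unconditioned inequalities to $H\cap T$ and dividing by $\nu(T)$ gives both bounds for $\nu(\cdot\mid T)$. The factor $\gamma_v(1\mid q_{N(v)})$ is unaffected because it is a function of coordinates, not of the law.

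The main obstacle is the monotone-class step. We must check that the limit events $\mathsf F_v(F)$ and $\{B_{uv}=1\}$ remain star events outside the conditioning $\sigma$-field, so that cylinder approximation of $H$ never needs star coordinates. We must also check that cylinder events generate exactly $\mathcal E_v^+$, which holds because $\mathcal E_v^+$ is generated by countably many coordinates, each lying in some box.
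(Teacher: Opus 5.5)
Your proposal is correct and follows the paper's proof essentially step for step. You write both bounds as $\E[\mathbf1_H\mathcal Z]\ge0$ for a bounded cylinder function $\mathcal Z$, get them on large tori from (R1)--(R2) after identifying a box with a torus ball, and pass to $\nu$ for cylinder $H$ along the defining subsequence. You then extend to $\mathcal E_v^+$ and $\mathcal E_v$ by the monotone-class argument of Proposition~\ref{prop:local-transfer}, and handle $\nu(\cdot\mid T)$ by applying the bounds to $H\cap T$ and dividing by $\nu(T)$. One point of wording in the extension step: rely on the monotone class theorem over the cylinder algebra, or on approximation in $\nu$-measure, and not on a $\pi$--$\lambda$ argument, since the sets on which one finite measure dominates another need not form a $\lambda$-system.
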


\begin{proof}
Write the bounds as $\E[\mathbf1_H\mathcal Z]\ge0$ with the bounded cylinder
functions $\mathcal Z=\mathbf1\{q_u=q_v=1\}\mathbf1_{\mathsf F_v(F)}(B_{uv}-p)$
and $\mathcal Z=\mathbf1\{q_v=1\}-\gamma_v(1\mid q_{N(v)})$. For a cylinder event
$H$, both sides are expectations of cylinder functions of $(\sigma,\tau,B)$ in a
fixed ball; the torus inequalities (R1) and (R2) hold once the ball is
identified with a torus ball and $L\ge4$, and they pass to $\nu$ along its
subsequence. By dominated convergence, the
events $H$ with $\E_\nu[\mathbf1_H\mathcal Z]\ge0$ form a monotone class. It
contains the algebra of cylinder events of $\mathcal E_v^+$, respectively
$\mathcal E_v$, and therefore the generated $\sigma$-field. This is the argument
of Proposition~\ref{prop:local-transfer}. For the conditioned law apply the
bounds to $H\cap T$ and divide by $\nu(T)$.
\end{proof}

The site layer is compared, inside a finite box with minus boundary, with the
torus Ising law of (R3).

\begin{lemma}[Box versus torus]\label{lem:ov-box-torus}
Let $1\le n<R$, $L\ge2R+4$, $\alpha_{\rm D}<1$ and
$f:\{\pm1\}^{B_n}\to[0,1]$. Then
$|\mu^-_{B_R}(f)-\mu_L(f)|\le|B_n|\,\alpha_{\rm D}^{R-n}/(1-\alpha_{\rm D})$.
\end{lemma}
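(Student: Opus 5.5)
We will use the averaged Dobrushin comparison of Lemma~\ref{lem:ov-dobrushin}, exactly as in the proof of Lemma~\ref{lem:ov-covariance}(i). The plan is to compare the two laws on a common finite vertex set and to observe that their heat-bath kernels differ only next to the boundary of $B_R$.

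\textbf{Setting up a common space.} Since $L\ge2R+4$, reduction modulo $L$ identifies the induced subgraph of $\Z^d$ on $B_{R+1}$ with the torus ball of radius $R+1$. This is the same argument as the isomorphism remark after the proof of Proposition~\ref{prop:overlap-central}, now with $2(R+1)<L$. Take $\Lambda=V(\mathbb T_L)$, and let $\mu=\mu_L$ with its heat-bath kernels $\gamma_x$. Regard $\tilde\mu$ as the law on $\{\pm1\}^V$ that is $\mu^-_{B_R}$ on $B_R$ and identically $-1$ off $B_R$. Its kernels are $\tilde\gamma_x=\gamma^{-}_x$ for $x\in B_R$, namely the Ising kernel with the spins outside $B_R$ read as $-1$; these do not depend on the actual torus spins outside $B_R$. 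For $x\notin B_R$ put $\tilde\gamma_x=\delta_{-1}$. Both invariance hypotheses hold: $\mu^-_{B_R}$ satisfies the DLR equations in $B_R$, and $\tilde\mu$ is deterministic off $B_R$. Each $\gamma_x$ depends on $\omega$ only through its neighbours, and changing one neighbour moves the field by $2K'$. Hence $\mathsf C_{xy}\le t'\mathbf 1\{x\sim y\}$, so $\alpha_{\mathsf C}\le\alpha_{\rm D}<1$ and $\mathsf D\le D^{\mathbb T}$ entrywise, as in Lemma~\ref{lem:ov-pair-ratio}.

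\textbf{Locating the discrepancy.} For $x$ in the interior, $\operatorname{dist}(x,\partial B_R)\ge1$, all neighbours of $x$ lie in $B_R$, so $\tilde\gamma_x=\gamma_x$ and $\bar b_x=0$. Nonzero terms $\bar b_x\le1$ therefore occur only at $x\in\partial B_R$ and at $x\notin B_R$. Take $f$ with values in $[0,1]$ depending only on $B_n$; then $\delta_y(f)\le1$ for $y\in B_n$ and $\delta_y(f)=0$ otherwise. Lemma~\ref{lem:ov-dobrushin} gives
\[
 |\tilde\mu(f)-\mu_L(f)|\le\sum_{y\in B_n}\ \sum_{x:\,|x|_1\ge R}D^{\mathbb T}_{yx}.
\]
For fixed $y$, every $x$ in the sum lies at torus distance at least $R-n$ from $y$. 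Any walk from $y$ to such an $x$ has length at least $R-n$, so, as in Lemma~\ref{lem:ov-resolvent}(iii) but on the torus, counting walks from $y$ of length $k\ge R-n$ weighted by $t'^k$ gives $\sum_x D^{\mathbb T}_{yx}\mathbf 1\{\cdots\}\le\sum_{k\ge R-n}(2dt')^k=\alpha_{\rm D}^{R-n}/(1-\alpha_{\rm D})$. Summing over the $|B_n|$ choices of $y$ gives the claim, because $\tilde\mu(f)=\mu^-_{B_R}(f)$.

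\textbf{Main obstacle.} The delicate step is to check that the hypotheses of Lemma~\ref{lem:ov-dobrushin} allow $\tilde\mu$ to be non-positive and its kernels $\tilde\gamma_x$ to be degenerate off $B_R$. The proof of that lemma uses only the two invariances and the contraction coefficients $\mathsf C$ of the reference kernels $\gamma$, not those of $\tilde\gamma$, so this should be fine. We must also make sure that the distance bound uses graph distance on the torus rather than on $\Z^d$. Since $|y|\le n$ and every $x$ outside the interior has torus distance at least $R$ from $o$, the triangle inequality handles this. The bound $\bar b_x\le1$ is crude but suffices.
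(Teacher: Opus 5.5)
Your proof is correct, and it uses the same tool as the paper: Lemma~\ref{lem:ov-dobrushin} with Ising reference kernels, $\mathsf D\le D^{\mathbb T}$, and a tail of the walk sum. The difference is that you exchange the roles of the two measures.

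The paper works on $\Lambda=B_R$. It takes $\mu^-_{B_R}$ with its heat-bath kernels as the reference, and as $\tilde\mu$ the $B_R$-marginal of $\mu_L$ with the torus conditional kernels $\mu_L(X_x\in\cdot\mid X_{B_R\setminus x})$. These agree with the box kernels on $B_{R-1}$ by the Markov property, and on $\partial B_R$ they are only bounded crudely by $\bar b_x\le1$. The paper then bounds $\sum_{x\in\partial B_R}D^{\mathbb T}_{yx}$ by lifting to $\Z^d$ through Lemma~\ref{lem:ov-resolvent}(i),(iii), which is where $L-(R+n)\ge R-n$ enters.

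You take the whole torus as $\Lambda$, $\mu_L$ as the reference, and the extension of $\mu^-_{B_R}$ by minus spins as $\tilde\mu$. As you observe, this is legitimate: the proof of Lemma~\ref{lem:ov-dobrushin} uses only the two invariances and the contraction coefficients of the reference kernels. So neither positivity of $\tilde\mu$ nor any property of $\tilde\gamma$ is needed. What your route buys is that all kernels are explicit. The tail bound also follows directly from the triangle inequality for torus distance and the count $(2d)^k$ of torus walks, with no lifts.

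The hypothesis $L\ge2R+4$ is used only to embed $B_{R+1}$. The isomorphism remark after Proposition~\ref{prop:overlap-central} needs $L\ge2(R+1)+2$ at radius $R+1$, which is exactly the hypothesis, not just the $2(R+1)<L$ you quote.

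Your bound is even slightly wasteful. For $x\in\partial B_R$ your two kernels coincide $\tilde\mu$-almost surely, because $\tilde\mu$ fixes every spin off $B_R$ to $-1$. Hence $\bar b_x=0$ there too, and the discrepancy sits only at torus distance at least $R+1$ from $o$. This would give the extra factor $\alpha_{\rm D}^{R+1-n}$ in place of $\alpha_{\rm D}^{R-n}$.
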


\begin{proof}
Apply Lemma~\ref{lem:ov-dobrushin} on $\Lambda=B_R$ with $\mu=\mu^-_{B_R}$ and its
heat-bath kernels, and with $\tilde\mu$ the $B_R$-marginal of $\mu_L$ and
$\tilde\gamma_x(\cdot\mid\omega)=\mu_L(X_x\in\cdot\mid X_{B_R\setminus x}=\omega)$.
As before $\mathsf C_{xy}\le t'\mathbf1\{x\sim y\}$ and $\mathsf D\le D^{\mathbb T}$. For
$x\in B_{R-1}$ all neighbors lie in $B_R$, so the Markov property gives
$\tilde\gamma_x=\gamma_x$ and $\bar b_x=0$; for $x\in\partial B_R$,
$\bar b_x\le1$. Also $\delta_y(f)\le\mathbf1\{y\in B_n\}$. For $y\in B_n$ and
$x\in\partial B_R$, every vector $\tilde x-\tilde y+Lw$ in
Lemma~\ref{lem:ov-resolvent}(i) has $\ell^1$ norm at least $R-n$, since
$L-(R+n)\ge R-n$, and distinct $(x,w)$ give distinct vectors. By
Lemma~\ref{lem:ov-resolvent}(iii),
$\sum_{x\in\partial B_R}D^{\mathbb T}_{yx}\le\alpha_{\rm D}^{R-n}/(1-\alpha_{\rm D})$; sum
over $y\in B_n$.
\end{proof}

For $s=\pm1$ let $G_s$ be the graph with vertex set $\{x:q_x=s\}$ and edge set
$\{xy:B_{xy}=1,\ q_x=q_y=s\}$, and let $T_s$ be the event that $G_s$ has no
infinite component.

\begin{proposition}[Almost-sure coexistence]\label{prop:ov-as}
For each row and every $\nu\in\mathcal L_{d,\beta}$, $\nu(T_+)=\nu(T_-)=0$.
\end{proposition}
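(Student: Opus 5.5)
By the symmetry $\Theta$ (reversal of one replica), which preserves every selected limit's defining sequence and exchanges $T_+$ with $T_-$, it suffices to treat $T_+$. Suppose $\nu(T_+)>0$. Deleting a vertex and its star cannot create or destroy an infinite component of $G_+$. Hence $T_+$ is determined by the overlap labels off $v$ and the blue indicators off $\starv(v)$, so $T_+\in\mathcal E_v$ for every $v$. Lemma~\ref{lem:ov-limit-local} then gives the bond floor \eqref{eq:ov-floor-undivided} and the single-site Holley bound for $\nu_T:=\nu(\cdot\mid T_+)$. Fix $n$, and pick $R>n$ large. I aim to show
\[
 \nu_T\bigl(q_o=1,\ o\blue\partial B_n\bigr)\ \ge\ \theta_*/2
\]
uniformly in $n$. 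Letting $n\to\infty$ then gives $\nu_T(o\blue\infty,\ q_o=1)>0$, which contradicts $T_+$.

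\emph{Site layer.} Consider the law of $q|_{B_R}$ under $\nu_T$, viewed with all of $\bar\Lambda=B_{R+1}$. The Holley bound in $\nu_T$, applied with $H=\{q_{B_{R+1}\setminus v}=\xi\}\cap(\text{cylinder exterior})$ and then integrated, gives $\nu_T(q_v=1\mid q_{B_{R+1}\setminus v}=\xi)\ge\E[\gamma_v(1\mid q_{N(v)})\mid\cdot]=\gamma_v(1\mid\xi_{N(v)})$ for $v\in B_R$. This holds because $\gamma_v$ depends only on $N(v)\subseteq B_{R+1}$. A little care is needed, since conditioning on a finite window rather than the full exterior requires averaging over the outside. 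The lower bound is linear in the conditioning, so it survives this averaging. Lemma~\ref{lem:ov-holley} with $\Lambda=B_R$, $\bar\Lambda=B_{R+1}$ then gives $q|_{B_R}\succeq\mu^-_{B_R}$. The minus boundary is harmless because $\gamma_v$ is monotone.

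\emph{Bond layer and torus comparison.} Lemma~\ref{lem:ov-exploration} now applies on $\Z^d$ to $P=\nu_T$, with $\mu$ the $B_n$-marginal of $\mu^-_{B_R}$. Since $\mathcal O_n$ is increasing in $X$, it gives
\[
 \nu_T(q_o=1,\ o\blue\partial B_n)\ \ge\ \bigl(\mu^-_{B_R}\otimes\mathrm{Ber}(p)\bigr)(\mathcal O_n).
\]
Integrate the $Y$-variables first to obtain a $[0,1]$-valued function $f$ of $X|_{B_n}$. Lemma~\ref{lem:ov-box-torus} with $L\ge\max(2R+4,L_0(n))$ bounds the difference from $P^{\rm aux}_L(\mathcal O_n)$ by $|B_n|\alpha_{\rm D}^{R-n}/(1-\alpha_{\rm D})$. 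By (R3), $P^{\rm aux}_L(\mathcal O_n)\ge\theta_*$, and choosing $R$ large makes the error at most $\theta_*/2$.

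\emph{Main obstacle.} The delicate step is verifying that the exploration in Lemma~\ref{lem:ov-exploration} only ever conditions on events of the form $H\cap\mathsf F_v(F)$ with $H\in\mathcal E_v^+$ under the conditioned law, since $T_+$ itself must be absorbed into $H$. This works because $T_+\in\mathcal E_v\subseteq\mathcal E_v^+$ for every $v$, and the last clause of Lemma~\ref{lem:ov-limit-local} was designed precisely for this. I would check measurability of $T_+$ (a tail-type event in the product space) carefully. The remaining steps are routine.
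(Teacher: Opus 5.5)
Your proposal is correct and follows the paper's proof essentially step for step: $T_+\in\mathcal E_v$ for every $v$, Lemma~\ref{lem:ov-limit-local} for the conditioned law, Holley domination of $q|_{B_R}$ by $\mu^-_{B_R}$ with $\bar\Lambda=B_{R+1}$, the overlap-revealed exploration (Lemma~\ref{lem:ov-exploration}), Lemma~\ref{lem:ov-box-torus} with (R3), and the $\Theta$ symmetry for $T_-$. The one point you should state explicitly is that $\nu$-almost surely every blue edge joins equal overlaps, a cylinder constraint inherited from the tori, since this is what puts the infinite blue cluster of $o$ inside $G_+$ and gives the contradiction with $T_+$. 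Your use of $\theta_*/2$ instead of the paper's limit $R\to\infty$, which yields $\theta_*$, is immaterial.
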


\begin{proof}
Deleting a vertex and its star cannot change whether $G_s$ has an infinite
component, so $T_s\in\mathcal E_v$ for every $v$, exactly as in the proof of
Proposition~\ref{prop:as-existence}. Moreover the cylinder event
$\{B_e=1,\ q_x\ne q_y\}$ is empty on every torus by
Lemma~\ref{lem:ov-gauge}(iv), so $\nu$-almost surely every blue edge joins
equal overlaps.

Suppose $\nu(T_+)>0$ and put $\nu'=\nu(\cdot\mid T_+)$. By
Lemma~\ref{lem:ov-limit-local}, $\nu'$ satisfies both local bounds. Fix $n<R$
and let $\pi$ be the law of $q|_{B_{R+1}}$ under $\nu'$. The site bound with
$H=\{q_{B_{R+1}\setminus v}=\xi\}$ gives
$\pi(q_v=1\mid\xi)\ge\gamma_v(1\mid\xi_{N(v)})$ for $v\in B_R$, so
Lemma~\ref{lem:ov-holley} with $\Lambda=B_R$ and $\bar\Lambda=B_{R+1}$ shows that
$q|_{B_R}$ dominates $\mu^-_{B_R}$; restrict the coupling to $B_n$.
Lemma~\ref{lem:ov-exploration}, applied to $\nu'$, gives
$\nu'(q_o=1,\ o\blue\partial B_n)\ge\mu^-_{B_R}(f_n)$, where
$f_n(X)=\mathrm{Ber}(p)^{\otimes E(B_n)}(\mathcal O_n\mid X)\in[0,1]$. By
Lemma~\ref{lem:ov-box-torus}, which applies since $\alpha_{\rm D}<1$ by (C3),
and by (R3), with $L\ge\max(L_0(n),2R+4)$,
\[
 \mu^-_{B_R}(f_n)\ \ge\ \mu_L(f_n)-|B_n|\frac{\alpha_{\rm D}^{R-n}}{1-\alpha_{\rm D}}
 \ \ge\ \theta_*-|B_n|\frac{\alpha_{\rm D}^{R-n}}{1-\alpha_{\rm D}}.
\]
Letting $R\to\infty$ gives $\nu'(q_o=1,\ o\blue\partial B_n)\ge\theta_*$ for every
$n$. These events decrease in $n$, and their intersection is
$\{q_o=1,\ |C_{\rm b}(o)|=\infty\}$, so this event has $\nu'$-probability at least
$\theta_*>0$. But $\nu'$-almost surely the blue cluster of $o$ on $\{q_o=1\}$ is
a connected subgraph of $G_+$, which has no infinite component on $T_+$. This
contradiction shows $\nu(T_+)=0$.

Each $\nu_L$ is invariant under $\Theta$, which maps cylinder events to cylinder
events. Hence $\nu\circ\Theta^{-1}=\nu$ on cylinder events, which form a
$\pi$-system, and therefore on the generated $\sigma$-field. Since $\Theta$
fixes $B$ and maps $q$ to $-q$, $\Theta^{-1}(T_+)=T_-$, and
$\nu(T_-)=\nu(T_+)=0$.
\end{proof}

The site layer is thus handled by Holley domination inside a finite box, where
the minus boundary absorbs the unknown exterior, and by the Dobrushin comparison
of that box with the torus. No ergodicity, no uniqueness of the Ising Gibbs
measure and no exploration of an infinite volume is used.

\begin{remark}[Scope and consequences]\label{rem:ov-scope}
Theorem~\ref{thm:explicit} concerns blue geometry in the selected periodic
limits at the listed temperatures; no monotonicity in $\beta$ is available to
fill in intervals. It asserts neither spin-glass order nor an imbalance of
densities. At the same temperatures, Corollary~\ref{cor:explicit-balance} adds
a susceptibility bound and exact balance of the two sectors. The
$\pm1$ law has no atom at zero, so the discussion at the beginning of
Section~\ref{sec:discussion} applies: combined with the at-most-two bound of
\cite[Theorem~1.9(b)]{Pei2026}, there are $\nu$-almost surely exactly two
infinite blue components, one of each overlap sign. The proof below does not
use that upper bound. The root bound of Theorem~\ref{thm:explicit} also gives
positive mean densities. For $\Lambda\subseteq\Z^d$ finite let
$D_s(\Lambda)$ be the fraction of $x\in\Lambda$ with $q_x=s$ and
$|C_{\rm b}(x)|=\infty$. Translations of the torus preserve $\nu_L$, so every
$\nu\in\mathcal L_{d,\beta}$ is invariant under translations of cylinder events;
applying continuity from above at each $x$ gives
$\E_\nu D_s(\Lambda)=\nu(q_o=s,\ |C_{\rm b}(o)|=\infty)\ge\theta_*$, and Fatou's
lemma applied to $1-D_s$ gives
$\E_\nu[\limsup_ND_s(\{-N,\ldots,N\}^d)]\ge\theta_*$ for $s=\pm1$.
\end{remark}

\begin{proof}[Proof of Theorem~\ref{thm:explicit}]
Every pair of Table~\ref{tab:explicit} belongs to a row in the above sense,
and $(9,\frac7{50})$ and $(8,\frac3{20})$ to two rows each. Fix a row,
$\nu\in\mathcal L_{d,\beta}$ along $L_k\to\infty$, and $s=\pm1$. For
fixed $n$, the event $\mathcal A_n^s=\{q_o=s,\ o\blue\partial B_n\}$ depends only
on $(\sigma_o,\tau_o)$ and the blue indicators of $B_n$, and is the same
cylinder event on $\mathbb T_{L_k}$ and on $\Z^d$ once $L_k\ge2n+2$. By
Proposition~\ref{prop:overlap-central}, or by Proposition~\ref{prop:ms-central}
for the rows of Section~\ref{sec:macrostep},
$\nu(\mathcal A_n^s)=\lim_k\nu_{L_k}(\mathcal A_n^s)\ge\theta_*$. This is the order
of limits of Figure~\ref{fig:local-transfer}: a fixed observation first, then
$L\to\infty$, and the radius last. A blue path from $o$ to $\partial B_{n+1}$
inside $B_{n+1}$, stopped at its first visit to $\partial B_n$, stays in $B_n$, so
$\mathcal A_{n+1}^s\subseteq\mathcal A_n^s$, and
$\bigcap_n\mathcal A_n^s=\{q_o=s,\ |C_{\rm b}(o)|=\infty\}$. Continuity from
above gives $\nu(o\blue\infty,\ q_o=s)\ge\theta_*$.

By Proposition~\ref{prop:ov-as}, $\nu$-almost surely each $G_s$ has an infinite
component. Since $\nu$-almost surely every blue edge joins equal overlaps, the
components of $G_s$ are exactly the blue clusters of the vertices with $q=s$.
Hence $\nu$-almost surely there is an infinite blue cluster on which $q\equiv1$
and one on which $q\equiv-1$. For every version of $\nu(T_+\cup T_-\mid J)$,
this conditional probability is nonnegative with $\nu$-mean zero, hence vanishes
for $\nu$-almost every $J$; in particular every regular conditional law of
$(\sigma,\tau,B)$ given $J$ assigns $T_+\cup T_-$ probability zero for
$\nu$-almost every $J$. This is the reading of probability-one statements in
Section~\ref{sec:model}.
\end{proof}
\FloatBarrier

\section{A macrostep engine and dimensions seven to nine}\label{sec:macrostep}

Section~\ref{sec:overlap-route} uses its global step only through the uniform
torus bound (R3): for the auxiliary law
$P^{\rm aux}_L=\mu_L\otimes\mathrm{Ber}(p)^{\otimes E}$ of
Section~\ref{sec:ov-second-moment},
\begin{equation}\label{eq:ms-star}
 P^{\rm aux}_L(\mathcal O_n)\ \ge\ \theta_*\qquad
 \text{for all }n\ge1\text{ and }L\ge L_0(n).
\end{equation}
The local inputs, the quantile coupling and the passage to periodic limits use
nothing else about the global step. With the local inputs of this paper, the
oriented criterion gives \eqref{eq:ms-star} down to dimension $9$ but does not
certify dimension $8$: after two oriented paths meet, they share their next
step with probability $1/d$, and each shared step costs $(\rho_ep)^{-1}$; at
the local inputs certified in dimension $8$ below, $8\rho_cp<0.99$
(Remark~\ref{rem:ov-dim9}).

This section replaces oriented paths by macrostep paths. A macrostep makes a
short excursion in three lateral coordinates, along a word in which no
coordinate occurs with both signs, and then one step in one of the remaining
$f=d-3$ forward coordinates. Paths made of such blocks go back to Kesten's
block paths \cite[Section~2, Eqs.~(2.1)--(2.3)]{Kesten1990} and, in lazy form,
to the macrosteps of \cite[Section~4]{JiangLang2026}; we use monotone lateral
words in three coordinates, which keep the paths self-avoiding. Two macrostep
paths meet only at equal forward levels, as block paths do
\cite[p.~226]{Kesten1990}. Each level offers many more choices than an oriented
step, and two paths share a forward step only if their excursions end at the
same point and their forward directions agree.

The price is that the second moment is no longer a renewal series in one
scalar. The relative position of two paths after $j$ macrosteps is a Markov
chain, and the weight of a pair of paths is bounded by a product of
per-macrostep weights along it (Proposition~\ref{prop:ms-product}). Shared
vertices and edges are counted step by step, and the Ising correlations between
the unshared vertices of the two paths are charged, level by level, to single
macrosteps (Lemma~\ref{lem:ms-boost}). The product is bounded by expanding it in
the excess of the weights over $1$, as in the interaction-matrix method of
\cite[Section~4]{JiangLang2026}. There the excess kernel is supported on a
finite set; here it is not, because the Ising correlations decay but never
vanish. Theorem~\ref{thm:ms-criterion} handles the far region with a
supersolution and Khas'minskii's lemma \cite{Khasminskii1959} for the forward
walk, which is transient because $f\ge4$. A certificate for this criterion
gives \eqref{eq:ms-star} (Theorem~\ref{thm:ms-engine}), and the local inputs and
the assembly of Section~\ref{sec:overlap-route} then give
Theorem~\ref{thm:explicit} in dimensions $7$, $8$ and $9$. Apart from the
oriented criterion of Section~\ref{sec:ov-second-moment}, we are not aware of
an earlier second-moment criterion for site--bond percolation with an Ising
site field on $\Z^d$ that is certified at an explicit finite dimension.

\paragraph{The statement proved.}
Table~\ref{tab:ms-local} lists five rows. Each gives a dimension $d=f+3$, a
temperature $t=\tanh\beta$, a bond parameter $p$, an Ising line
$g_K=e^{2K'}$, $g_h=e^{2h}$ and a rational $\bar m$, all exact, together with
local inputs from Sections~\ref{sec:ov-floor}--\ref{sec:ov-holley} or from
Section~\ref{sec:ov-noisy}. Table~\ref{tab:ms-certificates} lists, for each row,
one or more near sets with two certificates each. For each row let $\theta_*$
be the constant \eqref{eq:ms-theta} of the certificate with the largest
$\theta_*$, and $L_0^{\rm M}$ the function \eqref{eq:ms-L0}.

\begin{proposition}[Uniform torus bound in dimensions seven to nine]\label{prop:ms-central}
For each row of Table~\ref{tab:ms-local}, every $n\ge1$, every
$L\ge L_0^{\rm M}(n)$ and $s=\pm1$,
\[
 \nu_{\mathbb T_L^d,\beta}\bigl(q_o=s,\ o\blue\partial B_n\bigr)\ \ge\ \theta_*,
\]
where the blue connection uses paths inside $B_n$.
\end{proposition}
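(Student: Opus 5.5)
The proof will follow the proof of Proposition~\ref{prop:overlap-central} line by line, with the global step replaced. The assembly of Section~\ref{sec:ov-assembly} uses a row only through the inputs (R1)--(R3). For the rows of Table~\ref{tab:ms-local}, (R1) and (R2) are supplied exactly as before. When the row uses the local inputs of Sections~\ref{sec:ov-floor}--\ref{sec:ov-holley}, they come from Lemma~\ref{lem:ov-torus-floor} with the certified floor and from Proposition~\ref{prop:ov-holley-line}, which is stated to cover these rows. When the row uses the sharpened inputs, they come from Lemma~\ref{lem:ov-pair} and Corollary~\ref{cor:ov-sym-holley}, valid on tori with $L\ge4$. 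The plus-density input $\rho_-$ comes from Lemma~\ref{lem:ov-meanfield}, applied to the row's rational $\bar m$. The only new ingredient is (R3) in the form \eqref{eq:ms-star}, with $L_0=L_0^{\rm M}$. This is supplied by Theorem~\ref{thm:ms-engine} applied to the certificate that attains the largest $\theta_*$ of \eqref{eq:ms-theta}.

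Granting these three inputs, the argument runs as follows, for fixed $n\ge1$ and $L\ge L_0^{\rm M}(n)$. I first check that $L_0^{\rm M}(n)\ge\max\{2n+4,8\}$; if \eqref{eq:ms-L0} does not already contain this, I replace it by $L_1(n)=\max\{L_0^{\rm M}(n),2n+4\}$, as in the text after (R3). By (R2) with $H=\{q_{V\setminus v}=\xi\}$, the hypothesis of Lemma~\ref{lem:ov-holley} holds with $\Lambda=\bar\Lambda=V$, so $q$ dominates $\mu_L$. Restricting this coupling to $B_n$ and using (R1), Lemma~\ref{lem:ov-exploration} gives
\[
 \nu_L(q_o=1,\ o\blue\partial B_n)\ge P^{\rm aux}_L(\mathcal O_n)\ge\theta_*,
\]
where the last inequality is \eqref{eq:ms-star}. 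The event $\mathcal O_n$ for macrostep paths must still be the event of Section~\ref{sec:ov-second-moment}: an open path from $o$ to $\partial B_n$ inside $B_n$. I will therefore check that Theorem~\ref{thm:ms-engine} bounds this event and not merely the existence of an open macrostep path. The existence of an open macrostep path implies $\mathcal O_n$ once such a path, truncated at its first visit to $\partial B_n$, stays inside $B_n$. For $s=-1$, the replica-flip map $\Theta$ preserves $\nu_L$ and every blue indicator and exchanges the overlap signs, exactly as in the proof of Proposition~\ref{prop:overlap-central}.

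The main obstacle is therefore entirely inside Theorem~\ref{thm:ms-engine}, that is, in proving \eqref{eq:ms-star} from a finite certificate. Here the weighted Paley--Zygmund bound of Lemma~\ref{lem:ov-pz} must be redone for macrostep paths. Two facts are needed: that the paths meet only at equal forward levels, and that the pair ratio is bounded through Lemma~\ref{lem:ov-pair-ratio}, Lemma~\ref{lem:ov-shared} and Lemma~\ref{lem:ms-boost}, with the wrap-around error made at most $1/100$ by the choice of $L_0^{\rm M}$. The infinite-range boost is then controlled by the near-set supersolution of Theorem~\ref{thm:ms-criterion}. For the present proposition, however, these are taken as given, and the remaining work is only to check that the constants quoted in the row match the hypotheses of the results invoked.
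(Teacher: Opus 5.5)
Your proposal is correct and is essentially the paper's proof. You obtain (R1)--(R2) from either set of certified local inputs, obtain (R3) from Theorem~\ref{thm:ms-engine} with the $\theta_*$-maximizing certificate, and then run the proof of Proposition~\ref{prop:overlap-central} verbatim; your fallback of replacing $L_0^{\rm M}$ by a larger $L_1$ would weaken the statement, but it is never needed, since the ceiling in \eqref{eq:ms-L0} is at least $1$ and $c=3$, so $L_0^{\rm M}(n)\ge2cn+2\ge\max\{2n+4,8\}$ and $L_1=L_0^{\rm M}$, exactly as the paper observes.
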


As in Section~\ref{sec:overlap-route}, the row fixes $(d,\beta)$, the rational
inputs and the certificate, hence $\theta_*$ and $L_0^{\rm M}$; then $n$, then
$L\ge L_0^{\rm M}(n)$, then $s$ are arbitrary. Theorem~\ref{thm:explicit}
follows for these rows in Section~\ref{sec:ov-assembly}. The proposition is
proved in Section~\ref{sec:ms-proofs}, after the geometry of macrostep paths
(Section~\ref{sec:ms-paths}), the product bound (Section~\ref{sec:ms-product})
and the criterion (Section~\ref{sec:ms-criterion}).

\subsection{Macrostep paths and their geometry}\label{sec:ms-paths}

Fix $d=f+3$ with $f\ge4$, a word-length cap $c\ge1$ and a word weight
$y_{\rm w}\in(0,1)$; the certificates use $c=3$. Write
$x=(\operatorname{lat}(x),\operatorname{fwd}(x))\in\Z^3\times\Z^f$, call the
first three coordinates \emph{lateral} and the other $f$ \emph{forward}, and put
$\operatorname{lev}(x)=\sum_{i=1}^f\operatorname{fwd}(x)_i$, the \emph{level} of
$x$. Let $e_1,e_2,e_3$ be the lateral and $\epsilon_1,\ldots,\epsilon_f$ the
forward unit vectors.

A \emph{word} of length $l\le c$ is a sequence $\xi=(\xi_1,\ldots,\xi_l)$ of
lateral unit vectors $\pm e_q$ in which no coordinate occurs with both signs;
$\mathcal W_c$ is the set of words, including the empty word. Put
$P_k(\xi)=\xi_1+\cdots+\xi_k$, with $P_0(\xi)=0$, and $e(\xi)=P_{|\xi|}(\xi)$.
The word law is $\pi_{\rm w}(\xi)=y_{\rm w}^{|\xi|}/Z_{\rm w}$ with
$Z_{\rm w}=\sum_{\xi\in\mathcal W_c}y_{\rm w}^{|\xi|}$. For $c=3$ there are
$1+6+30+126=163$ words, and $Z_{\rm w}=1+6y_{\rm w}+30y_{\rm w}^2+126y_{\rm w}^3$.

A \emph{macrostep} is a pair $\omega=(\xi,i)\in\mathcal W_c\times[f]$, with law
$\pi_{\rm w}\otimes\mathrm{Unif}[f]$. For $\omega_{1:n}=(\omega_1,\ldots,\omega_n)$
put $x_0=o$ and $x_j=x_{j-1}+e(\xi_j)+\epsilon_{i_j}$. The \emph{excursion} of
macrostep $j$ is $\zeta_j(k)=x_{j-1}+P_k(\xi_j)$, $0\le k\le|\xi_j|$, and the
path $\gamma(\omega_{1:n})$ visits
\[
 \zeta_1(0),\ldots,\zeta_1(|\xi_1|),\ \zeta_2(0),\ldots,\zeta_2(|\xi_2|),\ \ldots,\
 \zeta_n(0),\ldots,\zeta_n(|\xi_n|),\ x_n
\]
in this order. Consecutive points are nearest neighbors: a lateral step within
an excursion, or the forward step from $\zeta_j(|\xi_j|)$ to $x_j$, which is
$\zeta_{j+1}(0)$ for $j<n$.
Write $V(\gamma)$ and $E(\gamma)$ for its vertex and edge sets, and let
$V^{(a)}(\gamma)=\{\zeta_{a+1}(k)\}_k$ for $0\le a<n$ and
$V^{(n)}(\gamma)=\{x_n\}$ be its level sets. Figure~\ref{fig:ms-paths} shows two
such paths.

\begin{figure}[!htbp]
\centering
\begin{tikzpicture}[x=1.9cm,y=0.6cm,>=Latex,
 ptA/.style={circle,fill=white,draw=blue!60!black,inner sep=1.3pt},
 ptB/.style={circle,fill=white,draw=red!60!black,inner sep=1.3pt},
 shared/.style={circle,fill=black,inner sep=1.7pt},
 pathA/.style={thick,blue!60!black},
 pathB/.style={thick,red!60!black,dashed},
 pair/.style={densely dotted,thick,gray!60!black}]
\foreach \a in {0,...,4} {
 \draw[gray!25] (\a,-1.7)--(\a,3.7);
 \node[font=\scriptsize,text=gray!60!black] at (\a,-2.3) {level $\a$};}
\draw[pathA] (0,0)--(0,2)--(1,2)--(1,3)--(2,3)--(2,1)--(3,1)--(4,1);
\draw[pathB] (0,0)--(0,-1)--(1,-1)--(1,1)--(2,1)--(2,0)--(3,0)--(3,-1)--(4,-1);
\foreach \p in {(0,1),(0,2),(1,2),(1,3),(2,3),(2,2),(3,1),(4,1)}
 \node[ptA] at \p {};
\foreach \p in {(0,-1),(1,-1),(1,0),(1,1),(2,0),(3,0),(3,-1),(4,-1)}
 \node[ptB] at \p {};
\node[shared] at (0,0) {};
\node[shared] at (2,1) {};
\node[font=\scriptsize,left] at (0,0) {$o$};
\draw[pair] (1,2) to[bend left=35] node[font=\scriptsize,right] {gap $0$} (1,1);
\draw[pair] (3,-1) -- node[font=\scriptsize,right,pos=0.45] {gap $1$} (4,1);
\draw[pair] (0,-1) to[bend right=25] node[font=\scriptsize,below,pos=0.55] {gap $3$} (3,1);
\node[font=\scriptsize,text=blue!60!black,anchor=west] at (4.1,1) {$\gamma$};
\node[font=\scriptsize,text=red!60!black,anchor=west] at (4.1,-1) {$\gamma'$};
\end{tikzpicture}
\caption{Two macrostep paths with $n=4$, projected onto the level (horizontal)
and one lateral coordinate (vertical). Each macrostep is a lateral excursion
along a monotone word (a vertical segment) followed by one forward step (a
horizontal segment). Points on different levels are distinct, and the two paths
can share a point only on a common level where their forward coordinates agree
(black dots, where $D_0=D_2=0$ is assumed). In Lemma~\ref{lem:ms-boost} the
correlation between two unshared points is charged to a single macrostep,
according to the gap between their levels: a pair at gap $0$ to the macrostep
of the common level, a pair at gap $1$ or $2$ to the macrostep of the earlier
level, and a pair at gap at least $3$ to that of the later level.}
\label{fig:ms-paths}
\end{figure}

For independent $\omega_{1:n}$ and $\omega'_{1:n}$ with the macrostep law, write
primes for the quantities of $\gamma'=\gamma(\omega'_{1:n})$. Let
$\mathbb A^{(f)}=\{D\in\Z^f:\sum_iD_i=0\}$, and put
\[
 Z_j=(A_j,D_j)=\bigl(\operatorname{lat}(x_j-x'_j),\,\operatorname{fwd}(x_j-x'_j)\bigr)
 \in\Z^3\times\mathbb A^{(f)}
\]
and $\boldsymbol\omega_j=(\omega_j,\omega'_j)$. For a state $z=(A,D)$ and a pair
$\boldsymbol\omega=((\xi,i),(\xi',i'))$ put
\begin{gather*}
 N_V(z;\boldsymbol\omega)=\#\bigl\{(k,k'):D=0,\ A+P_k(\xi)=P_{k'}(\xi')\bigr\},\\
 N_{\rm f}(z;\boldsymbol\omega)=\mathbf1\{D=0,\ A+e(\xi)=e(\xi'),\ i=i'\},
\end{gather*}
and let $N_E(z;\boldsymbol\omega)$ be the number of $1\le k\le|\xi|$ such that
$k-1$ and $k$ are matched, in the sense of $N_V$, to indices $k'_1$ and $k'_2$
with $|k'_1-k'_2|=1$.

\begin{lemma}[Monotone words]\label{lem:ms-words}
\begin{enumerate}\renewcommand{\labelenumi}{(\alph{enumi})}
\item $|P_k(\xi)-P_{k'}(\xi)|_1=|k-k'|$ for all $0\le k,k'\le|\xi|$.
\item $\operatorname{lev}(\zeta_j(k))=j-1$ and $\operatorname{lev}(x_n)=n$, and the
points of $\gamma$ listed above are pairwise distinct.
\item Every vertex $v$ of $\gamma$ has $|\operatorname{lat}(v)_q|\le cn$ for
$q=1,2,3$ and $\operatorname{fwd}(v)\in\{0,\ldots,n\}^f$.
\end{enumerate}
\end{lemma}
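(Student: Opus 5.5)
For (a), the plan is to use monotonicity of the word. Since no lateral coordinate occurs in $\xi$ with both signs, for $k\le k'$ the difference $P_{k'}(\xi)-P_k(\xi)=\xi_{k+1}+\cdots+\xi_{k'}$ is a sum of unit vectors $\pm e_q$ in which each coordinate $q$ always carries the same sign. No cancellation can therefore occur in any coordinate. The $\ell^1$ norm of the sum equals the number of summands, namely $k'-k$, and the case $k>k'$ follows by symmetry.

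For (b), compute levels directly. Lateral steps leave $\operatorname{fwd}$ unchanged, and each macrostep adds exactly one forward unit vector $\epsilon_{i_j}$, whose coordinate sum is $1$. By induction $\operatorname{lev}(x_j)=j$. Since $\zeta_j(k)=x_{j-1}+P_k(\xi_j)$ differs from $x_{j-1}$ only laterally, $\operatorname{lev}(\zeta_j(k))=j-1$. Distinctness then splits into two cases.
\begin{itemize}
\item Points on different levels are distinct because their levels differ. This also separates $x_n$ from every excursion point, which lies at level at most $n-1$.
\item Two points on the same level $j-1$ are $\zeta_j(k)$ and $\zeta_j(k')$ of one excursion, since only excursion $j$ lives at level $j-1$. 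Their difference $P_k(\xi_j)-P_{k'}(\xi_j)$ has $\ell^1$ norm $|k-k'|$ by (a), so they coincide only if $k=k'$.
\end{itemize}
Hence the list of points in $\gamma$ has no repetition.

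For (c), bound each coordinate by the number of steps that can move it. Each lateral coordinate of a vertex is a sum of at most $n$ word contributions, one from each of the at most $n$ words traversed so far, completely or partially. By (a), each word changes a given coordinate by at most its length, which is at most $c$. This gives $|\operatorname{lat}(v)_q|\le cn$. Each forward coordinate starts at $0$ and increases by at most one per macrostep, never decreasing, so it lies in $\{0,\ldots,n\}$.

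None of these steps is a real obstacle. The only point needing care is the no-cancellation argument in (a). It rests on the monotone-word definition: a word using some $e_q$ with both signs would make it fail.
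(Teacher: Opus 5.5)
Your proof is correct and follows essentially the paper's argument. In both, (a) uses the absence of cancellation within a monotone word, (b) separates points by level and applies (a) within a single excursion, and (c) bounds the change in each coordinate per macrostep.
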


\begin{proof}
(a) Within a word, all letters acting on a coordinate have the same sign, so
between positions $k<k'$ each coordinate moves monotonically by the number of
its letters, and the $\ell^1$ distance is the total number of letters, $k'-k$.
(b) Lateral steps do not change $\operatorname{fwd}$, and each forward step
raises the level by one, so points on different levels are distinct. On level
$a<n$ the points are the $\zeta_{a+1}(k)$, which are distinct by (a), and on
level $n$ there is only $x_n$. (c) Macrostep $j$ changes each lateral
coordinate by at most $|\xi_j|\le c$ and each forward coordinate by $0$ or $1$.
\end{proof}

\begin{lemma}[Meetings only on equal levels]\label{lem:ms-meetings}
For two macrostep paths $\gamma=\gamma(\omega_{1:n})$ and $\gamma'=\gamma(\omega'_{1:n})$
in $\Z^d$:
\begin{enumerate}\renewcommand{\labelenumi}{(\alph{enumi})}
\item $V^{(a)}(\gamma)\cap V^{(a')}(\gamma')=\varnothing$ if $a\ne a'$. For $a<n$,
$\zeta_{a+1}(k)=\zeta'_{a+1}(k')$ if and only if $D_a=0$ and
$A_a+P_k(\xi_{a+1})=P_{k'}(\xi'_{a+1})$, and for each $k$ there is at most one
such $k'$.
\item $|V(\gamma)\cap V(\gamma')|=\sum_{j=1}^nN_V(Z_{j-1};\boldsymbol\omega_j)+\mathbf1\{Z_n=0\}$.
\item A lateral edge of $\gamma$ on level $a$ can coincide only with a lateral
edge of $\gamma'$ on level $a$, and a forward edge from level $a$ to $a+1$ only
with the forward edge of $\gamma'$ between the same levels. Consequently
$|E(\gamma)\cap E(\gamma')|=\sum_{j=1}^n(N_E+N_{\rm f})(Z_{j-1};\boldsymbol\omega_j)$.
\item $(V(\gamma)\cap V(\gamma'),E(\gamma)\cap E(\gamma'))$ is a disjoint union
of vertex-disjoint simple paths, isolated vertices included, with exactly
$|V(\gamma)\cap V(\gamma')|-|E(\gamma)\cap E(\gamma')|$ components.
\item The increments
$Z_j-Z_{j-1}=(e(\xi_j)-e(\xi'_j),\ \epsilon_{i_j}-\epsilon_{i'_j})$ are iid and
independent of $Z_{j-1}$. The components $(A_j)$ and $(D_j)$ are independent
random walks, and $(D_j)$ is the difference of two independent uniform oriented
walks in $\Z^f$, as in Section~\ref{sec:ov-second-moment} with $f$ in place of
$d$.
\end{enumerate}
\end{lemma}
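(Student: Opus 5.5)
The plan is to reduce everything to the level function and to Lemma~\ref{lem:ms-words}. Every vertex of a macrostep path lies on a definite level, and both paths start at $o$ and advance one level per macrostep. So a coincidence of points, or of edges, can only occur between pieces of the two paths lying on the same level. Since $Z_{j-1}=x_{j-1}-x'_{j-1}$, the relative position of the level-$(j-1)$ pieces is encoded by $Z_{j-1}$ and $\boldsymbol\omega_j$.

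\emph{Part (a).} By Lemma~\ref{lem:ms-words}(b), $V^{(a)}(\gamma)$ consists of level-$a$ points and $V^{(a')}(\gamma')$ of level-$a'$ points, so they are disjoint when $a\ne a'$. For $a<n$, write $\zeta_{a+1}(k)-\zeta'_{a+1}(k')=Z_a+(P_k(\xi_{a+1})-P_{k'}(\xi'_{a+1}),0)$. Its forward part is $D_a$ and its lateral part is $A_a+P_k-P_{k'}$, which gives the stated equivalence. Uniqueness of $k'$ follows from Lemma~\ref{lem:ms-words}(a): the points $P_{k'}(\xi')$ are pairwise distinct.

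\emph{Part (b).} Sum (a) over levels $a=0,\dots,n-1$; the count at level $a$ is exactly $N_V(Z_a;\boldsymbol\omega_{a+1})$. On level $n$ the only points are $x_n$ and $x'_n$, which coincide if and only if $Z_n=0$.

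\emph{Part (c).} A lateral edge keeps the level and a forward edge raises it by one, so lateral edges can only match lateral edges on the same level, and forward edges only forward edges between the same levels. On level $a$ there is exactly one forward edge, from $\zeta_{a+1}(|\xi_{a+1}|)$ to $x_{a+1}$, in each path. These coincide if and only if their starting points agree ($D_a=0$ and $A_a+e(\xi)=e(\xi')$) and their directions agree ($i=i'$); that is $N_{\rm f}$. A lateral edge $\{\zeta(k-1),\zeta(k)\}$ equals a lateral edge $\{\zeta'(k'),\zeta'(k'+1)\}$ exactly when both endpoints are matched to consecutive indices of $\gamma'$, which is the definition of $N_E$. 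By the uniqueness in (a), the matching $k\mapsto k'$ is injective, so no shared edge is counted twice.

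\emph{Part (d).} This is the step needing the most care. Both paths are simple (Lemma~\ref{lem:ms-words}(b)), so every vertex has degree at most $2$ in $E(\gamma)$, and hence also in $E(\gamma)\cap E(\gamma')$. The shared graph is therefore a disjoint union of paths and cycles, and it suffices to exclude cycles. It is a subgraph of the path $\gamma$, which is a tree, so it has no cycles. A forest has components equal to vertices minus edges.

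\emph{Part (e).} Here $x_j-x_{j-1}=(e(\xi_j),\epsilon_{i_j})$, and the pairs $\boldsymbol\omega_j$ are iid. Within one macrostep the word $\xi$ and the direction $i$ are independent under $\pi_{\rm w}\otimes\mathrm{Unif}[f]$, and the same holds for the primed macrostep. Hence the lateral increments $e(\xi_j)-e(\xi'_j)$ are independent of the forward increments $\epsilon_{i_j}-\epsilon_{i'_j}$. The forward increments are exactly the steps of the difference of two uniform oriented walks in $\Z^f$. I expect no real obstacle here; the only delicate point is the uniqueness in (a), which prevents double counting in (b) and (c).
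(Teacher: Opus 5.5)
Your proposal is correct and follows essentially the same route as the paper. Level separation comes from Lemma~\ref{lem:ms-words}(b) and uniqueness of matches from Lemma~\ref{lem:ms-words}(a); the shared graph is read as a subgraph of the simple path $\gamma$; and (e) uses the independence of the lateral and forward parts of the iid macrostep pairs. One small point: in (c), no shared lateral edge is counted twice simply because distinct $k$ index distinct edges of the simple path $\gamma$. The uniqueness in (a) only ensures that the matched indices $k'_1,k'_2$ are well defined; it does not by itself give the injectivity of $k\mapsto k'$ that you cite.
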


\begin{proof}
(a) Levels differ by Lemma~\ref{lem:ms-words}(b). On level $a$ both excursions
have the forward coordinates of $x_a$ and $x'_a$, so equality needs $D_a=0$ and
equality of the lateral parts, $\operatorname{lat}(x_a)+P_k=\operatorname{lat}(x'_a)+P_{k'}$;
uniqueness of $k'$ follows from Lemma~\ref{lem:ms-words}(a).
(b) This follows from (a); the level-$n$ term is $\mathbf1\{x_n=x'_n\}$.
(c) Lateral edges join two points of one level, and forward edges join levels
$a$ and $a+1$. A lateral edge $\{\zeta(k-1),\zeta(k)\}$ of $\gamma$ is an edge of
$\gamma'$ if and only if both endpoints are matched to points that are
consecutive on the simple path $\gamma'$. The forward edges from level $a$
coincide if and only if their lower endpoints coincide, that is, $D_a=0$ and
$A_a+e(\xi_{a+1})=e(\xi'_{a+1})$, and in addition $x_{a+1}=x'_{a+1}$, that is,
$i_{a+1}=i'_{a+1}$.
(d) The edges $E(\gamma)\cap E(\gamma')$ are edges of the simple path $\gamma$,
and their endpoints lie in $V(\gamma)\cap V(\gamma')$.
(e) The macrosteps are iid, the increment is a function of $\boldsymbol\omega_j$
alone, and the lateral and forward parts of $\boldsymbol\omega_j$ are
independent.
\end{proof}

\begin{lemma}[No wrapping on the torus]\label{lem:ms-torus}
Let $L\ge2cn+1$ and reduce $n$-macrostep paths from $o$ modulo $L$.
\begin{enumerate}\renewcommand{\labelenumi}{(\alph{enumi})}
\item Reduction modulo $L$ is injective on $V(\gamma)\cup V(\gamma')$ for any two
such paths. Hence the torus paths are self-avoiding, and their shared vertices,
shared edges and all counts of Lemma~\ref{lem:ms-meetings} are the same on
$\mathbb T_L$ as in $\Z^d$.
\item The torus distance from $o$ to $x_n$ is at least $n$. Hence, if a torus
macrostep path is open, its initial segment up to its first vertex at distance
$n$ from $o$ lies in $B_n$ and witnesses $\mathcal O_n$.
\end{enumerate}
\end{lemma}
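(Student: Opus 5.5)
The plan is to follow the proof of properties (i)--(iii) of oriented paths in Section~\ref{sec:ov-second-moment}, using the bounds of Lemma~\ref{lem:ms-words}(c) on the lifts. Every vertex $v$ of either path has lateral coordinates in $[-cn,cn]$ and forward coordinates in $\{0,\ldots,n\}$. So for two vertices $v,w$ of $V(\gamma)\cup V(\gamma')$, the lateral difference lies in $[-2cn,2cn]^3$ and the forward difference in $[-n,n]^f$. If $v\equiv w\pmod L$, each coordinate of $v-w$ is a multiple of $L$ of absolute value at most $2cn<L$ (using $c\ge1$). Hence $v=w$, which gives injectivity on the union.

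Self-avoidance of the torus paths then follows from Lemma~\ref{lem:ms-words}(b), since the points of $\gamma$ are distinct in $\Z^d$ and reduction is injective on them. For the counts, note that reduction is also injective on edges. An edge is an unordered pair of its endpoints, and torus edges between reduced vertices come from lattice edges, because a difference $\pm e_i+Lw$ with $w\ne0$ would need some coordinate of absolute value at least $L-1>2cn$. Therefore $V(\gamma)\cap V(\gamma')$ and $E(\gamma)\cap E(\gamma')$ are the same on $\mathbb T_L$ as in $\Z^d$. The identities of Lemma~\ref{lem:ms-meetings} transfer verbatim, as does the path-segment structure of part (d).

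For (b) I will use a lift of minimal $\ell^1$ norm. The torus distance from $o$ to $x_n$ is $\min_{w}|x_n+Lw|_1$. For $w=0$ we have $|x_n|_1\ge\operatorname{lev}(x_n)=n$, since the forward coordinates are nonnegative and sum to $n$. For $w\ne0$, some coordinate of $x_n+Lw$ has absolute value at least $L-cn\ge cn+1>n$ in the lateral case, or at least $L-n>n$ in the forward case. So the distance is at least $n$.

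Along the torus path, graph distance from $o$ changes by at most one per step, starts at $0$ and ends at a value at least $n$. Hence there is a first vertex at distance exactly $n$. Every earlier vertex has distance less than $n$, so the initial segment lies in $B_n$ and ends on $\partial B_n$. If the whole path is open (all vertices plus, all edges $Y$-open), this segment is an open path from $o$ to $\partial B_n$ inside $B_n$, which is $\mathcal O_n$.

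The only delicate point, and the step I expect to need care, is the coordinate-wise bound in the $w\ne0$ case of (b). The lateral range $[-cn,cn]$ is exactly why the hypothesis needs $L\ge2cn+1$ rather than $2n+1$. Otherwise the argument is a direct adaptation of the oriented case.
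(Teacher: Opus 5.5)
Your proof is correct and follows the paper's. Part (a) uses the same coordinatewise bound from Lemma~\ref{lem:ms-words}(c). Part (b) bounds the torus distance of $x_n$ below by $n$ and then applies the same discrete intermediate-value argument. The one difference in (b) is that you minimize over lifts, while the paper just notes that the forward coordinates of $x_n$ lie in $[0,n]$ with $n<L/2$, so they alone contribute $n$.

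Drop one side remark. The claim that torus edges between reduced vertices come from lattice edges relies on $L-1>2cn$, and that fails at $L=2cn+1$. For example, take $c=3$, $n=1$, $L=7$ and the words $(e_1,e_1,e_1)$ and $(-e_1,-e_1,-e_1)$: the reductions of $3e_1$ and $-3e_1$ are torus neighbours. You do not need the claim. The torus edges of each path are by definition reductions of its lattice edges, so injectivity on vertices already identifies $E(\gamma)\cap E(\gamma')$ on $\mathbb T_L$ with its counterpart in $\Z^d$.
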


\begin{proof}
(a) For vertices $u,v$ of the two paths in $\Z^d$,
$|\operatorname{lat}(u-v)_q|\le2cn<L$ and $|\operatorname{fwd}(u-v)_i|\le n<L$
by Lemma~\ref{lem:ms-words}(c), so $u\equiv v$ modulo $L$ only if $u=v$.
(b) The forward coordinates of $x_n$ lie in $[0,n]$ with $n<L/2$, so they
contribute $\sum_i\operatorname{fwd}(x_n)_i=n$ to the torus distance. Along the
path the distance to $o$ changes by at most one per step, so it takes the value
$n$; the vertices before its first occurrence are at distance less than $n$.
\end{proof}

The condition $L\ge2cn+1$ cannot be dropped: for $L=3$, $c=3$, $n=1$ and two
copies of the word $(e_1,e_1,e_1)$ with the same forward step, the excursion
returns to $o$ modulo $3$, and the shared edges contain a triangle.

\subsection{The weighted second moment for macrostep paths}\label{sec:ms-product}

Let $\nu_n$ be the law of $\omega_{1:n}$. On $\mathbb T_L$ with $L\ge2cn+1$,
write $\gamma=\gamma(\omega_{1:n})$ for the reduced path, put
$\pi_L(\gamma)=P^{\rm aux}_L(\gamma\text{ open})=p^{|E(\gamma)|}\mu_L(V(\gamma)\subseteq+)$,
and
\[
 W_n=\sum_{\omega_{1:n}}\nu_n(\omega_{1:n})\,
 \frac{\mathbf1\{\gamma(\omega_{1:n})\text{ open}\}}{\pi_L(\gamma(\omega_{1:n}))}.
\]

\begin{lemma}[Weighted Paley--Zygmund bound for macrostep paths]\label{lem:ms-pz}
For $L\ge2cn+1$, $\E W_n=1$,
\[
 \E W_n^2=\E_{\nu_n\otimes\nu_n}\Bigl[p^{-|E(\gamma)\cap E(\gamma')|}\,
 \mathcal R_L\bigl(V(\gamma),V(\gamma')\bigr)\Bigr],
\]
and $P^{\rm aux}_L(\mathcal O_n)\ge P^{\rm aux}_L(W_n>0)\ge1/\E W_n^2$.
\end{lemma}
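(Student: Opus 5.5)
The plan is to repeat the proof of Lemma~\ref{lem:ov-pz} almost verbatim, with Lemma~\ref{lem:ms-torus} supplying the geometric facts that properties (i)--(iii) of oriented paths supplied there.

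\emph{First moment.} Since $L\ge2cn+1$, Lemma~\ref{lem:ms-torus}(a) shows that each reduced path $\gamma(\omega_{1:n})$ is self-avoiding on $\mathbb T_L$. It therefore has well-defined vertex and edge sets, and $\pi_L(\gamma)=p^{|E(\gamma)|}\mu_L(V(\gamma)\subseteq+)$, because $X$ and $Y$ are independent under $P^{\rm aux}_L$. This quantity is positive, since $\mu_L$ charges every configuration and $p>0$. Linearity of expectation then gives $\E W_n=\sum\nu_n(\omega_{1:n})=1$.

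\emph{Second moment.} Expanding $W_n^2$ as a double sum over $(\omega_{1:n},\omega'_{1:n})$, the weights $\nu_n\otimes\nu_n$ appear, and
\[
 P^{\rm aux}_L(\gamma,\gamma'\text{ open})=p^{|E(\gamma)\cup E(\gamma')|}\,\mu_L\bigl(V(\gamma)\cup V(\gamma')\subseteq+\bigr).
\]
Dividing by $\pi_L(\gamma)\pi_L(\gamma')$ and using $|E(\gamma)\cup E(\gamma')|=|E(\gamma)|+|E(\gamma')|-|E(\gamma)\cap E(\gamma')|$ leaves exactly $p^{-|E(\gamma)\cap E(\gamma')|}\mathcal R_L(V(\gamma),V(\gamma'))$. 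Unlike the oriented case, $|E(\gamma)|$ depends on the word lengths, but it cancels against the normalization, so no uniform-length assumption is needed. The edge-count identity needs the edge sets on the torus to be the images of those in $\Z^d$, with intersections preserved. This is where I would invoke Lemma~\ref{lem:ms-torus}(a): reduction is injective on $V(\gamma)\cup V(\gamma')$, so distinct edges stay distinct and shared edges correspond exactly.

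\emph{Probability bound.} Cauchy--Schwarz gives $1=\E W_n=\E[W_n\mathbf 1\{W_n>0\}]\le(\E W_n^2)^{1/2}P^{\rm aux}_L(W_n>0)^{1/2}$. On $\{W_n>0\}$ some reduced macrostep path $\gamma$ is open. By Lemma~\ref{lem:ms-torus}(b), its initial segment up to the first vertex at torus distance $n$ lies in $B_n$ and is open in both layers, so it witnesses $\mathcal O_n$.

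The only step needing care, and the one I expect to be the main (mild) obstacle, is checking that every edge of the torus path, including the forward step and the lateral steps, is a genuine torus edge, and that truncating the path leaves an open path inside $B_n$. Both follow from Lemma~\ref{lem:ms-torus}, since consecutive points of the path are lattice neighbors and distances change by at most one per step.
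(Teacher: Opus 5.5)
Your proposal is correct and matches the paper's proof: the paper likewise reruns Lemma~\ref{lem:ov-pz}, using the independence of $X$ and $Y$, the identity $|E(\gamma)\cup E(\gamma')|=|E(\gamma)|+|E(\gamma')|-|E(\gamma)\cap E(\gamma')|$ with distinct path edges from Lemma~\ref{lem:ms-torus}(a), and Lemma~\ref{lem:ms-torus}(b) to get $\{W_n>0\}\subseteq\mathcal O_n$. One small overstatement: the edge-count identity here is plain inclusion--exclusion for finite sets, so it does not need the torus edge sets to correspond to those in $\Z^d$; that correspondence is used later, in Lemma~\ref{lem:ms-pair}.
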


\begin{proof}
The proof of Lemma~\ref{lem:ov-pz} applies. It uses only the independence of
$X$ and $Y$ and the identity
$|E(\gamma)\cup E(\gamma')|=|E(\gamma)|+|E(\gamma')|-|E(\gamma)\cap E(\gamma')|$,
where the edges of each path are distinct by Lemma~\ref{lem:ms-torus}(a). By
Lemma~\ref{lem:ms-torus}(b), $W_n>0$ implies $\mathcal O_n$.
\end{proof}

This is the self-normalized path weighting of \cite[Eq.~(5.16)]{LyonsPeres2016};
for block paths and independent sites, \cite[Lemma~1, Eq.~(2.9)]{Kesten1990}
has the same form. The next lemma plays the role of the quasi-independence
bound of \cite[Theorem~5.19]{LyonsPeres2016} for a dependent site field.

\begin{lemma}[Pair ratio for macrostep paths]\label{lem:ms-pair}
Assume \eqref{eq:ov-H} and the hypotheses of Lemma~\ref{lem:ov-shared}. For
$n\ge1$, $L\ge2cn+1$ and any two $n$-macrostep torus paths from $o$,
\[
 \mathcal R_L\bigl(V(\gamma),V(\gamma')\bigr)\ \le\
 \rho_-^{-(|V\cap V'|-|E\cap E'|)}\rho_c^{-|E\cap E'|}
 \exp\Bigl(\kappa\sum_{x\in V'\setminus V,\,y\in V\setminus V'}D^{\mathbb T}_{xy}\Bigr),
\]
where $V=V(\gamma)$, $V'=V(\gamma')$, $E=E(\gamma)$ and $E'=E(\gamma')$.
\end{lemma}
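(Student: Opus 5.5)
The plan is to reduce to Lemma~\ref{lem:ov-pair-ratio}, exactly as in the oriented case, and then bound the shared-vertex factor with Lemma~\ref{lem:ov-shared}. Put $\Lambda_0=V\cap V'$, $\Lambda_1=V\setminus V'$ and $\Lambda_2=V'\setminus V$. These sets are pairwise disjoint and their unions are $V$ and $V'$, so
\[
 \mathcal R_L(V,V')=\frac{\mu_L(\Lambda_0\cup\Lambda_1\cup\Lambda_2\subseteq+)}{\mu_L(\Lambda_0\cup\Lambda_1\subseteq+)\,\mu_L(\Lambda_0\cup\Lambda_2\subseteq+)}.
\]
Lemma~\ref{lem:ov-pair-ratio} then bounds this by $\mu_L(\Lambda_0\subseteq+)^{-1}\exp\bigl(\kappa_L\sum_{x\in\Lambda_2,\,y\in\Lambda_1}D^{\mathbb T}_{xy}\bigr)$. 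That lemma needs only $L\ge3$, $K'\ge0$ and $\alpha_{\rm D}<1$, and all three follow from \eqref{eq:ov-H}. By Lemma~\ref{lem:ov-meanfield} we have $\rho_L\ge\rho_-$ uniformly in $L$. Since $\kappa_L$ is nonincreasing in $\rho_L$, this gives $\kappa_L\le\kappa$, and the exponential factor has the stated form.

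It remains to show $\mu_L(\Lambda_0\subseteq+)\ge\rho_-^{|\Lambda_0|-|E\cap E'|}\rho_c^{|E\cap E'|}$. I would apply the second assertion of Lemma~\ref{lem:ov-shared} with $\Lambda=\Lambda_0$ and $E_\Lambda=E\cap E'$. The hypotheses $\rho_-\le\rho_L$ and $\rho_-^2\ge c_{\rm cov}/4$ are assumed, and the first is also guaranteed by Lemma~\ref{lem:ov-meanfield}. The structural requirement is that $(\Lambda_0,E\cap E')$ be a disjoint union of paths, and I would check it on the torus in two steps.

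First, $L\ge2cn+1$, so Lemma~\ref{lem:ms-torus}(a) shows that reduction modulo $L$ is injective on $V(\gamma)\cup V(\gamma')$. Hence shared vertices and shared edges on $\mathbb T_L$ correspond exactly to those in $\Z^d$. Second, in $\Z^d$, Lemma~\ref{lem:ms-meetings}(d) states that $(V\cap V',E\cap E')$ is a disjoint union of vertex-disjoint simple paths, isolated vertices included. The endpoints of every shared edge lie in $\Lambda_0$, because they belong to both paths.

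I expect this transfer of the path-union structure to the torus to be the only point needing care. In particular, one must rule out a shared torus edge that arises from a wrapped lattice pair. Injectivity of the reduction on the vertex set of both paths handles this: an edge of the torus path $\gamma$ is the image of an edge of the lattice path, and two torus edges coincide only if their lifted endpoint sets coincide. The example after Lemma~\ref{lem:ms-torus} shows why the hypothesis $L\ge2cn+1$ is needed here. Combining the two bounds gives the statement, with the number of path components equal to $|V\cap V'|-|E\cap E'|$.
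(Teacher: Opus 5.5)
Your proposal is correct and follows the paper's proof essentially step for step. You apply Lemma~\ref{lem:ov-pair-ratio} with $\Lambda_0=V\cap V'$, $\Lambda_1=V\setminus V'$, $\Lambda_2=V'\setminus V$ and $\kappa_L\le\kappa$; then Lemma~\ref{lem:ms-torus}(a) identifies $(V\cap V',E\cap E')$ on $\mathbb T_L$ with its lattice counterpart, Lemma~\ref{lem:ms-meetings}(d) gives the path-union structure, and Lemma~\ref{lem:ov-shared} bounds $\mu_L(V\cap V'\subseteq+)$ from below. Your extra checks, that \eqref{eq:ov-H} and $L\ge2cn+1$ give $\alpha_{\rm D}<1$ and $L\ge3$ and that injectivity excludes wrapped shared edges, are details the paper leaves implicit.
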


\begin{proof}
Lemma~\ref{lem:ov-pair-ratio} with $\Lambda_0=V\cap V'$, $\Lambda_1=V\setminus V'$
and $\Lambda_2=V'\setminus V$, and $\kappa_L\le\kappa$, gives
$\mathcal R_L\le\mu_L(V\cap V'\subseteq+)^{-1}\exp(\kappa\sum D^{\mathbb T}_{xy})$.
By Lemma~\ref{lem:ms-torus}(a), the graph $(V\cap V',E\cap E')$ on $\mathbb T_L$
is isomorphic to the corresponding graph in $\Z^d$, which is a disjoint union
of paths by Lemma~\ref{lem:ms-meetings}(d). Lemma~\ref{lem:ov-shared} gives
$\mu_L(V\cap V'\subseteq+)\ge\rho_-^{|V\cap V'|-|E\cap E'|}\rho_c^{|E\cap E'|}$.
\end{proof}

\paragraph{The boost tables.}
Let $t'=\tanh K'$ and assume $\lambda_*=dt'/(1-2dt')<1$. Let $\mathcal B$ be as in
Lemma~\ref{lem:ov-resolvent}(iv), in dimension $d$ and with $t_0=t'$, so that
$\sup_{|u|_1=r}\mathcal D_{t'}(u)\le\mathcal B(r)$ and
$\mathcal B(r+1)\le\lambda_*\mathcal B(r)$; in particular $\mathcal B$ is
nonincreasing. For integers $F\ge1$ and $r,r_0,D_1\ge0$ put
\begin{gather*}
 \mathcal T_1(F,r)=\sum_{k=0}^c\mathcal B\bigl(F+|r-k|\bigr),\qquad
 \mathcal T_2(r,F)=\max_{\varrho\ge r}\sum_{k=0}^c\mathcal B\bigl(F+|\varrho-k|\bigr),\\
 \mathcal T_3(r_0,D_1)=\sum_{s\ge3}\mathcal T_2\bigl(\max(0,r_0-c(s-1)),\
 \max(s,D_1-s)\bigr).
\end{gather*}
The maximum in $\mathcal T_2$ is attained for $\varrho\in[r,\max(r,c)]$, since
for $\varrho\ge c$ every summand is nonincreasing in $\varrho$. Both
$\mathcal T_2$ and $\mathcal T_3$ are nonincreasing in each argument, and
$\mathcal T_1$ is nonincreasing in $F$, and in $r$ for $r\ge c$. With
$g=1+2\lambda_*/(1-\lambda_*)$ we have $\mathcal T_2(0,F)\le g\,\mathcal B(F)$,
because $\mathcal B(F+j)\le\lambda_*^j\mathcal B(F)$ and at most two summands
have $|\varrho-k|=j$ for each $j\ge1$; in particular the series
$\mathcal T_3$ converges.

\paragraph{The boost decomposition.}
For a state $z=(A,D)$ and $\boldsymbol\omega=((\xi,i),(\xi',i'))$, let
$\mathsf U\subseteq\{0,\ldots,|\xi|\}$ and $\mathsf U'\subseteq\{0,\ldots,|\xi'|\}$
be the indices of the points of the two excursions that are not matched in the
sense of Lemma~\ref{lem:ms-meetings}(a). Put $D_1=|D|_1$,
\[
 F_1=|D-\epsilon_{i'}|_1,\quad F_2=\min_a|D-\epsilon_{i'}-\epsilon_a|_1,\quad
 F'_1=|D+\epsilon_i|_1,\quad F'_2=\min_a|D+\epsilon_i+\epsilon_a|_1,
\]
and for $k\in\mathsf U$ and $k'\in\mathsf U'$
\begin{gather*}
 r_1(k)=|A+P_k(\xi)-e(\xi')|_1,\qquad r_0(k)=|A+P_k(\xi)|_1,\\
 r'_1(k')=|P_{k'}(\xi')-A-e(\xi)|_1,\qquad r'_0(k')=|P_{k'}(\xi')-A|_1.
\end{gather*}
Define $\mathfrak b=\mathfrak b_0+\mathfrak b_1+\mathfrak b_2+\mathfrak b_3$, a
function of $(z;\boldsymbol\omega)$, by
\begin{gather*}
 \mathfrak b_0=\sum_{k\in\mathsf U,\,k'\in\mathsf U'}\mathcal B\bigl(D_1+|A+P_k(\xi)-P_{k'}(\xi')|_1\bigr),\\
 \mathfrak b_1=\sum_{k\in\mathsf U}\mathcal T_1\bigl(F_1,r_1(k)\bigr)
 +\sum_{k'\in\mathsf U'}\mathcal T_1\bigl(F'_1,r'_1(k')\bigr),\\
 \mathfrak b_2=\sum_{k\in\mathsf U}\mathcal T_2\bigl((r_1(k)-c)^+,F_2\bigr)
 +\sum_{k'\in\mathsf U'}\mathcal T_2\bigl((r'_1(k')-c)^+,F'_2\bigr),\\
 \mathfrak b_3=\sum_{k\in\mathsf U}\mathcal T_3\bigl(r_0(k),D_1\bigr)
 +\sum_{k'\in\mathsf U'}\mathcal T_3\bigl(r'_0(k'),D_1\bigr).
\end{gather*}
Every argument of $\mathcal B$ is at least $1$: $F_1,F'_1\ge1$ because the
coordinate sum of $D\mp\epsilon$ is $\mp1$, $F_2,F'_2\ge2$, the two points in
$\mathfrak b_0$ are distinct, and $\max(s,\cdot)\ge3$ in $\mathcal T_3$.

\begin{lemma}[The Ising boost along the pair chain]\label{lem:ms-boost}
For $L\ge2cn+1$ and any two $n$-macrostep paths from $o$,
\[
 \sum_{x\in V'\setminus V,\ y\in V\setminus V'}D^{\mathbb T}_{xy}\ \le\
 \sum_{j=1}^n\mathfrak b(Z_{j-1};\boldsymbol\omega_j)+\mathcal B(1)
 +2\mathcal T_3(0,0)+\varepsilon'_{n,L},
\]
where $\varepsilon'_{n,L}=((c+1)n+1)^2\,\alpha_{\rm D}^{L-2cn}/(1-\alpha_{\rm D})$.
\end{lemma}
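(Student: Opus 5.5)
First I would pass from the torus to $\Z^d$. By Lemma~\ref{lem:ov-resolvent}(i), each $D^{\mathbb T}_{xy}$ equals the sum of $\mathcal D_{t'}$ over the lift differences $\tilde y-\tilde x+Lw$. The term $w=0$ is bounded by $\mathcal B(|\tilde y-\tilde x|_1)$ through Lemma~\ref{lem:ov-resolvent}(iv). Every term with $w\ne0$ has $\ell^1$ norm at least $L-2cn$, because Lemma~\ref{lem:ms-words}(c) bounds lateral differences by $2cn$ and forward differences by $n$. By Lemma~\ref{lem:ov-resolvent}(iii), the wrap-around terms for one pair therefore total at most $\alpha_{\rm D}^{L-2cn}/(1-\alpha_{\rm D})$. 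A path has at most $(c+1)n+1$ vertices, so there are at most $((c+1)n+1)^2$ pairs, which gives $\varepsilon'_{n,L}$. It then remains to bound $\sum\mathcal B(|y-x|_1)$ over unshared $x\in V'$ and $y\in V$ in $\Z^d$. By Lemma~\ref{lem:ms-torus}(a), these are the images of the unshared lattice points.

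Next I would classify each pair by levels, with $y$ on level $a$ and $x$ on level $a'$, following the charging rule of Figure~\ref{fig:ms-paths}.

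\emph{Gap $0$} ($a=a'<n$). The forward separation is $D_a$ and the lateral separation is $|A_a+P_k-P_{k'}|_1$. The unmatched indices are exactly $\mathsf U,\mathsf U'$ by Lemma~\ref{lem:ms-meetings}(a). Summing gives $\mathfrak b_0(Z_a;\boldsymbol\omega_{a+1})$.

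\emph{Gap $1$ or $2$.} Charge the pair to macrostep $j=\min(a,a')+1$. Suppose $y$ lies on the earlier level $a$ and $x$ on level $a+1$ or $a+2$. The forward part of $x$ relative to $\operatorname{fwd}(x_a')$ gains $\epsilon_{i'}$ at gap $1$, and $\epsilon_{i'}+\epsilon_{i''}$ at gap $2$ with $i''$ unknown. So the forward distance is at least $F_1$, respectively at least $F_2$ (the minimum over $a$). For the lateral part, the later excursion starts at $A$-relative position $e(\xi')$. Writing its points as $e(\xi')+P_m$, monotonicity (Lemma~\ref{lem:ms-words}(a)) gives lateral distance at least $|r_1(k)-m|$. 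Summing over $m\le c$ yields $\mathcal T_1(F_1,r_1(k))$. At gap $2$ there is an extra unknown lateral shift of up to $c$, hence the maximum over $\varrho\ge(r_1-c)^+$ in $\mathcal T_2$. The symmetric case, with $x$ earlier, gives the primed terms.

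\emph{Gap $s\ge3$.} Charge the pair to the macrostep of the \emph{later} level. Its point is compared with all points at levels $s$ below. The forward distance is at least $\max(s,D_1-s)$ by the coordinate-sum and triangle inequalities. The lateral distance is at least $r_0(k)-c(s-1)$, since each intervening macrostep moves laterally by at most $c$, and the earlier excursion spread is absorbed into $\mathcal T_2$. Summing over $s\ge3$ gives $\mathcal T_3(r_0(k),D_1)$.

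The leftover terms come from the final level $n$, which consists of the single points $x_n,x'_n$ and is not covered by a macrostep $n+1$. The pair $(x_n,x'_n)$ at gap $0$ contributes at most $\mathcal B(1)$. Each of $x_n,x'_n$ against the points of the other path at gaps $\ge3$ contributes at most $\mathcal T_3(0,0)$, using the monotonicity of $\mathcal T_3$. Gaps $1,2$ with the top point are already charged to the earlier level.

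The main obstacle is the bookkeeping. I must check that every unordered pair of unshared points is charged exactly once, that unmatched points on levels where $D=0$ are correctly handled by $\mathsf U,\mathsf U'$, and that the monotonicity reductions (replacing unknown future steps by minima and maxima) are valid for $\mathcal T_2,\mathcal T_3$ given their stated monotonicity.
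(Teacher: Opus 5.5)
Your proposal is correct and follows the paper's proof essentially step for step. It uses the same lift and wrap-around reduction to $\varepsilon'_{n,L}$ and the same level-gap charging: gap $0$ goes to the common level via $\mathfrak b_0$, gaps $1$--$2$ to the earlier vertex via $\mathcal T_1,\mathcal T_2$, and gaps $\ge3$ to the later vertex via $\mathcal T_3$, with the leftover $\mathcal B(1)+2\mathcal T_3(0,0)$ coming from level $n$. Your distance estimates are also the paper's; in the gap-$\ge3$ step, the bound $r_0-c(s-1)$ is the lateral distance to the \emph{end} $E'$ of the earlier excursion, whose points sit at distinct distances $|\xi'|-k'\in\{0,\ldots,c\}$ from $E'$, which is exactly how that case is fed into $\mathcal T_2$.
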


\begin{proof}
\emph{Step 1 (lifts).} Let $\tilde x,\tilde y$ be the points of the paths in
$\Z^d$. By Lemma~\ref{lem:ov-resolvent}(i),
$D^{\mathbb T}_{xy}=\sum_{w\in\Z^d}\mathcal D_{t'}(\tilde y-\tilde x+Lw)$. The
term $w=0$ is at most $\mathcal B(|\tilde y-\tilde x|_1)$, since
$\tilde x\ne\tilde y$. For $w\ne0$ some coordinate of $\tilde y-\tilde x+Lw$ has
absolute value at least $L-2cn$, by Lemma~\ref{lem:ms-words}(c); distinct $w$
give distinct points, so by Lemma~\ref{lem:ov-resolvent}(iii) these terms sum
to at most $\alpha_{\rm D}^{L-2cn}/(1-\alpha_{\rm D})$. There are at most
$|V||V'|\le((c+1)n+1)^2$ pairs. It remains to show that
$\Sigma=\sum_{x\in V'\setminus V,\,y\in V\setminus V'}\mathcal B(|\tilde x-\tilde y|_1)$
is at most $\sum_j\mathfrak b(Z_{j-1};\boldsymbol\omega_j)+\mathcal B(1)+2\mathcal T_3(0,0)$.
From now on we work in $\Z^d$ and drop the tildes.

\emph{Step 2 (charging by levels).} Put
$U_a=V^{(a)}(\gamma)\setminus V^{(a)}(\gamma')$ and
$U'_a=V^{(a)}(\gamma')\setminus V^{(a)}(\gamma)$. By
Lemma~\ref{lem:ms-meetings}(a), $V\setminus V'$ and $V'\setminus V$ are the
disjoint unions of the $U_a$ and of the $U'_a$. Split $\Sigma$ by the levels of
$y$ and $x$ and charge each pair once:
\begin{itemize}
\item level gap $0$ below level $n$: to macrostep $a+1$ at the common level
$a<n$, through $\mathfrak b_0$;
\item level gap $0$ on level $n$: to the end, through $\mathcal B(1)$;
\item level gap $1$ or $2$: to the vertex on the earlier level $e$, at
macrostep $e+1$, through its $\mathcal T_1$ and $\mathcal T_2$ terms;
\item level gap at least $3$: to the vertex on the later level $g$, at
macrostep $g+1$, or to the end if $g=n$, through its $\mathcal T_3$ term.
\end{itemize}
Every unordered cross pair falls in exactly one case, and the charged vertex is
unshared, because it lies in some $U_a$ or $U'_a$.

\emph{Step 3 (gap $0$).} For $a<n$, $y=\zeta_{a+1}(k)$ with $k\in\mathsf U$ and
$x=\zeta'_{a+1}(k')$ with $k'\in\mathsf U'$; the forward part of $y-x$ is $D_a$
and the lateral part is $A_a+P_k(\xi_{a+1})-P_{k'}(\xi'_{a+1})$, so these pairs
sum to exactly $\mathfrak b_0(Z_a;\boldsymbol\omega_{a+1})$. On level $n$ the only
possible pair is $(x'_n,x_n)$ with $x_n\ne x'_n$, which contributes at most
$\mathcal B(1)$.

\emph{Step 4 (gaps $1$ and $2$).} Let $y=\zeta_{e+1}(k)\in U_e$ be a vertex of
$\gamma$ on the earlier level $e<n$; a vertex of $\gamma'$ on the earlier level
is treated in the same way, with primed quantities. The vertices of $\gamma'$ on
level $e+1$ are points $x'_{e+1}+u$ with $u$ lateral and
$|u|_1=k''\in\{0,\ldots,c\}$, at most one for each $k''$, by
Lemma~\ref{lem:ms-words}(a) applied to $\xi'_{e+2}$ (for $e+1=n$ only $u=0$
occurs). Now $\operatorname{fwd}(y)-\operatorname{fwd}(x'_{e+1})=D_e-\epsilon_{i'_{e+1}}$,
with $\ell^1$ norm $F_1$, and
$|\operatorname{lat}(y)-\operatorname{lat}(x'_{e+1})|_1=r_1(k)$, so the lateral
distance to $x'_{e+1}+u$ is at least $|r_1(k)-k''|$. Since $\mathcal B$ is
nonincreasing, these pairs contribute at most $\mathcal T_1(F_1,r_1(k))$. The
vertices of $\gamma'$ on level $e+2\le n$ are points $x'_{e+2}+u$ with
$|u|_1=k''\le c$, where $x'_{e+2}=x'_{e+1}+e(\xi'_{e+2})+\epsilon_{i'_{e+2}}$. Their
forward distance to $y$ is
$|D_e-\epsilon_{i'_{e+1}}-\epsilon_{i'_{e+2}}|_1\ge F_2$, and the lateral distance
from $y$ to $x'_{e+2}$ is
$\varrho\ge r_1(k)-|e(\xi'_{e+2})|_1\ge(r_1(k)-c)^+$. These pairs contribute at
most $\sum_{k''}\mathcal B(F_2+|\varrho-k''|)\le\mathcal T_2((r_1(k)-c)^+,F_2)$.
The terms charged to macrostep $e+1$ depend only on $Z_e$ and
$\boldsymbol\omega_{e+1}$: the unknown future $\xi'_{e+2}$, $i'_{e+2}$ is absorbed
by the sum over all $k''\le c$, the maximum in $\mathcal T_2$ and the minimum
in $F_2$.

\emph{Step 5 (gaps $s\ge3$).} Let $v$ be an unshared vertex on the later level
$g$: $v=\zeta_{g+1}(k)$ with $g<n$, or $v=x_n$ with $g=n$. The points of
$\gamma'$ on level $g-s\ge0$ are $E'-(e(\xi')-P_{k'}(\xi'))$, where
$\xi'=\xi'_{g-s+1}$ and $E'=x'_{g-s}+e(\xi')$ is the end of that excursion, with
$|e(\xi')-P_{k'}(\xi')|_1=|\xi'|-k'\in\{0,\ldots,c\}$. Laterally,
$\operatorname{lat}(x'_g)-\operatorname{lat}(E')=\sum_{j=g-s+2}^ge(\xi'_j)$ is a
sum of $s-1$ vectors of norm at most $c$, so
$\varrho=|\operatorname{lat}(v)-\operatorname{lat}(E')|_1\ge(r_0-c(s-1))^+$ with
$r_0=|\operatorname{lat}(v)-\operatorname{lat}(x'_g)|_1$; this is $r_0(k)$ for
$v=\zeta_{g+1}(k)$ and $|A_n|_1$ for $v=x_n$. In the forward coordinates,
$\operatorname{fwd}(v)-\operatorname{fwd}(E')=D_g+O'_s$, where
$O'_s=\sum_{j=g-s+1}^g\epsilon_{i'_j}$ has nonnegative entries summing to $s$;
hence $|D_g+O'_s|_1\ge\max(s,|D_g|_1-s)$, by the coordinate sum and the triangle
inequality. The pairs of $v$ with level $g-s$ therefore contribute at most
$\mathcal T_2((r_0-c(s-1))^+,\max(s,|D_g|_1-s))$, and summing over $s\ge3$
gives $\mathcal T_3(r_0,|D_g|_1)$. For $g=n$ this is at most
$\mathcal T_3(0,0)$ by monotonicity, and the two end vertices $x_n$, $x'_n$
give $2\mathcal T_3(0,0)$.

\emph{Step 6.} By Step 2 every cross pair is charged exactly once, and the
charges at macrostep $j$ are exactly the four groups defining
$\mathfrak b(Z_{j-1};\boldsymbol\omega_j)$.
\end{proof}

Two facts carry the proof: the $k$-th point of a monotone word is at $\ell^1$
distance exactly $k$ from its start and $|\xi|-k$ from its end, and oriented
increments satisfy $|D+O_s|_1\ge\max(s,|D|_1-s)$. The gap-$\ge3$ terms, a
supremum over the past of the other path, are positive at every state and
decay only slowly, roughly like $\mathcal B(r_0/c)$ in the lateral distance
$r_0$, so the kernel below is not finitely supported. Lemma~\ref{lem:ms-boost}
is the macrostep counterpart of Lemma~\ref{lem:ov-boost}. There the charge at
time $k$ depends only on $Z_k$ and vanishes when the two paths meet; here it
depends on the pair of macrosteps and need not vanish on a level where the
paths meet, since two excursions can share some of their points and not
others. For a dependent site
field of finite range, corrections from nearby vertices of the two paths
appear in \cite[proof of Theorem~1.2]{BDNS2020}.

\begin{proposition}[Product of per-macrostep weights]\label{prop:ms-product}
Assume \eqref{eq:ov-H}, the hypotheses of Lemma~\ref{lem:ov-shared},
$\lambda_*<1$ and $\rho_cp<\rho_-$. Put
\[
 \mathsf W(z;\boldsymbol\omega)=\rho_-^{-N_V(z;\boldsymbol\omega)}
 \Bigl(\frac{\rho_-}{\rho_cp}\Bigr)^{(N_E+N_{\rm f})(z;\boldsymbol\omega)}
 e^{\kappa\mathfrak b(z;\boldsymbol\omega)}.
\]
Then $\mathsf W\ge1$, and for $n\ge1$ and $L\ge2cn+1$,
\[
 \E W_n^2\ \le\ e^{\kappa\varepsilon'_{n,L}}\,C_{\rm fin}\,
 \mathbf E_0\Bigl[\prod_{j=1}^n\mathsf W(Z_{j-1};\boldsymbol\omega_j)\Bigr],\qquad
 C_{\rm fin}=\rho_-^{-1}e^{\kappa(\mathcal B(1)+2\mathcal T_3(0,0))},
\]
where $\mathbf E_0$ is the expectation for the pair chain started at $Z_0=0$.
\end{proposition}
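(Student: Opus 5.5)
The bound follows by chaining Lemma~\ref{lem:ms-pz}, Lemma~\ref{lem:ms-pair}, Lemma~\ref{lem:ms-boost} and the counting identities of Lemma~\ref{lem:ms-meetings}, and then checking that the resulting prefactor regroups into a product of the per-macrostep weights $\mathsf W$. First, $\mathsf W\ge1$ holds termwise. We have $\rho_-<1$, $\rho_-/(\rho_cp)>1$ by the hypothesis $\rho_cp<\rho_-$, all counts are nonnegative, and $\mathfrak b\ge0$ with $\kappa\ge0$.

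Start from Lemma~\ref{lem:ms-pz}: $\E W_n^2=\E_{\nu_n\otimes\nu_n}[p^{-|E\cap E'|}\mathcal R_L(V,V')]$. Bound $\mathcal R_L$ by Lemma~\ref{lem:ms-pair}, so the integrand is at most
\[
 \rho_-^{-|V\cap V'|}\Bigl(\frac{\rho_-}{\rho_cp}\Bigr)^{|E\cap E'|}
 \exp\Bigl(\kappa\sum_{x\in V'\setminus V,\,y\in V\setminus V'}D^{\mathbb T}_{xy}\Bigr).
\]
This uses $\rho_-^{-(|V\cap V'|-|E\cap E'|)}\rho_c^{-|E\cap E'|}p^{-|E\cap E'|}=\rho_-^{-|V\cap V'|}(\rho_-/(\rho_cp))^{|E\cap E'|}$. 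By Lemma~\ref{lem:ms-torus}(a) the torus counts equal the $\Z^d$ counts. Lemma~\ref{lem:ms-meetings}(b),(c) then write $|V\cap V'|=\sum_jN_V(Z_{j-1};\boldsymbol\omega_j)+\mathbf1\{Z_n=0\}$ and $|E\cap E'|=\sum_j(N_E+N_{\rm f})(Z_{j-1};\boldsymbol\omega_j)$. Lemma~\ref{lem:ms-boost} bounds the exponent sum by $\sum_j\mathfrak b(Z_{j-1};\boldsymbol\omega_j)+\mathcal B(1)+2\mathcal T_3(0,0)+\varepsilon'_{n,L}$. Since $\kappa\ge0$ and $\exp$ is increasing, the integrand is at most
\[
 \rho_-^{-\mathbf1\{Z_n=0\}}\,e^{\kappa(\mathcal B(1)+2\mathcal T_3(0,0))}\,e^{\kappa\varepsilon'_{n,L}}
 \prod_{j=1}^n\mathsf W(Z_{j-1};\boldsymbol\omega_j).
\]
Bounding $\rho_-^{-\mathbf1\{Z_n=0\}}\le\rho_-^{-1}$ produces exactly $C_{\rm fin}$.

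It remains to identify the law. Under $\nu_n\otimes\nu_n$ the pairs $\boldsymbol\omega_j$ are iid, and $Z_j=Z_{j-1}+(\text{increment of }\boldsymbol\omega_j)$ starts at $Z_0=0$ by Lemma~\ref{lem:ms-meetings}(e). So the expectation of $\prod_j\mathsf W(Z_{j-1};\boldsymbol\omega_j)$ is exactly $\mathbf E_0$ for the pair chain, in which each step samples $\boldsymbol\omega_j$ afresh and $Z_j$ is determined by it.

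The main point needing care is the bookkeeping, not any estimate. We must confirm that every vertex of $V\cap V'$ is counted once, either by some $N_V$ at a level $a<n$ or by the level-$n$ indicator, and that shared lateral and forward edges are counted exactly once by $N_E$ and $N_{\rm f}$. Both facts are the content of Lemma~\ref{lem:ms-meetings}(a)--(c). Beyond this, the argument needs only the sign conditions $\kappa\ge0$, which gives monotonicity of the exponential bound, and $\rho_cp<\rho_-$, which gives $\mathsf W\ge1$ for the later expansion.
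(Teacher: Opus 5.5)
Your proof is correct and follows the paper's argument. You combine Lemmas~\ref{lem:ms-pz}, \ref{lem:ms-pair} and~\ref{lem:ms-boost}, regroup the shared-vertex and shared-edge factors into the per-macrostep weights via Lemma~\ref{lem:ms-meetings}(b),(c), and absorb $\rho_-^{-\mathbf 1\{Z_n=0\}}\le\rho_-^{-1}$ and the end and wrap-around terms into $C_{\rm fin}$ and $e^{\kappa\varepsilon'_{n,L}}$. Your explicit identification of the pair-chain law via Lemma~\ref{lem:ms-meetings}(e) is a step the paper leaves implicit.
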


\begin{proof}
$\mathsf W\ge1$ because $\rho_-\le1$, $\rho_cp<\rho_-$ and $\mathfrak b\ge0$.
Combine Lemmas~\ref{lem:ms-pz}, \ref{lem:ms-pair} and~\ref{lem:ms-boost}. By
Lemma~\ref{lem:ms-meetings}(b),(c),
\[
 \prod_{j=1}^n\rho_-^{-N_V}\Bigl(\frac{\rho_-}{\rho_cp}\Bigr)^{N_E+N_{\rm f}}
 =\rho_-^{-(|V\cap V'|-\mathbf1\{Z_n=0\})}\Bigl(\frac{\rho_-}{\rho_cp}\Bigr)^{|E\cap E'|},
\]
and multiplying by $\rho_-^{-\mathbf1\{Z_n=0\}}\le\rho_-^{-1}$ bounds
$p^{-|E\cap E'|}\rho_-^{-(|V\cap V'|-|E\cap E'|)}\rho_c^{-|E\cap E'|}$. The end
terms of Lemma~\ref{lem:ms-boost} give $e^{\kappa(\mathcal B(1)+2\mathcal T_3(0,0))}$,
and the wrap-around term gives $e^{\kappa\varepsilon'_{n,L}}$.
\end{proof}

\subsection{An interaction-matrix criterion with a far region}\label{sec:ms-criterion}

Let $\mathbb S=\Z^3\times\mathbb A^{(f)}$, and let $\mathsf P$ be the transition
kernel of the pair chain. Since $f\ge4$, the forward difference walk is
transient by Lemma~\ref{lem:ov-green}(ii) with $f$ in place of $d$, so
$\mathsf G=\sum_{k\ge0}\mathsf P^k$ has finite entries. Define
\[
 \mathsf Q(z,z'')=\E\bigl[(\mathsf W(z;\boldsymbol\omega)-1)\,
 \mathbf1\{z+\Delta(\boldsymbol\omega)=z''\}\bigr]\ \ge0,\qquad
 \eta^{\mathsf Q}=\mathsf Q\mathbf1,
\]
where $\boldsymbol\omega$ is a pair of independent macrosteps and
$\Delta(\boldsymbol\omega)$ its increment. The group $\mathfrak G$ generated by
signed permutations of the lateral coordinates and permutations of the forward
coordinates preserves the macrostep law and the monotone words, and preserves
$\ell^1$ distances. Hence
$\mathsf W(\mathfrak gz;\mathfrak g\boldsymbol\omega)=\mathsf W(z;\boldsymbol\omega)$,
$\mathsf Q(\mathfrak gz,\mathfrak gz'')=\mathsf Q(z,z'')$ and
$\mathsf G(\mathfrak gz,\mathfrak gz')=\mathsf G(z,z')$ for
$\mathfrak g\in\mathfrak G$, and the computations below reduce to orbits.

\begin{theorem}[Interaction-matrix criterion with a far region]\label{thm:ms-criterion}
Let $C'\subset\mathbb S$ be finite and $\mathfrak G$-invariant, with
$C'\supseteq\{D=0,\ |A|_1\le2c\}$. Let
\begin{itemize}
\item $w:C'\to(0,\infty)$ be constant on $\mathfrak G$-orbits, and
$\bar\eta\ge\eta^{\mathsf Q}$ on $C'$;
\item $\bar{\mathsf M}$ satisfy
$(\bar{\mathsf M}w)(z)\ge\sum_{z''}\mathsf Q(z,z'')\sum_{z'\in C'}\mathsf G(z'',z')w(z')$
for $z\in C'$;
\item $\Gamma_w\ge\max_{z\in C'}\sum_{z'\in C'}\mathsf G(z,z')w(z')$;
\item $\upsilon:\mathbb S\setminus C'\to[0,\infty)$ satisfy
$\upsilon\ge\eta^{\mathsf Q}$ on $\mathbb S\setminus C'$, and
$\eta_F\ge\sup_{z''\in\mathbb S}\sum_{z'\notin C'}\mathsf G(z'',z')\upsilon(z')$.
\end{itemize}
If $\lambda\in(0,1)$ and $\mathsf H>0$ satisfy
\[
 \text{(i)}\ \ \bar{\mathsf M}w+\mathsf H\eta_F\bar\eta\ \le\ \lambda w\ \text{ on }C',
 \qquad
 \text{(ii)}\ \ \Gamma_w+\mathsf H\eta_F\ \le\ \lambda\mathsf H,
\]
then for every $n\ge1$
\[
 \mathbf E_0\Bigl[\prod_{j=1}^n\mathsf W(Z_{j-1};\boldsymbol\omega_j)\Bigr]
 \ \le\ 1+\frac{\Lambda_*\lambda\mathsf H}{1-\lambda},\qquad
 \Lambda_*=\max\Bigl(\max_{C'}\frac{\bar\eta}{w},\frac1{\mathsf H}\Bigr).
\]
\end{theorem}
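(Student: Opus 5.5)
The plan is to expand the product in the excesses $\mathsf W-1\ge0$ and to dominate the resulting series by powers of the kernel $\mathsf Q\mathsf G$, acting on a single test function that equals $w$ on the near set $C'$ and is a multiple of $\upsilon$ on the far region. Hypotheses (i) and (ii) will then say exactly that this test function is a $\lambda$-supersolution.

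\emph{Step 1 (expansion).} Write $\prod_{j=1}^n\mathsf W(Z_{j-1};\boldsymbol\omega_j)=\prod_j\bigl(1+(\mathsf W-1)\bigr)$. Expanding gives a sum, over $m\ge0$ and $1\le j_1<\dots<j_m\le n$, of $\prod_i(\mathsf W(Z_{j_i-1};\boldsymbol\omega_{j_i})-1)$. All terms are nonnegative, so Tonelli applies and we may bound term by term.

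\emph{Step 2 (Markov property).} For fixed indices, apply the Markov property of the pair chain (Lemma~\ref{lem:ms-meetings}(e)) at the times $j_i-1$ and $j_i$. Between marked steps the chain moves by $\mathsf P$. At a marked step the factor $(\mathsf W-1)$ together with the transition contributes the kernel $\mathsf Q$. After the last marked step the remaining mass is at most $1$. Hence the term is at most
\[
\bigl(\mathsf P^{j_1-1}\mathsf Q\,\mathsf P^{j_2-j_1-1}\mathsf Q\cdots\mathsf Q\mathbf 1\bigr)(0).
\]
Summing over the gaps, and enlarging each finite sum of powers of $\mathsf P$ to $\mathsf G$ (which has finite entries because $f\ge4$), the $m$-th term is at most $(\mathsf G a_m)(0)$, where $a_1=\mathsf Q\mathbf 1=\eta^{\mathsf Q}$ and $a_{m+1}=\mathsf Q\mathsf G a_m$.

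\emph{Step 3 (supersolution).} Define $\psi=w$ on $C'$ and $\psi=\mathsf H\upsilon$ on $\mathbb S\setminus C'$.
\begin{itemize}
\item For every $z''\in\mathbb S$,
\[
(\mathsf G\psi)(z'')\le\sum_{z'\in C'}\mathsf G(z'',z')w(z')+\mathsf H\eta_F\le\Gamma_w+\mathsf H\eta_F\le\lambda\mathsf H,
\]
by the definition of $\eta_F$ and (ii). Using the $\mathfrak G$-invariance, $\sum_{z'\in C'}\mathsf G(z'',z')w(z')$ is bounded by $\Gamma_w$ for $z''\in C'$. For $z''$ outside $C'$, I plan to reduce to the same maximum by a first-hitting decomposition of $\mathsf G$ at $C'$, or else to read $\Gamma_w$ as a supremum over all of $\mathbb S$; this point needs checking.
\item For $z\in C'$,
\[
(\mathsf Q\mathsf G\psi)(z)\le(\bar{\mathsf M}w)(z)+\mathsf H\eta_F\,\eta^{\mathsf Q}(z)\le(\bar{\mathsf M}w)(z)+\mathsf H\eta_F\bar\eta(z)\le\lambda w(z),
\]
by (i).
\item For $z\notin C'$,
\[
(\mathsf Q\mathsf G\psi)(z)\le\eta^{\mathsf Q}(z)\sup\mathsf G\psi\le\upsilon(z)\lambda\mathsf H=\lambda\psi(z).
\]
\end{itemize}
Together these give $\mathsf Q\mathsf G\psi\le\lambda\psi$.

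\emph{Step 4 (conclusion).} The starting vector satisfies $a_1=\eta^{\mathsf Q}\le\Lambda_*\psi$. On $C'$ this follows from $\bar\eta\le(\max_{C'}\bar\eta/w)\,w$. Off $C'$ it follows from $\eta^{\mathsf Q}\le\upsilon=\psi/\mathsf H$. Since $\mathsf Q\mathsf G$ has nonnegative entries, induction gives $a_m\le\Lambda_*\lambda^{m-1}\psi$. Then $(\mathsf G a_m)(0)\le\Lambda_*\lambda^{m-1}\cdot\lambda\mathsf H$, and summing over $m\ge1$, with $1$ for $m=0$, yields the claimed bound $1+\Lambda_*\lambda\mathsf H/(1-\lambda)$.

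The main obstacle is the bookkeeping in Step 2: making rigorous that the marked factor $(\mathsf W-1)$ at step $j$ depends only on $(Z_{j-1},\boldsymbol\omega_j)$ and so combines with the transition into exactly $\mathsf Q(Z_{j-1},\cdot)$. The uniform-in-$z''$ bound on the near part of $\mathsf G\psi$ in Step 3 is the other point I would check carefully.
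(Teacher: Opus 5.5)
Your proposal is correct and follows the paper's proof essentially step for step. The paper also expands in the excesses $\mathsf W-1$, uses the Markov property to bound the sum by $\sum_{r\ge0}((\mathsf G\mathsf Q)^r\mathbf 1)(0)$, takes the supersolution equal to $w$ on $C'$ and to $\mathsf H\upsilon$ off $C'$, and inducts on $(\mathsf Q\mathsf G)^{r-1}\eta^{\mathsf Q}$. The point you flagged is settled by your first option, which is exactly what the paper does. For $z''\notin C'$, let $\tau$ be the hitting time of $C'$; the chain visits no point of $C'$ before $\tau$, so the strong Markov property gives $\sum_{z'\in C'}\mathsf G(z'',z')w(z')=\mathbf E_{z''}\bigl[\mathbf 1\{\tau<\infty\}\,(\mathsf G(w\mathbf 1_{C'}))(Z_\tau)\bigr]\le\Gamma_w$. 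No reinterpretation of $\Gamma_w$ is needed. For $z''\in C'$ the bound is simply the definition of $\Gamma_w$, so $\mathfrak G$-invariance plays no role there.
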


We call $(C',w,\lambda,\mathsf H,\bar\eta,\bar{\mathsf M},\Gamma_w,\upsilon,\eta_F)$
a \emph{certificate}, and $C'$ its \emph{near set}.

\begin{proof}
\emph{Expansion.} Write $\mathsf W(Z_{j-1};\boldsymbol\omega_j)=1+\mathsf x_j$
with $\mathsf x_j\ge0$. Then
$\prod_{j\le n}(1+\mathsf x_j)=\sum_{J\subseteq[n]}\prod_{j\in J}\mathsf x_j$.
For $J=\{j_1<\cdots<j_r\}$, the Markov property
(Lemma~\ref{lem:ms-meetings}(e): $\boldsymbol\omega_j$ is independent of
$Z_{j-1}$, and $\mathsf x_j$ and $Z_j$ are functions of
$(Z_{j-1},\boldsymbol\omega_j)$) gives
\[
 \mathbf E_0\prod_{j\in J}\mathsf x_j
 =\bigl(\mathsf P^{j_1-1}\mathsf Q\,\mathsf P^{j_2-j_1-1}\mathsf Q\cdots\mathsf Q\,
 \mathsf P^{n-j_r}\mathbf1\bigr)(0),
\]
with $\mathsf P^{n-j_r}\mathbf1=\mathbf1$. All entries are nonnegative, and
$J\mapsto(j_1-1,j_2-j_1-1,\ldots)$ is injective. Summing over $J$ and replacing
each power of $\mathsf P$ by $\mathsf G$ gives
\[
 \mathbf E_0\prod_{j\le n}\mathsf W(Z_{j-1};\boldsymbol\omega_j)
 \ \le\ \sum_{r\ge0}\bigl((\mathsf G\mathsf Q)^r\mathbf1\bigr)(0)
 =1+\sum_{r\ge1}\bigl(\mathsf G(\mathsf Q\mathsf G)^{r-1}\eta^{\mathsf Q}\bigr)(0).
\]

\emph{The near part from any start.} For $z''\notin C'$ let $\tau$ be the
hitting time of $C'$ by the pair chain started at $z''$. Since the chain is
outside $C'$ before $\tau$, the strong Markov property gives
\[
 \sum_{z'\in C'}\mathsf G(z'',z')w(z')
 =\mathbf E_{z''}\Bigl[\mathbf1\{\tau<\infty\}\,
 \bigl(\mathsf G(w\mathbf1_{C'})\bigr)(Z_\tau)\Bigr]\ \le\ \Gamma_w;
\]
for $z''\in C'$ this is the definition of $\Gamma_w$.

\emph{A supersolution.} Put $\varphi=w$ on $C'$ and $\varphi=\mathsf H\upsilon$ off
$C'$. For every $z''$, by the previous step and (ii),
\begin{equation}\label{eq:ms-Gphi}
 (\mathsf G\varphi)(z'')=\sum_{z'\in C'}\mathsf G(z'',z')w(z')
 +\mathsf H\sum_{z'\notin C'}\mathsf G(z'',z')\upsilon(z')
 \ \le\ \Gamma_w+\mathsf H\eta_F\ \le\ \lambda\mathsf H.
\end{equation}
On $C'$, the definition of $\bar{\mathsf M}$, the bound
$\eta^{\mathsf Q}\le\bar\eta$ and (i) give
$(\mathsf Q\mathsf G\varphi)(z)\le(\bar{\mathsf M}w)(z)+\mathsf H\eta_F\eta^{\mathsf Q}(z)
\le\lambda w(z)=\lambda\varphi(z)$. Off $C'$, \eqref{eq:ms-Gphi} gives
$(\mathsf Q\mathsf G\varphi)(z)\le\eta^{\mathsf Q}(z)\sup\mathsf G\varphi
\le\upsilon(z)\lambda\mathsf H=\lambda\varphi(z)$. So
$\mathsf Q\mathsf G\varphi\le\lambda\varphi$ on $\mathbb S$. Also
$\eta^{\mathsf Q}\le\Lambda_*\varphi$: on $C'$ because $\eta^{\mathsf Q}\le\bar\eta$,
and off $C'$ because $\eta^{\mathsf Q}\le\upsilon=\varphi/\mathsf H$.

\emph{Conclusion.} Since $\mathsf Q\mathsf G$ is a positive operator,
$(\mathsf Q\mathsf G)^{r-1}\eta^{\mathsf Q}\le\Lambda_*\lambda^{r-1}\varphi$ by
induction, and \eqref{eq:ms-Gphi} at $z''=0$ gives
\[
 \sum_{r\ge1}\bigl(\mathsf G(\mathsf Q\mathsf G)^{r-1}\eta^{\mathsf Q}\bigr)(0)
 \ \le\ \Lambda_*\sum_{r\ge1}\lambda^{r-1}(\mathsf G\varphi)(0)
 \ \le\ \frac{\Lambda_*\lambda\mathsf H}{1-\lambda}.
\]
\end{proof}

The expansion is the one in the proof of \cite[Theorem~5]{JiangLang2026}, where
the excess kernel vanishes outside a finite set \cite[p.~9]{JiangLang2026}, so
that only powers of a finite matrix occur. Theorem~\ref{thm:ms-criterion}
extends it to the kernel $\mathsf Q$, which is not finitely supported, through
the far-region supersolution $\varphi=\mathsf H\upsilon$ and the hitting-time
bound for the near part, together with the forward bound of
Lemma~\ref{lem:ms-khas} below. Condition (i) implies
$\bar{\mathsf M}w\le\lambda w$, so $\lambda$ is at least the Perron root of
$\bar{\mathsf M}$, by the Collatz--Wielandt bound in the form used in
\cite[Eq.~(23)]{JiangLang2026}.

\begin{lemma}[Bounds in the far region]\label{lem:ms-far}
Let $z=(A,D)\notin C'$. Then $N_V=N_E=N_{\rm f}=0$ at $z$ for every
$\boldsymbol\omega$, so
$\eta^{\mathsf Q}(z)=\E[e^{\kappa\mathfrak b(z;\boldsymbol\omega)}-1]
\le\kappa\,\E[\mathfrak b(z;\boldsymbol\omega)]\,e^{\kappa\sup\mathfrak b}$.
\begin{enumerate}\renewcommand{\labelenumi}{(\alph{enumi})}
\item For finitely many states, $\eta^{\mathsf Q}(z)$ is a finite sum over
$\boldsymbol\omega$.
\item Fix $D$ and $r\ge0$, with $r\ge2c+1$ if $D=0$. For every $A$ with
$|A|_1\ge r$ and every $\boldsymbol\omega$ with word lengths $(l,l')$ and
forward steps $(i,i')$,
$\mathfrak b(z;\boldsymbol\omega)\le\hat{\mathfrak b}_r(D;l,l',i,i')$, where
\begin{multline*}
 \hat{\mathfrak b}_r=\sum_{k=0}^{l}\Bigl[\mathfrak s_k+\mathcal T_2\bigl((r-k-l')^+,F_1\bigr)
 +\mathcal T_2\bigl((r-k-l'-c)^+,F_2\bigr)+\mathcal T_3\bigl((r-k)^+,D_1\bigr)\Bigr]\\
 +\sum_{k'=0}^{l'}\Bigl[\mathcal T_2\bigl((r-k'-l)^+,F'_1\bigr)
 +\mathcal T_2\bigl((r-k'-l-c)^+,F'_2\bigr)+\mathcal T_3\bigl((r-k')^+,D_1\bigr)\Bigr],
\end{multline*}
with $\mathfrak s_k=\sum_{k'=0}^{l'}\mathcal B(\max(1,r-k-k'))$ if $D=0$ and
$\mathfrak s_k=\mathcal T_2((r-k)^+,D_1)$ if $D\ne0$. Consequently
$\eta^{\mathsf Q}(z)\le\kappa\,\E[\hat{\mathfrak b}_r]\exp(\kappa\max\hat{\mathfrak b}_r)$
for all $A$ with $|A|_1\ge r$, where the expectation and the maximum are over
$(l,l',i,i')$.
\item If $D_1=|D|_1\ge4$, then for all $A$ and $\boldsymbol\omega$
\[
 \mathfrak b\ \le\ 2(c+1)\bigl[\mathcal T_2(0,D_1)+\mathcal T_2(0,D_1-1)
 +\mathcal T_2(0,D_1-2)+\mathcal T_3(0,D_1)\bigr]=:\hat{\mathfrak b}^{\rm sh}(D_1),
\]
and $\eta^{\mathsf Q}(z)\le xe^x$ with $x=\kappa\hat{\mathfrak b}^{\rm sh}(D_1)$.
\item For $D_1=2R$ with $R\ge2$,
$\kappa\hat{\mathfrak b}^{\rm sh}(2R)\le2\kappa(c+1)g\,(3+2/(1-\lambda_*))\,\mathcal B(R)$.
\end{enumerate}
\end{lemma}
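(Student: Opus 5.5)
The plan has four stages: the vanishing of the meeting counts, then the three bounds (b), (c) and (d), each read off the definition of $\mathfrak b=\mathfrak b_0+\cdots+\mathfrak b_3$ using only the monotonicity of $\mathcal B$, $\mathcal T_1$, $\mathcal T_2$ and $\mathcal T_3$.

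First, the vanishing of the counts. If $D\ne0$, then $N_V=N_{\rm f}=0$ by definition, and $N_E=0$ because it counts matched indices. If $D=0$ and $z\notin C'$, then $|A|_1>2c$ since $C'\supseteq\{D=0,\ |A|_1\le2c\}$. A match would need $A+P_k(\xi)=P_{k'}(\xi')$, which forces $|A|_1\le|P_k|_1+|P_{k'}|_1\le2c$; so nothing matches, and $N_{\rm f}$ also vanishes since it needs $A+e(\xi)=e(\xi')$. Hence $\mathsf W=e^{\kappa\mathfrak b}$, and $\eta^{\mathsf Q}(z)=\E[e^{\kappa\mathfrak b}-1]$. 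The elementary inequality $e^x-1\le xe^x$ for $x\ge0$, followed by replacing $e^{\kappa\mathfrak b}$ by its supremum, gives the displayed bound. Part (a) is immediate: for fixed $z$ the quantity is an average over the finitely many pairs $\boldsymbol\omega$, of which there are $(163f)^2$ when $c=3$.

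For (b), I bound each piece of $\mathfrak b$ using $|A|_1\ge r$ and the triangle inequality. Since every point of the far region has no matches, $\mathsf U$ and $\mathsf U'$ are full.
\begin{itemize}
\item $\mathfrak b_0$: if $D=0$, the lateral distance is at least $r-k-k'$, which gives $\mathfrak s_k$ after using that $\mathcal B$ is nonincreasing and that its argument is at least $1$. If $D\ne0$, the argument of $\mathcal B$ is $D_1+|A+P_k-P_{k'}|_1$, with lateral part at least $|r-k|-$ something near $k'$. I will bound the sum over $k'\le l'\le c$ of $\mathcal B(D_1+|\varrho-k'|)$ with $\varrho\ge(r-k)^+$ by $\mathcal T_2((r-k)^+,D_1)$, which is the maximum in the definition of $\mathcal T_2$.
\item $\mathfrak b_1$: here $r_1(k)\ge r-k-l'$, and a $\mathcal T_1$ term is dominated by a $\mathcal T_2$ term at any lower bound of its second argument. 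This gives $\mathcal T_2((r-k-l')^+,F_1)$.
\item $\mathfrak b_2$: the same bound applies after the shift by $c$, since $\mathcal T_2$ is nonincreasing.
\item $\mathfrak b_3$: here $r_0(k)\ge r-k$.
\end{itemize}
The primed terms are symmetric. Then $\hat{\mathfrak b}_r$ depends only on $(l,l',i,i')$, and taking the expectation and the maximum over these four quantities gives the bound on $\eta^{\mathsf Q}$.

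For (c), when $D_1\ge4$ the state is never matched. I drop all lateral information by setting the first argument of every $\mathcal T_2$ and $\mathcal T_3$ to $0$, which only increases them by monotonicity. The forward arguments then satisfy $F_1\ge D_1-1$, $F_2\ge D_1-2$ and $\mathfrak b_0$'s forward part equals $D_1$. There are at most $c+1$ points in each excursion, which gives the factor $2(c+1)$. The $\mathcal T_1$ term is bounded by $\mathcal T_2(0,D_1-1)$. I include the extra $\mathcal T_2(0,D_1)$ to cover $\mathfrak b_0$, since each point contributes at most one $\mathcal T_2$-type sum over partners. Then $x=\kappa\hat{\mathfrak b}^{\rm sh}$ and $\eta\le xe^x$ as before.

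For (d), I use $\mathcal T_2(0,F)\le g\mathcal B(F)$, together with $\mathcal B(2R-j)\le\mathcal B(R)$ for $j\le2$ (this needs $R\ge2$), and $\mathcal T_3(0,2R)\le\sum_{s\ge3}g\,\mathcal B(\max(s,2R-s))$. Splitting the last sum at $s=R$ bounds it by $g\mathcal B(R)\bigl(\cdots\bigr)$ using geometric decay at rate $\lambda_*$, producing the factor $(3+2/(1-\lambda_*))$ once combined with the three $\mathcal T_2$ terms.

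The main obstacle is matching this exact constant: counting how many $s$ satisfy $\max(s,2R-s)=m$ (at most two for each $m\ge R$) and organizing the sum carefully. The other parts are monotonicity bookkeeping.
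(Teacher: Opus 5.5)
Your proposal is correct and follows the paper's proof essentially step for step. The common steps are: the no-match argument from $C'\supseteq\{D=0,\ |A|_1\le2c\}$; termwise monotone lower bounds on the distances for (b); taking all lateral arguments to be $0$ together with $F_1\ge D_1-1$ and $F_2\ge D_1-2$ for (c); and, for (d), the at-most-two preimages of $\max(s,2R-s)$ together with $\mathcal T_2(0,F)\le g\,\mathcal B(F)$ and $\mathcal B(2R-2)\le\mathcal B(R)$. The one step to state precisely is the same-level bound for $D\ne0$. The lateral distance between $\zeta(k)$ and $\zeta'(k')$ is at least $|\varrho-k'|$, where $\varrho=|A+P_k(\xi)|_1\ge(r-k)^+$, because $|P_{k'}(\xi')|_1=k'$ exactly for a monotone word (Lemma~\ref{lem:ms-words}(a)); this is what puts the sum in the form maximized by $\mathcal T_2$, whereas your ``$|r-k|$ minus something near $k'$'' leaves that fact unstated.
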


\begin{proof}
A match on level $a$ needs $D=0$ and $|A|_1=|P_{k'}(\xi')-P_k(\xi)|_1\le2c$, and
$N_{\rm f}=1$ needs $D=0$ and $|A|_1=|e(\xi')-e(\xi)|_1\le2c$; both are excluded
off $C'$. The exponential bound is $e^x-1\le xe^x$.
(b) All terms are monotone, so it suffices to bound each distance from below.
On the common level, the point $k$ of the first excursion is at lateral distance
$\varrho\ge|A|_1-k$ from the start of the second, whose points are at lateral
distance $k'$ from its start by Lemma~\ref{lem:ms-words}(a). If $D=0$ the
distance is at least $\max(1,r-k-k')$, because the points are unmatched and
$r\ge2c+1$; if $D\ne0$ the same-level sum is at most
$\sum_{k'}\mathcal B(D_1+|\varrho-k'|)\le\mathcal T_2((r-k)^+,D_1)$. For gaps
$1$ and $2$, $r_1(k)\ge|A|_1-k-|e(\xi')|_1\ge r-k-l'$, and
$\mathcal T_1(F,r_1)\le\mathcal T_2(\varrho,F)$ for every $\varrho\le r_1$,
because $\mathcal T_1(F,r_1)$ is one of the sums over which $\mathcal T_2(\varrho,F)$
maximizes. For gaps at least $3$, $r_0(k)\ge r-k$. The points of the second
excursion are treated in the same way. The exponential bound follows from
$e^{\kappa\mathfrak b}-1\le\kappa\mathfrak be^{\kappa\mathfrak b}
\le\kappa\hat{\mathfrak b}_re^{\kappa\max\hat{\mathfrak b}_r}$.
(c) Take $r=0$ in (b), use $F_1,F'_1\ge D_1-1$ and $F_2,F'_2\ge D_1-2$, and bound
each of the at most $2(c+1)$ points by all four terms.
(d) First, $\mathcal T_2(0,F)\le g\mathcal B(F)$. Second,
$\mathcal T_3(0,2R)=\sum_{s\ge3}\mathcal T_2(0,\max(s,2R-s))
\le g[\sum_{m\ge R}\mathcal B(m)+\sum_{s>R}\mathcal B(s)]
\le2g\mathcal B(R)/(1-\lambda_*)$. Third, $\mathcal B(2R-2)\le\mathcal B(R)$ for
$R\ge2$.
\end{proof}

\begin{lemma}[A forward Khas'minskii bound for the far region]\label{lem:ms-khas}
Let $R_S\ge2$ and $\mathcal D_{\rm in}=\{D\in\mathbb A^{(f)}:|D|_1\le2R_S\}$. For
$D\in\mathcal D_{\rm in}$ let $\bar\upsilon(D)$ satisfy
$\eta^{\mathsf Q}(z)\le\bar\upsilon(D)$ for all $z=(A,D)\notin C'$, for instance
from Lemma~\ref{lem:ms-far}(a),(b). For $D\notin\mathcal D_{\rm in}$ put
$\bar\upsilon(D)=xe^x$ with $x=\kappa\hat{\mathfrak b}^{\rm sh}(|D|_1)$. Let
$\upsilon(z)=\bar\upsilon(D)$ for $z=(A,D)\notin C'$, so that
$\upsilon\ge\eta^{\mathsf Q}$ off $C'$ by Lemma~\ref{lem:ms-far}. Let $u_n(\delta)$
be the probability that the forward difference walk started at $0$ is at
$\delta$ at time $n$, and $T_u(N)=\sum_{n>N}u_n(0)$. Then, for every
$N_F\ge0$, $\eta_F=h_{\rm in}+\eta_{\rm out}$ is admissible in
Theorem~\ref{thm:ms-criterion}, where
\begin{gather*}
 h_{\rm in}=\max_{D_a\in\mathcal D_{\rm in}}\sum_{D'\in\mathcal D_{\rm in}}
 \bar\upsilon(D')\Bigl[\sum_{n\le N_F}u_n(D'-D_a)+T_u(N_F)\Bigr],\\
 \eta_{\rm out}=\sum_{n\ge0}\min\bigl(\bar\upsilon_{\rm out},\,u_n(0)\,S_{\rm out}\bigr),
\end{gather*}
with $S_{\rm out}=\sum_{D'\notin\mathcal D_{\rm in}}\bar\upsilon(D')$ and
$\bar\upsilon_{\rm out}=\sup_{D'\notin\mathcal D_{\rm in}}\bar\upsilon(D')$.
Moreover $\bar\upsilon_{\rm out}$ is the value of $\bar\upsilon$ at
$|D|_1=2R_S+2$.
\end{lemma}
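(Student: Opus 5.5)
The plan is to bound $\sum_{z'\notin C'}\mathsf G(z'',z')\upsilon(z')$ by a quantity that depends only on the forward walk. Since $\upsilon(z')=\bar\upsilon(D')$ depends only on the forward coordinate of $z'=(A',D')$, I will enlarge the sum to all $z'$ with $\upsilon$ extended by $\bar\upsilon(D')$; this is an upper bound because all terms are nonnegative. Then summing $\mathsf G(z'',(A',D'))$ over $A'\in\Z^3$ marginalizes the lateral component. By Lemma~\ref{lem:ms-meetings}(e), $(D_j)$ is a Markov chain on its own, namely the difference of two oriented walks in $\Z^f$. So for $z''=(A_a,D_a)$ we get
\[
 \sum_{z'\notin C'}\mathsf G(z'',z')\upsilon(z')\le\sum_{n\ge0}\sum_{D'}u_n(D'-D_a)\bar\upsilon(D'),
\]
using translation invariance of the forward walk. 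I will split this by $D'\in\mathcal D_{\rm in}$ and $D'\notin\mathcal D_{\rm in}$ and take the supremum over $D_a$.

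For the inner part, the terms with $n\le N_F$ are kept exactly. For $n>N_F$ I use $u_n(D'-D_a)\le u_n(0)$, which is Lemma~\ref{lem:ov-green}(i) with $f$ in place of $d$. Summing over $n>N_F$ gives $T_u(N_F)$, which is finite by transience since $f\ge4$. This handles $D_a\in\mathcal D_{\rm in}$, but the supremum runs over all $D_a\in\mathbb A^{(f)}$. To reach every starting point, I plan to note that the inner part started from $D_a\notin\mathcal D_{\rm in}$ is dominated, via the strong Markov property at the hitting time of $\mathcal D_{\rm in}$, by its value started from the hitting point. That point lies in $\mathcal D_{\rm in}$, so the maximum over $D_a\in\mathcal D_{\rm in}$ suffices.

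For the outer part, at each time $n$ I will bound $\sum_{D'\notin\mathcal D_{\rm in}}u_n(D'-D_a)\bar\upsilon(D')$ in two ways. First, by $\bar\upsilon_{\rm out}$ times the total mass, which is at most one. Second, by $u_n(0)S_{\rm out}$, again using Lemma~\ref{lem:ov-green}(i). Taking the minimum and summing over $n$ gives $\eta_{\rm out}$ uniformly in $D_a$. $S_{\rm out}$ must be finite: by Lemma~\ref{lem:ms-far}(d), $\bar\upsilon(D')$ decays like $\mathcal B(R)$, geometrically with ratio at most $\lambda_*$ times a multinomial factor, while the number of $D'$ with $|D'|_1=2R$ grows only polynomially.

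The final claim is that $\bar\upsilon_{\rm out}$ is attained at $|D|_1=2R_S+2$, which is the smallest even norm outside $\mathcal D_{\rm in}$. This requires monotonicity of $D_1\mapsto\hat{\mathfrak b}^{\rm sh}(D_1)$, since $xe^x$ is increasing. That monotonicity follows because $\mathcal T_2$ and $\mathcal T_3$ are nonincreasing in each argument. The main obstacle is the supremum over arbitrary starting $D_a$ in $h_{\rm in}$, that is, justifying the hitting-time reduction cleanly, together with checking that the inner and outer bounds combine without double counting. Everything else is a monotone rearrangement of nonnegative series.
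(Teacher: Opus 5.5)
Your proposal is correct and follows the paper's proof essentially step for step. Marginalizing the lateral walk leaves only the forward Green function. The inner part is then bounded by the strong Markov property at the hitting time of $\mathcal D_{\rm in}$, together with $u_n(\delta)\le u_n(0)$ for $n>N_F$. The outer part is bounded at each time by $\min(\bar\upsilon_{\rm out},u_n(0)S_{\rm out})$. Finally, $\bar\upsilon_{\rm out}$ is located using the monotonicity of $\hat{\mathfrak b}^{\rm sh}$ and the evenness of $|D|_1$ on $\mathbb A^{(f)}$.

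The two points you flag as obstacles are routine. The Green sum over $\mathcal D_{\rm in}$ counts only visits after the hitting time. The inner and outer sums run over disjoint sets of $D'$, so the supremum of their sum is at most $h_{\rm in}+\eta_{\rm out}$.
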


\begin{proof}
Summing $\mathsf G((A'',D''),(A',D'))$ over $A'$ gives the Green function
$g_F(D'',D')$ of the forward difference walk, because the lateral walk has a
probability law (Lemma~\ref{lem:ms-meetings}(e)). Since $\upsilon$ depends only
on $D$,
$\sum_{z'\notin C'}\mathsf G(z'',z')\upsilon(z')\le\sum_{D'}g_F(D'',D')\bar\upsilon(D')$.
For the inner part, the strong Markov property at the hitting time of
$\mathcal D_{\rm in}$ bounds it by its maximum over starts
$D_a\in\mathcal D_{\rm in}$, where
$g_F(D_a,D')=\sum_nu_n(D'-D_a)$ and $u_n(\delta)\le u_n(0)$ by
Lemma~\ref{lem:ov-green}(i) with $f$ in place of $d$. For the outer part,
$\sum_{D'\notin\mathcal D_{\rm in}}\Pp_{D''}(D_n=D')\bar\upsilon(D')
\le\min(\bar\upsilon_{\rm out},u_n(0)S_{\rm out})$ for each $n$. Finally,
$\hat{\mathfrak b}^{\rm sh}(D_1)$ is nonincreasing in $D_1$ because
$\mathcal T_2(0,\cdot)$ and $\mathcal T_3(0,\cdot)$ are, $x\mapsto xe^x$ is
increasing, and $|D|_1$ is even on $\mathbb A^{(f)}$, so the supremum is attained
on the innermost outer shell $|D|_1=2R_S+2$.
\end{proof}

This is Khas'minskii's argument \cite{Khasminskii1959} for the forward walk,
which is transient because $f\ge4$; the proof of Lemma~\ref{lem:ov-cycle}(iii)
uses it in the same way. $S_{\rm out}$ is finite by Lemma~\ref{lem:ms-far}(d),
since there are at most $\binom{R+f-1}{f-1}^2$ vectors $D\in\mathbb A^{(f)}$ with
$|D|_1=2R$ and $\mathcal B$ decays geometrically.

\subsection{The engine theorem}\label{sec:ms-proofs}

\begin{theorem}[The macrostep engine]\label{thm:ms-engine}
Let $d=f+3$ with $f\ge4$, $c\ge1$ and $y_{\rm w}\in(0,1)$, and let rationals
$g_K=e^{2K'}$, $g_h=e^{2h}$, $0<p\le1$ and $\bar m$ be given. Put
$\rho_-=(1-\bar m)/2$, $\kappa=(1-\rho_-)/\rho_-$, $t'=\tanh K'$,
$\alpha_{\rm D}=2dt'$ and $\lambda_*=dt'/(1-2dt')$, and let $\rho_c$ be as in
Lemma~\ref{lem:ov-shared}. Assume:
\begin{itemize}
\item[(M1)] $h<0<K'\le|h|$ and $2dK'<1$, and $\bar m$ passes the test of
Lemma~\ref{lem:ov-meanfield};
\item[(M2)] $\lambda_*<1$, $\rho_-^2\ge c_{\rm cov}/4$ and $\rho_cp<\rho_-$;
\item[(M3)] there is a certificate for Theorem~\ref{thm:ms-criterion} with the
weight $\mathsf W$ of Proposition~\ref{prop:ms-product}.
\end{itemize}
With $\Lambda_*$ as in Theorem~\ref{thm:ms-criterion}, put
\begin{equation}\label{eq:ms-theta}
 C_{\rm fin}=\rho_-^{-1}e^{\kappa(\mathcal B(1)+2\mathcal T_3(0,0))},\qquad
 \theta_*=\frac{0.99}{C_{\rm fin}\bigl(1+\Lambda_*\lambda\mathsf H/(1-\lambda)\bigr)},
\end{equation}
\begin{equation}\label{eq:ms-L0}
 L_0^{\rm M}(n)=2cn+1+\Bigl\lceil\frac{\log\bigl(100\kappa((c+1)n+1)^2/(1-\alpha_{\rm D})\bigr)}
 {\log(1/\alpha_{\rm D})}\Bigr\rceil.
\end{equation}
Then $P^{\rm aux}_L(\mathcal O_n)\ge\theta_*$ for every $n\ge1$ and every
$L\ge L_0^{\rm M}(n)$.
\end{theorem}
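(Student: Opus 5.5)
The proof will chain the results of this section in the order Lemma~\ref{lem:ms-pz} $\to$ Proposition~\ref{prop:ms-product} $\to$ Theorem~\ref{thm:ms-criterion}, and then choose $L$ so that the wrap-around error is harmless.

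First I check that the hypotheses of the earlier lemmas follow from (M1)--(M2).
\begin{itemize}
\item By (M1), \eqref{eq:ov-H} holds: $d=f+3\ge7\ge6$, $h<0$, $0\le K'\le|h|$ and $2dK'<1$.
\item Lemma~\ref{lem:ov-meanfield} then gives $\rho_L\ge\rho_-$ on every torus, hence $\kappa_L\le\kappa$.
\item With $\rho_-^2\ge c_{\rm cov}/4$ from (M2), Lemma~\ref{lem:ov-shared} applies. This makes Lemma~\ref{lem:ms-pair} available.
\item $\lambda_*<1$ makes $\mathcal B$ geometrically decreasing. So $\mathcal T_2$, $\mathcal T_3$ and the boost decomposition of Lemma~\ref{lem:ms-boost} are finite.
\item $\rho_cp<\rho_-$ is exactly the hypothesis of Proposition~\ref{prop:ms-product} that makes $\mathsf W\ge1$.
\item $\alpha_{\rm D}<1$ follows from $2dt'\le2dK'<1$.
\end{itemize}

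Next, fix $n\ge1$ and $L\ge L_0^{\rm M}(n)$. By \eqref{eq:ms-L0}, $L\ge2cn+1$. Moreover $L-2cn\ge1+\lceil\log(100\kappa((c+1)n+1)^2/(1-\alpha_{\rm D}))/\log(1/\alpha_{\rm D})\rceil$, so
\[
\alpha_{\rm D}^{L-2cn}\le\alpha_{\rm D}\,(1-\alpha_{\rm D})/(100\kappa((c+1)n+1)^2).
\]
Hence $\kappa\varepsilon'_{n,L}\le1/100$, with slack. If the logarithm is negative, the ceiling is nonpositive, and I would note that $\kappa\varepsilon'_{n,L}\le1/100$ still holds: the argument of the log is then below $1$, and any $\alpha_{\rm D}^{j}\le1$ suffices. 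This needs a brief check of the positive-part convention; it may require interpreting the ceiling as $\lceil\cdot\rceil^+$ as in \eqref{eq:ov-L1}.

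Proposition~\ref{prop:ms-product} then gives
\[
\E W_n^2\le e^{1/100}\,C_{\rm fin}\,\mathbf E_0\prod_{j\le n}\mathsf W(Z_{j-1};\boldsymbol\omega_j).
\]
Theorem~\ref{thm:ms-criterion}, applied with the certificate from (M3), bounds the expectation by $1+\Lambda_*\lambda\mathsf H/(1-\lambda)$ uniformly in $n$. Lemma~\ref{lem:ms-pz} yields
\[
P^{\rm aux}_L(\mathcal O_n)\ge1/\E W_n^2\ge e^{-1/100}\big/\bigl(C_{\rm fin}(1+\Lambda_*\lambda\mathsf H/(1-\lambda))\bigr).
\]
Since $e^{-1/100}\ge0.99$, this is at least $\theta_*$.

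The only step needing care is that the pair chain used in Theorem~\ref{thm:ms-criterion} is the one in Proposition~\ref{prop:ms-product}. Both are the chain $Z_j$ of Lemma~\ref{lem:ms-meetings}(e) started at $0$ with the same weight $\mathsf W$, so no further argument is needed. The substantive content, namely the certificate and the boost bound, is already contained in the cited results. I expect the only real obstacle to be bookkeeping of the edge cases in $L_0^{\rm M}$: $\alpha_{\rm D}$ close to $0$, and the sign of the logarithm.
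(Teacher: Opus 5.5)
Your proposal is correct and is the paper's own proof. (M1)--(M2) supply the hypotheses of Lemmas~\ref{lem:ms-torus}--\ref{lem:ms-boost} and Proposition~\ref{prop:ms-product}, the choice of $L_0^{\rm M}(n)$ gives $\kappa\varepsilon'_{n,L}\le\alpha_{\rm D}/100\le1/100$, and Theorem~\ref{thm:ms-criterion} with Lemma~\ref{lem:ms-pz} finishes.

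The edge case you worried about never arises. Your fallback for it would also fail, since a negative ceiling could push $L_0^{\rm M}(n)$ below $2cn+1$. Because $|h|>0$, the mean-field test forces $\bar m\ge m_*>0$, so $\rho_-<\tfrac12$ and $\kappa>1$. Together with $0<\alpha_{\rm D}<1$ (from $0<K'$ and $2dK'<1$), the argument of the logarithm exceeds $100$. Hence the ceiling is at least $1$, $L\ge2cn+2$, and no positive-part convention is needed.
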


\begin{proof}
Let $n\ge1$ and $L\ge L_0^{\rm M}(n)$. Since $|h|>0$, the test of
Lemma~\ref{lem:ov-meanfield} forces $\bar m\ge m_*>0$, so $\rho_-<\frac12$ and
$\kappa>1$. Since $0<K'$ and $2dK'<1$, also $0<\alpha_{\rm D}<1$. Hence the
logarithm in the numerator of \eqref{eq:ms-L0} is positive, the denominator is
positive and finite, and $L\ge2cn+2$. So Lemmas~\ref{lem:ms-torus}--\ref{lem:ms-boost} and
Proposition~\ref{prop:ms-product} apply; their hypotheses on the Ising law are
(M1) and (M2), with $\rho_L\ge\rho_-$ for every $L\ge3$ by
Lemma~\ref{lem:ov-meanfield}. Moreover
$\alpha_{\rm D}^{L-2cn}\le\alpha_{\rm D}(1-\alpha_{\rm D})/(100\kappa((c+1)n+1)^2)$,
that is, $\kappa\varepsilon'_{n,L}\le\alpha_{\rm D}/100\le1/100$.
Proposition~\ref{prop:ms-product} and Theorem~\ref{thm:ms-criterion} give
$\E W_n^2\le e^{1/100}C_{\rm fin}(1+\Lambda_*\lambda\mathsf H/(1-\lambda))$, and
Lemma~\ref{lem:ms-pz} gives
$P^{\rm aux}_L(\mathcal O_n)\ge e^{-1/100}/(C_{\rm fin}(1+\Lambda_*\lambda\mathsf H/(1-\lambda)))
\ge\theta_*$, since $e^{-1/100}>0.99$.
\end{proof}

As after Lemma~\ref{lem:ov-shared}, $c_{\rm cov}$ in (M2) and in $\rho_c$ may be
replaced by a rational lower bound; the certificates do so.

\begin{proof}[Proof of Proposition~\ref{prop:ms-central}]
Fix a row of Table~\ref{tab:ms-local} and its $\theta_*$-maximizing
certificate. By (V0)--(V1) of Section~\ref{sec:ms-certificates}, the row
satisfies (C3)--(C4) and has the local inputs (R1) and (R2) of
Section~\ref{sec:ov-assembly}: through (C1)--(C2),
Lemma~\ref{lem:ov-torus-floor} and Proposition~\ref{prop:ov-holley-line}, or
through (N1)--(N2), Lemma~\ref{lem:ov-pair} and
Corollary~\ref{cor:ov-sym-holley}. By (V1)--(V3), the row satisfies hypotheses
(M1)--(M3) of Theorem~\ref{thm:ms-engine}, which therefore gives (R3), with
$\theta_*$ from \eqref{eq:ms-theta} and $L_0=L_0^{\rm M}$. Since
$L_0^{\rm M}(n)\ge2cn+2\ge2n+4$, we have $L_1=L_0^{\rm M}$, and the proof of
Proposition~\ref{prop:overlap-central} applies verbatim.
\end{proof}

The minimal-$\lambda$ certificate of each near set gives the smaller constants
listed in Table~\ref{tab:ms-certificates}; either may be used. The torus sizes
are moderate: at $d=9$, $L_0^{\rm M}(1)=15$, $L_0^{\rm M}(10)=73$ and
$L_0^{\rm M}(100)=617$, and at $(d,t)=(7,31/200)$ they are $14$, $72$ and $615$.

\subsection{The certified instances}\label{sec:ms-certificates-summary}

For each row of Table~\ref{tab:ms-local}, the local inputs and hypotheses
(M1)--(M3) of Theorem~\ref{thm:ms-engine} are decided in exact rational
arithmetic, for near sets and certificates listed in
Table~\ref{tab:ms-certificates}. Section~\ref{sec:ms-certificates} of
Appendix~\ref{app:certificates} lists the conditions, what is computed, the
sizes of the computations, and which local inputs each row needs;
dimension $7$ is certified only with the sharpened inputs of
Section~\ref{sec:ov-noisy}.

\begin{remark}[Dimension six and below]\label{rem:ms-limits}
With three lateral coordinates, $d\le6$ leaves $f\le3$ forward coordinates. The
forward difference walk is then recurrent, since its return probabilities decay
like $n^{-(f-1)/2}$, so its Green function $g_F$ is infinite and the far-region
bound of Lemma~\ref{lem:ms-khas} is unavailable. This is a limitation of that
bound, not of the criterion: the relative position of the two paths is still a
transient chain. Floating-point evaluations of the same engine in dimension $6$,
at the best local inputs of the kind of Section~\ref{sec:ov-noisy} that we found,
give near-matrix Perron roots above $1$ once the Ising correlation factor of
Lemma~\ref{lem:ms-boost} is included (about $1.02$ with lateral words of length
up to $6$), although the collision part alone falls below $1$ for long enough
words. We make no claim below dimension $7$. For Bernoulli
percolation, Jiang and Lang remark that a more sophisticated version of their
method may extend their coexistence theorem to all $d\ge6$
\cite[Remark~2]{JiangLang2026}.
\end{remark}
\FloatBarrier

\section{Blue percolation and physical response}\label{sec:response}

The blue susceptibility in Corollary~\ref{cor:chi} counts connected vertices.
The ordinary spin-glass susceptibility instead measures squared spin
correlations. The following estimate distinguishes them at the same temperatures.
It averages the fair coupling signs of a whole vertex star at once, in the
mean-square spirit of Fr\"ohlich and Zegarlinski
\cite[Section 2 and Appendix A]{FZ1987}.
For a finite nonempty set $A$, write
$q_{A,x}=\prod_{v\in A}\sigma_{x+v}\tau_{x+v}$ and let $Q_A$ be its
cube-average limit, which exists by stationarity. In particular
$Q=Q_{\{0\}}$ is the ordinary spatial overlap. Put $F(u)=\E\cosh(uW)$ and
$s_2=\E W^2$. An edge-by-edge variant that restores the star one edge at a
time (Proposition~\ref{prop:sg-bound} in
Appendix~\ref{app:separation-certificate}) needs roughly $3ae^{5a}<1$ at the
mean-field scale $\beta^2=a/(\Delta s_2)$. Removing the whole star at once and
keeping every odd part of it gives a bound that stays finite up to
$a<\tfrac12\log3$ uniformly in the dimension. The price is a family of
multi-point row sums, which
close under the same recursion.

\begin{proposition}[A dimension-uniform susceptibility bound]\label{prop:sg-uniform}
Let $G$ be a finite graph, parallel edges allowed, of maximum degree at most
$\Delta\ge2$, with zero field and symmetric iid couplings. If
$F(2\beta)^{\Delta-1}<3$, then
\begin{equation}\label{eq:sg-uniform}
 \sup_x\sum_y\E\langle\sigma_x\sigma_y\rangle_{G,J}^2
 \le\chi_*(\Delta,\beta):=1+\frac{F(2\beta)^{\Delta}-1}{3-F(2\beta)^{\Delta-1}}.
\end{equation}
At $\Delta=2d$, every selected periodic joint limit satisfies
\begin{equation}\label{eq:sg-limit-uniform}
 1\le\chi_{\rm SG}(\nu):=\sum_x\E_\nu[q_0q_x]\le\chi_*(2d,\beta),\qquad
 \E_\nu\Bigl(\frac1{|\Lambda|}\sum_{x\in\Lambda}q_x\Bigr)^2
 \le\frac{\chi_*(2d,\beta)}{|\Lambda|}
\end{equation}
for every finite box $\Lambda$. Consequently $Q=0$ almost surely, and $D_+=D_-$
almost surely by Corollary~\ref{cor:imbalance-overlap}.
\end{proposition}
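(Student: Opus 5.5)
The plan is a mean-square cavity recursion, not the edge-by-edge restoration of Lemma~\ref{lem:insertion}. Fix $x$ and delete its whole star. Write $a_w=\beta J_{xw}$ and $t_w=\tanh a_w$, and let $\langle\cdot\rangle'$ be the Gibbs state on $G\setminus x$. Summing over $\sigma_x$ and using $e^{a_w\sigma_x\sigma_w}=\cosh a_w(1+t_w\sigma_x\sigma_w)$ gives
\[
 \langle\sigma_x\sigma_y\rangle_G=\frac{\sum_{B\ \mathrm{odd}}t_B\langle\sigma_B\sigma_y\rangle'}{\sum_{B\ \mathrm{even}}t_B\langle\sigma_B\rangle'},\qquad t_B=\prod_{w\in B}t_w,
\]
where $B$ ranges over subsets of the star, with multiplicities if there are parallel edges. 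I would take the multi-point quantities
\[
 \chi_k:=\sup_{|B|=k\ \mathrm{odd}}\sup\sum_y\E\langle\sigma_B\sigma_y\rangle^2
\]
as the objects of the recursion. The claim concerns $\chi_1$, and the recursion is meant to close on the whole family.

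\textbf{Step 1 (sign averaging).} Square the ratio and average only the fair signs $\epsilon_{xw}$ of the star, holding the magnitudes and the cavity disorder fixed. Flipping $\epsilon_{xw}$ flips $t_w$, so the cross terms between distinct odd sets cancel, but only after the denominator is handled. I would deal with the denominator as in \eqref{eq:density}--\eqref{eq:cosh}: the ratio of the quenched normalizations with and without the star is a product of the form $\E$ over signs of $\cosh$, and it should be bounded by $F(2\beta)$ per edge. This is the step that keeps both $Z^{-2}$ factors. The target is
\[
 \E\langle\sigma_x\sigma_y\rangle^2\le \E'\sum_{B\ \mathrm{odd}}c_B\,\langle\sigma_B\sigma_y\rangle'^2,
\]
with weights whose sum over $|B|=k$ is the $k$-th odd coefficient of $\prod_w(\cosh 2a_w\ \mathrm{-type\ terms})$, and the total should be bounded by $F(2\beta)^{\Delta}-1$ for the $B\neq\varnothing$ part.

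\textbf{Step 2 (closing the recursion).} Apply the same cavity step to $\langle\sigma_B\sigma_y\rangle$ by removing one vertex of $B$. The remaining $|B|-1$ points keep the parity odd, which gives an inequality of the form
\[
 \chi_k\le 1\ \text{(from } y\in B)+\cdots .
\]
Summing these inequalities with geometric weights $3^{-(k-1)/2}$ should produce the factor $(3-F(2\beta)^{\Delta-1})^{-1}$. The $\Delta-1$ appears because, after the first removal, one edge of the new star points back into the already removed set. Positivity of that denominator is exactly the hypothesis $F(2\beta)^{\Delta-1}<3$, and this yields \eqref{eq:sg-uniform}.

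\textbf{Main obstacle.} I expect the hardest part to be getting the precise constant $3$ together with the exponent $\Delta-1$. This requires bounding the weighted multi-point sums by $\chi_1$-type quantities uniformly in $|B|$. My first attempt would be to prove by induction on $|B|$ the ansatz $\chi_{|B|}\le 3^{(|B|-1)/2}\chi_*$, and if needed adjust the weights.

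\textbf{Step 3 (passing to the limit).} On $\mathbb T_L^d$ we have $\E_{\nu_L}[q_0q_x]=\E\langle\sigma_0\sigma_x\rangle^2$, and translation invariance together with \eqref{eq:sg-uniform} bounds each finite partial sum. Each term $\E_{\nu}[q_0q_x]$ is a cylinder expectation, so it passes to every selected limit. Summing over finite sets and then letting them grow gives $\chi_{\rm SG}(\nu)\le\chi_*$. The lower bound $\chi_{\rm SG}(\nu)\ge1$ should follow from nonnegativity of the terms, which is a limit of squares. For the box variance, expand the square, bound each row by $\chi_*$, and pass the finite-box expression to the limit. Finally, $Q=0$ follows by $L^2$ convergence of cube averages to their a.s.\ limit, and $D_+=D_-$ is then Corollary~\ref{cor:imbalance-overlap}.
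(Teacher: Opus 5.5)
Your architecture is the paper's: delete the whole star of $x$, average the fair star signs so that only diagonal odd terms survive, and close a recursion on multi-point row sums. Step~3 also matches. But the mechanisms you propose for Steps~1--2 would not give \eqref{eq:sg-uniform}, and your ``main obstacle'' is an artifact of them. There are three problems.

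(i) The denominator needs no normalization bookkeeping. $\E\langle\sigma_x\sigma_y\rangle^2_{G,J}$ is already a function of the normalized single-replica state; on the torus it equals $\E_{\nu_L}[q_xq_y]$ by conditional replica independence. Summing out $\sigma_u$ gives $\langle\sigma_u\sigma_B\rangle_H=\mu(\sigma_B\sinh h)/\mu(\cosh h)$ with $\mu(\cosh h)\ge1$ pointwise. In your $t$-form, the even sum equals $\mu(\cosh h)\prod_w\sech a_w\ge\prod_w\sech a_w$, which depends only on the magnitudes. Paying any factor larger than one per edge there, such as $F(2\beta)$ or the $\cosh4\beta W$ of Lemma~\ref{lem:insertion}, multiplies the weight sum and changes the criterion.

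(ii) After sign averaging, the weights are $\omega(A)=(\E\sinh^2\beta W)^{|A|}(\E\cosh^2\beta W)^{D-|A|}$ over odd $A$. Since $\E\cosh^2+\E\sinh^2=F(2\beta)$ and $\E\cosh^2-\E\sinh^2=1$, they sum to $\mathsf w_D=\tfrac12(F(2\beta)^D-1)$, not $F(2\beta)^\Delta-1$. That $\tfrac12$ is where the $3$ comes from: $1-\mathsf w_{\Delta-1}=\tfrac12(3-F(2\beta)^{\Delta-1})$, and $\chi_*=1+\mathsf w_\Delta/(1-\mathsf w_{\Delta-1})$.

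(iii) The closure cannot be an induction on $|B|$. A cavity step replaces $B$ by $(B\setminus\{w\})\triangle\partial A$, which may be larger. The weights $3^{-(k-1)/2}$, or the ansatz $\chi_k\le3^{(k-1)/2}\chi_*$, only inflate the effective weight sum to $\sum_A\omega(A)3^{(|A|-1)/2}>\mathsf w_{\Delta-1}$, so they cannot reproduce the stated criterion. Also, not every $y\in B$ is a trivial term. For $y\in B\setminus\{w\}$ the product still contains $\sigma_w$ and the cavity step applies, so only $y=w$ contributes the $1$; otherwise the inhomogeneous term would grow with $|B|$.

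The paper closes uniformly in $|S|$. For $H=G-X$ with $X\ne\varnothing$ and odd $S\subseteq V(H)$, put $\mathcal R_H(S)=\sum_{v\in V(H)}\E\langle\sigma_S\sigma_v\rangle_H^2$. Let $\mathcal C$ be the finite class of such pairs in which every vertex of $S$ has a $G$-neighbour in $X$. Removing $w\in S$ and summing the cavity inequality over $v$ gives $\mathcal R_H(S)\le1+\sum_A\omega(A)\,\mathcal R_{H-w}\bigl((S\setminus\{w\})\triangle\partial A\bigr)$.

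The defining property of $\mathcal C$ leaves $w$ at most $\Delta-1$ edges in $H$, so the total weight is at most $\mathsf w_{\Delta-1}$. The property is preserved because $\partial A\subseteq N(w)$ and $w$ joins $X$. Hence $M=\max_{\mathcal C}\mathcal R_H(S)\le|V(G)|<\infty$ satisfies $M\le1+\mathsf w_{\Delta-1}M$; equivalently, induct on $|V(H)|$, which does decrease. One final cavity step at $x$ with its full star, of weight $\mathsf w_\Delta$, lands in $(G-x,\partial A)\in\mathcal C$ and yields \eqref{eq:sg-uniform}.

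Your remark about the back edge is the right intuition. It has to be imposed as an invariant of the class over which the supremum is taken, for every vertex of the current set, not just observed after the first removal.
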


\begin{proof}
For an induced subgraph $H$ of $G$ and $S\subseteq V(H)$ write
$\sigma_S=\prod_{v\in S}\sigma_v$, $c_H(S)=\E\langle\sigma_S\rangle_{H,J}^2$
and $c_H(x,y)=c_H(\{x,y\})$.

\emph{Cavity step.} Let $u$ be a vertex of $H$ with star edges
$e_i=\{u,y_i\}$, $1\le i\le D$, where the $y_i$ need not be distinct; write
$W_i=W_{e_i}$ and $\epsilon_i=\epsilon_{e_i}$. Let $B\subseteq V(H)\setminus\{u\}$
have odd size and let $\mu$ be the Gibbs measure of $H-u$. With
$h=\beta\sum_iW_i\epsilon_i\sigma_{y_i}$, summing out $\sigma_u$ gives
$\langle\sigma_u\sigma_B\rangle_H=\mu(\sigma_B\sinh h)/\mu(\cosh h)$. Expanding
$e^{\pm h}=\prod_i\bigl(\cosh(\beta W_i)\pm\epsilon_i\sigma_{y_i}\sinh(\beta W_i)\bigr)$,
\[
 \mu(\sigma_B\sinh h)=\sum_{|A|\ {\rm odd}}s_A\,\epsilon^A\,
   \mu(\sigma_{B\triangle\partial A}),\qquad
 s_A=\prod_{i\in A}\sinh(\beta W_i)\prod_{i\notin A}\cosh(\beta W_i),
\]
where $A$ ranges over subsets of $\{1,\ldots,D\}$,
$\epsilon^A=\prod_{i\in A}\epsilon_i$, and $\partial A$ is the set of
vertices occurring an odd number of times among $(y_i)_{i\in A}$. The
characters $\epsilon^A$ are orthonormal under the fair signs, $\mu$ does not
depend on the star couplings, and $\mu(\cosh h)\ge1$. Averaging first over the
star signs and then over the star magnitudes, which are independent of the
couplings of $H-u$, yields
\begin{equation}\label{eq:cavity-l2}
 c_H(\{u\}\cup B)\le\sum_{|A|\ {\rm odd}}\omega(A)\,c_{H-u}(B\triangle\partial A),
 \qquad
 \omega(A)=\bigl(\E\sinh^2\!\beta W\bigr)^{|A|}\bigl(\E\cosh^2\!\beta W\bigr)^{D-|A|}.
\end{equation}
Since $\cosh^2+\sinh^2=\cosh2(\cdot)$ and $\cosh^2-\sinh^2=1$, the weights sum
to $\mathsf w_D:=\tfrac12\bigl(F(2\beta)^D-1\bigr)$, which increases with $D$.

\emph{Summed recursion.} For $H=G-X$ with $X\ne\varnothing$ and an odd set
$S\subseteq V(H)$, put $\mathcal R_H(S)=\sum_{v\in V(H)}c_H(S\triangle\{v\})$.
Fix $w\in S$. The term $v=w$ is at most one. For $v\ne w$ apply
\eqref{eq:cavity-l2} at $u=w$ with the odd set $B=(S\triangle\{v\})\setminus\{w\}$,
and sum over $v\in V(H-w)$:
\begin{equation}\label{eq:row-recursion}
 \mathcal R_H(S)\le1+\sum_{|A|\ {\rm odd}}\omega(A)\,
   \mathcal R_{H-w}\bigl((S\setminus\{w\})\triangle\partial A\bigr).
\end{equation}
Let $\mathcal C$ be the finite family of such pairs $(H,S)$ in which every
vertex of $S$ has a $G$-neighbour in $X$. Each $w\in S$ then has at most
$\Delta-1$ edges in $H$, so the weights in \eqref{eq:row-recursion} sum to at
most $\mathsf w_{\Delta-1}$. The new pairs again lie in $\mathcal C$: the new
sets are odd, and $\partial A$ consists of neighbours of the deleted vertex $w$.
With $M=\max_{\mathcal C}\mathcal R_H(S)$, \eqref{eq:row-recursion} gives
$M\le1+\mathsf w_{\Delta-1}M$, and $\mathsf w_{\Delta-1}<1$ is the hypothesis.
Hence $M\le1/(1-\mathsf w_{\Delta-1})$. Finally, for $x\in V(G)$,
\eqref{eq:cavity-l2} at $u=x$ gives
\[
 \sum_yc_G(x,y)\le1+\sum_{|A|\ {\rm odd}}\omega(A)\,\mathcal R_{G-x}(\partial A)
 \le1+\frac{\mathsf w_\Delta}{1-\mathsf w_{\Delta-1}},
\]
which is \eqref{eq:sg-uniform}.

The tori $\mathbb T_L^d$, $L\ge3$, have maximum degree $2d$. On each torus,
conditional replica independence gives $\E q_xq_y=c_G(x,y)\ge0$. Pass this
bounded local observable and every finite partial row sum to the selected
limit, then increase the finite sum. Stationarity gives the block-variance
bound in \eqref{eq:sg-limit-uniform}, and the cube ergodic theorem forces
$Q=0$. No independence of limiting replicas is used.
\end{proof}

Put $a=(\Delta-1)s_2\beta^2$. If the law is strictly sub-Gaussian, meaning
$\E\cosh(\lambda W)\le e^{\lambda^2s_2/2}$ for all real $\lambda$, as for $\pm1$,
Gaussian and uniform couplings, then $F(2\beta)^{\Delta-1}\le e^{2a}$ and
$F(2\beta)^{\Delta}\le e^{4a}$ in every dimension, so $\chi_*$ is bounded
uniformly in $\Delta$ for each fixed $0<a<\tfrac12\log3$. With only an
exponential moment, $F(2\beta)=1+2s_2\beta^2+O(\beta^4)$, so at fixed $a$ both
powers tend to $e^{2a}$ as $\Delta\to\infty$, and the same holds for all
sufficiently large $\Delta$. This is the range $\beta\approx0.741\,(2ds_2)^{-1/2}$.

\begin{corollary}[Percolation with finite overlap susceptibility]\label{cor:separation}
For a fixed symmetric exponential-moment law, put
$\Delta=2d$. At $\beta=c/(\Delta m)$ with any fixed $c>1$, all sufficiently
large finite dimensions have blue percolation in both overlap signs in every
selected periodic limit, while
\[
 \chi_{\rm SG}=1+O_c(d^{-1}),\qquad Q=0\quad\hbox{almost surely}.
\]
Moreover $\chi_{\rm SG}\le2$ throughout
$0\le\beta\le(3\Delta s_2)^{-1/2}$ for all sufficiently large $d$.
For unit couplings, every temperature of Theorem~\ref{thm:explicit} is a
common point (Corollary~\ref{cor:explicit-balance}).
\end{corollary}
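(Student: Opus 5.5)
The plan is to combine Theorem~\ref{thm:headline}(i) with Proposition~\ref{prop:sg-uniform}, and to check that the two temperature ranges overlap in high dimension.

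\emph{Percolation.} Fix $c>1$ and $\beta=c/(2dm)$. By Theorem~\ref{thm:headline}(i), which already covers exponential-moment disorder through Section~\ref{sec:cmr-verification}, for all $d\ge d_0(c)$ every $\nu\in\mathcal L_{d,\beta}$ has infinite blue components of both overlap signs almost surely. This part needs nothing new.

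\emph{Susceptibility at $\beta=c/(\Delta m)$.} Put $\Delta=2d$. I apply Proposition~\ref{prop:sg-uniform}, which needs $F(2\beta)^{\Delta-1}<3$. Under an exponential moment,
\[
F(2\beta)=\E\cosh(2\beta W)=1+2s_2\beta^2+O(\beta^4)
\]
uniformly as $\beta\to0$; the $O(\beta^4)$ term is controlled by $\E W^4\cosh(2\beta W)$, which is finite for small $\beta$. With $\beta=c/(\Delta m)$ this gives
\[
F(2\beta)^{\Delta}=\exp\bigl(\Delta\cdot2s_2c^2/(\Delta m)^2+O(\Delta^{-3})\bigr)=1+O_c(\Delta^{-1}),
\]
and $F(2\beta)^{\Delta-1}\to1<3$. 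Hence
\[
\chi_*(\Delta,\beta)=1+\frac{F^\Delta-1}{3-F^{\Delta-1}}=1+O_c(d^{-1}).
\]
Then \eqref{eq:sg-limit-uniform} gives $1\le\chi_{\rm SG}(\nu)\le\chi_*$ for every selected limit, and the same proposition gives $Q=0$ almost surely.

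\emph{Uniform bound $\chi_{\rm SG}\le2$.} For $0\le\beta\le(3\Delta s_2)^{-1/2}$ we have $\Delta s_2\beta^2\le1/3$. Because the expansion of $F$ is uniform on $\beta\le(3\Delta s_2)^{-1/2}\to0$,
\[
F(2\beta)^{\Delta}\le\exp\bigl(2\Delta s_2\beta^2(1+o(1))\bigr)\le e^{2/3+o(1)},
\]
and $F^{\Delta-1}$ obeys the same bound. Therefore
\[
\chi_*\le1+\frac{e^{2/3}-1+o(1)}{3-e^{2/3}-o(1)}.
\]
Numerically $e^{2/3}\approx1.948$, so the fraction is about $0.948/1.052\approx0.90<1$, and $\chi_*\le2$ for large $d$. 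The main point to check is that the $o(1)$ term is uniform in $\beta$ over this range. This follows from monotonicity of $F$ in $\beta$: it suffices to bound at the right endpoint, where $\beta\asymp d^{-1/2}$ and the $O(\beta^4)$ term times $\Delta$ is $O(d^{-1})$.

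\emph{Explicit points.} The final sentence, about unit couplings at the temperatures of Theorem~\ref{thm:explicit}, is deferred to Corollary~\ref{cor:explicit-balance}, as the statement indicates. The main obstacle is only bookkeeping: keeping the Taylor remainder of $F$ uniform under an exponential moment. Everything else is direct citation.
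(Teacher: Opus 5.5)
Your proposal is correct and follows the paper's proof. You combine Theorem~\ref{thm:headline}(i) with Proposition~\ref{prop:sg-uniform}, expand $F(2\beta)=1+2s_2\beta^2+O(\beta^4)$ to get $\chi_*=1+O_c(d^{-1})$ at $\beta=c/(\Delta m)$, and use monotonicity of $\chi_*$ in $\beta$ to reduce the interval $0\le\beta\le(3\Delta s_2)^{-1/2}$ to its right endpoint, where $\chi_*\to2/(3-e^{2/3})<2$; your explicit control of the Taylor remainder through $\E[W^4\cosh(2\beta W)]$ simply spells out a step the paper leaves implicit.
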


\begin{proof}
With an exponential moment, $F(2\beta)=1+2s_2\beta^2+O(\beta^4)$ as
$\beta\to0$. At $\beta=c/(\Delta m)$ this gives $F(2\beta)^\Delta-1=O_c(\Delta^{-1})$
and $F(2\beta)^{\Delta-1}\to1$, so $\chi_*=1+O_c(d^{-1})$. Combine
Proposition~\ref{prop:sg-uniform} with Theorem~\ref{thm:headline}. At
$\beta=(3\Delta s_2)^{-1/2}$ both $F(2\beta)^{\Delta-1}$ and $F(2\beta)^\Delta$
tend to $e^{2/3}$, so $\chi_*\to2/(3-e^{2/3})<2$; and $\chi_*$ increases with
$\beta$, giving the whole interval.
\end{proof}

\begin{remark}[Ordering of the first onsets]\label{rem:onset-ordering}
Under the assumptions of Corollary~\ref{cor:separation}, let $\beta_\chi(d)$ be
the infimum of $\beta$ for which some $\nu\in\mathcal L_{d,\beta}$ has
$\chi_{\rm SG}(\nu)=\infty$, with $\inf\varnothing=\infty$.
Equation~\eqref{eq:onset} and Proposition~\ref{prop:sg-uniform} give, for
every fixed $0<a<\tfrac12\log3$ and all sufficiently large $d$,
\[
 \beta_{\mathrm b}(d)<\bigl(a/((2d-1)s_2)\bigr)^{1/2}\le\beta_\chi(d).
\]
Thus $T_\chi<T_{\mathrm b}$ whenever $\beta_\chi<\infty$, where
$T_i=1/\beta_i$. This compares first onsets without assuming monotonicity;
existence of a susceptibility transition remains unproved.
\end{remark}

The percolation bound holds on a whole interval:
Corollary~\ref{cor:window} in Appendix~\ref{sec:mns} proves coexistence of both
blue sectors for
$ (1+\varepsilon)/(2dm)\le\beta\le\sqrt{\log(2d)/(24ds_2)} $
for fixed $\varepsilon>0$ and sufficiently large $d$.
Figure~\ref{fig:response-window} compares this interval with the response bound.

\begin{figure}[!htbp]
\centering
\begin{adjustbox}{max width=\linewidth}
\begin{tikzpicture}[x=1cm,y=1cm,font=\small,>=Stealth]
\fill[green!8] (3.8,.38) rectangle (8.8,2.7);
\node[font=\footnotesize,text=green!35!black] at (6.3,2.98)
 {Blue percolation with finite susceptibility};
\draw[green!40!black,line width=6pt,line cap=butt] (0,2)--(8.8,2);
\node[above,text=green!30!black] at (4.4,2.12)
 {Physical response: $\chi_{\rm SG}\le2$};
\draw[blue!65!black,line width=6pt,line cap=butt] (3.8,.9)--(12.8,.9);
\node[above,text=blue!65!black] at (8.3,1.02)
 {Infinite blue clusters of both overlap signs};
\draw[->,line width=.7pt] (0,0)--(14,0)
 node[right] {$\beta$};
\node[align=center,font=\footnotesize] at (12.1,2.42)
 {Increasing inverse temperature\\means cooling};
\draw[->,gray] (10.5,1.94)--(13.7,1.94);
\draw[densely dashed,gray] (1.55,0)--(1.55,.55);
\draw[densely dashed,gray] (3.8,0)--(3.8,.38);
\draw[densely dashed,gray] (8.8,0)--(8.8,.38);
\draw[densely dashed,gray] (12.8,0)--(12.8,.82);
\foreach \x in {0,1.55,3.8,8.8,12.8}
 \draw (\x,.07)--(\x,-.07);
\node[below] at (0,-.07) {$0$};
\node[below,align=center,font=\footnotesize] at (1.55,-.07)
 {$\beta_{\mathrm b}$\\$\sim(2dm)^{-1}$};
\node[below,align=center,font=\footnotesize] at (3.8,-.07)
 {$\beta_+$\\$=(1+\varepsilon)/(2dm)$};
\node[below,align=center,font=\footnotesize] at (8.8,-.07)
 {$\beta_0$\\$=(6ds_2)^{-1/2}$};
\node[below,align=center,font=\footnotesize] at (12.8,-.07)
 {$\beta_{\rm win}$\\certified endpoint};
\node[draw=gray!50,rounded corners=2pt,fill=gray!5,
 text width=12.8cm,align=center,inner sep=5pt,font=\footnotesize]
 at (7,-1.42)
 {If $\beta_\chi<\infty$, then $\beta_\chi\ge\beta_0>\beta_{\mathrm b}$.
 Its finiteness remains open.};
\end{tikzpicture}
\end{adjustbox}
\caption{Certified parameter intervals for a fixed symmetric exponential-moment
 coupling law in sufficiently high dimension, in the selected periodic ensemble.
 The axis is schematic, not to scale. Here $m=\E|J|$, $s_2=\E|J|^2$,
 $\varepsilon>0$ is fixed, and
 $\beta_{\rm win}=\sqrt{\log(2d)/(24ds_2)}$.
 The blue interval is Corollary~\ref{cor:window}; the response interval is
 Corollary~\ref{cor:separation}. Only $\beta_{\mathrm b}$ denotes an established
 onset: $\beta_+$, $\beta_0$, and $\beta_{\rm win}$ are bounds supplied by the
 proofs, not asserted transition points. Their overlap gives percolation with
 bounded susceptibility, without any monotonicity assumption on the blue law.}
\label{fig:response-window}
\end{figure}

\subsection{Macroscopic blue density and finite-cluster mass}

Let $D_s$ and $F_s$ be the cube densities of vertices of overlap $s$ belonging,
respectively, to infinite and finite blue components. These exist almost surely
by stationarity, and $D_s+F_s=(1+sQ)/2$. They are densities in the infinite-volume
law; no finite-torus largest-cluster conclusion is implicit.

\begin{corollary}[Exact balance]\label{cor:exact-balance}
Whenever $F(2\beta)^{2d-1}<3$, every selected periodic joint limit satisfies
$D_+=D_-$ and $F_+=F_-$ almost surely.
\end{corollary}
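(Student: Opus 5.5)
The plan is to combine Proposition~\ref{prop:sg-uniform} with the overlap-to-density comparison of Section~\ref{sec:discussion}, namely Corollary~\ref{cor:imbalance-overlap}. The hypothesis $F(2\beta)^{2d-1}<3$ is exactly the condition $F(2\beta)^{\Delta-1}<3$ at $\Delta=2d$. So Proposition~\ref{prop:sg-uniform} applies on every torus $\mathbb T_L^d$, $L\ge3$, and every selected limit $\nu\in\mathcal L_{d,\beta}$ satisfies \eqref{eq:sg-limit-uniform}. In particular the block variance of the spatial overlap decays like $\chi_*/|\Lambda|$, and the cube ergodic theorem gives $Q=0$ $\nu$-almost surely.

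The first step is therefore already done. The second step invokes the contrapositive of Corollary~\ref{cor:imbalance-overlap}, in the direction stated at the end of Proposition~\ref{prop:sg-uniform}: vanishing overlap forces $D_+=D_-$ almost surely. As I read the roadmap, the mechanism is the local coupling-interpolation argument that passes through the odd local order channel (Proposition~\ref{prop:odd-order}) and the comparison of odd channels with the site overlap (Proposition~\ref{prop:odd-to-site}). These show that a nonzero $D_+-D_-$ would produce a nonzero odd channel and hence nonzero second moment of $Q$. I would cite that implication rather than reprove it. The only care needed is that it is stated for periodic joint limits without assuming conditional independence of replicas, which matches our setting.

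The third step gets $F_+=F_-$ from the identity $D_s+F_s=(1+sQ)/2$, valid almost surely by stationarity. With $Q=0$ this gives $D_++F_+=D_-+F_-=\tfrac12$, so $D_+=D_-$ yields $F_+=F_-$. Equivalently, $F_+-F_-=Q-(D_+-D_-)=0$.

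I expect the main obstacle to be the second step, namely checking that the cited implication ``$Q=0$ a.s.\ $\Rightarrow D_+=D_-$ a.s.'' is available in exactly the almost-sure, per-limit form needed. If Corollary~\ref{cor:imbalance-overlap} is phrased only in terms of $\E Q^2>0$ given $\Pp(D_+\ne D_-)>0$, then I would use $\E_\nu Q^2=0$ from the variance bound directly, with no ergodic step. That is the form I would use, because it matches the corollary's hypotheses most cleanly.
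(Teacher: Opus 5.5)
Your proposal is correct and matches the paper's proof. Proposition~\ref{prop:sg-uniform} at $\Delta=2d$ gives $Q=0$ almost surely, the contrapositive of Corollary~\ref{cor:imbalance-overlap} gives $D_+=D_-$, and $D_s+F_s=(1+sQ)/2=\tfrac12$ then gives $F_+=F_-$. One small point: going from the block-variance bound to $\E_\nu Q^2=0$ still uses convergence of the cube averages to $Q$ (the mean ergodic theorem, or almost-sure convergence plus Fatou), so your fallback is not entirely free of an ergodic step, although that step is immediate.
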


\begin{proof}
Proposition~\ref{prop:sg-uniform} gives $Q=0$ almost surely, and $Q=0$ forces
$D_+=D_-$ by Corollary~\ref{cor:imbalance-overlap}. Then $D_s+F_s=1/2$ gives
equality of the finite-sector densities.
\end{proof}

\begin{corollary}[Explicit points with finite susceptibility]\label{cor:explicit-balance}
Let the couplings be iid and uniform on $\{-1,1\}$, let $(d,t)$ be any pair of
Table~\ref{tab:explicit}, and put $\beta=\operatorname{atanh}t$. Every
$\nu\in\mathcal L_{d,\beta}$ has infinite blue clusters of both overlap signs,
$\chi_{\rm SG}(\nu)<7/2$, and $Q=0$, $D_+=D_-$ almost surely. At
$(d,t)=(7,3/20)$, moreover $\chi_{\rm SG}(\nu)<7/4$.
\end{corollary}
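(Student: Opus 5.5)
The corollary combines two results already available: percolation at these points from Theorem~\ref{thm:explicit}, and the dimension-uniform susceptibility bound of Proposition~\ref{prop:sg-uniform}, evaluated at unit couplings with $\Delta=2d$. The percolation statement needs no further work: Theorem~\ref{thm:explicit} already gives, $\nu$-almost surely, an infinite blue component in each overlap sign for every $\nu\in\mathcal L_{d,\beta}$. Once $\chi_{\rm SG}(\nu)\le\chi_*(2d,\beta)<\infty$ is established, Proposition~\ref{prop:sg-uniform} supplies $Q=0$ almost surely, and Corollary~\ref{cor:exact-balance} (through Corollary~\ref{cor:imbalance-overlap}) supplies $D_+=D_-$. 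The remaining task is therefore numerical.

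For $J=\pm1$ we have $W\equiv1$, so $F(u)=\cosh u$. Hence $F(2\beta)=\cosh 2\beta=(1+t^2)/(1-t^2)$, which is rational. The hypothesis of Proposition~\ref{prop:sg-uniform} becomes
\[
 \Bigl(\tfrac{1+t^2}{1-t^2}\Bigr)^{2d-1}<3,
\]
and the bound is
\[
 \chi_*=1+\frac{C^{2d}-1}{3-C^{2d-1}},\qquad C=\frac{1+t^2}{1-t^2}.
\]
I would verify $\chi_*<7/2$ at each of the seventeen pairs, noting that $\chi_*$ is increasing in $t$ at fixed $d$. This reduces the check to the largest $t$ listed in each dimension.

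The extremal quantity is $2d\beta^2\approx 2dt^2$. By dimension: $d=12$, $t=13/100$ gives $2dt^2\approx0.41$; $d=11$, $t=13/100$ gives $\approx0.37$; $d=10$, $t=27/200$ gives $\approx0.36$; $d=9$, $t=4/25$ gives $\approx0.46$; $d=8$, $t=3/20$ gives $0.36$; $d=7$, $t=31/200$ gives $\approx0.34$. Using $\log C\approx2t^2$, the worst case is $(9,4/25)$. There $C^{17}\approx e^{0.87}\approx2.39$ and $C^{18}\approx2.52$, so $\chi_*\approx1+1.52/0.61\approx3.5$. This is the borderline case, and it needs an exact rational evaluation rather than the exponential heuristic, since higher-order terms in $\log\cosh$ could push it either way. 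I would compute $C^{17}$ and $C^{18}$ exactly as rationals and decide $\chi_*<7/2$ in exact arithmetic, in the same way as the other certificates.

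The main obstacle is exactly this tightness: the stated constant $7/2$ is presumably chosen just above the value at $(9,4/25)$. If the exact check failed there, the fallback would be the edge-by-edge variant, which is weaker, so in practice the plan relies on the exact computation succeeding. For $(7,3/20)$ the same exact computation applies: $C^{13}\approx e^{0.59}\approx1.80$ and $C^{14}\approx1.88$, giving $\chi_*\approx1+0.88/1.20\approx1.73<7/4$. This should again be decided in exact rational arithmetic.
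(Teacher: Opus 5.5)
Your proposal is correct and follows the paper's proof: combine Theorem~\ref{thm:explicit}, Proposition~\ref{prop:sg-uniform} with $F(2\beta)=(1+t^2)/(1-t^2)$, and Corollary~\ref{cor:exact-balance}, deciding $F(2\beta)^{2d-1}<3$, $\chi_*<7/2$ and $\chi_*(14,\operatorname{atanh}\frac{3}{20})<7/4$ by exact rational comparisons. The binding point is indeed $(9,4/25)$, where the certified value is $\chi_*\le3.4749$, and your reduction by monotonicity in $t$ to the largest listed $t$ in each dimension is a valid shortcut, provided each of those six checks is decided exactly rather than through the heuristic $\log C\approx2t^2$.
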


\begin{proof}
For unit couplings $F(2\beta)=\cosh2\beta=(1+t^2)/(1-t^2)$ is rational. At
the seventeen pairs, exact evaluation gives $F(2\beta)^{2d-1}<12/5$ and
$\chi_*(2d,\beta)<7/2$, and $\chi_*(14,\operatorname{atanh}\tfrac3{20})<7/4$
(Appendix~\ref{app:separation-certificate}). Combine
Theorem~\ref{thm:explicit}, Proposition~\ref{prop:sg-uniform} and
Corollary~\ref{cor:exact-balance}.
\end{proof}

\begin{corollary}[A balanced percolating regime]\label{cor:balanced}
For fixed symmetric exponential-moment disorder, suppose
\[
 \Delta m\beta_d\longrightarrow\infty,\qquad
 \Delta s_2\beta_d^2\longrightarrow0.
\]
For all sufficiently large $d$, every selected limit has $Q=0$ and
$D_+=D_-$ almost surely,
both signs percolate, and, uniformly over those limits,
\[
 \chi_{\rm SG}\longrightarrow1,\qquad
 \nu(|C_{\rm b}(0)|<\infty)\longrightarrow0,\qquad
 \E_\nu|D_s-1/2|\longrightarrow0.
\]
The last two conclusions, $Q=0$ and exact balance also hold at
$\beta_d=(10\Delta s_2)^{-1/2}$, where $\chi_{\rm SG}\le2$ eventually.
If the law has no atom at zero, the at-most-two input used in
Section~\ref{sec:discussion} identifies $D_+$ and $D_-$ with the densities of
the two individual infinite blue clusters.
\end{corollary}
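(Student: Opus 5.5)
The plan is to treat the response half and the percolation half separately and combine them at the end; the only step needing new work is showing that the site-percolation probability tends to one.

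\emph{Response half.} Since $\Delta s_2\beta_d^2\to0$, we also have $\beta_d\to0$. The exponential moment gives $F(2\beta_d)=1+2s_2\beta_d^2+O(\beta_d^4)$, so both $F(2\beta_d)^{\Delta}$ and $F(2\beta_d)^{\Delta-1}$ tend to $1$. In particular $F(2\beta_d)^{\Delta-1}<3$ eventually, and $\chi_*(\Delta,\beta_d)\to1$. Proposition~\ref{prop:sg-uniform} then gives $1\le\chi_{\rm SG}(\nu)\le\chi_*\to1$ uniformly over selected limits, together with $Q=0$ almost surely. Corollary~\ref{cor:exact-balance} (through Corollary~\ref{cor:imbalance-overlap}) gives $D_+=D_-$ and $F_+=F_-$ almost surely.

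At $\beta_d=(10\Delta s_2)^{-1/2}$ the same expansion gives $F(2\beta_d)^{\Delta-1}\to e^{1/5}$ and $F(2\beta_d)^{\Delta}\to e^{1/5}$. Hence $\chi_*\to1+(e^{1/5}-1)/(3-e^{1/5})<2$, and $Q=0$ and exact balance follow in the same way.

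\emph{Percolation half.} I would use Corollary~\ref{cor:direct-cmr}: every selected limit has a blue root cluster containing the root-forced independent site cluster with parameter $p_d=b\,(\ell/M)^{2d-1}$. Since $\beta_d\to0$, the exponential moment gives $b=\E\tanh(\beta_dW)=m\beta_d(1+o(1))$ and $\ell/M=1-O(\beta_d^2)$. Therefore $(\ell/M)^{2d-1}\ge e^{-C\Delta\beta_d^2}$. This tends to $1$ in the main regime and stays bounded below by a positive constant at $\beta_d=(10\Delta s_2)^{-1/2}$. In both cases $\Delta p_d\ge c\,\Delta m\beta_d\to\infty$, while also $p_d\to0$.

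It therefore suffices to show that the root-forced site cluster is infinite with probability tending to one when $dp_d\to\infty$. This is the main obstacle, because Kesten's threshold \eqref{eq:kesten} only gives a positive percolation probability, not one close to $1$.

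I would redo the oriented second moment of Section~\ref{sec:ov-second-moment} in the independent case, with $K'=0$ and all bonds open. Sites along oriented paths are iid Bernoulli$(p_d)$, and the root is forced occupied; the root only changes the constant prefactor $\rho_-^{-1}$ to $1$. As noted after Proposition~\ref{prop:ov-renewal}, the boost vanishes in this case. The excursion weight is then exactly
\[
 \mathbf E_0\Phi=\frac1{dp_d}+\frac{F_d-1/d}{p_d},
 \qquad
 \mathbf E_0\Phi_\infty=1-\frac1d .
\]
By Lemma~\ref{lem:ov-green}, $F_d-1/d=O(d^{-2})$. Hence $\mathbf E_0\Phi\le 1/(dp_d)+O(1/(d^2p_d))\to0$. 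Proposition~\ref{prop:ov-renewal} and Lemma~\ref{lem:ov-pz} then bound the probability of an occupied oriented path from the origin to distance $n$ below by $(1-\mathbf E_0\Phi)/(1-1/d)\ge1-o(1)$, uniformly in $n$.

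Passing this bound through the exploration of Proposition~\ref{prop:exploration} in finite balls, and then to the limit as in Figure~\ref{fig:local-transfer}, gives $\nu(|C_{\rm b}(0)|<\infty)\to0$ uniformly over selected limits. Percolation of both signs eventually follows from the almost-sure statement in Corollary~\ref{cor:direct-cmr}.

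\emph{Densities.} The reversal $\Theta$ splits root percolation equally between the two signs, so $\E_\nu D_s=\nu(q_0=s,\ |C_{\rm b}(0)|=\infty)\to\tfrac12$, using translation invariance as in Remark~\ref{rem:ov-scope}. Since $D_s\le(1+sQ)/2=\tfrac12$ almost surely, we get $\E|D_s-\tfrac12|=\tfrac12-\E D_s\to0$. The same argument applies at $\beta_d=(10\Delta s_2)^{-1/2}$. Finally, when the coupling law has no atom at zero, the identification of $D_\pm$ with the densities of the two individual infinite clusters comes from the at-most-two theorem \cite[Theorem~1.9(b)]{Pei2026}, as in Section~\ref{sec:discussion}.
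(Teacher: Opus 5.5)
Your proposal is correct. Apart from one step it follows the paper's proof:
\begin{itemize}
\item the response half is the same: Proposition~\ref{prop:sg-uniform} with $F(2\beta_d)^{\Delta}\to1$, respectively $\to e^{1/5}$, then Corollary~\ref{cor:exact-balance};
\item the density step is the same: once $Q=0$, $D_s+F_s=\tfrac12$, and replica reversal gives $\E_\nu|D_s-\tfrac12|=\tfrac12\nu(|C_{\rm b}(0)|<\infty)$.
\end{itemize}

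The difference is how you show that the root-forced iid site cluster of parameter $p_d$, with $\Delta p_d\to\infty$, is infinite with probability tending to one. The paper invokes Kesten's survival asymptotics \eqref{eq:kesten-survival}, $\theta_d^{\rm site}(c/\Delta)/(c/\Delta)\to y(c)$. It combines this with monotonicity of root-forced survival in $p$ and then lets $c\to\infty$.

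You use the oriented second moment instead, and this works. What you need is exactly Lemma~\ref{lem:oriented-pz}, already proved in the paper for root-forced iid sites on $\Z^d$. Once $p_d>F_d$, it gives
\[
\nu(o\blue\infty)\ge\frac{p_d-F_d}{p_d(1-F_d)},
\qquad
\nu(|C_{\rm b}(0)|<\infty)\le\frac{F_d}{p_d(1-F_d)}=O\bigl(1/(\Delta m\beta_d)\bigr).
\]
So you need neither the torus machinery of Section~\ref{sec:ov-second-moment} nor a separate finite-ball transfer. Corollary~\ref{cor:direct-cmr} already places this $\Z^d$ site cluster inside the blue root cluster of every selected limit, uniformly.

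What each route buys:
\begin{itemize}
\item Your route is self-contained within the paper. It yields an explicit rate, also for $\E_\nu|D_s-\tfrac12|$.
\item The paper's route relies on an external theorem. That same theorem, however, supplies the sharper fixed-$c$ profile $y(c)$ used in \eqref{eq:finite-mass-profile}.
\end{itemize}

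Two small corrections.
\begin{itemize}
\item With zero boost, $\mathbf E_0\Phi_\infty=\mathbf P_0(\mathfrak t=\infty)=1-F_d$, not $1-1/d$. Only the inequality $\le1-1/d$ holds, and only that inequality is needed.
\item $F_d/p_d\to0$ requires just $F_d=O(1/d)$, not $F_d-1/d=O(d^{-2})$. The right source is Lemma~\ref{lem:collision-green}(f). Lemma~\ref{lem:ov-green} as stated controls $u_k$ only for $k\ge d$.
\end{itemize}
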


\begin{proof}
Use the direct site parameter $p_d=b(\ell/M)^{\Delta-1}$ from
Corollary~\ref{cor:direct-cmr}, evaluated at $\beta_d$.
Since $b\sim m\beta_d$ and
$\log(\ell/M)=-9s_2\beta_d^2+O(\beta_d^4)$,
$\Delta p_d\to\infty$ in both regimes. In the second regime the restoration
factor tends to $e^{-9/10}>0$.
Kesten
\cite[Theorem 2S, p.~221]{Kesten1990} proves
\begin{equation}\label{eq:kesten-survival}
 \frac{\theta_d^{\rm site}(c/\Delta)}{c/\Delta}\longrightarrow y(c),
 \qquad y(c)=1-e^{-cy(c)},\quad c>1.
\end{equation}
Root-forced iid survival is increasing in its parameter. Compare $p_d$ with
$c/\Delta$ for each fixed $c$, then let $c\to\infty$; the direct rooted
comparison gives $\nu(|C_{\rm b}(0)|=\infty)\to1$ uniformly.
Proposition~\ref{prop:sg-uniform} supplies $Q=0$ and the stated susceptibility
bounds. Hence $D_s=1/2-F_s$ exactly. Global reversal of one replica preserves
blue bonds and exchanges $F_+$ and $F_-$, so
\begin{equation}\label{eq:balanced-error}
 \E_\nu|D_s-1/2|=\E_\nu F_s
 =\tfrac12\nu(|C_{\rm b}(0)|<\infty).
\end{equation}
This proves the density statements; Corollary~\ref{cor:exact-balance} gives
exact equality of the two densities in every sufficiently large fixed dimension.
The signed finite-blue density also vanishes there.
The conclusion concerns the periodic ensemble and does not assert
quenched Gibbs uniqueness.
\end{proof}

Under finite mean alone, the cutoff comparison gives the complementary bound
\begin{equation}\label{eq:finite-mass-profile}
 D_s\ge\tfrac12 y(c)-o(1),\qquad
 F_++F_-\le1-y(c)+o(1),\qquad |Q|=o(1)
\end{equation}
at $\beta=c/(\Delta m)$, fixed $c>1$. The errors are deterministic and uniform
over selected limits, almost surely in each. Appendix~\ref{app:density}
justifies conditioning on these invariant densities. Thus a large fixed $c$
already makes the possible finite-blue contribution small, without proving it
vanishes at any fixed dimension.

\section{Density imbalance and Gibbs states}\label{sec:discussion}

Theorems~\ref{thm:headline} and~\ref{thm:explicit} establish percolation in
both overlap sectors, but do not compare their infinite-cluster densities. In SK, the theorem of
Machta, Newman and Stein \cite{MNS2008} determines those densities from sector
sizes because the underlying sector graphs are complete. The corresponding
lattice question requires information beyond the lower percolation comparisons
proved here.

If the coupling law additionally has no atom at zero, the upper bound of
Pei \cite[Theorem~1.9(b)]{Pei2026} complements our existence result: wherever
our almost-sure coexistence conclusions hold, there are exactly two infinite
blue components, one in each overlap sector. To match the cited theorem's
full CMR setting, augment the finite-torus laws defining a selected
$\nu\in\mathcal L_{d,\beta}$ with the standard red bonds and extract a further
jointly convergent subsequence. Its blue marginal remains $\nu$, so the
at-most-two conclusion descends to that selected limit. The upper bound uses
a separate finite-box coalescence argument; the lower comparisons proved here
do not bound the number of infinite components within a sector.

\subsection{Imbalance suffices for Gibbs-state nonuniqueness}

The next implication is the qualitative criterion of
Machta, Newman and Stein \cite[Section II C, pp.~4--5, footnote 32]{MNS2007}.
We formulate its tail argument explicitly because it requires less than an
identity between density imbalance and spatial overlap. The relevant Gibbs
facts are tail conditioning and conditional disintegration, not a
nonpercolation-to-uniqueness theorem. Such a theorem is due to Gandolfi: if,
for almost every disorder and every sequence of boundary conditions, blue bonds
do not percolate in the disagreement sector, then the Gibbs state is almost
surely unique \cite[Theorem~4.3]{Gandolfi2018}. Theorems~\ref{thm:headline}
and~\ref{thm:explicit} give blue percolation in the disagreement sector in
periodic joint limits, not along boundary-condition sequences, so the two
statements concern different limits; neither is used below.

\begin{proposition}[Persistent imbalance and spin-flip symmetry]\label{prop:imbalance}
Fix $d$ and $\beta>0$. Let a stationary two-replica CMR law have, conditional
on almost every $J$, a spin marginal that is Gibbs for the sum of the two
uncoupled zero-field Ising Hamiltonians. Given $J$ and the spins, assume the
standard independent blue-activation kernel. Define $D_s$ as above. If
\[
 \nu(D_+\ne D_-)>0,
\]
then for a positive-probability set of disorders there exist distinct
spin-flip-related single-replica Gibbs states. For iid disorder, such states
exist for almost every disorder. Every selected periodic joint limit in this
paper has the stipulated Gibbs and activation properties.

Neither at-most-two nor finite-blue cancellation is needed for this implication.
Under our coexistence conclusion and at-most-two, the premise is precisely inequality of the two infinite
cluster densities. It is an infinite-volume density premise, not merely a
finite-volume observation about the two largest clusters.
\end{proposition}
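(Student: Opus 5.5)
The plan is to argue by contraposition. Suppose that, for almost every $J$, no single-replica Gibbs state is different from its own global spin flip. We will show that $D_+=D_-$ almost surely.

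\emph{Step 1: disintegrate the spin marginal over the tail.} Given $J$, the spin marginal of the two-replica law is Gibbs for the sum of two uncoupled zero-field Hamiltonians. By tail conditioning, the conditional law given the tail $\sigma$-field of $(\sigma,\tau)$ is, for almost every tail value, an extremal Gibbs state of the doubled system. An extremal state of a product specification factorizes as $\mu^1\otimes\mu^2$ with $\mu^1,\mu^2$ extremal single-replica Gibbs states for $J$: product specifications have product extremal elements, since tail triviality of the pair forces it for each marginal, and conditioning one replica's tail inside the other keeps Gibbsianity. The densities $D_\pm$ are translation-invariant functions of $(\sigma,\tau,B)$. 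Conditional on the spins, the blue bonds are independent activations, so the ergodic theorem for the activation layer shows that $D_\pm$ are almost surely functions of the spins. I would check that $D_\pm$ are tail-measurable up to null sets, up to the usual finite-modification argument: changing finitely many spins changes the blue indicators only on finitely many edges. This changes no infinite-cluster density, because a finite change can merge or split finitely many clusters and alters only a density-zero set of vertices. Hence $D_\pm$ are almost surely constant under each tail component $\mu^1\otimes\mu^2\otimes(\text{activations})$.

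\emph{Step 2: apply the flip symmetry inside one component.} Under the hypothesis of the contrapositive, every extremal $\mu^2$ satisfies $\mu^2\circ\mathrm{flip}=\mu^2$. An extremal flip-invariant state is the flip of itself. The map $\tau\mapsto-\tau$ preserves $\mu^1\otimes\mu^2$ and the blue bonds, and it exchanges $D_+$ with $D_-$. Since $D_\pm$ are almost-sure constants under the component, those constants must coincide. Integrating over components and over $J$ gives $D_+=D_-$ almost surely, which contradicts $\nu(D_+\ne D_-)>0$. To get the positive-probability statement, run the argument on the set of $J$ for which some component has unequal constants. On that set the relevant $\mu^2$ differs from its flip, so $\mu^2$ and $\mu^2\circ\mathrm{flip}$ are distinct spin-flip-related Gibbs states. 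For iid disorder, the event that such a pair exists is translation-invariant in $J$, so ergodicity of the iid field upgrades positive probability to probability one.

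\emph{Step 3: the selected limits satisfy the hypotheses.} On each torus the conditional spin law given $J$ is $\mu_J\otimes\mu_J$, and the DLR equations for the doubled specification are cylinder identities that pass to local limits. Conditional on $J$, they hold for regular conditional laws by the same monotone-class argument as Proposition~\ref{prop:local-transfer}. The activation kernel \eqref{eq:blue} is likewise a cylinder identity. Stationarity follows from torus translation invariance. Nothing in this step needs limiting independence of the replicas, because only the Gibbs property of the pair is used. The closing remarks of the statement are then immediate: no identity relating $D_+-D_-$ to $Q$ was used, and under coexistence together with the at-most-two result, $D_\pm$ are the densities of the two individual infinite clusters.

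I expect the main obstacle to be Step 1. The difficulty is proving rigorously that the extremal components of the doubled Gibbs structure are products of extremal single-replica states, and that $D_\pm$ are measurable with respect to the joint tail. I would handle measurability via finite-energy/finite-modification insensitivity combined with the conditional independence of the activations, rather than by an identification of the full tail $\sigma$-field.
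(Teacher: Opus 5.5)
Your argument is correct, but it takes a heavier route than the paper.

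\textbf{The paper's route.} The paper never decomposes into extremal states.
\begin{itemize}
\item For fixed $J$ it integrates out the activation uniforms to get the spin-tail function $d_J(\sigma,\tau)$, the activation-averaged value of $D_+-D_-$.
\item It conditions the two-replica spin law on whichever of $\{d_J>0\}$, $\{d_J<0\}$ has positive mass; tail conditioning preserves the product specification.
\item It flips the first replica. This gives a second Gibbs law with disjoint tail support and the same second-replica marginal.
\item It disintegrates both laws over $\tau$. The first-replica DLR kernel does not involve $\tau$, so the conditional laws of $\sigma$ given $\tau$ are single-replica Gibbs states. They are flip-related, and if they were flip-invariant for almost every $\tau$, the two joint laws would coincide.
\end{itemize}

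\textbf{Your route.} You condition on the entire tail, which costs two extra inputs: an extremal decomposition chosen measurably in $J$, and the lemma that extremal Gibbs measures of the product specification are products $\mu^1\otimes\mu^2$ of extremal ones. That lemma is true, and your sketch closes with the same disintegration fact the paper uses:
\begin{itemize}
\item the marginals of a tail-trivial measure are tail-trivial, hence extremal;
\item the conditional law of $\sigma$ given $\tau$ is Gibbs for the first-replica specification for almost every $\tau$;
\item an extremal marginal cannot be a nontrivial mixture of Gibbs states, so that conditional law equals the marginal.
\end{itemize}
What you gain is a slightly sharper conclusion: the flip-asymmetric states can be taken extremal, and the imbalance sits in a single product component. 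But the factorization is avoidable even inside your contrapositive. If every single-replica Gibbs state is flip-invariant, disintegrating over $\sigma$ shows directly that every Gibbs measure of the product specification is invariant under $\tau\mapsto-\tau$. Applying this to the law conditioned on $\{d_J>0\}$, which $\tau\mapsto-\tau$ carries onto $\{d_J<0\}$, is exactly the paper's argument.

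\textbf{Three small repairs.}
\begin{itemize}
\item Given $J$ and the spins, the activations are independent but not identically distributed. The step you attribute to ``the ergodic theorem for the activation layer'' is really Kolmogorov's zero--one law, which your finite-modification insensitivity already supplies.
\item The iid upgrade needs the set of $J$ carrying a non-flip-invariant Gibbs state to be measurable. The paper obtains this by testing countably many local odd spin observables against the compact set of Gibbs states.
\item In Step 3 the DLR and activation identities are not finite-alphabet cylinder identities, since the couplings may be continuous. They pass to the selected limit because their kernels depend continuously on finitely many couplings; for the blue kernel this holds even at $J_e=0$.
\end{itemize}
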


\begin{proof}
For all configurations define $D_s$ first by the limsup of the cube averages;
stationarity supplies actual limits almost surely. A finite change of edges
changes infinite-cluster membership only at finitely many vertices. Indeed,
deleting finitely many edges splits each affected component into finitely many
pieces, whose finite pieces have finite total size; adding finitely many edges
can absorb only finitely many finite components. Finite changes of overlap
labels likewise do not affect their densities.

Realize blue bonds as a local function of $(J,\sigma,\tau,U)$ with iid
activation uniforms $U$. For fixed $J$, put
\[
 D=D_+-D_-,\qquad d_J(\sigma,\tau)=\int D(J,\sigma,\tau,U)\,dU.
\]
For fixed spins, $D$ is invariant under finite changes of $U$, so the
Kolmogorov zero-one law gives $D=d_J$ almost surely. A finite change of spins,
with the same uniforms, alters only finitely many labels and edges. Thus
$d_J$ is pointwise invariant under every finite spin change: it is a
two-replica spin-tail observable. Flipping the first replica globally reverses
its sign and leaves the blue graph unchanged.

Fix a $J$ for which $d_J$ is nonzero with positive probability. Condition the
two-replica spin law on whichever of $\{d_J>0\}$ or $\{d_J<0\}$ has positive
probability. This tail conditioning preserves the product Gibbs specification
\cite[Proposition 6.61(1)]{FriedliVelenik2017}. Flip its first replica to obtain
a second Gibbs law. The two laws have disjoint signed-tail supports and the
same second-replica marginal. Disintegrating over that common marginal gives
single-replica Gibbs states for the first replica: the local DLR kernel for
the first replica is independent of the second, so its DLR identities remain
valid under this disintegration. The conditional states are spin-flip related
and must differ on a positive-measure set, since otherwise the two joint laws
would coincide. Existence of such states is a measurable translation-invariant
property of $J$; iid-disorder ergodicity upgrades positive probability to one
\cite[Proposition 4.4]{Newman1997}. For the spin-flip refinement, measurability
follows by testing a countable family of local odd spin observables against
the compact set of Gibbs states.

For the last assertion, pass the finite-torus DLR equations for the product
spin specification against bounded disorder/exterior-spin cylinders to the
selected joint limit, then disintegrate over $J$. The local Ising kernels are
continuous and depend on finitely many couplings. The independent activation
kernel passes in the same way; its blue probability is continuous even at
$J_e=0$. This standard local-limit argument is also described in
\cite[Proposition 4.12, Remark 4.13 and Lemmas B.3--B.4]{Newman1997}.
It asserts the product specification, not conditional independence of the
limiting replicas.
\end{proof}

For the selected periodic ensemble, imbalance also has a physical consequence
in a finite local spin channel. The observable need not be a single spin.

\begin{proposition}[Imbalance forces an odd local order channel]\label{prop:odd-order}
Let $\nu\in\mathcal L_{d,\beta}$ and suppose $\nu(D_+\ne D_-)>0$.
There is a fixed finite set $A$ of odd cardinality for which
\begin{equation}\label{eq:odd-order}
 \rho_A:=\E_\nu Q_A^2>0,\qquad
 \lim_{n\to\infty}\frac1{|\Lambda_n|}
       \sum_{x\in\Lambda_n}\E_\nu[q_{A,0}q_{A,x}]=\rho_A,
\end{equation}
where $\Lambda_n=\{-n,\ldots,n\}^d$.
Every summand is nonnegative, so $\chi_A(\nu)=\infty$.

Under the standing finite-mean assumption, let $P_A(h)$ be the quenched
pressure per site for two replicas with the additional, dimensionless exponent
$h\sum_xq_{A,x}$, using only translates supported in the volume. Then
\begin{equation}\label{eq:composite-cusp}
 P_A(h)-P_A(0)\ge \sqrt{\rho_A}\,|h|.
\end{equation}
Thus this finite-range composite replica coupling has a pressure cusp at zero.
Corollary~\ref{cor:imbalance-overlap} below upgrades the conclusion to the
single-site replica coupling.
\end{proposition}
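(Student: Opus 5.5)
The plan is to turn the tail observable $D=D_+-D_-$ into a local odd overlap character by $L^2$ approximation, and then pass from the tail to cube averages using stationarity. Afterwards the cusp \eqref{eq:composite-cusp} follows from a variational lower bound on the pressure.

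\emph{Step 1: symmetries of $D$.} By the proof of Proposition~\ref{prop:imbalance}, $D$ is invariant under finite changes of spins, labels and activations. It is also translation invariant, and $\E_\nu D^2>0$. The global reversal $\Theta$ of one replica preserves $\nu$ on cylinder events. It fixes $B$ and maps $q\mapsto-q$, hence $D\mapsto-D$. I will move Lemma~\ref{lem:ov-gauge} to the limit, working on each torus first and then on cylinder functions. This should show that $\sigma$ is uniform and independent of $(s,q,A)$ under $\nu$. Consequently $D$ is a.s.\ a function of $(s,q,A)$ alone, and it is odd under $q\mapsto-q$ with $(s,A)$ fixed.

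\emph{Step 2: local odd approximation.} I approximate $D$ in $L^2(\nu)$ by a cylinder function $g(s,q,A)$ supported in a box. I then replace $g$ by its odd part $\frac12(g-g\circ\Theta)$, which cannot worsen the approximation because $D$ is odd. Expanding the odd part in the characters $q_A$ of $q$ on the box gives $g=\sum_{|A|\ \mathrm{odd}}h_A(s,\mathrm A)\,q_{A,0}$. Since $\E_\nu[Dg]>0$, some odd $A$ has $\E_\nu[D\,h_A\,q_{A,0}]\ne0$. By translation invariance of $D$ and the cube ergodic theorem, the same holds with $h_Aq_{A,0}$ replaced by its cube average.

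\emph{Step 3: removing $h_A$ (main obstacle).} I need $\E_\nu[D\,Q_A]\ne0$ for the pure character itself, not for the weighted average of $h_A q_{A}$. I do not yet have a clean way to remove the coefficient $h_A$. What I will try first is to expand $h_A$ in characters of the gauge signs $s$ and in indicators of the activations. Then I would use the spin-tail nature of $D$: conditional on $q$, the law of $(s,A)$ off a finite window is governed by the product-type star laws of Proposition~\ref{prop:ov-star-law}. I would try to show that, given the whole $q$-field, the coefficient can be conditioned away without losing correlation with $D$, since $D$ is insensitive to finitely many $s$ and $A$. If this works, a suitable odd $A$ satisfies $\E_\nu[DQ_A]\ne0$. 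Cauchy--Schwarz then gives $\rho_A=\E_\nu Q_A^2>0$. If conditioning $h_A$ away fails, the fallback is to enlarge $A$ to absorb the local dependence on $(s,\mathrm A)$. That is delicate, because $s$ is not a function of $q$.

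\emph{Step 4: sum rule and cusp.} For the sum rule in \eqref{eq:odd-order}, write $\frac1{|\Lambda_n|}\sum_x\E_\nu[q_{A,0}q_{A,x}]$. By stationarity this equals $\E_\nu\bigl[q_{A,0}\cdot\tfrac1{|\Lambda_n|}\sum_x q_{A,x}\bigr]$. The cube average converges to $Q_A$ in $L^2$, and $\E_\nu[q_{A,0}Q_A]=\E_\nu Q_A^2$ by invariance of $Q_A$. Each summand is nonnegative, as on every torus it is a disorder-averaged square of a Gibbs correlation. This transfers to the limit as in Proposition~\ref{prop:sg-uniform}, giving $\chi_A=\infty$. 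For the cusp, I use convexity and the Gibbs variational (Jensen) bound on the torus: $P_{A,L}(h)-P_{A,L}(0)\ge h\,\E\langle|\Lambda|^{-1}\textstyle\sum_xq_{A,x}\rangle$, computed in a tilted measure. A cleaner route is to restrict to the conditioned measures $\{\pm Q_A>0\}$ from Step 3 and use $P(h)\ge P(0)+\log\E[e^{h\sum q_A}]/|\Lambda|$. Jensen applied on $\{\operatorname{sgn}(h)Q_A>0\}$ gives a slope $\E|Q_A|\ge$ a quantity that I would relate to $\sqrt{\rho_A}$ via the $L^2$ limit. The relation is clean only if $|Q_A|$ is a.s.\ constant, which I would try to justify by ergodicity of the $J$-marginal. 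Finally, I pass $L\to\infty$ along the selected subsequence, using only translates supported in the volume.
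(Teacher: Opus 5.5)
Your Steps~1--2 are correct, and so are the sum rule and nonnegativity in Step~4, which follow the paper's ergodic-projection argument. There is, however, a genuine gap at Step~3, the core of the proposition.

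\textbf{Why Step~3 fails as planned.} Conditioning $h_A(s,\mathrm A)$ away given $q$ cannot work. $D$ is not $q$-measurable: it depends on the blue bonds, hence on $s$ and the activations. So $\E_\nu[D\,h_Aq_{A,0}]\ne\E_\nu[D\,\E(h_A\mid q)\,q_{A,0}]$ in general, and $\E(h_A\mid q)$ is a nonlocal function of $q$, not a character. Your target $\E_\nu[DQ_A]\ne0$ is also stronger than needed, and nothing in your argument supports it. Since $s_e=J_e\sigma_x\sigma_y$, the term $h_A(s)\,q_{A,0}$ expands into \emph{mixed} channels $c(J)\sigma_{A'}\tau_{B'}$ with $A'\ne B'$, and $D$ may correlate only with such channels. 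The gauge coordinates hide exactly the structure you need, namely conditional independence of the two replicas given $J$. (Lemma~\ref{lem:ov-gauge} is also proved only for unit couplings.)

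\textbf{The missing idea} is a Cauchy--Schwarz bound for mixed channels, in the coordinates $(J,\sigma,\tau)$.
\begin{enumerate}
\item Integrate out the coins. The observable $f_R=q_0\,\Pp_U(0\blue\partial B_R\mid J,\sigma,\tau)$ is bounded, local, continuous in $J$, and odd under flipping \emph{each} replica.
\item Its cube averages agree in $L^2$ with those of $q_0\mathbf 1\{0\blue\partial B_R\}$, because centred coin variables at well-separated translates are conditionally independent. Hence, with $\mathcal I$ the invariant $\sigma$-field, $\|\E_\nu[f_R\mid\mathcal I]-D\|_2^2\le\nu(|C_{\rm b}(0)|<\infty,\ 0\blue\partial B_R)\to0$.
\item Expand $f_R=\sum c_{AB}(J)\sigma_A\tau_B$ over odd $A,B$, and let $M_{AB,\Lambda}$ be the cube average of $c_{AB}(T_xJ)\sigma_{x+A}\tau_{x+B}$. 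On each torus, replica independence given $J$ and Cauchy--Schwarz (first over the double sum, then over $J$) give
\[
 \E M_{AB,\Lambda}^2\le\|c_{AB}\|_\infty^2\sqrt{\E Q_{A,\Lambda}^2\,\E Q_{B,\Lambda}^2}.
\]
This is a fixed-$\Lambda$ local inequality, so it passes to $\nu$ before $\Lambda\to\infty$.
\item If every odd $\rho_A$ vanished, every channel, and hence every $\E_\nu[f_R\mid\mathcal I]$, would vanish, contradicting $\E_\nu D^2>0$.
\end{enumerate}
This gives $\rho_A>0$ without any direct correlation between $D$ and a pure $Q_A$.

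\textbf{The cusp argument also does not close.} Jensen under the unconditioned law gives slope $0$. Conditioning on $\{\operatorname{sgn}(h)Q_A>0\}$ gives slope $\E_\nu[\,|Q_A|\mid Q_A\ne0\,]$, which can be smaller than $\sqrt{\rho_A}$. Ergodicity of the disorder does not make $|Q_A|$ constant, since $Q_A$ is not a function of $J$. Moreover, $\{Q_A>t\}$ is an event of the limit law, so it cannot be inserted into $\E\log Z^A_{\Lambda,J}(h)$ without interchanging limiting and full-volume overlap averages.

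The paper's route is as follows.
\begin{enumerate}
\item Condition $\nu$ on $E_t=\{Q_A>t\}$ with $t\uparrow a=\operatorname{ess\,sup}_\nu Q_A$.
\item $\nu(E_t\mid J)$ is a translation-invariant function of iid disorder, hence constant. So the conditioned law keeps the disorder marginal, and tail conditioning keeps the product Gibbs specification.
\item The relative entropy of its $\Lambda$-marginal with respect to the free uncoupled two-replica Gibbs measure is at most $4\beta\sum_{e\ \mathrm{crossing}\ \partial\Lambda}|J_e|$. Under finite mean this is $o(|\Lambda|)$ in expectation.
\item The Gibbs variational inequality then gives slope $a$. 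Finally $a\ge\sqrt{\rho_A}$, because the law of $Q_A$ is symmetric, so $|Q_A|\le a$ almost surely.
\end{enumerate}
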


\begin{proof}[Proof outline]
Approximate the signed infinite-blue density by spatial averages of finite-radius
signed blue-arm events, with activation coins integrated out. Expand each local
observable in the spin monomials of the two replicas. Its oddness under either
replica flip leaves only odd monomials. Finite-volume replica independence and
Cauchy--Schwarz bound every mixed monomial by the corresponding same-observable
overlap second moments. These inequalities pass locally before taking spatial
averages. Nonzero imbalance therefore forces one fixed odd channel to survive.
Tail conditioning on a positive value of that channel preserves both the
Gibbs specification and the disorder marginal. A finite-boundary entropy estimate
then gives the pressure inequality. Appendix~\ref{app:odd-order} gives the
details, including the order of limits.
\end{proof}

The pressure step is a Gibbs variational argument in the Griffiths framework.
Itoi, Mukaida and Tasaki \cite[Theorem 3.1, Lemma 4.1 and Appendix C]{IMT2024}
relate overlap fluctuations to replica-coupling singularities and discuss general
local observables. Our selected-local-limit formulation requires the explicit
bridge in Appendix~\ref{app:odd-order}; their full-volume overlap premise cannot
simply be substituted.
The reduction from blue-density imbalance identifies the finite odd local
overlap to which this pressure argument applies.

The odd set $A$ supplied by Proposition~\ref{prop:odd-order} can be traded for a
single site. The comparison is local: it reweights finitely many couplings in
finite volume and needs no property of the limiting replicas.

\begin{proposition}[Odd local order forces ordinary overlap order]\label{prop:odd-to-site}
Let $\beta>0$ and $\nu\in\mathcal L_{d,\beta}$. Fix a finite set $A\subset\Z^d$
of odd cardinality and a finite edge set $F$ whose odd-degree vertices are
exactly those of $A\triangle\{0\}$; put $k=|F|$. Choose $0<w_-\le w_+<\infty$
with $\bar p:=\Pp(w_-\le W\le w_+)>0$. For every $\overline W>0$ and every
$x\in\Z^d$,
\begin{equation}\label{eq:odd-to-site}
 \E_\nu[q_{A,0}q_{A,x}]
 \le C\,\E_\nu[q_0q_x]+2k\,\Pp(W>\overline W),
 \qquad
 C=\max\Bigl(1,\frac{16}{\bar p\,\delta^2}\Bigr)^{2k},
\end{equation}
where $\delta=\sinh(2\beta w_-)/\cosh^2\!\bigl(\beta(w_++\overline W)\bigr)$.
Consequently $\rho_A\le C\,\E_\nu Q^2+2k\,\Pp(W>\overline W)$, and
$\E_\nu Q^2=0$ forces $\rho_A=0$. If $W$ is bounded, choosing $\overline W$
above its support gives $\chi_A(\nu)\le C\,\chi_{\rm SG}(\nu)$.
\end{proposition}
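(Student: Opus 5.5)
We work on the torus $\mathbb T_L$, prove \eqref{eq:odd-to-site} uniformly in $L$, and pass the two fixed cylinder expectations to the selected limit $\nu$. On the torus, conditional replica independence gives
\[
 \E[q_{A,0}q_{A,x}]=\E\langle\sigma_A\sigma_{A+x}\rangle_J^2 ,
\]
and similarly $\E[q_0q_x]=\E\langle\sigma_0\sigma_x\rangle_J^2$. The idea is to turn the odd product $\sigma_A$ into the single spin $\sigma_0$ by inserting the edges of $F$. Since $\partial F=A\triangle\{0\}$, we have
\[
 \sigma_A\prod_{e\in F}\sigma_e=\sigma_0,\qquad \sigma_e=\sigma_u\sigma_v\ \text{for } e=uv .
\]
The same holds at $x$ with the translate $F+x$. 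The plan has three steps: localize the couplings on $F\cup(F+x)$, perform a local reweighting that produces the single-site correlation, and bound the cost of that reweighting.

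\emph{Localization.} Let $G$ be the event that every coupling magnitude on $F\cup(F+x)$ is at most $\overline W$. Since there are at most $2k$ such edges, $\Pp(G^c)\le2k\,\Pp(W>\overline W)$. The integrand $\langle\cdot\rangle^2$ is at most $1$, so the contribution of $G^c$ is at most $2k\,\Pp(W>\overline W)$. If the edges of $F$ and $F+x$ overlap, each shared edge is counted once and the estimates below only improve. On $G$ we keep all magnitudes at most $\overline W$.

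\emph{Local reweighting identity.} Fix an edge $e\in F$ and all other couplings. Write the Gibbs average as a function of $J_e$ through the cavity quantities in Lemma~\ref{lem:insertion}. For an even or odd observable $\Phi$,
\[
 \langle\Phi\rangle_{J}=\frac{\langle\Phi\rangle'+t_e\langle\Phi\sigma_e\rangle'}{1+t_e\langle\sigma_e\rangle'},\qquad t_e=\tanh(\beta J_e),
\]
where $\langle\cdot\rangle'$ denotes the average with $e$ deleted. Take the difference between the values at a resampled magnitude in $[w_-,w_+]$ with sign $+$ and with sign $-$. This difference isolates $\langle\Phi\sigma_e\rangle'$ with coefficient of size at least $\sinh(2\beta w_-)/\cosh^2(\beta(w_++\overline W))=\delta$, up to a constant. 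The constant comes from the denominators, which are bounded using $|t_e|$ and the bound $\overline W$ on the original coupling, and we expect it to be $1/2$.

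Iterating over the $k$ edges of $F$, and then over the $k$ edges of $F+x$, expresses $\langle\sigma_0\sigma_x\rangle'$ (all $2k$ edges deleted) as a signed combination of at most $4^{2k}$ terms $\langle\sigma_{A}\sigma_{A+x}\rangle$ evaluated at sign and magnitude variants of those couplings, divided by $\delta^{2k}$. The step we actually need runs the other way: we want to bound $\langle\sigma_A\sigma_{A+x}\rangle^2$ by single-site correlations. So we expand in reverse, writing $\langle\sigma_A\sigma_{A+x}\rangle$ in terms of the deleted-edge averages $\langle\sigma_A\sigma_{A+x}\prod_{e\in S}\sigma_e\rangle'$ over subsets $S$. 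Only the full subset yields $\sigma_0\sigma_x$. The other subsets must be handled by taking the top Fourier coefficient in the edge signs: average over the fair signs of the $2k$ edges weighted by the product of those signs.

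\emph{Cost comparison.} Conditioning on the magnitudes lying in $[w_-,w_+]$ costs a factor $\bar p^{-1}$ per edge. Each resampling changes the Gibbs law by a density that is bounded by a constant depending on $\beta(w_++\overline W)$, as in Lemma~\ref{lem:insertion}. Squaring and applying Cauchy--Schwarz over the $4$ sign and magnitude variants per edge should then produce the factor $(16/(\bar p\delta^2))$ per edge, which gives $C$. Finally, pass to the limit, sum over $x$ in boxes and average to obtain the statements for $\rho_A$ and for $\chi_A$. In the bounded case there is no tail term, so summing over $x$ gives $\chi_A\le C\chi_{\rm SG}$.

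The main obstacle is the middle step: converting the bound on $\langle\sigma_A\sigma_{A+x}\rangle^2$ into the square of a single-site correlation without losing the lower-order subset terms. We would first try the top-coefficient projection. Because $q$ is quadratic in the Gibbs average, the lower-order terms should be controlled by orthogonality of the sign characters, in the same way as in the cavity step of Proposition~\ref{prop:sg-uniform}.
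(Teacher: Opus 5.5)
There is a genuine gap, at exactly the step you flag, and the remedy you propose does not close it. Both of your expansions start from the measure $\langle\cdot\rangle'$ with the edges deleted and then insert them. The multiaffine polynomial you build is therefore $t\mapsto\langle\sigma_A\sigma_{A+x}\prod_{e}(1+t_eb_e)\rangle'$, with $b_e=\sigma_u\sigma_v$, and its top coefficient is $\langle\sigma_0\sigma_x\rangle'$. Extracting that coefficient by a sign-character average writes the single-site correlation as a combination of odd-channel correlations at sign variants of the couplings. That bounds $\langle\sigma_0\sigma_x\rangle^2$ by $\langle\sigma_A\sigma_{A+x}\rangle^2$, which is the wrong inequality. ``Expanding in reverse'' produces the same polynomial.

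Orthogonality of the sign characters, as in the cavity step of Proposition~\ref{prop:sg-uniform}, only removes cross terms. It bounds the sign-averaged $\langle\sigma_A\sigma_{A+x}\rangle_J^2$ by a sum over subsets $S$ of squared deleted-edge correlations $\langle\sigma_A\sigma_{A+x}\prod_{e\in S}b_e\rangle'^2$. The term $S=\varnothing$ is the odd channel itself, with coefficient of order one. Proposition~\ref{prop:sg-uniform} closes such a recursion only through a contraction factor below one, which is unavailable here because $\beta>0$ is arbitrary.

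The missing idea is to reweight from the original couplings $J$, and to put the single-site observable, not the odd one, inside the polynomial. Use $H=F\triangle(F+x)$, the symmetric difference: a shared edge has $b_e^2=1$ and must be dropped, not counted once. Then $\sigma_A\sigma_{A+x}=\sigma_0\sigma_x\prod_{e\in H}b_e$. If $J'$ agrees with $J$ off $H$ and $u_e=\tanh(\beta(J'_e-J_e))$, then
\[
 \langle\sigma_0\sigma_x\rangle_{J'}\,\Bigl\langle\prod_{e\in H}(1+u_eb_e)\Bigr\rangle_J=\mathcal P(u):=\Bigl\langle\sigma_0\sigma_x\prod_{e\in H}(1+u_eb_e)\Bigr\rangle_J .
\]
This $\mathcal P$ is multiaffine, and its top coefficient is $\langle\sigma_A\sigma_{A+x}\rangle_J$ at the original couplings. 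Take $J'_e=\zeta_eW'_e$, with $W'_e$ distributed as $W$ conditioned on $[w_-,w_+]$, and take one divided difference in each coordinate:
\[
 \langle\sigma_A\sigma_{A+x}\rangle_J
 =\frac{\sum_{\zeta\in\{\pm1\}^H}\bigl(\prod_e\zeta_e\bigr)R_\zeta\,\langle\sigma_0\sigma_x\rangle_{J^\zeta}}
       {\prod_{e\in H}(u^+_e-u^-_e)},\qquad 0<R_\zeta\le2^{|H|}.
\]
Here $u^+_e-u^-_e=\sinh(2\beta W'_e)/[\cosh(\beta(W'_e-J_e))\cosh(\beta(W'_e+J_e))]\ge\delta$ on $\{|J_e|\le\overline W\}$. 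This is where the original coupling enters $\delta$; in your deleted-edge framing it never would. The normalizations are kept as the bounded factors $R_\zeta$ rather than divided out.

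Cauchy--Schwarz over the $2^{|H|}$ terms gives $8^{|H|}\delta^{-2|H|}\sum_\zeta\langle\sigma_0\sigma_x\rangle_{J^\zeta}^2$, which no longer depends on $J_H$. The cost step then compares disorder laws, not Gibbs laws. Averaged over $\zeta$, the couplings $J^\zeta$ on $H$ have fair signs and magnitudes conditioned to $[w_-,w_+]$, with density at most $\bar p^{-|H|}$ relative to the original law. Hence $\E\sum_\zeta\langle\sigma_0\sigma_x\rangle_{J^\zeta}^2\le(2/\bar p)^{|H|}\E\langle\sigma_0\sigma_x\rangle_J^2$, which gives the factor $16/(\bar p\delta^2)$ per edge of $H$; Lemma~\ref{lem:insertion} is not needed. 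Your localization step and the passage to the limit then go through as you describe.
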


The standing assumption $m>0$ supplies an interval $[w_-,w_+]$ with $\bar p>0$,
and an atom of $W$ at zero is allowed. The proof, in
Appendix~\ref{app:odd-order}, replaces the couplings on a finite edge set and
extracts the full multiaffine coefficient of the reweighted correlation. Local
reweighting and finite-difference interpolation are standard; we use them only
in this comparison.

\begin{corollary}[Imbalance forces ordinary overlap order]\label{cor:imbalance-overlap}
Let $\beta>0$, $\nu\in\mathcal L_{d,\beta}$, and suppose $\nu(D_+\ne D_-)>0$.
Then $\E_\nu Q^2>0$ and $\chi_{\rm SG}(\nu)=\infty$. Under the standing
finite-mean assumption, the quenched two-replica pressure with the ordinary
site coupling $h\sum_xq_x$ satisfies
\[
 P_{\{0\}}(h)-P_{\{0\}}(0)\ge\sqrt{\E_\nu Q^2}\,|h|.
\]
In contrapositive form: if $Q=0$ almost surely, then $D_+=D_-$ almost surely.
This is trivial at $\beta=0$, where there are no blue bonds.
\end{corollary}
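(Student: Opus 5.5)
The plan is to chain Proposition~\ref{prop:odd-order} with Proposition~\ref{prop:odd-to-site} and then deduce the pressure cusp separately. Assume $\nu(D_+\ne D_-)>0$. Proposition~\ref{prop:odd-order} supplies a finite odd set $A$ with $\rho_A=\E_\nu Q_A^2>0$.

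Next pick a finite edge set $F$ whose odd-degree vertices are exactly $A\triangle\{0\}$. Such an $F$ exists: $|A\triangle\{0\}|$ is even because $|A|$ is odd. Pair up the points of $A\triangle\{0\}$ and join each pair by a lattice path; taking the symmetric difference of these paths gives the required parity. Since $m>0$, we can choose an interval $[w_-,w_+]$ with $\bar p>0$.

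\emph{Transferring the order to the site overlap.} Proposition~\ref{prop:odd-to-site} gives $\rho_A\le C\,\E_\nu Q^2+2k\,\Pp(W>\overline W)$ for every $\overline W$. Here $C$ depends on $\overline W$ only through $\delta$, and it grows as $\overline W$ grows. The order of choices is therefore: first choose $\overline W$ so large that $2k\,\Pp(W>\overline W)<\rho_A/2$, which is possible since $W<\infty$ almost surely. Then $C$ is a finite constant and $\E_\nu Q^2\ge\rho_A/(2C)>0$.

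For the susceptibility, repeat the argument of Proposition~\ref{prop:odd-order} with $A=\{0\}$. By stationarity and the cube ergodic theorem, the Cesàro averages of $\E_\nu[q_0q_x]$ over $\Lambda_n$ converge to $\E_\nu Q^2>0$. Each summand $\E_\nu[q_0q_x]$ is nonnegative, being the limit of the torus values $c_G(0,x)\ge0$. Hence $\chi_{\rm SG}(\nu)=\sum_x\E_\nu[q_0q_x]=\infty$.

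\emph{The pressure cusp.} Apply the cusp inequality \eqref{eq:composite-cusp} of Proposition~\ref{prop:odd-order} with $A=\{0\}$. Its proof (tail conditioning on a positive value of $Q$, then a Gibbs variational and entropy estimate; Appendix~\ref{app:odd-order}) uses only that $\rho_{\{0\}}=\E_\nu Q^2>0$ and that the channel is odd. The channel $q_0$ is odd under flipping either replica, so the argument applies verbatim and gives $P_{\{0\}}(h)-P_{\{0\}}(0)\ge\sqrt{\E_\nu Q^2}\,|h|$.

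The contrapositive is immediate: if $Q=0$ almost surely then $\E_\nu Q^2=0$, so the first part forces $\nu(D_+\ne D_-)=0$. At $\beta=0$ there are no blue bonds, so $D_\pm=0$. The main obstacle is making sure that the pressure argument of Proposition~\ref{prop:odd-order} really takes only a positive second moment of an odd local channel as input, and not the imbalance hypothesis itself. If it does need the imbalance, I would rerun the appendix argument from the premise $\E_\nu Q^2>0$ directly.
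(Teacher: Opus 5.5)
Your proposal is correct and follows the paper's own proof essentially step for step. Proposition~\ref{prop:odd-order} supplies an odd channel with $\rho_A>0$. Proposition~\ref{prop:odd-to-site}, with $\overline W$ chosen large first and $C$ then fixed, transfers this to $\E_\nu Q^2>0$. Nonnegativity of $\E_\nu[q_0q_x]$, together with the ergodic Ces\`aro identity, gives $\chi_{\rm SG}=\infty$. Finally, the appendix pressure argument is rerun with $A=\{0\}$. Your concern about that last step resolves as you suspected: the argument uses only $\rho_A>0$ (through $a=\operatorname*{ess\,sup}_\nu Q_A\ge\sqrt{\rho_A}$ and the replica-flip symmetry of $Q_A$) and finite mean, not the imbalance hypothesis itself.
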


\begin{proof}
Proposition~\ref{prop:odd-order} gives an odd $A$ with $\rho_A>0$; taking
$\overline W$ large in Proposition~\ref{prop:odd-to-site} gives
$\E_\nu Q^2>0$. The terms $\E_\nu[q_0q_x]$ are nonnegative and their spatial
Ces\`aro averages converge to $\E_\nu Q^2$, as in \eqref{eq:odd-order}, so
$\chi_{\rm SG}(\nu)=\infty$. The pressure bound is the argument of
Appendix~\ref{app:odd-order} with $A=\{0\}$.
\end{proof}

These are statements about the selected periodic ensemble. They do not assert
that $Q\ne0$ on the event $\{D_+\ne D_-\}$, nor the statewise identity
\eqref{eq:overlapdecomposition} with $F_{\mathrm{fin}}=0$.

Proving persistent imbalance would therefore close this route to
Gibbs-state nonuniqueness, to ordinary spatial-overlap order and to a
singularity of the ordinary replica-coupling pressure. Its proof still
requires a new estimate that distinguishes the two macroscopic sectors; the
present lower comparisons treat them symmetrically.
Figure~\ref{fig:physics-gap} summarizes these implications.

\begin{figure}[!b]
\centering
\begin{adjustbox}{max width=\linewidth}
\begin{tikzpicture}[>=Latex,
 box/.style={draw,rounded corners,align=center,font=\small,text width=3.55cm,minimum height=1.15cm}]
\node[box,fill=blue!5] (exist) at (0,0)
 {Two infinite blue clusters\\One in each overlap sector};
\node[box,fill=orange!8] (imb) at (4.65,0)
 {Persistent imbalance\\$\nu(D_+\ne D_-)>0$\\Open estimate};
\node[box,fill=gray!8] (gibbs) at (9.3,0)
 {Distinct Gibbs states\\related by spin flip\\Proposition~\ref{prop:imbalance}};
\node[box,fill=gray!8] (overlap) at (9.3,-2.8)
 {Ordinary overlap order\\$\E_\nu Q^2>0$, $\chi_{\rm SG}=\infty$\\Corollary~\ref{cor:imbalance-overlap}};
\draw[->,thick,dashed] (exist)--(imb);
\draw[->,thick] (imb)--(gibbs);
\draw[->,thick] (imb)--(overlap);
\node[align=center,font=\small,text width=3.7cm] at (0,-2.8)
 {Percolation with $Q=0$\\and finite $\chi_{\rm SG}$\\Corollary~\ref{cor:separation}};
\end{tikzpicture}
\end{adjustbox}
\caption{The two remaining physical questions have different hypotheses.
The exact-two interpretation assumes no atom at zero and uses
Pei \cite[Theorem 1.9(b)]{Pei2026}.
Dashed arrows require unproved input; solid arrows are established implications.
Imbalance alone suffices for Gibbs-state nonuniqueness and, in the selected
periodic ensemble, for ordinary overlap order with a pressure cusp
(Propositions~\ref{prop:odd-order} and~\ref{prop:odd-to-site}). Neither
implication needs control of the signed finite-blue contribution; that
control is needed only to turn \eqref{eq:overlapdecomposition} into a
statewise conclusion, and for the converse direction.}
\label{fig:physics-gap}
\end{figure}

\subsection{Overlap order and the statewise density identity}

Corollary~\ref{cor:imbalance-overlap} gives ordinary overlap order from
imbalance at the level of the selected ensemble. A statewise identification
of the overlap with the blue imbalance needs more. In a translation-ergodic
component of a stationary joint law, define
\[
 Q=\E[q_0],\qquad
 \theta_s=\Pp(q_0=s,\ |C_{\mathrm b}(0)|=\infty),\qquad
 F_{\mathrm{fin}}=\E[q_0\mathbf1_{\{|C_{\mathrm b}(0)|<\infty\}}].
\]
These expectations are taken in that component. The spatial overlap mean and
the densities of vertices in infinite blue components satisfy the exact identity
\begin{equation}\label{eq:overlapdecomposition}
 Q=\theta_+-\theta_-+F_{\mathrm{fin}}.
\end{equation}
Under the preceding exactly-two conclusion, $\theta_s$ is the density of the
unique infinite blue component in sector $s$.
A strict inequality $|F_{\mathrm{fin}}|<|\theta_+-\theta_-|$ would give
$Q\ne0$ in that component, and a bound preventing $F_{\mathrm{fin}}$ from
carrying the whole overlap would give the converse implication, from overlap
order to imbalance. The density estimates above bound the possible finite
contribution but supply neither inequality.

The term $F_{\mathrm{fin}}$ cannot be discarded by assigning independent signs
to finite blue clusters. In the full CMR representation, red bonds join
opposite-overlap blue components. Overlap signs are constrained by red-bond parity
within each connected grey component (blue union red), as in the signed
connectivity formula of \cite[Section II]{MNS2007}. Finite blue components can
belong to an infinite grey component and have correlated contributions.

Finite grey components, by contrast, cancel. Work in a selected limit
carrying the standard red bonds, as in the discussion of
Proposition~\ref{prop:imbalance}, and write $C_{\rm g}(x)$ for the grey
component of $x$. For each finite $K\ni x$, flipping one replica on $K$ on
the cylinder event $\{C_{\rm g}(x)=K\}$ preserves every torus law containing
$K$: the edges leaving $K$ are closed, a closed edge has the same weight in
every spin configuration, and edges inside $K$ keep their spin products. This
map acts on finitely many spins on a clopen event, so it also preserves the
selected limit, and flipping one replica on $C_{\rm g}(x)$ whenever it is finite
is the countable disjoint union of these maps. It reverses
$g_x=q_x\mathbf1\{|C_{\rm g}(x)|<\infty\}$ and leaves $g_y$ unchanged for
$y\notin C_{\rm g}(x)$, so $\E[g_xg_y\mathbf1\{y\notin C_{\rm g}(x)\}]=0$. Hence
\[
 \E\Bigl(\frac1{|\Lambda|}\sum_{x\in\Lambda}g_x\Bigr)^2
 \le\frac1{|\Lambda|}\E\bigl[\min(|C_{\rm g}(0)|,|\Lambda|);\,
      |C_{\rm g}(0)|<\infty\bigr]\longrightarrow0,
\]
and the spatial density of $g$ vanishes almost surely. Put
$S_I=\E\bigl[q_0;\,|C_{\rm b}(0)|<\infty,\ |C_{\rm g}(0)|=\infty\bigr]$, the
signed density of finite blue components lying in infinite grey components.
Therefore, in almost every translation-ergodic component of this
red-augmented limit, $F_{\mathrm{fin}}=S_I$ and \eqref{eq:overlapdecomposition}
sharpens to $Q=\theta_+-\theta_-+S_I$.
A useful question is therefore whether the iid Edwards--Anderson law admits a
bound preventing this signed finite-cluster contribution from cancelling an
infinite-cluster imbalance. Exact vanishing is a stronger requirement than
\eqref{eq:overlapdecomposition} needs.

Appendix~\ref{app:limitations} records the signed-grey response identities,
alternative representations and the limitations of the current comparison in
low dimension. Those calculations identify what additional estimates would be
needed, but do not supply the missing imbalance or noncancellation bound.

\subsection{The remaining transition target}\label{sec:future}

The existence theorem resolves the blue-percolation part of this program in
high finite dimensions. The physical estimates make its relation to ordering
more precise: an interval above the first blue onset has exactly balanced
infinite sectors and finite susceptibility (Corollaries~\ref{cor:separation}
and~\ref{cor:exact-balance}), and near the onset every fixed odd local channel
has finite susceptibility as well (Proposition~\ref{prop:sg-bound}).
Persistent imbalance would break this balance, give spin-flip-related Gibbs
states, and force ordinary overlap order with a cusp of the ordinary
replica-coupling pressure.
These are conditional consequences, not an established ordered phase.

\begin{hypothesis}[Persistent imbalance at a fixed positive temperature]\label{hyp:imbalance}
For unit symmetric iid couplings there exist a finite $d$, a finite
$\beta>0$, and $\nu\in\mathcal L_{d,\beta}$ such that
$\E_\nu(D_+-D_-)^2>0$.
\end{hypothesis}

Proposition~\ref{prop:imbalance} and Corollary~\ref{cor:imbalance-overlap}
state exactly what this hypothesis would settle. Neither suppression of a
second infinite cluster nor a separate cancellation theorem for finite blue
clusters is required.

The present method also locates a region where the hypothesis is false:
Corollary~\ref{cor:exact-balance} excludes every parameter with
$F(2\beta)^{2d-1}<3$.
In high dimensions this includes a nontrivial interval on the
$\beta\asymp d^{-1/2}$ scale, well above the blue onset at order $d^{-1}$.
Failure of this sufficient balance criterion does not establish imbalance.
A promising new input would control normalized overlap-interface weights or
retain a nonvanishing odd boundary signal at fixed temperature and arbitrarily
large scales. The opening estimates of this paper do neither: they guarantee
growth symmetrically in the two sectors. This is the principal unresolved ordering
step, separate from improving the explicit dimension of geometric percolation.

\appendix
\section{Certified constants for the explicit dimensions}\label{app:certificates}

This appendix collects the finite checks behind Theorem~\ref{thm:explicit}.

\subsection{Dimensions nine to twelve: the oriented engine}\label{sec:ov-constants}

Table~\ref{tab:overlap-constants} lists the nine certified rows in dimensions
$10$ to $12$, and Table~\ref{tab:ov-noisy} the five rows in dimension $9$. For
each row of the first table the following finitely many inequalities were
decided exactly.
\begin{itemize}
\item[(C1)] $\Delta_j\le0$ for $j=0,\ldots,2d-1$ in Proposition~\ref{prop:ov-vertex},
with $\underline r=p/p_A$; hence $p\le p_B$.
\item[(C2)] The Holley line \eqref{eq:ov-holley-line} for $k=0,\ldots,2d$.
\item[(C3)] $g_K\ge1$, $g_h<1$, $g_Kg_h\le1$ and $g_K^{2d}<2718/1000<e$; that
is, $K'\ge0$, $h<0$, $K'\le|h|$, and $4dK'<1$, which is stronger than $2dK'<1$.
\item[(C4)] The mean-field test of Lemma~\ref{lem:ov-meanfield} for a rational
$\bar m$, giving $\rho_-=(1-\bar m)/2$.
\item[(C5)] $\lambda_*<1$ for a rational $\bar t\ge\tanh K'=(g_K-1)/(g_K+1)$.
One program takes $\bar t=(g_K-1)/(g_K+1)$ exactly, the other a rational upper
bound for $K'$.
\item[(C6)] $\eta'<1$; moreover $\mathcal S<1$ in dimensions $11$ and $12$, and
$\rho_-^2\ge c_{\rm cov}/4$ and $\mathcal S_c<1$ in dimension $10$.
\end{itemize}
Together with $d\ge6$, these are exactly the hypotheses of
Propositions~\ref{prop:ov-vertex} and~\ref{prop:ov-holley-line},
Lemma~\ref{lem:ov-meanfield} and Theorem~\ref{thm:ov-criterion}.

The rows of Table~\ref{tab:ov-noisy} replace (C1) and (C2) by the sharpened
local inputs of Section~\ref{sec:ov-noisy}.
\begin{itemize}
\item[(N1)] A rational $w'_c$ with $e^{2\beta_c}\le w'_c\le e^{2\beta}$, which
fixes $\beta'=\frac12\log w'_c\in[\beta_c,\beta]$; $p_A/2<p<\tanh2\beta$; and
rationals $\psi_U(x)$ satisfying \eqref{eq:ov-chord} and
\eqref{eq:ov-class-sums}. By Lemma~\ref{lem:ov-pair}, every conditional
opening probability is then at least $p$.
\item[(N2)] For $k=0,\ldots,2d$, tangent points with $\Gamma_k(l_+,l_-)\ge0$
in every class, where $\Lambda'_k=g_hg_K^SC_\beta^{-S}$; this is the hypothesis
of Corollary~\ref{cor:ov-sym-holley}.
\end{itemize}
Conditions (C3)--(C5) are as above, and (C6) becomes: $\eta'<1$; at
$t\ne7/50$, $\rho_-^2\ge c_{\rm cov}/4$ and $\mathcal S_c<1$; at $t=7/50$,
$2dK'\le0.35$, so that $\alpha_{\rm D}\le0.35$, together with
$\rho_-^2\ge c'_{\rm cov}/4$ and $\mathcal S'_c<1$. With $d\ge6$
and $L\ge40$, which Corollary~\ref{cor:ov-uniform} provides, these are the
hypotheses of Lemmas~\ref{lem:ov-pair}, \ref{lem:ov-meanfield}
and~\ref{lem:ov-covariance}, Corollary~\ref{cor:ov-sym-holley} and
Theorem~\ref{thm:ov-criterion}.

\begin{table}[!htbp]
\centering
\begin{adjustbox}{max width=\linewidth}
\begin{tabular}{cc|ccc|ccccc}
\hline
$d$ & $t$ & $p$ & $g_K$ & $g_h$ & $\rho_-\ge$ & $\eta'\le$ &
$\mathcal S\le$ & $\mathcal S_c\le$ & $\theta_*\ge$\\ \hline
$12$ & $3/25$   & $0.2161$ & $1.02634$  & $0.83527$  & $0.4351284$ & $0.168984$ & $0.907068$ & $0.897470$ & $0.036292$\\
$12$ & $11/100$ & $0.2023$ & $1.022308$ & $0.873143$ & $0.4540507$ & $0.127091$ & $0.926888$ & $0.918156$ & $0.031296$\\
$12$ & $23/200$ & $0.2094$ & $1.024283$ & $0.855173$ & $0.4453800$ & $0.146071$ & $0.913716$ & $0.904575$ & $0.035441$\\
$12$ & $1/8$    & $0.2224$ & $1.028469$ & $0.815686$ & $0.4240888$ & $0.196297$ & $0.905149$ & $0.895074$ & $0.034915$\\
$12$ & $13/100$ & $0.2283$ & $1.030678$ & $0.794291$ & $0.4111645$ & $0.230516$ & $0.910357$ & $0.899716$ & $0.030630$\\ \hline
$11$ & $13/100$ & $0.2310$ & $1.03088$  & $0.817696$ & $0.4252374$ & $0.215864$ & $0.952104$ & $0.940625$ & $0.017392$\\
$11$ & $3/25$   & $0.2180$ & $1.026502$ & $0.855434$ & $0.4454995$ & $0.162436$ & $0.960915$ & $0.950461$ & $0.015882$\\ \hline
$10$ & $27/200$ & $0.2399$ & $1.033426$ & $0.822472$ & $0.4279573$ & $0.236224$ & $1.007739$ & $0.994576$ & $0.001950$\\
$10$ & $13/100$ & $0.2336$ & $1.031101$ & $0.840843$ & $0.4379919$ & $0.205637$ & $1.009832$ & $0.997261$ & $0.001048$\\ \hline
\end{tabular}
\end{adjustbox}
\caption{Certified constants for Theorem~\ref{thm:explicit}, with
$\beta=\operatorname{atanh}t$. The inputs $p\le p_B$, $g_K=e^{2K'}$ and
$g_h=e^{2h}$ are exact rationals, written here as terminating decimals. The
columns $\rho_-$, $\eta'$, $\mathcal S$ and $\mathcal S_c$ are the density bound
of Lemma~\ref{lem:ov-meanfield}, the constant of Lemma~\ref{lem:ov-cycle}, and
the scores of Theorem~\ref{thm:ov-criterion} with $\rho_e=\rho_-$ and
$\rho_e=\rho_c$. The last column is \eqref{eq:ov-theta}, computed with
$\mathcal S$ in dimensions $12$ and $11$ and with $\mathcal S_c$ in
dimension~$10$, where the bound on $\mathcal S$ exceeds $1$. Each computed
entry is the less favorable of two independent evaluations; lower bounds are
rounded down and upper bounds up.}
\label{tab:overlap-constants}
\end{table}

\begin{table}[!htbp]
\centering
\begin{adjustbox}{max width=\linewidth}
\begin{tabular}{cc|ccc|ccccc}
\hline
$d$ & $t$ & $p$ & $g_K$ & $g_h$ &
$\rho_-\ge$ & $\eta'\le$ & $\mathcal S_c\le$ & $\mathcal S'_c\le$ & $\theta_*\ge$\\ \hline
$9$ & $7/50$   & $0.2544$ & $1.034171$ & $0.864557$ & $0.4481119$ & $0.2244019$ & $1.0033446$ & $0.9975456$ & $0.000950$\\
$9$ & $29/200$ & $0.2618$ & $1.0365$   & $0.848784$ & $0.4399191$ & $0.2538402$ & $0.9945159$ & $0.9875702$ & $0.002004$\\
$9$ & $3/20$   & $0.2689$ & $1.038887$ & $0.831941$ & $0.4306354$ & $0.2886861$ & $0.9906982$ & $0.9824125$ & $0.003173$\\
$9$ & $31/200$ & $0.2759$ & $1.041383$ & $0.814648$ & $0.4203748$ & $0.3309161$ & $0.9910479$ & --- & $0.002804$\\
$9$ & $4/25$   & $0.2828$ & $1.043938$ & $0.796248$ & $0.4087636$ & $0.3827874$ & $0.9967978$ & --- & $0.000899$\\ \hline
\end{tabular}
\end{adjustbox}
\caption{Certified constants for Theorem~\ref{thm:explicit} in dimension $9$,
with the sharpened local inputs of Section~\ref{sec:ov-noisy}. The inputs $p$,
$g_K$ and $g_h$ are exact rationals, written as terminating decimals; $p$ is a
lower bound for every conditional opening probability
(Lemma~\ref{lem:ov-pair}), although it exceeds $p_B$. The other columns are as in Table~\ref{tab:overlap-constants}, with $\mathcal S'_c$ the score for
$\rho_e=\rho'_c$ of Lemma~\ref{lem:ov-covariance}, which needs
$\alpha_{\rm D}\le0.35$ and does not apply in the last two rows. The last column
is \eqref{eq:ov-theta}, computed with $\mathcal S'_c$ at $t=7/50$ and with
$\mathcal S_c$ otherwise. The computed upper bounds on the basic score
$\mathcal S$ exceed $1$ in every row. Each global entry is the less favorable of
two independent evaluations, rounded in the safe direction.}
\label{tab:ov-noisy}
\end{table}

\paragraph{How the constants are computed.}
All algebraic quantities of the local certificates are exact rationals when
$t$ is rational, and (C1)--(C4) are decided in exact rational or integer
arithmetic. The global criterion also involves $K'=\frac12\log g_K$,
$|h|=-\frac12\log g_h$, exponentials, $\sqrt d$ and $\pi$. These finitely many
irrational numbers are replaced by rigorous rational enclosures. The collision
probabilities $u_k$ are computed exactly for $k<4d$, and the tail
$\sum_{k\ge4d}u_k$ is bounded through Lemma~\ref{lem:ov-green}(ii) and
$\sum_{a\ge A}a^{-s}\le A^{-s}+A^{1-s}/(s-1)$; the tail of $G^{(2)}_d$ is
bounded in the same way. For instance,
\[
 G_{12}\in[1.1011537095,\,1.1011552715],\qquad G^{(2)}_{12}\le1.2272959943,
\]
and $F_{12}-\frac1{12}\le0.0085295258$, $F_{11}-\frac1{11}\le0.0104053880$ and
$F_{10}-\frac1{10}\le0.0130075189$. The values $\mathcal B(m)$ and
$|\mathbb A_r|$ are exact rationals. The sums
$\bar\psi(r)$ and $\Sigma U$ are bounded by exact finite sums, with $m$ up to
$86$--$124$ and $r$ up to $40$--$60$ depending on the program, plus explicit
geometric tails from Lemma~\ref{lem:ov-resolvent}(iv) and
$|\mathbb A_r|\le\binom{r+d-1}{d-1}^2$.

The computations are small. For dimension $d$, (C1) consists of $2d$ double sums
of at most $d(d+1)$ rational terms. For (C2), each of the $2d+1$ environments
needs the $k+1$ double sums $\widetilde X_k(j)$ of Lemma~\ref{lem:ov-numerator}
and one denominator sum, followed by one comparison. (C4) is a single
inequality between rationals. At $d=12$ this amounts to $24$ sums of at most
$156$ terms, $325$ numerator sums of at most $169$ terms and $25$ denominator
sums; each row takes between one and thirty seconds. These are closed-form
finite sums, not a search over configurations.

Every inequality was checked by two independent programs. The second was
written from the mathematical statements alone, without sharing code with the
first. It organizes the floor sums as convolutions of binomial laws, computes
the collision probabilities from a partition sum rather than from polynomial
powers, and chooses its own tangent points and line weights. Both programs
certify every row, and their score bounds agree to about $10^{-6}$.
The programs are described under Computational reproducibility.

For the rows of Table~\ref{tab:ov-noisy}, (N1) consists of $162$ point
conditions, $1377$ chord polynomials of degree $5$ (nine values of $U$ and
$153$ pairs $x,x'$) and $18$ class sums, and (N2) of $1330$ two-block classes
in the $19$ environments, at most $100$ per environment. Two independently
written programs decide (N1) and (N2) exactly. The first decides the chord
conditions by Bernstein coefficients with subdivision and Sturm sequences, and
computes the class sums and the classes of (N2) from binomial counts. The
second, written from the statements alone, uses its own Sturm sequences,
obtains the class sums by enumerating all $2^{17}$ sign vectors, and counts the
classes of (N2) by enumerating the sign patterns on $N_+$ and on $N_-$. On a
laptop each takes under two minutes for all five temperatures. In the order of Table~\ref{tab:ov-noisy},
the largest class sums $\max_j\Delta^{\rm P}_j$ are at most $-54.61$, $-30.89$,
$-41.55$, $-43.99$ and $-28.83$, and the smallest class values
$\min\Gamma_k$ over all environments are at least $9.09\times10^{-5}$,
$7.40\times10^{-5}$, $8.60\times10^{-5}$, $8.28\times10^{-5}$ and
$8.21\times10^{-5}$; these Holley margins are exact but small. The global
constants are evaluated by the two programs above, and $c'_{\rm cov}$ and
$\mathcal S'_c$ by two independent implementations of
Lemma~\ref{lem:ov-covariance}, which agree to $4\times10^{-6}$. At $t=3/20$, for instance, $w'_c=674054518843/(5\cdot10^{11})$.
The data $w'_c$, $\psi_U(x)$, tangent points and line weights were chosen by
floating-point linear programs, which play no role in the verification.

At $d=12$, $t=3/25$ the three terms of $\mathcal S$ are at most $0.886231$,
$0.019603$ and $0.001235$, while $\kappa\le1.29818$ and
$\alpha_{\rm D}\le0.311972$. The certified margins $1-\mathcal S$ range from
$7.31\%$ to $9.48\%$ in dimension $12$, and are $4.78\%$ and $3.90\%$ in
dimension $11$. In dimension $10$ the basic criterion certifies neither row:
its upper bounds are
$1.0078$ and $1.0099$. The shared-edge gain $\rho_c-\rho_-\approx0.006$ of
Lemma~\ref{lem:ov-shared} brings $\mathcal S_c$ below $1$, with margins $0.54\%$
at $t=27/200$ and $0.27\%$ at $t=13/100$. These are exact inequalities. At
$d=10$, $t=3/25$ the computed upper bound on $\mathcal S_c$ is $1.0193>1$, and
that point is not certified.

In dimension $9$ the basic criterion certifies no row of
Table~\ref{tab:ov-noisy}. The shared-edge refinement certifies four rows, with
margins $1-\mathcal S_c$ between $0.32\%$ and $0.93\%$. At $t=3/20$ the three
terms of $\mathcal S_c$ are at most $0.9450117$, $0.0390500$ and $0.0066367$,
and $9\rho_cp\ge1.0581879$. At $t=7/50$ only the covariance refinement of
Lemma~\ref{lem:ov-covariance} certifies, with margin $0.24\%$; the same
refinement gives the larger constants $\theta_*\ge0.004544$ at $t=29/200$ and
$\theta_*\ge0.006000$ at $t=3/20$. At $t=3/25$, $13/100$ and $17/100$ the
computed upper bounds on $\mathcal S_c$ exceed $1.03$, and these points are not
certified.

\subsection{Dimensions seven to nine: the macrostep engine}\label{sec:ms-certificates}

For each row of Table~\ref{tab:ms-local} the following were decided in exact
rational arithmetic.
\begin{itemize}
\item[(V0)] The local inputs: (C1)--(C2) for the rows with the inputs of
Sections~\ref{sec:ov-floor}--\ref{sec:ov-holley}, and (N1)--(N2) for the rows
with the inputs of Section~\ref{sec:ov-noisy}, by the program that decides them
for Table~\ref{tab:overlap-constants}, respectively the first program of
Section~\ref{sec:ov-constants}; the second program of Section~\ref{sec:ov-constants}
is written for $d=9$ and is not run at these rows.
\item[(V1)] Conditions (C3)--(C4) and $g_K>1$, which give hypothesis (M1) of
Theorem~\ref{thm:ms-engine}.
\item[(V2)] $\lambda_*<1$ with $t'=(g_K-1)/(g_K+1)$,
$\rho_-^2\ge c_{\rm cov}/4$ and $\rho_cp<\rho_-$, with a rational lower bound for
$c_{\rm cov}$; this is hypothesis (M2).
\item[(V3)] For each near set of Table~\ref{tab:ms-certificates}, conditions
(i) and (ii) of Theorem~\ref{thm:ms-criterion}, with $\lambda<1$, for two
certificates: one with the smallest $\lambda$ found, and one maximizing
$\theta_*$. This gives hypothesis (M3).
\end{itemize}

\begin{table}[!htbp]
\centering
\begin{adjustbox}{max width=\linewidth}
\begin{tabular}{cc|c|cccc|ccc}
\hline
$d$ & $t$ & Section & $p$ & $g_K$ & $g_h$ & $\bar m$ & $\rho_-\ge$ &
$\rho_c\ge$ & $C_{\rm fin}\le$\\ \hline
$9$ & $7/50$   & \ref{sec:ov-floor}--\ref{sec:ov-holley} & $0.2492$ & $1.036137$ & $0.83033$  & $0.1353842$ & $0.4323079$ & $0.4387813$ & $2.3700$\\
$8$ & $3/20$   & \ref{sec:ov-floor}--\ref{sec:ov-holley} & $0.2651$ & $1.041622$ & $0.824956$ & $0.1413829$ & $0.4293085$ & $0.4366859$ & $2.3966$\\
$8$ & $3/20$   & \ref{sec:ov-noisy} & $0.2707$ & $1.039407$ & $0.860671$ & $0.1079894$ & $0.4460053$ & $0.4532042$ & $2.2991$\\
$7$ & $31/200$ & \ref{sec:ov-noisy} & $0.2802$ & $1.04253$  & $0.876345$ & $0.0927806$ & $0.4536097$ & $0.4616630$ & $2.2632$\\
$7$ & $3/20$   & \ref{sec:ov-noisy} & $0.273$  & $1.039978$ & $0.88898$  & $0.0808479$ & $0.4595760$ & $0.4673814$ & $2.2288$\\ \hline
\end{tabular}
\end{adjustbox}
\caption{The rows of Section~\ref{sec:macrostep}, with $d=f+3$ and
$\beta=\operatorname{atanh}t$. The inputs $p$, $g_K=e^{2K'}$, $g_h=e^{2h}$ and
$\bar m$ are exact rationals, written as terminating decimals; the local inputs
come from the sections named in the third column. The last three columns are
rigorous bounds on the density $\rho_-=(1-\bar m)/2$, the shared-edge density
$\rho_c$ of Lemma~\ref{lem:ov-shared} and $C_{\rm fin}$ of \eqref{eq:ms-theta},
rounded in the safe direction. For instance
$\kappa=(1-\rho_-)/\rho_-\le1.313167$ and
$\alpha_{\rm D}=2d\tanh K'\le0.31947$ at $d=9$, and $\kappa\le1.204539$ and
$\alpha_{\rm D}\le0.29152$ at $(7,31/200)$.}
\label{tab:ms-local}
\end{table}

\begin{table}[!htbp]
\centering
\begin{adjustbox}{max width=\linewidth}
\begin{tabular}{cc|c|cc|c|cc|cc}
\hline
$d$ & $t$ & Section & $(2R_D,R_A)$ & orbits &
$\eta_F\le$ & $\lambda\le$ & $\theta_*\ge$ & $\lambda\le$ & $\theta_*\ge$\\
 & & & & & & \multicolumn{2}{c|}{smallest $\lambda$} &
 \multicolumn{2}{c}{largest $\theta_*$}\\ \hline
$9$ & $7/50$   & \ref{sec:ov-floor}--\ref{sec:ov-holley} & $(4,8)$ & $246$ & $0.003728$ & $0.779215$ & $5.67\times10^{-4}$ & $0.867129$ & $0.001908$\\
$8$ & $3/20$   & \ref{sec:ov-floor}--\ref{sec:ov-holley} & $(4,8)$ & $246$ & $0.009564$ & $0.892512$ & $5.39\times10^{-4}$ & $0.924459$ & $6.60\times10^{-4}$\\
$8$ & $3/20$   & \ref{sec:ov-floor}--\ref{sec:ov-holley} & $(2,8)$ & $82$  & $0.020519$ & $0.934027$ & $6.44\times10^{-4}$ & \multicolumn{2}{c}{the same}\\
$8$ & $3/20$   & \ref{sec:ov-floor}--\ref{sec:ov-holley} & $(2,6)$ & $46$  & $0.028310$ & $0.962154$ & $4.83\times10^{-4}$ & \multicolumn{2}{c}{the same}\\
$8$ & $3/20$   & \ref{sec:ov-noisy} & $(2,6)$ & $46$ & $0.022690$ & $0.858694$ & $1.884\times10^{-3}$ & $0.872724$ & $0.001937$\\
$7$ & $31/200$ & \ref{sec:ov-noisy} & $(6,10)$ & $804$ & $0.005789$ & $0.924672$ & $2.65\times10^{-4}$ & $0.946970$ & $3.26\times10^{-4}$\\
$7$ & $3/20$   & \ref{sec:ov-noisy} & $(6,10)$ & $804$ & $0.004217$ & $0.935276$ & $1.71\times10^{-4}$ & $0.954393$ & $2.27\times10^{-4}$\\ \hline
\end{tabular}
\end{adjustbox}
\caption{Certificates for Theorem~\ref{thm:ms-criterion}, with $c=3$, $f=d-3$,
word weight $y_{\rm w}=1/10$ for $d=9,8$ and $y_{\rm w}=3/25$ for $d=7$, and
near sets $C'=\{|D|_1\le2R_D,\ |A|_1\le R_A\}$, whose numbers of
$\mathfrak G$-orbits are listed. For each near set two certificates were
decided: the one with the smallest $\lambda$ found, and the one maximizing
$\theta_*$ of \eqref{eq:ms-theta}, which has larger $\lambda$ and smaller
$\mathsf H$; in two rows they coincide. Upper bounds are rounded up and lower
bounds down, and each entry is the less favorable of two independent programs.
The $\theta_*$ of Table~\ref{tab:explicit} is the largest one of each row of
Table~\ref{tab:ms-local}.}
\label{tab:ms-certificates}
\end{table}

\paragraph{What is computed.}
For a near set $C'=\{|D|_1\le2R_D,\ |A|_1\le R_A\}$, reduced to
$\mathfrak G$-orbits, the programs compute the following.
\begin{enumerate}
\item \emph{Near kernel.} For each orbit representative $z\in C'$ and each of
the $163^2f^2$ pairs $\boldsymbol\omega$, the counts $N_V$, $N_E$, $N_{\rm f}$, the
functional $\mathfrak b$ and an upper bound on $\mathsf W-1$; the masses
$\pi_{\rm w}(\xi)\pi_{\rm w}(\xi')f^{-2}(\mathsf W-1)$ are aggregated by the
orbit of the next state. This gives upper bounds on $\mathsf Q(z,\cdot)$ and
$\bar\eta(z)$.
\item \emph{Green bounds.} $\mathsf G(z'',z')\le\sum_{n\ge0}u_n(D'-D'')l_n(A'-A'')$,
where $l_n$ is the law of the lateral difference walk at time $n$. The forward
$u_n$ are exact rationals for $n\le N_0=24$, the $l_n$ are upper bounds from a
direct convolution, and the tail $n>N_0$ is at most $l_{N_0}(0)T_u(N_0)$ per
pair of states, since $l_n(a)\le l_n(0)\le l_{N_0}(0)$ for $n\ge N_0$ by the
Cauchy--Schwarz and Young inequalities.
\item \emph{Matrix.} $\bar{\mathsf M}$ on $C'$-orbits from the near kernel and
the Green bounds, and $\Gamma_w$ from the near rows of the Green bounds.
\item \emph{Far region.} The bound $\bar\upsilon$ from
Lemma~\ref{lem:ms-far}: exact kernel averages on a band of lateral distances
beyond $C'$, part (b) beyond the band, and parts (c)--(d) for
$|D|_1>2R_S=8$. Then $\eta_F$ from Lemma~\ref{lem:ms-khas}, with exact $u_n$ up to
$N_F=60$. The tails $T_u(N)$ are exact partial sums up to $40f-1$ plus the
bound of Lemma~\ref{lem:ov-green}(ii) with $f$ in place of $d$, summed as in
Section~\ref{sec:ov-constants}.
\item \emph{Certificate.} A floating-point search takes
$w=(\lambda_0-\bar{\mathsf M})^{-1}(\bar\eta+\delta\mathbf1)$ for small $\delta>0$
and $\mathsf H=\Gamma_w/(\lambda-\eta_F)$ and bisects in $\lambda_0$; then (i) and
(ii) are decided in exact rational arithmetic, reading every binary64 number
as the dyadic rational it represents.
\end{enumerate}
With this choice of $w$, condition (i) roughly requires $\mathsf H\eta_F\le1$,
while (ii) forces $\mathsf H\ge\Gamma_w/(\lambda-\eta_F)$. So $\lambda$ must
exceed both the Perron root of $\bar{\mathsf M}$ and, roughly,
$\eta_F(1+\Gamma_w)$: the far region binds, and larger near sets lower
$\eta_F$.

\paragraph{Sizes and arithmetic.}
Per near state there are $163^2f^2$ pairs of macrosteps: $956{,}484$ at $d=9$,
$664{,}225$ at $d=8$ and $425{,}104$ at $d=7$. The near kernels therefore sum
about $2.35\times10^8$ terms at $d=9$ ($246$ orbits), $3.06\times10^7$ at $d=8$
($46$ orbits) and $3.42\times10^8$ at $d=7$ ($804$ orbits), and the far bands
about $10^8$ to $6\times10^8$ more. Only the $23$ orbits with $D=0$ and
$|A|_1\le6$ carry intersections of the two paths; the other rows carry only the
Ising boost. The tables $\mathcal B$ and $\mathcal T_1,\mathcal T_2,\mathcal T_3$
are exact rationals or exact dyadic upper bounds. The kernel sums are computed
in binary64 arithmetic by two independently written programs. The
\emph{directed-rounding program} rounds every operation upward. The \emph{a
priori error program} uses round-to-nearest arithmetic with an a priori forward
error analysis and inflates each aggregated quantity by a factor exceeding the
accumulated relative error. Exponentials are bounded by Taylor polynomials with
explicit remainders; the exponents $\kappa\mathfrak b$ that occur are at most
$0.44$ in every certified run, inside the range $[0,1]$ of these enclosures.
Both programs certify every row. At the near sets with the first local inputs,
their near matrices, $\bar\eta$ and near Green rows agree entrywise within a
relative $3\times10^{-8}$, those of the a priori error program being slightly
larger, as its error margins require. On a workstation a run takes between
half a minute and fifteen minutes. This is a moderate computer-assisted
certificate, not a criterion in a few scalars.

The local inputs of the rows are small by comparison. At $(9,7/50)$ and
$(8,3/20)$ with the inputs of Sections~\ref{sec:ov-floor}--\ref{sec:ov-holley},
the largest $\Delta_j$ of Proposition~\ref{prop:ov-vertex} are at most
$-2.306\times10^{-4}$ and $-1.760\times10^{-4}$, and the logarithmic slack in
\eqref{eq:ov-holley-line} is at least $0.001005$ and $0.000980$ in all
environments. With the sharpened inputs, (N1) involves $960$ chord polynomials
at $d=8$ and $637$ at $d=7$, with largest class sums at most $-13.94$ at
$(8,3/20)$, $-3.08$ at $(7,31/200)$ and $-2.80$ at $(7,3/20)$, and the smallest
class values $\Gamma_k$ are at least $7.49\times10^{-5}$, $9.33\times10^{-5}$
and $9.31\times10^{-5}$. At these three points the floors $p=0.2707$, $0.2802$
and $0.273$ exceed the aligned frozen values with $\beta$ inside $R$, which are
at most $0.270378$, $0.279462$ and $0.272179$, so no floor that is uniform over
$\mathcal N_{2d}$ could replace them (Remark~\ref{rem:ov-nesting}).

The margins $1-\lambda$ of the minimal certificates are about $22\%$ at
$(9,7/50)$, $10.7\%$ at $(8,3/20)$ with the first inputs and $14\%$ with the
sharpened ones, and $7.5\%$ and $6.4\%$ at $(7,31/200)$ and $(7,3/20)$. The
constants $\theta_*$ are small; their only role is to be positive.

\begin{remark}[Which inputs each row needs]\label{rem:ms-inputs}
The rows with the inputs of Sections~\ref{sec:ov-floor}--\ref{sec:ov-holley} do
not use Section~\ref{sec:ov-noisy}. They also certify with $\rho_c$ replaced by
$\rho_-$ in $\mathsf W$, with $\lambda\le0.809296$ at $d=9$ and
$\lambda\le0.916404$ at $d=8$ on the $246$-orbit near sets, and with the smaller
floors $p=0.22$ at $d=9$ ($82$ orbits, $\lambda\le0.981408$) and $p=0.255$ at
$d=8$ ($246$ orbits, $\lambda\le0.948297$). Dimension $7$ is certified only
with the sharpened inputs. With local inputs of the kind of
Sections~\ref{sec:ov-floor}--\ref{sec:ov-holley} at $t=3/20$, the computed near
matrix $\bar{\mathsf M}$ of the $804$-orbit near set has Perron root at least
$1.004280$, by an exact Collatz--Wielandt enclosure, so condition (i) fails for
every $w>0$ and $\lambda<1$. At $t=4/25$ its root is below $1$, but the far region is too large
($\eta_F\le0.0113$) and the search found no certificate. Both statements
concern the computed upper bounds, not the exact kernel of the path family.
\end{remark}

\FloatBarrier

\section{Application to finite-mean Edwards--Anderson disorder}\label{sec:finite-mean}

Assume only $0<\E W<\infty$. The exponential restoration moment may now
be infinite. We avoid this cost by skipping targets incident to a very large
coupling, while leaving the physical couplings and the Gibbs law unchanged.
The lower comparison loses a vanishing fraction of targets; a separate capped
path estimate supplies the matching upper bound.

\subsection{Local bounds at good targets}

Fix a cutoff $K<\infty$ with $\alpha:=\Pp(W\le K)>0$. Reveal the iid marks
$C_e=\mathbf1_{\{W_e\le K\}}$, and call a vertex \emph{good} if all its
incident marks are one. Under the conditional magnitude law given $W\le K$, put
\begin{equation}\label{eq:cutoffparameters}
\begin{aligned}
 b_K&=\E[\tanh(\beta W)\mid W\le K],\\
 \ell_K&=\E[\sech^2(\beta W)\mid W\le K],\\
 M_K&=\E[\cosh(4\beta W)\mid W\le K],
\end{aligned}
\end{equation}
and, for a degree bound $\Delta\ge2$, define
\begin{equation}\label{eq:pi}
 r_K=\ell_K/M_K,\qquad \pi_K=b_K r_K^{\Delta-1}.
\end{equation}

\begin{lemma}[Good-target bounds]\label{lem:goodtarget}
Conditional on the marks, let $v$ be good and let $H\in\mathcal F_v^{\mathrm{ext}}$
have positive conditional probability. For $e=uv$ and $s=\pm1$,
\begin{equation}\label{eq:goodtarget}
 \nu_{G,\beta}(B_e=1\mid C,H)\ge\pi_K,
 \qquad
 \nu_{G,\beta}(q_v=s\mid C,H)\ge\tfrac12 r_K^\Delta.
\end{equation}
\end{lemma}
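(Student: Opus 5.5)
The plan is to repeat the proof of Corollary~\ref{cor:cmr-local}, replacing the unconditional magnitude law by the conditional law given the marks. The first step is to observe that, conditional on $C$, the magnitudes $W_e$ are still independent. On edges with $C_e=1$ their law is the law of $W$ given $W\le K$; on edges with $C_e=0$ it is the law given $W>K$. The fair signs remain independent of the magnitudes and of $C$. The joint disorder, spin and blue law given $C$ is therefore again a two-replica CMR law on $G$ with independent, symmetric but non-identically distributed couplings. The restoration argument uses only the fact that each inserted coupling has its own fair sign and a magnitude independent of everything retained, so it applies edge by edge with each edge's own magnitude law.

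For the opening bound, fix the good target $v$ and delete every edge of $\starv(v)$ other than $e$. By Lemma~\ref{lem:cmr-leaf}, conditional on $W_e$ and the exterior, the leaf blue probability is $\tanh(\beta W_e)$, and the exterior marginal does not depend on $J_e$. Averaging $W_e$ under its conditional law then gives
\[
 \E[\mathbf1_H B_e]=b_K\,\E[\mathbf1_H]
\]
in the leaf law. Next restore the other incident edges one at a time. Because $v$ is good, each of them has $C_f=1$. Lemma~\ref{lem:insertion}, averaged over the conditional magnitude of $W_f$, multiplies every nonnegative retained expectation by a factor between $\ell_K$ and $M_K$. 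The tests $\mathbf1_HB_e$ and $\mathbf1_H$ do not use the inserted coupling or its activation coin, since $H$ lies in $\mathcal F_v^{\rm ext}$. There are at most $\Delta-1$ restorations, so comparing the lower bound on the success mass with the upper bound on the history mass gives $b_K r_K^{\Delta-1}=\pi_K$.

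For the label bound, delete the whole star of $v$ instead. Lemma~\ref{lem:cmr-leaf} gives isolated probability $\tfrac12$ for $q_v=s$, independently of the exterior. Restoring at most $\Delta$ good edges with the tests $\mathbf1_H\mathbf1_{\{q_v=s\}}$ and $\mathbf1_H$ yields the factor $r_K^{\Delta}$.

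The step needing care is that conditioning on the full mark field $C$, including the marks of exterior edges, does not interfere with deletion and restoration. I will handle it by treating $C$ as part of the fixed disorder law, so that the comparison is run for each fixed mark configuration. The deleted graph's law then uses the same conditional magnitude laws on the surviving edges, and the exterior marks enter only through these product magnitude laws, which are never integrated against the star. I will also check that $M_K<\infty$ automatically, because $W\le K$ on good edges; this is what removes the exponential-moment assumption. Finally, the ratio $\pi_K$ is uniform in $H$ because both bounds are undivided inequalities, exactly as in Corollary~\ref{cor:cmr-local}.
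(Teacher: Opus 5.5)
Your proposal is correct and follows essentially the same route as the paper. Both arguments apply the leaf-and-restoration comparison (Lemma~\ref{lem:restoration}, as in Corollary~\ref{cor:cmr-local}) to the laws conditional on the fixed mark field: the leaf identity averaged under the conditional magnitude law gives $b_K$, each restored good edge contributes a factor in $[\ell_K,M_K]$ by Lemma~\ref{lem:insertion}, and the isolated-label probability $\tfrac12$ gives the second bound. The paper's proof also records the two points you flag, namely that the marks remain a fixed conditioning parameter under deletion and that the insertion bounds are uniform in the possibly large retained exterior disorder.
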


\begin{proof}
Use Lemma~\ref{lem:restoration} on the family of graph laws conditional on
the fixed ambient marks. Removing an edge does not remove its mark from this
conditioning parameter. Present magnitudes have their independent conditional
laws given those marks; the cavity law is independent of the magnitude and fair
sign of the edge to be restored. Lemma~\ref{lem:cmr-leaf} supplies leaf success
$b_K$ after averaging the leaf magnitude, and isolated overlap probability $1/2$.
Figure~\ref{fig:leaf} shows the permitted deletion.

Conditional on the marks, all incident magnitudes have the bounded conditional
law used in \eqref{eq:cutoffparameters}. Averaging Lemma~\ref{lem:insertion}
over one such inserted magnitude multiplies any retained expectation by a factor
between $\ell_K$ and $M_K$. The observables $\mathbf1_H$, $\mathbf1_HB_e$ and, for full-star
restoration, $\mathbf1_H\mathbf1_{\{q_v=s\}}$ all qualify. The two
conclusions are therefore precisely the leaf and isolated-label conclusions of
Lemma~\ref{lem:restoration}. Retained disorder outside the restored star may
be arbitrarily large; the pointwise bounds are uniform in it.
\end{proof}

The success/failure refinement in Corollary~\ref{cor:odds} also applies here;
the simpler parameter $\pi_K$ suffices for the theorem.

\subsection{Discarding rare large-coupling stars}

\begin{lemma}[A site field with edge defects]\label{lem:defects}
On a finite or countable simple graph of maximum degree $\Delta$, let $C_e$ be
iid Bernoulli($\alpha$), let $X_v$ be iid Bernoulli($\pi$), independently,
and suppose $0\le\pi<1$. Then
\begin{equation}\label{eq:defectfield}
 Y_v=X_v\prod_{e\ni v}C_e
 \quad\text{stochastically dominates iid sites of parameter}\quad
 \varpi=\pi\left[\frac{\alpha(1-\pi)}{1-\alpha\pi}\right]^\Delta.
\end{equation}
\end{lemma}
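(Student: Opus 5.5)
The plan is to use the conditional-probability criterion for domination by a product measure, in the form already used for the thinning step of Proposition~\ref{prop:exploration}. It suffices to prove the following: for every vertex $v$ and every configuration of $(Y_w)_{w\ne v}$ with positive probability, $\Pp(Y_v=1\mid (Y_w)_{w\ne v})\ge\varpi$. On a finite graph this gives stochastic domination, via Holley's coupled heat bath in the one-sided form of Lemma~\ref{lem:ov-holley} with zero coupling, or equivalently via a sequential quantile coupling. On a countable graph we pass to the limit through finite subgraphs, since domination of finite-dimensional marginals is closed under weak limits. It is convenient to condition on more than the other $Y$'s. We condition on all $X_w$ for $w\ne v$ and on all marks $C_e$ for edges $e$ not incident to $v$. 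This finer $\sigma$-field determines every $Y_w$ except through the star marks of $v$, and a uniform lower bound given the finer field implies the same bound given the coarser one.

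Given this exterior information, the unknowns are $X_v$ and the at most $\Delta$ star marks $C_e$, $e\ni v$, which are a priori independent. The information about them comes from the values $Y_w$ for neighbours $w$ of $v$. If a neighbour $w$ with $X_w=1$ and all off-star marks good has $Y_w=1$, this forces $C_{vw}=1$, which only helps. If such a $w$ has $Y_w=0$, this forces $C_{vw}=0$, and then $Y_v=0$. So we cannot condition on the $Y_w$ directly. The plan is instead to compare, edge by edge, with the worst admissible information, namely information that is increasing in the star marks. The cleanest route is to drop the conditioning and bound the joint law.

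An alternative construction handles this directly. For each edge $e$, split the defect $1-C_e$ equally into an independent assignment to one endpoint: let $C_e=C_e^{u}C_e^{v}$ with independent factors. Each factor then lives on a single vertex, the resulting vertex variables $Z_v=X_v\prod_{e\ni v}C_e^v$ are iid, and $Y\ge Z$. The constant $\varpi$ in \eqref{eq:defectfield} suggests a sharper factorization. The factor $\alpha(1-\pi)/(1-\alpha\pi)$ is exactly $\Pp(C_e=1\mid C_eX_u=0)$, the probability that an edge is good given that it does not already kill the neighbour. So I would build the coupling by a sequential exploration of vertices in a fixed order. When vertex $v$ is examined, each star edge $e=uv$ whose other endpoint was examined earlier has been revealed only to the extent of the event $\{X_uC_e\prod\cdots=0\}$ or its complement, and in the bad case the conditional goodness of $e$ is at least $\alpha(1-\pi)/(1-\alpha\pi)$ by Bayes. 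Edges to unexamined neighbours are good with probability $\alpha$, which is larger. Together with the independent $X_v$, this gives conditional success at least $\varpi$, and thinning as in Proposition~\ref{prop:exploration} produces iid Bernoulli$(\varpi)$ labels below $Y$.

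The main obstacle is the Bayes step. The revealed information about $C_e$ from the earlier vertex $u$ is the event $\{Y_u=0\}$ jointly with other constraints, and it involves the remaining star marks of $u$. I need the conditional probability of $C_e=1$ to be at least $\alpha(1-\pi)/(1-\alpha\pi)$ uniformly. The first thing to try is to reveal, at each examined vertex $u$, only $X_u$ and the marks of its not-yet-revealed star edges in the precise order needed to decide $Y_u$, stopping at the first zero. Under this scheme an edge $e$ to a later vertex is either untouched, with probability $\alpha$ of being good, or known to be good. The one subtle case is when the decision at $u$ reveals $X_u=0$ first and leaves $e$ untouched, which is fine. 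If this ordering argument instead gives $\alpha^\Delta$-type bounds that are stronger, the stated $\varpi$ follows a fortiori, since $\alpha(1-\pi)/(1-\alpha\pi)\le\alpha$ is false in general. This is why I would check the monotonicity direction carefully before committing.
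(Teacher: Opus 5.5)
You set up the right reduction: a uniform bound $\Pp(Y_v=1\mid H)\ge\varpi$ for histories $H$ of $Y$ at other vertices, followed by a sequential or heat-bath coupling. You also correctly identify $\zeta:=\alpha(1-\pi)/(1-\alpha\pi)=\Pp(C_e=1\mid C_eX_u=0)$. But the Bayes step that carries the lemma is left open, and the way you propose to close it fails.

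Suppose that at an examined vertex $u$ you reveal $X_u$ and then its star marks in order until the first zero. That first zero can be $C_{uv}$ for a later vertex $v$, and then $Y_v=0$ is known. So an edge to a later vertex is not ``either untouched or known to be good,'' and no uniform lower bound on the next conditional success probability survives. This is the same defect you found yourself with the $\sigma$-field containing the $X_w$: revealing $X_w$ alongside $Y_w$ can pin $C_{vw}=0$.

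Two side remarks are also wrong:
\begin{itemize}
\item The splitting $C_e=C_e^uC_e^v$ gives $Z_v=X_v\prod_{e\ni v}C^v_e\ge Y_v$, because $C_e\le C^v_e$. That is the wrong direction for $Y\ge Z$.
\item $\zeta\le\alpha$ always holds, since it is equivalent to $\alpha\pi\le\pi$. Conversely, no $\pi\alpha^\Delta$-type conditional bound can hold, because an observed $Y_w=0$ genuinely lowers the posterior of $C_{vw}$ to $\zeta$.
\end{itemize}

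The missing idea is to condition only on $Y$-values and to integrate all $X$'s out. Fix a star edge $e=vw$, and condition on $H$ and on all marks other than $C_e$. Given all marks, the $Y_u$ with $u\ne v$ are independent, each a function of its own $X_u$. Hence the likelihood of $H$ factorizes over vertices, and only the factor of $Y_w$ depends on $C_e$.

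The good-to-bad likelihood ratio is then:
\begin{itemize}
\item $1$ if $Y_w$ is unobserved or some other mark at $w$ is bad;
\item $1-\pi$ if $Y_w=0$ is observed and all other marks at $w$ are good;
\item infinite if $Y_w=1$, since then $C_e=1$ is forced.
\end{itemize}
Bayes' formula gives $\Pp(C_e=1\mid C_{\setminus e},H)\ge\alpha(1-\pi)/(\alpha(1-\pi)+1-\alpha)=\zeta$ pointwise.

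Average this over the other marks and chain along the star to get $\Pp(v\text{ good}\mid H)\ge\zeta^{\deg v}$. Equivalently, given $H$ and the off-star marks, the star marks are conditionally independent. Since $X_v$ is independent of $(H,C)$, this yields $\Pp(Y_v=1\mid H)\ge\pi\zeta^{\Delta}=\varpi$. This is the paper's argument. The factor $1-\pi$ in $\zeta$ arises precisely from \emph{not} conditioning on $X_w$.
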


\begin{proof}
Fix $v$ and a finite history $H$ specifying $Y$ at vertices other than $v$.
For $e=vw$, fix all marks except $C_e$ and integrate the independent $X$ variables.
Changing $C_e$ affects the history only through $Y_w$. If $w$ is not observed,
the good/bad likelihood ratio is one. If $Y_w=0$ is observed, it is either one
or $1-\pi$. If $Y_w=1$, the good mark is forced. Bayes' formula therefore gives
\[
 \Pp(C_e=1\mid C_{\setminus e},H)
 \ge\zeta:=\frac{\alpha(1-\pi)}{1-\alpha\pi}.
\]
Average this inequality conditional on $H$ and any previously fixed incident
marks, then multiply along the star to obtain
$\Pp(v\text{ good}\mid H)\ge\zeta^{\deg(v)}$.
The variable $X_v$ is independent of this history and the marks, giving
$\Pp(Y_v=1\mid H)\ge\pi\zeta^\Delta$.
Sequential conditional sampling proves the domination; on a countable graph
apply the same construction to its finite-dimensional laws.
\end{proof}

Apply this lemma with $\pi=\pi_K$ from \eqref{eq:pi}, and denote the resulting
parameter by $\varpi_K$. Conditional on the marks, perform the once-per-target
exploration but skip every bad target. Proposition~\ref{prop:exploration}, using Lemma~\ref{lem:goodtarget}, couples its
successes to iid labels $X_v$ of parameter $\pi_K$, independent of the marks.
The blue root cluster contains the cluster of $Y$ in \eqref{eq:defectfield}
with its root forced occupied. By Lemma~\ref{lem:defects}, it consequently
contains an iid site root cluster of parameter $\varpi_K$, with the same forcing.
For precision, couple $Y\ge Z$ with $Z$ iid, and apply the increasing map that
sets each field's root coordinate to one. This preserves containment and does
not condition on the root's original mark.

Proposition~\ref{prop:local-transfer} passes the good-target inequalities to
every selected periodic joint limit. Choose $K$ outside the atom set of $W$,
so finite mark cylinders are continuity events. Include nonnegative cylinder
tests in $C$ and the good-target indicator before taking the limit; then
disintegrate. There is one common full-measure mark set for all vertices, edges,
signs and generating exterior histories. The monotone-class extension and
blue/overlap compatibility hold there. Alternatively, restrict exploration to
fixed balls while retaining the full ambient torus law and full ambient stars,
then pass to the limit and exhaust the lattice. This also gives the rooted
cluster-expectation comparison.

Here is the probability-one conclusion, including both overlap signs. For almost
every mark configuration $C$, iid sites of parameter $\pi_K$ on the deterministic
good-vertex graph have an infinite component almost surely whenever
$\varpi_K>\pc(\Z^d)$. Indeed, their joint field $Y$ dominates supercritical iid
percolation, and Fubini gives the assertion after fixing $C$.
For each such $C$, some good vertex $v$ has positive probability
$\vartheta_C(v)>0$ of belonging to that independent infinite component.

Apply Proposition~\ref{prop:as-existence} under the physical law conditional
on this fixed $C$, taking $A$ to be its deterministic set of good vertices.
Lemma~\ref{lem:goodtarget} supplies the exterior edge bound and the positive
root-label bound. The proposition proves almost-sure existence for each overlap
sign. Here the marks are fixed to work on a deterministic good-vertex graph.

\subsection{A capped upper bound}

For $x\ge0$, define bounded functions and moments
\begin{equation}\label{eq:cap}
\begin{gathered}
 Q(x)=\min\{4,\cosh(4x)\cosh^2x\},\\
 A_\beta=\E[\tanh(\beta W)Q(\beta W)],\qquad
 D_\beta=\E[Q(\beta W)^2].
\end{gathered}
\end{equation}
No moment assumption is needed for their finiteness. Set
\begin{equation}\label{eq:capparameter}
 \widehat u_\Delta=A_\beta D_\beta^{(\Delta-1)/2}.
\end{equation}

Fix all magnitudes and average only signs, spins and coins. The leaf comparison
at a fresh target $v$, with candidate $e=uv$, gives
\[
 \Pp(B_e=1\mid H,W)
 \le t_e\prod_{f\ni v,\,f\ne e}
             \cosh(4\beta W_f)\cosh^2(\beta W_f),
 \qquad t_e=\tanh(\beta W_e).
\]
The candidate's unobserved activation coin also bounds this probability by
$1-e^{-4\beta W_e}\le4t_e$. Since all factors in the product are at least one,
taking the smaller bound yields
\begin{equation}\label{eq:cappedtarget}
 \Pp(B_e=1\mid H,W)
 \le t_e\prod_{f\ni v,\,f\ne e}Q(\beta W_f).
\end{equation}

Apply \eqref{eq:cappedtarget} successively along a prescribed self-avoiding path
of $n$ edges. A path edge contributes its own $t_e$ and at most one $Q$ factor;
an edge outside the path contributes at most two $Q$ factors, one at each endpoint.
There are at most $n(\Delta-1)$ incidences of the latter kind before excluding
path-edge incidences. Independence of the magnitudes and
$\E Q\le\sqrt{\E Q^2}$ therefore give
\begin{equation}\label{eq:cappedpath}
 \nu_{G,\beta}(\text{the path is blue})
 \le A_\beta^nD_\beta^{n(\Delta-1)/2}
 =\widehat u_\Delta^n.
\end{equation}
Overcounting is harmless because $D_\beta\ge1$. Proposition~\ref{prop:path-bound} gives
\eqref{eq:general-path-count} and \eqref{eq:general-susceptibility} with $u$ replaced by
$\widehat u_\Delta$, uniformly in volume and hence in every periodic local limit.

\subsection{Completion under the finite-mean assumption}

\begin{proof}[Proof of Theorem~\ref{thm:headline} under \eqref{eq:moment}]
Take $\beta=c/(\Delta m)$ with fixed $c>0$. Choose a continuity point
$K\in[1/\beta,2/\beta]$. Finite mean gives
\[
 \Delta\Pp(W>K)\to0,\qquad b_K/\beta\to m,
 \qquad \beta\E[W^2;W\le K]\to0.
\]
For the last assertion, the integrand
$W(\beta W)\mathbf1_{\{W\le K\}}$ is bounded by $2W$ and tends pointwise to
zero. Taylor bounds on $[0,2]$ then show
\[
 \Delta(1-\ell_K)\to0,\qquad
 \Delta(M_K-1)\to0,\qquad \Delta\pi_K\to c.
\]
With $\zeta=\alpha(1-\pi_K)/(1-\alpha\pi_K)$ as above,
$1-\zeta=(1-\alpha)/(1-\alpha\pi_K)$, so
$\zeta^\Delta\to1$ and $\Delta\varpi_K\to c$.
For $c>1$, Kesten's threshold \eqref{eq:kesten} makes
$\varpi_K>\pc(\Z^d)$ for large $d$. The preceding good-target comparison proves
positive root percolation and probability-one percolation in both overlap signs.
Global flipping of one replica gives equal, positive probabilities for the two
root-sign events.

For the upper bound, $Q(x)\to1$ as $x\to0$ and
$0\le Q(x)^2-1\le C\min\{x^2,1\}$. Dominated convergence gives
\[
 A_\beta/\beta\to m,
 \qquad D_\beta-1=o(\beta).
\]
Here the dominating functions are $4W$ for the first ratio and $CW$ for the
second. For the latter, use
\[
 \frac{\min\{(\beta W)^2,1\}}{\beta}\le W.
\]
Consequently $(\Delta-1)\widehat u_\Delta\to c$.
For $0<c<1$, choose $\rho=(1+c)/2$. The functions $A_\beta$ and $D_\beta$ are
nondecreasing, so the path bound holds with rate $\rho$ throughout
$0\le\beta\le c/(\Delta m)$ for sufficiently large $d$.
This completes the theorem and the onset asymptotic \eqref{eq:onset}.
The same lower parameter $\varpi_K$ and upper parameter $\widehat u_\Delta$
supply the asymptotics used in Corollary~\ref{cor:chi} below.
\end{proof}

\begin{corollary}[Subcritical susceptibility]\label{cor:chi}
Under the finite-mean hypothesis \eqref{eq:moment}, for every fixed $0\le c<1$,
\[
 \sup_{\nu\in\mathcal L_{d,c/(2dm)}}
 \left|\E_\nu|C_{\mathrm b}(o)|-\frac1{1-c}\right|\longrightarrow0.
\]
\end{corollary}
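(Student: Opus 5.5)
The plan is to squeeze $\E_\nu|C_{\mathrm b}(o)|$ between two bounds that both tend to $1/(1-c)$, uniformly over the selected limits. The upper bound will come from the capped path estimate. The lower bound will come from the good-target comparison with independent sites, combined with the elementary asymptotics of subcritical site percolation on $\Z^d$ as $d\to\infty$.

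\emph{Upper bound.} By \eqref{eq:cappedpath} and Proposition~\ref{prop:path-bound}, every $\nu\in\mathcal L_{d,c/(2dm)}$ satisfies
\[
 \E_\nu|C_{\mathrm b}(o)|\le1+\sum_{n\ge1}2d(2d-1)^{n-1}\widehat u_{2d}^{\,n}
 =1+\frac{2d\,\widehat u_{2d}}{1-(2d-1)\widehat u_{2d}}.
\]
This requires $(2d-1)\widehat u_{2d}<1$. The finite-mean completion above shows $(\Delta-1)\widehat u_\Delta\to c<1$, and also $\Delta\widehat u_\Delta\to c$. Hence the right side tends to $1+c/(1-c)=1/(1-c)$. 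The bound is proved in finite volume, uniformly in $L$. The quantity $|C_{\mathrm b}(o)|\wedge N$ restricted to a ball is a bounded cylinder observable, so the bound passes to every selected limit and then to the full expectation by monotone convergence. The case $c=0$ is trivial, since $\beta=0$ gives no blue bonds.

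\emph{Lower bound.} Let $c'\in(0,c)$. By the good-target comparison, the blue root cluster contains a root-forced iid site cluster of parameter $\varpi_K$, where $\Delta\varpi_K\to c$. Exploring inside fixed balls and passing to the limit as in Proposition~\ref{prop:local-transfer}, this containment holds in every selected limit. It therefore gives $\E_\nu|C_{\mathrm b}(o)|\ge\E|C_X(o)|$, where $X$ is iid of parameter $c'/(2d)$ once $d$ is large, using monotonicity of root-forced site clusters in the parameter.

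It then suffices to show that root-forced site percolation at $c'/(2d)$ has expected cluster size at least $1/(1-c')-o(1)$. I would lower-bound the cluster size by the number of vertices reachable along self-avoiding paths of length $n$ that are open and visit no other occupied sites off the path. More simply, I would use the tree-like lower bound given by counting the distinct vertices at the ends of open self-avoiding paths. By inclusion–exclusion or a second-moment correction, the vertices at distance $n$ reached by open paths number at least $(2d)(2d-1)^{n-1}p^n\,(1-O(n^2/d))$ in expectation, for fixed $n$. Summing over $n\le N$ gives at least $\sum_{n\le N}c'^n-o_d(1)$, and letting $N\to\infty$ and then $c'\uparrow c$ completes the lower bound.

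I expect the main obstacle to be this last step, namely showing that distinct open paths of bounded length rarely end at the same vertex. Multiple paths from the origin to a point at distance $n$ contribute an overcount of relative size $O(1/d)$. Non-geodesic endpoints can be ignored for a lower bound; to bound the number of distinct vertices reached, I would restrict to points at graph distance exactly $n$. Each such point has $n!/\prod z_i!$ geodesics, and a union/Bonferroni bound using the pairwise overlap of geodesics should suffice for each fixed $n$.
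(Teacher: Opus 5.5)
Your proposal is correct and follows the paper's proof. The upper bound is the capped self-avoiding path count with $(2d-1)\widehat u_{2d}\to c$. The lower bound is the root-forced iid site comparison at parameter $\varpi_K$, with $2d\varpi_K\to c$, followed by a Bonferroni count over shortest paths, summed over lengths $k\le k_0$ before letting $k_0\to\infty$.

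The only difference is cosmetic. The paper restricts to the $2^k\binom dk$ endpoints having $k$ coordinates equal to $\pm1$; each has exactly $k!$ monotone paths, so the bound $\Pp(o\leftrightarrow x)\ge k!\,p^k-\binom{k!}{2}p^{k+1}$ is immediate. You instead use all geodesic endpoints at distance $k$, and the same pairwise-overlap estimate goes through there using $N_x\le k!$.
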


\begin{proof}
The upper path bound with $(2d-1)\widehat u_{2d}\to c$ gives the upper
limit by summing over path lengths. For the lower limit use the root-forced
iid site comparison with $p=\varpi_K$, for which $2dp\to c$. Fix $k\ge1$. There are $2^k\binom dk$ vertices
obtained by moving once in each of $k$ distinct signed coordinate directions.
Each admits $k!$ monotone paths of length $k$. In root-forced site percolation
of parameter $p$, each such path is occupied with probability $p^k$, and two
distinct paths use at least $k+1$ nonroot vertices. Bonferroni therefore gives
\[
 \Pp(o\leftrightarrow x)
 \ge k!p^k-\binom{k!}{2}p^{k+1}.
\]
If $\Delta p\to c$, the expected number of connected vertices among these
$2^k\binom dk$ choices has lower limit at least $c^k$.
Summing over $0\le k\le k_0$ and then letting $k_0\to\infty$ gives $1/(1-c)$.
All bounds are uniform over the selected local limits. At $c=0$ all blue edges
are absent and the susceptibility equals one.
\end{proof}

\section{The fresh-star route: star estimates and a single-floor comparison}\label{app:fresh-star}

This appendix gives a second route to an explicit dimension for iid fair
couplings $J_e=\pm1$. It keeps the one-attempt exploration of
Section~\ref{sec:general} and sharpens both of its inputs. The leaf floor of
Corollary~\ref{cor:cmr-local} restores the target's other interactions one at a
time and pays a factor $\sech^2\beta/\cosh4\beta$ for each, although all of
them share the same two central spins. Summing over these spins jointly, with
both quenched normalizations retained, gives an explicit floor $p_0(2d,\beta)$
for the conditional blue-opening probability, uniform over the entire exterior
of the target (Proposition~\ref{prop:unit-star}). The exploration never reveals
a failed edge at a target before testing it, so this single floor is all it
needs. Globally, Kesten's second-moment argument for oriented paths
\cite[Section 2]{CoxDurrett1983} replaces the asymptotic threshold
\eqref{eq:kesten} by the explicit bound
$\pc(\Z^d)\le F_d=1-1/G_d$, where $G_d$ is the expected number of meetings of
two independent oriented random walks (Lemma~\ref{lem:oriented-pz}).

Theorem~\ref{thm:single-floor} combines the two inputs: if $p_0(2d,\beta)>F_d$,
both overlap signs percolate in every selected periodic joint limit. Exact
evaluation of the floor gives dimensions $16$ to $20$ and $22$
(Corollary~\ref{cor:single-floor}); dimension $22$ was proved in \cite{PeiV1}
by a longer argument (Remark~\ref{rem:v1-certificate}).
A cruder floor, free of finite sums, gives dimensions $25$ and $26$
(Corollary~\ref{cor:fs-closed-form}). Here the target's spins are never revealed,
so the floor averages over the target's overlap. Section~\ref{sec:overlap-route}
instead reveals the whole overlap field, queries blue bonds only between
agreeing sites, where failed queries are harmless, and compares the overlap
field with a weak Ising field; with the sharpened local inputs and the macrostep
paths of Section~\ref{sec:macrostep}, that route reaches dimension $7$
(Theorem~\ref{thm:explicit}).

Section~\ref{sec:joint-star} states the star estimate, and
Section~\ref{sec:fs-single-floor} deduces the single-floor comparison from it,
using the oriented estimates of Section~\ref{sec:fs-oriented}.
Sections~\ref{app:star}--\ref{app:certificate} prove the star estimate and reduce
its row $h=0$ to a closed form and a few explicit sums.
Section~\ref{sec:fs-certificates} evaluates them at the certified points, and
Section~\ref{sec:fs-closed-form} gives the floor without finite sums.

\subsection{The CMR star after failed queries}\label{sec:joint-star}

For unit couplings the leaf floor of Corollary~\ref{cor:cmr-local} is
$\tanh\beta\,(\sech^2\beta/\cosh4\beta)^{\Delta-1}$: every incident interaction
other than the tested one is restored separately and costs its own factor. All
interactions at a target, however, share the same two central spins. Averaging
this star together preserves information that is lost when its edges are
restored separately. It also integrates earlier blue failures exactly. Write
$\eta=e^{-2\beta}$ and $\alpha=1-\eta^2$, the activation probability in
\eqref{eq:blue}; this is $p_A$ of Section~\ref{sec:overlap-route}. Consider a failed incident edge whose other endpoint has the
tested parent's overlap, and let $\epsilon$ be its sign after the gauge
transformation of Section~\ref{app:star}. In the sector where the target also
has the parent's overlap, the failed edge contributes
\begin{equation}\label{eq:failed-factor}
 e^{2\beta\sigma_v\epsilon}
   (1-\alpha\mathbf1_{\{\sigma_v\epsilon=1\}})=\eta,
 \qquad \sigma_v,\epsilon\in\{-1,1\},
\end{equation}
where $\sigma_v$ is the target's spin; in the opposite sector it contributes $1$.
This identity integrates a blue failure, allowing a red edge; it does not
remove the interaction from the partition function.

For a vertex $v$, let $\mathcal F_v^{\mathrm{ext},J}$ be the sigma-field
generated by $\mathcal F_v^{\rm ext}$ and the off-star couplings
$(J_e)_{e\notin\starv(v)}$.

\begin{proposition}[A star with incident failures]\label{prop:unit-star}
For iid fair unit couplings on a finite simple graph, let $v$ have degree
$\Delta$, and let $e=uv,f_1,\ldots,f_h$ be distinct incident edges, with
$0\le h<\Delta$. On retained configurations in which the other endpoints of
all $f_i$ have $u$'s overlap, put $F_h=\{B_{f_1}=\cdots=B_{f_h}=0\}$. Then
\begin{equation}\label{eq:failed-star-bound}
 \nu_{G,\beta}(B_e=1\mid\mathcal F_v^{\mathrm{ext},J},F_h)
 \ge p_h(\Delta,\beta),
\end{equation}
where the explicit finite expression $p_h$ is defined in
\eqref{eq:star-parameter}. The target spins, incident coupling signs,
incident red indicators and $B_e$ are not retained. The bound holds on
conditioning events of positive probability.
\end{proposition}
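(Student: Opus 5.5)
The plan is to work in a local gauge at the target and to write the conditional probability in \eqref{eq:failed-star-bound} as a ratio of two explicit star sums. Both sums will be bounded uniformly over the unknown cavity law of the neighbours.

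\textbf{Gauge and factorization.} Fix a positive-probability atom $g$ of $\mathcal F_v^{\mathrm{ext},J}$. It fixes the exterior couplings and both replicas off $v$, hence the overlaps $q_w$ of all neighbours. It also fixes the exterior blue indicators, and we intersect it with $F_h$. The unknowns are then the central spins $(\sigma_v,\tau_v)$, the incident signs $J_{vw}$ and the incident activations. Gauge each incident sign by the first replica's neighbour spin, $\epsilon_w=J_{vw}\sigma_w$; the signs remain iid fair. Summing out $\rho_v$ in each partition function factorizes $Z=Z_{G\setminus v}\,R(\epsilon)$, exactly as in \eqref{eq:ov-star-factor}. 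Here $R$ is the cavity average of $2\cosh(\beta\sum_w\epsilon_w\rho_w)$ for the first replica, and $R'$ is the analogous factor for the second replica, in which $\epsilon_w$ is multiplied by $q_w$. The joint weight of $(\sigma_v,\tau_v,\epsilon,\alpha)$ on $g$ is therefore a constant times
\[
\prod_w e^{\beta\epsilon_w(\sigma_v+\tau_vq_w)}\,R(\epsilon)^{-1}R'(\epsilon)^{-1}
\]
times the activation law. Sum over the activations of the failed edges $f_i$. In the sector $q_v=q_u$ each contributes the constant $\eta$ of \eqref{eq:failed-factor}, and in the opposite sector it contributes $1$. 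The success event $B_e=1$ restricts to $q_v=q_u$, $\epsilon_u$ satisfied in both replicas, and activation $\alpha$.

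\textbf{Uniform bound.} The success probability is a ratio $N/(N+N')$. Here $N$ sums over the success configurations, and $N'$ collects the remaining mass of both sectors, with the failure factors built in. Since $R$ and $R'$ are linear in the cavity law, I will bound $R^{-1}R'^{-1}$ from below in the numerator and from above in the denominator by the pointwise bracket
\[
2\cosh\bigl(\beta\textstyle\sum_w\epsilon_w\rho_w\bigr)\in\bigl[2,\;2\cosh(\beta\Delta)\bigr].
\]
A sharper alternative is a Jensen/tangent-line argument as in Proposition~\ref{prop:ov-vertex}, which reduces to frozen patterns determined by counts. Either way the resulting sums depend only on the numbers of neighbours of each overlap and of each frozen sign. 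Minimizing over these finitely many counts, and over which $h$ of the same-overlap neighbours are the failed ones, yields the finite expression $p_h(\Delta,\beta)$ defined in \eqref{eq:star-parameter}.

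\textbf{Main obstacle.} The hard step is making the estimate uniform over the exterior while keeping the two quenched normalizations. In the opposite sector the failed edges carry weight $1$ rather than $\eta$, so failures shift mass towards $q_v\ne q_u$ and lower the floor. The proof must check that the worst case over cavity patterns is attained at the configurations counted in \eqref{eq:star-parameter}, and that this extremum is genuinely a minimum rather than only an upper bound. I would first try a mediant argument in the spirit of Lemma~\ref{lem:ov-mediant}, writing the ratio as a monotone function of $N'/N$ and bounding $N'/N$ by its worst frozen pattern. Measurability then follows as in Corollary~\ref{cor:cmr-local}: the bound holds on each atom, and summing over atoms extends it to the stated sigma-field.
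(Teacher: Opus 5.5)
Your set-up agrees with the paper's. On an atom of $\mathcal F_v^{\mathrm{ext},J}$ you gauge the star signs by $\sigma$, factor the partition function through the star, and integrate the failed activations to the constant $\eta^h$ in the sector $q_v=q_u$ and to $1$ in the other sector. This is how the paper obtains the weights $w_{h,k},a_{h,k}$ of \eqref{eq:star-weights}. It then writes the opening probability as $A_{h,k}/D_{h,k}$, with $A_{h,k}=\E_\epsilon[a_{h,k}R^{-2}]$ and $D_{h,k}=\E_\epsilon[w_{h,k}R^{-2}]$.

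One correction to the set-up. Both replicas are sampled from the same $\mu_{G,J,\beta}$, so the quenched normalization is the single factor $R(\epsilon)^{-2}$, with one cavity law. Your $R'$ is one of two things. Either it is the same partition ratio written in $\tau$-gauged cavity coordinates, and then it is identically $R$. Or it uses the same cavity law as $R$, and then it is not the normalization of anything. In neither case may it be bounded as an independent factor.

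The genuine gap is the step that is supposed to produce $p_h$. Your bracket $R\in[2,2\cosh(\beta\Delta)]$ loses a factor $\cosh^2(\beta\Delta)$. At $\Delta=44$, $t=11/125$, where this floor is used, $\beta\Delta\approx3.9$, so the loss is several hundred. The resulting bound is of order $10^{-4}$ at best, whereas $p_0(44,\beta)\approx0.058$.

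The ``sharper alternative'' is not specified, and the mediant step you suggest is not available as stated. A mediant inequality reduces a ratio of functionals that are \emph{linear} in the cavity law to its worst frozen pattern. Here $R^{-2}$ is convex in the law, so you must linearize first, and how you linearize is exactly what fixes the constant. Your claim that minimizing over counts ``yields $p_h$'' is therefore unsupported: $p_h$ is a specific expression produced by two specific estimates.

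The first estimate is a paired numerator. Since $a_{h,k}$ and $R$ are invariant under $\epsilon\mapsto-\epsilon$, fix the tested sign to $1$ and put $W=\exp(2\beta\sum_{i\in U\setminus\{0\}}\epsilon_i)$. Weighted Jensen gives $\E(W/R^2)\ge(\E W)^3/(\E WR)^2$. After averaging the remaining fair signs, $\E WR$ is a polynomial in the relative cavity coordinates with nonnegative coefficients. It is therefore at most its aligned value $C^{k-1}c_0^{\Delta}H_{k-1}$ for every cavity law, dependent or not. This gives $A_{h,k}\ge N_{h,k}$.

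The second estimate is a nonnegative residual. Take $\lambda_{h,k}=\min_\epsilon w_{h,k}/a_{h,k}$, explicit in \eqref{eq:star-lambda}; then $w_{h,k}-\lambda_{h,k}a_{h,k}\ge0$ pointwise. Only this nonnegativity lets the inverse Jensen bound $R^{-2}\le\E_\rho\sech^2(\beta\epsilon\cdot\rho)$ be applied to the whole residual. That yields $D_{h,k}-\lambda_{h,k}A_{h,k}\le M_{h,k}$, a maximum over frozen orientations.

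Because $M_{h,k}\ge0$ and $A/(\lambda A+M)$ increases in $A$, one gets $A/D\ge N_{h,k}/(\lambda_{h,k}N_{h,k}+M_{h,k})$. The minimum over the arrangement $k$, which the retained exterior spins determine, is \eqref{eq:star-parameter}. Bounding the blue and non-blue masses separately, as in your ratio $N/(N+N')$, is the case $\lambda=1$ of this device. With the same numerator bound it never does better, since every frozen law is a cavity law whose blue mass is at least $N_{h,k}$.

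This also removes the obstacle you name. Nothing needs to be attained, because $p_h$ is a minimum of lower bounds that hold for every cavity law. The failures need no extremal analysis either: their signs stay fair inside $R$, and their activations contribute the fixed factor $\eta^h$ to both $w_{h,k}$ and $a_{h,k}$.
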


The floor is a minimum over \emph{overlap arrangements}. Let $k$ be the number
of neighbors of $v$ that have $u$'s overlap, counting $u$ but not the other
endpoints of the $h$ failed edges; $k$ is determined by the retained exterior
spins. Since $\mathcal F_v^{\rm ext}\subseteq\mathcal F_v^{\mathrm{ext},J}$, the
bound also holds with $\mathcal F_v^{\rm ext}$ in place of
$\mathcal F_v^{\mathrm{ext},J}$.
Only the row $h=0$, in which $F_0$ is the sure event, enters the single-floor
comparison below. The rows $h\ge1$ serve explorations that test a target more
than once.

The normalized-mass argument is short. Delete the star and let $R$ be the
\emph{original full-star} one-replica partition ratio. After integrating
central spins and the failed activations, write
\[
 A=\E_\epsilon[a(\epsilon)R(\epsilon)^{-2}],\qquad
 D=\E_\epsilon[w(\epsilon)R(\epsilon)^{-2}]
\]
for the blue-event and total masses. Every incident sign, including those on
failed edges, remains in $R$. Pairing opposite central-spin orientations in a
weighted Jensen estimate gives $A\ge N$. Choose $\lambda\ge0$ with
$w-\lambda a\ge0$ pointwise. A bound $D-\lambda A\le M$, with $M\ge0$, gives
\begin{equation}\label{eq:linear-star}
 \frac AD\ge\frac{N}{\lambda N+M}.
\end{equation}
Section~\ref{app:star} proves these bounds and evaluates the finite expression
uniformly over all neighbor-overlap arrangements. De Santis and Gandolfi
\cite[Theorem 3.2]{DeSantisGandolfi1999} similarly retain common closed mass
in a small FK star; the weights here are those of two quenched CMR replicas.

For the smallest overlap groups, a signed residual gives a stronger bound.
For any $\lambda,c>0$, write $H=w-\lambda a$ and
$H_+=\max(H,0)$, $H_-=\max(-H,0)$. Apply inverse Jensen to $H_+/R^2$
and the tangent inequality $R^{-2}\ge3/c^2-2R/c^3$ to $-H_-/R^2$.
Keeping the same cavity orientation in both terms gives an upper bound
$\mathcal M_{h,k}(\lambda,c)$ on $D-\lambda A$, defined in
\eqref{eq:signed-residual}. Consequently
\begin{equation}\label{eq:signed-ratio}
 \mathcal M_{h,k}(\lambda,c)\le0
 \quad\Longrightarrow\quad \frac AD\ge\frac1\lambda.
\end{equation}
This criterion does not require a separate lower bound on $A$.
Section~\ref{app:signed-residual} proves it in the same exterior conditioning
class. Each overlap arrangement may use either \eqref{eq:linear-star} or
\eqref{eq:signed-ratio}.

\subsection{A single-floor oriented comparison}\label{sec:fs-single-floor}

Proposition~\ref{prop:unit-star} holds on every atom of the exterior
sigma-field. It therefore needs no information about how an exploration reached
the target, and its row $h=0$ can replace the leaf floor in the argument of
Section~\ref{sec:general}. The remaining input is an explicit percolation
threshold. Let $Y$ and $Y'$ be independent oriented random walks on $\Z^d$
started at $o$, each step uniform on $\{e_1,\ldots,e_d\}$, and put
\begin{equation}\label{eq:fs-collision-green}
\begin{gathered}
 u_n(d)=\Pp(Y_n=Y'_n)
 =\sum_{k_1+\cdots+k_d=n}\biggl(\frac{n!}{k_1!\cdots k_d!\,d^{n}}\biggr)^2,\\
 G_d=\sum_{n\ge0}u_n(d),\qquad F_d=1-\frac1{G_d},
\end{gathered}
\end{equation}
with $F_d=1$ if $G_d=\infty$. Thus $G_d$ is the expected number of meetings of
the two walks, and $F_d$ is the probability that they meet again after time $0$
(Lemma~\ref{lem:oriented-pz}). The same quantities enter the second moment of
Section~\ref{sec:overlap-route}.

Call $\underline p\in(0,1]$ a \emph{uniform fresh-star floor} at $(d,\beta)$ if,
for every $L\ge3$, every oriented edge $uv$ of $\mathbb T_L^d$ and every
$H\in\mathcal F_v^{\rm ext}$ with $\nu_{\mathbb T_L^d,\beta}(H)>0$,
\begin{equation}\label{eq:fs-uniform-floor}
 \nu_{\mathbb T_L^d,\beta}(B_{uv}=1\mid H)\ge\underline p.
\end{equation}
For $L\ge3$ the torus is a simple $2d$-regular graph, so
Proposition~\ref{prop:unit-star} with $h=0$ shows that $p_0(2d,\beta)$ is a
uniform fresh-star floor. So is any $\underline p$ that \eqref{eq:linear-star}
or \eqref{eq:signed-ratio} certifies separately in every overlap arrangement.

\begin{theorem}[Single-floor oriented comparison]\label{thm:single-floor}
Fix $d\ge1$, $\beta>0$ and iid fair couplings $J_e=\pm1$, and let
$\underline p$ be a uniform fresh-star floor at $(d,\beta)$. If
$\underline p>F_d$, then every $\nu\in\mathcal L_{d,\beta}$ satisfies
\begin{equation}\label{eq:fs-root-bound}
 \nu(o\blue\infty,\ q_o=s)
 \ge\frac12\cdot\frac{\underline p-F_d}{\underline p\,(1-F_d)}>0
 \qquad(s=\pm1),
\end{equation}
and, $\nu$-almost surely, each overlap sign contains an infinite blue component.
In particular, this holds when $p_0(2d,\beta)>F_d$.
\end{theorem}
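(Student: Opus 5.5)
The plan has three steps. First, the one-attempt exploration of Proposition~\ref{prop:exploration} turns the floor into independent sites. Second, Kesten's oriented second moment, in the form of Lemma~\ref{lem:oriented-pz}, bounds the root probability of those sites from below. Third, the bound is passed to periodic limits and upgraded to almost-sure coexistence exactly as in Corollary~\ref{cor:direct-cmr}.

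\emph{Finite-volume comparison.} Fix $L$ and the ball $B_n$ with $L\ge 2n+2$. Run the one-attempt exploration on the induced subgraph on $B_n$, keeping the full torus law. Every history it produces at an untested target $v$ lies in $\mathcal F_v^{\rm ext}$, because no earlier query touches $\starv(v)$. So \eqref{eq:fs-uniform-floor} gives hypothesis \eqref{eq:abstract-history} with $p=\underline p$, and the root cluster of $o$ contains the cluster of root-forced iid Bernoulli($\underline p$) sites.

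I would then restrict to oriented paths. Lemma~\ref{lem:oriented-pz}, the Cox--Durrett second moment with $W_n=\sum_\gamma d^{-n}\mathbf1\{\gamma\text{ open}\}/\underline p^{\,n}$, gives
\[
 \E W_n^2=\mathbf E\bigl[\underline p^{-(K_V-1)}\bigr],
\]
where $K_V$ counts the meetings of the two walks after time $0$. Since the number of returns is geometric with parameter $1-F_d$, this equals $\sum_{N\ge0}(F_d/\underline p)^N(1-F_d)$. The sum is finite precisely when $\underline p>F_d$, and then
\[
 \Pp(o\leftrightarrow\partial B_n)\ \ge\ \frac{\underline p-F_d}{\underline p\,(1-F_d)},
\]
uniformly in $n$ and in $L\ge 2n+1$. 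Wrap-around is excluded as in the oriented path facts (i)--(iii).

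\emph{Sign split and limits.} The global flip $\Theta$ of one replica preserves $\nu_L$ and every blue indicator, and it maps $q_o$ to $-q_o$. Hence each sign receives exactly half of the torus probability, which gives the factor $\tfrac12$ in \eqref{eq:fs-root-bound}. The event $\{q_o=s,\ o\blue\partial B_n\}$ is a fixed cylinder, so the bound passes to every $\nu\in\mathcal L_{d,\beta}$ along its subsequence. Letting $n\to\infty$ and using continuity from above, in the order of Figure~\ref{fig:local-transfer}, yields \eqref{eq:fs-root-bound}.

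\emph{Almost-sure coexistence.} This step needs the floor in the limit law conditioned on the exterior event $T_s$. Following Corollary~\ref{cor:direct-cmr} and Proposition~\ref{prop:as-existence}, I would first write \eqref{eq:fs-uniform-floor} in undivided form, $\E[\mathbf1_HB_{uv}]\ge\underline p\,\E[\mathbf1_H]$, and extend it to $\mathcal F_v^{\rm ext}$ in the limit by a monotone-class argument (Proposition~\ref{prop:local-transfer}). Since $T_s$ is exterior to every vertex, the floor persists under $\nu(\cdot\mid T_s)$, and the exploration then produces an infinite cluster with positive probability.

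\textbf{Main obstacle.} Proposition~\ref{prop:as-existence} also asks for a positive isolated-label bound $\nu(q_o=s\mid H)>0$. That bound comes from Corollary~\ref{cor:cmr-local}, which for unit couplings gives $\tfrac12 r^{2d}$ with $r=\sech^2\beta/\cosh4\beta$, so it is available. The delicate point is to condition on $T_s\cap\{q_o=s\}$ and verify that the site comparison still places the root cluster inside sign $s$. That holds because blue edges join equal overlaps, and with it the argument closes as before. The final clause then follows from Proposition~\ref{prop:unit-star} with $h=0$.
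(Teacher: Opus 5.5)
Your proposal is correct and uses the same ingredients as the paper's proof: the one-attempt exploration with $p=\underline p$, Lemma~\ref{lem:oriented-pz}, the replica-reversal symmetry, local transfer, and Proposition~\ref{prop:as-existence} with the label bound of Corollary~\ref{cor:cmr-local} at $W\equiv1$. The only difference is the order of limits for the root bound: you explore inside $B_n$ on the torus and pass the cylinder event $\{q_o=s,\ o\blue\partial B_n\}$ to the limit, whereas the paper first transfers the floor to $\nu$ and explores in the limit law, and the paper treats these two orders as equivalent (proof of Corollary~\ref{cor:direct-cmr}).

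One small slip to fix: for finite $n$ the second moment is $\E\,\underline p^{-K_n}$, where $K_n$ counts meetings at times $1,\dots,n$, so it is at most (not equal to) the geometric sum. Also, the exponent $-(K_V-1)$ is correct only if $K_V$ includes time $0$.
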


The theorem is an explicit finite-dimensional version of
Remark~\ref{rem:short-route}. Kesten's asymptotic site threshold is replaced by
the bound $\pc(\Z^d)\le F_d$ of Lemma~\ref{lem:oriented-pz}, and the
leaf-restoration floor by the star floor. The oriented comparison loses a
factor of about two: $2dF_d=2+O(1/d)$, whereas $2d\pc(\Z^d)\to1$.

\begin{proof}
Let $\nu\in\mathcal L_{d,\beta}$ and $\underline p>F_d$.

\emph{Transfer.} Written without division, \eqref{eq:fs-uniform-floor} says
that $\E_{\nu_L}[hB_{uv}]\ge\underline p\,\E_{\nu_L}[h]$ for every bounded
nonnegative $\mathcal F_v^{\rm ext}$-measurable cylinder test $h$, uniformly
in $L$. For all sufficiently large $L$, the window of $h$ together with
$\starv(v)$ is identified with a box of $\Z^d$, so the inequality concerns the
same coordinates in every such torus and in the limit. All coordinates take
finitely many values, so Proposition~\ref{prop:local-transfer} gives
\begin{equation}\label{eq:fs-floor-limit}
 \nu(B_{uv}=1\mid\mathcal F_v^{\rm ext})\ge\underline p
 \quad\nu\text{-a.s., for every oriented edge $uv$ of $\Z^d$.}
\end{equation}
Neither consistency of the finite-volume laws nor a boundary condition is used.

\emph{One attempt per target.} Proposition~\ref{prop:exploration} needs its bound
only given off-star histories at the target, and these belong to
$\mathcal F_v^{\rm ext}$. By \eqref{eq:fs-floor-limit} and the tower property,
$\nu(B_{uv}=1\mid H)\ge\underline p$ for every such $H$ with $\nu(H)>0$. The
exploration tests $uv$ only when $v$ has never been attempted, and then retires
$v$, so no failed edge at $v$ is revealed before its test; this is why the row
$h=0$ suffices. Proposition~\ref{prop:exploration}, applied under $\nu$ with
$p=\underline p$, gives iid Bernoulli$(\underline p)$ sites $X$ with $X_o=1$
and $C_X(o)\subseteq C_{\rm b}(o)$.

\emph{Oriented paths.} An infinite occupied oriented path from $o$ lies in
$C_X(o)$, so Lemma~\ref{lem:oriented-pz} gives
$\nu(o\blue\infty)\ge(\underline p-F_d)/(\underline p(1-F_d))$.

\emph{Both signs.} In zero field, reversing the second replica preserves every
finite-volume law and every blue indicator, and it exchanges the events
$\{q_o=1\}$ and $\{q_o=-1\}$. It is continuous for local convergence, so $\nu$
is invariant as well, as in the proof of Corollary~\ref{cor:direct-cmr}. This
proves \eqref{eq:fs-root-bound}.

\emph{Almost-sure existence.} Apply Proposition~\ref{prop:as-existence} with
$G=\Z^d$, $A$ its whole vertex set, root $r=o$, labels $\xi=q$ and edge
indicators $\omega=B$.
Blue edges join equal overlaps on every torus, and this closed local constraint
passes to $\nu$ (Appendix~\ref{app:probability}). The sigma-field
$\mathcal E_v$ of that proposition is contained in $\mathcal F_v^{\rm ext}$, so
\eqref{eq:fs-floor-limit} supplies its edge bound, and
Lemma~\ref{lem:oriented-pz} shows that the site model percolates. For the label
bound, Corollary~\ref{cor:cmr-local} with $W\equiv1$ gives
$\nu_{G,\beta}(q_v=s\mid H)\ge\frac12(\sech^2\beta/\cosh4\beta)^{2d}$ for every
positive-probability $H\in\mathcal F_v^{\rm ext}$ on every torus, and
Proposition~\ref{prop:local-transfer} transfers this bound to $\nu$. Hence both
overlap signs contain infinite blue components $\nu$-almost surely.
\end{proof}

Positive root probability comes from a finite-volume inequality, its limit,
and an exploration in the limit law. Almost-sure existence is a separate
exterior-conditioning argument and uses no ergodicity. As in
Theorem~\ref{thm:headline}, the conclusions concern every selected periodic
joint limit and its own conditional kernel; nothing is asserted about other
selections of quenched Gibbs states.

\begin{corollary}[Dimensions 16 to 20 and 22]\label{cor:single-floor}
For iid fair couplings $J_e=\pm1$, let
\[
\begin{gathered}
 (d,t)\in\bigl\{(16,\tfrac{13}{125}),\ (16,\tfrac{51}{500}),\
 (17,\tfrac1{10}),\ (18,\tfrac{12}{125}),\\
 (19,\tfrac7{75}),\ (20,\tfrac1{11}),\ (22,\tfrac{11}{125})\bigr\},
\end{gathered}
\]
and $\beta=\operatorname{atanh}t$. Then $p_0(2d,\beta)>F_d$. Consequently every
$\nu\in\mathcal L_{d,\beta}$ satisfies $\nu(o\blue\infty,q_o=s)\ge\theta_{\rm fs}$
for $s=\pm1$, with the constant $\theta_{\rm fs}>0$ of Table~\ref{tab:fs-floor}, and
both overlap signs contain infinite blue components $\nu$-almost surely.
\end{corollary}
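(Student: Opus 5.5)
The corollary is a direct application of Theorem~\ref{thm:single-floor}. Everything reduces to certifying, at each of the seven listed pairs, the single strict inequality $p_0(2d,\beta)>F_d$. Once that holds, Proposition~\ref{prop:unit-star} with $h=0$ shows that $p_0(2d,\beta)$ is a uniform fresh-star floor on every torus $\mathbb T_L^d$ with $L\ge3$. These tori are simple $2d$-regular graphs, and the history is arbitrary in $\mathcal F_v^{\rm ext}$. The theorem then yields both the root bound and the almost-sure coexistence. We set $\theta_{\rm fs}$ equal to the right side of \eqref{eq:fs-root-bound} with $\underline p$ replaced by a certified rational lower bound for $p_0(2d,\beta)$ and $F_d$ by a rational upper bound. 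This is legitimate because the expression $(\underline p-F)/(\underline p(1-F))$ is increasing in $\underline p$ and decreasing in $F$ for $F<\underline p\le1$; indeed it equals $(1-F/\underline p)/(1-F)$, and its derivative in $F$ has sign $\underline p^{-1}\cdot(-1)(1-F)+(1-F/\underline p)\le0$ when $\underline p\le1$. The tabulated $\theta_{\rm fs}$ is therefore a valid lower bound.

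For the upper bound on $F_d=1-1/G_d$, we need a lower bound on $G_d$ only in the direction that makes $F_d$ large, so we bound $G_d$ from above. We compute $u_n(d)$ exactly as rationals for $n<N$ with $N=4d$, as in Section~\ref{sec:ov-constants}, via the multinomial sum or a partition sum. The tail $\sum_{n\ge N}u_n$ is controlled by Lemma~\ref{lem:ov-green}(ii): group by $a=\lfloor n/d\rfloor$, note $u_n\le\sqrt d(2\pi a)^{-(d-1)/2}$, and use $\sum_{a\ge A}a^{-s}\le A^{-s}+A^{1-s}/(s-1)$. This leaves the irrational quantities $\sqrt d$ and $\pi$, which we replace by rational enclosures. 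The result is a rational upper bound $\bar F_d$ for $F_d$, which is about $1/d+O(d^{-2})$.

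For the lower bound on the floor $p_0(2d,\beta)$, we follow the definition \eqref{eq:star-parameter}. It is a minimum over overlap arrangements $k=1,\ldots,2d$ of explicit finite expressions in $\eta=e^{-2\beta}=(1-t)/(1+t)$, which are rational at rational $t$. In each arrangement we choose either the linear bound \eqref{eq:linear-star}, with suitable $\lambda$, $N$ and $M$, or the signed-residual test \eqref{eq:signed-ratio}, with a tangent point $c$ and $\lambda=1/\underline p$, checking $\mathcal M_{0,k}(\lambda,c)\le0$. All quantities are finite sums of rationals, so every comparison is decided exactly. We take $\underline p$ to be a fixed rational slightly below the arrangement minimum, and check $\underline p>\bar F_d$ exactly.

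The main obstacle is the thinness of the margin at the smallest dimensions, $d=16$ and the two temperatures there. The floor is roughly $\tanh2\beta$ times a star cost, while $F_d\approx1/d$, so the temperature window is narrow. The binding arrangements are expected to be those with few agreeing neighbours, where the crude linear bound \eqref{eq:linear-star} loses the most. I would first try the signed residual in those arrangements, tuning $\lambda$ and $c$ by a floating-point search and then verifying exactly. I would try the linear bound elsewhere. If $d=16$ still fails, the remedy is to adjust $t$ within the listed pairs.
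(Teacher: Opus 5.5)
\textbf{Verdict.} Your route is sound but differs from the paper's in both halves.

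\textbf{The floor.} You propose to evaluate every arrangement $k=1,\ldots,2d$ separately. The paper instead shows that the minimum in \eqref{eq:star-parameter} sits at $k=1$, using Lemmas~\ref{lem:fs-balanced-tilt}, \ref{lem:fs-floor-identity}, \ref{lem:fs-monotone} and Proposition~\ref{prop:fs-block-reduction}. It covers $\{2,\ldots,2d\}$ by blocks on which the monotone bound $\mathcal B(k_a,k_b)\le\Gamma_1$ holds (Table~\ref{tab:fs-blocks}). Hence $p_0(2d,\beta)=p_*(d,t)$, given by the closed form \eqref{eq:fs-k1-floor}, and only $E_1$, $E_{2d}$ and two fair-binomial sums per block are needed, at most twenty sums per point.

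\textbf{The bound on $F_d$.} The paper uses the closed-form $\widehat G_d$ of Lemma~\ref{lem:collision-green}(f). You sum $u_n$ exactly for $n<4d$ and bound the tail with Lemma~\ref{lem:ov-green}(ii).

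\textbf{Trade-offs.} Your version is heavier. The full row takes $\tfrac32\Delta(\Delta+1)$ binomial sums, $2970$ at $d=22$. It also needs the reduction \eqref{eq:residual-sums} of the orientation maximum in $M_{0,k}$, since as written that is a maximum over $\{\pm1\}^{2d}$. In exchange it dispenses with the monotonicity lemmas and gives a much sharper $F_d$. The paper records essentially this computation as its independent check. The minimum is at $k=1$ at every point, and $p_*/F_d$ improves to $1.0146,\ldots,1.2103$. So your plan succeeds, with a margin of about $1.5\%$ at $d=16$. The tabulated $\theta_{\rm fs}$ then follows from the monotonicity you correctly verify, since your bounds are no worse than the tabulated ones.

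\textbf{Caveat on the first assertion.} The statement concerns $p_0(2d,\beta)$ itself. That means the prescribed $\lambda_{0,k}$, $N_{0,k}$, $M_{0,k}$ must be used in every arrangement. A signed-residual certificate \eqref{eq:signed-ratio}, or the linear bound with some other ``suitable'' $\lambda$ and $M$, certifies a uniform fresh-star floor. That floor is enough for Theorem~\ref{thm:single-floor}, but it gives no lower bound on $p(k)$. For example, at $(22,\tfrac{11}{125})$ the signed test at $k=2$ certifies $0.0586$ (Remark~\ref{rem:fs-signed}), while $p(2)=0.05832\ldots$ and $p_0=p(1)\approx0.05778$. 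Taking $\underline p$ ``slightly below the arrangement minimum'' of mixed certificates therefore would not prove $p_0>F_d$. Use the prescribed linear bound throughout; it already suffices, so signed residuals are unnecessary here. Finally, the pairs $(d,t)$ are fixed by the statement, so ``adjusting $t$'' is not an available remedy, though it is also not needed.
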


Section~\ref{sec:fs-certificates} proves the corollary. There the minimum over
arrangements is shown to occur at $k=1$, where the floor has the closed form
\eqref{eq:fs-k1-floor}, and exact arithmetic compares that closed form with an
explicit upper bound on $F_d$. At $d=22$ the floor exceeds $F_{22}$ by a factor of at least $1.2077$.
Since the joint law depends on the couplings and on $\beta$ only through
$\beta J_e$, the same conclusions hold for couplings $\pm J_0$ at
$\beta J_0=\operatorname{atanh}t$. Each conclusion concerns the stated dimension
and temperature; no monotonicity in $d$ or in $\beta$ is asserted.

\subsection{Oriented paths and the collision Green function}\label{sec:fs-oriented}

Call a path in $\Z^d$ \emph{oriented} if its steps lie in
$\{e_1,\ldots,e_d\}$.

\begin{lemma}[Oriented site second moment]\label{lem:oriented-pz}
Let $0<p\le1$, let $X_o=1$, and let $(X_x)_{x\ne o}$ be iid Bernoulli$(p)$ on
$\Z^d$. The difference walk $Y-Y'$ of \eqref{eq:fs-collision-green} returns to
the origin with probability $F_d$. If $p>F_d$, then
\[
 \Pp\bigl(\text{some infinite oriented path } o=\gamma_0,\gamma_1,\ldots
   \text{ has } X_{\gamma_m}=1\text{ for all }m\bigr)
 \ge\frac{p-F_d}{p(1-F_d)}.
\]
In particular $\pc(\Z^d)\le F_d$, and $\thsite(p)\ge(p-F_d)/(1-F_d)$ when the
root is not forced.
\end{lemma}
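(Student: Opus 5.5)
The plan is the classical Kesten/Cox--Durrett second-moment argument along oriented paths, in the renewal form already used in Section~\ref{sec:ov-second-moment} with $K'=0$. For $n\ge1$ I would set
\[
 N_n=\sum_{\gamma\in[d]^n}d^{-n}p^{-n}\mathbf1\{X_{\tilde x_k(\gamma)}=1,\ 1\le k\le n\},
\]
where $\tilde x_k(\gamma)$ are the lifts of the oriented word $\gamma$. The vertices of an oriented path are distinct, since coordinate sums increase. Hence $\E N_n=1$. The event $\{N_n>0\}$ means some occupied oriented path of length $n$ starts at $o$. These events decrease in $n$, and by K\"onig's lemma, since each vertex has finitely many oriented successors, their intersection is the event that an infinite occupied oriented path exists. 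It therefore suffices to bound $\Pp(N_n>0)\ge1/\E N_n^2$ uniformly in $n$, by Cauchy--Schwarz as in Lemma~\ref{lem:ov-pz}.

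For the second moment, take two independent uniform words $\gamma,\gamma'$. Two oriented paths can share a vertex only at equal times, because coordinate sums equal the time. So the shared nonroot vertices are exactly the times $1\le k\le n$ with $Z_k=0$, where $Z$ is the difference walk of Lemma~\ref{lem:ov-green}. Independence of the sites gives
\[
 \E N_n^2=\mathbf E_0\bigl[p^{-\#\{1\le k\le n:Z_k=0\}}\bigr].
\]
Let $V$ be the number of returns of $Z$ to $0$ after time $0$. By Lemma~\ref{lem:ov-green}(iii) and the strong Markov property, $V$ is geometric: $\mathbf P(V\ge j)=F_d^{\,j}$, with $F_d=\mathbf P_0(\mathfrak t<\infty)$. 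In particular $F_d=1-1/G_d$, since the number of visits including time $0$ has mean $G_d$; this settles the first assertion. This geometric identity holds in every dimension; if $G_d=\infty$ then $F_d=1$ and there is nothing to prove. Monotone convergence then gives, for $p>F_d$,
\[
 \sup_n\E N_n^2\le\E p^{-V}=\sum_{j\ge0}(1-F_d)F_d^{\,j}p^{-j}=\frac{1-F_d}{1-F_d/p}=\frac{p(1-F_d)}{p-F_d}.
\]
This yields the stated lower bound $(p-F_d)/(p(1-F_d))$.

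The consequences follow directly. The bound is positive for every $p>F_d$, so $\pc(\Z^d)\le F_d$. Without forcing the root, an infinite occupied cluster of $o$ needs only the additional independent event $X_o=1$, which contributes a factor $p$ and gives $\thsite(p)\ge(p-F_d)/(1-F_d)$. The only step needing care is the identification of shared vertices with zeros of $Z$ together with the geometric law of $V$. Both are already contained in Lemma~\ref{lem:ov-green} and in the meeting-at-equal-times observation of Section~\ref{sec:ov-second-moment}, so I do not expect a real obstacle.
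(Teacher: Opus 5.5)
Your proposal is correct and follows the paper's proof essentially step for step. It uses the same normalized count of occupied oriented paths, identifies shared non-root vertices with returns of the difference walk, and uses the geometric law of the number of returns to get both $F_d=1-1/G_d$ and $\E p^{-K}=(1-F_d)/(1-F_d/p)$, followed by Cauchy--Schwarz and K\"onig's lemma for the decreasing events. The only difference is cosmetic: you cite Lemma~\ref{lem:ov-green}(iii), which sits under that section's standing assumption $d\ge6$, whereas the paper reproves the geometric identity directly; your remark that the identity holds in every dimension, including $G_d=\infty$, covers this.
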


\begin{proof}
The coordinate sum of an oriented path equals its length. Hence oriented paths
are self-avoiding, and two oriented paths from $o$ can share a vertex only at
equal times. There are $d^n$ oriented paths of length $n$ from $o$.

Let $Z=Y-Y'$, a random walk with iid increments, and put
$K=\#\{m\ge1:Z_m=0\}$ and $F'=\Pp(Z_m=0\text{ for some }m\ge1)$. By the strong
Markov property at successive returns, $\Pp(K\ge j)=F'^{\,j}$, so
$\E K=F'/(1-F')$, which is infinite when $F'=1$. By Tonelli's theorem,
$\E K=\sum_{m\ge1}u_m(d)=G_d-1$. Therefore $F'=1-1/G_d=F_d$, also when
$G_d=\infty$. For $p>F_d$,
\[
 \E p^{-K}=\sum_{j\ge0}(1-F_d)F_d^{\,j}p^{-j}=\frac{1-F_d}{1-F_d/p}<\infty.
\]

For $n\ge1$ put
\[
 W_n=\sum_{|\gamma|=n}(dp)^{-n}\,
 \mathbf1\{X_{\gamma_1}=\cdots=X_{\gamma_n}=1\},
\]
the sum running over oriented paths of length $n$ from $o$; thus $\E W_n=1$.
For two such paths $\gamma,\gamma'$, the union of their non-root vertex sets has
$2n-K_n(\gamma,\gamma')$ elements, where
$K_n(\gamma,\gamma')=\#\{1\le m\le n:\gamma_m=\gamma'_m\}$. Hence
\[
 \E W_n^2=\sum_{\gamma,\gamma'}d^{-2n}p^{-K_n(\gamma,\gamma')}
 =\E p^{-K_n}\le\E p^{-K},
\]
where in the middle expression $\gamma,\gamma'$ are the paths of the two
independent walks, $K_n\le K$, and $p\le1$. The Cauchy--Schwarz inequality
$1=(\E W_n\mathbf1_{\{W_n>0\}})^2\le\E W_n^2\,\Pp(W_n>0)$ gives
\[
 \Pp(W_n>0)\ge\frac{1-F_d/p}{1-F_d}=\frac{p-F_d}{p(1-F_d)}
 \qquad(n\ge1).
\]
The event $\{W_n>0\}$ says that an occupied oriented path of length $n$ starts
at $o$. These events decrease in $n$, because prefixes of occupied paths are
occupied, so their intersection has at least the same probability. On the
intersection, the occupied oriented paths from $o$ form a tree in which every
vertex has at most $d$ children and which contains paths of every length. By
K\"onig's lemma it contains an infinite path. For an unforced root, multiply
by $\Pp(X_o=1)=p$.
\end{proof}

This is Kesten's second-moment and renewal argument as presented by Cox and
Durrett \cite[Section 2, pp.~154--155, Eqs.~(2.3)--(2.8)]{CoxDurrett1983}, who
record that the bound was first proved by Kesten. Their version concerns
oriented bond percolation and counts the edges shared by the two walks. The
site version is implicit in the proof of \cite[Proposition~5.22]{LyonsPeres2016},
which bounds the shared edges by meetings of the two walks, and a version for a
dependent site field is used in \cite[proof of Theorem~1.2]{BDNS2020}. With
lazy-walk paths, concurrent work obtains sharper site bounds in intermediate
dimensions, for instance $\pc(\Z^{15})\le0.0691<F_{15}$
\cite[Section~4 and Table~2]{JiangLang2026}.

To evaluate $F_d$, write
$m_n(\mathbf k)=\frac{n!}{k_1!\cdots k_d!}d^{-n}$ for compositions
$\mathbf k=(k_1,\ldots,k_d)$ of $n$, and $\bar m_n=\max_{\mathbf k}m_n(\mathbf k)$
for the largest multinomial atom.

\begin{lemma}[Collision Green function bounds]\label{lem:collision-green}
The following hold.
\begin{enumerate}
\item[(a)] $u_n=\sum_{\mathbf k}m_n(\mathbf k)^2$, with $u_0=1$, $u_1=1/d$,
$u_2=(2d-1)/d^3$ and $u_3=(6d^2-9d+4)/d^5$.
\item[(b)] If $n=da+j$ with $0\le j<d$, then
$\bar m_n=n!/\bigl(d^n(a+1)!^{\,j}a!^{\,d-j}\bigr)$. In particular
$\bar m_n=n!/d^n$ for $n\le d$.
\item[(c)] $u_n\le\bar m_n$ and $\bar m_{n+1}\le\bar m_n$.
\item[(d)] $\bar m_{d(a+1)}\le\bar m_{da}\exp\bigl(-\frac{d-1}{2(a+1)}\bigr)$.
Hence $\bar m_{da}\le\bar m_{dA}\bigl(\frac{A+1}{a+1}\bigr)^{(d-1)/2}$ for
$a\ge A\ge0$.
\item[(e)] If $d\ge4$ and $A\ge1$, then
$\sum_{n\ge dA}u_n\le d\,\bar m_{dA}\bigl(1+\frac{2(A+1)}{d-3}\bigr)$.
\item[(f)] If $d\ge6$, then $G_d\le\widehat G_d$, where
\[
 \widehat G_d=1+\frac1d+\frac{2d-1}{d^3}+\frac{6d^2-9d+4}{d^5}
 +\frac{24}{d^4}+\frac{120}{d^5}+\frac{720(d-6)}{d^6}
 +d\,\frac{d!}{d^d}\Bigl(1+\frac4{d-3}\Bigr).
\]
\item[(g)] $G_{22}\le1+\frac1{22}+\frac2{22^2}+\frac{19\cdot6}{22^3}
+22\,\bar m_{22}\bigl(1+\frac4{19}\bigr)<1.0602932$.
\end{enumerate}
\end{lemma}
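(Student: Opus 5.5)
I will prove the seven items in order, each from elementary properties of multinomial atoms. For (a), $Y_n=Y'_n$ means that both walks have the same composition vector $\mathbf k$, and $\Pp(Y_n=\mathbf k)=m_n(\mathbf k)$. This gives $u_n=\sum_{\mathbf k}m_n(\mathbf k)^2$, and the small cases follow by enumerating compositions:
\begin{itemize}
\item $n=2$: the compositions $(2)$ and $(1,1)$ give $d\cdot d^{-4}+\binom d2\cdot4d^{-4}$;
\item $n=3$: the types $(3)$, $(2,1)$ and $(1,1,1)$ give $d+d(d-1)\cdot9+\binom d3\cdot36$, divided by $d^6$.
\end{itemize}
For (b), the multinomial coefficient is maximized by the most balanced composition. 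If $k_i\ge k_j+2$, then moving one unit from $k_i$ to $k_j$ multiplies the coefficient by $k_i/(k_j+1)>1$.

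For (c), $u_n\le\max_{\mathbf k}m_n(\mathbf k)\sum_{\mathbf k}m_n(\mathbf k)=\bar m_n$. Monotonicity holds because the law at time $n+1$ is an average of shifts of the law at time $n$, as in Lemma~\ref{lem:ov-green}(ii).

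For (d), I compute the ratio directly:
\[
\bar m_{d(a+1)}/\bar m_{da}=\frac{(da+d)!}{(da)!\,d^d(a+1)^d}=\prod_{i=1}^{d}\frac{da+i}{d(a+1)}=\prod_{i=1}^{d}\Bigl(1-\frac{d-i}{d(a+1)}\Bigr).
\]
Using $1-x\le e^{-x}$, this is at most $\exp\bigl(-\sum_i(d-i)/(d(a+1))\bigr)=\exp\bigl(-\tfrac{d-1}{2(a+1)}\bigr)$. Iterating from $A$ to $a$ gives the exponent $-\tfrac{d-1}{2}\sum_{b=A+1}^{a}1/b$. Since $\sum_{b=A+1}^{a}1/b\ge\log\frac{a+1}{A+1}$, the polynomial bound follows.

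For (e), I group $n$ into blocks $[da,da+d)$. Each block has $d$ terms, and by (c) each term is at most $\bar m_{da}$. Then (d) gives
\[
\sum_{n\ge dA}u_n\le d\,\bar m_{dA}\,(A+1)^{s}\sum_{a\ge A}(a+1)^{-s},\qquad s=(d-1)/2>1.
\]
The tail sum is at most $(A+1)^{-s}+(A+1)^{1-s}/(s-1)$, which yields $1+2(A+1)/(d-3)$.

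For (f), I take $A=1$. The terms $n\le3$ are exact by (a), and for $4\le n<d$ I use $u_n\le\bar m_n=n!/d^n$ from (b)–(c):
\begin{itemize}
\item $n=4$ and $n=5$ give $24/d^4$ and $120/d^5$;
\item for $6\le n<d$, each $n!/d^n$ is at most $720/d^6$ (the ratio $(n+1)/d\le1$), and there are $d-6$ such terms;
\item the tail $n\ge d$ is handled by (e) with $\bar m_d=d!/d^d$.
\end{itemize}
Item (g) follows the same scheme specialized to $d=22$. I expect the main obstacle to be checking the middle terms, namely the bounds $\sum_{n=2}^{3}u_n\le 2/22^2+19\cdot6/22^3$ and the treatment of $4\le n<22$. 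These likely rely on $n!/22^n$ decaying quickly, so that the terms $u_2$, $u_3$ and $\sum_{4\le n<22}n!/22^n$ are bounded together, together with an exact rational evaluation of $22\,\bar m_{22}(1+4/19)$ to confirm the final numeric inequality $<1.0602932$. I would carry out that last check by exact arithmetic.
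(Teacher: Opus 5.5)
Your arguments for (a)--(f) follow the paper's proof step for step: composition counting, the balancing move, averaging of shifts, the product $\prod_i\bigl(1-\frac{d-i}{d(a+1)}\bigr)$, blocks of length $d$ with an integral tail, and $A=1$ in (f).

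The gap is in (g). You have not found where the displayed middle terms come from, and the split you sketch does not produce them. The term $19\cdot6/22^3$ is not part of a bound on $u_2+u_3$. It is nineteen copies of $\bar m_3=6/22^3$, one for each index $3\le n\le21$, because (c) gives $u_n\le\bar m_n\le\bar m_3$ for all of them. Add $u_2\le\bar m_2=2/22^2$, the exact values $u_0=1$ and $u_1=1/22$, and (e) with $A=1$ for $n\ge22$. This yields the displayed bound with no separate treatment of $4\le n<22$. The exact check is then $1+642/22^3+22\,\bar m_{22}(1+4/19)$, where the last term is below $8.8\cdot10^{-8}$.

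Your literal plan overshoots. Bounding $u_2+u_3$ by the two middle terms and then adding $\sum_{4\le n<22}n!/22^n\approx1.35\cdot10^{-4}$ exceeds the target, because the displayed expression lies only about $10^{-7}$ below $1.0602932$. So that route does not prove the claim as stated. Your alternative reading, running the (f) scheme at $d=22$, does work: it gives $\widehat G_{22}\approx1.0502$. It then needs one further exact comparison of $\widehat G_{22}$ with the displayed expression; the one-line monotonicity argument above avoids that.
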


\begin{proof}
(a) Identify the position $Y_n$ with its vector of step counts. Then
$u_n=\sum_{\mathbf k}\Pp(Y_n=\mathbf k)^2$ with $\Pp(Y_n=\mathbf k)=m_n(\mathbf k)$.
For $u_2$, the count vectors are $2e_i$, $d$ of them with probability $d^{-2}$
each, and $e_i+e_j$ with $i<j$, $\binom d2$ of them with probability $2d^{-2}$
each. For $u_3$, the types $3e_i$, $2e_i+e_j$ and $e_i+e_j+e_l$ have multinomial
coefficients $1,3,6$, so
$u_3=[d+9d(d-1)+6d(d-1)(d-2)]d^{-6}=(6d^2-9d+4)/d^5$.

(b) If $k_i\ge k_j+2$, moving one unit from $i$ to $j$ multiplies
$n!/\prod_lk_l!$ by $k_i/(k_j+1)>1$. The maximum is therefore attained at a
composition whose parts differ by at most one.

(c) First, $u_n\le\bar m_n\sum_{\mathbf k}m_n(\mathbf k)=\bar m_n$. Second,
$m_{n+1}(\mathbf k)=d^{-1}\sum_{i:k_i\ge1}m_n(\mathbf k-e_i)\le\bar m_n$.

(d) By (b),
$\bar m_{d(a+1)}/\bar m_{da}=\prod_{i=1}^d\frac{da+i}{d(a+1)}
=\prod_{l=0}^{d-1}\bigl(1-\frac l{d(a+1)}\bigr)$. Since
$\log(1-x)\le-x$, its logarithm is at most $-\frac{d-1}{2(a+1)}$. Iterate,
and use $\sum_{b=A+1}^a\frac1b\ge\log\frac{a+1}{A+1}$.

(e) Group the indices $n\in[da,da+d)$; there $u_n\le\bar m_n\le\bar m_{da}$ by
(c). With $\sigma=(d-1)/2>1$, part (d) gives
\begin{align*}
 \sum_{n\ge dA}u_n
 &\le d\,\bar m_{dA}\sum_{a\ge A}\Bigl(\frac{A+1}{a+1}\Bigr)^\sigma\\
 &\le d\,\bar m_{dA}(A+1)^\sigma\Bigl[(A+1)^{-\sigma}
   +\int_{A+1}^\infty x^{-\sigma}\,dx\Bigr]
 =d\,\bar m_{dA}\Bigl(1+\frac{A+1}{\sigma-1}\Bigr).
\end{align*}

(f) Use (a) for $n\le3$. For $4\le n\le d-1$, (b) and (c) give
$u_n\le\bar m_n=n!/d^n$, and for $6\le n\le d-1$ these $d-6$ terms are each at
most $\bar m_6=720/d^6$ by (c). Apply (e) with $A=1$, where
$\bar m_d=d!/d^d$.

(g) Use $u_2\le\bar m_2=2/d^2$, then $u_n\le\bar m_3=6/d^3$ for
$3\le n\le21$ by (c), and (e) with $A=1$. Here
$22\,\bar m_{22}(1+4/19)<8.8\cdot10^{-8}$.
\end{proof}

Cox and Durrett \cite[p.~155]{CoxDurrett1983} use the comparison
$\Pp(S_k=S'_k)\le\max_x\Pp(S_k=x)$ for their walks $S,S'$ and the balanced-atom
bound $\Pp(S_k=x)\le d^{-jd}(jd)!/(j!)^d$ for $k\ge jd$, followed by Stirling's
formula. The elementary ratio bound (d) replaces Stirling's formula here.

\subsection{Proof of the star estimate}\label{app:star}

We prove Proposition~\ref{prop:unit-star}. Here $\Delta$ is the actual target
degree, equal to $2d$ on the periodic lattices. For $\beta>0$, put
\[
 t=\tanh\beta,\quad c_0=\cosh\beta,\quad C=\cosh(2\beta),\quad
 \vartheta=\tanh(2\beta),\quad \eta=e^{-2\beta},\quad\alpha=1-\eta^2.
\]
All are rational functions of $t$; for instance $\eta=(1-t)/(1+t)$,
$C=(1+t^2)/(1-t^2)$, $\vartheta=2t/(1+t^2)$ and $c_0^{-2}=1-t^2$. In the
notation of Section~\ref{sec:overlap-route}, $C=C_\beta$ and $\alpha=p_A$. Label the
neighbors of $v$ as $y_0=u,y_1,\ldots,y_{\Delta-1}$, so that the proposed edge
has index $0$. There are $h$ specified failed edges whose other endpoints have
the parent's overlap. Let $U$ be the remaining matching-overlap indices,
including $0$, and let $V$ be the opposite-overlap indices. Write $k=|U|\ge1$
and $s=|V|=\Delta-h-k$. Expectations over $\epsilon$ below use independent fair
signs as a reference measure; the quenched normalization is retained in every
conditional probability.

\paragraph{Normalized masses.}
Delete the entire target star, keeping its center isolated, and write $G^\circ$
for the resulting graph. For fixed exterior disorder and star signs
$\epsilon_i=J_{vy_i}$, the original one-replica partition ratio is
\begin{equation}\label{eq:full-star-ratio}
 R(\epsilon)=\frac{Z_G}{Z_{G^\circ}}
 =\E_{\rho\sim\mu_{G^\circ,J,\beta}}
       \cosh\!\left(\beta\sum_i\epsilon_i\rho_{y_i}\right)\ge1.
\end{equation}
The auxiliary $\rho$ is sampled from the cavity Gibbs law defining the
partition function, not conditioned on the observed history. Relative to
the augmented cavity law, with fair deleted signs and uniform central spins,
the full two-replica density is
\[
 R(\epsilon)^{-2}
 \exp\!\left(\beta\sum_i\epsilon_i
       (\sigma_v\sigma_{y_i}+\tau_v\tau_{y_i})\right).
\]
All other factors of the joint law, including every off-star blue likelihood,
are $\mathcal F_v^{\mathrm{ext},J}$-measurable. Hence, on each
positive-probability atom of $\mathcal F_v^{\mathrm{ext},J}$, the conditional law
of the star signs and central spins has a density proportional to this
expression with respect to fair signs and uniform central spins. Given these
variables, the star blue indicators are independent with the rule
\eqref{eq:blue}.
Fix the observed exterior replicas and gauge the star signs by
$\sigma_{y_i}$, transforming the cavity coordinates in $R$ accordingly; this
preserves the fair reference measure. The exponent becomes
$\beta\sum_i\epsilon_i(\sigma_v+\tau_vq_{y_i})$. In the sector $q_v=q_u$ it
equals $2\beta\sigma_v\sum_{i\notin V}\epsilon_i$, and in the sector $q_v=-q_u$
it equals $2\beta\sigma_v\sum_{i\in V}\epsilon_i$. Both replicas satisfy $vu$,
or a failed edge $vy_i$, exactly when $q_v=q_u$ and $\sigma_v$ equals the gauged
sign of that edge. Integrating the failed activations by
\eqref{eq:failed-factor} and averaging the two central spins gives
\begin{align}
 w_{h,k}(\epsilon)
 &=\tfrac12\left[\eta^h
       \cosh\!\left(2\beta\sum_{i\in U}\epsilon_i\right)
       +\cosh\!\left(2\beta\sum_{i\in V}\epsilon_i\right)\right],
 \label{eq:star-weights}\\
 a_{h,k}(\epsilon)
 &=\eta^h\frac{\alpha}{4}
       \exp\!\left(2\beta\epsilon_0\sum_{i\in U}\epsilon_i\right).
 \notag
\end{align}
The conditional opening probability is $A_{h,k}/D_{h,k}$, where
\[
 A_{h,k}=\E_\epsilon[a_{h,k}/R^2],\qquad
 D_{h,k}=\E_\epsilon[w_{h,k}/R^2].
\]
All $\Delta$ signs, including the $h$ failed-edge signs, still occur in
\eqref{eq:full-star-ratio}. Their disappearance from the spin weights in
\eqref{eq:star-weights} does not remove them from $R$. Off-star blue
observations remain retained exterior observables.

\paragraph{A paired numerator.}
Define
\begin{equation}\label{eq:star-numerator}
 H_\ell=\frac{(1+t)(1+t\vartheta)^\ell+(1-t)(1-t\vartheta)^\ell}{2},\qquad
 N_{h,k}=\frac{\eta^h t(1-t^2)^{\Delta-1}C^{k-1}}{H_{k-1}^2}.
\end{equation}
We claim $A_{h,k}\ge N_{h,k}$. Both $a_{h,k}$ and $R$ are invariant under
$\epsilon\mapsto-\epsilon$, so the tested sign may be fixed to $\epsilon_0=1$;
this pairs the opposite orientations. With $\ell=k-1$, index
$U\setminus\{0\}$ by $1,\ldots,\ell$ and put
$W=\exp(2\beta\sum_{i=1}^{\ell}\epsilon_i)$ and $Q=\E W=C^\ell$. Then
\[
 A_{h,k}=\eta^h\frac{\alpha e^{2\beta}}4\E(W/R^2),
\]
with the tested sign fixed as above. Weighted Jensen gives
\begin{equation}\label{eq:paired-jensen}
 \E(W/R^2)\ge\frac{Q^3}{(\E WR)^2}.
\end{equation}
The other $\Delta-k$ signs, including failed ones, are still fair in this
expectation. Expanding the cavity cosh and averaging all remaining signs gives
\[
 \frac{\E WR}{Q}
 =c_0^{\Delta-1}\E_\rho
  \frac{e^\beta\prod_{i=1}^{\ell}(1+t\vartheta y_i)
              +e^{-\beta}\prod_{i=1}^{\ell}(1-t\vartheta y_i)}2
 \le c_0^\Delta H_\ell,
\]
where $y_i$ are the relative cavity coordinates measured against the tested
neighbor; they can be dependent. The inequality is
pointwise: even and odd monomials have nonnegative coefficients, respectively
$\cosh\beta\,(t\vartheta)^{|S|}$ and $\sinh\beta\,(t\vartheta)^{|S|}$, so all
$y_i=1$ maximizes the expression. Finally $\alpha e^{2\beta}/4=t c_0^2$ and
\eqref{eq:paired-jensen} give $A_{h,k}\ge N_{h,k}$.

\paragraph{A nonnegative residual.}
Let $c(s)=1$ when $s$ is even and $c(s)=C$ otherwise, and put
\begin{equation}\label{eq:star-lambda}
 \lambda_{h,k}=\frac{1+\eta^{2k}+2c(s)\eta^{k-h}}{\alpha}.
\end{equation}
This is exactly $\min_\epsilon w_{h,k}/a_{h,k}$. First minimize the
$V$-cosh at $c(s)$, and then maximize
$x=\epsilon_0\sum_{i\in U}\epsilon_i\le k$ in the decreasing expression
\[
 \frac{1+e^{-4\beta x}+2c(s)\eta^{-h}e^{-2\beta x}}{\alpha}.
\]
Thus $w_{h,k}-\lambda_{h,k}a_{h,k}\ge0$ pointwise. Inverse Jensen gives
\[
 R(\epsilon)^{-2}\le
 \E_\rho\sech^2\!\left(\beta\sum_i\epsilon_i\rho_{y_i}\right).
\]
Apply this bound to the entire nonnegative residual before separating its
signed terms. The result is $D_{h,k}-\lambda_{h,k}A_{h,k}\le M_{h,k}$, where
\begin{equation}\label{eq:star-residual}
 M_{h,k}=\max_{x\in\{-1,1\}^{\Delta}}
 \E_\epsilon\!\left[(w_{h,k}-\lambda_{h,k}a_{h,k})
              \sech^2\!\left(\beta\sum_i\epsilon_i x_i\right)\right].
\end{equation}
Since $M_{h,k}\ge0$, the comparison \eqref{eq:linear-star} proves
\begin{equation}\label{eq:star-parameter}
 p_h(\Delta,\beta)
 :=\min_{1\le k\le\Delta-h}
       \frac{N_{h,k}}{\lambda_{h,k}N_{h,k}+M_{h,k}}
 \le\nu_{G,\beta}(B_e=1\mid\mathcal F_v^{\mathrm{ext},J},F_h).
\end{equation}
This proves Proposition~\ref{prop:unit-star}. Every estimate is pointwise in
retained exterior data. Multiplication by nonnegative retained tests and
integration therefore preserve the inequalities, including additional
allowed history events. Incident signs are averaged throughout and cannot
also be conditioned on. At $\beta=0$ set $p_h=0$; the positive-temperature
statement is the one used in this appendix.

\subsection{A signed residual}\label{app:signed-residual}

For any $\lambda,c>0$, put $H=w_{h,k}-\lambda a_{h,k}$,
$H_+=\max(H,0)$ and $H_-=\max(-H,0)$. Define
\begin{equation}\label{eq:signed-residual}
\begin{split}
 \mathcal M_{h,k}(\lambda,c)
 =\max_{x\in\{-1,1\}^{\Delta}}\E_\epsilon\bigg[
 &H_+\sech^2(\beta\epsilon\cdot x)\\
 &+H_-\left\{\frac{2\cosh(\beta\epsilon\cdot x)}{c^3}
                     -\frac3{c^2}\right\}\bigg].
\end{split}
\end{equation}
We claim $D_{h,k}-\lambda A_{h,k}\le\mathcal M_{h,k}(\lambda,c)$.
For $y>0$, the tangent lower bound is the identity
\[
 y^{-2}-3c^{-2}+2yc^{-3}
   =\frac{(y-c)^2(2y+c)}{c^3y^2}\ge0.
\]
Write $K_x(\epsilon)=\cosh(\beta\epsilon\cdot x)$, so
$R=\E_\rho K_\rho$. Inverse Jensen and the tangent bound give
\[
 \E_\epsilon\frac{H}{R^2}
 \le\E_\rho\E_\epsilon
       \left[H_+K_\rho^{-2}+2c^{-3}H_-K_\rho\right]
        -3c^{-2}\E_\epsilon H_-.
\]
Maximizing over the common cavity orientation proves the claim and hence
\eqref{eq:signed-ratio}. Neither nonnegativity of $H$ nor a lower bound on
$A_{h,k}$ is required for that implication. All failed signs remain in $R$;
the same retained-exterior and incident-failure hypotheses apply.
The truncations $H_+$ and $H_-$ couple the active and inactive sign groups,
so their orientation maxima cannot be separated.

\subsection{Finite sums and the floor in closed form}\label{app:certificate}

Let $S_{\ell,j}^{(n)}$ be a sum of $n$ independent signs: $j$ have mean
$\vartheta$, $\ell-j$ have mean $-\vartheta$, and $n-\ell$ are fair. Write
$\bar S_n=S^{(n)}_{0,0}$ for a sum of $n$ fair signs. Put
$\pi_+=(1+\vartheta)/2$, which equals $a$ of Section~\ref{sec:overlap-route},
and $\psi(x)=\sech^2(\beta x)$. Define
\[
 T_\ell=C^\ell\max_{0\le j\le\ell}\E\psi(S_{\ell,j}^{(\Delta)}),
 \qquad
 G_{\ell,j}^{\pm}=\E\psi(\pm1+S_{\ell,j}^{(\Delta-1)}).
\]
Then the exact residual maximum reduces to
\begin{equation}\label{eq:residual-sums}
 M_{h,k}=\frac{T_s}{2}+\frac{\eta^h C^k}{2}
 \max_{0\le j<k}
 \left[(\pi_+-\lambda_{h,k}\vartheta)G_{k-1,j}^+
                     +(1-\pi_+)G_{k-1,j}^-\right].
\end{equation}
To verify this reduction, fix $x_0=1$ by global sign symmetry and gauge
$\epsilon_i$ by $x_i$. In the $V$-cosh term every sign outside $V$ is fair,
so those orientations disappear. In the $U$-cosh and blue terms every sign
outside $U$ is fair. Exponential tilting gives the stated means $\pm\vartheta$
and the factors $C^s$ and $C^k$; conditioning on the tested sign gives
$\pi_+G^++(1-\pi_+)G^-$, with blue subtraction $\lambda_{h,k}\vartheta G^+$
because $\pi_+\alpha=\vartheta$. Failed signs remain among the fair signs in
both kernels. The $U$ and $V$ maxima involve disjoint orientation coordinates,
so they add.

The probabilities of the positive-sign count in $S_{\ell,j}^{(n)}$ are the
coefficients of
\[
 \{(1-\pi_+)+\pi_+z\}^{j}\{\pi_++(1-\pi_+)z\}^{\ell-j}
                    \{(1+z)/2\}^{n-\ell}.
\]
Multiplication by the corresponding kernels evaluates every term
in \eqref{eq:residual-sums}. When $t$ is rational, so are $\eta$, $C$,
$\vartheta$, $\pi_+$, $c_0^2$ and $\psi(x)=4\eta^{|x|}/(1+\eta^{|x|})^2$ for
integers $x$; every such evaluation is then an exact rational.

For the signed cases, fix $\epsilon_0=x_0=1$ by the separate global-sign
symmetries in $\epsilon$ and $x$. Enumerate the other active signs
$e_1,\ldots,e_{k-1}$ and their cavity orientations $\xi_1,\ldots,\xi_{k-1}$.
Among the $s=\Delta-h-k$ inactive coordinates, let $j$ have cavity orientation
$+1$, with $n_+$ positive signs in that group and $n_-$ in the other group.
Absorb the failed coordinates' cavity orientations into their fair signs,
and let $n_f$ of those signs be positive. The inactive sum and full cavity
field are
\begin{align*}
 S&=2(n_++n_-)-s,\\
 L&=1+\sum_{i=1}^{k-1}\xi_i e_i
       +2n_+-2n_-+s-2j+2n_f-h.
\end{align*}
Use $1+\sum_i e_i$ and $S$ in \eqref{eq:star-weights}, and $L$ in both
kernels of \eqref{eq:signed-residual}. The multiplicity is
\[
 2^{-(\Delta-1)}\binom{j}{n_+}\binom{s-j}{n_-}\binom{h}{n_f}.
\]
Summing over all sign patterns and permitted counts, then maximizing over
$\xi$ and $0\le j\le s$, exhausts the signed residual; there are
$2^{k-1}(s+1)$ cavity orientations. In particular, the failed signs remain in
the full field $L$. The field $L$ has the parity of $h+k+s=\Delta$, so it is even
when $\Delta$ is even; then $\cosh(\beta L)=(\eta^{L/2}+\eta^{-L/2})/2$ and its
inverse square are rational as well.

\paragraph{The row \texorpdfstring{$h=0$}{h=0} as a closed form.}
From now on $h=0$ and $\Delta=2d$. Write $N_k$, $\lambda_k$ and $M_k$ for
$N_{0,k}$, $\lambda_{0,k}$ and $M_{0,k}$, put $s=\Delta-k$, and let
\[
 p(k)=\frac{N_k}{\lambda_kN_k+M_k},\qquad\text{so that}\qquad
 p_0(\Delta,\beta)=\min_{1\le k\le\Delta}p(k).
\]
Evaluating every arrangement requires of order $\Delta^2$ binomial sums. The
following lemmas give a closed form at $k=1$ and reduce the verification that
$k=1$ is the worst arrangement to a few sums. Let $\Pi_m$ be a sum of $m$
independent \emph{balanced pairs}, each the sum of one sign of mean $\vartheta$
and one of mean $-\vartheta$; a balanced pair takes the values $\pm2$ with
probability $\pi_+(1-\pi_+)=1/(4C^2)$ each and $0$ otherwise.

\begin{lemma}[Symmetric unimodal convolution]\label{lem:fs-unimodal}
Let $a,b\in\{0,1\}$. Suppose $f:a+2\Z\to[0,\infty)$ is finitely supported,
$g:b+2\Z\to[0,\infty)$ is bounded, and both are even and nonincreasing in
$|x|$. Then $(f*g)(z)=\sum_xf(x)g(z-x)$, on $a+b+2\Z$, is even and
nonincreasing in $|z|$.
\end{lemma}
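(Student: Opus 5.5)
The plan is to prove the lemma directly from the definition of the convolution, via a layer-cake decomposition of $f$, so that only convolutions with indicators of symmetric intervals need to be handled. Evenness is immediate: substituting $x\mapsto -x$ in $\sum_x f(x)g(z-x)$ and using $f(-x)=f(x)$, $g(-y)=g(y)$ gives $(f*g)(-z)=(f*g)(z)$. The sum is finite because $f$ is finitely supported, and $g$ is bounded, so all rearrangements are harmless.

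For monotonicity, write $f$ as a nonnegative combination of indicators of symmetric lattice intervals. Since $f$ is even, nonincreasing in $|x|$ and finitely supported on $a+2\Z$, we have
\[
 f=\sum_{m}c_m\mathbf 1_{I_m},\qquad c_m\ge0,\qquad I_m=\{x\in a+2\Z:|x|\le m\},
\]
with finitely many terms. Here $c_m$ is the drop $f(m)-f(m+2)$ at the level $|x|=m$, and $m$ ranges over the nonnegative values of $a+2\Z$. Convolution is linear and the coefficients are nonnegative. It therefore suffices to show that $z\mapsto\sum_{x\in I_m}g(z-x)$ is nonincreasing in $|z|$ on $a+b+2\Z$. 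By evenness, this reduces to comparing $z\ge0$ with $z+2$.

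For a fixed interval, shifting $z$ to $z+2$ replaces the window $\{z-x:x\in I_m\}$ by the window moved down by $2$. The difference is
\[
 \sum_{x\in I_m}g(z+2-x)-\sum_{x\in I_m}g(z-x)=g(z+2+m)-g(z-m),
\]
since the window $\{z-m,\dots,z+m\}$ becomes $\{z+2-m,\dots,z+2+m\}$. One endpoint is lost and one is gained. Because $z\ge0$ and $m\ge0$, we have $|z+2+m|\ge|z-m|$. Since $g$ is even and nonincreasing in $|y|$, this gives $g(z+2+m)\le g(z-m)$, so the difference is $\le0$. Summing over $m$ with the weights $c_m$ then gives the claim.

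The only point needing care is the parity bookkeeping: which values $m$ takes, and whether $z-m$ and $z+2+m$ lie in $b+2\Z$. Since $z\in a+b+2\Z$ and $m\in a+2\Z$, both endpoints indeed lie in $b+2\Z$. The degenerate case $a=1$ has no $m=0$ level, and this causes no problem. I expect this to be the main, if mild, obstacle. Everything else is routine.
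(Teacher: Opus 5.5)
Your proof is correct and follows the paper's argument essentially verbatim: decompose $f$ by layer cake into symmetric windows $\{x\in a+2\Z:|x|\le m\}$, and note that moving $z\ge0$ to $z+2$ removes the point $z-m$ and adds $z+m+2$. Since $|z+m+2|\ge|z-m|$, the window sum of $g$ does not increase.
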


\begin{proof}
Evenness is clear. By the layer-cake decomposition, $f$ is a finite
nonnegative combination of indicators of windows
$I=\{x\in a+2\Z:|x|\le\rho\}$. For such a window, $(\mathbf1_I*g)(z)$ is the sum
of $g$ over $\{y\in b+2\Z:z-\rho\le y\le z+\rho\}$. Replacing $z\ge0$ by $z+2$
removes the point $z-\rho$ and adds $z+\rho+2$. Since
$|z+\rho+2|\ge|z-\rho|$, the sum does not increase.
\end{proof}

\begin{lemma}[The balanced tilt maximizes]\label{lem:fs-balanced-tilt}
Let $n\ge\ell\ge1$ and $\vartheta\in[0,1)$, and let $f:\Z\to[0,\infty)$ be even
and nonincreasing in $|x|$. Put $g(j)=\E f(S^{(n)}_{\ell,j})$. Then
$g(j)=g(\ell-j)$, and $g(j+1)\ge g(j)$ whenever $2j+1\le\ell$. Hence
$\max_jg(j)=g(\lfloor\ell/2\rfloor)$. In particular $T_s=C^sE_k$ for
$1\le k\le\Delta$, where $s=\Delta-k$ and
$E_k=\E\psi(S^{(\Delta)}_{s,\lfloor s/2\rfloor})$.
\end{lemma}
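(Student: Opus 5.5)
The plan is to copy the proof of Lemma~\ref{lem:ov-denominator}, which the paper itself says extends to any even function that is nonincreasing in $|z|$. The tilted signs have mean $\vartheta$, so $\pi_+=a$, and the anti-tilted signs have mean $-\vartheta$. The claimed symmetry $g(j)=g(\ell-j)$ comes from replacing every sign by its negative. This turns $S^{(n)}_{\ell,j}$ into $S^{(n)}_{\ell,\ell-j}$ in law, because fair signs stay fair, and $f$ is even.

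For the monotonicity step, take $2j+1\le\ell$ and write $S_j=\zeta+\varepsilon^-$ and $S_{j+1}\overset d=\zeta+\varepsilon^+$. Here $\varepsilon^\pm$ is a single sign of mean $\pm\vartheta$, and $\zeta$ is the common independent remainder: $j$ tilted signs, $\ell-j-1$ anti-tilted signs, and $n-\ell$ fair signs. Then
\[
 g(j+1)-g(j)=\vartheta\,\E\hat f(\zeta),\qquad \hat f(z)=f(z+1)-f(z-1).
\]
Since $f$ is even and nonincreasing in $|z|$, the function $\hat f$ is odd and nonnegative on $z\le0$. It therefore suffices to show $\Pp(\zeta=z)\ge\Pp(\zeta=-z)$ for $z<0$, that is, that $\zeta$ leans to the negative side. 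To see this, split $\zeta=\zeta_{\rm s}+\zeta_{\rm a}$. The part $\zeta_{\rm s}$ consists of the $j$ balanced pairs (one tilted plus one anti-tilted sign) together with the $n-\ell$ fair signs. The part $\zeta_{\rm a}$ is the sum of the remaining $\ell-2j-1\ge0$ anti-tilted signs.

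The variable $\zeta_{\rm s}$ is symmetric and unimodal on its lattice. One way to see this is Lemma~\ref{lem:fs-unimodal}, applied repeatedly to a point mass at $0$: each fair sign and each balanced pair has a symmetric unimodal law, because $\Pp(\text{pair}=\pm2)=1/(4C^2)\le1/4\le\Pp(\text{pair}=0)$. The other way is the direct induction used in Lemma~\ref{lem:ov-denominator}.

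For $\zeta_{\rm a}$ we have $\Pp(\zeta_{\rm a}=y)\le\Pp(\zeta_{\rm a}=-y)$ when $y>0$, because the ratio is $((1-\vartheta)/(1+\vartheta))^{y/2}\le1$. The pairing identity from the proof of Lemma~\ref{lem:ov-denominator} then gives
\[
 \Pp(\zeta=z)-\Pp(\zeta=-z)=\sum_{y>0}\bigl[\Pp(\zeta_{\rm a}=-y)-\Pp(\zeta_{\rm a}=y)\bigr]\bigl[\Pp(\zeta_{\rm s}=z+y)-\Pp(\zeta_{\rm s}=z-y)\bigr].
\]
For $z<0$ and $y>0$ we have $|z+y|\le|z-y|$, so by unimodality every term is nonnegative. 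This yields $g(j+1)\ge g(j)$. Combined with the symmetry, the maximum is attained at $\lfloor\ell/2\rfloor$.

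For the final identification, take $f=\psi=\sech^2(\beta\,\cdot)$, which is even and decreasing in $|x|$, with $n=\Delta$ and $\ell=s$. This gives $\max_j\E\psi(S^{(\Delta)}_{s,j})=E_k$, and multiplying by $C^s$ gives $T_s=C^sE_k$. The case $s=0$ is trivial, since then there is only one value of $j$.

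The main obstacle is the sign bookkeeping. The hypothesis $2j+1\le\ell$ is exactly what makes $\zeta_{\rm a}$ anti-tilted, so that $\zeta$ leans negative. One must also check that this matches the sign of the factor $\vartheta$, so that the product is nonnegative. Everything else is taken over verbatim from Lemma~\ref{lem:ov-denominator}.
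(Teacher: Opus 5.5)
Your proposal is correct and is essentially the paper's proof. Both arguments split the common remainder into two parts: a symmetric unimodal part made of balanced pairs and fair signs (via Lemma~\ref{lem:fs-unimodal}), and a leftover part of $\ell-2j-1$ anti-tilted signs. The only difference is that you write the convolution identity over the values of $\zeta_{\rm a}$, as in Lemma~\ref{lem:ov-denominator}, whereas the paper's proof here pairs $x$ with $-2y-x$ in the pmf of the symmetric part.

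There are two harmless slips. The ratio $\Pp(\zeta_{\rm a}=y)/\Pp(\zeta_{\rm a}=-y)$ is $((1-\vartheta)/(1+\vartheta))^{y}$, not the same base with exponent $y/2$. Also, for general $\vartheta$ the pair atom at $\pm2$ is $(1-\vartheta^2)/4$; it equals $1/(4C^2)$ only when $\vartheta=\tanh2\beta$. Neither slip changes any of the signs you need.
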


\begin{proof}
Symmetry holds because $S_{\ell,\ell-j}^{(n)}$ has the law of
$-S_{\ell,j}^{(n)}$ and $f$ is even. For $2j+1\le\ell$, write
$S_{\ell,j}^{(n)}=Y+\xi^-$ and let $S_{\ell,j+1}^{(n)}$ have the law of
$Y+\xi^+$, where $\xi^\pm$ is a sign of mean $\pm\vartheta$ independent of $Y$.
Then
\[
 g(j+1)-g(j)=\vartheta\bigl[\E f(Y+1)-\E f(Y-1)\bigr].
\]
Pair the $j$ plus-tilted signs of $Y$ with $j$ of its $\ell-j-1\ge j$
minus-tilted signs, and write $Y=Y_{\rm s}+Y_{\rm t}$. The symmetric part
$Y_{\rm s}$, the sum of these balanced pairs and the fair signs, has an even pmf
$p_{\rm s}$ that is nonincreasing in $|x|$ by Lemma~\ref{lem:fs-unimodal}: the
pair pmf $(w,1-2w,w)$ with $w=\pi_+(1-\pi_+)\le\frac14$ and the fair pmf both
qualify. The remainder $Y_{\rm t}$ is a sum of $\ell-2j-1\ge0$ minus-tilted
signs, and $\delta(z):=\Pp(Y_{\rm t}=-z)-\Pp(Y_{\rm t}=z)\ge0$ for $z\ge0$,
because the ratio of these probabilities is $(\pi_+/(1-\pi_+))^z$. For $y>0$,
\[
 \Pp(Y=-y)-\Pp(Y=y)=\sum_xp_{\rm s}(x)\,\delta(y+x)\ge0,
\]
by pairing $x$ with $x'=-2y-x$: then $\delta(y+x')=-\delta(y+x)$, the points
$x,x'$ lie in the same coset, and $|x|<|x'|$ whenever $y+x>0$. Finally
\[
 \E f(Y+1)-\E f(Y-1)
 =\sum_{y>0}[f(y-1)-f(y+1)][\Pp(Y=-y)-\Pp(Y=y)]\ge0.
\]
The last assertion is the case $f=\psi$ and $n=\Delta$; for $k=\Delta$ it is
trivial.
\end{proof}

For $1\le k\le\Delta$, with $s=\Delta-k$ and $c(s)$ as in \eqref{eq:star-lambda},
put
\[
\begin{gathered}
 \tilde\lambda_k=\tfrac12(\eta^{-1}+\eta^{2k-1})+c(s)\eta^{k-1},\qquad
 \varkappa_k=\eta^{2k-1}+2c(s)\eta^{k-1},\\
 \widehat B_k=\max_{0\le j<k}\bigl[\eta G^-_{k-1,j}-\varkappa_kG^+_{k-1,j}\bigr].
\end{gathered}
\]

\begin{lemma}[The row $h=0$ as a single ratio]\label{lem:fs-floor-identity}
For $1\le k\le\Delta$,
\[
 \Gamma_k:=\frac{\sinh2\beta}{p(k)}
 =\tilde\lambda_k+c_0^{2\Delta}H_{k-1}^2
   \Bigl[C^{\Delta-2k+1}E_k+\tfrac12\widehat B_k\Bigr].
\]
In particular $\Gamma_1=2C+c_0^{2\Delta}(C^{\Delta-1}E_1-C\,E_\Delta)$, so
\begin{equation}\label{eq:fs-k1-floor}
 p(1)=p_*(d,t):=\frac{2t}{(1-t^2)\bigl[2C+(1-t^2)^{-2d}
   \bigl(C^{2d-1}E_1-C\,E_{2d}\bigr)\bigr]},
\end{equation}
where $E_1=\E\psi(\Pi_{d-1}+\bar S_2)$ and $E_{2d}=\E\psi(\bar S_{2d})$ are
finite sums over at most $2d+1$ values.
\end{lemma}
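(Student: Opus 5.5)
We derive the identity by substituting the closed forms already obtained for $N_{0,k}$, $\lambda_{0,k}$ and $M_{0,k}$ into $p(k)=N_k/(\lambda_kN_k+M_k)$. This gives
\[
 \frac{\sinh2\beta}{p(k)}=\sinh(2\beta)\,\lambda_k+\sinh(2\beta)\,\frac{M_k}{N_k}.
\]
The proof then splits into three checks, one for each term.

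\emph{First term.} From \eqref{eq:star-lambda} with $h=0$, $\lambda_k=(1+\eta^{2k}+2c(s)\eta^k)/\alpha$. We use $\alpha=1-\eta^2$ and $\sinh2\beta=(\eta^{-1}-\eta)/2=\alpha/(2\eta)$. Hence $\sinh(2\beta)\lambda_k=(1+\eta^{2k}+2c(s)\eta^k)/(2\eta)=\tilde\lambda_k$. This is pure algebra.

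\emph{Second term, reduction of $M_k$.} For $M_k$ we use \eqref{eq:residual-sums} with $h=0$. Lemma~\ref{lem:fs-balanced-tilt} replaces $T_s$ by $C^sE_k$. Next we rewrite the bracket. Since $\pi_+\alpha=\vartheta$ and $\pi_+=(1+\vartheta)/2=e^{2\beta}/(2C)$, we get $\pi_+-\lambda_k\vartheta=\pi_+(1-\lambda_k\alpha)=-\pi_+(\eta^{2k}+2c(s)\eta^k)$. We also have $1-\pi_+=e^{-2\beta}/(2C)=\eta/(2C)$ and $\pi_+=\eta^{-1}/(2C)$. So the bracket equals $(2C)^{-1}[\eta G^--(\eta^{2k-1}+2c(s)\eta^{k-1})G^+]$, and
\[
 M_k=\tfrac12C^{s}E_k+\tfrac14C^{k-1}\widehat B_k .
\]

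\emph{Second term, division by $N_k$.} From \eqref{eq:star-numerator}, $N_k=t(1-t^2)^{\Delta-1}C^{k-1}/H_{k-1}^2$. We use $\sinh2\beta=2t/(1-t^2)$ and $(1-t^2)^{-\Delta}=c_0^{2\Delta}$. This gives $\sinh(2\beta)/N_k=2c_0^{2\Delta}H_{k-1}^2C^{1-k}$. Multiplying by $M_k$ yields $c_0^{2\Delta}H_{k-1}^2[C^{s-k+1}E_k+\tfrac12\widehat B_k]$. Since $s-k+1=\Delta-2k+1$, this is the stated formula. The key check here is that the powers of $C$ combine as claimed; I expect this bookkeeping to be the main place an error could enter.

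\emph{Specialization to $k=1$.} Here $H_0=1$, $s=\Delta-1$ is odd so $c(s)=C$, and $\tilde\lambda_1=\tfrac12(\eta^{-1}+\eta)+C=2C$. The index $j$ ranges only over $j=0$, with $\ell=0$. Then $G^\pm_{0,0}=\E\psi(\pm1+\bar S_{\Delta-1})$, and by evenness of $\psi$ and symmetry of $\bar S$ both equal $E_\Delta=\E\psi(\bar S_\Delta)$. Thus $\widehat B_1=(\eta-\eta-2C)E_\Delta=-2CE_\Delta$, and $\Gamma_1=2C+c_0^{2\Delta}(C^{\Delta-1}E_1-CE_\Delta)$.

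\emph{The floor and the two finite sums.} Dividing $\sinh2\beta=2t/(1-t^2)$ by $\Gamma_1$ gives \eqref{eq:fs-k1-floor}. For $E_1$, we pair the $\lfloor s/2\rfloor=d-1$ plus-tilted signs with $d-1$ minus-tilted ones. With $s=2d-1$ this leaves no tilted sign unpaired, since $s-2\lfloor s/2\rfloor=1$ and that remaining tilted coordinate is one of the pair. Out of $2d$ signs, $s$ are tilted and one is fair, so we must double-check the count. With $j=d-1$ plus and $d$ minus tilts, $E_1=\E\psi(S^{(2d)}_{2d-1,d-1})$, which is $d-1$ balanced pairs, one minus-tilted sign and one fair sign. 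This does not literally match the stated $\Pi_{d-1}+\bar S_2$, so the remaining step is to show that a tilted sign may be replaced by a fair one. Because the rest is symmetric, $\E\psi(Y+\xi)=\tfrac12[\E\psi(Y+1)+\E\psi(Y-1)]$ regardless of the sign's mean whenever $Y$ is symmetric. Applying this with $Y=\Pi_{d-1}+(\text{fair sign})$ gives $E_1=\E\psi(\Pi_{d-1}+\bar S_2)$. Finally, $\Pi_{d-1}+\bar S_2$ and $\bar S_{2d}$ take at most $2d+1$ values, which is the finiteness claim.
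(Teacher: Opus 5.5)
Your proof is correct and follows the paper's own argument. It uses the same three identities, $\sinh(2\beta)\lambda_k=\tilde\lambda_k$, $M_k=\frac12C^sE_k+\frac14C^{k-1}\widehat B_k$ via Lemma~\ref{lem:fs-balanced-tilt}, and $\sinh(2\beta)/N_k=2c_0^{2\Delta}H_{k-1}^2C^{1-k}$; then the same $k=1$ specialization, and the same replacement of the leftover minus-tilted sign by a fair one using symmetry of $\Pi_{d-1}+\bar S_1$. The only blemish is your momentary claim that the pairing leaves no tilted sign unpaired, which you correct at once ($d-1$ balanced pairs, one minus-tilted sign and one fair sign).
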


\begin{proof}
Since $1/p(k)=\lambda_k+M_k/N_k$ and $\sinh2\beta=2tc_0^2$, three identities
suffice. First, $\sinh(2\beta)/\alpha=1/(2\eta)$, so
$\sinh(2\beta)\lambda_k=\tilde\lambda_k$. Second,
$\sinh(2\beta)/N_k=2c_0^{2\Delta}H_{k-1}^2/C^{k-1}$. Third, from
$\pi_+=1/(1+\eta^2)$, $\lambda_k\vartheta=(1+\eta^{2k}+2c(s)\eta^k)/(1+\eta^2)$
and $C=(1+\eta^2)/(2\eta)$ one gets $C(\pi_+-\lambda_k\vartheta)=-\varkappa_k/2$
and $C(1-\pi_+)=\eta/2$. Hence \eqref{eq:residual-sums} reads
$M_k=\frac12T_s+\frac14C^{k-1}\widehat B_k$, and $T_s=C^sE_k$ by
Lemma~\ref{lem:fs-balanced-tilt}. Combining these gives the identity.

At $k=1$, $s=\Delta-1$ is odd, so $c(s)=C$, $\tilde\lambda_1=2C$ and
$\varkappa_1=\eta+2C$. Also $H_0=1$, and
$G^\pm_{0,0}=\E\psi(\pm1+\bar S_{\Delta-1})=E_\Delta$, so
$\widehat B_1=-2C\,E_\Delta$. The balanced sum defining $E_1$ is
$\Pi_{d-1}+\xi+\bar S_1$ with one minus-tilted sign $\xi$. Since
$\Pi_{d-1}+\bar S_1$ is symmetric and $\psi$ is even, replacing $\xi$ by a fair
sign does not change the expectation, so $E_1=\E\psi(\Pi_{d-1}+\bar S_2)$.
Finally $\sinh2\beta=2t/(1-t^2)$ and $c_0^{2\Delta}=(1-t^2)^{-2d}$.
\end{proof}

For $1\le k\le\Delta$ let $\varphi_k(y)=\E\psi(y+\bar S_{\Delta-k})$ for
$y\in k+2\Z$. By Lemma~\ref{lem:fs-unimodal} it is even and nonincreasing in
$|y|$, with maximum $\varphi^*_k=\varphi_k(k\bmod2)$. Put
$r_k=H_{k-1}^2/C^{2(k-1)}$.

\begin{lemma}[Monotone pieces]\label{lem:fs-monotone}
Let $\Delta$ be even. Then:
\begin{enumerate}
\item[(i)] $E_k\le E_1$ for $1\le k\le\Delta$;
\item[(ii)] $r_{k+1}<r_k$ and $H_k>H_{k-1}$ for $1\le k<\Delta$;
\item[(iii)] $\tilde\lambda_{k+1}<\tilde\lambda_k$ and
$\varkappa_{k+1}<\varkappa_k$ for $1\le k<\Delta$;
\item[(iv)] $\varphi^*_{k+1}\ge\varphi^*_k$ and
$\varphi_{k+1}(k+1)\le\varphi_k(k)$ for $1\le k<\Delta$;
\item[(v)] for $1\le k\le\Delta$ and $0\le j<k$,
$G^-_{k-1,j}\le\varphi^*_k$ and $G^+_{k-1,j}\ge\varphi_k(k)$; hence
$\widehat B_k\le\eta\varphi^*_k-\varkappa_k\varphi_k(k)$.
\end{enumerate}
\end{lemma}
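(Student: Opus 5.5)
Each of the five parts can be checked directly, using the definitions just above and Lemmas~\ref{lem:fs-unimodal} and~\ref{lem:fs-balanced-tilt}.

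\emph{Part (i).} For $k\ge2$ we write $E_k=\E\psi(S^{(\Delta)}_{s,\lfloor s/2\rfloor})$ with $s=\Delta-k$. This sum is $\Pi_{\lfloor s/2\rfloor}$, plus possibly one unpaired tilted sign, plus $\Delta-s$ fair signs. For $k=1$ the sum is $\Pi_{d-1}$ plus two fair signs; this uses $\Delta$ even, so $s=\Delta-1$ is odd and the unpaired tilted sign was already replaced by a fair one as in Lemma~\ref{lem:fs-floor-identity}. To pass from $E_1$ to $E_k$ we replace balanced pairs by pairs of fair signs. A balanced pair has pmf $(w,1-2w,w)$ with $w\le\frac14$. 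A pair of fair signs has pmf $(\frac14,\frac12,\frac14)$, which is more spread out. The difference of the two pmfs is $(\frac14-w)(1,-2,1)$, a nonnegative multiple of a second difference. When it is convolved against an even function that is nonincreasing in $|x|$, the result is $(\frac14-w)[\phi(x+2)+\phi(x-2)-2\phi(x)]$ summed against symmetric weights. Here $\phi$ is the symmetrised, unimodal convolution of $\psi$ with the remaining sum, which is even and nonincreasing by Lemma~\ref{lem:fs-unimodal}. That sign is not automatic. Instead I would compare $\psi$ at $Y$ and at $Y'$ with $Y'$ more spread out, using the peakedness order, under which sums of symmetric unimodal independent variables are preserved. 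A fair-sign pair is less peaked than a balanced pair because $w\le\frac14$ means $P(|X|\le0)$ is larger for the balanced pair. Peakedness is preserved under convolution with symmetric unimodal laws (Birnbaum/Sherman), and $\psi$ is even and nonincreasing, so $E$ decreases. The unpaired tilted sign is handled by the symmetrisation trick in Lemma~\ref{lem:fs-floor-identity}. This step, pinning down the right comparison order, is the main obstacle.

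\emph{Parts (ii) and (iii).} These are elementary inequalities in $t$. For (ii), $H_k>H_{k-1}$ because $H_k-H_{k-1}=\tfrac{t\vartheta}{2}[(1+t)(1+t\vartheta)^{k-1}-(1-t)(1-t\vartheta)^{k-1}]>0$. For $r$ we need $H_k/H_{k-1}<C$. The ratio $H_k/H_{k-1}$ is a weighted average of $1\pm t\vartheta$, so it is at most $1+t\vartheta$. Then $1+t\vartheta=1+2t^2/(1+t^2)<C=(1+t^2)/(1-t^2)$, since $(1+3t^2)(1-t^2)<(1+t^2)^2$. For (iii), as $k$ increases the parity of $s$ flips, so $c(s)$ alternates between $1$ and $C$. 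The terms $\eta^{2k\mp1}$ decrease, so it remains to check $c(s')\eta^{k}<c(s)\eta^{k-1}$ in both cases. The worse case is $c(s')=C$ and $c(s)=1$, which needs $C\eta<1$, that is $(1+\eta^2)/2<1$. This holds.

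\emph{Part (iv).} We have $\varphi_{k+1}(y)=\E\psi(y+\bar S_{\Delta-k-1})$. Conditioning on the extra fair sign gives $\varphi_k(y)=\tfrac12[\varphi_{k+1}(y+1)+\varphi_{k+1}(y-1)]$. At $y=k$, unimodality gives $\varphi_k(k)\ge\varphi_{k+1}(k+1)$. At the centre, $\varphi^*_k$ is an average of values of $\varphi_{k+1}$ at $\pm1$ or at $0,\pm2$, and each of these is at most $\varphi^*_{k+1}$.

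\emph{Part (v).} We write $G^\pm_{k-1,j}=\E\varphi_k(\pm1+S)$, where $S$ is the sum of the $k-1$ tilted signs, with $|S|\le k-1$. Then $\pm1+S\in k+2\Z$ and $|\pm1+S|\le k$. Since $\varphi_k$ is even and nonincreasing in $|y|$, it lies between $\varphi_k(k)$ and $\varphi^*_k$ at every such point. The bound on $\widehat B_k$ follows because $\eta\ge0$ and $\varkappa_k\ge0$.
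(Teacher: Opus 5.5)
Your proposal is correct. Parts (ii)--(v) follow the paper's proof; your weighted-average bound $H_k/H_{k-1}\le1+t\vartheta<C$ in (ii) is the same inequality as the paper's $(1\pm t\vartheta)/C<1$.

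Part (i) uses the paper's decomposition: symmetrise the unpaired tilted sign, then replace balanced pairs by fair pairs one at a time. The difference is that you justify a single replacement by a peakedness-order theorem. The paper does it in one line. Put $g(z)=\E\psi(Y+z)$, where $Y$ is the sum of \emph{all} the other signs. Then $g$ is even and nonincreasing in $|z|$ by Lemma~\ref{lem:fs-unimodal}, and
\[
 \E g(\Pi_1)-\E g(\bar S_2)=\bigl(\tfrac12-2w\bigr)\bigl(g(0)-g(2)\bigr)\ge0,\qquad w=\frac1{4C^2}.
\]
So the sign you call ``not automatic'' is automatic. Once the rest of the sum is folded into the test function, the second difference is evaluated only at the centre, and unimodality fixes its sign there. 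You ran into trouble only because you evaluated it at off-centre points.

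The Birnbaum--Sherman route is valid and more general, since it compares any more-peaked replacement against any even nonincreasing test function. However, its classical statements are for densities. Its lattice version on the cosets $a+2\Z$ is proved by exactly the computation above together with Lemma~\ref{lem:fs-unimodal}. So there is no real obstacle in this step.

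A harmless slip in (iv): for odd $k$, $\varphi^*_k=\varphi_k(1)$ averages $\varphi_{k+1}(0)$ and $\varphi_{k+1}(2)$, not values at $\pm2$.
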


\begin{proof}
(i) If $Y$ is symmetric and independent of a sign $\xi$ of any bias, then
$\E\psi(Y+\xi)=\E\psi(Y+1)$. For odd $k<\Delta$, $s$ is odd and the balanced
sum is $\Pi_{(s-1)/2}+\xi+\bar S_k$ with $Y=\Pi_{(s-1)/2}+\bar S_k$ symmetric,
so $E_k=E_{k+1}$. For even $k\le\Delta-2$, $E_k-E_{k+2}=\E g(\Pi_1)-\E g(\bar S_2)$,
where $g(z)=\E\psi(\Pi_{s/2-1}+\bar S_k+z)$ is even and nonincreasing in $|z|$
by Lemma~\ref{lem:fs-unimodal}. This difference equals
$(\frac12-\frac1{2C^2})(g(0)-g(2))\ge0$. Hence
$E_1=E_2\ge E_3=E_4\ge\cdots\ge E_{\Delta-1}=E_\Delta$.

(ii) $H_\ell/C^\ell=\frac12[(1+t)((1+t\vartheta)/C)^\ell
+(1-t)((1-t\vartheta)/C)^\ell]$, and $0<(1\pm t\vartheta)/C<1$ because
$(1+3t^2)(1-t^2)<(1+t^2)^2$. Also
$H_{\ell+1}-H_\ell=\frac{t\vartheta}2[(1+t)(1+t\vartheta)^\ell
-(1-t)(1-t\vartheta)^\ell]>0$.

(iii) The one-step decrease of $\tilde\lambda$ is
$\frac12\eta^{2k-1}(1-\eta^2)+(c(s)\eta^{k-1}-c(s-1)\eta^k)$, and that of
$\varkappa$ is $\eta^{2k-1}(1-\eta^2)+2(c(s)\eta^{k-1}-c(s-1)\eta^k)$. Since
$\Delta$ is even, $s$ has the parity of $k$. For odd $k$,
$c(s)\eta^{k-1}-c(s-1)\eta^k=\eta^{k-1}(C-\eta)>0$; for even $k$ it equals
$\eta^{k-1}(1-C\eta)=\eta^{k-1}(1-\eta^2)/2>0$.

(iv) $\varphi_k(y)=\frac12\varphi_{k+1}(y+1)+\frac12\varphi_{k+1}(y-1)$.
Maximizing gives the first claim. The second follows at $y=k$, because
$\varphi_{k+1}(k-1)\ge\varphi_{k+1}(k+1)$.

(v) Write $S^{(\Delta-1)}_{k-1,j}=Y_j+\bar S_{\Delta-k}$ with $Y_j$ the sum of
the $k-1$ tilted signs. Then $G^\pm_{k-1,j}=\E\varphi_k(Y_j\pm1)$, where
$Y_j\pm1$ has the parity of $k$ and $|Y_j\pm1|\le k$.
\end{proof}

\begin{proposition}[Reduction to $k=1$ by blocks]\label{prop:fs-block-reduction}
Let $\Delta=2d$. For $2\le k_a\le k_b\le\Delta$ put
\[
 \mathcal B(k_a,k_b)=\tilde\lambda_{k_a}+c_0^{2\Delta}\Bigl[r_{k_a}C^{\Delta-1}E_1
 +\tfrac12\bigl(H_{k_b-1}^2\,\eta\,\varphi^*_{k_b}
 -H_{k_a-1}^2\,\varkappa_{k_b}\,\varphi_{k_b}(k_b)\bigr)\Bigr].
\]
Then $\Gamma_k\le\mathcal B(k_a,k_b)$ for every $k\in[k_a,k_b]$. Consequently, if
$\{2,\ldots,\Delta\}$ is covered by blocks $[k_a,k_b]$ with
$\mathcal B(k_a,k_b)\le\Gamma_1$, then $p(k)\ge p(1)$ for every $k$, and
$p_0(2d,\beta)=p_*(d,t)$.
\end{proposition}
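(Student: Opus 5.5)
We bound each of the three pieces of the identity in Lemma~\ref{lem:fs-floor-identity} monotonically over the block $[k_a,k_b]$, using Lemma~\ref{lem:fs-monotone}. For $k\in[k_a,k_b]$ that identity gives
\[
 \Gamma_k=\tilde\lambda_k+c_0^{2\Delta}\Bigl[H_{k-1}^2C^{\Delta-2k+1}E_k+\tfrac12H_{k-1}^2\widehat B_k\Bigr].
\]

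\emph{First term.} By Lemma~\ref{lem:fs-monotone}(iii), $\tilde\lambda_k\le\tilde\lambda_{k_a}$.

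\emph{Second term.} Write $H_{k-1}^2C^{\Delta-2k+1}=r_kC^{\Delta-1}$, since $r_k=H_{k-1}^2/C^{2(k-1)}$. By (ii), $r_k\le r_{k_a}$, and by (i), $E_k\le E_1$. Together these bound the second term by $r_{k_a}C^{\Delta-1}E_1$.

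\emph{Third term.} By (v), $\widehat B_k\le\eta\varphi^*_k-\varkappa_k\varphi_k(k)$. For the positive part, (iv) gives $\varphi^*_k\le\varphi^*_{k_b}$, and (ii) gives $H_{k-1}^2\le H_{k_b-1}^2$. Hence $H_{k-1}^2\eta\varphi^*_k\le H_{k_b-1}^2\eta\varphi^*_{k_b}$.

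For the negative part we need a lower bound on $H_{k-1}^2\varkappa_k\varphi_k(k)$. All three factors are nonnegative. By (ii), $H_{k-1}^2\ge H_{k_a-1}^2$. By (iii), $\varkappa_k\ge\varkappa_{k_b}$, since $\varkappa$ is decreasing. By (iv), $\varphi_k(k)\ge\varphi_{k_b}(k_b)$, since $k\mapsto\varphi_k(k)$ is nonincreasing. Multiplying gives
\[
 H_{k-1}^2\varkappa_k\varphi_k(k)\ge H_{k_a-1}^2\varkappa_{k_b}\varphi_{k_b}(k_b).
\]

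Combining the three bounds yields $\Gamma_k\le\mathcal B(k_a,k_b)$.

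\emph{Consequence.} Since $\sinh2\beta>0$, we have $p(k)=\sinh2\beta/\Gamma_k$, and all $\Gamma_k>0$ because they are reciprocals of positive probabilities up to a positive factor. If the blocks cover $\{2,\ldots,\Delta\}$ and each satisfies $\mathcal B\le\Gamma_1$, then $\Gamma_k\le\Gamma_1$ for all $k$, so $p(k)\ge p(1)$. The minimum defining $p_0$ in \eqref{eq:star-parameter} with $h=0$ is then attained at $k=1$. Lemma~\ref{lem:fs-floor-identity} identifies $p(1)=p_*(d,t)$.

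The only delicate step is the sign bookkeeping in the third term: each factor must be replaced by its bound in the direction matching the sign of its contribution, and this needs the nonnegativity of $\varkappa$, $\varphi$ and $H$. The remaining steps are direct applications of the listed monotonicities.
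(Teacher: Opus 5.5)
Your proof is correct and follows the paper's argument essentially step for step. Like the paper, you rewrite $H_{k-1}^2C^{\Delta-2k+1}=r_kC^{\Delta-1}$ and apply Lemma~\ref{lem:fs-monotone}(i) and (v); you then replace each positive, monotone factor by its value at $k_a$ or $k_b$ in the direction matching the sign of its term, and conclude through $p(k)=\sinh(2\beta)/\Gamma_k$.
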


\begin{proof}
Lemma~\ref{lem:fs-floor-identity}, the identity
$H_{k-1}^2C^{\Delta-2k+1}=r_kC^{\Delta-1}$ and Lemma~\ref{lem:fs-monotone}(i),
(v) give
\[
 \Gamma_k\le\tilde\lambda_k+c_0^{2\Delta}\Bigl[r_kC^{\Delta-1}E_1
 +\tfrac12H_{k-1}^2\bigl(\eta\varphi^*_k-\varkappa_k\varphi_k(k)\bigr)\Bigr].
\]
Inside a block, bound $\tilde\lambda_k$ and $r_k$ by their values at $k_a$,
since they decrease, and $H_{k-1}^2\varphi^*_k$ by its value at $k_b$, since it
increases. Bound $H_{k-1}^2\varkappa_k\varphi_k(k)$ below by
$H_{k_a-1}^2\varkappa_{k_b}\varphi_{k_b}(k_b)$, since all three factors are
positive and monotone by Lemma~\ref{lem:fs-monotone}. Finally
$p(k)=\sinh(2\beta)/\Gamma_k$.
\end{proof}

A block certificate needs $E_1$, $E_\Delta$ and, for each block, the two
fair-binomial sums $\varphi^*_{k_b}$ and $\varphi_{k_b}(k_b)$. Such a finite
check at a given temperature cannot be replaced by a statement uniform in
$\beta$: exact evaluation of the full row gives $\min_kp(k)=p(\Delta)<p(1)$ at
$(\Delta,t)=(20,\frac13)$, $(32,\frac15)$ and $(44,\frac3{20})$. Any general
proof that $k=1$ is the worst arrangement must therefore restrict $\beta$. We
have no such proof; the corollaries below use only the finite block checks at
their own temperatures.

\subsection{The certified dimensions}\label{sec:fs-certificates}

\begin{proof}[Proof of Corollary~\ref{cor:single-floor}]
At each listed point $t$ is rational, so every quantity below is an exact
rational and every comparison is an exact rational inequality; decimals are
rounded in the safe direction. Table~\ref{tab:fs-blocks} lists, for each point,
a cover of $\{2,\ldots,2d\}$ by blocks with $\mathcal B(k_a,k_b)<\Gamma_1$.
Proposition~\ref{prop:fs-block-reduction} therefore gives
$p_0(2d,\beta)=p_*(d,t)$. The check uses $E_1$, $E_{2d}$ and two fair-binomial
sums per block, at most twenty explicit sums at each point.
Lemma~\ref{lem:collision-green}(f) gives $F_d\le1-1/\widehat G_d$.
Table~\ref{tab:fs-floor} shows $p_*(d,t)>F_d$ in every row, so
Theorem~\ref{thm:single-floor} applies. Its constant
$\frac12(\underline p-F_d)/(\underline p(1-F_d))$ increases in $\underline p$
and decreases in $F_d$; evaluating it at the tabulated bounds gives $\theta_{\rm fs}$.
\end{proof}

\begin{table}[!htbp]
\centering
\begin{tabular}{rr|rrrr}
\hline
$d$&$t$&$p_*(d,t)\ge$&$F_d\le$&$p_*/F_d\ge$&$\theta_{\rm fs}\ge$\\ \hline
16&$13/125$&$0.0679971076$&$0.067455$&$1.0080$&$0.0042$\\
16&$51/500$&$0.0680421165$&$0.067455$&$1.0087$&$0.0046$\\
17&$1/10$&$0.0659393703$&$0.063115$&$1.0447$&$0.0228$\\
18&$12/125$&$0.0640504639$&$0.059313$&$1.0798$&$0.0393$\\
19&$7/75$&$0.0623013944$&$0.055951$&$1.1135$&$0.0539$\\
20&$1/11$&$0.0606881634$&$0.052955$&$1.1460$&$0.0672$\\
22&$11/125$&$0.0577808419$&$0.047843$&$1.2077$&$0.0903$\\ \hline
\end{tabular}
\caption{Certified values for Corollary~\ref{cor:single-floor}. Lower bounds are
rounded down and upper bounds up; each entry summarizes an exact rational
comparison. The bound on $F_d$ is $1-1/\widehat G_d$ from
Lemma~\ref{lem:collision-green}(f), and
$\theta_{\rm fs}=\frac12(p_*-F_d)/(p_*(1-F_d))$.}
\label{tab:fs-floor}
\end{table}

\begin{table}[!htbp]
\centering
\begin{tabular}{rr|lrr}
\hline
$d$&$t$&Blocks $[k_a,k_b]$&Sums&Slack $\ge$\\ \hline
16&$13/125$&$2,3,4,5,[6,7],[8,13],[14,32]$&$16$&$0.0267$\\
16&$51/500$&$2,3,4,5,[6,7],[8,13],[14,32]$&$16$&$0.0065$\\
17&$1/10$&$2,3,4,5,[6,7],[8,12],[13,32],[33,34]$&$18$&$0.0073$\\
18&$12/125$&$2,3,4,5,[6,7],[8,11],[12,30],[31,36]$&$18$&$0.0052$\\
19&$7/75$&$2,3,4,5,[6,7],[8,11],[12,30],[31,38]$&$18$&$0.0064$\\
20&$1/11$&$2,3,4,5,[6,7],[8,10],[11,25],[26,40]$&$18$&$0.0044$\\
22&$11/125$&$2,3,4,5,6,[7,8],[9,13],[14,37],[38,44]$&$20$&$0.0099$\\ \hline
\end{tabular}
\caption{Block covers for Proposition~\ref{prop:fs-block-reduction}. A single
integer denotes a block with $k_a=k_b$. The sums are $E_1$, $E_{2d}$ and two
per block. The slack is the smallest value of $\Gamma_1-\mathcal B(k_a,k_b)$,
rounded down; at these points $\Gamma_1\in[3.020,3.093]$.}
\label{tab:fs-blocks}
\end{table}

As an independent check, the full row $p(1),\ldots,p(2d)$ was evaluated exactly
at every listed point, with the maximum in $T_s$ taken over all indices $j$;
this uses $\frac32\Delta(\Delta+1)$ binomial sums, $2970$ at $d=22$. In every case
the minimum occurs at $k=1$, and $p(2)/p(1)\in[1.0094,1.0149]$. Summing $u_n(d)$
exactly for $n<4d$ and bounding the tail by Lemma~\ref{lem:collision-green}(e)
with $A=4$ determines $F_d$ to within $10^{-8}$. The ratios in
Table~\ref{tab:fs-floor} then improve to $1.0146$, $1.0153$, $1.0502$, $1.0845$,
$1.1174$, $1.1494$ and $1.2103$. The programs that compute these values in exact
rational arithmetic are described in the reproducibility section. Their signed
cases agree exactly with the independent implementation of the dimension-$22$
certificate of \cite{PeiV1}.

\begin{remark}[Signed refinements]\label{rem:fs-signed}
The signed criterion \eqref{eq:signed-ratio} can replace the nonnegative
residual in individual arrangements. At $(d,t)=(22,\frac{11}{125})$, take
$\lambda=1/0.0586$ and $c=\frac{119}{100}$. Exact enumeration of
\eqref{eq:signed-residual} over $44$ and $86$ cavity orientations gives the
first two bounds below, and the nonnegative residual gives the third:
\[
\begin{gathered}
 \mathcal M_{0,1}(\lambda,c)\le-0.000374848573757,\qquad
 \mathcal M_{0,2}(\lambda,c)\le-0.019889903399663,\\
 \min_{3\le k\le44}p(k)\ge0.059010354650604.
\end{gathered}
\]
Thus $0.0586$ is a uniform fresh-star floor; it is the row $h=0$ of the
certificate of \cite{PeiV1} (Remark~\ref{rem:v1-certificate}). Four explicit terms
and a tail bound give $G_{22}<1/(1-0.0586)$ by
Lemma~\ref{lem:collision-green}(g), that is, $F_{22}<0.0568646<0.0586$, and
Theorem~\ref{thm:single-floor} applies with $\theta_{\rm fs}\ge0.0157$. The signed
case $k=2$ is needed for this variant, because the nonnegative residual gives
only $p(2)=0.0583248\ldots$. Similarly, at $(16,\frac{13}{125})$ the signed
criterion with $\lambda=10000/689$ and $c=\frac65$ gives
$\mathcal M_{0,1}(\lambda,c)<-0.00136$ over $32$ orientations. Together with
$\min_{k\ge2}p(k)\ge0.0690100$, this gives the floor $0.0689$, which exceeds
$1-1/\widehat G_{16}$ by a factor of at least $1.0214$.
\end{remark}

\begin{remark}[Where the route stops]\label{rem:fs-limits}
Dimension $16$ is the smallest dimension certified by this route. At $d=15$, the exact
sum of $u_n(15)$ over $n<60$ and
Lemma~\ref{lem:collision-green}(e) with $A=4$ give
$F_{15}\in[0.0718626355,0.0718626462]$. At
$t=\frac{27}{250}$ the arrangements $k\ge2$ still clear this threshold in exact
arithmetic: the signed criterion at $k=2$, with $\lambda=1/0.0725$ and
$c=\frac65$, gives $\mathcal M_{0,2}(\lambda,c)<-0.0105$ over $58$ cavity
orientations, and the nonnegative residual gives $\min_{k\ge3}p(k)\ge0.072807$.
For $k=1$, however, a floating-point scan over $t\in[0.095,0.12]$ and
$c\in[1,1.6]$ found no bound from \eqref{eq:linear-star} or
\eqref{eq:signed-ratio} above about $0.0713$, which is below $F_{15}$. With
the oriented bound $\pc(\Z^{15})\le F_{15}$, we therefore expect dimension $15$
to require a sharper estimate for the arrangement $k=1$. A sharper global
comparison is the other option: by the proof of Theorem~\ref{thm:single-floor}, a uniform
fresh-star floor above the concurrent bound $\pc(\Z^{15})\le0.0691$ of
\cite[Table~2]{JiangLang2026} would give percolation of both overlap signs at
$d=15$, without the explicit constant \eqref{eq:fs-root-bound}. We have not
certified such a floor for $k=1$.
\end{remark}

\subsection{A floor without finite sums}\label{sec:fs-closed-form}

Since $R\ge1$, bounding $R^{-2}\le1$ in the total mass of arrangement $k$ gives
$D\le\E_\epsilon w_{0,k}=\frac12(C^k+C^{\Delta-k})$. Together with $A\ge N_k$,
this gives the paired floor
\[
 p_{\rm pair}(k)=\frac{N_k}{\frac12(C^k+C^{\Delta-k})}
\]
in that arrangement. Its minimum over arrangements occurs at an endpoint.

\begin{proposition}[A closed-form star floor]\label{prop:fs-paired-floor}
Under the hypotheses of Proposition~\ref{prop:unit-star} with $h=0$,
\[
 \nu_{G,\beta}(B_e=1\mid\mathcal F_v^{\mathrm{ext},J})
 \ge\min\{p_{\rm pair}(1),p_{\rm pair}(\Delta)\},
\]
where
\[
 p_{\rm pair}(1)=\frac{2t(1-t^2)^{\Delta-1}}{C+C^{\Delta-1}},\qquad
 p_{\rm pair}(\Delta)=\frac{2t(1-t^2)^{\Delta-1}C^{\Delta-1}}
 {H_{\Delta-1}^2(C^\Delta+1)}.
\]
\end{proposition}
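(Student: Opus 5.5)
The plan is to reuse the normalized-mass setup of Section~\ref{app:star} with $h=0$. Fix a positive-probability atom of $\mathcal F_v^{\mathrm{ext},J}$. There the conditional opening probability equals $A_{0,k}/D_{0,k}$ for the arrangement index $k\in\{1,\ldots,\Delta\}$ fixed by the retained exterior overlaps. It suffices to show $A_{0,k}/D_{0,k}\ge p_{\rm pair}(k)$ for that $k$, and then that $\min_k p_{\rm pair}(k)=\min\{p_{\rm pair}(1),p_{\rm pair}(\Delta)\}$. Integrating against nonnegative retained tests then gives the conditional statement, as at the end of Section~\ref{app:star}.

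\emph{Step 1: the per-arrangement bound.} The paired Jensen argument already gives $A_{0,k}\ge N_{0,k}=t(1-t^2)^{\Delta-1}C^{k-1}/H_{k-1}^2$. For the total mass we use $R\ge1$ from \eqref{eq:full-star-ratio}, so $R^{-2}\le1$, and $w_{0,k}\ge0$. Hence $D_{0,k}\le\E_\epsilon w_{0,k}$. Under fair signs, $\E\cosh(2\beta\sum_{i\in U}\epsilon_i)=C^k$ and the $V$-term gives $C^{s}$ with $s=\Delta-k$, so $D_{0,k}\le\frac12(C^k+C^{\Delta-k})$. This yields $p_{\rm pair}(k)$. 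Since $H_0=1$, substituting $k=1$ and $k=\Delta$ reproduces the two displayed formulas.

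\emph{Step 2: the minimum is at an endpoint.} Write
\[
 \frac1{p_{\rm pair}(k)}=\frac{H_{k-1}^2\,(C^k+C^{\Delta-k})}{2t(1-t^2)^{\Delta-1}C^{k-1}}
 =\frac{C}{2t(1-t^2)^{\Delta-1}}\,\Bigl(\frac{H_{k-1}}{C^{k-1}}\Bigr)^2\bigl(C^{k-1}+C^{\Delta-k-1}\bigr)C^{k-1}\cdot C^{-(k-1)}\cdots,
\]
or more cleanly
\[
 \frac1{p_{\rm pair}(k)}\propto H_{k-1}^2\bigl(C+C^{\Delta-2k+1}\bigr).
\]
With $\ell=k-1$ this is $f(\ell)=H_\ell^2\,(C+C^{\Delta-1-2\ell})$. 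The goal is to show that $f$ is quasiconvex in $\ell$, so that its maximum over $\{0,\ldots,\Delta-1\}$ is attained at an endpoint. Expand $H_\ell^2$ as a nonnegative combination of the three exponentials $(1+t\vartheta)^{2\ell}$, $(1-t^2\vartheta^2)^\ell$ and $(1-t\vartheta)^{2\ell}$. Then $f(\ell)$ is a nonnegative combination of exponentials $x^\ell$ with positive bases. Any function $\ell\mapsto\sum_j c_jx_j^\ell$ with $c_j\ge0$ and $x_j>0$ is convex in the real variable $\ell$, and a convex function on an interval attains its maximum at an endpoint. Hence $\max_k 1/p_{\rm pair}(k)$ is attained at $k=1$ or $k=\Delta$, which proves the claim.

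\emph{Main obstacle.} Step 2 is where the work lies. The key check is that each term is a genuine nonnegative exponential in $\ell$. Here $C^{\Delta-1-2\ell}=C^{\Delta-1}(C^{-2})^\ell$ has positive base, and $H_\ell$ is a positive sum, so its square has nonnegative cross coefficient. Once that is confirmed, convexity is immediate and no case analysis over $k$ is needed. Step 1 is routine given Section~\ref{app:star}.
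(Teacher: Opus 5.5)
Your proposal is correct and follows the paper's proof. Both arguments use the same per-arrangement floor from $A_{0,k}\ge N_{0,k}$ and $R^{-2}\le1$, followed by an endpoint argument based on convexity of nonnegative combinations of exponentials in $k$. The difference is cosmetic: the paper shows $\log p_{\rm pair}$ is concave term by term, whereas you multiply out $1/p_{\rm pair}(k)\propto H_{k-1}^2(C+C^{\Delta-2k+1})$ into one nonnegative exponential sum and use plain convexity. The garbled first display in your Step~2 should simply be deleted.
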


\begin{proof}
Extend $N_x$ and $H_{x-1}$ to real $x\in[1,\Delta]$ by the same formulas;
$H_{x-1}>0$ because $0<1-t\vartheta<1+t\vartheta$. Then $\log p_{\rm pair}(x)$ is
concave on $[1,\Delta]$. The term $(x-1)\log C$ is linear. The term
$-2\log H_{x-1}$ is concave, because $\log H_\ell$ is the logarithm of a positive
combination of exponentials of affine functions of $\ell$, hence convex. The
term $-\log(C^x+C^{\Delta-x})$ is concave for the same reason. A concave function
on an interval is bounded below by the smaller of its endpoint values. Since the
conditional opening probability in arrangement $k$ is at least
$p_{\rm pair}(k)$, this proves the proposition.
\end{proof}

\begin{corollary}[A closed-form certificate in dimensions 25 and 26]\label{cor:fs-closed-form}
For iid fair couplings $J_e=\pm1$, let
$(d,t,\theta_{\rm fs})=(25,\frac{27}{400},0.0020)$ or $(26,\frac{33}{500},0.0129)$, and
$\beta=\operatorname{atanh}t$. Then every $\nu\in\mathcal L_{d,\beta}$ satisfies
$\nu(o\blue\infty,q_o=s)\ge\theta_{\rm fs}$ for $s=\pm1$, and both overlap signs
contain infinite blue components $\nu$-almost surely.
\end{corollary}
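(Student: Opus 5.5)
The corollary will follow from Theorem~\ref{thm:single-floor} once we show that the explicit paired floor of Proposition~\ref{prop:fs-paired-floor} is a uniform fresh-star floor exceeding $F_d$. First, Proposition~\ref{prop:fs-paired-floor} with $\Delta=2d$ applies on every torus $\mathbb T_L^d$, $L\ge3$, which is simple and $2d$-regular. Its bound holds on every positive-probability atom of $\mathcal F_v^{\mathrm{ext},J}\supseteq\mathcal F_v^{\rm ext}$. Integrating over atoms therefore shows that
\[
 \underline p:=\min\{p_{\rm pair}(1),p_{\rm pair}(2d)\}
\]
satisfies \eqref{eq:fs-uniform-floor}, so it is a uniform fresh-star floor at $(d,\beta)$.

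Second, evaluate the two endpoints exactly. With $t$ rational, the quantities $C=(1+t^2)/(1-t^2)$, $\vartheta=2t/(1+t^2)$, $H_{2d-1}$ from \eqref{eq:star-numerator} and $(1-t^2)^{2d-1}$ are all explicit rationals. Hence $p_{\rm pair}(1)$ and $p_{\rm pair}(2d)$ are rationals with no finite sums, and we compute them directly at $(25,27/400)$ and $(26,33/500)$. Third, bound $F_d\le1-1/\widehat G_d$ using Lemma~\ref{lem:collision-green}(f); $\widehat G_d$ is an explicit rational, since $d!/d^d$ is rational. We then check $\underline p>1-1/\widehat G_d$ in exact arithmetic at both points.

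Finally, Theorem~\ref{thm:single-floor} yields \eqref{eq:fs-root-bound} and almost-sure coexistence of both overlap signs. The constant $\tfrac12(\underline p-F_d)/(\underline p(1-F_d))$ increases in $\underline p$ and decreases in $F_d$. Evaluating it at the certified lower bound on $\underline p$ and upper bound on $F_d$, rounded down, should give $\theta_{\rm fs}\ge0.0020$ and $0.0129$.

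The main obstacle is the numerical margin. The temperatures are tuned so that $p_{\rm pair}(1)$, which decreases in $\beta$ through $C^{2d-1}$, and $p_{\rm pair}(2d)$, which is limited by $H_{2d-1}^2$, balance just above $F_d\approx 1/d+O(d^{-2})$. The small value $0.0020$ at $d=25$ indicates a margin of only a few tenths of a percent there. The crude bound $\widehat G_d$ must therefore be tight enough; if it is not, we would replace it by exact summation of $u_n$ for $n<4d$ plus the tail bound of Lemma~\ref{lem:collision-green}(e).
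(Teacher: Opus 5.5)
Your proposal is correct and follows the paper's own proof step for step. The paired floor of Proposition~\ref{prop:fs-paired-floor} is a uniform fresh-star floor, $F_d\le1-1/\widehat G_d$ holds by Lemma~\ref{lem:collision-green}(f), the two points are checked by exact rational comparison, and Theorem~\ref{thm:single-floor} then applies with its constant evaluated at the certified bounds.

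One correction to your heuristic remark: the two endpoints do not balance. At $d=25$, $p_{\rm pair}(1)\approx0.04196$ is the binding one, well below $p_{\rm pair}(50)\approx0.0513$. The crude bound $\widehat G_d$ already suffices, with margin factor about $1.0038$, so your fallback to exact summation of $u_n$ is not needed.
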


\begin{proof}
Proposition~\ref{prop:fs-paired-floor} shows that
$\min\{p_{\rm pair}(1),p_{\rm pair}(2d)\}$ is a uniform fresh-star floor, and
Lemma~\ref{lem:collision-green}(f) gives $F_d\le1-1/\widehat G_d$. At $d=25$
these explicit rationals satisfy $p_{\rm pair}(1)\ge0.04196433$,
$p_{\rm pair}(50)\ge0.05133661$ and $F_{25}\le0.04180338$, so the floor exceeds
$F_{25}$ by a factor of at least $1.0038$. At $d=26$ the corresponding values
are $0.04113940$, $0.05037252$ and $0.04011778$, with factor at least $1.0254$.
Theorem~\ref{thm:single-floor} applies, and its constant, evaluated at these
bounds as in the proof of Corollary~\ref{cor:single-floor}, is at least
$\theta_{\rm fs}$.
\end{proof}

No finite sum enters this corollary: $p_{\rm pair}(1)$, $p_{\rm pair}(2d)$ and
$\widehat G_d$ are explicit rational expressions in $t$ and $d$, and the proof
compares a handful of them in exact arithmetic.

\begin{remark}[The earlier dimension-$22$ argument]\label{rem:v1-certificate}
The preprint \cite{PeiV1} proved the case $d=22$,
$\beta=\operatorname{atanh}(11/125)$ of Corollary~\ref{cor:single-floor} by a
different exploration. That exploration retried targets and therefore needed
opening bounds after up to four incident failures, supplied by
Proposition~\ref{prop:unit-star} and the signed criterion
\eqref{eq:signed-ratio}.
Repeated failures at one target were represented by a single random activation
threshold \cite[Lemma 3.1 and Eqs.~(3.1)--(3.2)]{KKT2015}. The coordinate
directions were grouped into three classes and projected onto the triangular
lattice \cite[Lemma 4.2, Proposition 3.2 and Section 4.1]{GPS2026}, and physical
successes rejected by thinning were retained as witnesses for later attempts.
The resulting comparison was with anisotropic independent bond percolation on
the triangular lattice. A certificate covering $210$ overlap and failure cases
placed its parameters above the exact critical surface
\cite[Theorem 11.116]{Grimmett1999}, with the critical polynomial exceeding one
by less than $0.001$.
Theorem~\ref{thm:single-floor} supersedes that argument: it needs only the row
$h=0$, which at this temperature exceeds $F_{22}$ by a factor of at least
$1.2077$. We do not reproduce the earlier argument.
\end{remark}

\section{Invariant conditioning and the finite-mean density profile}\label{app:density}

We prove \eqref{eq:finite-mass-profile}. Fix the cutoff quantities of
Appendix~\ref{sec:finite-mean} and write
\[
 Y_d=\frac{\theta_d^{\rm site}(\varpi_K)}{\varpi_K},\qquad
 \varepsilon_d=1-\alpha^\Delta,\qquad \delta_K=\tfrac12 r_K^\Delta.
\]
Here $\varpi_K>0$, and $\varepsilon_d$ is the probability that the root is not
good. The following finite-$d$ bounds imply the claim:
\begin{equation}\label{eq:density-finite}
 D_++D_-\ge Y_d,\qquad
 D_s\ge\delta_K(Y_d-\varepsilon_d),\qquad
 |Q|\le1-\alpha^\Delta r_K^\Delta
 \quad\hbox{almost surely}.
\end{equation}

Let $\mathcal I_0=\sigma(Q,D_+,D_-)$, defining these variables by limsup cube
averages on the full configuration space. Finite changes of spins and bonds do
not change them, by the finite-modification argument in
Proposition~\ref{prop:imbalance}. Hence $\mathcal I_0$ is exterior to every
target star. The iid cutoff marks $C$ retain their original distribution after
conditioning on $\mathcal I_0$. Indeed, for an invariant bounded variable $g$
and a cylinder $f(C)$, stationarity gives
\[
 \E[g f(C)]
 =\E\!\left[g\,\frac1{|\Lambda_n|}\sum_{x\in\Lambda_n}f(T_xC)\right]
 \longrightarrow\E[g]\E[f(C)]
\]
by the iid ergodic theorem and bounded convergence. A monotone class proves
the independence assertion. The same argument applies to any
translation-invariant event, including absence of an infinite blue component
of a specified overlap sign.

For fixed marks, let $h(C)$ be the root-forced survival probability of the iid
site field of parameter $\pi_K$ on good targets. The defect-field comparison
gives $\E_C h(C)\ge Y_d$. Good-target exploration remains valid conditional on
$C,\mathcal I_0$, and on $q_o=s$, since the root is never a target.
Without label conditioning it gives $D_++D_-\ge\E_Ch(C)\ge Y_d$.
When the root is good, the isolated-label estimate additionally yields
\[
 \nu(o\blue\infty,q_o=s\mid C,\mathcal I_0)
 \ge\delta_K h(C).
\]
Averaging gives $D_s\ge\delta_K\E_C[\mathbf1_{\{o\text{ good}\}}h(C)]
\ge\delta_K(Y_d-\varepsilon_d)$. The conditional root probabilities here equal
the corresponding empirical densities by the ergodic theorem and the tower
property. Finally, on a good root the conditional overlap mean is between
$-(1-r_K^\Delta)$ and $1-r_K^\Delta$; use the trivial bound on bad roots.
Averaging the unchanged mark law proves the last inequality in
\eqref{eq:density-finite}.

At $\beta=c/(\Delta m)$, finite mean gives
$\varepsilon_d\to0$, $r_K^\Delta\to1$, and $\Delta\varpi_K\to c$.
Equation~\eqref{eq:kesten-survival}, squeezed below at every fixed $1<c'<c$,
gives $\liminf Y_d\ge y(c)$. This proves
\eqref{eq:finite-mass-profile} with uniform deterministic errors.
More precisely, the lower bounds are limits of
$\inf_\nu\operatorname*{ess\,inf}_\nu D_s$ and the upper bounds use
$\sup_\nu\operatorname*{ess\,sup}_\nu$. Since $y(c)\uparrow1$, taking
$d\to\infty$ first and then $c\to\infty$ makes both infinite-sector densities
approach $1/2$ and the finite-blue mass vanish. No common probability space
across dimensions or finite-volume giant-component limit is asserted.

\section{Susceptibility certificates and an edge-by-edge bound}\label{app:separation-certificate}

The following bound restores the star of a vertex one edge at a time. For the
two-point susceptibility it is weaker than Proposition~\ref{prop:sg-uniform}:
its criterion $\kappa_\Delta<1$ implies $F(2\beta)^{\Delta-1}<3$. Indeed
$\cosh2z-\cosh z\ge3(\cosh z-1)$ gives $F(4\beta)-F(2\beta)\ge3(F(2\beta)-1)$, and
$\mathsf A_\beta\ge1$, so $\kappa_\Delta<1$ forces $F(2\beta)-1<2/(3\Delta)$ and
hence $F(2\beta)^{\Delta-1}<e^{2/3}<3$. It also gives exponential decay in the
distance and controls odd products of spins at fixed sites. It supplies the
certificate at $d=180$ below.

\begin{proposition}[A centered susceptibility bound]\label{prop:sg-bound}
Let $G$ be a finite simple graph of maximum degree at most $\Delta\ge1$,
with zero field and symmetric iid couplings. Assume $F(4\beta)<\infty$, with
$F(u)=\E\cosh(uW)$, and set
\begin{equation}\label{eq:sg-kappa}
 \mathsf A_\beta=\frac{F(4\beta)+F(2\beta)}2,\qquad
 \mathsf B_\beta=\frac{F(4\beta)-F(2\beta)}2,\qquad
 \kappa_\Delta=\Delta\mathsf B_\beta\mathsf A_\beta^{\Delta-1}.
\end{equation}
If $\kappa_\Delta<1$, then
\begin{equation}\label{eq:sg-finite}
 \sup_x\sum_y\E\langle\sigma_x\sigma_y\rangle_{G,J}^2
 \le\frac1{1-\kappa_\Delta},\qquad
 \E\langle\sigma_x\sigma_y\rangle_{G,J}^2
 \le\kappa_\Delta^{\operatorname{dist}_G(x,y)}.
\end{equation}
At $\Delta=2d$, every selected periodic joint limit satisfies
\begin{equation}\label{eq:sg-limit}
 1\le\chi_{\rm SG}(\nu):=\sum_x\E_\nu[q_0q_x]
 \le\frac1{1-\kappa_\Delta},\qquad
 \E_\nu\left(\frac1{|\Lambda|}\sum_{x\in\Lambda}q_x\right)^2
 \le\frac1{(1-\kappa_\Delta)|\Lambda|}.
\end{equation}
In particular its spatial overlap is zero almost surely. The analogous
full-volume overlap estimate holds on the defining finite tori.
More generally, for every finite odd set $A$,
\begin{equation}\label{eq:odd-susceptibility}
 0\le \E_\nu[q_{A,0}q_{A,x}]
 \le\kappa_\Delta^{\operatorname{dist}(A,A+x)},\qquad
 \chi_A(\nu):=\sum_x\E_\nu[q_{A,0}q_{A,x}]
 \le |A|^2\left(\frac{1+\kappa_\Delta}{1-\kappa_\Delta}\right)^d<\infty.
\end{equation}
Consequently $Q_A=0$ almost surely for every such $A$.
\end{proposition}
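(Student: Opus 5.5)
Let $\mathcal C_H(S)=\E\langle\sigma_S\rangle_{H,J}^2$ for induced subgraphs $H$ of $G$ and odd sets $S$. The plan is an edge-by-edge cavity recursion for these mean-square correlations. In contrast with Proposition~\ref{prop:sg-uniform}, we do not remove the whole star at once. We restore the edges of a vertex one at a time and average each restored coupling's fair sign immediately.

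\emph{Single-edge step.} Take an edge $f=xy$ with magnitude $a=\beta W_f$, $t=\tanh a$, and let $\mu$ be the cavity measure on $H\setminus f$. We have
\[
\langle\sigma_S\rangle_H=\frac{\mu(\sigma_S)+\epsilon t\,\mu(\sigma_S\sigma_x\sigma_y)}{1+\epsilon t\,\mu(\sigma_x\sigma_y)}.
\]
Averaging the square over $\epsilon=\pm1$ and bounding the denominators via $(1\pm tz)^{-2}$, $|z|\le1$, we aim for a linear inequality
\[
\E_\epsilon\langle\sigma_S\rangle_H^2\le \mathsf a(a)\,\mu(\sigma_S)^2+\mathsf b(a)\,\mu(\sigma_S\sigma_x\sigma_y)^2 .
\]
Averaging over $W$ should give $\E\mathsf a=\mathsf A_\beta$ and $\E\mathsf b=\mathsf B_\beta$. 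The target is $\mathsf a=\tfrac12(\cosh4a+\cosh2a)$ and $\mathsf b=\tfrac12(\cosh4a-\cosh2a)$, obtained by expanding the ratio as in the $K_t$ computation of Lemma~\ref{lem:insertion} and applying Cauchy--Schwarz to the cross term. Getting exactly these constants, with the cross term $\mu(\sigma_S)\mu(\sigma_S\sigma_x\sigma_y)$ absorbed, is the step I expect to be the main obstacle. I would first try writing the density $\frac12[(1+tr)/(1+tz)]^2+\frac12[(1-tr)/(1-tz)]^2$ in $\cosh$ form and bounding it linearly in $r=\pm1$ with worst-case $z$.

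\emph{Star recursion.} For $S\ni x$ odd, delete the $\le\Delta$ edges at $x$ one by one. After all of them are gone, $x$ is isolated, so any term still containing $\sigma_x$ vanishes. Iterating the step therefore expands $\mathcal C_H(S)$ into a sum over nonempty odd subsets $A$ of the star, where each factor of $\mathsf b$ moves $x$ to a neighbour. Only terms with exactly one $\mathsf b$ factor, or odd $|A|$ more generally, survive, and we bound the extra factors crudely. Summing the weights gives a total at most $\Delta\mathsf B_\beta\mathsf A_\beta^{\Delta-1}=\kappa_\Delta$ when a single neighbour is charged. The multiple-$\mathsf b$ terms are handled as in the proof of Proposition~\ref{prop:sg-uniform}: there $\partial A$ consists of neighbours, so the recursion closes on odd sets in graphs with deleted vertices. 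The argument runs through the family $\mathcal C$ of that proof, with $\mathsf w$ replaced by $\kappa_\Delta$.

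\emph{Conclusions.} The row-sum recursion $M\le1+\kappa_\Delta M$ gives the first bound in \eqref{eq:sg-finite}. Iterating along the vertex moving towards $y$ gives the $\kappa_\Delta^{\operatorname{dist}}$ decay, since each step moves the odd set by at most one. The same decay, applied to odd $A$ started at the relevant sites, gives $\E[q_{A,0}q_{A,x}]\le\kappa_\Delta^{\operatorname{dist}(A,A+x)}$. Summing over $x\in\Z^d$, with at most $|A|^2$ pairs and $\sum_x\kappa^{|x|_1}=((1+\kappa)/(1-\kappa))^d$, yields \eqref{eq:odd-susceptibility}. Nonnegativity comes from conditional replica independence on tori, $\E q_{A,0}q_{A,x}=\E\langle\sigma_A\sigma_{A+x}\rangle^2$. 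The limit statements \eqref{eq:sg-limit} then pass to every selected limit exactly as in the proof of Proposition~\ref{prop:sg-uniform}: transfer finite partial sums, use stationarity for the block variance, and use the cube ergodic theorem to get $Q_A=0$.
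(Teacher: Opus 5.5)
Your single-edge step is correct and is the paper's. The exact fair-sign mean square is $\bigl[(a^2+t^2b^2)(1+t^2c^2)-4t^2abc\bigr]/(1-t^2c^2)^2$. With $|c|\le1$ and $4|abc|\le2a^2+2b^2$, it gives exactly the coefficients $\tfrac12(\cosh4a\pm\cosh2a)$.

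The gap is in the star recursion. Suppose you iterate that inequality on both branches until $x$ is isolated. Then the surviving terms are indexed by \emph{all} odd subsets $A$ of the star, not only singletons, with weight $\mathsf B_\beta^{|A|}\mathsf A_\beta^{D-|A|}$. These weights sum to $\tfrac12\bigl[F(4\beta)^D-F(2\beta)^D\bigr]$. That exceeds $D\mathsf B_\beta\mathsf A_\beta^{D-1}$ as soon as $D\ge3$: the terms with $|A|=3$ alone add $\binom{D}{3}\mathsf B_\beta^3\mathsf A_\beta^{D-3}$. Feeding them into the family $\mathcal C$ of Proposition~\ref{prop:sg-uniform} therefore gives a recursion with this larger weight, not with $\kappa_\Delta$. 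For instance, take unit couplings, let $\Delta\to\infty$ and fix $x=\Delta\mathsf B_\beta$. The weight is then about $e^{5x/3}\sinh x$, while $\kappa_\Delta\approx xe^{5x/3}$, so near $\kappa_\Delta=1$ that recursion does not even close. Neither the constant $1/(1-\kappa_\Delta)$ nor the rate $\kappa_\Delta^{\operatorname{dist}}$ follows from your expansion.

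The missing idea is to stop each $\mathsf b$-branch as soon as it appears. Delete the star of $x$ and restore its $r\le\Delta$ edges in a fixed order $H_0,\dots,H_r=H$. Iterate only on the term that still contains $x$: for $y\ne x$, $c_{H_i}(x,y)\le\mathsf A_\beta c_{H_{i-1}}(x,y)+\mathsf B_\beta c_{H_{i-1}}(j_i,y)$ with $c_{H_0}(x,y)=0$. Since $\mathsf A_\beta\ge1$, this gives $c_H(x,y)\le\mathsf B_\beta\mathsf A_\beta^{\Delta-1}\sum_{i\le r}c_{H_{i-1}}(j_i,y)$.

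For a two-point function the $\mathsf b$-term is again a two-point function. It stays in the intermediate graph $H_{i-1}$, which is just another edge-deleted subgraph of $G$. So let $M$ be the maximum row sum over all edge-deleted subgraphs and all vertices; it is finite because $G$ is finite. Then $M\le1+\Delta\mathsf B_\beta\mathsf A_\beta^{\Delta-1}M$ follows directly, with no multi-point family at all.

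The decay bound and the odd-set bound come from the same stopped recursion. Maximize over edge-deleted subgraphs and over $x$ at distance at least $n$. For odd sets, fix an odd $B$ and maximize over odd $A$ at distance at least $n$; note that $A\triangle\{v,j\}$ is again odd and at distance at least $n-1$.

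Two smaller points. Your $\mathcal C_H(S)$ should be indexed by even sets, such as those of $\sigma_A\sigma_B$ with $A,B$ odd, since for odd $S$ it vanishes identically in zero field. Your final summation over $x$ and the passage to the periodic limit are fine.
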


\begin{proof}
Write $c_H(x,y)=\E\langle\sigma_x\sigma_y\rangle_{H,J}^2$ for each
edge-deleted subgraph $H$ of a fixed finite ambient graph. Remove
$e=\{x,j\}$, with $x\ne y$, and put
\[
 a=\langle\sigma_x\sigma_y\rangle_0,\quad
 b=\langle\sigma_j\sigma_y\rangle_0,\quad
 c=\langle\sigma_x\sigma_j\rangle_0.
\]
The normalized correlation after restoring $e$ is $(a+ub)/(1+uc)$,
where $u=\epsilon\tanh(\beta W_e)$. For $t=\tanh(\beta W_e)$, its
fair-sign mean square is exactly
\[
 \frac{(a^2+t^2b^2)(1+t^2c^2)-4t^2abc}{(1-t^2c^2)^2}
 \le A(t)a^2+B(t)b^2,
\]
where
\[
 A(t)=\frac{1+3t^2}{(1-t^2)^2},\qquad
 B(t)=\frac{t^2(3+t^2)}{(1-t^2)^2}.
\]
Indeed $|c|\le1$ and $4|abc|\le2a^2+2b^2$.
The identities $A(\tanh z)=(\cosh4z+\cosh2z)/2$ and
$B(\tanh z)=(\cosh4z-\cosh2z)/2$ then give
\begin{equation}\label{eq:sg-edge}
 c_{H+e}(x,y)\le\mathsf A_\beta c_H(x,y)
                  +\mathsf B_\beta c_H(j,y).
\end{equation}
The missing sign and magnitude are independent of the cavity correlations;
their averaging therefore preserves the original quenched normalization.

Delete the whole $x$-star of $H$ and restore its $r\le\Delta$ edges in a
fixed order independent of $y$ and of the disorder, obtaining
$H_0,\ldots,H_r=H$ with restored edge $\{x,j_i\}$ at step $i$.
The isolated spin gives $c_{H_0}(x,y)=0$ for $y\ne x$. Iterating
\eqref{eq:sg-edge}, with $K=\mathsf B_\beta\mathsf A_\beta^{\Delta-1}$, yields
\begin{equation}\label{eq:sg-star}
 c_H(x,y)\le K\sum_{i=1}^r c_{H_{i-1}}(j_i,y),\qquad y\ne x.
\end{equation}
Now take the finite maximum
$M=\max_{H\subseteq G,x}\sum_y c_H(x,y)\le |V(G)|$.
The intermediate graphs do not depend on $y$, so summing first gives
$M\le1+\Delta K M$, proving the susceptibility bound.
For the distance estimate, maximize $c_H(x,y)$ over all deleted graphs and
all $x$ at ambient distance at least $n$ from a fixed $y$. Equation
\eqref{eq:sg-star} bounds this maximum by $\kappa_\Delta$ times the
corresponding maximum at distance $n-1$. Its initial value is at most one.
Disconnected pairs have zero correlation by component spin flips; the
$\beta=0$ case is immediate.

On each torus, conditional replica independence gives
$\E q_xq_y=c_G(x,y)\ge0$. Pass this bounded local observable and every finite
partial susceptibility sum to the selected limit, then increase the finite
sum. Stationarity gives the block-variance bound in \eqref{eq:sg-limit}.
The cube ergodic theorem and this $L^2$ bound force spatial overlap zero.
No independence of limiting replicas or assertion about all quenched Gibbs
states is used.

For the extension, fix an odd set $B$ in the finite ambient graph.
The same normalized edge identity applies to $\langle\sigma_A\sigma_B\rangle$.
If the odd set $A$ is at distance at least $n\ge1$ from $B$, choose $v\in A$
and delete its star; the isolated spin again makes the initial correlation zero.
Restoring $\{v,j\}$ replaces $A$ in the inhomogeneous term by
$A\mathbin{\triangle}\{v,j\}$, still odd and nonempty, at distance at least
$n-1$ from $B$. Maximizing over all such $A$ and all edge-deleted graphs
therefore gives $M_n\le\kappa_\Delta M_{n-1}$, with $M_0\le1$.
This proves the set-distance estimate. Pass to the periodic limit for each fixed
translate and use
\[
 \kappa^{\operatorname{dist}(A,A+x)}
 \le\sum_{a,b\in A}\kappa^{|x+b-a|_1}
\]
to obtain \eqref{eq:odd-susceptibility}. At $\kappa=0$ the conclusion follows
directly from independent spins. The block-variance argument then gives $Q_A=0$.
\end{proof}

At $d=180$, $t=\tanh\beta=1/56$, the unit-coupling MNS parameters are
\[
 L_\beta=\frac{112}{3249},\qquad
 R_\beta=\frac{9853313}{9834495},\qquad
 p_{\rm MNS}=\frac{L_\beta}{1+R_\beta^{360}}.
\]
Gomes, Pereira and Sanchis \cite[Theorem 2.3(2)]{GPS2026} give
$p_c^{\rm site}(\Z^{180})\le1-2^{-1/60}$. The exact integer inequalities
\[
 149273\cdot9834495^{360}>74727\cdot9853313^{360},\qquad
 2\cdot1977^{60}<2000^{60}
\]
therefore show $p_{\rm MNS}>23/2000>p_c^{\rm site}(\Z^{180})$.
For the physical susceptibility,
\[
 \mathsf A_\beta=\frac{9843904}{9828225},\qquad
 \mathsf B_\beta=\frac{9409}{9828225}.
\]
The further exact inequality
\[
 8\cdot360\cdot9409\cdot9843904^{359}<5\cdot9828225^{360}
\]
gives $\kappa_{360}<5/8$ and hence $\chi_{\rm SG}<8/3$.
The companion program \path{verify_physical_separation.py}
\cite{CMRCode} checks all three comparisons.
These constants supply
a common percolation and finite-susceptibility example that uses only
Propositions~\ref{prop:criteria} and~\ref{prop:sg-bound}.

For Corollary~\ref{cor:explicit-balance}, write $t=p/q$ in lowest terms, so
that $F(2\beta)=a/b$ with $a=q^2+p^2$ and $b=q^2-p^2$, and put $n=2d-1$. Then
\[
 F(2\beta)^{n}<\tfrac{12}5\iff5a^n<12b^n,\qquad
 \chi_*(2d,\beta)=1+\frac{a^{n+1}-b^{n+1}}{b\,(3b^n-a^n)},
\]
so each claim is a comparison of integers. Table~\ref{tab:explicit-chi} lists
the values rounded upward; two independent programs \cite{CMRCode},
\path{verify_explicit_balance.py} in exact rational arithmetic and
\path{verify_explicit_balance.mjs} in integer arithmetic, confirm all seventeen comparisons
$F(2\beta)^{2d-1}<12/5$ and $\chi_*<7/2$, and $\chi_*<7/4$ at $(7,3/20)$.

\begin{table}[!htbp]
\centering
\small
\begin{tabular}{cccc|cccc}
\hline
$d$ & $t$ & $F(2\beta)^{2d-1}\le$ & $\chi_*\le$ &
$d$ & $t$ & $F(2\beta)^{2d-1}\le$ & $\chi_*\le$\\ \hline
$12$ & $3/25$   & $1.9396$ & $1.9394$ & $10$ & $13/100$ & $1.9008$ & $1.8790$\\
$12$ & $11/100$ & $1.7448$ & $1.6274$ & $9$  & $7/50$   & $1.9474$ & $1.9740$\\
$12$ & $23/200$ & $1.8375$ & $1.7628$ & $9$  & $29/200$ & $2.0441$ & $2.1842$\\
$12$ & $1/8$    & $2.0520$ & $2.1784$ & $9$  & $3/20$   & $2.1493$ & $2.4673$\\
$12$ & $13/100$ & $2.1760$ & $2.5179$ & $9$  & $31/200$ & $2.2638$ & $2.8678$\\
$11$ & $13/100$ & $2.0338$ & $2.1422$ & $9$  & $4/25$   & $2.3884$ & $3.4749$\\
$11$ & $3/25$   & $1.8310$ & $1.7566$ & $8$  & $3/20$   & $1.9643$ & $2.0183$\\
$10$ & $27/200$ & $1.9990$ & $2.0721$ & $7$  & $31/200$ & $1.8679$ & $1.8478$\\
     &          &          &          & $7$  & $3/20$   & $1.7952$ & $1.7286$\\ \hline
\end{tabular}
\caption{The criterion of Proposition~\ref{prop:sg-uniform} at the explicit
pairs of Table~\ref{tab:explicit}, values rounded upward.}
\label{tab:explicit-chi}
\end{table}

\section{From imbalance to overlap order and a pressure cusp}\label{app:odd-order}

We prove Proposition~\ref{prop:odd-order}. Write $\mathcal I$ for the invariant
sigma-field of the stationary joint law. Let $f(J,\sigma,\tau)$ be a bounded
local observable, continuous in its finitely many coupling coordinates, and odd
under flipping either replica separately. On a finite support $S$ its spin
Fourier expansion has the form
\[
 f=\sum_{\substack{A,B\subseteq S\\ |A|,|B|\ {\rm odd}}}
       c_{AB}(J)\sigma_A\tau_B,
 \qquad \|c_{AB}\|_\infty\le\|f\|_\infty.
\]
The coefficients are bounded continuous local functions of the disorder.
For a finite cube $\Lambda$, put
\[
 M_{AB,\Lambda}=\frac1{|\Lambda|}\sum_{x\in\Lambda}
 c_{AB}(T_xJ)\sigma_{x+A}\tau_{x+B},\qquad
 Q_{A,\Lambda}=\frac1{|\Lambda|}\sum_{x\in\Lambda}q_{A,x}.
\]
In a sufficiently large finite torus, conditional replica independence expresses
$\E M_{AB,\Lambda}^2$ as the averaged double sum of
$c_{AB}(T_xJ)c_{AB}(T_yJ)
\langle\sigma_{x+A}\sigma_{y+A}\rangle_J
\langle\sigma_{x+B}\sigma_{y+B}\rangle_J$ divided by $|\Lambda|^2$.
Taking absolute values of the coefficients and applying Cauchy--Schwarz first
to the finite double sum and then to the disorder average gives
\begin{equation}\label{eq:mixed-overlap-bound}
 \E M_{AB,\Lambda}^2
 \le \|c_{AB}\|_\infty^2
       \sqrt{\E Q_{A,\Lambda}^2\,\E Q_{B,\Lambda}^2}.
\end{equation}
Every term in this inequality is a bounded local observable for fixed $\Lambda$.
Pass to the selected joint limit first, then let $\Lambda$ increase. The
ergodic theorem gives $L^2$ convergence of all these bounded spatial averages.
If every odd $Q_A$ vanished, \eqref{eq:mixed-overlap-bound} would make the
invariant average $\E_\nu[f\mid\mathcal I]$ vanish as well.

Apply this observation to
\[
 f_R(J,\sigma,\tau)=q_0\,
 \Pp_U(0\blue\partial B_R\mid J,\sigma,\tau),
\]
using paths within $B_R$. This is bounded, local, continuous in the couplings,
and odd under either replica flip. Its spatial average agrees with that of
$q_0\mathbf1_{\{0\blue\partial B_R\}}$: conditional on spins and disorder,
the centered activation variables at sufficiently separated translates depend
on disjoint sets of coins, so their average has variance $O_R(|\Lambda|^{-1})$.
Writing $F_R=\E_\nu[f_R\mid\mathcal I]$ and $D=D_+-D_-$, contraction of
conditional expectation yields
\[
 \|F_R-D\|_2^2
 \le \nu\bigl(|C_{\rm b}(0)|<\infty,\ 0\blue\partial B_R\bigr)
 \longrightarrow0.
\]
Thus $\E D^2>0$ forces a finite $R$ with $\|F_R\|_2>0$, and hence a
fixed odd $A$ with $\rho_A>0$. Bounded local passage also gives
\[
 \E_\nu[q_{A,0}q_{A,x}]
   =\lim_L\E_J\langle\sigma_A\sigma_{x+A}\rangle_{L,J}^2\ge0.
\]
The ergodic projection identity
$\E_\nu[q_{A,0}Q_A]=\E_\nu Q_A^2$ proves \eqref{eq:odd-order}.
This conclusion concerns spatial averages in the selected limit; it does not
interchange them with full-torus overlap averages.

For completeness we give the pressure bridge directly in this setting.
Let $Z_{\Lambda,J}^{A}(h)$ be the free-boundary two-replica partition sum with
exponent
\[
 \beta\sum_{\{x,y\}\subseteq\Lambda}J_{xy}
       (\sigma_x\sigma_y+\tau_x\tau_y)
       +h\sum_{x:\,x+A\subseteq\Lambda}q_{A,x},
 \qquad
 P_A(h)=\lim_{\Lambda\uparrow\Z^d}
          \frac{\E\log Z_{\Lambda,J}^{A}(h)}{|\Lambda|}.
\]
The first sum is over lattice edges, and the limit runs through cubes.
The finite-range boundary comparison
and $\E|J|<\infty$ give existence of this pressure and independence of free
versus periodic boundary conditions. The additional interaction is bounded
and has fixed finite range. Oddness of $A$ makes $P_A(h)=P_A(-h)$.

Put $a=\operatorname*{ess\,sup}_\nu Q_A>0$, and take $0<t<a$.
The event $E_t=\{Q_A>t\}$ is translation invariant and spin-tail measurable;
define the spatial average by a limsup on exceptional configurations.
Its conditional probability $\nu(E_t\mid J)$ is a translation-invariant
function of $J$. Ergodicity of iid disorder makes it the positive constant
$\nu(E_t)$ almost surely. Therefore $\nu_t=\nu(\,\cdot\mid E_t)$ retains
the original disorder marginal. Tail conditioning preserves the product
Gibbs specification, and stationarity gives
$m_t:=\E_{\nu_t}q_{A,0}=\E_{\nu_t}Q_A>t$.

For fixed $J$, let $\mu^0_{\Lambda,J}$ be the free, uncoupled two-replica
Gibbs measure. The relative entropy of the spin marginal of $\nu_t$ in
$\Lambda$ with respect to $\mu^0_{\Lambda,J}$ is at most
\[
 4\beta\sum_{e\ {\rm crossing}\ \partial\Lambda}|J_e|.
\]
Indeed, conditional on the exterior, the DLR density differs by a boundary
interaction whose oscillation has this bound. Convexity of relative entropy
gives the same estimate for the mixture over exterior spins.
The finite-volume Gibbs variational inequality consequently yields
\[
 \E\log Z_{\Lambda,J}^{A}(h)-\E\log Z_{\Lambda,J}^{A}(0)
 \ge h\,\#\{x:x+A\subseteq\Lambda\}\,m_t
       -4\beta\,\E\sum_{e\ {\rm crossing}\ \partial\Lambda}|J_e|.
\]
Divide by volume, take the thermodynamic limit, and let $t\uparrow a$.
For $h>0$ this gives $P_A(h)-P_A(0)\ge ah$; replica-flip symmetry gives
the bound $a|h|$ for both signs. The law of $Q_A$ is symmetric, so
$a\ge\sqrt{\rho_A}$. Convexity now implies
$P_A'(0+)\ge a$ and $P_A'(0-)\le-a$, proving the claimed cusp.
For a physical Hamiltonian coupling $\lambda$, the dimensionless parameter is
$h=\beta\lambda$.

We now prove Proposition~\ref{prop:odd-to-site}. The edge set $F$ exists:
pair the vertices of the even set $A\triangle\{0\}$ and take the symmetric
difference of the edge sets of lattice paths joining the pairs. Fix $x$, put
$H=F\triangle(x+F)$, so that $|H|\le2k$, and write $b_e=\sigma_u\sigma_v$ for
$e=uv$. Since $\prod_{e\in F}b_e=\sigma_A\sigma_0$,
$\prod_{e\in x+F}b_e=\sigma_{x+A}\sigma_x$ and $b_e^2=1$,
\begin{equation}\label{eq:path-parity}
 \sigma_A\sigma_{x+A}=O\prod_{e\in H}b_e,\qquad O=\sigma_0\sigma_x.
\end{equation}
The symmetric difference, not the union, is needed when $F$ and $x+F$ share
edges.

Work on a torus large enough to contain $A$, $x+A$ and $H$, with fixed
couplings $J$. Let $W'=(W'_e)_{e\in H}$ be independent with the law of $W$
conditioned on $[w_-,w_+]$. For $\zeta\in\{-1,1\}^H$ let $J^\zeta$ agree
with $J$ off $H$ and have $J^\zeta_e=\zeta_eW'_e$ on $H$, and put
$u^\pm_e=\tanh\bigl(\beta(\pm W'_e-J_e)\bigr)$. Since
$e^{\beta(J'_e-J_e)b_e}=\cosh(\beta(J'_e-J_e))\bigl(1+\tanh(\beta(J'_e-J_e))b_e\bigr)$
and the hyperbolic cosines cancel after normalization,
\[
 \mathcal P(u^\zeta)=R_\zeta\langle O\rangle_{J^\zeta},\qquad
 \mathcal P(u):=\Bigl\langle O\prod_{e\in H}(1+u_eb_e)\Bigr\rangle_J,\qquad
 R_\zeta:=\Bigl\langle\prod_{e\in H}(1+u^{\zeta_e}_eb_e)\Bigr\rangle_J,
\]
with $0<R_\zeta\le2^{|H|}$. The polynomial $\mathcal P$ is multiaffine, and
one divided difference in each coordinate extracts its top coefficient:
\begin{equation}\label{eq:top-coefficient}
 \Bigl\langle O\prod_{e\in H}b_e\Bigr\rangle_J
 =\frac{\sum_\zeta\bigl(\prod_{e\in H}\zeta_e\bigr)R_\zeta
        \langle O\rangle_{J^\zeta}}
       {\prod_{e\in H}(u^+_e-u^-_e)}.
\end{equation}
Each gap equals
$\sinh(2\beta W'_e)/[\cosh(\beta(W'_e-J_e))\cosh(\beta(W'_e+J_e))]$, which is at
least $\delta$ when $|J_e|\le\overline W$. On the event
$\mathcal G=\{\max_{e\in H}|J_e|\le\overline W\}$, Cauchy--Schwarz over the $2^{|H|}$
terms of \eqref{eq:top-coefficient} and $R_\zeta\le2^{|H|}$ give
\[
 \Bigl\langle O\prod_{e\in H}b_e\Bigr\rangle_J^2
 \le 8^{|H|}\delta^{-2|H|}\sum_\zeta\langle O\rangle_{J^\zeta}^2 .
\]
The right side does not depend on $J_H$. Averaging over $\zeta$, $W'$ and
the disorder, the couplings $J^\zeta$ have the iid law conditioned on
$\{w_-\le W_e\le w_+,\ e\in H\}$, whose density relative to the original law
is at most $\bar p^{-|H|}$. Hence
$\E\sum_\zeta\langle O\rangle_{J^\zeta}^2\le(2/\bar p)^{|H|}\E\langle O\rangle_J^2$.
The complement of $\mathcal G$ has probability at most $|H|\,\Pp(W>\overline W)$, and
squared correlations are at most one. With \eqref{eq:path-parity},
\[
 \E_J\langle\sigma_A\sigma_{x+A}\rangle_{L,J}^2
 \le\Bigl(\frac{16}{\bar p\,\delta^2}\Bigr)^{|H|}
    \E_J\langle\sigma_0\sigma_x\rangle_{L,J}^2
   +|H|\,\Pp(W>\overline W),
\]
and $|H|\le2k$ gives \eqref{eq:odd-to-site} on every sufficiently large torus.

Both sides are expectations of bounded local two-replica observables, by
finite-volume replica independence, so \eqref{eq:odd-to-site} passes to
$\nu$ along the defining subsequence; the constants depend neither on $L$ nor
on $x$. Spatial Ces\`aro averaging and the ergodic projection identity used
for \eqref{eq:odd-order}, applied to $A$ and to $\{0\}$, give the bound on
$\rho_A$. Letting $\overline W\to\infty$ proves that $\E_\nu Q^2=0$ forces
$\rho_A=0$. For bounded $W$ the error term vanishes and summing
\eqref{eq:odd-to-site} over $x$ gives $\chi_A\le C\chi_{\rm SG}$. No
conditional independence of the limiting replicas and no interchange of
full-torus and infinite-volume overlap averages is used.

\section{Reusable comparison and limit arguments}\label{app:probability}

These results isolate the probability statements used in the main CMR proof.
They apply to other graph-dependent laws whenever the stated retained-observable
and history assumptions can be verified.

\subsection{From a leaf law to conditional success}

Here is a model-independent way to verify the exploration hypothesis. The
deletion geometry is illustrated in Figure~\ref{fig:leaf}; its explicit leaf
probability is the CMR input from Section~\ref{sec:cmr-verification}. On a fixed
finite vertex set, let $P_A$ be a probability law for each edge set $A$ under
consideration. Configurations may have vertex variables and variables indexed by
edges in $A$. Deleting $f$ forgets its edge variables and retains all vertex
variables. A \emph{retained observable} is the same measurable function of these
surviving coordinates in both laws. In particular, an edge indicator $\omega_e$
for $e\ne f$ must be unchanged as a function of the surviving coordinates.
The marginal of $P_A$ after deletion need not equal $P_{A\setminus f}$.

For each $v$, choose a sigma-field $\mathcal H_v$ of allowed exterior
observables, retained upon deleting its whole star. In applications it may
include all vertex variables outside $v$ and all nonincident edge variables,
or a smaller sigma-field. It cannot include an observation of a deleted edge.

\begin{lemma}[Leaf and restoration comparison]\label{lem:restoration}
Fix $e=uv\in A$. Suppose every intermediate graph obtained by deleting edges
of $\starv_A(v)\setminus\{e\}$ has the following properties.
\begin{enumerate}
\item For each edge $f$ being restored, constants $0<\ell_f\le M_f<\infty$
give, for every bounded nonnegative retained observable $h$,
\begin{equation}\label{eq:abstract-restoration}
 \ell_f E_{A'\setminus f}h\le E_{A'}h\le M_f E_{A'\setminus f}h.
\end{equation}
It is enough that these inequalities hold for $h=\mathbf1_H$ and
$h=\mathbf1_H\omega_e$, for every $H\in\mathcal H_v$ used below.
\item In the graph $A_0=A\setminus(\starv_A(v)\setminus\{e\})$, where $v$
is a leaf, a constant $b\in[0,1]$ satisfies
\begin{equation}\label{eq:abstract-leaf}
 E_{A_0}[\mathbf1_H\omega_e]=bP_{A_0}(H),\qquad H\in\mathcal H_v.
\end{equation}
\end{enumerate}
Then, whenever $P_A(H)>0$,
\begin{equation}\label{eq:restoration-result}
 b\prod_{f\ni v,\,f\ne e}\frac{\ell_f}{M_f}
 \le P_A(\omega_e=1\mid H)
 \le b\prod_{f\ni v,\,f\ne e}\frac{M_f}{\ell_f}.
\end{equation}
For uniform constants $\ell\le M$ and degree at most $\Delta$, the lower
bound is $p=b(\ell/M)^{\Delta-1}$.
\end{lemma}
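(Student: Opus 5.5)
The plan is a telescoping comparison along the chain of intermediate edge sets between $A_0$ and $A$, applied separately to the success mass $E[\mathbf1_H\omega_e]$ and to the history mass $P(H)$. The ratio of these two masses is then bounded from both sides.

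Enumerate $\starv_A(v)\setminus\{e\}=\{f_1,\ldots,f_r\}$ in a fixed order and put $A_i=A_0\cup\{f_1,\ldots,f_i\}$, so that $A_r=A$. Each $A_i$ is an intermediate graph of the kind allowed in hypothesis~1, and $f_{i+1}$ is the edge restored at step $i+1$.

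Fix $H\in\mathcal H_v$. Both $\mathbf1_H$ and $\mathbf1_H\omega_e$ are bounded nonnegative retained observables at every step. For $\mathbf1_H$ this holds because $\mathcal H_v$ consists of observables retained under deletion of the whole star. For $\mathbf1_H\omega_e$ it holds because $e$ is never deleted, so $\omega_e$ is the same function of the surviving coordinates in every $A_i$.

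Applying \eqref{eq:abstract-restoration} $r$ times and multiplying the inequalities gives
\[
 \prod_{i}\ell_{f_i}\,E_{A_0}[\mathbf1_H\omega_e]\le E_A[\mathbf1_H\omega_e]\le \prod_i M_{f_i}\,E_{A_0}[\mathbf1_H\omega_e],
\]
and the same two-sided bound holds for $P_A(H)$ in terms of $P_{A_0}(H)$.

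Next, insert the leaf identity \eqref{eq:abstract-leaf}, $E_{A_0}[\mathbf1_H\omega_e]=bP_{A_0}(H)$. Since $P_A(H)>0$ and $\ell_f>0$, the history bound $P_A(H)\le\prod_iM_{f_i}P_{A_0}(H)$ gives $P_{A_0}(H)>0$, so every quotient below is well defined.

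For the lower bound in \eqref{eq:restoration-result}, divide the lower bound on the success mass by the upper bound on the history mass:
\[
 P_A(\omega_e=1\mid H)\ge \frac{\prod\ell_{f_i}\,bP_{A_0}(H)}{\prod M_{f_i}P_{A_0}(H)}.
\]
For the upper bound, interchange the roles, dividing the upper bound on the success mass by the lower bound on the history mass. The uniform-constant case follows from $r\le\Delta-1$ together with $\ell/M\le1$, since fewer factors only improve the lower bound.

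The one point that needs care, and the only potential obstacle, is confirming that hypothesis~1 is invoked exactly on the observables it permits, namely $\mathbf1_H$ and $\mathbf1_H\omega_e$ at each intermediate graph $A_i$. It is not invoked on any conditional law. The argument never uses consistency of the laws $P_{A_i}$ under restriction, only these unnormalized mass comparisons, which is why it applies even though the marginal after deletion need not equal $P_{A\setminus f}$.
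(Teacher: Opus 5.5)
Your proposal is correct and follows essentially the same route as the paper's proof. You restore the star edges in a fixed order, compare the success mass $E[\mathbf1_H\omega_e]$ and the history mass $P(H)$ to their leaf values, deduce $P_{A_0}(H)>0$ from the upper history bound, insert the leaf identity, and divide (interchanging the bounds for the upper estimate), without rerunning any exploration or assuming consistency under restriction.
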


\begin{proof}
Restore the deleted edges in any fixed order. The lower inequality in
\eqref{eq:abstract-restoration}, applied to $\mathbf1_H\omega_e$, gives
\[
 E_A[\mathbf1_H\omega_e]
 \ge\Bigl(\prod_f\ell_f\Bigr)bP_{A_0}(H).
\]
The upper inequality, applied to $\mathbf1_H$, bounds the denominator by
$(\prod_f M_f)P_{A_0}(H)$. The latter cavity probability is positive whenever
$P_A(H)>0$, by the same upper bound. Division proves the lower estimate;
interchanging the bounds proves the upper estimate. The history is kept as a
fixed observable throughout. One does not rerun an exploration on the modified
graphs or compare their realized histories.
\end{proof}

The same argument supplies a label bound when an isolated vertex has a known
law, provided the restoration bounds also hold while restoring the whole star,
including $e$. If deleting that star makes a vertex label $\xi_v$ have conditional
probability $a_s$ of value $s$, independently of $\mathcal H_v$, restoration
gives $P_A(\xi_v=s\mid H)\ge a_s(\ell/M)^\Delta$. This is a separate
isolated-vertex assumption; it does not follow from the leaf-edge assumption.

\begin{corollary}[Odds refinement]\label{cor:odds}
Under Lemma~\ref{lem:restoration}, suppose the restoration bounds also apply
to the retained failure observable $\mathbf1_H(1-\omega_e)$. Put
$\mathcal R=\prod_{f\ni v,\,f\ne e}(M_f/\ell_f)$. Then
\begin{equation}\label{eq:odds-refinement}
 \frac{b}{b+\mathcal R(1-b)}
 \le P_A(\omega_e=1\mid H)
 \le\frac{\mathcal R b}{1-b+\mathcal R b}.
\end{equation}
\end{corollary}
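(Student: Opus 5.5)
The plan is to apply the same restoration comparison of Lemma~\ref{lem:restoration} three times, with the fixed history $H$, and to combine the resulting success and failure masses in odds form rather than dividing separately. Write $S=E_A[\mathbf1_H\omega_e]$ and $F=E_A[\mathbf1_H(1-\omega_e)]$. The conditional probability is then $P_A(\omega_e=1\mid H)=S/(S+F)$, and this quantity is increasing in the ratio $S/F$. Bounding that single ratio therefore bounds the probability from both sides.

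\emph{Leaf values.} In the leaf graph $A_0$, hypothesis \eqref{eq:abstract-leaf} gives $E_{A_0}[\mathbf1_H\omega_e]=bP_{A_0}(H)$. Since $\mathbf1_H$ is itself a retained observable, it follows that $E_{A_0}[\mathbf1_H(1-\omega_e)]=(1-b)P_{A_0}(H)$.

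\emph{Restoration.} Restore the edges $f\ni v$, $f\ne e$, one at a time in a fixed order, applying \eqref{eq:abstract-restoration} at each step to the two nonnegative retained observables $\mathbf1_H\omega_e$ and $\mathbf1_H(1-\omega_e)$. The second is admissible by the added assumption of the corollary. Writing $L=\prod_f\ell_f$ and $M=\prod_f M_f$, this gives
\[
 LbP_{A_0}(H)\le S\le MbP_{A_0}(H),\qquad
 L(1-b)P_{A_0}(H)\le F\le M(1-b)P_{A_0}(H).
\]
The upper bound for $\mathbf1_H$, together with $P_A(H)>0$, forces $P_{A_0}(H)>0$, exactly as in the proof of Lemma~\ref{lem:restoration}.

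\emph{Odds.} Taking the lower bound for $S$ against the upper bound for $F$, and then the reverse, yields
\[
 \frac{b}{\mathcal R(1-b)}\le\frac SF\le\frac{\mathcal Rb}{1-b},
\]
where $\mathcal R=M/L$. We then apply the increasing map $x\mapsto x/(1+x)$ and clear denominators. The lower odds bound becomes $b/(b+\mathcal R(1-b))$, and the upper one becomes $\mathcal Rb/(1-b+\mathcal Rb)$, which is \eqref{eq:odds-refinement}.

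The edge cases need separate treatment. If $b=1$, then $F=0$ because $F\le M(1-b)P_{A_0}(H)$, so the probability is $1$ and both bounds equal $1$. If $b=0$, then $S=0$ and both bounds equal $0$. I expect no step here to be a real obstacle. The only point requiring care is that the failure observable must satisfy the restoration bounds, and the corollary assumes this explicitly instead of deriving it from complementation. Complementation would fail here because the constants $\ell_f$ and $M_f$ multiply the masses rather than bound their differences.
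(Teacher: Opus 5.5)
Your proposal is correct and takes the same approach as the paper's proof: it restores the success and failure masses from their leaf values $bP_{A_0}(H)$ and $(1-b)P_{A_0}(H)$, bounds their ratio in odds form, and handles $b\in\{0,1\}$ through the upper restoration bound. Your closing remark, that the failure test cannot be recovered from the other two by complementation, also agrees with the paper's comment after the corollary.
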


\begin{proof}
For $0<b<1$, compare the success mass directly with the failure mass.
Restoration bounds their ratio between $b/[\mathcal R(1-b)]$ and
$\mathcal R b/(1-b)$.
Converting odds to probabilities proves the assertion. At $b=0$ or $b=1$,
the null cavity event remains null by the upper restoration bound.
\end{proof}

This elementary likelihood-ratio refinement is available under the full
retained-observable hypothesis, but not from its smaller named test class
without the extra failure test. The simpler bound $b/\mathcal R$ suffices for
our asymptotic arguments. Negative association or a marginal opening probability
alone does not provide either of these conditional bounds.

For example, the CMR failure observable $\mathbf1_H(1-B_e)$ is retained.
The good-target bound of Lemma~\ref{lem:goodtarget} therefore improves to
$b_K/[b_K+r_K^{-(\Delta-1)}(1-b_K)]$. The main asymptotic proof needs only
the simpler $\pi_K$.

\subsection{Passing to a selected infinite-volume law}

\begin{proposition}[Local transfer]\label{prop:local-transfer}
Let laws on finite ambient graphs converge locally to a law on a countable,
locally finite graph. Suppose the constants in conditional opening bounds
are uniform in volume, and each bound holds against every bounded nonnegative finite
exterior-cylinder test whose coordinates and full relevant stars lie in the
ambient graph. Suppose these tests, and the tested indicators, are continuity
observables for the local convergence. Then the same bounds hold in the selected
limit, first for cylinder histories and then for the generated exterior
sigma-fields. Proposition~\ref{prop:exploration} consequently applies to that
limit whenever its history assumptions are met.
\end{proposition}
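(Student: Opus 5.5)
The plan is to write each conditional opening bound without division, pass it through the local limit at the level of cylinder tests, and then extend to the full exterior sigma-field by a monotone-class argument. Fix an oriented edge $uv$ and write the bound in the form $\E[h\,\mathcal Z]\ge0$. Here $h\ge0$ is a bounded exterior-cylinder test and $\mathcal Z=\omega_{uv}-p$ for a lower bound, or $\bar p-\omega_{uv}$ for an upper bound. If the bound concerns labels, take $\mathcal Z=\mathbf1\{\xi_v=s\}-a$. By hypothesis, for every fixed such $h$ and all ambient graphs large enough to contain the window of $h$ together with the full star of $v$, the finite-volume inequality $\E_{\nu_L}[h\mathcal Z]\ge0$ holds with a constant independent of $L$.

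\emph{Step 1: cylinder histories.} Along the defining subsequence, $h\mathcal Z$ is a bounded function of finitely many coordinates and is a continuity observable. Local convergence therefore gives $\E_\nu[h\mathcal Z]=\lim_k\E_{\nu_{L_k}}[h\mathcal Z]\ge0$. Two points need care. The same function must be evaluated on the same coordinates in every large ambient graph, which is where the requirement that the relevant stars lie in the ambient graph enters, through the identification of a fixed ball with a torus ball. When disorder coordinates are continuous, we use only tests continuous in them, or cutoffs at continuity points as in Appendix~\ref{sec:finite-mean}.

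\emph{Step 2: extension.} Let $\mathcal M$ be the class of events $H$ in the exterior sigma-field with $\E_\nu[\mathbf1_H\mathcal Z]\ge0$. Finite-window exterior cylinder events form an algebra, and it lies in $\mathcal M$ by Step 1 applied to indicators. Since $|\mathcal Z|$ is bounded, dominated convergence shows that $\mathcal M$ is closed under monotone limits. The monotone class theorem then gives all of $\sigma(\text{cylinders})$, which is the generated exterior sigma-field. By the usual approximation by simple functions, this also covers bounded nonnegative measurable $h$. Dividing by $\nu(H)>0$ yields the conditional bounds for every positive-probability history, and hence the almost-sure statement $\nu(\omega_{uv}=1\mid\text{exterior})\ge p$.

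\emph{Step 3: exploration.} The histories produced by Proposition~\ref{prop:exploration} are finite off-star transcripts. These are exterior-measurable events at the current target, possibly intersected with an additional exterior-measurable initial event, as in the conditioned laws of Proposition~\ref{prop:ov-as}. By Step 2 they satisfy \eqref{eq:abstract-history} under $\nu$, so the proposition applies verbatim in the limit.

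I expect the main obstacle to be Step 1's continuity bookkeeping rather than the monotone-class step. One must ensure that tests depending on unbounded couplings are continuity observables. The conditional bounds must also be stated against tests of the exterior sigma-field, not of a finer sigma-field that is only available in finite volume. Finally, the limit must not require the finite laws to be mutually consistent; only the fixed cylinder inequalities are passed.
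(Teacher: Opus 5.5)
Your proposal is correct and follows the paper's proof: write each bound without division as $\E[h\omega_e]\ge p\,\E[h]$ for exterior-cylinder tests, pass it to the selected limit by local convergence, and extend to the generated exterior $\sigma$-field by a monotone-class argument. The paper's proof also handles conditioning on a random environment, namely the cutoff marks of Appendix~\ref{sec:finite-mean}: it includes a countable family of continuity cylinder tests before the limit, then disintegrates, and countability yields one common full-measure environment set. Your remark about cutoffs at continuity points is the ingredient that step relies on.
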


\begin{proof}
Write a conditional lower bound without division as
$E[h\omega_e]\ge pE[h]$, for nonnegative exterior-cylinder tests $h$.
Both sides converge by hypothesis. A monotone-class argument extends the limit
inequality to bounded nonnegative exterior-measurable tests, which is equivalent
to the asserted conditional bound. Upper bounds work identically. For finite
alphabet coordinates every cylinder is a continuity set. If a random environment
is to be conditioned on, suppose its coordinates are standard Borel with a
countable generating family of continuity cylinders. Include those tests before taking the
limit and then disintegrate. Countability gives a common full-measure environment
set for the countably many vertices, edges, labels and generating histories.
\end{proof}

One can also transfer comparisons of clusters explored \emph{inside each fixed
ball}: keep the full ambient law, pass its finite-coordinate inequality to the
limit, and then exhaust the graph. This yields radius and cluster-expectation
bounds. Merely taking a weak limit of a global root-cluster containment does not
justify its preservation, since witnessing paths can escape to infinity.
Neither local transfer nor finite-ball exploration requires boundary-condition
comparison or consistency of the finite-volume measures under restriction.
Figure~\ref{fig:local-transfer} records the order of these limits.

\paragraph{CMR support at zero couplings.}
Blue/red overlap compatibility is a finite-alphabet local constraint and passes
to the limit directly. For the coupling-sign constraint, weak satisfaction
$J_e\sigma_x\sigma_y\ge0$ on blue edges (and its second-replica analogue)
passes as a closed support condition. Strict satisfaction follows by excluding
blue edges at zero coupling. At continuity radii $\delta>0$, the finite-volume
activation bound passes to the limit as
\[
 \nu(B_e=1,\ |J_e|\le\delta)\le1-e^{-4\beta\delta}\le4\beta\delta.
\]
Let $\delta\downarrow0$ to obtain $\nu(B_e=1,J_e=0)=0$, even when the
disorder has an atom at zero. The red activation rule gives the analogous
$2\beta\delta$ bound. Thus every finite cycle retains both parity constraints;
countability makes this simultaneous for all cycles.

\subsection{When existence has probability one}

Positive root probability alone does not imply almost-sure global existence for
an arbitrary dependent law. The following stronger exterior hypothesis supplies
that conclusion without ergodicity.

\begin{proposition}[Exterior conditioning and labelled existence]\label{prop:as-existence}
Let a countable, locally finite graph carry edge indicators $\omega$ and vertex labels
$\xi$ in a finite or countable set. Assume that open edges have equal endpoint
labels almost surely. Set
\[
 \mathcal E_v=\sigma((\xi_x)_{x\ne v},(\omega_e)_{e\notin\starv(v)}).
\]
Let $A\subseteq V$ be deterministic. Suppose, for every oriented edge $uv$ of $G[A]$,
$P(\omega_{uv}=1\mid\mathcal E_v)\ge p>0$ almost surely.
Suppose independent site percolation of parameter $p$ on $G[A]$ has positive
root-percolation probability at some $r\in A$, and that
$P(\xi_r=s\mid\mathcal E_r)\ge\delta_s>0$.
Then the full open graph has an infinite component labelled $s$ almost surely.
The statement applies separately to each label and also under conditional laws
with an environment first held fixed.
\end{proposition}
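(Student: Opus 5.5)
The plan is a proof by contradiction via exterior conditioning, following the template already used for Proposition~\ref{prop:ov-as}. Let $T_s$ be the event that the open graph has no infinite component all of whose vertices carry label $s$. Since open edges join equal labels almost surely, components of the open graph are label-constant, so $T_s$ is the event that the subgraph of open edges between $s$-labelled vertices has no infinite component. Deleting a single vertex $v$ together with its finite star changes that subgraph only locally: finitely many pieces are split off or absorbed. Hence $T_s$ is determined by the coordinates generating $\mathcal E_v$, that is, $T_s\in\mathcal E_v$ for every $v$, after fixing a version up to null sets.

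Suppose $P(T_s)>0$ and put $P'=P(\cdot\mid T_s)$. Because $T_s\in\mathcal E_v$, the edge bound transfers directly. For $H\in\mathcal E_v$ we have $P(H\cap T_s\cap\{\omega_{uv}=1\})\ge pP(H\cap T_s)$, so $P'(\omega_{uv}=1\mid\mathcal E_v)\ge p$ almost surely. Likewise $P'(\xi_r=s\mid\mathcal E_r)\ge\delta_s$, and in particular $P'(\xi_r=s)\ge\delta_s>0$.

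Next I run the one-attempt exploration of Proposition~\ref{prop:exploration} from $r$ inside $G[A]$ under $P''=P'(\cdot\mid\xi_r=s)$. Its off-star histories at a target $v\ne r$ lie in $\mathcal E_v$, and the extra conditioning $\{\xi_r=s\}$ also lies in $\mathcal E_v$ for $v\ne r$. So the edge bound survives, by the same division argument: $P'(H\cap\{\xi_r=s\}\cap\{\omega=1\})\ge pP'(H\cap\{\xi_r=s\})$. This is exactly the point flagged in Proposition~\ref{prop:exploration}, that initial conditioning must be included in the hypothesis. The exploration then gives root-forced iid Bernoulli$(p)$ sites with $C_X(r)\subseteq C_\omega(r)$. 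By assumption, site percolation on $G[A]$ percolates from $r$ with positive probability; with the root forced this probability is at least as large, so it is positive. Therefore $P''(|C_\omega(r)|=\infty)>0$. On $\{\xi_r=s\}$ this infinite cluster is $s$-labelled, since open edges preserve labels. This contradicts $T_s$, which holds $P''$-almost surely. Hence $P(T_s)=0$.

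The main obstacle is the measurability claim $T_s\in\mathcal E_v$. The point is that $T_s$ depends on $\xi_v$ and the star edges only through a finite modification, and I would make this precise as in the finite-modification argument of Proposition~\ref{prop:imbalance}. Concretely, $T_s$ coincides with the event that the graph obtained by deleting $v$ and $\starv(v)$ has no infinite $s$-component: deleting one vertex of finite degree splits an infinite component into finitely many pieces, at least one of which is infinite, and adding it back cannot create an infinite component from finitely many finite ones. The latter event is measurable with respect to $\mathcal E_v$. For the final sentence of the statement, I repeat the argument verbatim under the conditional law given a fixed environment. The hypotheses are then assumed in that law, and no ergodicity is used.
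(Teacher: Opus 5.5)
Your proposal is correct and follows essentially the same route as the paper's proof. The paper also defines $T_s$ through the open graph induced on $\{\xi=s\}$, notes $T_s\in\mathcal E_v$ by the finite-modification argument, uses the root-label bound to get $P(T_s,\xi_r=s)>0$, and runs the one-attempt exploration under this conditioning. It uses that $T_s$ and $\{\xi_r=s\}$ are exterior to every target $v\ne r$, and obtains an infinite $s$-labelled cluster contradicting $T_s$; your two-stage conditioning (first on $T_s$, then on $\xi_r=s$) is only a cosmetic difference.
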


\begin{proof}
Let $T_s$ be absence of an infinite component in the open graph induced by
$\{v:\xi_v=s\}$. This event belongs to every $\mathcal E_v$: deleting one
vertex and finitely many incident edges cannot change global existence of an
infinite component. Defining the induced graph makes this measurability valid
on the whole configuration space, not just on configurations with compatible
edge labels.

If $P(T_s)>0$, the root-label bound gives $P(T_s,\xi_r=s)>0$. Condition on
this event and explore $G[A]$ from $r$. At each new target $v\ne r$, both
$T_s$ and the root label are exterior to $v$, as are all previously queried
edges. The conditional success bound is therefore preserved. Proposition~
\ref{prop:exploration} gives positive probability of an infinite root cluster.
All its open parent edges preserve label $s$, contradicting $T_s$. Hence
$P(T_s)=0$. Countably many such probability-one conclusions can be intersected.
\end{proof}

\subsection{A separate nonpercolation criterion}

\begin{proposition}[Path bound]\label{prop:path-bound}
On a countable graph of maximum degree $\Delta\ge2$, suppose every prescribed
self-avoiding path of $n$ edges is open with probability at most $u^n$.
Then, with $B_n$ the graph-distance ball around $o$,
\begin{equation}\label{eq:general-path-count}
 P(o\leftrightarrow\partial B_n)
 \le\frac{\Delta}{\Delta-1}[(\Delta-1)u]^n.
\end{equation}
If $(\Delta-1)u<1$, there is no infinite open component almost surely and
\begin{equation}\label{eq:general-susceptibility}
 E|C_\omega(o)|\le1+\frac{\Delta u}{1-(\Delta-1)u}.
\end{equation}
Uniform finite-volume path bounds pass to every local limit.
\end{proposition}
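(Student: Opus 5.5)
The plan is a Peierls-type path count followed by summation over radii. For the first bound, note that on the event $\{o\leftrightarrow\partial B_n\}$ there is an open path from $o$ to a vertex at graph distance $n$. Stopping such a path at its first visit to $\partial B_n$ and erasing loops yields an open self-avoiding path from $o$ with at least $n$ edges. Its initial segment of exactly $n$ edges is also open and self-avoiding. A union bound over all prescribed self-avoiding paths of $n$ edges from $o$ then applies. There are at most $\Delta(\Delta-1)^{n-1}$ such paths, since the first step has at most $\Delta$ choices and each later step at most $\Delta-1$, the previous vertex being excluded. Each is open with probability at most $u^n$ by hypothesis, which gives $\Delta(\Delta-1)^{n-1}u^n=\frac{\Delta}{\Delta-1}[(\Delta-1)u]^n$, i.e. \eqref{eq:general-path-count}.

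If $(\Delta-1)u<1$, the right side of \eqref{eq:general-path-count} tends to zero. The events $\{o\leftrightarrow\partial B_n\}$ decrease to $\{|C_\omega(o)|=\infty\}$, because local finiteness makes each ball finite, so an infinite cluster must reach every sphere. Hence $P(|C_\omega(o)|=\infty)=0$. The hypothesis is uniform over vertices, so the same argument at every root and a countable union show that there is almost surely no infinite open component.

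For \eqref{eq:general-susceptibility}, write $E|C_\omega(o)|=\sum_xP(o\leftrightarrow x)$. Every $x\ne o$ in the cluster is the endpoint of some open self-avoiding path from $o$ of length $m\ge1$, so $E|C_\omega(o)|\le1+\sum_{m\ge1}\Delta(\Delta-1)^{m-1}u^m$. Summing the geometric series gives $1+\Delta u/(1-(\Delta-1)u)$.

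The last assertion, passage to local limits, follows the scheme of Figure~\ref{fig:local-transfer}. The event that a prescribed finite path is open is a cylinder event on finitely many edge coordinates, and hence a continuity event for finite-alphabet indicators. A bound $u^n$ that holds uniformly in volume therefore passes to every local limit, and the argument above can be rerun there. Alternatively, one passes the finite events $\{o\leftrightarrow\partial B_n\}$, which use paths inside $B_n$, and finite truncations of the susceptibility sum, and then lets $n\to\infty$ using monotone convergence.

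The only point needing care is the loop-erasure step: connection inside $B_n$ must still produce a self-avoiding witness of length at least $n$, and loop erasure preserves both openness and the endpoint. Beyond that I expect no real obstacle.
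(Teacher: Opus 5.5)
Your proof is correct and follows the paper's own argument: bound the number of rooted self-avoiding paths by $\Delta(\Delta-1)^{n-1}$, use a union bound for the arm event, sum over lengths for the cluster size, take a countable union over roots, and pass the fixed-path cylinder bounds to local limits. The loop-erasure step you spell out is the detail the paper leaves implicit in ``reaching distance $n$ requires one of them.''
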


\begin{proof}
There are at most $\Delta(\Delta-1)^{n-1}$ rooted self-avoiding paths of length
$n$. Reaching distance $n$ requires one of them, giving
\eqref{eq:general-path-count}. Sum over lengths to bound the expected cluster
size. Decay at every root and countability exclude all infinite components.
A fixed path event is a cylinder, so the last assertion follows by local
convergence. One sufficient way to prove the path hypothesis is to bound each
next edge's conditional opening probability by $u$ given that the preceding path
edges are open. This history is off the next target's star.
\end{proof}

The lower comparison and the upper path estimate are separate inputs. Neither
one presumes stochastic monotonicity of the original edge law.

\section{A short proof from the Machta--Newman--Stein comparison}\label{sec:mns}

For laws with $\E e^{aW}<\infty$ for some $a>0$, the comparison lemmas
of Machta, Newman and Stein give an alternative verification and explicit
parameters for the percolating interval used in Section~\ref{sec:response}. This includes Gaussian and bounded disorder. The method still uses
the classical exploration from Section~\ref{sec:general}; its input is now
an independent mixed site/bond field.
Throughout this section $\Delta\ge2$ bounds the graph degree and
$F(t)=\E e^{tJ}=\E\cosh(tW)$ is finite at $t=4\beta$.

\subsection{The graph-local posterior comparison}

For the symmetric coupling law $\lambda$, let
$\lambda_h(dx)=e^{hx}\lambda(dx)/F(h)$. The following is the two-replica form
of \cite[Lemma 4.1]{MNS2008}. Its proof uses finite edge sums, not complete-graph
geometry, so it applies to the finite graphs considered here.

\begin{lemma}[Machta--Newman--Stein comparison]\label{lem:mns}
Fix the replica spins and arbitrary signs $\eta_e\in\{-1,1\}$. Write
$K_e=\eta_e J_e$ and
$\phi_e=\eta_e(\sigma_x\sigma_y+\tau_x\tau_y)$ for $e=xy$.
The posterior law of $K$, conditional on the replicas, satisfies
\begin{equation}\label{eq:mns}
 \bigotimes_e\lambda_{\beta(\phi_e-2)}
 \preceq\mathcal L(K\mid\sigma,\tau)
 \preceq\bigotimes_e\lambda_{\beta(\phi_e+2)}.
\end{equation}
Here $\preceq$ denotes stochastic domination in the coordinatewise order.
\end{lemma}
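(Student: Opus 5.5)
The plan is to compute the posterior density of $K$ given the replicas, to observe that it is a product-tilted law times a coupling factor $Z_{G,J,\beta}^{-2}$, and to bound that factor by Holley's criterion. By definition of the two-replica law, the joint density of $(J,\sigma,\tau)$ is proportional to $\prod_e\lambda(dJ_e)\exp(\beta\sum_eJ_e(\sigma_x\sigma_y+\tau_x\tau_y))Z_{G,J,\beta}^{-2}$. In terms of $K_e=\eta_eJ_e$, the symmetry of $\lambda$ leaves the reference measure $\prod_e\lambda(dK_e)$ unchanged, and the exponent becomes $\beta\sum_eK_e\phi_e$. The conditional density of $K$ given $(\sigma,\tau)$ with respect to $\bigotimes_e\lambda_{\beta\phi_e}$ is therefore proportional to $g(K)=Z(K)^{-2}$. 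Here $Z(K)$ denotes $Z_{G,J,\beta}$ written in the variables $K$, namely $\sum_\rho\exp(\beta\sum_e\eta_eK_e\rho_x\rho_y)$. Since $K_e$ ranges over a continuum, we first discretize $\lambda$ or argue with densities, and pass to general $\lambda$ by weak approximation at the end.

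The key step is a one-coordinate log-derivative bound. We have
\[
 \partial_{K_e}\log g=-2\beta\eta_e\langle\rho_x\rho_y\rangle_{K}\in[-2\beta,2\beta].
\]
Hence for $K'_e>K_e$, with the other coordinates fixed,
\[
 e^{-2\beta(K'_e-K_e)}\le \frac{g(K'_e,K_{-e})}{g(K_e,K_{-e})}\le e^{2\beta(K'_e-K_e)}.
\]
The conditional law of $K_e$ given $K_{-e}$ is therefore $\lambda_{\beta\phi_e}$ reweighted by a factor whose likelihood ratio lies between those of $e^{-2\beta x}$ and $e^{2\beta x}$. It follows that this single-site conditional law dominates $\lambda_{\beta(\phi_e-2)}$ and is dominated by $\lambda_{\beta(\phi_e+2)}$ in monotone likelihood ratio order, hence stochastically. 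The bound holds uniformly in $K_{-e}$.

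The product laws $\bigotimes\lambda_{\beta(\phi_e\mp2)}$ have independent coordinates, so their single-site conditionals do not depend on the other coordinates and are trivially monotone. The one-sided Holley argument of Lemma~\ref{lem:ov-holley} then applies, with continuous spins and a coupled heat bath driven by a common uniform through quantile functions. In the lower comparison the dominated chain is the product chain, and in the upper comparison it is the dominating chain. In both cases the coordinatewise order is preserved step by step: the product chain's update law is fixed, while the posterior's update law is sandwiched between the two tilted laws for every configuration. Convergence of the product heat bath to its product law, together with stationarity of the posterior Gibbs sampler, then yields the coupling, which proves \eqref{eq:mns}.

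The main obstacle is technical. It lies in running Holley's argument on an uncountable, possibly atomic and unbounded state space, and in checking that the exponential tilts at $\beta(\phi_e\pm2)$ are finite. Finiteness follows from $|\phi_e\pm2|\le4$ and $F(4\beta)<\infty$. To handle the state space, I would discretize $\lambda$ by a finite quantile approximation, under which all tilts and log-derivative bounds persist. I would prove the domination on the resulting finite spaces exactly as in Lemma~\ref{lem:ov-holley}, and then pass to the limit, using that stochastic domination is closed under weak convergence.
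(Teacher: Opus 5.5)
Your argument is correct, but it uses the key inequality differently from the paper.

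Both proofs rest on the same fact. Your bound $|\partial_{K_e}\log Z(K)|\le\beta$ is exactly the statement that $Z(K)e^{\beta\sum_eK_e}$ is increasing and $Z(K)e^{-\beta\sum_eK_e}$ is decreasing in every coordinate. The paper reads this off by expanding $Z$ into exponentials whose coefficients $\beta(\eta_e\rho_x\rho_y\pm1)$ have a fixed sign.

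The paper then argues globally. Relative to $\bigotimes_e\lambda_{\beta(\phi_e+2)}$ the posterior density is proportional to $(Ze^{\beta\sum_eK_e})^{-2}$, which is decreasing. Relative to $\bigotimes_e\lambda_{\beta(\phi_e-2)}$ it is proportional to $(Ze^{-\beta\sum_eK_e})^{-2}$, which is increasing. Harris's inequality for product measures, with truncation of unbounded test functions, then gives both dominations directly.

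Your single-site sandwich, uniform in $K_{-e}$, already contains this global monotonicity. With a product reference measure, the conditional density of $K_e$ given $K_{-e}$ is just the global density with $K_{-e}$ frozen. So you can drop the coupled heat bath, the quantile discretization and the weak-limit passage. That passage would also require the exponential tilts of the discretized laws to converge, not merely the laws themselves.

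What your route buys is an explicit monotone coupling built from single-site updates, in the style of Lemma~\ref{lem:ov-holley}. If you keep it, note that the product chain's update ignores the other coordinates. Hence a single systematic sweep driven by common uniforms already produces an exact coupling, with no ordered initial state, no convergence argument, and no need to discretize.
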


\begin{proof}
The posterior density is proportional to
\[
 Z(\eta K)^{-2}\exp\!\left(\beta\sum_e\phi_eK_e\right)
 \prod_e\lambda(dK_e).
\]
The functions $Z(\eta K)e^{\beta\sum_eK_e}$ and
$Z(\eta K)e^{-\beta\sum_eK_e}$ are respectively increasing and decreasing
in every coordinate: after expansion of $Z$, their exponential coefficients
are $\beta(\eta_e s_xs_y+1)\ge0$ and
$\beta(\eta_e s_xs_y-1)\le0$.
Relative to the upper product measure in \eqref{eq:mns}, the posterior density
is therefore decreasing; relative to the lower product measure it is increasing.
Positive association of product measures gives both comparisons, with truncation
for unbounded test functions. All required tilt normalizations are finite because
$|\phi_e\pm2|\le4$.
\end{proof}

Set
\begin{equation}\label{eq:mnsparameters}
 L_\beta=\tfrac12\E(1-e^{-4\beta W}),\qquad
 U_\beta=\frac{\E(e^{4\beta W}-1)}{2F(4\beta)},\qquad
 R_\beta=\frac{F(4\beta)}{F(2\beta)},\qquad
 a_\Delta=\frac1{1+R_\beta^\Delta}.
\end{equation}
At $\beta=0$, interpret these as $L_\beta=U_\beta=0$, $R_\beta=1$,
$a_\Delta=1/2$.

\begin{lemma}[Conditional blue and overlap bounds]\label{lem:overlap}
On any finite graph of degree at most $\Delta$:
\begin{enumerate}
\item Conditional on all replica spins, the blue field lies stochastically
between independent bond fields with respective probabilities $L_\beta$ and
$U_\beta$ on edges whose endpoint overlaps agree, and probability zero on
the remaining edges.
\item Every conditional overlap-site probability obeys
\begin{equation}\label{eq:overlapconditional}
 a_\Delta\le\nu_{G,\beta}(q_v=s\mid(q_x)_{x\ne v})
 \le1-a_\Delta,\qquad s=\pm1.
\end{equation}
\end{enumerate}
\end{lemma}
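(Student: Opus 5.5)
Both parts follow from Lemma~\ref{lem:mns} with a suitable choice of the auxiliary signs $\eta_e$. For part~1, fix the replica spins and, on each edge $e=xy$, take $\eta_e=\sigma_x\sigma_y$. Then $K_e=J_e\sigma_x\sigma_y$ is the first-replica satisfaction variable, and $\phi_e=1+q_xq_y$. This equals $2$ on agreeing edges and $0$ on disagreeing ones. The blue indicator is
\[
 B_e=\mathbf1\{K_e>0,\ q_x=q_y\}A_e,
\]
with an independent activation of probability $1-e^{-4\beta|K_e|}$. It is therefore an increasing function of $K_e$ and the activation uniform on agreeing edges, and vanishes on disagreeing ones.

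Given the spins, the uniforms are independent of $K$. Lemma~\ref{lem:mns} then sandwiches the blue field, as a monotone image, between the images of the two product measures. On an agreeing edge these are $\lambda_0=\lambda$ for the lower measure and $\lambda_{4\beta}$ for the upper. Under $\lambda$ the edge opens with probability $\tfrac12\E(1-e^{-4\beta W})=L_\beta$. Under $\lambda_{4\beta}$ it opens with probability $\E[e^{4\beta W}(1-e^{-4\beta W})]/(2F(4\beta))=U_\beta$, by symmetry of $\lambda$. Product measures give independent bonds, which proves part~1.

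For part~2, I would work directly with the weight of Lemma~\ref{lem:ov-gauge}(ii), generalized to arbitrary symmetric $\lambda$. It is proportional to
\[
 \prod_e e^{\beta K_e(1+q_xq_y)}\,Z(K)^{-2}\lambda(dK_e),
\]
with $\sigma$ already integrated out. Fix $q$ off $v$ and compare $q_v=s$ with $q_v=-s$. Only the star factors $e^{\beta K_e(1+q_vq_w)}$ change, so the odds ratio is a ratio of two integrals of $Z(K)^{-2}$ against tilted star measures.

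The main obstacle is to bound this odds ratio by $R_\beta^{\pm\Delta}$ uniformly, since $Z^{-2}$ couples the star couplings. I would first integrate out $\rho_v$ as in \eqref{eq:ov-star-factor}, writing $Z=Z_{\rm cav}R_\chi(K_{\rm st})$. I would then use $e^{-\beta\sum|K_e|}\le R_\chi\le e^{\beta\sum|K_e|}$ up to constants. For each star edge, the numerator integral $\int e^{2\beta k}R^{-2}\,d\lambda$ is at most $\int e^{2\beta k}e^{2\beta|k|}\,d\lambda\le F(4\beta)$, while the denominator $\int R^{-2}\,d\lambda$ is at least of order $F(2\beta)^{-1}$ after normalization.

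A cleaner alternative is to apply Lemma~\ref{lem:mns} edgewise, which gives monotone likelihood ratios. Whichever route works, I expect each of the at most $\Delta$ edges to contribute a factor at most $R_\beta$. The odds of $q_v=s$ are then at least $R_\beta^{-\Delta}$, giving the lower bound $a_\Delta=1/(1+R_\beta^\Delta)$, and the upper bound $1-a_\Delta$ follows by exchanging $s$ and $-s$.
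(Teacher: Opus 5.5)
Your Part~1 is correct and is the paper's own argument. With $\eta_e=\sigma_x\sigma_y$ one has $\phi_e=1+q_xq_y$. On agreeing edges, blue is a coordinatewise increasing function of $K_e$ and an independent uniform, and it vanishes on disagreeing edges. Integrating $\mathbf1\{K>0\}(1-e^{-4\beta K})$ under $\lambda_0$ and under $\lambda_{4\beta}$ gives $L_\beta$ and $U_\beta$.

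Part~2, however, has a genuine gap.

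\emph{The first route fails.} Bounding $R_\chi$ pointwise between $2$ and $2e^{\beta\sum_e|K_e|}$, as you propose, discards exactly the cancellation that makes each star edge cost only a second-order amount in $\beta$. Carried out, it charges each star edge a factor such as $F(2\beta)/\E e^{-2\beta W}$ or $2/(1+\E e^{-4\beta W})$. Both are $1+2m\beta+O(\beta^2)$, whereas $R_\beta=F(4\beta)/F(2\beta)=1+6s_2\beta^2+O(\beta^4)$. Note also that the pointwise bound gives $\int R^{-2}\,d\lambda\gtrsim\E e^{-2\beta W}\approx1-2m\beta$ per edge, not ``of order $F(2\beta)^{-1}$''. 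The loss is fatal at $\beta=c/(\Delta m)$. There $R_\beta^\Delta\to1$, so $a_\Delta\to\tfrac12$, which is what \eqref{eq:parameterlimits} needs, while $(1+2m\beta)^\Delta\to e^{2c}$.

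\emph{The second route is right but not carried out.} You only name it, and ``edgewise monotone likelihood ratios'' is not what Lemma~\ref{lem:mns} provides. Three steps are missing.
\begin{enumerate}
\item Use the exact identity
\[
 \frac{w(\sigma,\tau^v)}{w(\sigma,\tau)}=\E\Bigl[\exp\Bigl(-2\beta\sum_{e=vy}\tau_v\tau_yJ_e\Bigr)\Bigm|\sigma,\tau\Bigr],
\]
where $w$ is the disorder-averaged two-replica spin mass and $\tau^v$ flips $\tau$ at $v$. Since $Z_J$ is unchanged by the flip, the entire $Z^{-2}$ coupling is absorbed into the posterior of $J$ given the replicas.
\item Apply Lemma~\ref{lem:mns} in the gauge $\eta_e=\tau_x\tau_y$, not $\sigma_x\sigma_y$. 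In your gauge the observable is decreasing in $K_e$ toward agreeing neighbors but increasing toward disagreeing ones. With $\eta_e=\tau_x\tau_y$ it becomes the coordinatewise decreasing product $\prod_{e\ni v}e^{-2\beta K_e}$, still with $\phi_e=1+q_xq_y$. Domination, together with factorization under the two product measures and $\E_{\lambda_h}e^{-2\beta K}=F(h-2\beta)/F(h)$, sandwiches each factor. On agreeing edges the bounds are $F(2\beta)/F(4\beta)$ and $F(2\beta)$; on disagreeing edges they are $1/F(2\beta)$ and $F(4\beta)/F(2\beta)$.
\item Log-convexity, $F(2\beta)^2\le F(0)F(4\beta)$, places all four factors in $[R_\beta^{-1},R_\beta]$.
\end{enumerate}
These steps put the conditional probability of $q_v=s$, given $\sigma$ and $(q_x)_{x\ne v}$, in $[a_\Delta,1-a_\Delta]$. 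Averaging over $\sigma$ keeps it there.
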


\begin{proof}
The first assertion is \cite[Lemma 4.2]{MNS2008}, with the same graph-local
proof. Indeed, choose $\eta_e=\sigma_x\sigma_y$. On an edge with equal endpoint
overlaps, $\phi_e=2$ and the blue probability is the increasing function
$g_\beta(K_e)=\mathbf1_{\{K_e>0\}}(1-e^{-4\beta K_e})$.
Lemma~\ref{lem:mns} and independent activation uniforms give the product bounds.
Integrating $g_\beta$ under $\lambda_0$ and $\lambda_{4\beta}$ gives
$L_\beta$ and $U_\beta$ respectively. On unequal-overlap edges blue is impossible.

For the second assertion, write $w(\sigma,\tau)$ for the annealed two-replica
spin mass and $\tau^v$ for $\tau$ with its spin at $v$ reversed. Exactly,
\begin{equation}\label{eq:flip}
 \frac{w(\sigma,\tau^v)}{w(\sigma,\tau)}
 =\E\left[\left.
   \exp\!\left(-2\beta\sum_{e=xy\ni v}\tau_x\tau_yJ_e\right)
   \right|\sigma,\tau\right].
\end{equation}
Apply Lemma~\ref{lem:mns} with $\eta_e=\tau_x\tau_y$ to this decreasing
observable. For an incident edge with equal endpoint overlaps, the lower and
upper factors are $F(2\beta)/F(4\beta)$ and $F(2\beta)$; for unequal overlaps
they are $1/F(2\beta)$ and $F(4\beta)/F(2\beta)$.
Log-convexity of $F$ gives $F(4\beta)\ge F(2\beta)^2$, so every factor belongs
to $[R_\beta^{-1},R_\beta]$. The ratio in \eqref{eq:flip} is consequently
between $R_\beta^{-\deg(v)}$ and $R_\beta^{\deg(v)}$.
Given $\sigma$ and all overlaps except $q_v$, the two choices of $q_v$ correspond
to this spin flip. Their probabilities lie in $[a_\Delta,1-a_\Delta]$.
Averaging over $\sigma$ gives \eqref{eq:overlapconditional}.
\end{proof}

The second part is the additional inference from the MNS comparison that supplies
the lattice geometry. At $\beta$ of order $1/\Delta$, its full conditional bounds
are close to $1/2$, even after arbitrary other overlap signs have been specified.

\subsection{From overlap sites to blue clusters}

Recall from Remark~\ref{rem:short-route} that $\thsite(p)$ is the probability
that the origin belongs to an infinite occupied cluster in iid site percolation
of parameter $p$. This includes the origin's occupancy; forcing the origin
occupied gives probability $\thsite(p)/p$ when $p>0$.

Recall from Remark~\ref{rem:exploration} that a mixed site/bond root cluster
with parameters $a,b$ contains the root-forced iid site cluster of parameter
$ab$. On $\Z^d$, $ab>\pc$ therefore implies percolation of the mixed model,
and its global existence event has probability one by translation ergodicity.

\begin{proposition}[Comparison criteria]\label{prop:criteria}
Set $\Delta=2d$ and suppose $F(4\beta)<\infty$. Define
\begin{equation}\label{eq:effective}
 p_\Delta=a_\Delta L_\beta,\qquad
 u_\Delta=(1-a_\Delta)U_\beta.
\end{equation}
Every $\nu\in\mathcal L_{d,\beta}$ has the following properties.
\begin{enumerate}
\item For either $s=\pm1$, conditional on $q_o=s$, the blue root cluster
dominates the root cluster of iid site percolation of parameter $p_\Delta$,
with its root forced occupied. Thus, if $p_\Delta>\pc(\Z^d)$,
\begin{equation}\label{eq:rootlower}
 \nu(o\blue\infty,q_o=s)
 \ge\frac{\thsite(p_\Delta)}{2p_\Delta}>0.
\end{equation}
Each overlap sign contains an infinite blue component almost surely.
\item Every prescribed self-avoiding $n$-edge path is blue with probability
at most $u_\Delta^n$. Hence
\begin{equation}\label{eq:armupper}
 \nu(o\blue\partial B_n)
 \le\frac{\Delta}{\Delta-1}\bigl[(\Delta-1)u_\Delta\bigr]^n.
\end{equation}
When $(\Delta-1)u_\Delta<1$, there is no infinite blue component and
\begin{equation}\label{eq:chiupper}
 \chi_{\mathrm b}(\nu):=\E_\nu|C_{\mathrm b}(o)|
 \le1+\frac{\Delta u_\Delta}{1-(\Delta-1)u_\Delta}.
\end{equation}
\end{enumerate}
\end{proposition}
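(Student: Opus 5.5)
The plan is to work on each torus $\mathbb T_L^d$ with the conditional bounds of Lemma~\ref{lem:overlap}, combine them into an independent mixed site/bond field through Proposition~\ref{prop:exploration}, and then pass to $\nu$ by Proposition~\ref{prop:local-transfer}, following the template of Corollary~\ref{cor:direct-cmr}.

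\emph{Step 1 (lower comparison on the torus).} Fix $s$ and condition on $q_o=s$. Explore from $o$ once per target. At a fresh target $v$ with reached parent $u$ (so $q_u=s$), the history consists of overlap labels of previously reached or attempted vertices and of queried blue indicators, none in $\starv(v)$. I would first reveal the label $q_v$. Lemma~\ref{lem:overlap}(ii) bounds $\nu(q_v=s\mid (q_x)_{x\ne v})$ below by $a_\Delta$. The history is measurable with respect to $\sigma((q_x)_{x\ne v},\text{blue off the star})$, so the tower property should give $\nu(q_v=s\mid H)\ge a_\Delta$.

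\emph{Step 2 (the main obstacle).} The difficulty is that the blue indicators in the history are not functions of $q$ alone; conditioning on blue edges elsewhere could bias $q_v$. I would handle this by conditioning on the full replicas $(\sigma,\tau)$ off $v$. By Lemma~\ref{lem:mns}, flipping $\tau_v$ changes the posterior of the coupling field only by the factor~\eqref{eq:flip}. The blue indicators off the star are functions of $K$ and of the activations off the star, and their conditional law given the spins involves only the off-star $K_e$. Repeating the monotone-comparison argument with the off-star blue event as an extra test should keep the ratio in $[R_\beta^{-\Delta},R_\beta^{\Delta}]$, since the tilt affects only star coordinates. I would then argue that conditioning on off-star $K$ leaves the star coordinates product-comparable, via Lemma~\ref{lem:mns} applied conditionally. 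This is the step I am least sure of, and I would check it carefully.

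\emph{Step 3 (bond query and assembly).} Given $q_v=s$, query $B_{uv}$. By Lemma~\ref{lem:overlap}(i), conditional on the spins, blue on agreeing edges dominates independent Bernoulli$(L_\beta)$, and the previously queried failures lie off $uv$. So the success probability is at least $L_\beta$, uniformly in the history, and the combined acceptance probability is at least $a_\Delta L_\beta=p_\Delta$. Thinning in Proposition~\ref{prop:exploration} then yields the root-forced iid cluster of parameter $p_\Delta$.

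\emph{Step 4 (passage to the limit and the upper bound).} Pass the undivided bounds $\E[h\mathbf1\{q_v=s\}B_{uv}]\ge p_\Delta\E[h]$ to $\nu$ via Proposition~\ref{prop:local-transfer}, and run the exploration in the limit. This gives $\nu(o\blue\infty\mid q_o=s)\ge\thsite(p_\Delta)/p_\Delta$. Replica-flip symmetry gives $\nu(q_o=s)=\tfrac12$, which proves~\eqref{eq:rootlower}. Almost-sure existence follows from Proposition~\ref{prop:as-existence} with labels $q$ and root-label bound $a_\Delta$.

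For part~(ii), expose a self-avoiding path edge by edge. Each edge needs agreeing overlaps, which by Lemma~\ref{lem:overlap}(ii) has conditional probability at most $1-a_\Delta$. Given that, the edge needs a blue opening, with probability at most $U_\beta$ by the upper domination of Lemma~\ref{lem:overlap}(i), uniformly given the earlier path edges. This gives the bound $u_\Delta^n$. Proposition~\ref{prop:path-bound} then yields~\eqref{eq:armupper} and~\eqref{eq:chiupper}, and these pass to every local limit.
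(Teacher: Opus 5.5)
Your route differs from the paper's. As written it has a gap in two places, and both need the same missing lemma.

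In Step~1, as you note yourself, the tower property does not apply. The history $H$ contains off-star blue indicators, while Lemma~\ref{lem:overlap}(ii) conditions only on $(q_x)_{x\ne v}$.

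Step~3 contains a non sequitur that you do not flag. Lemma~\ref{lem:overlap}(i) is a stochastic domination of the whole blue field given the replicas. Domination by a product law gives no floor for one coordinate after zeros are observed elsewhere. For example, the law with masses $\tfrac12,\tfrac14,\tfrac14$ on $(\omega_e,\omega_f)=(1,1),(0,1),(0,0)$ dominates $\mathrm{Ber}(\tfrac12)^{\otimes2}$, yet $P(\omega_e=1\mid\omega_f=0)=0$.

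Both steps need a conditional form of Lemma~\ref{lem:mns}: given the spins and the off-star couplings, the star coordinates $K_e$ remain between $\bigotimes_{e\in\starv(v)}\lambda_{\beta(\phi_e\mp2)}$. This is true, and the last sentence of your Step~2 points at it.
\begin{itemize}
\item The relative densities $(Z(\eta K)e^{\pm\beta\sum_eK_e})^{-2}$ in the proof of Lemma~\ref{lem:mns} stay monotone in each star coordinate when the off-star coordinates are frozen.
\item Given the spins, an off-star blue history has a likelihood depending only on off-star couplings. Conditioning on it therefore only reweights the mixture over the off-star couplings.
\item With $\eta_e=\tau_x\tau_y$, this keeps the flip ratio \eqref{eq:flip} in $[R_\beta^{-\Delta},R_\beta^{\Delta}]$.
\item With $\eta_e=\sigma_x\sigma_y$, it gives $L_\beta\le\E[g_\beta(K_{uv})\mid\cdot\,]\le U_\beta$ on equal-overlap edges.
\end{itemize}
Hence $\nu(q_v=s,\,B_{uv}=1\mid H)\ge p_\Delta$ on $\{q_u=s\}$. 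Your exploration, the local transfer and your use of Proposition~\ref{prop:as-existence} then go through, and part~(ii) works in your interleaved order. Without the conditional lemma, do part~(ii) as the paper does: first condition on the replicas and bound the increasing path event by $U_\beta^n$ times the indicator that all overlaps on the path agree; only then reveal the overlaps along the path.

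The paper never needs bounds given blue histories. On the torus it dominates $\mathbf 1\{q_x=s\}$ by iid sites $X$ of parameter $a_\Delta$, using only the full-conditional bound \eqref{eq:overlapconditional} given the other overlaps. Given the replicas, it couples an iid Bernoulli$(L_\beta)$ field $Y$ below blue on equal-overlap edges, using Lemma~\ref{lem:overlap}(i) verbatim. This bond kernel depends only on the replicas, and its $Y$-marginal is the same product law for every replica configuration. So $X$ and $Y$ are independent, with $X_xX_yY_{xy}\le B_{xy}$ and $X_x=1\Rightarrow q_x=s$. The mixed site/bond comparison of Remark~\ref{rem:exploration} then gives parameter $a_\Delta L_\beta$. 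The containment passes to $\nu$ by joint compactness of the auxiliary fields, and almost-sure existence comes from ergodicity of the independent mixed model.

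The trade-off is this. The paper uses Lemma~\ref{lem:overlap} exactly as stated, at the price of a joint-limit coupling. Your route costs a strengthened MNS lemma, but it produces history-uniform exterior bounds of the kind used in Corollary~\ref{cor:direct-cmr}, to which Proposition~\ref{prop:as-existence} applies directly.
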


\begin{proof}
First work on a finite torus. By \eqref{eq:overlapconditional}, the indicator
field of $q_x=s$ dominates iid sites $X$ of parameter $a_\Delta$, the
independent law at the worst conditional probability \cite[Eq.~(8)]{ABL1987}. Given the
replicas, couple an iid bond field $Y$ of parameter $L_\beta$ below blue on
equal-overlap edges using Lemma~\ref{lem:overlap}. These couplings can be combined
so that $X$ and $Y$ are independent: sample the replicas and their site coupling,
then sample the blue/bond coupling from a kernel depending only on the replicas.
The conditional marginal of $Y$ is the same product law for every replica
configuration, which proves the asserted independence. Consequently
\[
 X_xX_yY_{xy}\le B_{xy},\qquad X_x=1\Longrightarrow q_x=s.
\]
The original disorder is included by sampling it from its physical conditional
law given the replicas and blue indicators. This preserves the full finite-volume
joint law, including its quenched normalization.
The same construction conditional on $q_o=s$ forces $X_o=1$ and keeps all other
site bounds, because partial conditioning preserves \eqref{eq:overlapconditional}.
The mixed-model exploration gives the rooted comparison.

Fix a subsequence converging to the selected physical law $\nu$. Compactness of
the auxiliary binary spaces gives a further joint subsequence with the same
physical marginal. The independent product law and the displayed local
containment constraints pass to its limit. For the rooted construction,
$\{q_o=s\}$ is a clopen cylinder of probability $1/2$, so the conditioned
physical laws converge to $\nu(\cdot\mid q_o=s)$.
The unconditioned mixed model
percolates almost surely when $p_\Delta>\pc$, proving global existence in sector
$s$. Apply this separately to each sign; the two probability-one conclusions
hold simultaneously without requiring the two auxiliary models to be independent.
The root-forced comparison gives \eqref{eq:rootlower}, since the global flip of
one replica makes $q_o$ fair and preserves blue bonds.

For the upper bound, condition on the replicas. An $n$-edge path can be blue only
when all its overlap signs agree, and its conditional blue probability is then
at most $U_\beta^n$. Reveal the overlaps successively along a self-avoiding path.
Each subsequent sign matches the first with conditional probability at most
$1-a_\Delta$. Thus the unconditional path probability is at most $u_\Delta^n$.
Proposition~\ref{prop:path-bound} gives \eqref{eq:armupper} and
\eqref{eq:chiupper}, including passage to every local limit.
\end{proof}

\subsection{The high-dimensional scale}

For $s_2=\E W^2$ and $\beta\downarrow0$, exponential integrability gives
\begin{equation}\label{eq:expansions}
 L_\beta=2m\beta+O(\beta^2),\quad
 U_\beta=2m\beta+O(\beta^2),\quad
 \log R_\beta=6s_2\beta^2+O(\beta^4).
\end{equation}
At $\beta=c/(\Delta m)$, it follows that
\begin{equation}\label{eq:parameterlimits}
 \Delta p_\Delta\to c,\qquad (\Delta-1)u_\Delta\to c.
\end{equation}
Kesten's threshold estimate \eqref{eq:kesten} gives $\Delta\pc\to1$.
Proposition~\ref{prop:criteria} proves the two parts of
Theorem~\ref{thm:headline} under exponential integrability.
For the uniform subcritical interval, $u_\Delta$ is nondecreasing in $\beta$.
Indeed, $R_\beta$ increases by convexity of $\log F$, while $U_\beta$ is the
expectation of the increasing function $g_\beta$ under the increasing tilt
$\lambda_{4\beta}$. Choosing $\rho=(1+c)/2$ gives the stated bound for large $d$.

\subsection{A percolating temperature interval}\label{sec:consequences}

\begin{corollary}[A percolating interval]\label{cor:window}
Suppose $\E e^{aW}<\infty$ for some $a>0$, and write $s_2=\E W^2$.
For each fixed $\varepsilon>0$ and all sufficiently large $d$, both overlap signs
contain infinite blue components almost surely in every $\nu\in\mathcal L_{d,\beta}$
throughout the interval
\begin{equation}\label{eq:window}
 \frac{1+\varepsilon}{2dm}
 \le\beta\le\sqrt{\frac{\log(2d)}{24s_2d}}.
\end{equation}
\end{corollary}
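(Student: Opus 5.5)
The plan is to apply Proposition~\ref{prop:criteria}(i) at every $\beta$ in the interval \eqref{eq:window}. Everything then reduces to the single inequality $p_\Delta=a_\Delta L_\beta>\pc(\Z^d)$, with $\Delta=2d$, holding uniformly over the interval once $d$ is large. Proposition~\ref{prop:criteria} already carries this finite-volume comparison to every $\nu\in\mathcal L_{d,\beta}$ and upgrades it to almost-sure existence in each overlap sign. No monotonicity in $\beta$ is needed, since the criterion is checked pointwise in $\beta$.

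First, the lower endpoint. At $\beta=(1+\varepsilon)/(2dm)$, \eqref{eq:expansions} gives $\log R_\beta=O(\beta^2)=O(d^{-2})$, so $R_\beta^\Delta\to1$ and $a_\Delta\to1/2$. It also gives $L_\beta=2m\beta+O(\beta^2)$, hence $\Delta p_\Delta\to(1+\varepsilon)$ by \eqref{eq:parameterlimits}. Kesten's bound \eqref{eq:kesten} gives $\Delta\pc\le1+o(1)$, so $p_\Delta>\pc$ for large $d$.

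For the rest of the interval, I bound $p_\Delta$ below by monotone pieces. $L_\beta=\tfrac12\E(1-e^{-4\beta W})$ is increasing in $\beta$, so on the interval $L_\beta\ge L_{\beta_+}$ with $\beta_+=(1+\varepsilon)/(2dm)$. Thus $\Delta L_\beta\ge(1+\varepsilon)(1+o(1))$. $R_\beta$ is increasing by log-convexity of $F$, so $a_\Delta$ is decreasing, and it suffices to control $a_\Delta$ at the upper endpoint $\beta_{\rm win}=\sqrt{\log(2d)/(24s_2d)}$. There \eqref{eq:expansions} gives $\Delta\log R_\beta=2d\cdot6s_2\beta^2(1+o(1))=\tfrac12\log(2d)(1+o(1))$, so $R_\beta^\Delta\le(2d)^{1/2+o(1)}$ and $a_\Delta\ge(2d)^{-1/2-o(1)}$. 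That alone is too weak to combine with the lower bound $L_\beta\ge L_{\beta_+}$. So in this range I instead use $L_\beta$ at the actual $\beta$: $L_\beta\ge c_0\min(\beta,1)$ for a constant $c_0>0$, since $1-e^{-x}\ge x e^{-x}$ and $\E W>0$. I then split the interval at $\beta_1=K/(dm)$ for a large constant $K$.
\begin{itemize}
\item On $[\beta_+,\beta_1]$, $\log R_\beta^\Delta=O(K^2/d)\to0$, so $a_\Delta\to1/2$ and $\Delta p_\Delta\ge(1+\varepsilon)(1+o(1))$.
\item On $[\beta_1,\beta_{\rm win}]$, $\Delta p_\Delta\ge\Delta\cdot a_\Delta\cdot c_0\beta$, and I need this to beat $1+o(1)$.
\end{itemize}
On the second piece I write $a_\Delta\ge\tfrac12\exp(-12s_2d\beta^2(1+o(1)))$, valid because $\beta\to0$ uniformly. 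Then $\Delta a_\Delta L_\beta\gtrsim d\beta\,e^{-12s_2d\beta^2}$. As a function of $x=\sqrt d\,\beta$, this is $\sqrt d\,x e^{-12s_2x^2}$ with $x\in[K/(m\sqrt d),\sqrt{\log(2d)/(24s_2)}]$. It is unimodal in $x$, so its minimum sits at an endpoint. At the left endpoint the value is $\approx2K\cdot c_0/m\cdot e^{-o(1)}$, which is large once $K$ is large. At the right endpoint it is $\approx\sqrt{\log(2d)}\,(2d)^{-1/2}\sqrt d\cdot\mathrm{const}=\mathrm{const}\cdot\sqrt{\log(2d)}\to\infty$. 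The factor $1/24$ in the endpoint is exactly what makes $e^{-12s_2x^2}=(2d)^{-1/2}$ there.

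The main obstacle is this middle-range bookkeeping: the error terms in \eqref{eq:expansions} must be uniform, and I must check that $L_\beta$ remains comparable to $2m\beta$ up to $\beta_{\rm win}\to0$. Both follow from exponential integrability with $4\beta_{\rm win}<a$ for large $d$. With $\Delta p_\Delta>1+\varepsilon/2>\Delta\pc$ on the whole interval, Proposition~\ref{prop:criteria}(i) concludes the proof.
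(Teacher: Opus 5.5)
Your proof is correct and follows the paper's route. You apply Proposition~\ref{prop:criteria}(i) pointwise in $\beta$, use \eqref{eq:expansions} and $1+R_\beta^\Delta\le2R_\beta^\Delta$ to bound $\Delta p_\Delta$ below by a multiple of $\Delta\beta\,e^{-6s_2\Delta\beta^2}$, and use unimodality to reduce to the two endpoints, where the factor $1/24$ makes the upper value of order $\sqrt{\log d}$. The only difference is that the paper skips your split at $\beta_1=K/(dm)$ and the crude constant $c_0$: it uses $L_\beta=2m\beta(1+o(1))$ uniformly on the whole interval, the point you raise in your last paragraph. The single bound $(1-o(1))\,m\Delta\beta\, e^{-6s_2\Delta\beta^2}$ then already tends to $1+\varepsilon$ at the lower endpoint.
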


\begin{proof}
Put $\Delta=2d$ and $b_\Delta=\sqrt{\log\Delta/(12s_2\Delta)}$.
Uniformly for $0<\beta\le b_\Delta$, \eqref{eq:expansions} and
$\Delta\beta^4=o(1)$ give
\[
 \Delta p_\Delta
 =\frac{\Delta L_\beta}{1+R_\beta^\Delta}
 \ge(1-o(1))m\Delta\beta e^{-6s_2\Delta\beta^2}.
\]
The function $\beta e^{-6s_2\Delta\beta^2}$ increases and then decreases, so its
minimum on the stated interval occurs at an endpoint. At the lower endpoint
the last bound tends to $1+\varepsilon$; at the upper it is asymptotic to
$m\sqrt{\log\Delta/(12s_2)}$, which diverges. Thus $p_\Delta>\pc(\Z^d)$
throughout the interval for large $d$. Proposition~\ref{prop:criteria} applies
at each temperature, without a monotonicity assertion about the physical blue law.
\end{proof}

\section{Response identities and limitations of alternative routes}\label{app:limitations}

This appendix collects additional physical interpretations and quantitative
limitations of the comparisons. They are not needed for the main percolation
or conditional-ordering results. The MNS parameters used below are defined in
\eqref{eq:mnsparameters} and \eqref{eq:effective}.

\subsection{Signed grey mass and magnetic response}

There is an exact response formula in a finite zero-field volume with free or
periodic boundary conditions. Given a full CMR graph, choose
red-parity signs $\xi_G(x)$ on each grey component $G$: they are constant across
blue edges and opposite across red edges. Put $S_G=\sum_{x\in G}\xi_G(x)$.
Conditional on the graph and $J$, the overall overlap orientations of its grey
components are independent fair signs: every assignment has exactly
$2^{N_{\mathrm b}}$ compatible spin pairs, where $N_{\mathrm b}$ counts blue
components, including isolated vertices. Thus, for a dimensionless replica field
$h\sum_x\sigma_x\tau_x$ on a finite zero-field volume of $N$ vertices,
\begin{equation}\label{eq:grey-response}
 \frac{Z_{2,J}(h)}{Z_J^2}
 =\E_{{\rm CMR},J}\prod_G\cosh(hS_G),\qquad
 \chi_{{\rm SG},N}=\frac1N\E\sum_G S_G^2.
\end{equation}
This is the signed connectivity identity of
\cite[Section II, Eq.~(16)]{MNS2007} summed over the volume. A grey charge is
an alternating sum of its blue-component masses. Squaring after that sum keeps
the cross terms that an unsigned blue susceptibility discards.
Higher response does not automatically remove the cancellation: the fourth
derivative of $N^{-1}\E\log[Z_{2,J}(h)/Z_J^2]$ at zero equals
\[
 \frac1N\E_J\left[
 3\operatorname{Var}_{{\rm CMR},J}\!\left(\sum_G S_G^2\right)
 -2\E_{{\rm CMR},J}\sum_G S_G^4\right].
\]
For the ordinary uniform magnetic field, let
$\psi_N(h)=N^{-1}\E\log Z_J(h)$. Fair-sign gauge invariance gives
$\psi_N''(0)=1$ and $\psi_N^{(4)}(0)=4-6\chi_{{\rm SG},N}$:
only spin tuples with every vertex repeated evenly survive disorder averaging.
Consequently the susceptibility bounded in Corollary~\ref{cor:separation} is
also the quantity controlling the averaged cubic magnetic response.

\subsection{Suppressing a sector and changing representation}

Suppressing the second infinite blue cluster is a separate, stronger geometric
question in a specified state. A unique cluster of maximal density, as in the
SK percolation picture, can coexist with a second positive-density cluster
\cite{MNS2008}. To exclude a chosen overlap sector on $\Z^d$, one sufficient
input would be a uniform bound
\[
 \Pp(q_x=-1\text{ for every }x\in\gamma)
 \le C_\beta\,b_\beta^{|\gamma|},\qquad (2d-1)b_\beta<1,
\]
for all rooted self-avoiding paths $\gamma$ in an appropriate state selected at
the same temperature. Path counting would then exclude negative-sector blue
percolation. No such low-temperature estimate is provided here.

Other graphical representations do not immediately bypass these missing inputs.
At $\beta=0$, the overlap signs are iid and fair, so their geometric clusters
already percolate in high dimension by \eqref{eq:kesten}, although the spatial
overlap is zero. Thus geometric overlap percolation alone cannot establish order.
The two-replica FK intersection (TRFK) studied in \cite{MNS2008} has a different
scale. Its MNS lower within-sector parameter is
$L_\beta^{\mathrm{TRFK}}=\tfrac12\E(1-e^{-2\beta W})^2$.
For exponential-moment disorder, fix $x>0$ and take
$\beta_\Delta=\sqrt{x/(\Delta s_2)}$. Our overlap comparison gives
\[
 \Delta a_\Delta L_\beta^{\mathrm{TRFK}}
 \longrightarrow\frac{2x}{1+e^{6x}}
 \le\frac1{3e}<1.
\]
The same mixed-site/bond criterion therefore cannot prove TRFK percolation at
any fixed value of this scaling parameter. This is a limitation of the present
bound, not a nonpercolation theorem. A useful representation change would need
an additional comparison or identity that controls overlap correlations at this
scale.

Finally, reducing the required dimension is a quantitative problem.
Theorem~\ref{thm:explicit} gives concrete dimensions for unit couplings. For a
general law the threshold in Theorem~\ref{thm:headline} remains unquantified.
Under exponential integrability, one sufficient condition is
\[
 \frac{\E(1-e^{-4\beta W})}
 {2\{1+[F(4\beta)/F(2\beta)]^{2d}\}}
 >\pc(\Z^d).
\]
For unit couplings, the proof of Theorem~\ref{thm:explicit} instead reveals the
overlap field, uses a bond floor that failed queries cannot lower, and compares
the overlap field with a weak Ising field (Section~\ref{sec:overlap-route}).
Restricting the cavity laws (Section~\ref{sec:ov-noisy}) and replacing oriented
paths by macrostep paths (Section~\ref{sec:macrostep}) take this route down to
dimension $7$. The fresh-star route of Appendix~\ref{app:fresh-star} reaches
dimension $16$, and Remark~\ref{rem:fs-limits} discusses dimension $15$.
Remark~\ref{rem:ms-limits} explains why the far-region bound of the macrostep
engine, with three lateral coordinates, is unavailable in dimension $6$.
Further gains may come from
other path families, from local inputs that are not uniform over cavity laws,
or from block explorations whose conditional crossing probabilities can be
bounded after every permitted history. Observed successes, unlike failures, can lower the bond floor
(Remark~\ref{rem:ov-successes}); more informative target histories would
therefore require stronger conditional estimates than these failure bounds.

There is a concrete limitation for unit couplings. Put $t=\tanh\beta$ and
$\Delta=2d$. Since $R_\beta\ge1+6t^2$ and $L_\beta=2t/(1+t)^2$,
\[
 p_\Delta\le\frac{t}{1+3\Delta t^2}
 \le\frac1{2\sqrt{3\Delta}}.
\]
At $d=6$ this is at most $1/12<1/11\le\pc(\Z^6)$. Moreover the independent
mixed model itself has path probability $a_\Delta^{n+1}L_\beta^n$ for a
prescribed $n$-edge path, so Proposition~\ref{prop:path-bound} makes it
subcritical there. Thus retaining repeated attempts inside that auxiliary model
cannot by itself reach dimension six. A stronger model-specific input is needed.

\section*{Computational reproducibility}
All certificate programs used in this paper are in \cite{CMRCode}, at revision
\href{https://github.com/PeaBrane/cmr-blue-percolation/tree/8da06c6d4f6be0d464eae4b212e0fbd2a27cd684}{\texttt{8da06c6d4f6b}}.
The default and supplementary runs need only the Python standard library.
Running \texttt{python run\_all.py} executes the following certificates.
\begin{itemize}
\item \href{https://github.com/PeaBrane/cmr-blue-percolation/blob/8da06c6d4f6be0d464eae4b212e0fbd2a27cd684/overlap_revealed/certify.py}{\texttt{overlap\_revealed/certify.py}}
decides, for each row of Table~\ref{tab:overlap-constants}, the local and
global conditions of Section~\ref{sec:overlap-route}: the bond floor, the
Holley line in all $2d+1$ environments, the mean-field test for the
plus-density, and the second-moment criterion with its constant, including the
refinement of Lemma~\ref{lem:ov-shared} at $d=10$. Its inputs are frozen exact
rationals in \path{overlap_revealed/params.json}.
\item \path{overlap_revealed/noisy_cavity/certify_noisy_cavity.py} decides, for
each row of Table~\ref{tab:ov-noisy}, the sharpened local inputs (N1)--(N2) of
Section~\ref{sec:ov-constants}, the comparison of the certified floor with the
aligned frozen values of Remark~\ref{rem:ov-nesting}, and (C3)--(C6) with the
score $\mathcal S_c$ or, at $t=7/50$, the score $\mathcal S'_c$ of
Lemma~\ref{lem:ov-covariance}. Its inputs are in
\path{overlap_revealed/noisy_cavity/params.json} and
\path{overlap_revealed/noisy_cavity/data/}.
\item \path{macrostep/certify_macrostep.py} decides, for each row of
Table~\ref{tab:ms-local}, conditions (V0)--(V2) of
Section~\ref{sec:ms-certificates}, and, for both certificates of every near set
of Table~\ref{tab:ms-certificates}, condition (V3), that is, conditions (i) and
(ii) of Theorem~\ref{thm:ms-criterion}, with $\theta_*$ computed exactly. It
also prints the aligned frozen values quoted in
Section~\ref{sec:ms-certificates}. It reads the frozen inputs in
\path{macrostep/params.json} and the stored engine outputs
$\bar{\mathsf M}$, $\bar\eta$, near Green rows and $\eta_F$ in
\path{macrostep/certificates/}. It also re-decides further stored certificates
that no theorem here uses.
\item \path{single_floor/verify_single_floor.py} evaluates the rows $h=0$ at the
certified points of Appendix~\ref{app:fresh-star}, the bounds of
Lemma~\ref{lem:collision-green}, the block covers of
Table~\ref{tab:fs-blocks}, the signed refinements of
Remark~\ref{rem:fs-signed} and the closed-form Corollary~\ref{cor:fs-closed-form}.
\item \path{verify_explicit_balance.py} decides the comparisons of
Corollary~\ref{cor:explicit-balance} in exact rational arithmetic and every
entry of Table~\ref{tab:explicit-chi}; \path{verify_explicit_balance.mjs},
run separately with Node.js, decides the comparisons again in integer
arithmetic.
\item \path{verify_physical_separation.py} checks the dimension-$180$ example of
Appendix~\ref{app:separation-certificate}, and \path{verify_signed_star.py}
recomputes the dimension-$22$ certificate of \cite{PeiV1}, whose row $h=0$ is
the floor $0.0586$ of Remark~\ref{rem:fs-signed}.
\end{itemize}
On a laptop the programs for Tables~\ref{tab:ov-noisy} and~\ref{tab:ms-local}
take about $14$ and $11$ seconds, and the latter uses about $105$\,MB of
memory.

The default run takes the stored macrostep engine outputs as given; that they
are upper bounds for the quantities of Theorem~\ref{thm:ms-criterion} rests on
the engine programs and their error analysis
(Section~\ref{sec:ms-certificates}). The entry point
\path{macrostep/reproduce_engine.py} recomputes them with the a priori error
program from \path{macrostep/params.json}, compares them with the stored
certificates, and re-decides the fresh certificates with two exact checkers;
with the option \texttt{-{}-l2} it also reruns the directed-rounding program and
compares the two programs entry by entry. This full reproduction needs NumPy
and Numba. With six worker processes on a $24$-core workstation it took about
$21$ minutes. It reproduced the stored engine outputs $\bar\eta$,
$\bar{\mathsf M}$, near Green rows and $\eta_F$ of every near set of
Table~\ref{tab:ms-certificates} bit for bit. At these near sets the values of
$\bar\eta$, the near Green rows and the entries of $\bar{\mathsf M}$ above
$10^{-14}$ of the two programs agreed within a relative $6\times10^{-8}$.

The option \texttt{-{}-supplementary} adds finite-model checks and an
independently written implementation of the local inputs of the rows of
Table~\ref{tab:overlap-constants}, which reproduces all $639$ local rationals of
\path{certify.py} exactly. It also adds exact checks of
Lemmas~\ref{lem:fs-balanced-tilt}, \ref{lem:fs-floor-identity}
and~\ref{lem:fs-monotone}, of the side results in
Remark~\ref{rem:fs-limits}, and of the nesting in Remark~\ref{rem:ov-nesting}.
The option \texttt{-{}-optional-deps} adds an independent evaluation of the
global criterion in interval arithmetic, which requires \texttt{mpmath}, and,
with \texttt{NumPy}, implementations of Lemmas~\ref{lem:ov-pair}
and~\ref{lem:ov-sym-holley} that enumerate all sign vectors, and exact
enumeration checks of Lemma~\ref{lem:ov-noisy} and of the pairing identity in
the proof of Lemma~\ref{lem:ov-pair}.

Every claimed inequality is decided in exact integer or rational arithmetic.
Transcendental constants are replaced by rational enclosures with explicit
remainders. The macrostep engine outputs are rigorous upper bounds computed in
binary64 arithmetic, and the checker reads each stored number as the dyadic
rational it represents. The frozen inputs, such as tangent points, line
weights, the rationals $\psi_U(x)$, the certificate vectors $w$ and the
rational entries of Tables~\ref{tab:overlap-constants}, \ref{tab:ov-noisy}
and~\ref{tab:ms-local}, were chosen by floating-point searches that use
\texttt{NumPy} and \texttt{SciPy}. Some of these searches are included in
\path{overlap_revealed/search/} and \path{macrostep/search/}; none is needed
for verification, and any admissible choice of inputs gives a valid
certificate.

\end{document}